\DeclareMathOperator{\Tr}{Tr}
\DeclareMathOperator{\naive}{(naive)}
\DeclareMathOperator{\known}{(known)}
\DeclareMathOperator{\fs}{(FS)}
\DeclareMathOperator{\ipw}{(IPW)}
\DeclareMathOperator{\im}{Im}
\DeclareMathOperator{\diag}{diag}
\DeclareMathOperator*{\argmin}{argmin}
\DeclareMathOperator*{\argmax}{argmax}
\DeclareMathOperator{\E}{\mathbb{E}}
\DeclareMathOperator{\Prob}{Pr}
\DeclareMathOperator{\V}{\mathbb{V}}
\DeclareMathOperator{\myvec}{vec}
\DeclarePairedDelimiter\abs{\lvert}{\rvert}%
\DeclarePairedDelimiter\norm{\lVert}{\rVert}%
\DeclareMathOperator{\edist}{\stackrel{d}{=}}
\DeclareMathOperator{\Bprob}{Pr}
\DeclareMathOperator{\tdist}{\stackrel{\text{d}}{\to}}
\DeclareMathOperator{\IPW}{(IPW)}
\let\oldabs\abs
\def\abs{\@ifstar{\oldabs}{\oldabs*}}
\let\oldnorm\norm
\def\norm{\@ifstar{\oldnorm}{\oldnorm*}}
\newcommand*\bigcdot{\mathpalette\bigcdot@{1.0}}
\newcommand*\bigcdot@[2]{\mathbin{\vcenter{\hbox{\scalebox{#2}{$\m@th#1*$}}}}}
\numberwithin{equation}{section}
\newcommand{\LtabMetab}{
 \begin{minipage}{\linewidth}
   \centering
   \footnotesize
   \vspace{0.25cm}
   \captionof{table}[Parameters used to simulate latent confound effects.]{The $\pi_k$ and $\tau_k$ values used to simulate $\bm{\ell}_{1},\ldots,\bm{\ell}_{p}$ ($k=1,\ldots,10$).}\label{supp:Table:LMetab}
	\begin{tabular}{c | c | c | c | c | c | c | c | c | c | c}
 Factor number ($k$) & $1$ & $2$ & $3$ & $4$ & $5$ & $6$ & $7$ & $8$ & $9$ & $10$\\\hline
  $\pi_k$ & 0 & 0 & 0.80 & 0.60 & 0.50 & 0.35 & 0.30 & 0.20 & 0.20 & 0.20\\\hline
  $\tau_k$ & 0.80 & 0.60 & 0.5 & 0.5 & 0.5 & 0.5 & 0.5 & 0.5 & 0.5 & 0.5
  \end{tabular}
  \vspace{0.4cm}
  \end{minipage}
}
\newtheorem{Assumption}{\textit{Assumption}}
\newtheorem{remark}{\textit{Remark}}
\newtheorem{Proposition}{\textit{Proposition}}
\newtheorem{corollary}{\textit{Corollary}}
\newtheorem{theorem}{\textit{Theorem}}
\newtheorem{lemma}{\textit{Lemma}}
\renewcommand{\theequation}{\thesection.\arabic{equation}}
\renewcommand{\thetheorem}{\thesection.\arabic{theorem}}
\renewcommand{\theremark}{\thesection.\arabic{remark}}
\renewcommand{\thelemma}{\thesection.\arabic{lemma}}
\begin{document}

\title{From differential abundance to mtGWAS: accurate and scalable methodology for metabolomics data with non-ignorable missing observations and latent factors}

\author{Shangshu Zhao$^{1}$, Kedir Turi$^{2}$, Tina Hartert$^{2}$, Carole Ober$^{3}$,\\Klaus B{\o}nnelykke$^{4}$, Bo Chawes$^{4}$, Hans Bisgaard$^{4}$, Chris McKennan$^{1,\ast}$\\[4pt]
\textit{$^1$Department of Statistics,
University of Pittsburgh}
\\
\textit{$^2$Department of Medicine, Vanderbilt University Medical Center}\\
\textit{$^3$Department of Human Genetics, University of Chicago}\\
\textit{$^4$COPSAC, Copenhagen Prospective Studies on Asthma in Childhood,}\\\textit{Herlev and Gentofte Hospital, University of Copenhagen}\\[2pt]
{\footnote{To whom correspondence should be addressed.} chm195@pitt.edu}}


\maketitle


\begin{abstract}
Metabolomics is the high-throughput study of small molecule metabolites. Besides offering novel biological insights, these data contain unique statistical challenges, the most glaring of which is the many non-ignorable missing metabolite observations. To address this issue, nearly all analysis pipelines first impute missing observations, and subsequently perform analyses with methods designed for complete data. While clearly erroneous, these pipelines provide key practical advantages not present in existing statistically rigorous methods, including using both observed and missing data to increase power, fast computation to support phenome- and genome-wide analyses, and streamlined estimates for factor models. To bridge this gap between statistical fidelity and practical utility, we developed MS-NIMBLE, a statistically rigorous and powerful suite of methods that offers all the practical benefits of imputation pipelines to perform phenome-wide differential abundance analyses, metabolite genome-wide association studies (mtGWAS), and factor analysis with non-ignorable missing data. Critically, we tailor MS-NIMBLE to perform differential abundance and mtGWAS in the presence of latent factors, which reduces biases and improves power. In addition to proving its statistical and computational efficiency, we demonstrate its superior performance using three real metabolomic datasets.
\end{abstract}
\noindent {\bf Keywords:} Metabolomics; Metabolomic GWAS; Latent factors; Factor analysis; Confounding; MNAR

\allowdisplaybreaks

\section{Introduction}
\label{section:introdution}
Metabolomics is the high-throughput study of small molecule metabolites, and can help understand human variation and the etiology of disease \citep{Bilirubin}. Metabolite abundances are typically measured via mass spectrometry, which, while sensitive, produces a large amount of non-ignorable missing data in which low abundance metabolites are less likely to be observed \citep{MetabMiss}. This precludes the use of the many complete data methods able to perform the three core metabolomic analyses: differential abundance, metabolome genome-wide association studies (mtGWAS), and factor analysis \citep{CMS_metab,MetabMiss}. Factor analysis, while important in its own right, is required in differential abundance analyses and, as we show in Section~\ref{subsection:mtGWAS}, mtGWAS, as it helps recover latent factors that plague metabolomic data and confound relationships of interest \citep{MetabMiss}.

Consequently, nearly all existing analysis pipelines first impute missing data, which acts as a crude solution to issues of method incompatibility \citep{Impute2} and offers the following important practical advantages: (i) ensuing analyses use both observed and missing data to improve power, (ii) downstream computation is fast enough to perform metabolite phenome- and genome-wide studies, and (iii) factor models can be estimated. Despite its expedience, it is well known that imputing non-ignorable missing data can beget biased estimators and spurious inference \citep{ImputationFlaws}. However, to our knowledge, \citet{MetabMiss} is the only work to provide a rigorous alternative to imputation while also considering latent confounding factors. Although a step in the right direction, their work does not offer the aforementioned advantages of imputation, as it discards missing data and does not provide methodology to perform an mtGWAS. And while it does provide a method to perform factor analysis, its theoretical properties are completely unknown. Therefore, it is questionable whether the statistical rigor offered by \citet{MetabMiss} is sufficient to offset the expediency of imputation.

To bridge the gap between statistical fidelity and practical utility, we developed MS-NIMBLE (\underline{M}ethod\underline{s} for \underline{N}on-\underline{I}gnorable \underline{M}issing \underline{M}eta\underline{b}o\underline{l}omic Obs\underline{e}rvations), a suite of statistically rigorous methods to perform differential abundance, mtGWAS, and factor analysis in metabolomic data that offers all of the practical advantages of imputation. Like \citet{MetabMiss}, we estimate each metabolite's missingness mechanism once per dataset and store it to facilitate efficient downstream computation. However, unlike \citet{MetabMiss}, subsequent estimators use both observed and missing data by leveraging the approximate conditional normality of metabolite levels. Our method for mtGWAS is able to partition low rank and idiosyncratic genetic variation, and we prove the statistical and computational efficiency of our factor analysis-related and other estimators. We lastly use simulated and three real metabolomic datasets to show that MS-NIMBLE significantly outperforms the method proposed in \citet{MetabMiss} and existing imputation pipelines. An R package and code to reproduce our simulations are available from https://github.com/chrismckennan/MSNIMBLE.

%







\section{Notation, problem setup, and statistical models}
\label{section:Problem}

Let $[m] = \{1,\ldots,m\}$ for $m>0$ and $y_{gi}$ be the possibly missing log-abundance of metabolite $g \in [p]$ in sample $i \in [n]$. For observed covariates $\bm{x}_i \in \mathbb{R}^d$ and latent factors $\bm{c}_i \in \mathbb{R}^K$, assume
\begin{align}
\label{equation:MainModel}
    y_{gi} = \bm{\beta}_g^{\top}\bm{x}_i + \bm{\ell}_g^{\top}\bm{c}_i + e_{gi}, \quad (e_{g1},\ldots,e_{gn})^{\top} \sim N(0,\sigma_g^2 I_n), \quad g \in [p]; i \in [n]
\end{align}
for some unknown and non-random $\bm{\beta}_g \in \mathbb{R}^d$ and $\bm{\ell}_g \in \mathbb{R}^K$. We will assume the number of latent factors $K$ is known, although we estimate $K$ with parallel analysis \citep{BujaFA} in practice. In differential abundance, $\bm{\beta}_g$ is of interest and $\bm{c}_i$ confounds the relationship between $\bm{x}_i$ and $y_{gi}$. In factor analysis and mtGWAS, $\bm{\beta}_g$ is a nuisance parameter and $\bm{c}_i$ and $\bm{\ell}_g$ are of interest. Other than assuming the design matrix with rows $(\bm{x}_i^{\top}, \bm{c}_i^{\top})$, $i\in [n]$, is full rank, we assume nothing about the relationship between $\bm{x}_i$ and $\bm{c}_i$, which facilitates the analysis of data with arbitrarily complex latent confounding. While our theoretical results require assumptions on the moments of $\bm{c}_i$, our methodology is agnostic to these assumptions, and therefore postpone their discussion to Section~\ref{section:Theory}. The normality of $e_{gi}$, which we leverage to design efficient estimators, is a common assumption in mass spectrometry data \citep{Impute2,Impute1}. However, we do not require $y_{gi}$ be normal, as the elements of $\bm{c}_i$ are often highly skewed (Figure~\ref{supp:Figure:Normal}).

It is well known metabolite levels depend on genotype \citep{CMS_metab}. However, since genotype does not appear in \eqref{equation:MainModel}, it is possible that its effect is mediated by $\bm{c}_i$ or appears in the idiosyncratic error terms $e_{gi}$, which belies the canonical factor analysis assumption that $\bm{c}_i$ is independent of $e_{gi}$ \citep{BCconf}. The genetic effects in $e_{gi}$ also imply the normality of $e_{gi}$ may only be an approximation, and that $e_{gi}$, $e_{hi}$ may be dependent for $g \neq h$. Our theoretical work in Section~\ref{section:Theory} accommodate all of these observations.


To describe the missing data model, let $r_{gi} = I(y_{gi}$ is observed$)$. We follow \citet{MetabMiss} and assume that for some known cumulative distribution function $\Psi$ and unknown, metabolite-specific scale and location parameters $\alpha_g\geq 0$ and $\delta_g \in \mathbb{R}$,
\begin{align}
\label{equation:MissingDataModel}
    \Prob(r_{gi}=1 \mid y_{gi},\bm{x}_i,\bm{c}_i) = \Prob(r_{gi}=1 \mid y_{gi}) = \Psi\{ \alpha_g(y_{gi} - \delta_{g}) \}, \quad g\in [p]; i\in [n],
\end{align}
where $\{r_{gi}\}_{g \in [p]; i \in [n]}$ are independent conditional on $\{y_{gi}\}_{g \in [p]; i \in [n]}$. This, along with the assumptions that $\alpha_g\geq 0$ and the distribution of $r_{gi}$ only depends on $y_{gi}$, is justified because nearly all missing data are due to an artifact of the mass spectrometer, where analytes with low abundances are less likely to be observed \citep{MetabMiss}. \citet{MetabMiss} contains additional justifications of \eqref{equation:MissingDataModel}.

We assume $\Psi$ in \eqref{equation:MissingDataModel} is known, which is ostensibly allowed to be any cumulative distribution function (CDF). While typical choices for $\Psi$ include the CDFs of the logistic and normal distributions \citep{Impute1}, our theoretical work in Section~\ref{section:Theory} requires the left hand tail of $\Psi$ go to zero no faster than a polynomial rate. Our default choice for $\Psi$ is therefore the CDF of the t-distribution with four degrees of freedom, which we show gives excellent results in real data.

\section{When do the missing data matter?}
\label{section:Imputation}
Ignoring or incorrectly modeling non-ignorable missing data can bias estimators \citep{ImputationFlaws}. Despite this, differential abundance simulations routinely suggest that errant imputation techniques have a trivial effect on type I error \citep{Impute1}. This begs the question when, or if, we have to account for the non-ignorable missing data in metabolomic analyses. We study this in Proposition~\ref{proposition:Impute}, which analyzes estimates from errantly imputed data. 

\begin{Proposition}
\label{proposition:Impute}
Let $x_i\in \mathbb{R}$. Assume \eqref{equation:MainModel} satisfies $y_{gi}=\mu_g + x_i\beta_g + \bm{c}_i^{\top} \bm{\ell}_g + e_{gi}$, $(e_{g1},\ldots,e_{gn})^{\top} \sim N(0,\sigma_g^2 I_n)$, and the regularity conditions in Section~\ref{supp:section:MinImp} hold. Suppose \eqref{equation:MissingDataModel} holds and we impute missing $y_{gi}$'s as $a* \min( \{ y_{gi} \}_{\{i: r_{gi}=1\}} )$ for any constant $a \in \mathbb{R}$. Then for $\hat{\beta}_g,\hat{s}_g$ the resulting ordinary least squares estimate and standard error for $\beta_g$ when $\bm{c}_i$ is known, $(\hat{\beta}_g-\beta_g)/\hat{s}_g \to N(0,1)$ as $n \to \infty$ if (i) the null hypothesis $H_{0,g}:\beta_g=0$ holds and (ii) $\bm{\ell}_g=0$ or $x_i$ is independent of $\bm{c}_i$.
\end{Proposition}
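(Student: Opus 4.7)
The plan is to rewrite the OLS $t$-statistic so all randomness except $\bm{x}=(x_1,\ldots,x_n)^\top$ can be conditioned out, then apply a conditional central limit theorem. Let $\bm{C}\in\mathbb{R}^{n\times(K+1)}$ be the design matrix with rows $(1,\bm{c}_i^\top)$, $M_c=I_n-\bm{C}(\bm{C}^\top\bm{C})^{-1}\bm{C}^\top$, and set $\bm{v}=M_c\tilde{\bm{y}}$ and $\bm{u}=M_c\bm{x}$, where $\tilde{\bm{y}}=(\tilde{y}_{g1},\ldots,\tilde{y}_{gn})^\top$ is the imputed response vector. Frisch--Waugh--Lovell gives
\[
\hat{\beta}_g=\frac{\bm{x}^\top\bm{v}}{\|\bm{u}\|^2},\qquad \hat{s}_g^2=\frac{\hat{\sigma}_g^2}{\|\bm{u}\|^2},\qquad (n-K-2)\hat{\sigma}_g^2=\|\bm{v}\|^2-\hat{\beta}_g^2\|\bm{u}\|^2.
\]

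Let $\mathcal{F}=\sigma(\{\bm{c}_i,e_{gi},r_{gi}\}_{i=1}^n)$. Under $H_{0,g}:\beta_g=0$, the variables $y_{gi}$, $m_g$, $\tilde{\bm{y}}$, and hence $\bm{v}$ are all $\mathcal{F}$-measurable. In case (ii)(a) $\bm{\ell}_g=0$, $\tilde{\bm{y}}$ depends only on $\{e_{gi},r_{gi}\}_{i=1}^n$ and is therefore independent of $(\bm{x},\bm{c})$ under standard exogeneity; the OLS of $\tilde{\bm{y}}$ on $(\bm{1},\bm{x},\bm{C})$ is then a correctly specified constant-mean model with homoskedastic errors and classical OLS asymptotics deliver the conclusion directly. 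We therefore concentrate on case (ii)(b), where $\bm{x}\perp\bm{c}$ combined with exogeneity of $\bm{x}$ yields $\bm{x}\perp\mathcal{F}$.

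Conditional on $\mathcal{F}$, the $x_i$ are iid with some mean $\mu_x$ and variance $\sigma_x^2$, while $\bm{v}$ is a fixed vector with $\bm{1}^\top\bm{v}=0$. A conditional Lindeberg CLT gives $\bm{x}^\top\bm{v}/(\sigma_x\|\bm{v}\|)\tdist N(0,1)$, and a WLLN applied to the quadratic form $\bm{x}^\top M_c\bm{x}$ yields $\|\bm{u}\|^2/n\to\sigma_x^2$ in conditional probability. Because $\hat{\beta}_g^2\|\bm{u}\|^2=(\bm{x}^\top\bm{v})^2/\|\bm{u}\|^2=O_p(\|\bm{v}\|^2/n)=o_p(\|\bm{v}\|^2)$, we obtain $\hat{\sigma}_g^2=\|\bm{v}\|^2/n\cdot(1+o_p(1))$. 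Slutsky's lemma then gives
\[
\frac{\hat{\beta}_g-\beta_g}{\hat{s}_g}=\frac{\bm{x}^\top\bm{v}}{\sqrt{\hat{\sigma}_g^2\,\|\bm{u}\|^2}}=\frac{\bm{x}^\top\bm{v}}{\sigma_x\|\bm{v}\|}(1+o_p(1))\tdist N(0,1)
\]
conditional on $\mathcal{F}$, and therefore unconditionally.

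The main obstacle is verifying $\|\bm{v}\|^2\to\infty$ in $\mathcal{F}$-probability together with the Lindeberg condition for the conditional CLT. Both reduce to the extreme-value scaling of $m_g=a\min_{r_{gi}=1}y_{gi}$: the polynomial left tail of $\Psi$ combined with the Gaussianity of $e_{gi}$ makes the observed $y_{gi}$ distribution have a Gaussian-dominated left tail, so $|\mu_g-m_g|=\Theta(\sqrt{\log n})$ when $a\neq 0$, yielding $\|\bm{v}\|^2=\Theta_p(n\log n)$; when $a=0$, $m_g\equiv 0$ and $\|\bm{v}\|^2=\Theta_p(n)$ from the variance of $r_{gi}y_{gi}$. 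Either scaling, together with moment bounds on $x_i$ supplied by the regularity conditions, delivers the Lindeberg condition and completes the argument.
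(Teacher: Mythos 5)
Your overall strategy — write the OLS $t$-statistic via Frisch--Waugh--Lovell and apply a conditional CLT after conditioning out the response — is the same basic route the paper takes, and your case (ii)(b) argument is structurally sound: conditioning on $\mathcal{F}=\sigma(\{\bm{c}_i,e_{gi},r_{gi}\})$ makes $\bm{v}$ fixed, $\bm{x}\perp\mathcal{F}$ makes the Lindeberg CLT applicable to $\bm{x}^\top\bm{v}$, and your observation that $\hat{\beta}_g^2\|\bm{u}\|^2=O_P(\|\bm{v}\|^2/n)=o_P(\|\bm{v}\|^2)$ gives $\hat{\sigma}_g^2\approx\|\bm{v}\|^2/n$ cleanly, which is a somewhat slicker route than the paper's H\"older-inequality argument for showing $\hat{\sigma}_g^2=m^2\{c+o_P(1)\}$.

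However, your treatment of case (ii)(a), $\bm{\ell}_g=0$, is a genuine gap. You claim that since $\tilde{\bm{y}}\perp(\bm{x},\bm{C})$, ``classical OLS asymptotics deliver the conclusion directly.'' This does not hold: the entries of $\tilde{\bm{y}}$ are \emph{not} independent --- every imputed entry equals the common minimum $m_g=a\min_{r_{gi}=1}y_{gi}$, so $\tilde{\bm{y}}$ has strong global dependence and no classical iid-error OLS theorem applies. The fix (which is what the paper does) is to keep the conditional-CLT structure but change the conditioning: in case (ii)(a), condition on $\sigma(\{e_{gi},r_{gi}\})$ (equivalently $\bm{y}_I$) rather than on $\mathcal{F}$, and treat the \emph{entire} regressor vector $\bm{z}_i=(x_i,\bm{c}_i^\top)^\top$ as the random, iid ingredient of the Lindeberg sum; your $\mathcal{F}$ cannot be used here because it contains $\bm{c}_i$, and $x_i$ is allowed to depend on $\bm{c}_i$ when $\bm{\ell}_g=0$. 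As written, your two cases use incompatible conditioning sets and you never actually prove (ii)(a).

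Your final paragraph is also imprecise on the extreme-value scaling. The claim $|\mu_g-m_g|=\Theta(\sqrt{\log n})$ rests on the left tail of $y_{gi}$ being Gaussian-dominated, which is fine when $\bm{\ell}_g=0$ but not in case (ii)(b): there $\bm{c}_i$ is only assumed to have bounded moments (not sub-Gaussian tails), so $y_{gi}$'s left tail need not be Gaussian. The paper instead proves the weaker but general two-sided statement $|m|\to\infty$ and $|m|=O_P(n^{\epsilon})$ for every $\epsilon>0$, and then verifies a Lyapunov condition $\tilde{\sigma}^{-4}n^{-1}\max_i y_i^4=o_P(1)$ rather than a $\sqrt{\log n}$ rate; your rate claim is unnecessary and, in general, unjustified. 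Finally, when $a=0$ your statement $m_g\equiv 0$ is correct, but note the paper's proof implicitly assumes $|m|\to\infty$ and so does not cover $a=0$ explicitly; you are right that the $a=0$ case is easier and would need a separate (simpler) argument.
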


\begin{remark}
\label{remark:MinImputation}
Minimum imputation from Proposition~\ref{proposition:Impute} is one of the most common ways to handle missing metabolomic data \citep{Impute2}. Note $\bm{c}_i$ is observed in Proposition~\ref{proposition:Impute}.
\end{remark}

\noindent Proposition~\ref{proposition:Impute} shows errant imputation can beget valid type I error rates provided (ii) holds, i.e. $\bm{c}_i$ does not confound the relationship between $x_i$ and $y_{gi}$. This result explains the abovementioned befuddling observations that incorrectly modelling simulated non-ignorable missing metabolomic data has a trivial effect on type I error rates, since their simulations did not consider confounders.

The proof of the asymptotic normality in Proposition~\ref{proposition:Impute} relies on $x_i$ being independent of $y_{gi}$, which is only true if (i) and (ii) hold. This suggests properly handling missing $y_{gi}$'s is critical when estimating intervals for non-zero effects $\beta_g$, and when controlling type I error in the presence of confounding factors $\bm{c}_i$, even when $\bm{c}_i$ is observed. We show this using simulated and real data. 





\section{Estimation and inference with MS-NIMBLE}
\label{section:Estimation}
We must overcome several challenging features of \eqref{equation:MainModel}, \eqref{equation:MissingDataModel}, and metabolomic experiments in general. First, \eqref{equation:MainModel} is not congruent with existing maximum likelihood estimators designed for normally distributed data \citep{Impute1}, since $\bm{c}_i$'s distribution may be highly non-normal (Figure~\ref{supp:section:Normal}). Second, leveraging the approximate normality of the errors $e_{gi}$ to improve estimates requires integrating over missing $y_{gi}$, which can be prohibitively slow for theoretically valid choices of $\Psi$ discussed in Section~\ref{section:Theory}. Lastly, our estimators must scale to facilitate phenome- and genome-wide analyses. Figure~\ref{Figure:Overview} gives an overview of the steps in our method. For simplicity of presentation, we assume in Sections~\ref{section:Estimation} and \ref{section:Theory} that all metabolites have missing data, but provide extensions in supplemental Section~\ref{supp:section:FullyObs} to allow fully observed metabolites. Section~\ref{subsection:MissMech} gives a brief description of the estimators for $\alpha_g,\delta_g$, as they mirror those from \citet{MetabMiss}. Sections~\ref{subsection:LF}-\ref{subsection:mtGWAS} contain detailed descriptions of our novel methodological components.

\begin{figure}
\centering
\includegraphics[width=0.75\textwidth]{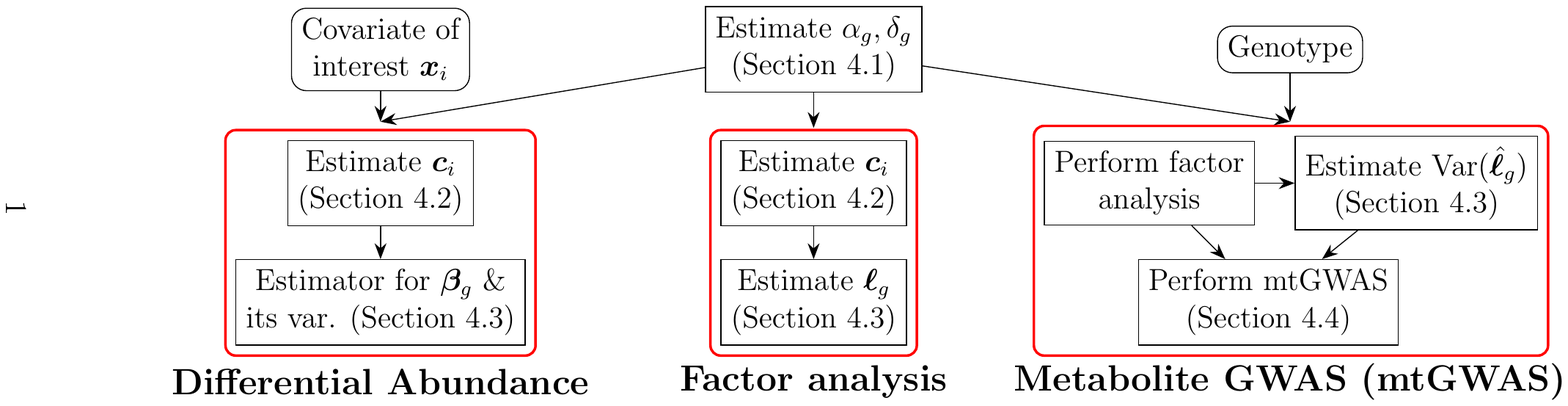}
\caption{Method overview and how estimators are used to solve different problems in metabolomics.}\label{Figure:Overview}
\end{figure}

\subsection{Estimating the missingness mechanisms}
\label{subsection:MissMech}
We follow \citet{MetabMiss} and estimate $\alpha_g,\delta_g$ from \eqref{equation:MissingDataModel} using a Bayesian generalized method of moments estimator. Briefly, for some observed $\bm{u}_{gi} \in \mathbb{R}^r$, we consider the observable sample moment $\bm{m}_g(\tilde{\alpha},\tilde{\delta})=n^{-1/2}\sum_{i=1}^n \bm{u}_{gi}[1 - r_{gi}/\Psi\{\tilde{\alpha}(y_{gi}-\tilde{\delta})\}]$, which is mean 0 and asymptotically normal when $(\tilde{\alpha},\tilde{\delta})=(\alpha_g,\delta_g)$ and $\bm{u}_{gi}$ is independent of $r_{gi}$ conditional on $y_{gi}$. Treating $\bm{m}_g$ as our ``data'', we estimate $\alpha_g,\delta_g$ as $(\hat{\alpha}_g,\hat{\delta}_g)=\E\{ (\alpha_g,\delta_g) \mid \bm{m}_g(\alpha_g,\delta_g) \}$, where we approximate the posterior $\Bprob\{ \alpha_g,\delta_g \mid \bm{m}_g(\alpha_g,\delta_g) \} \propto \Bprob\{ \bm{m}_g(\alpha_g,\delta_g) \mid \alpha_g ,\allowbreak\delta_g\}\Bprob(\alpha_g ,\delta_g)$ assuming $\bm{m}_g(\alpha_g,\delta_g)$ is normally distributed. Since $y_{gi}$ must be dependent on $\bm{u}_{gi}$, we let $\bm{u}_{gi} \in \mathbb{R}^r$ be $r$ of the first few principal components of the data matrix of fully observed metabolites. Sections 3 and 4 of \citet{MetabMiss} contain additional details.

Critically, the estimators $\hat{\alpha}_g,\hat{\delta}_g$ only depend on the dataset $\{r_{gi}y_{gi}\}_{g \in [p]; i \in [n]}$, and are invariant to the covariate of interest $\bm{x}_i$ and genotype. We therefore only compute $\hat{\alpha}_g,\hat{\delta}_g$ once per dataset and store the results, which helps make downstream analyses computationally tractable.



\subsection{Estimating latent factors}
\label{subsection:LF}
We describe estimates for latent factors $\bm{c}_i$ in differential abundance problems, and show how these can be used to derive estimates in factor analysis and mtGWAS applications as well in Sections~\ref{subsection:CoeffInt} and \ref{subsection:mtGWAS}. Let $\bm{X}=(\bm{x}_1\cdots \bm{x}_n)^{\top}$ and $P_{\bm{X}}^{\perp} \in \mathbb{R}^{n \times n}$ be the orthogonal projection matrix that projects vectors onto the kernel of $\bm{X}^{\top}$. We can express $\bm{C}=(\bm{c}_1 \cdots \bm{c}_n)^{\top} \in \mathbb{R}^{n \times K}$ as $\bm{C} = P_{\bm{X}}^{\perp}\bm{C} + \bm{X}\bm{\Omega}$, where $\bm{\Omega} = (\bm{X}^{\top}\bm{X})^{-1}\bm{X}^{\top}\bm{C}$. Model \eqref{equation:MainModel} can then be re-written as
\begin{align}
\label{equation:ModelFactor}
    y_{gi} = \bm{b}_g^{\top} \bm{x}_i + \bm{\ell}_g^{\top} [P_{\bm{X}}^{\perp}\bm{C}]_{i *} + e_{gi}, \quad \bm{b}_g=\bm{\beta}_g + \bm{\Omega}\bm{\ell}_g, \quad (e_{g1},\ldots,e_{gn}) \sim N(0,\sigma_g^2 I_n),
\end{align}
where $[P_{\bm{X}}^{\perp}\bm{C}]_{i *}\in\mathbb{R}^K$ is the $i$th row of $P_{\bm{X}}^{\perp}\bm{C}$. We utilize the paradigm from \citet{BCconf} and sequentially estimate $P_{\bm{X}}^{\perp}\bm{C}$ and $\bm{\Omega}$, where the latter estimate adjusts for confounding. It seems natural to use the normality of $e_{gi}$ in \eqref{equation:ModelFactor} to obtain optimal maximum likelihood estimates for $P_{\bm{X}}^{\perp}\bm{C}$ and $\bm{\Omega}$. However, this would beget computationally expensive iterative algorithms that require numerically integrating over all missing $y_{gi}$'s at each iteration. Instead, we use inverse probability weighting (IPW) to derive computationally efficient estimators. Remarkably, we prove in Section~\ref{section:Theory} that the loss of statistical efficiency that accompanies IPW has an asymptotically negligible effect on downstream inference.

If $\bm{Y}=[y_{gi}] \in \mathbb{R}^{p \times n}$ were observed, a natural estimate for $P_{\bm{X}}^{\perp}\bm{C}$ is the first $K$ right singular vectors of $\bm{Y}P_{\bm{X}}^{\perp}$ \citep{BCconf}, which is equivalent to minimizing $\sum_{g,i} \allowbreak \{ \allowbreak y_{gi} \allowbreak - \allowbreak ( \bm{b}_g^{\top}\bm{x}_i \allowbreak + \allowbreak \bm{\ell}_g^{\top} \bm{C}^{\perp}_{i*} ) \}^2$ over $\bm{C}^{\perp}$, as well as $\bm{b}_g$ and $\bm{\ell}_g$, such that $\bm{X}^{\top}\bm{C}^{\perp}=0$. This motivates estimating $P_{\bm{X}}^{\perp}\bm{C}$ when $y_{gi}$'s may be missing by solving the following IPW-inspired optimization problem:
\begin{align}
\label{equation:OptC}
\begin{aligned}
   &\{P_{\bm{X}}^{\perp}\hat{\bm{C}}, \{\tilde{\bm{b}}_g,\tilde{\bm{\ell}}_g\}_{g \in [p]}\} \in \argmin_{\substack{\bm{C}^{\perp} \in \mathbb{R}^{n \times K},\, \bm{b}_g \in \mathbb{R}^d, \,  \bm{\ell}_g \in \mathbb{R}^K\\ \text{such that $\bm{X}^{\top} \bm{C}^{\perp}=0$}}} \sum_{g=1}^p \sum_{i=1}^n \hat{w}_{gi} f_{gi}(\bm{C}^{\perp},\bm{b}_g,\bm{\ell}_g)\\
   &f_{gi}(\bm{C}^{\perp},\bm{b}_g,\bm{\ell}_g) = \{ y_{gi}-( \bm{b}_g^{\top}\bm{x}_i + \bm{\ell}_g^{\top} \bm{C}^{\perp}_{i*} ) \}^2, \quad \hat{w}_{gi}=r_{gi}/\Psi\{ \hat{\alpha}_g(y_{gi}-\hat{\delta}_g) \}.
\end{aligned}
\end{align}
Here, $\hat{\alpha}_g,\hat{\delta}_g$ are given in Section~\ref{subsection:MissMech} and the objective on the first line is observable because $\hat{w}_{gi}=0$ if $y_{gi}$ is missing. If $\hat{\alpha}_g=\alpha_g$, $\hat{\delta}_g=\delta_g$, and we replace $\hat{w}_{gi}$ with its expectation $\E(\hat{w}_{gi}\mid y_{gi})=1$, the above discussion implies \eqref{equation:OptC} is equivalent to singular value decomposition. Unlike maximum likelihood estimators that use the normality of $e_{gi}$ in \eqref{equation:ModelFactor}, iterative updates in \eqref{equation:OptC} have a closed form and beget fast computation. Since $P_{\bm{X}}^{\perp}\bm{C}$ is not identifiable in \eqref{equation:ModelFactor}, $P_{\bm{X}}^{\perp}\hat{\bm{C}}$ is not unique. While we address this in factor analysis applications by requiring $P_{\bm{X}}^{\perp}\hat{\bm{C}}$ have orthogonal columns, such an identification criterion is unnecessary in differential abundance and mtGWAS.

To recover $\bm{\Omega}$, we see \eqref{equation:OptC} provides estimates $\tilde{\bm{b}}_g,\tilde{\bm{\ell}}_g$ for $\bm{b}_g,\bm{\ell}_g$ in \eqref{equation:ModelFactor}. Since $P_{\bm{X}}^{\perp}\bm{C}$ is orthogonal to $\bm{X}$, we should be able to separate variation due to $P_{\bm{X}}^{\perp}\bm{C}$ and $\bm{X}$, which suggests $\tilde{\bm{b}}_g,\tilde{\bm{\ell}}_g$ are reasonably accurate. If $\bm{\beta}_g=0$ for all $g\in[p]$, then the expression for $\bm{b}_g$ in \eqref{equation:ModelFactor} indicates we can estimate $\bm{\Omega}$ by regressing $(\tilde{\bm{b}}_1\cdots \tilde{\bm{b}}_p)$ onto $(\tilde{\bm{\ell}}_1\cdots \tilde{\bm{\ell}}_p)$. While not all $\bm{\beta}_g$'s will be 0, covariates of interest encoded by $\bm{X}$ typically correlate with only a few metabolites \citep{Bilirubin}. We use this to justify estimating $\bm{\Omega}$ with the aforementioned regression:
\begin{align}
\label{equation:Omegahat}
    \hat{\bm{\Omega}} = \textstyle \argmin_{\bm{\Omega} \in \mathbb{R}^{d \times K}} \textstyle\sum_{g=1}^p \norm*{ \tilde{\bm{b}}_g - \bm{\Omega}\tilde{\bm{\ell}}_g }_2^2 = (\textstyle \sum_{g=1}^p \tilde{\bm{b}}_g \tilde{\bm{\ell}}_g^{\top}) ( \sum_{g=1}^p \tilde{\bm{\ell}}_g \tilde{\bm{\ell}}_g^{\top} )^{-1}.
\end{align}
In addition to adjusting for latent confounds, we show $\hat{\bm{\Omega}}$ can be used to test if latent factors depend on $\bm{X}$ in Sections~\ref{section:Theory} and \ref{section:RealData}. We show in supplemental Section~\ref{supp:section:RefineOmega} that \eqref{equation:Omegahat} can be further refined by iteratively removing ``outlying'' covariate-dependent metabolites from the regression.

We estimate $\bm{C}$ as $\hat{\bm{C}} = P_{\bm{X}}^{\perp}\hat{\bm{C}} + \bm{X} \hat{\bm{\Omega}}$ in differential abundance problems. Since $\bm{X}$ is a nuisance covariate in factor analysis and mtGWAS, we let $\hat{\bm{C}}$ be the solution to \eqref{equation:OptC} in those applications.

\subsection{Estimation and inference on coefficients of interest}
\label{subsection:CoeffInt}
Here we consider $\bm{\theta}_g = (\bm{\beta}_g^{\top},\bm{\ell}_g^{\top})^{\top}$, where $\bm{\beta}_g$ is the inferential target in differential abundance and $\bm{\ell}_g$ is important in factor analysis and mtGWAS. Our goal is to develop statistically efficient estimators that can be computed quickly. Throughout Section~\ref{subsection:CoeffInt}, we let $\hat{\bm{z}}_i=(\bm{x}_i^{\top},\allowbreak\hat{\bm{c}}_i^{\top})^{\top}$ for $\hat{\bm{c}}_i \in \mathbb{R}^{K}$ the $i$th row of $\hat{\bm{C}}$ defined in Section~\ref{subsection:LF} (our estimate for $\bm{c}_i$ in \eqref{equation:MainModel}).

Having estimated $\{\alpha_g,\delta_g,\bm{z}_i=( \bm{x}_i^{\top},\bm{c}_i^{\top} )^{\top}\}$ as $\{\hat{\alpha}_g,\hat{\delta}_g,\hat{\bm{z}}_i\}$ in Sections~\ref{subsection:MissMech} and \ref{subsection:LF}, we consider estimating $\bm{\theta}_g$ and $\sigma_g$ via the log-likelihood $h_{g}(\bm{\theta},\sigma)$ of the observed data $\{r_{gi}y_{gi}\}_{i \in [n]}$ implied by \eqref{equation:MainModel} and \eqref{equation:MissingDataModel} using the plug-in estimators $\{\hat{\alpha}_g,\hat{\delta}_g,\hat{\bm{z}}_i\}$:
\begin{align}
\label{equation:thetagLikelihood}
\begin{aligned}
h_{g}(\bm{\theta},\sigma) =& \textstyle\sum_{i=1}^n -r_{gi}\{\log(\sigma) +( y_{gi}-\bm{\theta}^{\top}\hat{\bm{z}}_i )^2/(2\sigma^2)\}\\& + \textstyle\sum_{i=1}^n(1-r_{gi})\log[1- \smallint \Psi\{ \hat{\alpha}_g( \bm{\theta}^{\top}\hat{\bm{z}}_i + \sigma e - \hat{\delta}_g ) \}\phi(e)\text{d}e ],
\end{aligned}
\end{align}
\noindent where $\phi(e)$ is the standard normal density. While Section~\ref{supp:subsection:DiffAbund} of the Supplement shows that directly maximizing $h_{g}$ will accurately estimate $\bm{\theta}_g$, this fails to consider the computational cost of numerically integrating the second line of \eqref{equation:thetagLikelihood}. To address this, we design an appropriately initialized algorithm that only requires a small number of iterations, and therefore numerical integrations, to accurately approximate the maximizer of \eqref{equation:thetagLikelihood}. Briefly, for $\hat{w}_{gi}$ given in \eqref{equation:OptC}, let
\begin{align}
\label{equation:IPW}
\begin{aligned}
    \hat{\bm{\theta}}_g^{\IPW} &= (\textstyle\sum_{i=1}^n \hat{w}_{gi} \hat{\bm{z}}_i\hat{\bm{z}}_i^{\top})^{-1}(\textstyle\sum_{i=1}^n \hat{w}_{gi} \hat{\bm{z}}_i y_{gi})\\
    \hat{\sigma}_g^{\IPW} &= [ (\textstyle\sum_{i=1}^n \hat{w}_{gi})^{-1}\textstyle\sum_{i=1}^n \hat{w}_{gi}\{ y_{gi} - \hat{\bm{z}}_i^{\top}\hat{\bm{\theta}}_g^{\IPW} \}^2 ]^{1/2}
\end{aligned}
\end{align}
be the inverse probability weighted (IPW) estimators of $\bm{\theta}_g$ and $\sigma_g$. Since $\hat{w}_{gi}= 0$ if $y_{gi}$ is missing, the estimators in \eqref{equation:IPW} only use observed data, and are therefore sub-optimal. However, they are easy to compute and, as we show in supplemental Section~\ref{supp:subsection:DiffAbund}, consistent, which make them appropriate starting points. We then iteratively update our estimates for $\bm{\theta}_g$ and $\sigma_g$ with Fisher scoring using the information matrix $\mathcal{I}_g(\bm{\theta},\sigma)=\E_{\{\bm{\theta},\sigma\}}[ \nabla^2 h_g(\bm{\theta},\sigma) \mid \{\hat{\bm{z}}_i\}_{i\in[n]} ]$, where the expectation ignores the uncertainty in $\hat{\bm{z}}_i$, $\hat{\alpha}_g$, and $\hat{\delta}_g$. While running this algorithm to completion is potentially computationally expensive, we prove in Section~\ref{subsection:theory:DA} that we only require one Fisher scoring step to achieve asymptotically optimal estimates. In practice, our software default is $\leq 10$ iterations. Letting $\hat{\bm{\theta}}_g=( \hat{\bm{\beta}}_g^{\top},\hat{\bm{\ell}}_g^{\top} )^{\top}$ and $\hat{\sigma}_g$ be the resulting estimates, we perform inference on $\bm{\beta}_g$ assuming $\hat{\bm{\beta}}_g \approx N(\bm{\beta}_{g},\hat{\V}( \hat{\bm{\beta}}_{g}))$ for $\hat{\V}( \hat{\bm{\beta}}_{g})$ the first $d \times d$ block of $\{-\mathcal{I}_g(\hat{\bm{\theta}}_g,\hat{\sigma}_g)\}^{-1}$. 

Two features of this procedure cast doubt on its fidelity. The first is the assumption in \eqref{equation:MainModel} that $e_{gi}$ is normally distributed, as the existence of genetic and possibly other non-normal variation in $e_{gi}$ suggest the likelihood in \eqref{equation:thetagLikelihood} is incorrect. While this is not a concern in fully observed data, estimates from missing data may be sensitive to distributional assumptions \citep{ImputationFlaws}. The second is $\hat{\bm{\beta}}_g$ depends on the estimated latent factors $\hat{\bm{c}}_1,\ldots,\hat{\bm{c}}_n$ whose theoretical properties are unknown. We address these concerns in Section~\ref{subsection:theory:DA}, where we prove inference with $\hat{\bm{\beta}}_g$ is asymptotically equivalent to knowing both the non-normal genetic effects and latent factors. While the uncertainty in $\hat{\alpha}_g,\hat{\delta}_g$ ostensibly poses a third issue, the strong theoretical and simulation results in \citet{MetabMiss} proving their accuracy suggest this is trivial.



\subsection{Metabolite genome-wide association study}
\label{subsection:mtGWAS}
We lastly consider performing an mtGWAS. We set $\bm{x}_i$ in \eqref{equation:MainModel} to be 0 for simplicity, but show in supplemental Section~\ref{supp:section:TheorymtGWAS} how to extend our method to allow $\bm{x}_i \neq 0$. Let $G_{si} \in \{0,1,2\}$ be the genotype at single nucleotide polymorphism (SNP) $s$ in sample $i$. Given \eqref{equation:MainModel}, the effect of $G_{si}$ on $y_{gi}$ can either appear in the idiosyncratic error $e_{gi}$, or be mediated by $\bm{c}_i$. We therefore assume $e_{gi} = \gamma^{(e)}_{gs} G_{si} + \Delta_{gi}^{(e)}$ and $\bm{c}_i = \bm{\gamma}_s^{(c)} G_{si} + \bm{\Delta}_i^{(c)}$, where $\gamma^{(e)}_{gs} \in \mathbb{R}$, $\bm{\gamma}_s^{(c)} \in \mathbb{R}^K$ quantify the effect of $G_{si}$ on $e_{gi}$ and $\bm{c}_i$, respectively, and $\Delta_{gi}^{(e)} \in \mathbb{R}$, $\bm{\Delta}_i^{(c)} \in \mathbb{R}^K$ are mean 0 errors. This implies
\begin{align}
\label{equation:ygenetic}
    y_{gi} = \{ \bm{\ell}_g^{\top}\bm{\gamma}_s^{(c)} + \gamma^{(e)}_{gs}\}G_{si} + \{\bm{\ell}_g^{\top}\bm{\Delta}_i^{(c)} +  \Delta_{gi}^{(e)} \},
\end{align}
where $\gamma^{(e)}_{gs}$ and $\bm{\ell}_g^{\top}\bm{\gamma}_s^{(c)}$ are interpretable as the idiosyncratic and low rank genetic effects. We develop methodology below to perform inference on $\gamma^{(e)}_{gs}$, $\bm{\ell}_g^{\top}\bm{\gamma}_s^{(c)}$, and the total effect $\bm{\ell}_g^{\top}\bm{\gamma}_s^{(c)} + \gamma^{(e)}_{gs}$. 


Consider testing $H_{0,gs}^{(e)}: \gamma_{gs}^{(e)}=0$. Classic Wald tests would require optimizing \eqref{equation:thetagLikelihood} for all \#metabolites $\times$ \#SNPs pairs $g$ and $s$. While this is reasonable for \#SNPs $\lesssim 10^2$ (i.e. on the order of a phenome-wide association study), it is infeasible in genome-wide studies, where \#SNPs $\gtrsim 10^6$. To circumvent this, we propose a novel and tractable score test. Briefly, consider the log-likelihood $h_{gs}( \gamma,\bm{\ell},\sigma )$ for $\{r_{gi}y_{gi}\}_{i \in [n]}$ under \eqref{equation:MainModel} and \eqref{equation:MissingDataModel} assuming $e_{gi} \sim N(\gamma G_{si},\sigma^2)$:
\begin{align*}
    h_{gs}( \gamma,\bm{\ell},\sigma ) =& \textstyle\sum_{i=1}^n -r_{gi}[\log(\sigma) + \{ y_{gi}-(\bm{\ell}^{\top}\hat{\bm{c}}_i + \gamma G_{si})\}^2/(2\sigma^2)]\\
    &+ \textstyle\sum_{i=1}^n(1-r_{gi})\log[1- \smallint \Psi\{ \hat{\alpha}_g( \bm{\ell}^{\top}\hat{\bm{c}}_i + \gamma G_{si} + \sigma e - \hat{\delta}_g ) \}\phi(e)\text{d}e ].
\end{align*}
If $H_{0,gs}^{(e)}: \gamma_{gs}^{(e)}=0$ is true, $h_{gs}\{ \gamma_{gs}^{(e)},\bm{\ell},\sigma \} = h_{g}( \bm{\ell},\sigma )$ for $h_g$ as defined in \eqref{equation:thetagLikelihood}. Then for $\hat{\bm{\ell}}_g,\hat{\sigma}_g$ the approximate maximizers of $h_g$ described in Section~\ref{subsection:CoeffInt}, we define the score statistic $\eta_{gs}^{(e)}$ to be
\begin{align}
\label{equation:ScoreTest}
\eta_{gs}^{(e)} = \{ \textstyle \tfrac{\partial}{\partial \gamma} h_{gs}( \gamma,\hat{\bm{\ell}}_g,\hat{\sigma}_g )\mid_{\gamma=0} \}^2 [ \{ -\mathcal{I}_{gs}(0,\hat{\bm{\ell}}_g,\hat{\sigma}_g) \}^{-1} ]_{11}, 
\end{align}
where $\mathcal{I}_{gs}(\gamma,\bm{\ell},\sigma)$ is the Fisher information matrix assuming $h_{gs}(\gamma,\bm{\ell},\sigma)$ is the log-likelihood for $\{r_{gi}y_{gi}\}_{i \in [n]}$. A p-value for $H_{0,gs}^{(e)}$ is computed by comparing $\eta_{gs}^{(e)}$ to the upper quantiles of a $\chi^2_1$.

Several features of \eqref{equation:ScoreTest} make our test computationally and statistically efficient. First, since $\hat{\bm{\ell}}_g,\hat{\sigma}_g$ are the approximate maximizers of $h_g$ in \eqref{equation:thetagLikelihood}, they do not depend on genotype, and consequently only need to be computed once per metabolite $g$. Therefore, as we show in supplemental Section~\ref{supp:section:TheorymtGWAS}, \eqref{equation:ScoreTest} is a simple function of genotype and metabolite-specific terms that can be pre-computed. Second, \eqref{equation:ScoreTest} uses all available data and does not errantly impute missing data, which is the prevailing practice in mtGWAS studies. Lastly, and most importantly, inference with \eqref{equation:ScoreTest} is done conditional on the estimated latent factors $\hat{\bm{c}}_i$, which de-noises the data to substantially improve power by reducing residual variances. For example, we show that the variance reduction in our data example is equivalent to increasing the sample size by 67\%. 

We next consider $\bm{\ell}_g^{\top}\bm{\gamma}_s^{(c)}$ from \eqref{equation:ygenetic}, which is interpretable as the effect of SNP $s$ on metabolite $g$ that is mediated through the latent factors $\bm{c}_i$. Let $\hat{\bm{\ell}}_g$ as defined above, and let $\hat{\V}(\hat{\bm{\ell}}_g)$ be its its estimated variance obtained using the inferential procedure outlined in Section~\ref{subsection:CoeffInt}. Since $\bm{\gamma}_s^{(c)}$ satisfies $\E(\bm{c}_i \mid G_{si}) = \bm{\gamma}_s^{(c)} G_{si}$, we define $\hat{\bm{\gamma}}_s^{(c)}$ and $\hat{\V}\{ \hat{\bm{\gamma}}_s^{(c)} \}$ to be the ordinary least squares estimate and its corresponding estimated variance from the regression of $[\hat{\bm{c}}_1 \cdots \hat{\bm{c}}_n]^{\top}$ onto $(G_{s1}\cdots G_{sn})^{\top}$, which can be efficiently computed at the genome-wide scale. If $\hat{\bm{c}}_i=\bm{c}_i$ and there were no genetic effects on $e_{gi}$, standard arguments can be used to show $\hat{\bm{\ell}}_g$ is  asymptotically independent of $\hat{\bm{\gamma}}_s^{(c)}$. We therefore test $H_{0,gs}^{(c)}: \bm{\ell}_g^{\top}\bm{\gamma}_s^{(c)}=0 $ by comparing the following to the upper quantiles of a $\chi^2_1$:
\begin{align}
\label{equation:ScoreTest:C}
    \eta_{gs}^{(c)} = \{ \hat{\bm{\ell}}_g^{\top} \hat{\bm{\gamma}}_s^{(c)} \}^2/[ \hat{\bm{\ell}}_g^{\top}\widehat{\V}\{\hat{\bm{\gamma}}_s^{(c)}\}\hat{\bm{\ell}}_g + \{\hat{\bm{\gamma}}_s^{(c)}\}^{\top}\widehat{\V}(\hat{\bm{\ell}}_g) \hat{\bm{\gamma}}_s^{(c)}].
\end{align}

We lastly test whether SNP $s$ has any effect on metabolite $g$'s abundance. Given \eqref{equation:ygenetic}, the classic approach would test the null that $\bm{\ell}_g^{\top}\bm{\gamma}_s^{(c)} + \gamma_{gs}^{(e)} = 0$. However, as discussed above, this is not practical because it would require estimating $\gamma_{gs}^{(e)}$. Instead, since $\bm{c}_i$ and $e_{gi}$ are typically assumed to be independent in metabolomic data \citep{MetabMiss}, we assume their corresponding genetic effects reflect unrelated variation. This suggests a metabolite's abundance is genetically regulated if $\bm{\ell}_g^{\top}\bm{\gamma}_s^{(c)}$ or $\gamma_{gs}^{(e)}$ is 0. We therefore propose testing $H_{0,gs}^{(c,e)}: \bm{\ell}_g^{\top}\bm{\gamma}_s^{(c)} = \gamma_{gs}^{(e)}=0$ using $\eta_{gs}^{(c,e)} = \eta_{gs}^{(c)} + \eta_{gs}^{(e)}$, which we show in Section~\ref{subsection:theory:mtGWAS} is approximately $\chi^2_2$ under $H_{0,gs}^{(c,e)}$.


\section{Theoretical guarantees}
\label{section:Theory}
Here we justify estimators and inference from Section~\ref{section:Estimation}. Since \citet{MetabMiss} detailed the theoretical properties of $\hat{\alpha}_g,\hat{\delta}_g$ defined in Section~\ref{subsection:MissMech}, we focus on the properties and impact of the latent factor estimates $\hat{\bm{c}}_i$ from Section~\ref{subsection:LF}, as their theoretical properties are unknown but critical to the fidelity of estimators proposed in Sections~\ref{subsection:LF}-\ref{subsection:mtGWAS}. Given the accuracy of $\hat{\alpha}_g,\hat{\delta}_g$ \citep{MetabMiss} and the negligible impact their uncertainty has in real and simulated data (see Sections~\ref{section:RealData} and \ref{supp:section:simulations}), we assume $\hat{\alpha}_g=\alpha_g, \hat{\delta}_g=\delta_g$ to make proofs tractable, which is common in the non-random missing data literature \citep{MissKnownTheory}.

Section~\ref{subsection:theory:Assumptions} details our assumptions, and Sections~\ref{subsection:theory:LF}-\ref{subsection:theory:mtGWAS} contain our theoretical results. In addition to providing the theoretical foundation for estimators in Section~\ref{section:Estimation}, these results help us specify a software default choice for $\Psi$ defined in \eqref{equation:MissingDataModel}. All proofs are in the supplement.


\subsection{Assumptions}
\label{subsection:theory:Assumptions}
Let $\bm{X}=[\bm{x}_1 \cdots \bm{x}_n]^{\top} \in \mathbb{R}^{n \times d}$, $\bm{C}=[\bm{c}_1 \cdots \bm{c}_n]^{\top} \in \mathbb{R}^{n \times K}$, and $\bm{1}_n=(1,\ldots,1)^{\top}\in\mathbb{R}^n$. For $\bm{M}\in\mathbb{R}^{n\times m}$, let $P_{\bm{M}}^{\perp}\in\mathbb{R}^{n\times n}$ be the orthogonal projection onto the kernel of $\bm{M}^{\top}$. We first place assumptions on $y_{gi}$.


\begin{Assumption}
\label{assumption:y}
For $g \in [p]$, $i \in [n]$, and $s \in [S]$, let $y_{gi}=\bm{\beta}_g^{\top}\bm{x}_i + \bm{\ell}_g^{\top}\bm{c}_i + e_{gi}$, $G_{si} \in \{0,1,2\}$, and $\mathcal{G}=\{G_{si}\}_{s \in S; i \in [n]}$. Then the following hold for constants $a_1>0$ and $\epsilon \in (0,1/2\wedge a_1)$.
\begin{enumerate}[label=(\alph*)]
\item $\bm{X}=[\tilde{\bm{X}},\bm{1}_n]$ is non-random, $n^{-1}\tilde{\bm{X}}^{\top}P_{\bm{1}_n}^{\perp}\tilde{\bm{X}} \succeq \epsilon I_{d-1}$, $\norm*{ \tilde{\bm{X}} }_{\infty},\norm*{\bm{\beta}_g}_2\leq a_1$, $\mathcal{G}$'s elements are independent, $\{G_{si}\}_{i \in [n]}$ are identically distributed for each $s\in[S]$, and $\epsilon n \leq p \leq a_1 n$. \label{assump:y:XG}
\item The eigenvalues $\lambda_1,\ldots,\lambda_K > 0$ of $p^{-1}\sum_{g=1}^p \bm{\ell}_g \bm{\ell}_g^{\top}$ satisfy $n^{-1/2+\epsilon}\allowbreak \lesssim \allowbreak \lambda_K \allowbreak \leq \allowbreak \cdots \leq \allowbreak \lambda_1 \allowbreak \lesssim 1$, $\lambda_1/\lambda_K \leq a_1$, and $\norm{ \bm{\ell}_g }_2 \leq a_1 \lambda_1^{1/2}$. Further, $\bm{c}_i = \bm{f}(\bm{x}_i)+ \sum_{s=1}^S \bm{\gamma}_{s}^{(c)} G_{si} + \bm{\Delta}_i^{(c)} \in \mathbb{R}^K$, where:
\begin{enumerate}[label=(\roman*)]
    \item $\bm{f}:\mathbb{R}^{d} \to \mathbb{R}^K$ is a continuous function and $\{\bm{\gamma}_{s}^{(c)}\}_{s \in [S]}$ are non-random and satisfy $\sum_{s=1}^S \norm*{\bm{\gamma}_{s}^{(c)}}_2 \leq a_1$, $\sum_{s=1}^S 1\{\bm{\gamma}_{s}^{(c)} \neq 0\} \leq a_1 p^{1/2}$, and $\max_{s \in [S]}\norm*{\bm{\gamma}_{s}^{(c)}}_2 = o(n^{-1/4})$.\label{assump:y:Lambda:gamma}
    \item $\{\bm{\Delta}_i^{(c)}\}_{i \in [n]}$ are independent, identically distributed, independent of $\mathcal{G}$, $\V\{\bm{\Delta}_i^{(c)}\} \succeq \epsilon I_K$, and $\E\{ \abs*{ \bm{\Delta}_{i_k}^{(c)} }^m \} \leq b_m$ for $k \in [K]$, all $m>0$, and constants $b_m>0$.\label{assump:y:Lambda:Delta}
\end{enumerate}\label{assump:y:Lambda}
\item For non-random parameters $\{\gamma_{gs}^{(e)}\}_{g\in[p];s\in[S]}$, $e_{gi} = \sum_{s=1}^S \gamma_{gs}^{(e)}G_{si} + \Delta_{gi}^{(e)}$ such that:
\begin{enumerate}[label=(\roman*)]
    \item $\sum_{s=1}^S 1\{\gamma_{gs}^{(e)} \neq 0\} \leq a_1$, $\max_{g\in[p]; s \in [S]}\abs*{\gamma_{gs}^{(e)}} = o(n^{-1/4})$, $\Delta_{gi}^{(e)} \sim N(0,\sigma_g^2)$, $\sigma_g^2 \leq a_1$, and $\{\Delta_{gi}^{(e)}\}_{g \in [p];i \in [n]}$ are independent and are independent of $\{\mathcal{G},\bm{C}\}$.
    \item Each connected component of the metabolite graph created by placing an edge between metabolites $g,h\in[p]$ if $\gamma_{gs}^{(e)} \gamma_{hs}^{(e)}\neq 0$ has $\leq a_1$ metabolite vertices.\label{assump:y:e:Network}
\end{enumerate}\label{assump:y:e}
\end{enumerate}
\end{Assumption}
\noindent We require $\bm{X}$ contain an intercept in \ref{assump:y:XG}. The assumptions on genotype $G_{si}$ in \ref{assump:y:XG} are akin to assuming each linkage disequilibrium block contains at most one causal SNP. The eigenvalues in \ref{assump:y:Lambda} quantify the average magnitude of $\bm{\ell}_1,\allowbreak\ldots,\allowbreak\bm{\ell}_p$, where we let eigenvalues be moderate ($\asymp n^{-1/2+\epsilon}$) or large ($\asymp 1$). While some datasets may have eigenvalues even smaller than $n^{-1/2+\epsilon}$, they likely make a trivial contribution to metabolite variation and are therefore not considered here. 

Since metabolites may be genetically regulated, we allow latent factors $\bm{c}_i$ and errors $e_{gi}$ to be dependent on genotype. This implies $\bm{c}_i$ and $e_{gi}$ may be dependent, which violates the assumptions of most factor analysis methods \citep{BCconf}. To our knowledge, our theoretical work is the first to consider genetic dependence between latent factors and errors.

We assume genetic effects $\bm{\gamma}^{(c)}_{s}$ and $\gamma^{(e)}_{gs}$ decay with sample size, which is a common assumption in GWAS \citep{SNPAsymp}. However, we will have asymptotically perfect power if the genetic effect is $\gtrsim n^{-1/2+\eta}$ in magnitude for any $\eta>0$ and the number of tested SNPs is polynomial in $n$. Assumption \ref{assump:y:e}\ref{assump:y:e:Network} assumes metabolites can be partitioned into pathways where, conditional on latent factors, metabolites in different pathways are independent, which is a common assumption \citep{Bilirubin}. We next place assumptions on the missing data.

\begin{Assumption}
\label{assumption:Miss}
Model \eqref{equation:MissingDataModel} and the following hold for some constants $a_2>1$, $m>0$:
\begin{enumerate}[label=(\alph*)]
\item $\{r_{gi}\}_{g \in [p]; i \in [n]}$ are independent conditional on $\{y_{gi}\}_{g \in [p]; i \in [n]}$ and $\alpha_g\in (0,a_2), \abs*{\delta_g} \leq a_2$.\label{assump:MissData:Indep}
\item $\Psi$ is a six times continuously differentiable CDF that satisfies (i) $\Psi(-x)=1-\Psi(x)$, (ii) $\abs*{x}^m\Psi(x) \geq a_2^{-1}$ for all $x<-a_2$, and (iii) $\abs*{x}^m\abs*{\frac{\text{d}^{(j)}}{\text{d}x^{(j)}} \Psi(x)} \leq a_2$ for all $j\in [6]$ and $\abs*{x}>a_2$.\label{assump:MissData:Psi} 
\end{enumerate}
\end{Assumption}
\begin{remark}
\label{remark:Assumption2}
Assumption~\ref{assump:MissData:Psi} is satisfied when $\Psi$ is the CDF of a t-distribution.
\end{remark}
Section~\ref{section:Problem} discusses the conditional independence assumption in \ref{assump:MissData:Indep}. Assumption~\ref{assump:MissData:Psi}(ii) requires the left hand tail of $\Psi$ to go to 0 at a polynomial rate, which ensures the inverse probability weighted estimator in \eqref{equation:OptC} is well-behaved. Remark~\ref{remark:Assumption2} inspires our software-default choice for $\Psi$ to be the CDF of a t-distribution with four degrees of freedom, which also reduces the impact of outlying observations on our estimates for $\bm{\beta}_g$ (see supplemental Remark~\ref{supp:remark:PsiBound}). Note \ref{assump:MissData:Psi}(ii) excludes the usual assumption that $\Psi$ is the CDF of a logistic or normal random variable \citep{Impute1}, as their left hand tails go to 0 at exponential and super-exponential rates. 

\subsection{Accuracy of and inference with latent factor estimates}
\label{subsection:theory:LF}
We first consider the accuracy of $P_{\bm{X}}^{\perp}\hat{\bm{C}}$ defined in \eqref{equation:OptC}, which is critical to the estimate for $\hat{\bm{C}}$ in differential abundance and is exactly $\hat{\bm{C}}$ in factor analysis applications.
\begin{theorem}
\label{theorem:PxC}
Suppose Assumptions~\ref{assumption:y} and \ref{assumption:Miss} hold, and let $\hat{\mathcal{P}},\mathcal{P} \in \mathbb{R}^{n \times n}$ be the orthogonal projections that project vectors onto $\im(P_{\bm{X}}^{\perp}\hat{\bm{C}})$ and $\im(P_{\bm{X}}^{\perp}\bm{C})$. Then there exists a constant $\eta>0$ such that if we require $\norm*{ \hat{\mathcal{P}} - \mathcal{P} }_F \leq \eta$, then $\norm*{ \hat{\mathcal{P}} - \mathcal{P} }_F^2 = o_P(n^{-1/2})$.
\end{theorem}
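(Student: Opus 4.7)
The plan is to prove Theorem~\ref{theorem:PxC} by a localized M-estimation argument that combines a quadratic curvature bound for the population objective with uniform concentration of its IPW empirical version over the $\eta$-neighborhood of $\mathcal{P}$. First, I would profile out $\bm{b}_g$ and $\bm{\ell}_g$ from \eqref{equation:OptC}: for each fixed $\bm{C}^{\perp}$ with $\bm{X}^{\top}\bm{C}^{\perp}=0$, the inner minimization is a weighted least squares problem with closed-form solutions, and substituting back yields a functional $\mathcal{L}_n(\mathcal{Q})$ that depends on $\bm{C}^{\perp}$ only through its column-space projection $\mathcal{Q}$. The feasible set becomes the Grassmannian of $K$-planes in $\im(P_{\bm{X}}^{\perp})$, and the hypothesis $\norm*{\hat{\mathcal{P}}-\mathcal{P}}_F\leq \eta$ confines the analysis to a fixed neighborhood of $\mathcal{P}$.

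Second, since we have reduced to the case $\hat{\alpha}_g=\alpha_g$ and $\hat{\delta}_g=\delta_g$, the weights satisfy $\E\{\hat{w}_{gi}\mid y_{gi}\}=1$, so the population objective $\mathcal{L}^{*}(\mathcal{Q})=\E\mathcal{L}_n(\mathcal{Q})$ agrees with the expectation of the complete-data least-squares objective, whose minimizer is $\mathcal{P}$. A local quadratic lower bound $\mathcal{L}^{*}(\mathcal{Q})-\mathcal{L}^{*}(\mathcal{P})\gtrsim p\lambda_{K}\norm*{\mathcal{Q}-\mathcal{P}}_F^{2}$ then follows by a PCA-style perturbation argument in the spirit of \citet{BCconf}: under Assumption~\ref{assump:y:Lambda} the smallest population eigenvalue of the signal covariance is $\gtrsim p\lambda_{K}\gtrsim n^{1/2+\epsilon}$, and the rows of $P_{\bm{X}}^{\perp}\bm{C}$ are non-degenerate because $\V\{\bm{\Delta}_i^{(c)}\}\succeq\epsilon I_{K}$. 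The genetic cross-contamination between $\bm{c}_i$ and $e_{gi}$ through $\mathcal{G}$ enters at lower order: each effect $\bm{\gamma}_s^{(c)}$, $\gamma_{gs}^{(e)}$ is $o(n^{-1/4})$, and Assumption~\ref{assump:y:e:Network} bounds the aggregate contribution across metabolites, so these terms can be absorbed into the perturbation.

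The third and hardest step is to control the stochastic fluctuation $\mathcal{R}_n(\mathcal{Q})=\mathcal{L}_n(\mathcal{Q})-\mathcal{L}^{*}(\mathcal{Q})$ uniformly over the $\eta$-neighborhood, showing that it is Lipschitz in $\mathcal{Q}$ with random constant $o_P(n^{1/2+\epsilon})$ around $\mathcal{P}$. The challenge is that the weights $\hat{w}_{gi}=r_{gi}/\Psi\{\alpha_g(y_{gi}-\delta_g)\}$ are heavy-tailed in the left tail of $y_{gi}$: Assumption~\ref{assump:MissData:Psi}(ii) only supplies $\Psi(x)^{-1}\lesssim\abs*{x}^{m}$ as $x\to-\infty$, so $\hat{w}_{gi}$ has only finitely many bounded moments. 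I would split the sum by truncating at a slowly growing threshold $M_n=n^{\kappa}$ for small $\kappa>0$: on the truncated event the weighted summands are sub-exponential and a Bernstein-type inequality combined with a covering/chaining bound over the compact Grassmannian suffices; the untruncated tail is bounded by Markov using the polynomial lower bound on $\Psi$ and the Gaussian moment bounds on $y_{gi}$ implied by \eqref{equation:MainModel} and Assumption~\ref{assump:y:Lambda:Delta}. Combining this uniform control with the basic inequality $\mathcal{L}_n(\hat{\mathcal{P}})\leq\mathcal{L}_n(\mathcal{P})$ and the curvature bound from the previous step yields $\norm*{\hat{\mathcal{P}}-\mathcal{P}}_F^{2}=o_P(n^{-1/2})$ after dividing through by the curvature constant $n^{1/2+\epsilon}$.
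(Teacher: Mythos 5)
Your high-level plan — profile out $\bm{b}_g,\bm{\ell}_g$, compare the population curvature to the stochastic fluctuation over the Grassmannian, and invoke the basic inequality — is the right framework and is essentially what the paper does (the profile objective is \eqref{supp:equation:fMax}, curvature appears in Lemma~\ref{supp:lemma:f1} and Corollary~\ref{supp:corollary:Htilde}, and stochastic control in Lemmas~\ref{supp:lemma:EtE}, \ref{supp:lemma:f2}, \ref{supp:lemma:s12}). The problem is the last step: the quantitative bounds you cite do not produce the claimed rate.

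With a quadratic curvature bound $\mathcal{L}^{*}(\mathcal{Q})-\mathcal{L}^{*}(\mathcal{P})\gtrsim \kappa\,\norm{\mathcal{Q}-\mathcal{P}}_F^2$ and a Lipschitz bound $\abs{\mathcal{R}_n(\mathcal{Q})-\mathcal{R}_n(\mathcal{P})}\leq L\,\norm{\mathcal{Q}-\mathcal{P}}_F$, the basic inequality $\mathcal{L}_n(\hat{\mathcal{P}})\leq\mathcal{L}_n(\mathcal{P})$ yields $\kappa\delta^2\leq L\delta$, i.e.\ $\delta\leq L/\kappa$. You take $\kappa\asymp n^{1/2+\epsilon}$ and $L=o_P(n^{1/2+\epsilon})$, so $\delta=o_P(1)$ and $\delta^2=o_P(1)$; there is no $n^{-1/2}$. "Dividing through by the curvature" does not rescue this: to get $\delta^2=o_P(n^{-1/2})$ from $\delta\leq L/\kappa$ you need $L=o_P(\kappa n^{-1/4})=o_P(n^{1/4+\epsilon})$. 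That sharper bound is in fact \emph{false} for a uniform Lipschitz constant on a \emph{fixed} $\eta$-ball: away from $\mathcal{P}$ the gradient of $\mathcal{R}_n$ picks up a contribution from the Hessian fluctuation times the radius, and (translating Corollary~\ref{supp:corollary:Htilde} to your normalization) this contribution is of order $n^{1/2-\text{small}}$, not $n^{1/4}$. What actually produces the rate is the gradient of $\mathcal{R}_n$ \emph{evaluated at $\mathcal{P}$ itself}, which the paper shows is $O_P(\lambda^{-1/2+\epsilon})$ in the normalized scale (Lemma~\ref{supp:lemma:s12}); this is small because $\bm{e}_g$ is mean-zero and nearly orthogonal to the signal direction, and it is strictly smaller than the best uniform Lipschitz constant over the fixed ball. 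Exploiting the local gradient requires a two-stage argument: first prove consistency $\norm{\hat{\mathcal{P}}-\mathcal{P}}_F=O_P(n^{-\eta})$ for some small $\eta>0$ from a crude curvature-vs-sup-of-noise comparison (Lemmas~\ref{supp:lemma:f1}--\ref{supp:lemma:f2}, Corollary~\ref{supp:corollary:Consistency}), then Taylor-expand the score \eqref{supp:equation:stilde} about the optimum on a shrinking neighborhood and invert the Hessian \eqref{supp:equation:Htilde} (Theorem~\ref{supp:theorem:ChatProp}, then Corollary~\ref{supp:corollary:Pchat}). Alternatively, a multi-scale peeling argument over nested shrinking balls would work, but that extra bookkeeping is not optional; a single uniform Lipschitz bound on the fixed $\eta$-ball cannot deliver $o_P(n^{-1/2})$.

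Two secondary issues. Your remark that the genetic cross-dependence between $\bm{c}_i$ and $e_{gi}$ "can be absorbed into the perturbation" skips the main technical novelty of the paper: this dependence violates the independence between factors and idiosyncratic errors that PCA/factor-analysis perturbation arguments (e.g.\ \citet{BCconf}) rely on, and controlling terms like $\bm{C}^{\top}\bm{E}^{\top}\bm{L}$ requires the bespoke maximal inequalities of Lemma~\ref{supp:lemma:Cte}. Second, your truncation-plus-chaining idea for the heavy-tailed weights is reasonable in spirit, but the paper achieves the needed matrix concentration more directly via Latała's inequality (Lemmas~\ref{supp:lemma:LatalaW}, \ref{supp:lemma:LatalaExt}) applied to the weight fluctuations, which sidesteps an explicit covering of the Grassmannian since the profiled objective has a closed form.
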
 
\begin{remark}
\label{remark:PxC}
The objective in \eqref{equation:OptC}, which is expressed as a function of the matrix parameter $\bm{C}^{\perp}$, only depends on $\bm{C}^{\perp}$ through $\im(\bm{C}^{\perp})$, and is therefore actually a function of orthogonal projection matrices. The requirement that $\norm*{ \hat{\mathcal{P}} - \mathcal{P} }_F \leq \eta$ implies the desired minimizer of \eqref{equation:OptC} may only be a local minima, and we implicitly assume $\norm*{ \hat{\mathcal{P}} - \mathcal{P} }_F \leq \eta$ in all future theoretical statements. We show in supplemental Section~\ref{supp:section:FullyObs} that, under minor conditions, we can guarantee $\norm*{ \hat{\mathcal{P}} - \mathcal{P} }_F \leq \eta$ by initializing \eqref{equation:OptC} using metabolites with fully observed data.
\end{remark}

\noindent Theorem~\ref{theorem:PxC} is, to our knowledge, the first result proving the fidelity of factor analysis in data with non-random missing observations. Remarkably, this result mirrors the best known factor analysis results when data are observed \citep{BCconf}, and accounts for possible genetic-related dependencies between $\bm{c}_i$ and $e_{gi}$, which are not allowed to exist in most factor analysis-related theoretical results \citep{CATE,BCconf}.

We next consider our estimate for $\bm{\Omega}$ from \eqref{equation:Omegahat}, which helps ensure our estimates for $\bm{\beta}_g$ are not biased by latent factors $\bm{c}_i$. While its theoretical properties derived in supplemental Section~\ref{supp:subsection:Omega} are critical for Sections~\ref{subsection:theory:DA} and \ref{subsection:theory:mtGWAS}, we show in Theorem~\ref{theorem:Omega} below that it can also be used to formally test whether $\bm{c}_i$ confounds the relationship between $\bm{x}_i$ and $y_{gi}$. 



\begin{theorem}
\label{theorem:Omega}
Fix a $j \in [d-1]$. In addition to Assumptions~\ref{assumption:y} and \ref{assumption:Miss}, suppose (i) $p^{-1}\sum_{g=1}^p \allowbreak 1\{\bm{\beta}_{g_j} \neq 0\}=o(\lambda_1^{1/2}n^{-1/2})$ and (ii) $\E(\bm{c}_i) = \bm{A}^{\top}\bm{x}_{i}$ for some non-random $\bm{A} \in \mathbb{R}^{d \times K}$. Then if the null hypothesis $H_{0,j}: \bm{A}_{j*}=0$ is true, $\hat{\bm{\Omega}}_{j*}^{\top} \hat{\bm{\Omega}}_{j*}/\tilde{x}_j^2 \tdist \chi^2_K$, where $\bm{A}_{j*},\hat{\bm{\Omega}}_{j*} \in\mathbb{R}^K$ are the $j$th rows of $\bm{A},\hat{\bm{\Omega}}$ and $\tilde{x}_j^2$ is the $j$th diagonal of $(\bm{X}^{\top}\bm{X})^{-1}$.
\end{theorem}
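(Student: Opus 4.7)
The plan is to show that under $H_{0,j}$, $\hat{\bm{\Omega}}_{j*}/\tilde{x}_j$ converges in distribution to a standard normal $K$-vector, whence the $\chi^2_K$ limit follows by the continuous mapping theorem. The first step is an asymptotic expansion
\[
\hat{\bm{\Omega}} = \bm{\Omega} R^{\top} + E_n, \quad \bm{\Omega} = (\bm{X}^{\top}\bm{X})^{-1}\bm{X}^{\top}\bm{C},
\]
where $R\in\mathbb{R}^{K\times K}$ is the orthogonal rotation absorbed by the identification implicit in \eqref{equation:OptC} and the $j$th row of $E_n$ is $o_P(n^{-1/2})$. I would obtain this by combining Theorem~\ref{theorem:PxC} with first-order expansions $\tilde{\bm{b}}_g = \bm{b}_g + \xi_g^{(b)}$ and $\tilde{\bm{\ell}}_g = R\bm{\ell}_g + \xi_g^{(\ell)}$ for the IPW estimates in \eqref{equation:OptC}, substituting into the closed-form expression \eqref{equation:Omegahat}, and using the sparsity hypothesis $p^{-1}\sum_g 1\{\bm{\beta}_{g_j}\neq 0\}=o(\lambda_1^{1/2}n^{-1/2})$ to kill the bias that non-zero $\bm{\beta}_{g_j}$ would inject into the OLS step.

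The second step expresses the leading term as a sum of i.i.d.~contributions amenable to a CLT. Under hypothesis (ii) of the theorem and Assumption~\ref{assumption:y}, the latent factors admit the decomposition $\bm{c}_i = \bm{A}^{\top}\bm{x}_i + \sum_s \bm{\gamma}_s^{(c)}(G_{si}-\E G_{si}) + \bm{\Delta}_i^{(c)}$. Plugging into $\bm{\Omega} = (\bm{X}^{\top}\bm{X})^{-1}\bm{X}^{\top}\bm{C}$ and applying $\bm{A}_{j*}=0$ gives
\[
\bm{\Omega}_{j*}^{\top} = \sum_{i=1}^n h_{ji}\Bigl(\textstyle\sum_s \bm{\gamma}_s^{(c)}(G_{si}-\E G_{si}) + \bm{\Delta}_i^{(c)}\Bigr), \quad h_{ji}=\bigl[(\bm{X}^{\top}\bm{X})^{-1}\bm{X}^{\top}\bigr]_{ji},
\]
with $\sum_i h_{ji}^2=\tilde{x}_j^2$. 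The genetic piece has norm $o_P(\tilde{x}_j)$ thanks to $\sum_s\norm{\bm{\gamma}_s^{(c)}}_2\leq a_1$ and $\max_s\norm{\bm{\gamma}_s^{(c)}}_2=o(n^{-1/4})$; a Lindeberg CLT applied to $\tilde{x}_j^{-1}\sum_i h_{ji}\bm{\Delta}_i^{(c)}$, using the bounded moments in Assumption~\ref{assumption:y}, gives the limit $N(0,\V\{\bm{\Delta}_i^{(c)}\})$. Combining with the first-step expansion, $\hat{\bm{\Omega}}_{j*}/\tilde{x}_j\tdist N(0, R\V\{\bm{\Delta}_i^{(c)}\}R^{\top})$; the normalization implicit in \eqref{equation:OptC} forces $R\V\{\bm{\Delta}_i^{(c)}\}R^{\top}=I_K$ in the limit, and the continuous mapping theorem yields the claim.

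The hardest part is the first step: controlling $E_{n,j*}$ uniformly at the $o_P(n^{-1/2})$ rate. This requires tight IPW concentration for $\tilde{\bm{b}}_g,\tilde{\bm{\ell}}_g$ that is uniform in $g$ and valid across both the moderate ($\lambda_k\asymp n^{-1/2+\epsilon}$) and large ($\lambda_k\asymp 1$) eigenvalue regimes of $p^{-1}\sum_g \bm{\ell}_g\bm{\ell}_g^{\top}$, where metabolites loading weakly on the factors are most vulnerable to distortions from missing-data reweighting. A secondary subtlety is verifying that the implicit orthonormalization in \eqref{equation:OptC} indeed yields the standardized covariance $I_K$ in the identified frame; otherwise the limit would be a weighted sum of $\chi^2_1$ variates rather than $\chi^2_K$.
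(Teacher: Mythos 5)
Your high-level strategy mirrors the paper's: establish an expansion $\hat{\bm{\Omega}}_{j*} = R\bm{\Omega}_{j*} + o_P(n^{-1/2})$ for some population target $\bm{\Omega}$, apply a Lindeberg CLT to the $j$th row of $\bm{\Omega}$, and combine. This is precisely the structure of the paper's Theorem~\ref{supp:theorem:Omega} together with its appeal to the CLT argument from Theorem~3 of \citet{BCconf}. The $o_P(n^{-1/2})$ control of the remainder is indeed the technically heavy step, and your sparsity and Theorem~\ref{theorem:PxC} inputs are the right ingredients; the paper expends considerable effort (Theorem~\ref{supp:theorem:ChatProp}, Corollary~\ref{supp:corollary:LtL}) to obtain it uniformly across the moderate and large eigenvalue regimes, exactly as you anticipate.

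The substantive gap is in how you treat the matrix $R$. You describe it as ``the orthogonal rotation absorbed by the identification,'' yet your final step requires $R\,\V\{\bm{\Delta}_i^{(c)}\}\,R^{\top}=I_K$. These two claims are mutually consistent only when $\V\{\bm{\Delta}_i^{(c)}\}=I_K$, which Assumption~\ref{assumption:y} does not impose (it only requires $\V\{\bm{\Delta}_i^{(c)}\}\succeq\epsilon I_K$). What actually happens is that the orthonormality constraint on $\bm{C}^{\perp}$ in \eqref{equation:OptC} forces $\hat{\bm{C}}_{\perp}$ (and its target $\tilde{\bm{C}}=P_{\bm{X}}^{\perp}\bm{C}(\bm{C}^{\top}P_{\bm{X}}^{\perp}\bm{C})^{-1/2}$) to have orthonormal columns, so the estimated loadings $\tilde{\bm{\ell}}_g$ carry a factor $(n^{-1}\bm{C}^{\top}P_{\bm{X}}^{\perp}\bm{C})^{1/2}$. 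Consequently $R$ is not orthogonal: it is $\bm{v}^{\top}(n^{-1}\bm{C}^{\top}P_{\bm{X}}^{\perp}\bm{C})^{-1/2}$ for a unitary $\bm{v}$, and since $n^{-1}\bm{C}^{\top}P_{\bm{X}}^{\perp}\bm{C}\to\V\{\bm{\Delta}_i^{(c)}\}$ (the genetic contribution being $o(1)$ under Assumption~\ref{assumption:y}\ref{assump:y:Lambda}\ref{assump:y:Lambda:gamma}), this $R$ does satisfy $R\V\{\bm{\Delta}_i^{(c)}\}R^{\top}\to I_K$. The paper sidesteps the confusion by baking the normalization directly into its definition of the population target in Theorem~\ref{supp:theorem:Omega}, namely $\bm{\Omega}=(n^{-1}\bm{C}^{\top}P_{\bm{X}}^{\perp}\bm{C})^{-1/2}\bm{C}^{\top}\bm{X}(\bm{X}^{\top}\bm{X})^{-1}$, leaving only the benign unitary $\bm{v}$ as residual ambiguity. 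You correctly flag this as ``a secondary subtlety'' that must be verified, but it is not secondary---it is the entire reason the limit is $\chi^2_K$ rather than a weighted sum of $\chi^2_1$'s---and your description of $R$ as orthogonal would, if taken literally, make the claimed verification impossible. Replace ``orthogonal rotation'' with ``unitary identification freedom composed with the inverse square root of $n^{-1}\bm{C}^{\top}P_{\bm{X}}^{\perp}\bm{C}$'' and the argument closes.
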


\begin{remark}
\label{remark:sparsity}
The sparsity assumption in (i) is weaker than the usual assumption $p^{-1}\sum_{g=1}^p \allowbreak 1\{\bm{\beta}_{g_j} \neq 0\}=o(\lambda_1 n^{-1/2})$ made by methods that require fully observed data \citep{BCconf}, since $\lambda_1 < \lambda_1^{1/2}$ if $\lambda_1<1$. Note (i) is only required for the $j$th coefficient.
\end{remark}




\subsection{The statistical and computational efficiency of differential abundance estimates}
\label{subsection:theory:DA}
We next consider our estimate for $\bm{\beta}_g$ from Section~\ref{subsection:CoeffInt}. While we want to ensure its statistical fidelity, we are also interested studying its computational efficiency, since maximizing the likelihood in \eqref{equation:thetagLikelihood} requires expensive numerical integrations. We first state a proposition.

\begin{Proposition}
\label{proposition:Cknown}
Suppose Assumptions~\ref{assumption:y} and \ref{assumption:Miss} hold, let $h^{\known}_g(\bm{\beta}_g,\bm{\ell}_g,\sigma_g)$ be the log-likelihood for $\{r_{gi}y_{gi}\}_{i\in[n]}$ when $\bm{C}$ and $\{\E(e_{gi}\mid \mathcal{G})\}_{i\in[n]}$ are known, and let $\hat{\bm{\beta}}_g^{\known}$ be $\bm{\beta}_g$'s corresponding consistent maximum likelihood estimate. Then $\{ \bm{V}_g^{\known} \}^{-1/2} \allowbreak\{\hat{\bm{\beta}}_g^{\known}\allowbreak - \allowbreak \bm{\beta}_g\} \tdist N(0,I_d)$ for $\bm{V}_g^{\known}$ the first $d \times d$ block of $[-\E\{\nabla^2 h^{\known}_g(\bm{\beta}_g,\bm{\ell}_g,\sigma_g^2) \mid \bm{C}, \mathcal{G}\}]^{-1}$.
\end{Proposition}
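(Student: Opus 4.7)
The plan is to view Proposition~\ref{proposition:Cknown} as a classical parametric MLE asymptotic normality result in disguise. Because $\bm{C}$ and $\{\mu_{gi}:=\E(e_{gi}\mid\mathcal{G})\}_{i\in[n]}$ are treated as known, Assumption~\ref{assumption:y}\ref{assump:y:e} implies that conditional on $\{\bm{C},\mathcal{G}\}$,
\[
y_{gi} = \bm{\beta}_g^{\top}\bm{x}_i + \bm{\ell}_g^{\top}\bm{c}_i + \mu_{gi} + \Delta_{gi}^{(e)},\qquad \Delta_{gi}^{(e)} \sim N(0,\sigma_g^2),
\]
independently across $i$, and Assumption~\ref{assumption:Miss}\ref{assump:MissData:Indep} makes the indicators $\{r_{gi}\}_{i\in[n]}$ conditionally independent given $\{y_{gi}\}$. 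So $h^{\known}_g(\bm{\theta},\sigma) = \sum_{i=1}^{n} h^{\known}_{gi}(\bm{\theta},\sigma)$ is a sum of $n$ independent (but not identically distributed) log-likelihood contributions from correctly specified conditional models for $r_{gi}y_{gi}$, and the problem reduces to classical MLE asymptotics.

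The argument then follows the standard sandwich recipe. First, I would apply the Lindeberg--Feller CLT to the normalized score $\{\bm{\mathcal{I}}_g^{\known}\}^{-1/2}\nabla h^{\known}_g(\bm{\beta}_g,\bm{\ell}_g,\sigma_g)$, where $\bm{\mathcal{I}}_g^{\known} := -\E\{\nabla^2 h^{\known}_g\mid\bm{C},\mathcal{G}\}$. The per-observation scores $\nabla h^{\known}_{gi}$ are independent mean-zero vectors whose $(2+\varepsilon)$-moments are bounded using the polynomial control on $\Psi^{(j)}/\Psi$ from Assumption~\ref{assump:MissData:Psi}, the boundedness of $\bm{x}_i$ in \ref{assump:y:XG}, all moments of $\bm{c}_i$ being finite under \ref{assump:y:Lambda:Delta}, $\sigma_g^2\leq a_1$ from \ref{assump:y:e}, and $\mu_{gi}=o(1)$ uniformly in $i$ via $\max_{g,s}\abs{\gamma_{gs}^{(e)}}=o(n^{-1/4})$ together with the network assumption \ref{assump:y:e:Network}. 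Second, I would show $\nabla^2 h^{\known}_g(\bm{\beta}_g,\bm{\ell}_g,\sigma_g)$ concentrates on $-\bm{\mathcal{I}}_g^{\known}$ by a weak law of large numbers applied to its independent summands. That $\bm{\mathcal{I}}_g^{\known}$ scales like $n$ with well-conditioned eigenvalues follows from $\tilde{\bm{X}}^{\top}P_{\bm{1}_n}^{\perp}\tilde{\bm{X}}\succeq n\epsilon I_{d-1}$ in \ref{assump:y:XG}, $\V(\bm{\Delta}_i^{(c)})\succeq \epsilon I_K$ in \ref{assump:y:Lambda:Delta}, and the strict positivity of $\Psi$ and the Gaussian density. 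Finally, a Taylor expansion of $\nabla h^{\known}_g(\hat{\bm{\theta}}_g^{\known},\hat{\sigma}_g^{\known}) = 0$ around the truth, combined with the consistency of $\hat{\bm{\theta}}_g^{\known}$ taken as a hypothesis, yields
\[
\hat{\bm{\theta}}_g^{\known}-\bm{\theta}_g = \{\bm{\mathcal{I}}_g^{\known}\}^{-1}\nabla h^{\known}_g(\bm{\beta}_g,\bm{\ell}_g,\sigma_g)\,\{1+o_P(1)\},
\]
and extracting the first $d$ coordinates produces the stated asymptotic normality with $\bm{V}_g^{\known}$ equal to the first $d\times d$ block of $\{\bm{\mathcal{I}}_g^{\known}\}^{-1}$.

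The main obstacle will be verifying the moment and smoothness conditions uniformly in $(\bm{\theta},\sigma)$ in the presence of the intractable integral
\[
q_{gi}(\bm{\theta},\sigma) = \int \Psi\{\alpha_g(\bm{\theta}^{\top}\bm{z}_i + \mu_{gi} + \sigma e - \delta_g)\}\phi(e)\,\text{d}e
\]
that appears in the missing-data contribution. Derivatives of $\log\{1-q_{gi}\}$ and $\log \Psi$ involve ratios $\Psi^{(j)}/(1-\Psi)$ and $\Psi^{(j)}/\Psi$ that can blow up if $\Psi$'s tails decay faster than polynomially. Assumption~\ref{assump:MissData:Psi}(ii)--(iii), which bounds $\abs{x}^m\abs{\Psi^{(j)}(x)}\leq a_2$ and $\abs{x}^m \Psi(x)\geq a_2^{-1}$ for large $\abs{x}$, is precisely what keeps these ratios polynomial and yields uniform moment bounds on the summands. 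This is exactly why the paper defaults to a t-distribution CDF for $\Psi$ rather than the logistic or normal CDF, and it is what allows the classical MLE regularity program to go through for the missing-data log-likelihood $h^{\known}_g$.
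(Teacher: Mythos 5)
Your plan is essentially the paper's proof of Proposition~\ref{proposition:Cknown}, which lives in Lemma~\ref{supp:lemma:Cknown} and Remark~\ref{supp:remark:Cknown}: condition on $\{\bm{C},\mathcal{G}\}$ so the per-sample contributions to $h_g^{\known}$ become independent, establish a Lindeberg--Feller CLT for the normalized score, concentrate the Hessian on the conditional information, Taylor-expand around the truth, and control all the intractable ratios $\Psi^{(j)}/\Psi$ and $\Psi^{(j)}/(1-\Psi)$ via the polynomial-tail bounds of Assumption~\ref{assumption:Miss}\ref{assump:MissData:Psi}, which is precisely what Lemma~\ref{supp:lemma:PsiBounds} does. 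One remark worth making: the paper's lemma actually does a bit more than the Proposition requires. It does not simply apply classical MLE asymptotics to the correctly specified conditional likelihood (your $h_g^{\known}$, which is the paper's $\tilde{f}^{\known}$); instead it studies the MLE from a slightly misspecified likelihood $f_g^{\known}$ that knows $\bm{C}$ but drops the genetic mean shift $\gamma\bm{G}_{s_g i}$, and then shows the two scores and Hessians are within $o_P(n^{-1/2})$ of each other. That extra approximation step is where the assumption $\max_{g,s}|\gamma_{gs}^{(e)}| = o(n^{-1/4})$ is genuinely indispensable (for your direct route it is only needed as a boundedness guarantee), and it is included because the equivalence between the two MLEs is what Theorem~\ref{theorem:betag} actually needs. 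The lemma also proves consistency via stochastic equicontinuity rather than taking it as a hypothesis, whereas you (legitimately, given the Proposition's phrasing) assume it. So your route is a slightly more streamlined version of the same argument, and it is correct for the Proposition as stated.
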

\noindent Unsurprisingly, estimates are asymptotically normal when we observe the full covariate matrix $[\bm{X},\bm{C}]$ and know the genetic effects $\{\E(e_{gi}\mid \mathcal{G})\}_{i\in[n]}$. The latter is important, since the missing data likelihood is incorrect when the non-normal genetic effects are unknown, which risks biasing estimates. Remarkably, Theorem~\ref{theorem:betag} shows that our estimator for $\bm{\beta}_g$, which replaces $\bm{C}$ with its estimate from Section~\ref{subsection:LF} and ignores genetic effects, is asymptotically equivalent to $\hat{\bm{\beta}}_g^{\known}$.

\begin{theorem}
\label{theorem:betag}
Let $d_1 \leq d-1$. In addition to Assumptions~\ref{assumption:y} and \ref{assumption:Miss}, assume (i) in the statement of Theorem~\ref{theorem:Omega} holds for all $j \in [d_1]$. Suppose we initialize the optimization to maximize \eqref{equation:thetagLikelihood} at the IPW estimates defined in \eqref{equation:IPW}, and let $\hat{\bm{\beta}}_g$ be the estimate for $\bm{\beta}_g$ after updating the IPW estimates with one Fisher scoring step. Then for $\hat{\bm{\beta}}_g^{\known}$ and $\bm{V}_g^{\known}$ defined in Proposition~\ref{proposition:Cknown},
\begin{align}
\label{equation:AsymEquivalent}
    n^{1/2}\norm*{ \hat{\bm{\beta}}_{g (1:d_1)} - \hat{\bm{\beta}}_{g (1:d_1)}^{\known} }_2 = o_P(1), \quad n\norm*{ \hat{\V}(\hat{\bm{\beta}}_g)_{(1:d1)} - \bm{V}_{g (1:d_1)}^{\known} }_2 = o_P(1),
\end{align}
where $\hat{\bm{\beta}}_{g (1:d_1)},\hat{\bm{\beta}}_{g (1:d_1)}^{\known}\in\mathbb{R}^{d_1}$ are the first $d_1$ elements of $\hat{\bm{\beta}}_{g},\hat{\bm{\beta}}_{g}^{\known}$. The matrices $\bm{V}_{g (1:d_1)}^{\known}$ and the observable $\hat{\V}(\hat{\bm{\beta}}_g)_{(1:d1)}$ are the first $d_1 \times d_1$ blocks of $\bm{V}_g^{\known}$ and the minus inverse Fisher information for the likelihood $h_g$ in \eqref{equation:thetagLikelihood} evaluated at the first Fisher scoring step, respectively.
\end{theorem}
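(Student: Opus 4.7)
The plan is to view $\hat{\bm{\beta}}_g$ as the result of a one-step Fisher scoring update applied to the $\sqrt{n}$-consistent IPW initializer and then invoke the classical one-step estimator paradigm: once the initializer converges at rate $n^{-1/2}$ and the observable information matrix at the initial value is close to its oracle counterpart at the truth, one Fisher scoring step is asymptotically equivalent to the full MLE. The non-standard challenges here are (i) the plug-in latent factors $\hat{\bm{z}}_i$ replace the unobserved truth $\bm{z}_i$, (ii) the likelihood $h_g$ assumes $e_{gi}\sim N(0,\sigma_g^2)$, whereas by Assumption~\ref{assump:y:e} the error also contains a small non-normal genetic shift $\sum_{s}\gamma_{gs}^{(e)}G_{si}$, and (iii) the comparison target $\hat{\bm{\beta}}_g^{\known}$ is defined with an oracle likelihood that knows these shifts and the true $\bm{C}$.

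First, I would establish $\sqrt{n}$-consistency of $\hat{\bm{\theta}}_g^{\IPW}$ in its first $d_1$ coordinates. Since $\E(\hat{w}_{gi}\mid y_{gi})=1$ under \eqref{equation:MissingDataModel} with $(\hat{\alpha}_g,\hat{\delta}_g)=(\alpha_g,\delta_g)$, the IPW normal equations give
\begin{align*}
    \hat{\bm{\theta}}_g^{\IPW} - \bm{\theta}_g = \bigl( n^{-1}\textstyle\sum_i \hat{w}_{gi}\hat{\bm{z}}_i\hat{\bm{z}}_i^{\top}\bigr)^{-1}\bigl( n^{-1}\textstyle\sum_i \hat{w}_{gi}\hat{\bm{z}}_i\{ e_{gi} + \bm{\theta}_g^{\top}(\bm{z}_i - \hat{\bm{z}}_i) \} \bigr).
\end{align*}
The first $d_1$ coordinates are $O_P(n^{-1/2})$ after combining Theorem~\ref{theorem:PxC} with the rate bounds on $\hat{\bm{\Omega}}$ in supplemental Section~\ref{supp:subsection:Omega}, the sparsity condition in hypothesis (i), and the fact that $\sum_s \gamma_{gs}^{(e)}G_{si}$ has mean zero with $\max_{g,s}|\gamma_{gs}^{(e)}|=o(n^{-1/4})$ and bounded support, so its contribution to the IPW score is $o_P(n^{-1/2})$.

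Second, I would carry out the standard one-step Taylor expansion: letting $\hat{\bm{\theta}}_g$ denote the estimator after one Fisher scoring step from $\hat{\bm{\theta}}_g^{\IPW}$, expansion of $\nabla h_g$ about $\bm{\theta}_g$ combined with $\sqrt{n}$-consistency of the initializer yields
\begin{align*}
    n^{1/2}(\hat{\bm{\theta}}_g - \bm{\theta}_g) = \{-n^{-1}\mathcal{I}_g(\bm{\theta}_g,\sigma_g)\}^{-1}\{n^{-1/2}\nabla h_g(\bm{\theta}_g,\sigma_g)\} + o_P(1).
\end{align*}
The analogous expansion for $\hat{\bm{\beta}}_g^{\known}$ from Proposition~\ref{proposition:Cknown} then reduces \eqref{equation:AsymEquivalent} to proving (A) the observable score $n^{-1/2}\nabla h_g(\bm{\theta}_g,\sigma_g)$ based on $\hat{\bm{z}}_i$ and the normal likelihood is asymptotically equivalent in its first $d_1$ coordinates to the oracle score of $h_g^{\known}$, and (B) the normalized information matrices agree on the first $d_1\times d_1$ block up to $o_P(1)$. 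Both are controlled by the projection rate from Theorem~\ref{theorem:PxC}, the moment bounds in Assumption~\ref{assump:y:Lambda:Delta}, and the polynomial tail and derivative bounds on $\Psi$ in Assumption~\ref{assump:MissData:Psi}, which together guarantee that the non-linear integral terms in $h_g$ are smooth enough to permit first-order linearization in the plug-in errors. The variance claim in \eqref{equation:AsymEquivalent} then follows from (B) plus uniform continuity of $\mathcal{I}_g$ in $(\bm{\theta},\sigma)$ and the one-step convergence rate.

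The main obstacle will be step (A): translating the Frobenius-level projection bound $\|\hat{\mathcal{P}}-\mathcal{P}\|_F^2=o_P(n^{-1/2})$ into entry-wise control on $n^{-1/2}\sum_i S_{gi}(\hat{\bm{z}}_i - \bm{z}_i)$, where $S_{gi}$ denotes the score contribution that depends non-linearly on $\hat{\bm{z}}_i$ through $\Psi$. My plan is to (1) Taylor expand the score to second order in $\hat{\bm{z}}_i - \bm{z}_i$, (2) bound the first-order cross term by exploiting that the idiosyncratic noise $\Delta_{gi}^{(e)}$ is independent of $\hat{\bm{z}}_i - \bm{z}_i$ up to the weak coupling with metabolite $g$ permitted by Assumption~\ref{assump:y:e:Network}, and (3) bound the quadratic remainder using the Frobenius rate together with the uniform derivative bounds on $\Psi$ from Assumption~\ref{assump:MissData:Psi}. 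The sparsity hypothesis (i) plays a supporting role by ensuring that $\hat{\bm{\Omega}}\tilde{\bm{\ell}}_g$ does not introduce $\Omega(n^{-1/2})$ bias into the first $d_1$ coordinates of the score, so that the effect of replacing $\bm{z}_i$ by $\hat{\bm{z}}_i$ is asymptotically identical to that in the fully observed regime and the apparent misspecification from ignoring $\gamma_{gs}^{(e)}$ is asymptotically absorbed into the noise.
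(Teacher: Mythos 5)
Your high-level strategy — establish $\sqrt{n}$-consistency of the IPW initializer, then run the standard one-step Fisher scoring argument by matching the observable score and information matrix to their oracle counterparts — is essentially the approach the paper takes. However, you've glossed over a critical identification subtlety that, as written, makes your first step fail.

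Your normal-equation display aims to show $\hat{\bm{\theta}}_g^{\IPW} - \bm{\theta}_g = O_P(n^{-1/2})$, with the error driven by $\bm{\theta}_g^{\top}(\bm{z}_i - \hat{\bm{z}}_i)$. But $\hat{\bm{c}}_i$ recovers $\bm{c}_i$ only up to an unknown basis change: Theorem~\ref{theorem:PxC} controls the difference of projection matrices, not the difference $\hat{\bm{c}}_i - \bm{c}_i$, which is $O_P(1)$ coordinate-wise. Thus $\bm{\theta}_g^{\top}(\bm{z}_i - \hat{\bm{z}}_i)$ is $O_P(1)$, not $o_P(n^{-1/2})$, and your bound breaks. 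The paper resolves this by introducing a reparametrized target $\bm{\theta}_g^*$ (Lemma~\ref{supp:lemma:IPWest}) that absorbs the basis rotation from Theorem~\ref{supp:theorem:ChatProp} and the $\hat{\bm{\Omega}}$-based correction, and whose first $d_1$ coordinates still equal $\bm{\beta}_{g(1:d_1)}$. It then proves $\|\hat{\bm{\theta}}_g^{\IPW}-\bm{\theta}_g^*\|_2 = O_P(n^{-1/2})$ in \emph{all} $d+K$ coordinates, which is what the downstream Taylor argument actually needs — restricting to the first $d_1$ coordinates at this stage, as you propose, is not enough because the one-step update couples all coordinates through the information matrix. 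Your subsequent expansion "$n^{1/2}(\hat{\bm{\theta}}_g - \bm{\theta}_g) = \{-n^{-1}\mathcal{I}_g(\bm{\theta}_g,\sigma_g)\}^{-1}\{n^{-1/2}\nabla h_g(\bm{\theta}_g,\sigma_g)\} + o_P(1)$" inherits the same problem: $h_g$ is built from the $\hat{\bm{z}}_i$, so its natural centering point is $\bm{\theta}_g^*$, and both the score and the information should be evaluated there.

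A secondary, more organizational difference: the paper does not compare the one-step estimator to the oracle directly via linearization at the truth. Instead it proceeds in two stages — (a) show the exact maximizer of $h_g$ is asymptotically equivalent to the oracle maximizer $\hat{\bm{\theta}}_g^{\known}$ by Taylor-expanding the observable score around $\hat{\bm{\theta}}_g^{\known}$ and invoking the score closeness of Lemma~\ref{supp:lemma:sg1sg2} and the Hessian closeness of Lemma~\ref{supp:lemma:Hessian}, then (b) show one Fisher scoring step from the IPW initializer recovers the exact maximizer up to $o_P(n^{-1/2})$. The two organizations require the same ingredients, but the paper's centering at $\hat{\bm{\theta}}_g^{\known}$ (rather than the truth) cleanly sidesteps the need to align the two likelihoods' "truths." Once you fix the reparametrization issue by targeting $\bm{\theta}_g^*$, either route will close, but you will then also need, as the paper does, to verify that the score difference $\bm{s}_{g1}-\bm{s}_{g1}^{\known}$ is $o_P(n^{-1/2})$ at $\sqrt{n}$-consistent points — an argument that relies crucially on the $\norm*{\cdot}_\infty$ bound for $\hat{\bm{C}}-\tilde{\bm{C}}\bm{v}$ in Corollary~\ref{supp:corollary:InfNormC}, not merely the Frobenius projection rate you cite, because the nonlinear $\Psi$-terms in the score must be Taylor expanded entry-wise.
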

\noindent Result \eqref{equation:AsymEquivalent} indicates both the estimate $\hat{\bm{\beta}}_{g (1:d_1)}$ and corresponding inference using $\hat{\V}(\hat{\bm{\beta}}_g)_{ (1:d1)}$ is asymptotically equivalent to that when both $\bm{C}$ and genetic effects are known. Together with Proposition~\ref{proposition:Cknown}, this justifies using standard Wald intervals and tests to perform inference.

Two features of Theorem~\ref{theorem:betag} imply our estimates are computationally efficient. First, we need only apply a single iteration of Fisher scoring per metabolite. While we allow more than one iteration in practice, convergence is fast (see supplemental Section~\ref{supp:section:RealData}). Second, Theorem~\ref{theorem:betag} indicates differential abundance inference incurs no cost when using the computationally efficient, but statistically sub-optimal, IPW-based estimate for $\bm{C}$ in Section~\ref{subsection:LF}. This is critical, since the likelihood-based estimate is prohibitively slow to compute due to repeated numerical integration.

\subsection{Fidelity of latent factor-corrected mtGWAS}
\label{subsection:theory:mtGWAS}
Here we justify our mtGWAS method from Section~\ref{subsection:mtGWAS}. Recall $\eta_{gs}^{(e)}$, $\eta_{gs}^{(c)}$, and $\eta_{gs}^{(c,e)}$ are the test statistics that test whether the genotype at SNP $s$ affects metabolite $g$'s idiosyncratic variation $e_{gi}$, low-dimensional variation $\bm{\ell}_g^{\top}\bm{c}_i$, and total variation $\bm{\ell}_g^{\top}\bm{c}_i + e_{gi}$. As we did in Section~\ref{subsection:mtGWAS}, we assume $\bm{X}=0$ for simplicity, but show in Section~\ref{supp:section:TheorymtGWAS} that the extension to general $\bm{X}$ is simple.

\begin{theorem}
\label{theorem:mtGWAS}
Fix a $g \in [p]$, suppose $\bm{X}=0$ and Assumptions~\ref{assumption:y} and \ref{assumption:Miss} hold, and let $\gamma_{gs}^{(e)},\bm{\gamma}_{s}^{(c)}$ be as defined in Assumption~\ref{assumption:y}. Then $\eta_{gs}^{(e)} \tdist \chi^2_1$ if $H_{0,gs}^{(e)}: \gamma_{gs}^{(e)}=0$ is true. If $n^{1/2}\norm{\bm{\ell}_g}_2 \to \infty$, then $\eta_{gs}^{(c)} \tdist \chi^2_1$ if $H_{0,gs}^{(c)}: \bm{\ell}_g^{\top}\bm{\gamma}_{s}^{(c)}=0$ is true and $\eta_{gs}^{(c,e)} \tdist \chi^2_2$ if $H_{0,gs}^{(c,e)}: \gamma_{gs}^{(e)}=\bm{\ell}_g^{\top}\bm{\gamma}_{s}^{(c)}=0$ is true.
\end{theorem}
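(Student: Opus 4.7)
The plan is to establish each of the three convergences in turn, with the third following from the first two by joint asymptotic independence, leveraging the latent-factor control of Theorem~\ref{theorem:PxC} and the mtGWAS analog of Proposition~\ref{proposition:Cknown} and Theorem~\ref{theorem:betag}.

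For $\eta_{gs}^{(e)} \tdist \chi^2_1$, I would treat this as a classical Rao score test. Let $U_n(\bm{\ell},\sigma) = \partial h_{gs}(\gamma,\bm{\ell},\sigma)/\partial \gamma \mid_{\gamma=0}$, the score for $\gamma$ at the null with $\hat{\bm{c}}_i$ plugged in. The first step is to linearize $n^{-1/2}U_n(\hat{\bm{\ell}}_g,\hat{\sigma}_g)$ around the truth: the error $\hat{\bm{c}}_i - \bm{c}_i$ (up to a fixed rotation absorbed into $\hat{\bm{\ell}}_g$) contributes $o_P(1)$ via Theorem~\ref{theorem:PxC} propagated through the smoothness of $\Psi$ from Assumption~\ref{assumption:Miss}\ref{assump:MissData:Psi}, and the errors $\hat{\bm{\ell}}_g - \bm{\ell}_g$, $\hat{\sigma}_g - \sigma_g$ are $O_P(n^{-1/2})$ by the arguments behind Theorem~\ref{theorem:betag}, whose contribution is cancelled by the profile-information centering that is automatic in the Rao construction. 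Under $H_{0,gs}^{(e)}$, the residual genetic effects $\sum_{t \neq s} \gamma_{gt}^{(e)} G_{ti}$ in $e_{gi}$ are independent of $G_{si}$ by Assumption~\ref{assumption:y}\ref{assump:y:XG}, so their interaction with $G_{si}$ inside the score has mean zero, and because the effects are $o(n^{-1/4})$ with boundedly many non-zero terms per Assumption~\ref{assumption:y}\ref{assump:y:e}, they contribute negligibly after rescaling. A Lindeberg CLT then gives $n^{-1/2}U_n(\bm{\ell}_g,\sigma_g) \tdist N(0,\mathcal{I}^*_{gs})$ with profile information $\mathcal{I}^*_{gs} = \mathcal{I}_{gs,11} - \mathcal{I}_{gs,12}\mathcal{I}_{gs,22}^{-1}\mathcal{I}_{gs,21}$, and block inversion shows $[\{-\mathcal{I}_{gs}(0,\hat{\bm{\ell}}_g,\hat{\sigma}_g)\}^{-1}]_{11}$ consistently estimates $(\mathcal{I}^*_{gs})^{-1}$, yielding $\eta_{gs}^{(e)} \tdist \chi^2_1$.

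For $\eta_{gs}^{(c)} \tdist \chi^2_1$, I would use a first-order expansion. Under $H_{0,gs}^{(c)}$,
\[
\hat{\bm{\ell}}_g^{\top}\hat{\bm{\gamma}}_s^{(c)} = (\hat{\bm{\ell}}_g - \bm{\ell}_g)^{\top}\bm{\gamma}_s^{(c)} + \bm{\ell}_g^{\top}(\hat{\bm{\gamma}}_s^{(c)} - \bm{\gamma}_s^{(c)}) + R_n,
\]
with $R_n = (\hat{\bm{\ell}}_g - \bm{\ell}_g)^{\top}(\hat{\bm{\gamma}}_s^{(c)} - \bm{\gamma}_s^{(c)}) = O_P(n^{-1})$. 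Asymptotic normality of the first term follows from a Section~\ref{subsection:theory:DA}-style argument applied to $\hat{\bm{\ell}}_g$ in the mtGWAS setting. Asymptotic normality of the second term follows from standard OLS theory for the regression of $\hat{\bm{c}}_i$ on $G_{si}$, where contributions from SNPs $t \neq s$ vanish because genotypes are independent across SNPs by Assumption~\ref{assumption:y}\ref{assump:y:XG} and their effects are $o(n^{-1/4})$. Asymptotic independence of the two leading terms follows from Assumption~\ref{assumption:y}\ref{assump:y:Lambda}\ref{assump:y:Lambda:Delta} and \ref{assump:y:e}: after de-confounding, the stochastic part of $\hat{\bm{\ell}}_g$ is a linear functional of $\{\Delta_{gi}^{(e)}\}$ and the missing-data randomness, while the stochastic part of $\hat{\bm{\gamma}}_s^{(c)}$ is a linear functional of $\{G_{si}\bm{\Delta}_i^{(c)}\}$, and these are independent. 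The denominator of $\eta_{gs}^{(c)}$ therefore consistently estimates the variance of the linear sum; the requirement $n^{1/2}\norm{\bm{\ell}_g}_2 \to \infty$ forces this variance to dominate $R_n^2$, giving the $\chi^2_1$ limit.

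For $\eta_{gs}^{(c,e)} \tdist \chi^2_2$, write $\eta_{gs}^{(c,e)} = \eta_{gs}^{(c)} + \eta_{gs}^{(e)}$ and apply a joint CLT. The standardized statistic behind $\eta_{gs}^{(e)}$ reduces, up to $o_P(1)$, to a linear functional of $\{G_{si}\Delta_{gi}^{(e)}\}_{i\in[n]}$ plus the missing-data randomness; that behind $\eta_{gs}^{(c)}$ reduces to a linear functional of $\{G_{si}\bm{\ell}_g^{\top}\bm{\Delta}_i^{(c)}\}_{i\in[n]}$. These are asymptotically uncorrelated because $\{\Delta_{gi}^{(e)}\}$ is independent of $\{\bm{\Delta}_i^{(c)},\mathcal{G}\}$ by Assumption~\ref{assumption:y}\ref{assump:y:e} and \ref{assump:y:Lambda}\ref{assump:y:Lambda:Delta}; the residual contribution of $\bm{\gamma}_s^{(c)\top}(\hat{\bm{\ell}}_g - \bm{\ell}_g)$ inside $\eta_{gs}^{(c)}$ is of smaller order under $n^{1/2}\norm{\bm{\ell}_g}_2 \to \infty$, so joint asymptotic normality delivers $\chi^2_2$. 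The main obstacle is the uniform control of the plug-in factor estimate $\hat{\bm{c}}_i$ inside the score and Fisher information for $\eta_{gs}^{(e)}$: since $\hat{\bm{c}}_i$ is a nonlinear function of all metabolites (including $g$), and the score involves a numerical integral over missing observations, turning the bound $\norm{\hat{\mathcal{P}} - \mathcal{P}}_F^2 = o_P(n^{-1/2})$ from Theorem~\ref{theorem:PxC} into an $o_P(1)$ perturbation of the standardized score requires a delicate linearization that carefully tracks how the bounded derivatives of $\Psi$ in Assumption~\ref{assumption:Miss}\ref{assump:MissData:Psi} propagate through the integral. A secondary difficulty is decoupling the noise in $\hat{\bm{\ell}}_g$ from that in $\hat{\bm{\gamma}}_s^{(c)}$, since both are built from the same $\hat{\bm{c}}_i$; this is resolved by noting that metabolite $g$'s own contribution to $\hat{\bm{c}}_i$ is $O_P(p^{-1/2}) = O_P(n^{-1/2})$ and therefore of higher order.
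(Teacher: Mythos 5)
Your proposal follows essentially the same strategy as the paper. You reduce the plug-in score test for $\eta_{gs}^{(e)}$ to its oracle counterpart using Theorem~\ref{theorem:PxC} and the mtGWAS analogue of Theorem~\ref{theorem:betag}, decompose $\hat{\bm{\ell}}_g^{\top}\hat{\bm{\gamma}}_s^{(c)}$ into two asymptotically independent linear terms, and combine via a joint CLT; the paper does exactly this, packaging the asymptotic equivalences into the three supplementary lemmas (the analogues of your informal linearizations for the score, $\hat{\bm{\gamma}}_s^{(c)}$, and $\hat{\bm{\ell}}_g$) and then citing ``standard techniques'' for the oracle $\chi^2$ limit that you spell out in more detail.

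One point worth tightening: you argue that the residual genetic terms $\sum_{t\neq s}\gamma_{gt}^{(e)}G_{ti}$ contribute negligibly because they are small and independent of $G_{si}$. This is correct for the mean of the score, but the deeper issue the paper's machinery resolves is that these terms render the plugged-in likelihood (Gaussian $e_{gi}$) misspecified for the missingness probability, not just the mean structure. The paper handles this by constructing an oracle likelihood that explicitly includes $\sum_{r\in\mathcal{H}_g\setminus\{s\}}\gamma_{gr}^{(e)}G_{ri}$ in the conditional mean, and then showing (in the proof of Proposition~\ref{proposition:Cknown}/Lemma for $\bm{C}$-known) that the difference between the oracle and misspecified scores and Hessians is $o_P(n^{-1/2})$ and $o_P(1)$ respectively, using the $o(n^{-1/4})$ scale, boundedly many nonzero $\gamma_{gr}^{(e)}$, and the bounded derivatives of $\Psi$. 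Your invocation of ``the arguments behind Theorem~\ref{theorem:betag}'' implicitly covers this, but you should make explicit that it is the likelihood misspecification in the missingness integral, not merely a mean-zero cross-term, that needs to be controlled. Your observation that metabolite $g$'s own contribution to $\hat{\bm{c}}_i$ is $O_P(p^{-1/2})$ is a useful piece of intuition for the decoupling; the paper instead routes this through the explicit expansion $\hat{\bm{C}}=\tilde{\bm{C}}\hat{\bm{v}}+\bm{Q}\hat{\bm{z}}$ and its rate, but the two viewpoints give the same conclusion.
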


\begin{remark}
The non-trivial effect of latent factors suggests $n^{1/2}\norm{\bm{\ell}_g}_2$ is large for most $g$. 
\end{remark}


\section{Real data analysis}
\label{section:RealData}
We used three metabolomic datasets to evaluate our method MS-NIMBLE. Table~\ref{Table:Cohorts} describes the data, which were collected from the plasma of children that were part of the Copenhagen Prospective Study on Asthma in Childhood (COPSAC) \citep{COPSAC} or Infant Susceptibility to Pulmonary Infections and Asthma Following RSV Exposure Study (INSPIRE) \citep{INSPIRE} cohorts. We partitioned metabolites into ``observed'' metabolites ($<5\%$ missing data) and metabolites with missing data ($\geq 5\%$ but $\leq 50\%$ missing data), and discarded metabolites with $>50\%$ missing data. We were primarily interested in metabolites with missing data. Supplemental Section~\ref{supp:section:simulations} provides simulations further demonstrating MS-NIMBLE's superior performance. 


\begin{table}
\centering
\includegraphics[width=0.85\textwidth]{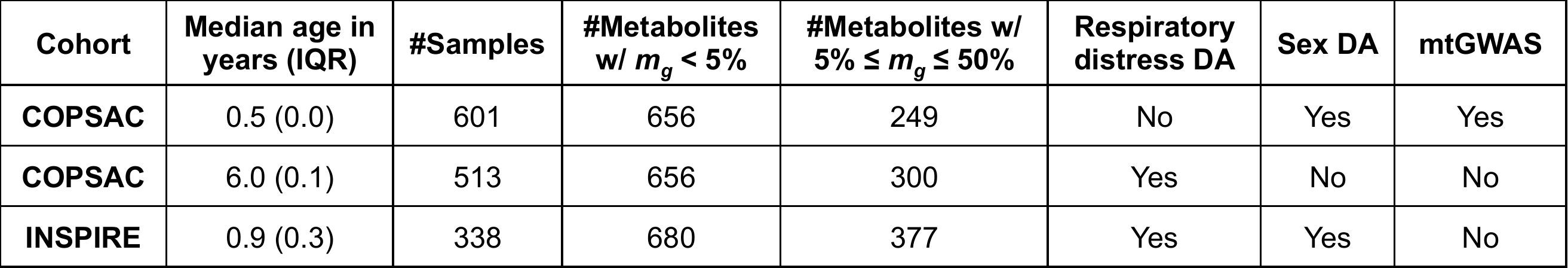}
\caption{An overview of the real data analyzed in Section~\ref{section:RealData}, where $m_g$ is the fraction of metabolite $g$'s observations that are missing. The sixth and seventh columns indicate whether a differential abundance (DA) analysis was performed using respiratory-related traits and sex. The last column indicates if the dataset was used to perform the mtGWAS.}\label{Table:Cohorts}
\end{table}

\subsection{Real data differential abundance analyses}
\label{subsection:realData:DA}
Since the COPSAC and INSPIRE studies were designed to investigate respiratory illness through childhood, we first used MS-NIMBLE to identify respiratory-related metabolites. Specifically, we considered the phenotypes specific airway resistance (sRAW), a measure of airway patency in the COPSAC cohort, and infant wheeze, defined as whether the infant wheezed during the first year of life in the INSPIRE cohort. Since there was no evidence of sRAW-related metabolites in infancy, we did not consider the 0.5 year COPSAC dataset in this analysis.

We compared MS-NIMBLE's estimators for and inference on $\bm{\beta}_g$ from Section~\ref{subsection:CoeffInt} to two competing approaches. The first, MetabMiss \citep{MetabMiss}, uses the estimates for the missingness mechanism parameters from Section~\ref{subsection:MissMech} and takes a similar approach as that in Section~\ref{subsection:LF} to recover latent factors. However, its estimates for $\bm{\beta}_g$ discard missing data, and are therefore expected to be substantially less powerful than MS-NIMBLE. The second imputes missing data using one of minimum imputation, singular value decomposition (SVD), K-nearest neighbors (KNN), or random forest (RF), the four most commonly used imputation techniques \citep{Impute2}, and subsequently estimates $\bm{\beta}_g$ using the latent factor-correction method CATE \citep{CATE}. While many methods can adjust for latent factors in imputed data, we found CATE gave the best simulation results in supplemental Section~\ref{supp:section:simulations}. To facilitate inter-method comparisons, the number of latent factors was set to be the same for each method, and, as done previously \citep{MetabMiss,Bilirubin}, was estimated via parallel analysis applied to metabolites with no missing data. Supplemental Section~\ref{supp:section:RealData} contains additional details, including method-specific software settings.

Figure~\ref{Figure:DAAsthma}(a) gives the number of respiratory-associated metabolites with missing data identified by each method at a q-value threshold of 0.2. As expected, MS-NIMBLE identifies over three times as many metabolites as MetabMiss, where the three piperine metabolites identified by MetabMiss, whose relationship with sRAW has previously been explored \citep{MetabMiss}, were also identified by MS-NIMBLE (Figure~\ref{Figure:DAAsthma}(b)). Figure~\ref{Figure:DAAsthma}(c) provides a biological explanation for the remaining metabolites in Figure~\ref{Figure:DAAsthma}(b) uniquely identified by MS-NIMBLE, which helps argue the veracity of MS-NIMBLE's identifications. The small p-values in Figure~\ref{Figure:DAAsthma}(a), which test the null hypothesis from Theorem~\ref{theorem:Omega}, suggest latent factors confound the relationship between the two respiratory traits and metabolite levels. As a consequence, Section~\ref{section:Imputation} and supplemental Section~\ref{supp:section:simulations} suggest imputation methods are inflating type I error rates, thereby casting doubt on their results.

\begin{figure}
\centering
\includegraphics[width=1\textwidth]{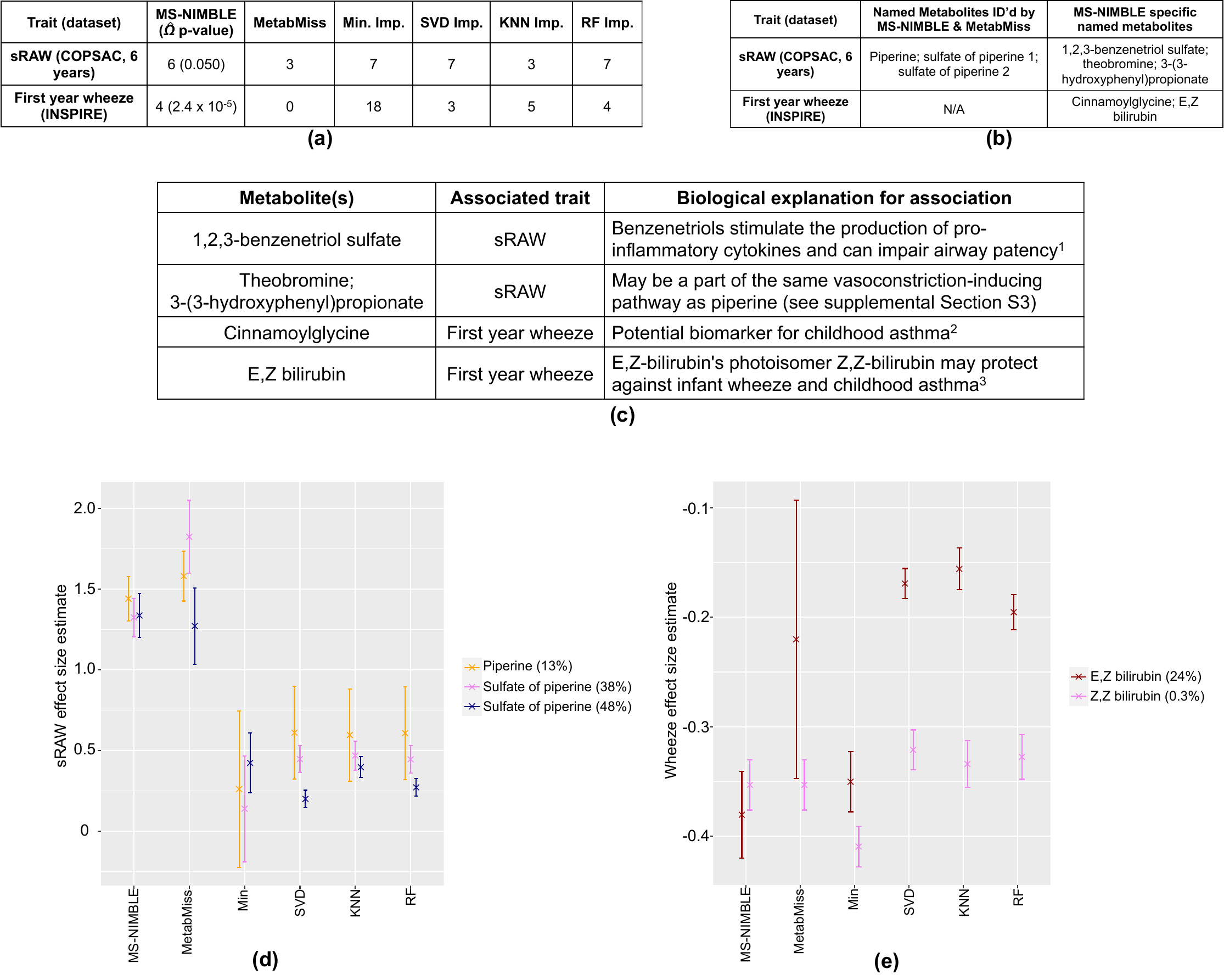}
\caption{Respiratory-related differential abundance results. \textbf{(a)}: The number of metabolites with missing data identified at a q-value threshold of 0.2. MS-NIMBLE's p-value is the p-value for the null hypothesis from Theorem~\ref{theorem:Omega} that the trait is not related to the latent factors. \textbf{(b)}: Named metabolites with missing data that were identified by MS-NIMBLE and MetabMiss (second column) and MS-NIMBLE but not MetabMiss (third column). MS-NIMBLE identified two unnamed wheeze-associated metabolites. \textbf{(c)}: Biological plausibility of metabolites from (b) uniquely identified by MS-NIMBLE. Superscripts are 1: \citet{Benzenetriol}; 2: \citet{Cinnamon}; 3: \citet{Bilirubin}. \textbf{(d)}-\textbf{(e)}: Effect estimates and 95\% confidence intervals for selected metabolites. Numbers in parentheses are the fractions of missing metabolite data.}\label{Figure:DAAsthma}
\end{figure}

Having argued hypothesis testing with MS-NIMBLE is sensitive and specific, we turn our attention to the reliability of MS-NIMBLE's coefficient estimates for respiratory-associated metabolites. Since the ground truth is unknown, we study similar metabolites, as they are likely to have similar effects. Given our results in Figure~\ref{Figure:DAAsthma}(b), we consider piperine- and bilirubin-related metabolites, where we chose the latter because E,Z-bilirubin's photoisomer Z,Z-bilirubin was fully observed and shown to be a replicable biomarker for infant wheeze \citep{Bilirubin}. Figures~\ref{Figure:DAAsthma}(d)-(e) provide the results, which illustrate the consistency of MS-NIMBLE's estimates. Figure~\ref{Figure:DAAsthma}(e) is particularly interesting, as it suggests MS-NIMBLE's estimates and standard errors for metabolites with missing data are as reliable as those for fully observed metabolites.

To further explore the fidelity of MS-NIMBLE's estimates, we compared estimators for the effect of sex, an important source of metabolite variation, on metabolite levels in the 0.5 year COPSAC and INSPIRE datasets. Let $\hat{\beta}_g^{(\text{C})},\hat{\beta}_g^{(\text{I})}$ be a method's sex effect estimates for metabolite $g$ in COPSAC and INSPIRE and $\hat{\V}(\cdot)$ their estimated variances. Since these data were collected from unrelated infants at similar ages, their sex effects should be the same, meaning the z-score $\{ \hat{\beta}_g^{(\text{C})}\allowbreak - \allowbreak \hat{\beta}_g^{(\text{I})} \}/[ \hat{\V}\{ \hat{\beta}_g^{(\text{C})} \} + \hat{\V}\{ \hat{\beta}_g^{(\text{I})} \} ]^{1/2}$ should be approximately $N(0,1)$. Interestingly, the metabolome-wide z-scores for imputation-based methods, but not MS-NIMBLE, were significantly inflated (Table~\ref{Table:DAGender}), indicating imputation-based estimates and their standard errors are unreliable. While several factors are likely responsible for this inflation, we hypothesized errant effect estimates for metabolites with missing data were partly responsible. Given Section~\ref{section:Imputation} and supplemental Figure~\ref{supp:figure:SimCI}'s simulation results showing estimates in trait-associated metabolites are most corrupted by missing data, we considered z-scores for the 64 sex-associated missing metabolites, defined as metabolites with missing data and sex q-values $\leq 0.2$ in at least one method, dataset pair. Consistent with our hypothesis, Table~\ref{Table:DAGender} shows these z-scores were inflated in imputation methods, whereas MS-NIMBLE showed no evidence of inflation. The conclusions were the same even when we separately examined each method's sex-associated missing metabolites (Figure~\ref{supp:figure:ZscoresSub}), implying differences between MS-NIMBLE and imputation methods could not be attributed to metabolite selection biases, and indicate MS-NIMBLE's estimates and standard errors are accurate.

\begin{table}
\centering
\includegraphics[width=0.8\textwidth]{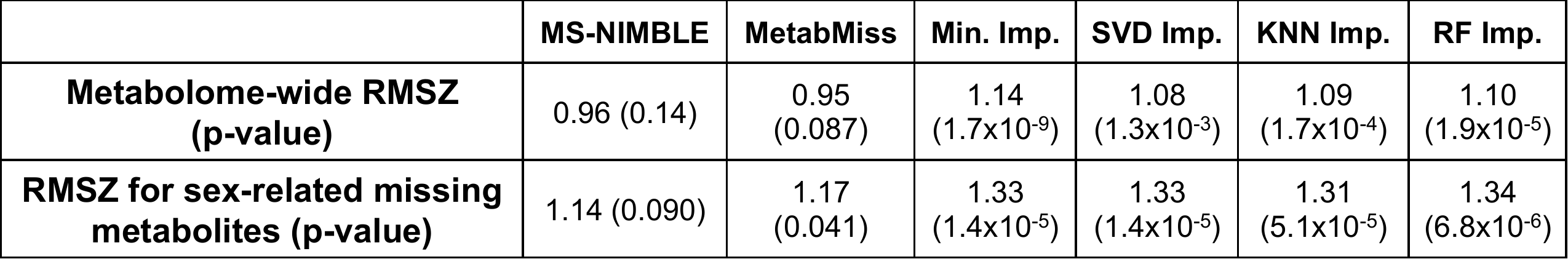}
\caption{Root mean squared z-score (RMSZ) for all analyzed metabolites (second row) and the 64 sex-associated missing metabolites (third row), where an RMSZ $>1$ suggests z-scores are inflated. The p-value is for the null hypothesis that z-scores are $N(0,1)$.}\label{Table:DAGender}
\end{table}






\subsection{Metabolite GWAS in the six month COPSAC data}
\label{subsection:realData:mtGWAS}
We examined the effect of genotype at 1.4 million SNPs on metabolite levels in the six month COPSAC data to evaluate the performance of our methodology proposed in Section~\ref{subsection:mtGWAS}. As far as we are aware, only \citet{CMS_metab} has considered controlling for latent sources of variation in mtGWAS studies. However, their method requires determining a set of latent covariates for each metabolite-SNP pair, which, as determined by their simulation results, would take 12 CPU Years if applied to our data. We therefore compared our results to those using the current state of the art, which involves first imputing missing metabolite levels and subsequently regressing them onto genotype without considering latent variation \citep{CMS_metab}. We present results for minimum imputation, but note imputation technique did not alter results.

Figure~\ref{Figure:mtGWAS}(a) contains the results, where the second and third rows imply nearly all of the genetic effect is idiosyncratic and appears in the error terms $e_{gi}$, whereas there is no evidence indicating latent factors $\bm{c}_i$ mediate genetic effects. This suggests mtGWAS analyses should be performed conditional on estimated latent factors, as in the second row of Figure~\ref{Figure:mtGWAS}(a), which is equivalent to data de-noising. This is recapitulated by Figure~\ref{Figure:mtGWAS}(b), which shows such de-noising reduces the residual variance by $\approx 40\%$, thereby effectively increasing the sample size by 67\%.

\begin{figure}[t!]
\centering
\includegraphics[width=0.9\textwidth]{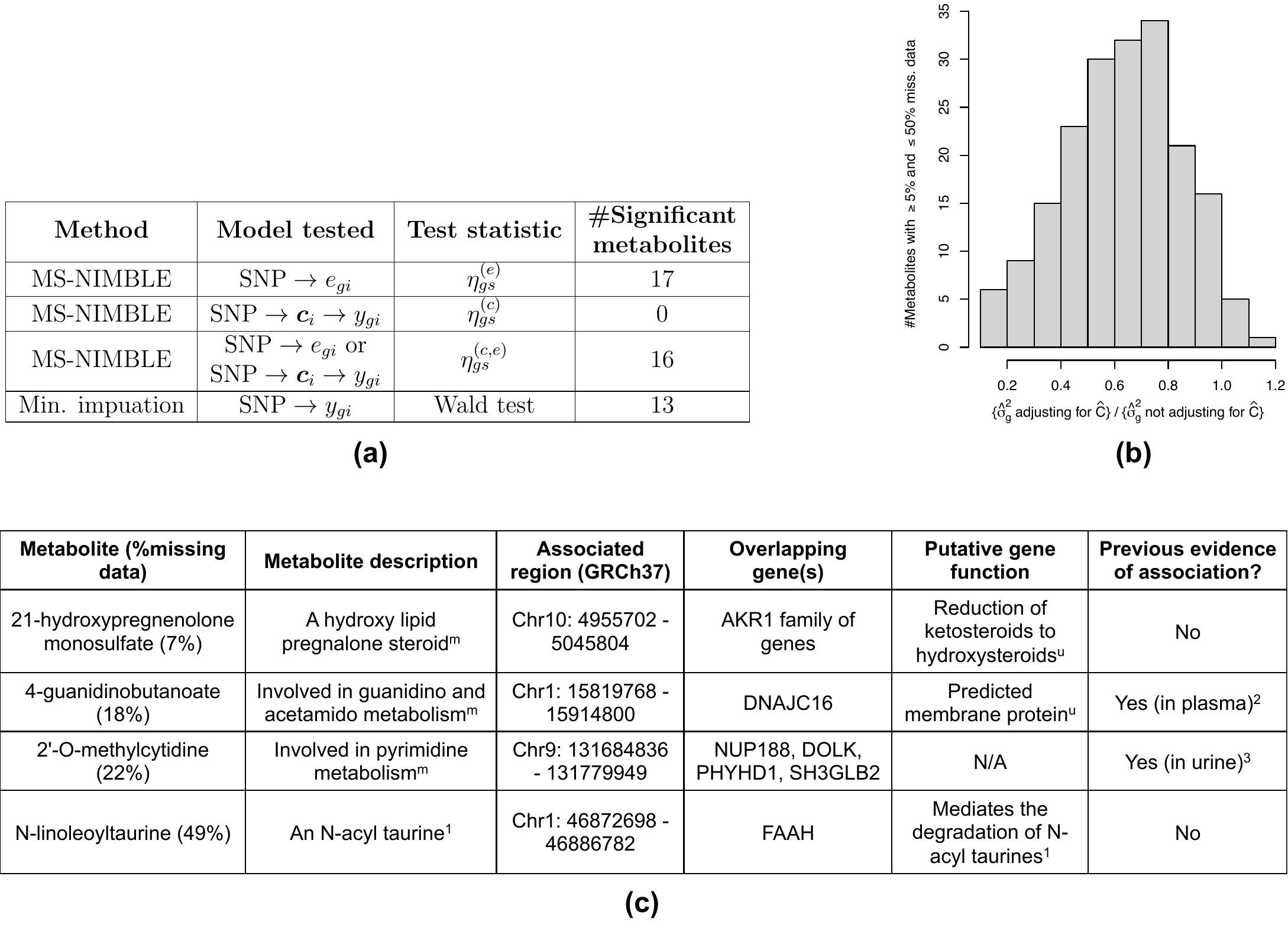}
\caption{mtGWAS results for metabolites with missing data. \textbf{(a)}: A metabolite was ``significant'' if it was associated with at least one SNP at the Bonferroni p-value threshold $(5 \times 10^{-8})/(656+249)$. \textbf{(b)}: Reduction in residual variance after adjusting for latent factors. \textbf{(c)}: Named metabolites that were identified by MS-NIMBLE in the second row of (a), but not minimum imputation. A metabolite-genomic region association had previous evidence if the region contained SNPs previously shown to be associated with the metabolite. Superscripts are m: derived from Metabolon; u: obtained from Uniprot; 1: \citet{mtGWAS_NAT}; 2: \citet{mtGWAS_4guan}; 3: \citet{mtGWAS_methyl}.}\label{Figure:mtGWAS}
\end{figure}

The last row of Figure~\ref{Figure:mtGWAS}(a) indicates existing approaches are underpowered, where 11 out of the 13 metabolites identified by minimum imputation were among the 17 metabolites identified by our proposed method in row two of Figure~\ref{Figure:mtGWAS}(a). To explore the veracity our method's results, we sought to evaluate the biological significance of the four named metabolites uniquely identified by our method. We did not consider the other two metabolites, since they were unnamed. Figure~\ref{Figure:mtGWAS}(c) shows that two out of the four associations have previously been observed, whereas, to the best of our knowledge, the results involving 21-hydroxypregnenolone monosulfate and N-linoleoyltaurine are novel. Critically, their metabolite descriptions and associated gene functions are congruent, suggesting our method improves power to identify genuine mtGWAS associations.

\section{Conclusion}
\label{section:Conclusion}
We developed MS-NIMBLE, a rigorous suite of methods to analyze metabolomics data with non-ignorable missing observations and latent factors that offers all the practical advantages of missing data imputation. We derived its theoretical properties and demonstrated its superior performance in differential abundance and mtGWAS using three real datasets. We believe this work offers a critical step towards reliable estimation and inference in metabolomic studies.


\section*{Acknowledgments}
\noindent This work was supported by the NIH [UL1 TR001857 (C.M.), K01 HL149989 (K.T.), U19 AI 095227 (T.H.), UG3/UH3 OD023282 (T.H.), UL1 TR002243 (T.H.), R01 HL129735 (C.O.)]. All funding received by COPSAC is listed on www.copsac.com.

\printbibliography

\newpage

\begin{center}
    {\Large \textbf{Supplemental material for ``From differential abundance to mtGWAS: accurate and scalable methodology for metabolomics data with non-ignorable missing observations and latent factors''}}
\end{center}
\allowdisplaybreaks

\setcounter{equation}{0}
\setcounter{theorem}{0}
\setcounter{figure}{0}
\setcounter{table}{0}
\setcounter{section}{0}
\renewcommand{\thefigure}{S\arabic{figure}}
\renewcommand{\thetable}{S\arabic{table}}
\renewcommand{\thesubfigure}{(\alph{subfigure})}
\renewcommand{\theequation}{\thesection.\arabic{equation}}
\renewcommand{\thesection}{S\arabic{section}}
\renewcommand{\thetheorem}{\thesection.\arabic{theorem}}
\renewcommand{\theremark}{\thesection.\arabic{remark}}
\renewcommand{\thelemma}{\thesection.\arabic{lemma}}
\renewcommand{\thecorollary}{\thesection.\arabic{corollary}}

\section{The normality assumption}
\label{supp:section:Normal}
We rely on the assumption that $e_{gi}$ in \eqref{equation:MainModel} is approximately normally distributed to develop statistically efficient estimators. However, while we assume $e_{gi}$ is approximately normal, we do not assume $y_{gi}$ is normal. This is a critical distinction, as Figure~\ref{supp:Figure:Normal} indicates the latent factors $\bm{c}_i$ may be highly skewed.

\begin{figure}[b!]
\centering
\includegraphics[width=0.8\textwidth]{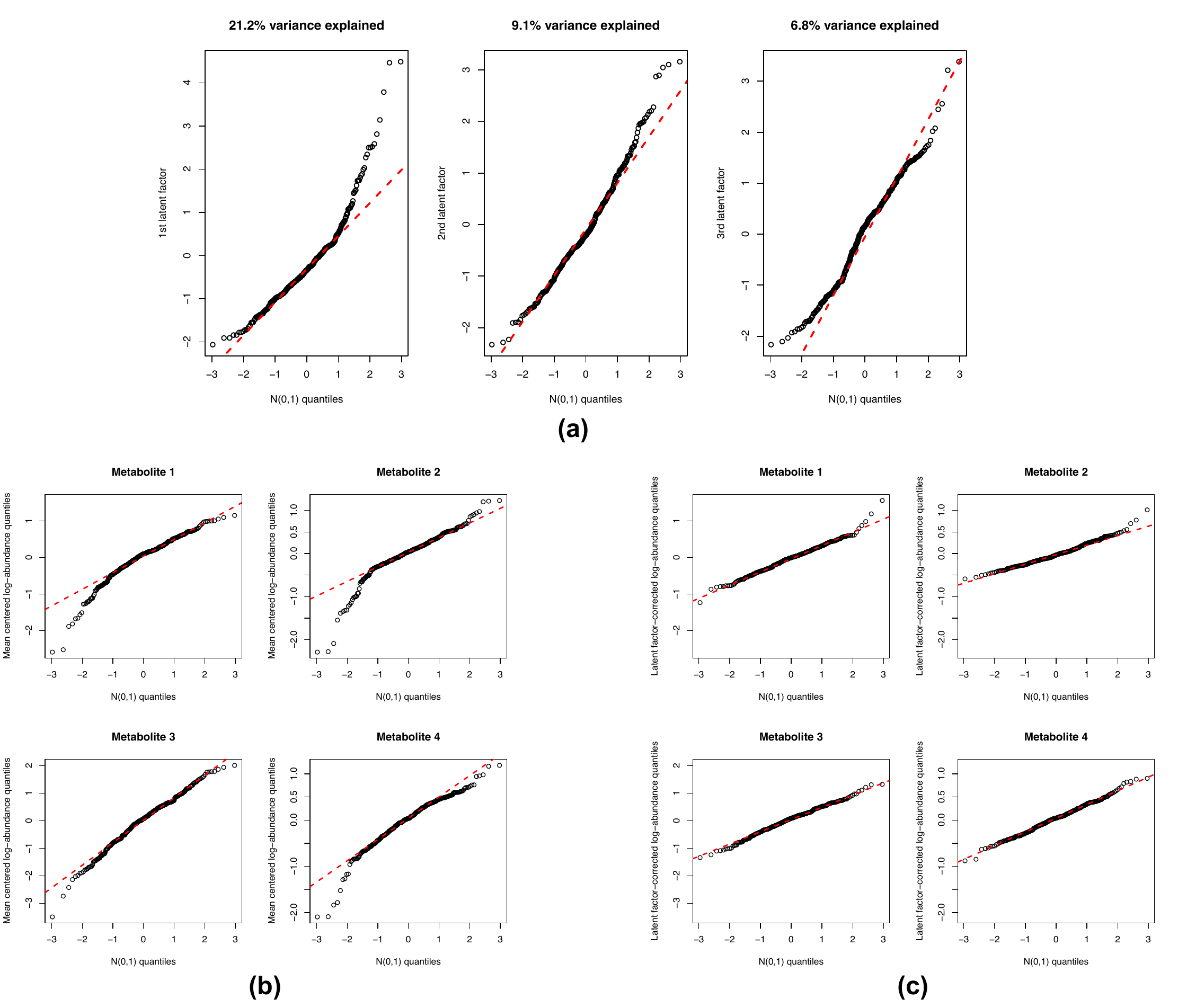}
\caption{Normality of plasma metabolite levels from \citet{Bilirubin}. (a) Normal Q-Q plots for the first three estimated latent factors. (b) Normal Q-Q plots for four randomly chosen metabolites. (c) Q-Q plots for the same four metabolites, except after regressing out the $K=19$ latent factors.}\label{supp:Figure:Normal}
\end{figure}

\section{Simulations}
\label{supp:section:simulations}

\subsection{Simulation setup}
\label{supp:subsection:SimulationSetup}
We simulated 50 datasets containing $p=1200$ metabolites measured in $n=600$ samples with missing observations and $K=10$ latent factors to best mirror our real data from Section~\ref{section:RealData}. We partitioned individuals into equal sized treatment and control groups, where the covariate of interest $\bm{X} \in \{0,1\}^n$ denotes treatment status. For some constant $a \in \mathbb{R}$ controlling the dependence of latent factors on $\bm{X}$, metabolite levels $y_{gi}$ and missingness indicators $r_{gi}$ were then simulated according to \eqref{supp:equation:MetabMiss:Simulation} below.

\begin{subequations}
\label{supp:equation:MetabMiss:Simulation}
\begin{align}
    \label{supp:equation:Sim:adelta}
    &\log\left( \alpha_{g}\right) \sim N_1\left( \mu_{\alpha},0.4^2\right), \quad \delta_{g} \sim N_1\left( 16, 1.2^2\right), \quad g \in [p]\\
    \label{supp:equation:Sim:C}
    &\bm{C}=\left(\bm{c}_1 \cdots \bm{c}_n\right)^{\top} \sim MN_{n \times K}\left( \left( a\bm{X}, \, a\bm{X},\, \bm{0}_n \cdots \bm{0}_n\right), I_n, I_K \right)\\
    \label{supp:equation:Sim:ell}
    &\bm{\ell}_{g_k} \sim \pi_k \delta_0 + \left( 1-\pi_k\right)N_1\left( 0,\tau_k^2\right), \quad g\in[p]; k\in[K]\\
    \label{supp:equation:Sim:meanvar}
    &\mu_{g} \sim N_1\left(18, 5^2 \right), \quad \sigma_{g}^2 \sim \text{Gamma}\left( 0.2^{-2}, 0.2^{-2}\right), \quad g\in[p]\\
    \label{supp:equation:Sim:effect}
    &\beta_{g} \sim 0.8\delta_0 + 0.2N_1\left( 0, 0.4^2\right), \quad g\in[p]\\
    \label{supp:equation:Sim:y}
    &y_{gi} \sim N_1\left( \mu_{g} + \bm{X}_{i}\beta_{g} + \bm{c}_i^{\top} \bm{\ell}_{g}, \sigma_{g}^2 \right), \quad g\in[p]; i\in[n]\\
    \label{supp:equation:Sim:r}
    &r_{gi} \sim \text{Bernoulli}\left[ \tilde{\Psi}\left\lbrace \alpha_{g}\left(y_{gi}-\delta_{g} \right) \right\rbrace \right], \quad g\in[p]; i\in[n]
\end{align}
\end{subequations}
where $\delta_0$ is the point mass at 0 and $\mu_{\alpha}$ in \eqref{supp:equation:Sim:adelta} was set so that if $Z$ has cumulative distribution function $\tilde{\Psi}\left\lbrace \exp\left( \mu_{\alpha}\right)x \right\rbrace$, $\V(Z)=1$. To study scenarios where we incorrectly specify $\Psi$ in \eqref{equation:MissingDataModel}, we let $\tilde{\Psi}$ in \eqref{supp:equation:Sim:r} be the cumulative distribution function (CDF) of a logistic random variable, but analyzed the data assuming $\Psi$ was the CDF of a t-distribution with four degrees of freedom. The normal means and variances in \eqref{supp:equation:Sim:adelta} and \eqref{supp:equation:Sim:meanvar} were chosen to match those estimated in the three datasets from Section~\ref{section:RealData}, and the parameters used to simulate the loadings $\bm{\ell}_g$ in \eqref{supp:equation:Sim:ell} are given in Table~\ref{supp:Table:LMetab}. The loadings were chosen so that the eigenvalues $\lambda_1,\ldots,\lambda_K$ from Assumption~\ref{assumption:y} ranged from $n^{-0.47}=0.05$ to 0.80 on average, which mirrored the eigenvalues estimated from the six year COPSAC data (see Table~\ref{Table:Cohorts}). The constant $a$ in \eqref{supp:equation:Sim:C} was chosen so that $\bm{C}$ explained 60\% of the variance in $\bm{X}$ on average, and was chosen to match the substantial correlation between latent factors and infant wheeze in INSPIRE (see Figure~\ref{Figure:DAAsthma}(a)). Lastly, we simulated the effects of interest $\beta_g$ in \eqref{supp:equation:Sim:effect} to violate the sparsity assumption in (i) of Theorem~\ref{theorem:Omega}, which is also used to prove Theorem~\ref{theorem:betag}.

\LtabMetab

\subsection{Simulation results}
\label{supp:subsection:SimulationResults}
We compared MS-NIMBLE's estimates for $\beta_g$ to those from MetabMiss \citep{MetabMiss} and imputation-based methods, the latter of which first impute missing data with one of minimum imputation, singular value decomposition (SVD), K-nearest neighbors (KNN), or random forest (RF), and subsequently use CATE \citep{CATE} to estimate latent factors. While other methods are capable of estimating latent factors in complete data, we found that CATE gave the best results. Imputation hyperparameters were $K=10$ factors for SVD and the software defaults recommended in \citet{Impute2} for KNN and RF. The estimates for $\alpha_g$ and $\delta_g$, which were used by both MS-NIMBLE and MetabMiss, were obtained using the method proposed in \citet{MetabMiss} and outlined in Section~\ref{subsection:MissMech} with 5 potential instruments. We do not include results when $\bm{C}$ is known or when it is ignored, as they both performed similarly to and uniformly worse than KNN imputation, respectively.

On the average, 485 metabolites were fully observed (i.e. missing in $< 5\%$ of samples) and 300 were missing (i.e. $\geq 5\%$ but $\leq 50\%$ missing data). Metabolites with $>50\%$ missing data were discarded. We first consider each method's ability to identify missing metabolites with non-zero $\beta_g$. Figure~\ref{supp:figure:SimFDP} gives the results, where Figure~\ref{supp:figure:SimFDP}(a) indicates MS-NIMBLE and, to a lesser extent, MetabMiss are able to control false discovery rates at their nominal levels. However, Figure~\ref{supp:figure:SimFDP}(b) indicates MS-NIMBLE has 50\% greater power than MetabMiss to identify treatment-related metabolites with missing data. These results are consistent with the fact that while MetabMiss does use inverse probability weighting to account for the non-ignorable missing data, their estimates for $\beta_g$ discard missing data, and are therefore less powerful. On the other hand, imputation-based methods inflate error rates and have poor power. The former is consistent with our discussion from Section~\ref{section:Imputation}, as their false discovery proportions resembled nominal levels when we simulated data with latent factors $\bm{C}$ that did not depend on $\bm{X}$.

\begin{figure}
\centering
\includegraphics[width=0.8\textwidth]{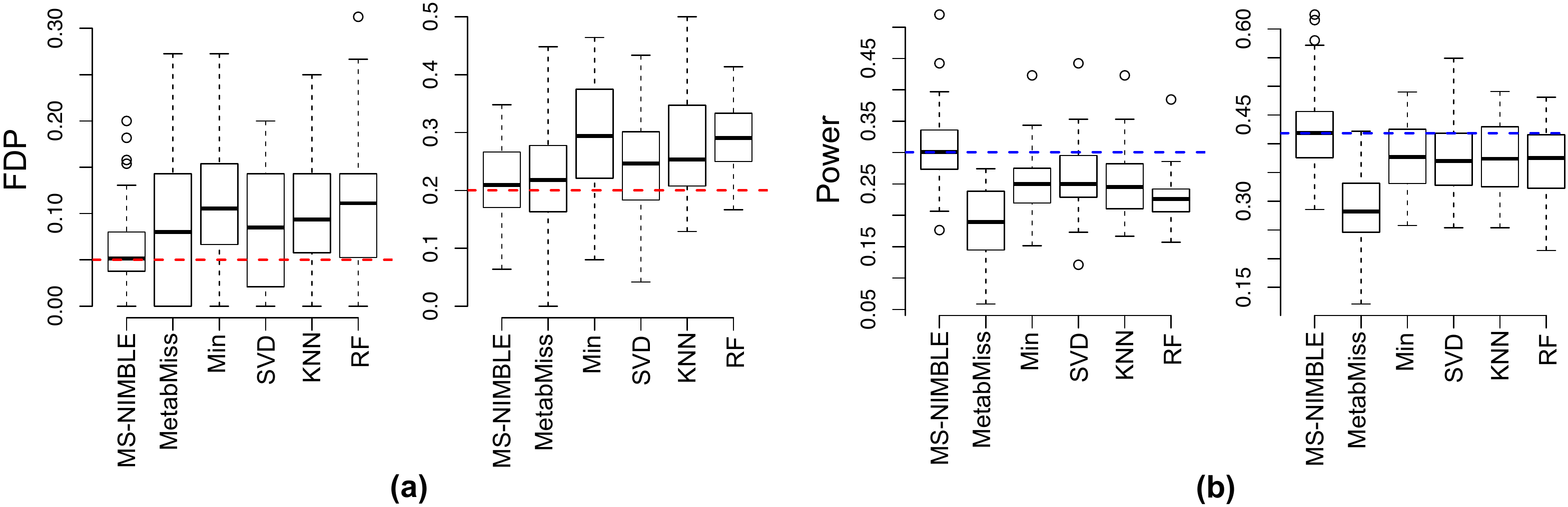}
\caption{False discovery proportion (a) and power (b) for metabolites with missing data at q-value thresholds of 0.05 (left) and 0.2 (right). The dashed red and blue lines indicate the q-value thresholds and MS-NIMBLE's median power, respectively.}\label{supp:figure:SimFDP}
\end{figure}

We lastly considered each method's estimates and 95\% confidence intervals for $\beta_g$ for metabolites $g$ with missing data, where confidence intervals were standard Wald intervals assuming estimates for $\beta_g$ were approximately normal. Figure~\ref{supp:figure:SimCI}(a) contains the results, where only MS-NIMBLE and MetabMiss return accurate intervals. However, consistent with the above discussion and results from Section~\ref{subsection:realData:DA}, MetabMiss's intervals are on average over 25\% wider than MS-NIMBLE's (Figure~\ref{supp:figure:SimCI}(b)). We also see that imputation-based intervals become less accurate as $\abs*{\beta_g}$ increases, which corroborates our discussion in Section~\ref{section:Imputation}.

\begin{figure}
\centering
\includegraphics[width=0.8\textwidth]{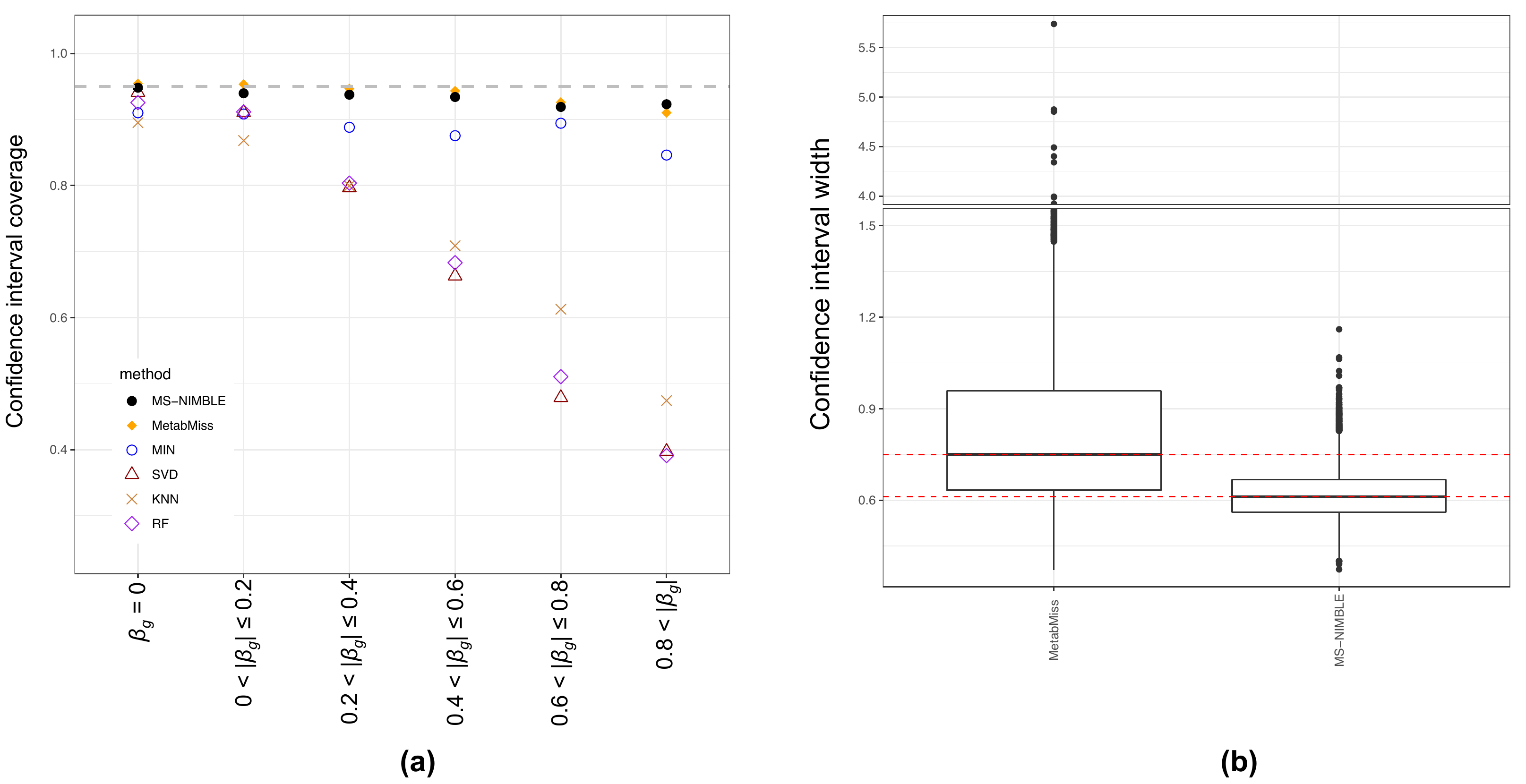}
\caption{\textbf{(a)}: 95\% confidence interval coverage for metabolites with missing data. The dashed grey line indicates 95\% coverage. \textbf{(b)}: 95\% confidence interval widths for metabolites with missing data. Each point represents a simulated metabolite with missing data. MS-NIMBLE's and MetabMiss's confidence interval widths did not depend on $\beta_g$.}\label{supp:figure:SimCI}
\end{figure}

\section{Additional real data details and results from Section~\ref{section:RealData}}
\label{supp:section:RealData}

\subsection{Additional real data and analysis details}
\label{supp:subsection:RealData:details}
Raw metabolite intensities were log base 2-transformed. There were no additional quality control or pre-processing steps.

Missingness mechanism parameters $\alpha_g,\delta_g$, which are used by both MS-NIMBLE and MetabMiss, were estimated using the procedure outlined in Section~\ref{subsection:MissMech} and described in detail in \citet{MetabMiss} with 10 potential instruments. Missing data were imputed exactly as described in Section~\ref{supp:subsection:SimulationResults}.

Differential abundance regressions in INSPIRE were performed by controlling for the observed covariates daycare status (yes/no), breast-feeding status (exclusively breast-fed or not in the first six months of life), age in months, and sex in the first year wheeze analysis. The sRAW analysis in the six year COPSAC dataset was done conditional on sex, and we did not include any observed nuisance covariates in the 0.5 year COPSAC sex regression.

\subsection{Additional real data results}
\label{supp:subsection:RealData:results}
We first justify the observed relationship between infant wheeze and theobromine and 3-(3-hydroxyphenyl)propionate levels in INSPIRE (see Figure~\ref{Figure:DAAsthma}(c)), where wheezers tended to have higher plasma concentrations of both metabolites. Theobromine is an alkaloid commonly found in the cacao plant, and is a notable adenosine receptor antagonist \citep{Theobromine1}. Higher theobromine concentrations tend to increase plasma adenosine levels \citep{Theobromine2}, thereby potentially exacerbating adenosine's bronchoconstricting properties \citep{Theobromine2,Adenosine}. The metabolite 3-(3-hydroxyphenyl)propionate is a phenolic degradation product of proanthocyanidins, the most abundant polyphenols present in chocolate \citep{Prop}, and therefore may simply correlate with infant wheeze because it correlates with theobromine levels.

We next consider the sex-related z-scores defined in Section~\ref{subsection:realData:DA}. To argue that the inter-method differences in root mean squared z-scores for the 64-sex related metabolites was not due to metabolite selection bias (i.e. winner's curse), we investigated each method's sex-associated metabolites with missing data. The results are given in Figure~\ref{supp:figure:ZscoresSub}, and show that only MS-NIMBLE's z-scores show no evidence of inflation. This suggests that differences between MS-NIMBLE and imputation methods in Table~\ref{Table:DAGender} cannot be attributed to metabolite selection bias.

We lastly consider MS-NIMBLE's computation time. The most computationally demanding component in differential abundance analyses is estimating each the missingness mechanism parameters $\alpha_g,\delta_g$ (see Section~\ref{subsection:MissMech}), which took 40 minutes for the 0.5 year COSAC dataset (the dataset with the largest sample size). However, this only needed to be computed once, and was stored for use in all downstream analyses. The subsequent sex analysis in the 0.5 year COSAC dataset took 3.4 minutes.

\begin{figure}
\centering
\includegraphics{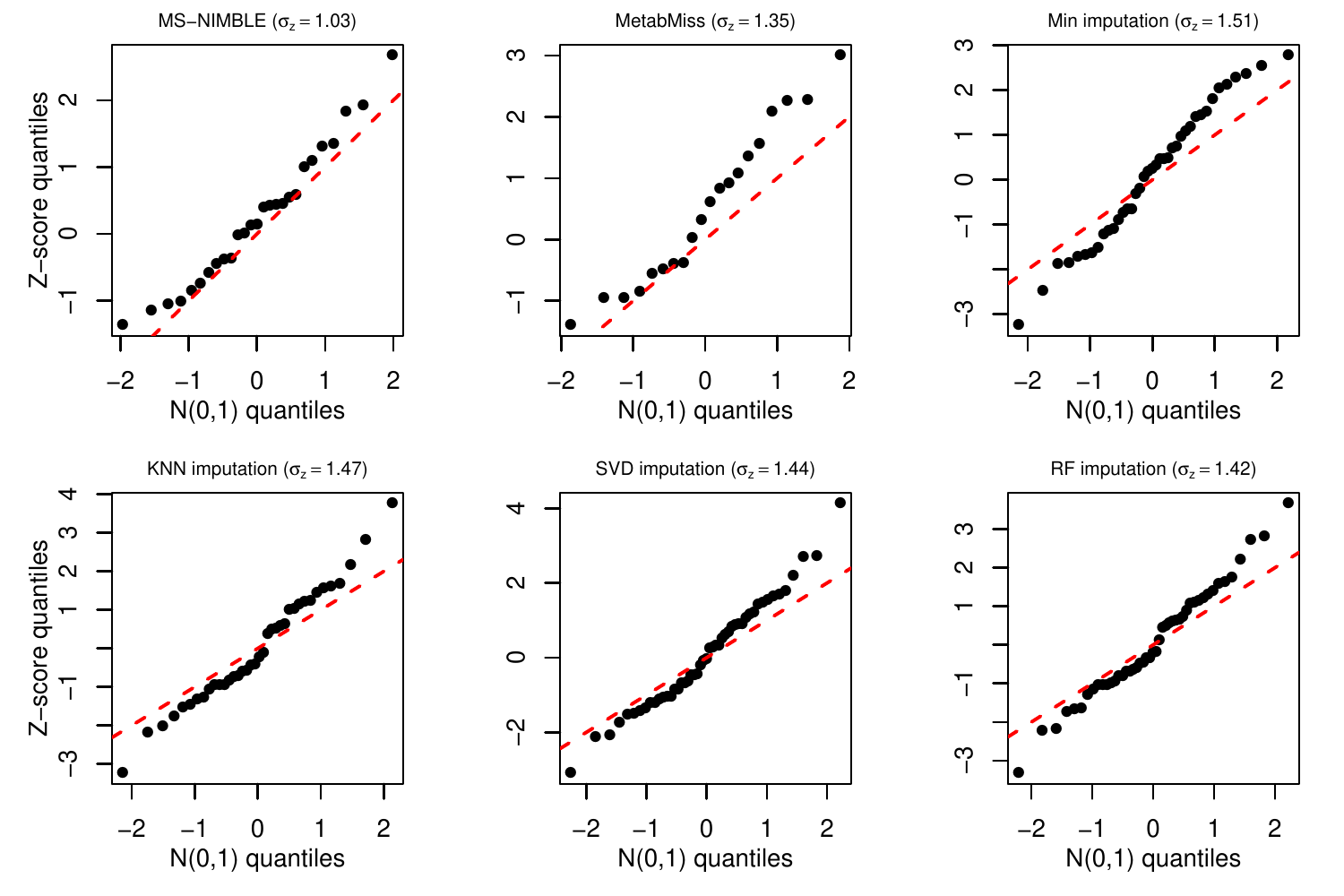}
\caption{Q-Q plot for each method's sex-associated missing metabolites. The z-score is as defined in Section~\ref{subsection:realData:DA} and a metabolite was sex-associated in a method if it (i) was analyzed in both the six month COPSAC and INSPIRE datasets, (ii) contained missing data in at least one dataset, and (iii) had a q-value $\leq 0.2$ using that method. The statistic $\sigma_z$ is the method's root mean squared z-score for their sex-associated missing metabolites.}\label{supp:figure:ZscoresSub}
\end{figure}

\section{Refining our estimator for $\bm{\Omega}$}
\label{supp:section:RefineOmega}
Here we provide a way to refine our estimator for $\bm{\Omega}$ in \eqref{equation:Omegahat} that iteratively removes ``outlying'' metabolites that likely depend on the covariate(s) of interest. It should be noted that this is our software default estimator for $\bm{\Omega}$.

Briefly, assume $\bm{X}$ can be written as $\bm{X}=[\bm{X}_I,\bm{X}_N]$, where $\bm{X}_I \in \mathbb{R}^{n \times d_I}$ contains the $d_I$ covariates of interest and $\bm{X}_N$ contains the remaining nuisance covariates. Let $\bm{\beta}_{g,I},\allowbreak \hat{\bm{\beta}}_{g,I}^{\ipw} \in \mathbb{R}^{d_I}$ be the first $d_I$ elements of $\bm{\beta}_g$ and the inverse probability weighted estimator $\hat{\bm{\theta}}_g^{\ipw}$ defined in \eqref{equation:IPW}, respectively. Then Lemma~\ref{supp:lemma:IPWest} shows that under the same assumptions used to prove Theorem~\ref{theorem:betag}, $\norm*{ \hat{\bm{\beta}}_{g,I}^{\ipw} - \bm{\beta}_{g,I} }_2=O_P(n^{-1/2})$ if $d_I\leq d_1$. Further, it is straightforward to extend Lemma~\ref{supp:lemma:IPWest} to show that for
\begin{align*}
    \hat{\V}\{ \hat{\bm{\beta}}_{g,I}^{\ipw} \} = \left( (\textstyle\sum_{i=1}^n \hat{w}_{gi} \hat{\bm{z}}_i\hat{\bm{z}}_i^{\top})^{-1} [ \textstyle\sum_{i=1}^n \hat{w}_{gi}^2\{y_{gi} - \hat{\bm{z}}_i^{\top}\bm{\theta}_g^{\ipw} \}^2 \hat{\bm{z}}_i\hat{\bm{z}}_i^{\top} ] (\textstyle\sum_{i=1}^n \hat{w}_{gi} \hat{\bm{z}}_i\hat{\bm{z}}_i^{\top})^{-1} \right)_{1:d_I,1:d_I}
\end{align*}
the sandwich estimator for $\V\{ \hat{\bm{\beta}}_{g,I}^{\ipw} \}$, $x_g^2 =\{ \hat{\bm{\beta}}_{g,I}^{\ipw} \}^{\top} [ \hat{\V}\{ \hat{\bm{\beta}}_{g,I}^{\ipw} \} ]^{-1}\hat{\bm{\beta}}_{g,I}^{\ipw}\allowbreak \tdist \allowbreak \chi_{d_I}^2$ under the null hypothesis $H_{0,g}: \bm{\beta}_{g,I}=0$. To refine our estimate for $\bm{\Omega}$, we compute p-values for $H_{0,g}$ by comparing $x_g^2$ to the upper quantiles of a $\chi_{d_I}^2$, use \citet{qvalue} to subsequently determine q-values, and re-estimate $\bm{\Omega}$ using the regression in \eqref{equation:Omegahat} after removing metabolites from said regression whose q-values fall below a user-specified threshold $q$. Our software default is to let $q=0.1$ and iterate this procedure 3 times.

\section{Extensions when some metabolites have fully observed data}
\label{supp:section:FullyObs}

\subsection{Methodological extensions}
\label{supp:subsection:FullObs:Method}
The factor analysis- and mtGWAS-related estimators are the only estimators that need to be updated to allow fully observed metabolites. For the former, we simply let $\hat{w}_{gi}$ in \eqref{equation:OptC} be 1 if metabolite $g$ has no missing data. For the mtGWAS estimators described in Section~\ref{subsection:mtGWAS}, we regress $y_{gi}$ onto genotype $G_{si}$ and estimated latent factors $\hat{\bm{c}}_i$ to estimate $\gamma_{sg}^{(e)}$ and the estimator's variance. We then use standard Wald-based inference to test $H_{0,sg}^{(e)}$. Testing $H_{0,sg}^{(c)}$ remains unchanged. Since the test statistics used to test $H_{0,sg}^{(e)}$ and $H_{0,sg}^{(c)}$ are asymptotically $\chi_1^2$ and independent under Assumptions~\ref{assumption:y} and \ref{assumption:Miss}, we simply add the test statistics and compare it to the upper quantiles of a $\chi_2^2$ to test $H_{0,sg}^{(c,e)}$.

\subsection{Theoretical extensions}
\label{supp:subsection:FullObs:Theory}
The only theoretical extension we must consider is choosing an appropriate starting point for the estimator $\hat{\mathcal{P}}$ from Theorem~\ref{theorem:PxC}, which is discussed in Remark~\ref{remark:PxC}. Let $\mathcal{O} \subset [p]$ be the set of metabolites with fully observed data, $\lambda^{(\mathcal{O})}_1\geq \cdots \geq \lambda^{(\mathcal{O})}_K$ be the eigenvalues of $\sum_{g \in \mathcal{O}} \bm{\ell}_g \bm{\ell}_g^{\top}$, $\hat{\bm{V}} \in \mathbb{R}^{n \times K}$ be the first $K$ right singular vectors of $[y_{gi}]_{g \in \mathcal{O}; i \in [n]}P_{\bm{X}}^{\perp} \in \mathbb{R}^{\abs*{\mathcal{O}} \times n}$, and define $\hat{\mathcal{P}}^{(\mathcal{O})} = \hat{\bm{V}}\hat{\bm{V}}^{\top}$. Then under Assumptions~\ref{assumption:y} and \ref{assumption:Miss}, the proof of Theorem~4 in \citet{FALCO} can easily be used to show that $\norm*{ \hat{\mathcal{P}}^{(\mathcal{O})} - \mathcal{P} }_F^2 = O_P[\{\lambda^{(\mathcal{O})}_K n\}^{-1}]$. Therefore, if $\lambda^{(\mathcal{O})}_K \gtrsim n^{-1+\epsilon}$ for any $\epsilon>0$, Corollary~\ref{supp:corollary:Htilde} in Section~\ref{supp:subsection:FARate} implies $\hat{\mathcal{P}}$ will satisfy the condition $\norm*{ \hat{\mathcal{P}} - \mathcal{P} }_F \leq \eta$ in Theorem~\ref{theorem:PxC} when we solve the optimization in \eqref{equation:OptC} by initializing $\bm{C}_{\perp} = \hat{\bm{V}}$ and iteratively updating $\{\bm{b}_g,\bm{\ell}_g\}_{g \in [p]}$ and $\bm{C}_{\perp}$.

\section{Outline and notation for the rest of the supplement}
\label{supp:section:Notation}

\subsection{Outline for the remaining supplement}
The rest of the supplement is devoted to proving the theoretical statements made in Sections~\ref{section:Imputation} and \ref{section:Theory}. Due to its length, we give a compendious outline below.
\begin{itemize}
\item Section~\ref{supp:section:MinImp}: we provide the regularity conditions for and prove Proposition~\ref{proposition:Impute} stated in Section~\ref{section:Imputation}.
\item Section~\ref{supp:section:FATheory}: we prove Theorem~\ref{theorem:PxC}, Theorem~\ref{theorem:Omega}, Proposition~\ref{proposition:Cknown}, and Theorem~\ref{theorem:betag}. The proofs can be found in:
\begin{itemize}
    \item Theorem~\ref{theorem:PxC}: Corollary~\ref{supp:corollary:Pchat} in Section~\ref{supp:subsection:FARate}.
    \item Theorem~\ref{theorem:Omega}: Corollary~\ref{supp:corollary:OmegaInference} in Section~\ref{supp:subsection:Omega}.
    \item Proposition~\ref{proposition:Cknown}: A direct consequence of Lemma~\ref{supp:lemma:Cknown} in Section~\ref{supp:subsection:DiffAbund}. See Remark~\ref{supp:remark:Cknown}.
    \item Theorem~\ref{theorem:betag}: proven in Theorem~\ref{supp:theorem:InferenceBeta} in Section~\ref{supp:subsection:DiffAbund}.
\end{itemize}
\item Section~\ref{supp:section:TheorymtGWAS}: we extend our mtGWAS test statistics to allow $\bm{x}_i \neq 0$, prove an extension of Theorem~\ref{theorem:mtGWAS} that allows $\bm{x}_i \neq 0$, and illustrate the computational efficiency of our mtGWAS test statistics.
\end{itemize}

\subsection{Notation}
\label{supp:subsection:Notation}
For any matrix $\bm{M} \in \mathbb{R}^{m \times n}$, we define $\bm{M}_{i \bigcdot} \in \mathbb{R}^{n}$, $\bm{M}_{\bigcdot j} \in \mathbb{R}^m$, and $\bm{M}_{ij} \in \mathbb{R}$ to be the $i$th row, $j$th column, and $(i,j)$th element of $\bm{M}$, respectively. We also define $P_{\bm{M}}, P_{\bm{M}}^{\perp}\in \mathbb{R}^{n \times n}$ to be the orthogonal projections matrices that project vectors onto the image of $\bm{M}$ and kernel of $\bm{M}^{\top}$. Let $\{\bm{X}_n\}_{n \geq 1}$ be a sequence of random vectors or matrices. Unless otherwise specified, $\bm{X}_n = O_p(a_n)$ if $\norm*{\bm{X}_n}_2/a_n = O_P(1)$ and $\bm{X}_n = o_p(a_n)$ if $\norm*{\bm{X}_n}_2/a_n = o_P(1)$ as $n \to \infty$. Lastly, for random vector $\bm{e}$, we use the notation $\bm{e} \sim (\bm{\mu},\bm{V})$ if $\E(\bm{e})=\bm{\mu}$ and $\V(\bm{e})=\bm{V}$.

\section{Proof of Proposition~\ref{proposition:Impute}}
\label{supp:section:MinImp}
We first state the complete set of sufficient conditions needed to prove Proposition~\ref{proposition:Impute}.
\begin{Assumption}[Proposition~\ref{proposition:Impute}]
In addition to the assumptions in the statement of Proposition~\ref{proposition:Impute}, assume the following hold:
\begin{enumerate}[label=(\alph*)]
\item The elements of $\{x_i,\bm{c}_i\}_{i \in [n]}$ are independent and identically distributed and are independent of $\{e_{gi}\}_{i \in [n]}$.
\item $\E(x_i^4) \leq c$ for some constant $c>0$ and $\E(\abs*{ \bm{c}_{i_k} }^m) \leq c_m$ for all $k \in [K]$, $m>0$, and some constants $c_m>0$.
\item $\alpha_g > 0$.
\end{enumerate}
\end{Assumption}
\begin{remark}
\label{supp:remark:ImputeC}
The moment assumption on $\bm{c}_i$ is the same as that in Assumption~\ref{assumption:y}.
\end{remark}

\begin{proof}[Proof of Proposition~\ref{proposition:Impute}]
We drop the subscript $g$ to simplify notation. Since $\{x_i,\bm{c}_i\}_{i \in [n]}$ are identically distributed and our design matrix includes the intercept, it suffices to assume $\E(x_i)$ and $\E(\bm{c}_i)$ are 0. Let $m = a \min_{i: r_{i}=0}y_{i} = a\mu + a \min_{i: r_{i}=0}(\bm{\ell}^{\top}\bm{c}_i + e_i)$. Since $e_i$ is sub-Gaussian and by the moment assumptions on $\bm{c}_i$, $\abs{m} = O_P(n^{\epsilon})$ for any $\epsilon > 0$. For any $M>0$, the Gaussian assumption on $e_i$ and the moment assumptions on $\bm{c}_i$ also imply $\Prob\{y_i \in (-2M,-M)\} = \delta_{1,M} > 0$. Since $\Prob\{r_i=1 \mid y_i \in (-2M,-M)\} \geq \Prob\{r_i=1 \mid y_i =-2M\}=\delta_{2,M}>0$, this implies the event $\{y_i \in (-2M,-M), r_i=1\}$ occurs infinitely often as $n \to \infty$, meaning $m \to -\infty$ as $n\to \infty$.

We consider the cases $\bm{\ell}=0$ and $x_i$ is independent of $\bm{c}_i$ separately. Suppose first that $\bm{\ell}=0$. Then $\bm{z}_i=(x_i,\bm{c}_i^{\top})^{\top}$ is independent of $y_{i}$ and the elements of $\{\bm{z}_i,y_{i}\}_{i\in[n]}$ are independent and identically distributed. Let $\bm{y}=(y_1,\ldots,y_n)$, $\bm{V}=\V(\bm{z}_i)$, $\bm{R}=\diag(r_1,\ldots,r_n)$, $\bm{Z} = (\bm{z}_1 \cdots \bm{z}_n)^{\top}$, and $\bm{y}_I = \bm{R}\bm{y} + m(I_n - \bm{R})\bm{1}$ be the imputed data. Then for $\bm{e}_1 \in \{0,1\}^{K+1}$ the first standard basis vector,
\begin{align*}
    n^{1/2}\hat{\beta} &= \bm{e}_1^{\top} (\bm{V}^{-1}\hat{\bm{V}})^{-1} \bm{V}^{-1}(n^{-1/2}\bm{Z}^{\top}P_{\bm{1}}^{\perp} \bm{y}_I), \quad \hat{\bm{V}} = n^{-1}\bm{Z}^{\top}P_{\bm{1}}^{\perp}\bm{Z}\\
    n\hat{s}^2 &= \hat{\sigma}^2 \bm{e}_1^{\top} \hat{\bm{V}}^{-1} \bm{e}_1 , \quad \hat{\sigma}^2 = (n-K-2)^{-1}\bm{y}_I^{\top}P_{[\bm{1}, \bm{Z}]}^{\perp} \bm{y}_I.
\end{align*}
Since $\norm*{ \bm{V}^{-1}\hat{\bm{V}})^{-1} - I_{K+1} }_2 = o_P(1)$, we need only show that for $\bm{v} = \bm{Z}\bm{V}^{-1}\bm{e}_1 (\bm{e}_1^{\top}\bm{V}^{-1}\bm{e}_1)^{-1/2}$,
\begin{align*}
    (\hat{\sigma}^2)^{-1/2} (n^{-1/2}\bm{v}^{\top}P_{\bm{1}}^{\perp} \bm{y}_I) \to N(0,1).
\end{align*}
We start by studying $\hat{\sigma}^2$. First, it is easy to see that because $\abs*{m} \to \infty$, $(n-K-2)^{-1}\bm{y}_I^{\top} P_{\bm{1}}^{\perp} \bm{y}_I = m^2\{c+o_P(1)\}$ for some $c>0$. Next, for $\tilde{n}=n-K-2$,
\begin{align*}
    \hat{\sigma}^2 = \tilde{n}^{-1} \bm{y}_I^{\top} P_{\bm{1}}^{\perp} \bm{y}_I - \{\tilde{n}^{-1} \bm{y}_I^{\top}(\bm{Z} - \bm{1}\bar{\bm{z}}^{\top})\} \hat{\bm{V}}^{-1} \{n^{-1}(\bm{Z} - \bm{1}\bar{\bm{z}}^{\top})^{\top} \bm{y}_I\}, \quad \bar{\bm{z}}= n^{-1}\bm{Z}^{\top}\bm{1},
\end{align*}
where because $\bm{Z}$ is independent of $\bm{y}_I$ and $\bar{\bm{z}} = O_P(n^{-1/2})$,
\begin{align*}
    \{\tilde{n}^{-1} \bm{y}_I^{\top}(\bm{Z} - \bm{1}\bar{\bm{z}}^{\top})\} \hat{\bm{V}}^{-1} \{n^{-1}(\bm{Z} - \bm{1}\bar{\bm{z}}^{\top})^{\top} \bm{y}_I\} =& O_P\{ \norm*{ n^{-1} \bm{y}_I^{\top}\bm{Z} }_2^2 + \norm*{ n^{-1} \bm{y}_I^{\top}\bm{1}\bar{\bm{z}}^{\top} }_2 \}\\ =& O_P(\abs*{m}n^{-1/2}) = o_P(1).
\end{align*}
Since the entries of $\bm{v}$ are mean 0, variance 1, independent, and independent of $\bm{y}_I$, $\norm*{ n^{-1/2}\bm{v}^{\top}P_{\bm{1}}^{\top}\bm{y}_I }_2 = O_P(\abs*{m})$, meaning
\begin{align*}
    (\hat{\sigma}^2)^{-1/2} (n^{-1/2}\bm{v}^{\top}P_{\bm{1}}^{\perp} \bm{y}_I) = \tilde{\sigma}^{-1} (n^{-1/2}\bm{v}^{\top}\tilde{\bm{y}}_I) + o_P(1), \quad \tilde{\bm{y}}_I = P_{\bm{1}}^{\perp} \bm{y}_I, \quad \tilde{\sigma}^2 = \tilde{n}^{-1}\bm{y}_I^{\top} P_{\bm{1}}^{\perp} \bm{y}_I.
\end{align*}
We therefore need only show $\tilde{\sigma}^{-1} (n^{-1/2}\bm{v}^{\top}\tilde{\bm{y}}_I) \to N(0,1)$. To prove this, we note that $\E\{ \tilde{\sigma}^{-1} (n^{-1/2}\bm{v}^{\top}\tilde{\bm{y}}_I) \mid \bm{y} \} = 0$, $\V\{ \tilde{\sigma}^{-1} (n^{-1/2}\bm{v}^{\top}\tilde{\bm{y}}_I) \mid \bm{y} \} = 1$, the elements of $\bm{v}$ are independent conditional on $\bm{y}$, and
\begin{align*}
    n^{-2}\sum_{i=1}^n \E(\tilde{\sigma}^{-4} \bm{v}_i^4 \tilde{\bm{y}}_{I_i}^4 \mid \bm{y} ) \leq c \tilde{\sigma}^{-4} n^{-1} (\max_{i \in [n]} y_i^4) = o_P(1)
\end{align*}
for some constant $c>0$. The asymptotic normality of $\tilde{\sigma}^{-1} (n^{-1/2}\bm{v}^{\top}\tilde{\bm{y}}_I)$ follows by the Lindeberg central limit theorem.

We lastly consider the case when $x_i$ is independent of $\bm{c}_i$. Here, $\bm{\ell}$ may not be 0, so $y_i$ and $\bm{c}_i$ may be dependent. Let $\bm{x}=(x_1,\ldots,x_n)^{\top}$. Let $v = \V(x_i)$. We can express $\hat{\beta}/\hat{s}$ as
\begin{align*}
    \hat{\beta}/\hat{s} = (n^{-1}\bm{x}^{\top} P_{[\bm{1},\bm{C}]}^{\perp} \bm{x}v^{-1})^{-1/2} \hat{\sigma}^{-1} (n^{-1/2} \tilde{\bm{x}}^{\top} P_{[\bm{1},\bm{C}]}^{\perp} \bm{y}_I), \quad \tilde{\bm{x}} = v^{-1/2}\bm{x}.
\end{align*}
Since $n^{-1}\bm{x}^{\top} P_{[\bm{1},\bm{C}]}^{\perp} \bm{x}v^{-1} = 1+o_P(1)$, it suffices to show $\hat{\sigma}^{-1} (n^{-1/2} \tilde{\bm{x}}^{\top} P_{[\bm{1},\bm{C}]}^{\perp} \bm{y}_I) \tdist N(0,1)$ to complete the proof. The above proof of the asymptotic normality when $\bm{\ell}=0$ implies this will be true if $\hat{\sigma}^2 = m^2\{c+o_P(1)\}$ for some constant $c>0$ and if $\hat{\sigma}^2 = \tilde{n}^{-1} \bm{y}_I^{\top} P_{[\bm{1},\bm{C}]}^{\perp} \bm{y}_I + o_P(1)$. For the former, we see that for $\tilde{\bm{z}}_i=\{ 1 \oplus v^{-1/2} \oplus \V(\bm{c}_i)^{-1/2}\} (1,\bm{x}_i,\bm{c}_i^{\top})^{\top}$,
\begin{align*}
    \hat{\sigma}^2 = O_P(m) + m^2 [ \E(1-r_1) - \E\{(1-r_1)\tilde{\bm{z}}_1\}^{\top} \E\{(1-r_1)\tilde{\bm{z}}_1\} ].
\end{align*}
For any non-random unit vector $\bm{u} \in \mathbb{R}^{K+2}$, Holder's inequality implies 
\begin{align*}
    \bm{u}^{\top}[\E\{(1-r_1)\tilde{\bm{z}}_1\} \E\{(1-r_1)\tilde{\bm{z}}_1\}^{\top}] \bm{u}&=[\E\{(1-r_1)(\tilde{\bm{z}}_1^{\top}\bm{u})\}]^2 \leq \E(1-r_1) \E\{(\tilde{\bm{z}}_1^{\top}\bm{u})^2\}\\& = \bm{u}^{\top}\{\E(1-r_1) \E(\tilde{\bm{z}}_1 \tilde{\bm{z}}_1^{\top})\} \bm{u},
\end{align*}
where the inequality holds with equality if and only if $(1-r_1) \propto \bm{u}^{\top}\tilde{\bm{z}}_1$ a.s. Since this does not hold for any non-random $\bm{u}$, we must have
\begin{align*}
    \E\{(1-r_1)\tilde{\bm{z}}_1\} \E\{(1-r_1)\tilde{\bm{z}}_1\}^{\top} &\prec \E(1-r_1) \underbrace{\E(\tilde{\bm{z}}_1 \tilde{\bm{z}}_1^{\top})}_{=I_{K+2}}\\
    \Rightarrow \E\{(1-r_1)\tilde{\bm{z}}_1\}^{\top} \E\{(1-r_1)\tilde{\bm{z}}_1\} &= \norm*{ \E\{(1-r_1)\tilde{\bm{z}}_1\} \E\{(1-r_1)\tilde{\bm{z}}_1\}^{\top} }_2 < \E(1-r_1) \norm*{ \E(\tilde{\bm{z}}_1 \tilde{\bm{z}}_1^{\top}) }_2\\& = \E(1-r_1),
\end{align*}
which implies $\hat{\sigma}^2 = m^2\{c+o_P(1)\}$ for some constant $c>0$. Lastly,
\begin{align*}
    \hat{\sigma}^2 =& \tilde{n}^{-1}\bm{y}_I^{\top} P_{[\bm{1},\bm{C}]}^{\perp} \bm{y}_I - \{\tilde{n}^{-1}\bm{y}_I^{\top} \bm{x} - \tilde{n}^{-1}\bm{y}_I^{\top}[\bm{1},\bm{C}]\hat{\bm{M}}^{-1} (n^{-1}[\bm{1},\bm{C}]^{\top} \bm{x}) \}\hat{v}^{-1} \\
    &\times \{n^{-1}\bm{y}_I^{\top} \bm{x} - n^{-1}\bm{y}_I^{\top}[\bm{1},\bm{C}]\hat{\bm{M}}^{-1} (n^{-1}[\bm{1},\bm{C}]^{\top} \bm{x}) \}\\
    \hat{\bm{M}} =& n^{-1}[\bm{1},\bm{C}]^{\top} [\bm{1},\bm{C}], \quad \hat{v} = n^{-1}\bm{x}^{\top} P_{[\bm{1},\bm{C}]}^{\perp} \bm{x}.
\end{align*}
Since $\bm{x}$ is mean 0 and independent of $\{\bm{y},\bm{C}\}$, it is easy to see that 
\begin{align*}
    n^{-1}\bm{y}_I^{\top} \bm{x} - n^{-1}\bm{y}_I^{\top}[\bm{1},\bm{C}]\hat{\bm{M}}^{-1} (n^{-1}[\bm{1},\bm{C}]^{\top} \bm{x}) = o_P(1),
\end{align*}
which completes the proof.
\end{proof}

\section{Theoretical guarantees for factor analysis and differential abundance}
\label{supp:section:FATheory}
Section~\ref{supp:section:FATheory} proves theoretical statements from Sections~\ref{subsection:theory:LF} and \ref{subsection:theory:DA} in the main text. We prove Theorem~\ref{theorem:mtGWAS} from Section~\ref{subsection:theory:mtGWAS} in Section~\ref{supp:section:TheorymtGWAS}.

\subsection{Problem statement and preliminaries}
\label{supp:subsection:FASetup}
We consider the model $\bm{y}_g = \bm{X}\bm{\beta}_g + \bm{C}\bm{\ell}_g + \bm{e}_g$, where $\bm{e}_g \sim (0, \sigma_g^2 I_n)$ is sub-Gaussian with independent entries. We also define the diagonal matrix of weights $\bm{W}_g = \diag(w_{g1},\ldots,w_{gn})$ to be $r_{gi}\{\pi_{g}(y_{gi})\}^{-1}$ for $\pi_{g}(y_{gi}) = \Psi\{ \alpha_g(y_{gi} - \delta_g) \}$. Note that $\E(\bm{W}_{g} \mid \bm{y}_g,\bm{X},\bm{C}) = I_n$. Let $\bm{L} = [\bm{\ell}_1 \cdots \bm{\ell}_p]^{\top} \in \mathbb{R}^{p \times K}$, $\lambda = np^{-1}\Tr(\bm{L}^{\top} \bm{L})$, and $\bm{B} = [\bm{\beta}_1 \cdots \bm{\beta}_p]^{\top} \in \mathbb{R}^{p \times d}$. If $\hat{w}_{gi}=w_{gi}$, the optimization problem in \eqref{equation:OptC} is equivalent to
\begin{align*}
    (P_{\bm{X}}^{\perp}\hat{\bm{C}},\hat{\bm{L}},\hat{\bm{B}}) \in \argmin_{\substack{\bm{C}_{\perp},\bm{L},\bm{B}\\ \bm{X}^{\top}\bm{C}_{\perp}=\bm{0}\\ \bm{C}_{\perp}^{\top}\bm{C}_{\perp} = I_K}} \frac{1}{2\lambda p}\sum_{g=1}^p (\bm{y}_g - \bm{C}_{\perp}\bm{\ell}_g - \bm{X}\bm{\beta}_g)^{\top}\bm{W}_g (\bm{y}_g - \bm{C}_{\perp}\bm{\ell}_g - \bm{X}\bm{\beta}_g).
\end{align*}
Define $\bm{P}_g^{\perp} = \bm{W}_g - \bm{W}_g \bm{X}( \bm{X}^{\top}\bm{W}_g\bm{X} )^{-1}\bm{X}^{\top}\bm{W}_g$. Solving for $\bm{B}$ and using the fact that $\hat{\bm{\ell}}_g = (\bm{C}_{\perp}^{\top}\bm{P}_g^{\perp} \bm{C}_{\perp} )^{-1}\bm{C}_{\perp}^{\top} \bm{P}_g^{\perp} \bm{y}_g$, the profile likelihood for $\bm{C}_{\perp}$ can be expressed as
\begin{align}
\label{supp:equation:fMax}
    P_{\bm{X}}^{\perp}\hat{\bm{C}} \in \argmax_{\substack{\bm{U} \in \mathbb{R}^{n \times K}\\ \bm{X}^{\top}\bm{U}=\bm{0}\\\bm{U}^{\top}\bm{U} = I_K}} f(\bm{U}), \quad f(\bm{U}) = \frac{1}{2\lambda p} \sum_{g=1}^p \Tr\{ (\bm{U}^{\top}\bm{P}_g^{\perp} \bm{U} )^{-1} \bm{U}^{\top} \bm{P}_g^{\perp} \bm{y}_g \bm{y}_g^{\top}  \bm{P}_g^{\perp}\bm{U} \}
\end{align}
Expanding $\bm{y}_g \bm{y}_g^{\top}$, the objective function can be expressed as
\begin{align}
\label{supp:equation:Cobj}
\begin{aligned}
    f(\bm{U}) =&\frac{1}{2\lambda p}\sum_{g=1}^p \Tr\{ (\bm{U}^{\top}\bm{P}_g^{\perp} \bm{U} )^{-1} \bm{U}^{\top} \bm{P}_g^{\perp} \tilde{\bm{C}}\tilde{\bm{\ell}}_g \tilde{\bm{\ell}}_g^{\top} \tilde{\bm{C}}^{\top}  \bm{P}_g^{\perp}\bm{U} \}\\
    &+ \frac{1}{\lambda p}\Tr\{ (\bm{U}^{\top}\bm{P}_g^{\perp} \bm{U} )^{-1} \bm{U}^{\top} \bm{P}_g^{\perp} \tilde{\bm{C}}\tilde{\bm{\ell}}_g \bm{e}_g^{\top} \bm{P}_g^{\perp}\bm{U} \}\\
    &+ \frac{1}{2 \lambda p}\Tr\{ (\bm{U}^{\top}\bm{P}_g^{\perp} \bm{U} )^{-1} \bm{U}^{\top} \bm{P}_g^{\perp} \bm{e}_g \bm{e}_g^{\top} \bm{P}_g^{\perp}\bm{U} \}\\
    \tilde{\bm{C}} =& P_{X}^{\perp}\bm{C}(\bm{C}^{\top}P_{X}^{\perp}\bm{C})^{-1/2}, \quad \tilde{\bm{\ell}}_g = (\bm{C}^{\top}P_{X}^{\perp}\bm{C})^{1/2}\bm{\ell}_g.
\end{aligned}
\end{align}
We use this expression to prove the consistency of $P_{\bm{X}}^{\perp}\hat{\bm{C}}$ in Section~\ref{supp:subsection:FAConsistency} and derive its properties and rate of convergence in Section~\ref{supp:subsection:FARate}.

\subsection{Assumptions}
\label{supp:subsection:FAAssumptions}
We first re-state the assumptions on $y_{gi}$ and $\Psi$ from Section~\ref{subsection:theory:Assumptions} with a change in the scaling of the eigenvalues $\lambda_1,\ldots,\lambda_K$ defined in Assumption~\ref{assumption:y}. Note that the change is without loss of generality.

\begin{Assumption}
\label{supp:assumptions:FA}
For $g \in [p]$, $i \in [n]$, and $s \in [S]$, let $y_{gi}=\bm{\beta}_g^{\top}\bm{x}_i + \bm{\ell}_g^{\top}\bm{c}_i + e_{gi}$ and define $G_{si} \in \mathbb{R}$ to be a random variable. Then the following hold for constant $a>0$ and $\epsilon \in (0,1/2\wedge a)$:
\begin{enumerate}[label=(\alph*)]
\item $\bm{X}=[\bm{x}_1 \cdots \bm{x}_n]^{\top}$ and $\{\bm{\beta}_g\}_{g \in [p]}$ are non-random and satisfy $\abs*{ \bm{X}_{ij} },\norm*{\bm{\beta}_g}_2 \leq a$, $\bm{1}_n \in \im(\bm{X})$, and $n^{-1}\bm{X}^{\top}\bm{X} \succeq \epsilon I_d$.\label{supp:assumption:X}
\item The matrix $\bm{G}=[G_{si}] \in \mathbb{R}^{S \times n}$ is mean 0, has independent and uniformly bounded entries, and identically distributed columns.\label{supp:assumption:G}
\item The eigenvalues $\lambda_1,\ldots,\lambda_K > 0$ of $np^{-1}\sum_{g=1}^p \bm{\ell}_g \bm{\ell}_g^{\top}$ and $\bm{\ell}_g$ satisfy $n^{1/2+\epsilon}\allowbreak \lesssim \allowbreak \lambda_K \allowbreak \leq \allowbreak \cdots \leq \allowbreak \lambda_1 \allowbreak \lesssim n$, $\lambda_1/\lambda_K \leq a$, and $\norm{ \bm{\ell}_g }_2 \leq a (\lambda_1/n)^{1/2}$.\label{supp:assumption:lambda}
\item $\bm{C} = [\bm{c}_1 \cdots \bm{c}_n]^{\top} = \bm{G}^{\top} \bm{\gamma}^{(c)} + \bm{\Delta}^{(c)} \in \mathbb{R}^{n \times K}$ for $\bm{\gamma}^{(c)} \in \mathbb{R}^{S \times K}$ and $\bm{\Delta}^{(c)}_{i*} \in \mathbb{R}^K$ such that:
\begin{enumerate}[label=(\roman*)]
    \item $\bm{\gamma}^{(c)}$ is non-random and $\norm*{ \bm{\gamma}^{(c)} }_1 \leq a$, $\norm*{ \bm{\gamma}^{(c)} }_{\infty} = o(n^{-1/4})$, $\sum_{s=1}^S 1\{ \bm{\gamma}^{(c)}_{s*} \neq 0 \} \leq a p^{1/2}$.
    \item The rows of $\bm{\Delta}^{(c)} - \E\{ \bm{\Delta}^{(c)} \}$ are independent and identically distributed, $\V\{ \bm{\Delta}^{(c)}_{i *} \} \succeq \epsilon I_K$, and $\E\{\abs*{ \bm{\Delta}^{(c)}_{ik} }^m\} \leq a_m$ for all $i \in [n]$, $k \in [K]$, $m \geq 1$ and constant $a_m >0$ that may depend on $m$.\label{supp:assumption:c:Delta}
\end{enumerate}\label{supp:assumption:c}
\item $\bm{E} = [e_{gi}] = \{ \bm{\gamma}^{(e)} \}^{\top}\bm{G} + \bm{\Delta}^{(e)} \in \mathbb{R}^{p \times n}$, where $\bm{\gamma}^{(e)} \in \mathbb{R}^{S \times p}$, $\bm{\Delta}^{(e)} \in \mathbb{R}^{p \times n}$ satisfy:
\begin{enumerate}[label=(\roman*)]
    \item $\sup_{s \in [S];g \in [p]}\abs*{ \bm{\gamma}^{(e)}_{sg} }\leq a n^{-1/4}$ and $\sup_{g \in [p]}\sum_{s=1}^S 1\{ \bm{\gamma}^{(e)}_{sg} \neq 0 \} \leq a$.
    \item The columns of $\bm{\gamma}^{(e)}$ can be partitioned into disjoint sets containing $\leq a$ metabolites, where $\bm{\gamma}^{(e)}_{sg} \bm{\gamma}^{(e)}_{sh} = 0$ if columns $g$ and $h$ lie in different sets.
    \item $\bm{\Delta}^{(e)} \sim MN_{p \times n}\{0,\diag(\sigma_1^2,\ldots,\sigma_p^2),I_n\}$, where $\sigma_1,\ldots,\sigma_p^2 \in [\epsilon,a]$.
\end{enumerate}
\item $\bm{G}$, $\bm{\Delta}^{(c)}$, and $\bm{\Delta}^{(e)}$ are independent, and $p,n,S$ satisfy $p \in [\epsilon n, a n]$.
\item In differential abundance applications, $\bm{X}$ can be written as $\bm{X}=(\bm{X}_1, \bm{X}_2)$ for $\bm{X}_1 \in \mathbb{R}^{n \times d_1}$ and $\bm{X}_2 \in \mathbb{R}^{n \times d_2}$, where $\bm{X}_1$ are the $d_1$ covariates of interests and $p^{-1}\sum_{g=1}^p 1\{ \bm{\beta}_{g_j} \neq 0 \} = o(\lambda_1^{1/2}n^{-1})$ for all $j \in [d_1]$.\label{supp:assumption:SparseBeta}
\item Assumption~\ref{assumption:Miss} from the main text holds.\label{supp:assumption:Psi}
\end{enumerate}
\end{Assumption}
\noindent Assumption~\ref{supp:assumption:X} contains all the regularity conditions on the design matrix $\bm{X}$ mentioned in Section~\ref{subsection:theory:Assumptions}. The assumption that $\bm{1}_n \in \im(\bm{X})$ makes the assumption that $\E(\bm{G})=0$ in \ref{supp:assumption:G} without loss of generality. Note $\E\{ \bm{\Delta}^{(c)} \}$ in \ref{supp:assumption:c} may depend on $\bm{X}$. The assumptions in \ref{supp:assumption:c} are more general than (b) from Assumption~\ref{assumption:y} in the main text, since the latter assumes $\E(\bm{c}_i) - \{\bm{\gamma}^{(c)}\}^{\top}\bm{G}_{\bigcdot i}$ is continuous in $\bm{x}_i$, whereas the former only assumes $\norm*{ \E(\bm{c}_i) - \{\bm{\gamma}^{(c)}\}^{\top}\bm{G}_{\bigcdot i} }_2$ is bounded from above. The eigenvalues $\lambda_1,\ldots,\lambda_K$ in \ref{supp:assumption:lambda} have been scaled by a factor of $n$ to make notation in the below theoretical statements simpler, and to be consistent with \citet{BCconf,CorrConf,FALCO}. Assumption \ref{supp:assumption:SparseBeta} gives the sparsity assumption utilized in the statements of Theorem~\ref{theorem:Omega} and Theorem~\ref{theorem:betag} in the main text. Note this is only needed in differential abundance applications, and is not needed to prove Theorem~\ref{theorem:PxC}, Propoposition~\ref{proposition:Cknown}, or Theorem~\ref{theorem:mtGWAS}. We next prove two useful lemmas about $\bm{C}$ and its relationship with $\bm{E}$ that we will use in the theoretical results that follow.

\begin{lemma}
\label{supp:lemma:Cassumption}
Suppose Assumption~\ref{supp:assumptions:FA} holds. Then $\lim_{n \to \infty}\Prob(n^{-1}\bm{C}^{\top}P_{\bm{X}}^{\perp}\bm{C} \succeq \epsilon I_K)=1$ and $\E( \abs*{\bm{C}_{ik}}^m ) \leq \tilde{a}_m$ for all $m\geq 1$ and some constant $\tilde{a}_m>0$ that may depend on $m$.
\end{lemma}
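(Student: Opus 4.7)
The proof has two parts. For the moment bound $\E(\abs{\bm{C}_{ik}}^m) \leq \tilde{a}_m$, the plan is to expand
\begin{equation*}
\bm{C}_{ik} = \sum_{s=1}^S \bm{G}_{si}\bm{\gamma}^{(c)}_{sk} + \bm{\Delta}^{(c)}_{ik},
\end{equation*}
note that the first summand is deterministically bounded because $\sup_{s,i}\abs{\bm{G}_{si}}$ is a constant (Assumption~\ref{supp:assumption:G}) and $\sum_s\abs{\bm{\gamma}^{(c)}_{sk}} \leq \norm{\bm{\gamma}^{(c)}}_1 \leq a$ (Assumption~\ref{supp:assumption:c}(i)), and combine this with the assumed $\E\{\abs{\bm{\Delta}^{(c)}_{ik}}^m\} \leq a_m$ using the triangle and $C_r$ inequalities.

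For the spectral lower bound my approach is a projection-enlargement argument. First I would decompose $\bm{C} = \bm{M} + \tilde{\bm{\Delta}}^{(c)}$, where $\bm{M} := \bm{G}^{\top}\bm{\gamma}^{(c)} + \E\{\bm{\Delta}^{(c)}\}$ collects every non-i.i.d.\ contribution to $\bm{C}$ and $\tilde{\bm{\Delta}}^{(c)} := \bm{\Delta}^{(c)} - \E\{\bm{\Delta}^{(c)}\}$ has i.i.d.\ mean-zero rows with covariance $\succeq \epsilon I_K$ (Assumption~\ref{supp:assumption:c}(ii)). Then I would form the augmented matrix $\bm{Q} := [\bm{X}, \bm{M}] \in \mathbb{R}^{n \times (d+K)}$; because $\im(\bm{X}) \subseteq \im(\bm{Q})$ we have $P_{\bm{X}}^{\perp} \succeq P_{\bm{Q}}^{\perp}$, and since $P_{\bm{Q}}^{\perp}\bm{M} = 0$,
\begin{equation*}
\bm{C}^{\top}P_{\bm{X}}^{\perp}\bm{C} \;\succeq\; \bm{C}^{\top}P_{\bm{Q}}^{\perp}\bm{C} \;=\; \tilde{\bm{\Delta}}^{(c)\top}P_{\bm{Q}}^{\perp}\tilde{\bm{\Delta}}^{(c)}.
\end{equation*}
I would then write the right-hand side as $\tilde{\bm{\Delta}}^{(c)\top}\tilde{\bm{\Delta}}^{(c)} - \tilde{\bm{\Delta}}^{(c)\top}P_{\bm{Q}}\tilde{\bm{\Delta}}^{(c)}$, apply the law of large numbers to the first piece (i.i.d.\ rows with bounded second moments give $n^{-1}\tilde{\bm{\Delta}}^{(c)\top}\tilde{\bm{\Delta}}^{(c)} \to \V(\tilde{\bm{\Delta}}^{(c)}_{1*}) \succeq \epsilon I_K$ in probability), and control the second by conditioning on $\bm{G}$: there $\bm{Q}$ is deterministic of rank $\leq d+K$, and since $\tilde{\bm{\Delta}}^{(c)}$ is independent of $\bm{G}$ a direct computation gives $\E\{\tilde{\bm{\Delta}}^{(c)\top}P_{\bm{Q}}\tilde{\bm{\Delta}}^{(c)} \mid \bm{G}\} = \Tr(P_{\bm{Q}})\V(\tilde{\bm{\Delta}}^{(c)}_{1*}) = O(1)$, so the normalized version is $O_P(n^{-1})$ via Markov applied to its trace together with the PSD bound $\norm{A}_2 \leq \Tr(A)$.

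The main obstacle is that $\E\{\bm{\Delta}^{(c)}\}$ is permitted to be an arbitrary deterministic function of $\bm{X}$, so the standard maneuver of centering the rows and then orthogonally projecting onto the kernel of $\bm{X}^{\top}$ does not eliminate the mean. The projection-enlargement circumvents this by absorbing $\bm{M}$ into the projected-out subspace at the cost of only $K$ additional degrees of freedom, which is negligible since $K$ is a fixed constant. The final conclusion then follows after reinterpreting $\epsilon$ as a generic positive constant (or shrinking it slightly) to absorb the $o_P(1)$ slack between $\V(\tilde{\bm{\Delta}}^{(c)}_{1*})$ and its sample analogue.
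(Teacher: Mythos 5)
Your proof is correct, but for the spectral lower bound you take a genuinely different route from the paper. The paper expands $\bm{C}^{\top}P_{\bm{X}}^{\perp}\bm{C}$ directly, drops the two PSD quadratic terms $\{\bm{G}^{\top}\bm{\gamma}^{(c)}\}^{\top}P_{\bm{X}}^{\perp}\{\bm{G}^{\top}\bm{\gamma}^{(c)}\}$ and $\bm{\mu}^{\top}P_{\bm{X}}^{\perp}\bm{\mu}$ (with $\bm{\mu}=\E\{\bm{\Delta}^{(c)}\}$), and then bounds each surviving cross term in operator norm by $O_P(n^{1/2})$; the needed estimate $\norm{\bm{\mu}}_2^2=O(n)$ uses the moment bound on $\bm{\Delta}^{(c)}_{ik}$, and the cross term with $\bm{G}$ uses the deterministic $\ell^1$-bound on $\bm{\gamma}^{(c)}$. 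Your projection-enlargement argument eliminates the cross terms entirely by absorbing $\bm{M}=\bm{G}^{\top}\bm{\gamma}^{(c)}+\E\{\bm{\Delta}^{(c)}\}$ into the space being projected out, reducing the problem to the single clean term $\tilde{\bm{\Delta}}^{(c)\top}P_{\bm{Q}}^{\perp}\tilde{\bm{\Delta}}^{(c)}$. That buys you a shorter, cleaner argument that never needs a size bound on $\bm{\mu}$ (or on $\bm{G}^{\top}\bm{\gamma}^{(c)}$), at the modest cost of having to reason about the random projection $P_{\bm{Q}}$; you handle that correctly by conditioning on $\bm{G}$ and invoking the independence from Assumption~\ref{supp:assumptions:FA}(f), together with the trace-Markov bound on the finite-rank PSD perturbation. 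The moment bound $\E(\abs{\bm{C}_{ik}}^m)\leq\tilde{a}_m$ is proved exactly as in the paper. Your closing caveat about shrinking $\epsilon$ is the right instinct; the paper's own proof has the same implicit slack, writing $\succeq\{c+o_P(1)\}I_K$.
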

\begin{proof}
Define $\bm{\mu} = \E\{\bm{\Delta}^{(c)}\}$. Then
\begin{align*}
    \bm{C}^{\top}P_{\bm{X}}^{\perp}\bm{C} \succeq & \{ \bm{\Delta}^{(c)} - \bm{\mu} \}^{\top} P_{\bm{X}}^{\perp} \{ \bm{\Delta}^{(c)} - \bm{\mu} \} + \bm{\mu}^{\top} P_{\bm{X}}^{\perp} \{ \bm{\Delta}^{(c)} - \bm{\mu} \} + [ \bm{\mu}^{\top} P_{\bm{X}}^{\perp} \{ \bm{\Delta}^{(c)} - \bm{\mu} \} ]^{\top}\\
    & + \{ \bm{\Delta}^{(c)} \}^{\top}P_{\bm{X}}^{\perp} \bm{G}^{\top} \{ \bm{\gamma}^{(c)} \} + [ \{ \bm{\Delta}^{(c)} \}^{\top}P_{\bm{X}}^{\perp} \bm{G}^{\top} \{ \bm{\gamma}^{(c)} \} ]^{\top}.
\end{align*}
First, $\bm{G}_i^{\top} \{ \bm{\gamma}^{(c)} \} \leq c$ for some constant $c>0$ by \ref{supp:assumption:G} and \ref{supp:assumption:c} in Assumption~\ref{supp:assumptions:FA}, meaning $\norm*{ P_{\bm{X}}^{\perp} \bm{G}^{\top} \{ \bm{\gamma}^{(c)} \} }_2^2 \leq n c^2$. This means $\norm*{ \{ \bm{\Delta}^{(c)} \}^{\top}P_{\bm{X}}^{\perp} \bm{G}^{\top} \{ \bm{\gamma}^{(c)} \} }_2 = O_P(n^{1/2})$. Since $\norm*{\bm{\mu}}_2^2 = O(n)$ by \ref{supp:assumption:c} in Assumption~\ref{supp:assumptions:FA}, we also have $\norm*{ \bm{\mu}^{\top} P_{\bm{X}}^{\perp} \{ \bm{\Delta}^{(c)} - \bm{\mu} \} }_2 = O_P(n^{1/2})$. Since the rows of $\bm{\Delta}^{(c)} - \bm{\mu}$ are independent and identically distributed, $n^{-1}\{ \bm{\Delta}^{(c)} - \bm{\mu} \}^{\top} P_{\bm{X}}^{\perp} \{ \bm{\Delta}^{(c)} - \bm{\mu} \} = \{ \bm{\Delta}^{(c)} - \bm{\mu} \}^{\top} \{ \bm{\Delta}^{(c)} - \bm{\mu} \} \succeq \{c+o_P(1)\} I_K$ for some constant $c>0$, which proves $\lim_{n \to \infty}\Prob(n^{-1}\bm{C}^{\top}P_{\bm{X}}^{\perp}\bm{C} \succeq \epsilon I_K)=1$.

Let $\norm*{z}_m = \{\E(\abs*{x}^m)\}^{1/m}$ for any random variable $z$. Then for some constant $c>0$ and $a_m$ as defined in Assumption~\ref{supp:assumptions:FA}, $\norm*{ \bm{C}_{ik} }_m \leq \norm*{ \bm{G}_{*i}^{\top} \bm{\gamma}_{*k}^{(c)} }_m + \norm*{ \bm{\Delta}_{ik}^{(c)} }_m \leq c + a_m$.
\end{proof}

\begin{lemma}
\label{supp:lemma:Cte}
Suppose Assumption~\ref{supp:assumptions:FA} holds, let $\epsilon>0$ be any constant and $m>0$ any integer, and let $h_i: \mathbb{R}^K \to \mathbb{R}$, $i \in [n]$, be uniformly bounded functions with uniformly bounded gradients. Then for $\bm{L} = [\bm{\ell}_1^{\top}\cdots \bm{\ell}_p^{\top}]^{\top}$,
\begin{subequations}
\label{supp:equation:CtE}
\begin{align}
\label{supp:equation:CtE:1}
    &\E\{ ( n^{-1/2}\bm{C}_{*k}^{\top}\bm{E}_{g*} )^{2m} \} \leq c_m, \quad k \in [K]; g \in [p]\\
\label{supp:equation:CtE:1b}
    &\textstyle\E[ \{ n^{-1/2}\sum_{i=1^n} h_i( \bm{C}_{i*}) \bm{E}_{gi}\}^{2m} ] \leq c_m, \quad g \in [p]\\
\label{supp:equation:CtE:2}
    &\norm*{ (\lambda_K p)^{-1/2}\bm{C}^{\top} \bm{E}^{\top}\bm{L} }_2 = O_P(1)\\
\label{supp:equation:CtE:2b}
    &\norm*{ (\lambda_K p)^{-1/2}\sum_{g=1}^p \sum_{i=1}^n h_i(\bm{C}_{i*}) \bm{E}_{gi} \bm{\ell}_g }_2 = O_P(1)
\end{align}
\end{subequations}
for some constant $c_m>0$ that may depend on $m$.
\end{lemma}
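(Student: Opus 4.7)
The plan is to decompose $\bm{C}=\bm{G}^{\top}\bm{\gamma}^{(c)}+\bm{\Delta}^{(c)}$ and $\bm{E}=\{\bm{\gamma}^{(e)}\}^{\top}\bm{G}+\bm{\Delta}^{(e)}$ and bound each of the four resulting cross-products separately. The pure-noise piece $n^{-1/2}(\bm{\Delta}^{(c)})_{*k}^{\top}\bm{\Delta}^{(e)}_{g*}$ is a sum of independent, mean-zero products with uniformly bounded moments, so Rosenthal's inequality delivers the $2m$-th moment bound required in \eqref{supp:equation:CtE:1}. The two mixed terms involving exactly one of $\bm{\Delta}^{(c)},\bm{\Delta}^{(e)}$ are handled by conditioning on the other factor together with $\bm{G}$; for example $n^{-1/2}\bm{C}_{*k}^{\top}\bm{\Delta}^{(e)}_{g*}\mid\bm{C}\sim N(0,n^{-1}\sigma_g^2\|\bm{C}_{*k}\|_2^2)$, and Lemma~\ref{supp:lemma:Cassumption} together with the moment bounds on $\bm{C}_{ik}$ gives $\E(\|\bm{C}_{*k}\|_2^{2m})\lesssim n^m$, so this term has $O(1)$ $2m$-th moments.

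The substantive contribution is the genetic cross term $n^{-1/2}(\bm{G}^{\top}\bm{\gamma}^{(c)})_{*k}^{\top}(\{\bm{\gamma}^{(e)}\}^{\top}\bm{G})_{g*}$. Since $\bm{\gamma}^{(e)}_{*g}$ has at most $a$ nonzero entries of magnitude $O(n^{-1/4})$, it suffices to bound $n^{-1/2}\sum_i\bigl(\sum_s\bm{\gamma}^{(c)}_{sk}G_{si}\bigr)G_{ti}$ for each active $t$. I would remove the dependence of the parenthesized factor on $G_{ti}$ by writing it as $\bm{\gamma}^{(c)}_{tk}G_{ti}+W_{ti}$, where $W_{ti}$ is independent of $G_{ti}$ conditional on the remaining coordinates of $\bm{G}$; then $n^{-1/2}\sum_i W_{ti}G_{ti}$ is a sum of conditionally independent bounded mean-zero variables with $O(1)$ $2m$-th moments by Rosenthal, while the self-term $n^{-1/2}\sum_i\bm{\gamma}^{(c)}_{tk}G_{ti}^2=O(n^{1/2}\|\bm{\gamma}^{(c)}\|_\infty)=o(n^{1/4})$. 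Multiplying by $\bm{\gamma}^{(e)}_{tg}=O(n^{-1/4})$ and summing over the $O(1)$ active $t$'s keeps the contribution $o(1)$ in $L^{2m}$. Claim \eqref{supp:equation:CtE:1b} follows identically after replacing the inner factor by $h_i(\bm{C}_{i*})$, with a first-order Taylor expansion of $h_i$ in the direction $G_{ti}\bm{\gamma}^{(c)}_{t*}$ replacing the explicit splitting; the remainder is $o(n^{-1/4})$ uniformly in $i$ by $\|\nabla h_i\|_\infty\lesssim 1$ and $\|\bm{\gamma}^{(c)}_{t*}\|_\infty=o(n^{-1/4})$.

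To upgrade these per-$g$ bounds to the matrix-norm statements \eqref{supp:equation:CtE:2} and \eqref{supp:equation:CtE:2b}, I would invoke the pathway structure on $\bm{\gamma}^{(e)}$ in Assumption~\ref{supp:assumptions:FA}: the metabolite index partitions into disjoint pathways $\mathcal{P}_1,\ldots,\mathcal{P}_M$ of size $\le a$, and for $g,h$ in distinct pathways the rows $\bm{E}_{g*},\bm{E}_{h*}$ involve disjoint subsets of $\bm{G}$ plus independent Gaussian noise $\bm{\Delta}^{(e)}_{g*},\bm{\Delta}^{(e)}_{h*}$, so they are independent conditional on $(\bm{G},\bm{C})$. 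For unit $\bm{u},\bm{v}\in\mathbb{R}^K$ I would expand $\bm{u}^{\top}\bm{C}^{\top}\bm{E}^{\top}\bm{L}\bm{v}=\sum_r\sum_{g\in\mathcal{P}_r}(\bm{u}^{\top}\bm{C}^{\top}\bm{E}_{g*})(\bm{\ell}_g^{\top}\bm{v})$, bound each within-pathway sum via Cauchy--Schwarz with $|\mathcal{P}_r|\le a$ and \eqref{supp:equation:CtE:1}, and combine across pathways using conditional independence together with $\sum_g(\bm{\ell}_g^{\top}\bm{v})^2\le\|\bm{v}\|_2^2\lambda_1 p/n$ and $\lambda_1\asymp\lambda_K$. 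The resulting second moment is $O(\lambda_K p)$, and an $\varepsilon$-net over the unit sphere in the fixed-dimensional $\mathbb{R}^K$ lifts the pointwise bound to the operator norm. Statement \eqref{supp:equation:CtE:2b} follows by the identical argument starting from \eqref{supp:equation:CtE:1b}.

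The main obstacle I anticipate is the coupling of $\bm{C}$ and $\bm{E}$ through $\bm{G}$ in \eqref{supp:equation:CtE:1b}: the nonlinearity of $h_i$ forbids a clean linear separation, so the Taylor expansion and Rosenthal step must be executed with simultaneous use of $\|\bm{\gamma}^{(c)}\|_\infty=o(n^{-1/4})$, the bounded-gradient hypothesis on $h_i$, the $a$-sparsity of $\bm{\gamma}^{(e)}_{*g}$, and Lemma~\ref{supp:lemma:Cassumption}, in order to keep every remainder term in the $o(1)$ regime.
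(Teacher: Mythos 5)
Your arguments for \eqref{supp:equation:CtE:1} and \eqref{supp:equation:CtE:1b} track the paper's proof closely: the same decomposition of $\bm{C}^{\top}_{*k}\bm{E}_{g*}$ into noise/mixed/genetic pieces, the same diagonal-versus-off-diagonal split of the genetic piece with $\|\bm{\gamma}^{(c)}\|_\infty$, $\|\bm{\gamma}^{(e)}\|_\infty=o(n^{-1/4})$ and the $a$-sparsity of $\bm{\gamma}^{(e)}_{*g}$, and for \eqref{supp:equation:CtE:1b} the same Taylor expansion of $h_i$ around $\bm{C}_{i*}-\bm{\gamma}^{(c)}_{s_g*}G_{s_gi}$ with remainder controlled by $|\bm{\gamma}^{(e)}_{s_gg}|\,\|\bm{\gamma}^{(c)}_{s_g*}\|_2=o(n^{-1/2})$. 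No issues there.

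The argument for \eqref{supp:equation:CtE:2}--\eqref{supp:equation:CtE:2b} has a real gap. Conditioning on $(\bm{G},\bm{C})$ renders the rows $\bm{E}_{g*}$ independent (indeed this requires no pathway structure at all, since $\bm{\Delta}^{(e)}$ already has independent rows), and conditional independence together with $\sum_g(\bm{\ell}_g^{\top}\bm{v})^2\le\lambda_1 p/n$ does control the conditional \emph{variance} of $\bm{u}^{\top}\bm{C}^{\top}\bm{E}^{\top}\bm{L}\bm{v}$ at the desired rate $O(\lambda_K p)$. But the conditional \emph{mean}
\begin{align*}
\E\{\bm{u}^{\top}\bm{C}^{\top}\bm{E}^{\top}\bm{L}\bm{v}\mid\bm{G},\bm{C}\}=\bm{u}^{\top}\bm{C}^{\top}\bm{G}^{\top}\bm{\gamma}^{(e)}\bm{L}\bm{v}
\end{align*}
is not zero, and your sketch does not control it. After substituting $\bm{C}=\bm{G}^{\top}\bm{\gamma}^{(c)}+\bm{\Delta}^{(c)}$, the term $\bm{u}^{\top}\{\bm{\gamma}^{(c)}\}^{\top}\bm{G}\bm{G}^{\top}\bm{\gamma}^{(e)}\bm{L}\bm{v}$ is deterministic given $\bm{G}$, so neither conditional independence nor the pathway blocking (Assumption~\ref{supp:assumptions:FA}(e)(ii)) helps; this is in fact the dominant difficulty in \eqref{supp:equation:CtE:2}. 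The paper bounds it by splitting the double sum over SNPs into the diagonal contributions (pairs $r=s$, at most $|\mathcal{S}|=O(p^{1/2})$ by the $p^{1/2}$-sparsity of $\bm{\gamma}^{(c)}$ in Assumption~\ref{supp:assumptions:FA}(d)(i)) and the off-diagonal contributions (whose second moment is controlled by the orthogonality $\E(\bm{G}_{r*}^{\top}\bm{G}_{s*}\bm{G}_{r'*}^{\top}\bm{G}_{s'*})=0$ for $(r,s)\neq(r',s')$, together with a count $|\mathcal{R}|=O(p^{3/2})$). Your proposal invokes the pathway structure and Cauchy--Schwarz over metabolites, but never uses the $p^{1/2}$-sparsity of $\bm{\gamma}^{(c)}$, which is what makes the diagonal block of $\bm{G}\bm{G}^{\top}$ summable at the right rate. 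Without that step the argument, as written, does not close.
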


\begin{proof}
Under Assumption~\ref{supp:assumptions:FA},
\begin{align*}
    \bm{C}_{*k}^{\top}\bm{E}_{g*} = \{\bm{\gamma}^{(c)}_{*k}\}^{\top} \bm{G} \bm{G}^{\top} \bm{\gamma}^{(e)}_{*g} + \{\bm{\Delta}^{(c)}_{*k}\}^{\top}\bm{G}^{\top} \bm{\gamma}^{(e)}_{*g}  + \bm{C}_{*k}^{\top} \bm{\Delta}^{(e)}_{*g}
\end{align*}
Let $\epsilon>0$ be any constant. Since $\bm{C}$ is independent of $\bm{\Delta}^{(e)}_{*g}$, $\E\{ ( \bm{C}_{*k}^{\top} \bm{\Delta}^{(e)}_{*g} )^{2m} \} \leq c_{1,m}$ for some constant $c_{1,m}>0$ by Corollary~\ref{supp:corollary:MaximalIne}. Further, since at most finitely many entries of $\bm{\gamma}^{(e)}_{*g} \in \mathbb{R}^S$ are non-zero, $\bm{G}$ is independent of $\bm{\Delta}^{(c)}$, and the rows of $\bm{G}$ are mean 0, independent and sub-Gaussian, $\E[ \{ n^{-1/2}\{\bm{\Delta}_{*k}^{(c)}\}^{\top}\bm{G}^{\top} \bm{\gamma}^{(e)}_{*g} \}^{2m} ]\leq c_{2,m}$ for some constant $c_{2,m}>0$ by Corollary~\ref{supp:corollary:MaximalIne}. Let $\mathcal{I}_g = \{s \in [S]: \bm{\gamma}^{(e)}_{sg} \neq 0\}$ and $\mathcal{C} = \{s \in [S]: \bm{\gamma}^{(c)}_{sk} \neq 0\}$. Then
\begin{align*}
    &n^{-1/2} \{\bm{\gamma}^{(c)}_{*k}\}^{\top} \bm{G} \bm{G}^{\top} \bm{\gamma}^{(e)}_{*g} = \sum_{s \in \mathcal{I}_g} \tilde{\bm{\gamma}}^{(c)}_{sk} \tilde{\bm{\gamma}}^{(e)}_{sg} (n^{-1}\bm{G}_{s*}^{\top} \bm{G}_{s*}) + n^{-1/2} \sum_{r \in \mathcal{I}_g^{c} \cap \mathcal{C} } \bm{\gamma}^{(c)}_{rk} \bm{G}_{r*}^{\top}\sum_{s \in \mathcal{I}_g} \bm{\gamma}_{sg}^{(e)}\bm{G}_{s*}\\
    &\tilde{\bm{\gamma}}^{(c)} = n^{-1/4} \bm{\gamma}^{(c)}, \quad \tilde{\bm{\gamma}}^{(e)} = n^{-1/4} \bm{\gamma}^{(e)}.
\end{align*}
First, since $\E\{ ( n^{-1}\bm{G}_{s*}^{\top} \bm{G}_{s*} )^{2m} \} \leq c_{3,m}$ for all $s \in [S]$ and some constant $c_{3,m}>0$ that may depend on $m$, 
\begin{align*}
    \E[ \textstyle \{ \sum_{s \in \mathcal{I}_g} \tilde{\bm{\gamma}}^{(c)}_{sk} \tilde{\bm{\gamma}}^{(e)}_{sg} (n^{-1}\bm{G}_{s*}^{\top} \bm{G}_{s*}) \}^{2m} ] \leq a c_{3,m} 
\end{align*}
for some constant $a>0$ because $\abs*{ \tilde{\bm{\gamma}}^{(c)}_{sk} \tilde{\bm{\gamma}}^{(e)}_{sg} }$ and $\abs*{\mathcal{I}_g}$ are uniformly bounded from above. Let $\bm{Z}_g = \sum_{s \in \mathcal{I}_g} \bm{\gamma}^{(e)}_{sg} \bm{G}_{s*} \in \mathbb{R}^n$. Then $\bm{Z}_g$ is mean 0, has independent entries that are bounded above by $an^{-1/4}$ for some constant $a>0$, and is independent of $\bm{W}_g = \sum_{r \in \mathcal{I}_g^{c} \cap \mathcal{C} } \bm{\gamma}^{(c)}_{rk} \bm{G}_{r*} \in \mathbb{R}^n$. Here, $\bm{W}_g$ is also mean 0, has independent entries. Further, since $\{\bm{G}_{r*}\}_{r \in \mathcal{I}_g^{c} \cap \mathcal{C} }$ are independent and sub-Gaussian with uniformly bounded sub-Gaussian norm and $\sum_{r \in [S]}\{\bm{\gamma}^{(c)}_{rk}\}^2 \leq \sum_{r \in [S]}\abs*{ \bm{\gamma}^{(c)}_{rk} } \leq a$ for some constant $a>0$ by Assumption~\ref{supp:assumptions:FA}, $\bm{W}_g$ is also sub-Gaussian with uniformly bounded sub-Gaussian norm over $g \in [p]$. All this implies
\begin{align*}
    \E[ \textstyle\{ n^{-1/2} \sum_{r \in \mathcal{I}_g^{c} \cap \mathcal{C} } \bm{\gamma}^{(c)}_{rk} \bm{G}_{r*}^{\top}\sum_{s \in \mathcal{I}_g} \bm{\gamma}_{sg}^{(e)}\bm{G}_{s*} \}^{2m} ] = \E\{(n^{-1/2}\bm{W}_g^{\top} \bm{Z}_g)^{2m}\} \leq c_{4,m}, \quad g \in [p]
\end{align*}
for some constant $c_{4,m}>0$ that may depend on $m$. This completes the proof of \eqref{supp:equation:CtE:1}.

Since $h$ is bounded from above, the above work implies we need only show that $\E[ \{n^{-1/2}\allowbreak\sum_{i=1}^n \allowbreak h_i( \bm{C}_{i*})\bm{G}_{*i}^{\top}\bm{\gamma}_{*g}^{(e)}\}^{2m} ]\allowbreak \leq \allowbreak c_m$  to prove \eqref{supp:equation:CtE:1b}. For simplicity, we assume that $\mathcal{I}_g = \{s_g\}$, and note that the extension to general $\mathcal{I}_g$ under the Assumption~\ref{supp:assumptions:FA} is trivial. Then for $\bm{R}^{(g)}_{i*} = \bm{C}_{i*} - \bm{\gamma}_{s_g *}^{(c)}\bm{G}_{s_g i}$,
\begin{align*}
n^{-1/2}\sum_{i=1}^n h_i( \bm{C}_{i*})\bm{G}_{*i}^{\top}\bm{\gamma}_{*g}^{(e)}= & n^{-1/2}\bm{\gamma}_{s_g g}^{(e)}\sum_{i=1}^n h_i\{\bm{\gamma}_{s_g *}^{(c)}\bm{G}_{s_g i} + \bm{R}^{(g)}_{i*}\} \bm{G}_{s_gi}\\=& n^{-1/2}\bm{\gamma}_{s_g g}^{(e)}\sum_{i=1}^n h_i\{\bm{R}^{(g)}_{i*}\} \bm{G}_{s_gi}+ O(1),
\end{align*}
where the second equality follows because $\{h_i\}_{i \in [n]}$ have uniformly bounded gradients, $\bm{G}_{s_gi}$ is bounded, and $\abs*{\bm{\gamma}_{s_g g}^{(e)}} \norm*{\bm{\gamma}_{s_g *}^{(c)}}_2 = o(n^{-1/2})$. An application of Lemma~\ref{supp:lemma:Powera} then proves the result, which proves \eqref{supp:equation:CtE:1b}.

For \eqref{supp:equation:CtE:2},
\begin{align*}
    (\lambda_K p)^{-1/2}\bm{C}^{\top}\bm{E}^{\top}\bm{L} =& (\lambda_K p)^{-1/2}\bm{C}^{\top}\{\bm{\Delta}^{(e)}\}^{\top}\bm{L} + (\lambda_K p)^{-1/2}\{\bm{\Delta}^{(c)}\}^{\top}\bm{G}^{\top}\bm{\gamma}^{(e)}\bm{L}\\
    &+ (\lambda_K p)^{-1/2}\{\bm{\gamma}^{(c)}\}^{\top}\bm{G}\bm{G}^{\top}\bm{\gamma}^{(e)}\bm{L}.
\end{align*}
It is straightforward to show that
\begin{align*}
    \norm*{ (\lambda_K p)^{-1/2}\bm{C}^{\top}\{\bm{\Delta}^{(e)}\}^{\top}\bm{L} }_2, \, \norm*{ (\lambda_K p)^{-1/2}\{\bm{\Delta}^{(c)}\}^{\top}\bm{G}^{\top}\bm{\gamma}^{(e)}\bm{L} }_2 = O_P(1).
\end{align*}
For the remaining term, let $\tilde{\bm{L}} = n^{3/4}\lambda_K^{-1/2}\bm{\gamma}^{(e)}\bm{L} \in \mathbb{R}^{S \times K}$. Then at most $O(p)$ rows of $\tilde{\bm{L}}$ are non-zero and $\norm*{ \tilde{\bm{L}}_{s*} }_2 \leq c $ for some constant $c>0$ by Assumption~\ref{supp:assumptions:FA}. Then for $\mathcal{S} = \{s \in [S]: \bm{\gamma}_{s*},\tilde{\bm{L}}_{s*}\neq 0 \}$ and $\mathcal{R} = \{(r,s) \in [S] \times [S]: r \neq s, \bm{\gamma}_{s*}\odot\tilde{\bm{L}}_{s*}\neq 0\}$ (where $\odot$ is the Hadamard product),
\begin{align*}
    (\lambda_K p)^{-1/2}\{\bm{\gamma}^{(c)}\}^{\top}\bm{G}\bm{G}^{\top}\bm{\gamma}^{(e)}\bm{L} =& p^{-1/2} \sum_{s \in \mathcal{S}} \tilde{\bm{\gamma}}^{(c)}_{s*} \tilde{\bm{L}}_{s*}^{\top} (n^{-1}\bm{G}_{s*}^{\top} \bm{G}_{s*})\\
    &+ n^{-1}\sum_{(r,s) \in \mathcal{R}} \bm{\gamma}^{(c)}_{s*} \tilde{\bm{L}}_{s*}^{\top} (p^{-1/2}\bm{G}_{r *}^{\top}\bm{G}_{s *}).
\end{align*}
Since $\norm*{\tilde{\bm{\gamma}}^{(c)}_{s*}}_2, \norm*{ \tilde{\bm{L}}_{s*} }_2 \leq c$ and $\E(n^{-1}\bm{G}_{s*}^{\top} \bm{G}_{s*})\leq c$ for some constant $c>0$ and all $s \in [S]$,
\begin{align*}
    \E\{ \norm*{ p^{-1/2} \sum_{s \in \mathcal{S}} \tilde{\bm{\gamma}}^{(c)}_{s*} \tilde{\bm{L}}_{s*}^{\top} (n^{-1}\bm{G}_{s*}^{\top} \bm{G}_{s*}) }_2 \} \leq c^3p^{-1/2}\abs*{\mathcal{S}} = O(1)
\end{align*}
since $\abs*{\mathcal{S}} = O(p^{1/2})$ by Assumption~\ref{supp:assumptions:FA}. Lastly, since $\E( \bm{G}_{r *}^{\top}\bm{G}_{s *} \bm{G}_{r' *}^{\top}\bm{G}_{s' *} )=0$ for all $(r,s) \neq (r',s')$,
\begin{align*}
    \E[\{ n^{-1}\sum_{(r,s) \in \mathcal{R}} \bm{\gamma}^{(c)}_{sk_1} \tilde{\bm{L}}_{sk_2}^{\top} (p^{-1/2}\bm{G}_{r *}^{\top}\bm{G}_{s *}) \}^2] = O( n^{-2}\abs*{\mathcal{R}} ) = O(n^{-2}p^{3/2}) = O(1),
\end{align*}
which completes the proof of \eqref{supp:equation:CtE:2}.

Lastly, to prove \eqref{supp:equation:CtE:2b}, we again assume for simplicity that $\mathcal{I}_g=\{s_g\}$, but note that extending it to general $\mathcal{I}_g$ is trivial under Assumption~\ref{supp:assumptions:FA}. Since $h_i$ is bounded, the proof of \eqref{supp:equation:CtE:2} implies we need only consider the behavior of $\sum_{g=1}^p \bm{\gamma}_{s_g g}^{(e)}\sum_{i=1}^n h_i( \bm{C}_{i*}) \bm{G}_{s_g i} \bm{\ell}_g$. Fix a $k \in [K]$ and let $a_{gk}=n^{1/2}\lambda_K^{-1/2}\bm{\ell}_{g_k}$, where $\abs*{a_{gi}}$ is uniformly bounded from above by Assumption~\ref{supp:assumptions:FA}. Let $\mathcal{C} = \{ g \in [p]: \bm{\gamma}_{s_g*}^{(c)} \neq 0 \}$. Then $\abs*{\mathcal{C}} = O(p^{1/2})$ by Assumption~\ref{supp:assumptions:FA} and for any $k\in[K]$,
\begin{align*}
&(\lambda_K p)^{-1/2} \sum_{g=1}^p \bm{\gamma}_{s_g g}^{(e)}\sum_{i=1}^n h_i( \bm{C}_{i*}) \bm{G}_{s_g i} \bm{\ell}_{g_k} = (\lambda_K p)^{-1/2} \sum_{g\in \mathcal{C}} \bm{\gamma}_{s_g g}^{(e)}\sum_{i=1}^n h_i(\bm{C}_{i*}) \bm{G}_{s_g i} \bm{\ell}_{g_k} \\&+ (\lambda_K p)^{-1/2} \sum_{g\in \mathcal{C}^c} \bm{\gamma}_{s_g g}^{(e)}\sum_{i=1}^n h_i(\bm{C}_{i*}) \bm{G}_{s_g i} \bm{\ell}_{g_k}.
\end{align*}
Since the $\bm{G}_{s_g}$ is independent of $\bm{C}$ for $g \in \mathcal{C}^c$, the second term after the equality in the above expression is $O_P(1)$ because $h_i$ is uniformly bounded from above. For the first term, fix a $g \in \mathcal{C}$ and let $\bm{C}_{i*} = \bm{\gamma}_{s_g *}^{(e)}\bm{G}_{s_g i} + \bm{R}_{i*}^{(g)}$, where $\bm{R}_{i*}^{(g)}$ is independent of $\bm{G}_{s_g i}$. Then for $\tilde{\ell}_{gk} = n^{1/2}\lambda_K^{-1/2}\bm{\ell}_{gk}$ (which is uniformly bounded by Assumption~\ref{supp:assumptions:FA}),
\begin{align*}
    &(\lambda_K p)^{-1/2} \sum_{g\in \mathcal{C}} \bm{\gamma}_{s_g g}^{(e)}\sum_{i=1}^n h_i( \bm{C}_{i*}) \bm{G}_{s_g i} \bm{\ell}_{g_k} = p^{-1/2} \sum_{g\in \mathcal{C}} \tilde{\ell}_{gk}\bm{\gamma}_{s_g g}^{(e)} [n^{-1/2}\sum_{i=1}^n h_i\{\bm{R}^{(g)}_{i*}\} \bm{G}_{s_g i}] \\ &+ p^{-1/2}\sum_{g \in \mathcal{C}} [n^{1/2} \tilde{\ell}_{gk}\bm{\gamma}_{s_g g}^{(e)}\{ \bm{\gamma}_{s_g *}^{(c)} \}^{\top}] [n^{-1}\sum_{i=1}^n \bm{G}_{s_g i}^2 \smallint_{0}^1 \nabla h_i\{\bm{R}^{(g)}_{i*} + t \bm{\gamma}_{s_g *}^{(c)}\bm{G}_{s_g i}\}dt ],
\end{align*}
Since $h_i$ and $\nabla h_i$ are bounded,
\begin{align*}
    &\E( \textstyle[n^{-1/2}\sum_{i=1}^n h_i\{ \bm{R}^{(g)}_{i*}\} \bm{G}_{s_g i}]^2 ) \leq c\\
    &\E[ \norm*{ n^{-1}\sum_{i=1}^n \bm{G}_{s_g i}^2 \smallint_{0}^1 \nabla h_i\{\bm{R}^{(g)}_{i*} + t \bm{\gamma}_{s_g *}^{(c)}\bm{G}_{s_g i}\}dt }_2 ] \leq c
\end{align*}
for some constant $c>0$. Since $\abs*{\mathcal{C}}=O(p^{1/2})$, this completes the proof.
\end{proof}
\begin{remark}
\label{supp:remark:CtE}
The proof of Lemma~\ref{supp:lemma:Cte} can easily be extended to show that \eqref{supp:equation:CtE} holds when we replace $\bm{C}$ with $P_{\bm{X}}^{\perp}\bm{C}$. This will be useful in Lemma~\ref{supp:lemma:s11} below.
\end{remark}

\subsection{Consistency of $P_{\bm{X}}^{\perp}\hat{\bm{C}}$}
\label{supp:subsection:FAConsistency}
Here we use $f(\bm{U})$ from \eqref{supp:equation:Cobj} to prove the consistency of $P_{\bm{X}}^{\perp}\hat{\bm{C}}$. The main results are Lemmas~\ref{supp:lemma:f1}, \ref{supp:lemma:EtE}, \ref{supp:lemma:f2} and Corollary~\ref{supp:corollary:Consistency}. For ease of notation, we re-define $P_{\bm{X}}^{\perp}\hat{\bm{C}}$ to be $\hat{\bm{C}}$ in Sections~\ref{supp:subsection:FAConsistency} and \ref{supp:subsection:FARate}.

\begin{lemma}
\label{supp:lemma:Vz}
Let $\bm{U} \in \mathbb{R}^{n \times K}$ be a matrix with orthonormal columns that satisfies $\bm{U}^{\top}\bm{X}=\bm{0}$, and define $\delta = \norm{ P_{\bm{U}} - P_{\tilde{\bm{C}}} }_2$. Then there exist $\bm{v}_u \in \mathbb{R}^{K \times K}$, $\bm{z}_u \in \mathbb{R}^{(n-K-d) \times K}$, and some universal constant $c>1$ such that
\begin{align*}
    \bm{v}_u^{\top} \bm{v}_u + \bm{z}_u^{\top}\bm{z}_u = I_K, \quad \bm{U} = \tilde{\bm{C}}\bm{v}_u + \bm{Q}\bm{z}_u, \quad \norm{ \bm{v}_u - \bm{v} }_2 \in [c^{-1}\delta^2,c\delta^2], \quad \norm{ \bm{z}_u }_2 \in [c^{-1}\delta, c \delta],
\end{align*}
where $\bm{v} = \bm{A}_u\bm{B}_u^{\top} \in \mathbb{R}^{K \times K}$ for $\bm{A}_u\in \mathbb{R}^{K \times K}$ and $\bm{B}_u\in \mathbb{R}^{K \times K}$ the left and right singular vectors of $\bm{v}_u$.
\end{lemma}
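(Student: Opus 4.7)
The plan is to identify the decomposition as a standard CS (Cosine--Sine) decomposition of $\bm{U}$ relative to the orthonormal pair $(\tilde{\bm{C}},\bm{Q})$ spanning $\im(\bm{X})^{\perp}$, and then connect the singular values of $\bm{v}_u$ and $\bm{z}_u$ to the principal angles between $\im(\bm{U})$ and $\im(\tilde{\bm{C}})$. The quantity $\delta$ is the sine of the largest such angle, which immediately gives the claimed bounds.

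First, I would fix $\bm{Q}\in\mathbb{R}^{n\times(n-K-d)}$ to be any matrix with orthonormal columns spanning $\im([\bm{X},\tilde{\bm{C}}])^{\perp}$ (this is well-defined since $\tilde{\bm{C}}$ has orthonormal columns lying in $\im(P_{\bm{X}}^{\perp})$, and since $\bm{X}^\top\tilde{\bm{C}} = 0$, the block $[\bm{X}(\bm{X}^\top\bm{X})^{-1/2},\tilde{\bm{C}},\bm{Q}]$ is an orthonormal basis of $\mathbb{R}^n$). Because $\bm{U}^\top\bm{X}=0$, writing $\bm{U}$ in this basis gives $\bm{U}=\tilde{\bm{C}}\bm{v}_u+\bm{Q}\bm{z}_u$ with $\bm{v}_u=\tilde{\bm{C}}^\top\bm{U}$ and $\bm{z}_u=\bm{Q}^\top\bm{U}$. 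Orthonormality $\bm{U}^\top\bm{U}=I_K$ and $\tilde{\bm{C}}^\top\bm{Q}=0$ immediately yield $\bm{v}_u^\top\bm{v}_u+\bm{z}_u^\top\bm{z}_u=I_K$.

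Next I would invoke the standard identity $\norm{P_{\bm{U}}-P_{\tilde{\bm{C}}}}_2=\sin\theta_{\max}$, where $\cos\theta_1\ge\cdots\ge\cos\theta_K$ are the singular values of $\bm{v}_u=\tilde{\bm{C}}^\top\bm{U}$ (the cosines of the principal angles between $\im(\tilde{\bm{C}})$ and $\im(\bm{U})$). The CS identity $\bm{v}_u^\top\bm{v}_u+\bm{z}_u^\top\bm{z}_u=I_K$ forces the singular values of $\bm{z}_u$ to be $\sin\theta_1,\ldots,\sin\theta_K$ (in reverse order), so
\begin{equation*}
\norm{\bm{z}_u}_2=\sin\theta_{\max}=\delta,
\end{equation*}
which proves the $\bm{z}_u$ bound with $c=1$.

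Finally, writing the SVD $\bm{v}_u=\bm{A}_u\Sigma_u\bm{B}_u^\top$, the polar factor is $\bm{v}=\bm{A}_u\bm{B}_u^\top$, so $\bm{v}_u-\bm{v}=\bm{A}_u(\Sigma_u-I_K)\bm{B}_u^\top$ and
\begin{equation*}
\norm{\bm{v}_u-\bm{v}}_2=\max_i(1-\cos\theta_i)=1-\sqrt{1-\delta^2}.
\end{equation*}
The elementary scalar inequalities $\delta^2/2\le 1-\sqrt{1-\delta^2}\le\delta^2$ (valid for $\delta\in[0,1]$, obtained by squaring $1-\delta^2/2\ge\sqrt{1-\delta^2}$ and $\sqrt{1-\delta^2}\le 1$, respectively) deliver the two-sided bound with $c=2$. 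Taking $c$ to be the maximum of the constants produced gives the universal $c>1$ demanded by the statement. The only real obstacle is verifying the principal-angle identity $\norm{P_{\bm{U}}-P_{\tilde{\bm{C}}}}_2=\sin\theta_{\max}$, but this is a standard fact (e.g.\ via the block structure of $P_{\bm{U}}-P_{\tilde{\bm{C}}}$ in the CS basis or by Wedin's $\sin\Theta$ identity), and requires no adaptation to our setting since both $\bm{U}$ and $\tilde{\bm{C}}$ have orthonormal columns.
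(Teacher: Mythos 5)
Your proof is correct and takes a genuinely different route from the paper. Both proofs begin with the same decomposition $\bm{U}=\tilde{\bm{C}}\bm{v}_u+\bm{Q}\bm{z}_u$ and the observation $\norm{\bm{v}_u-\bm{v}}_2 = 1-\sigma_{\min}(\bm{v}_u)$. Where you diverge is in relating $\delta$ to $\norm{\bm{z}_u}_2$: the paper avoids principal angles entirely, instead computing $\norm{P_{\bm{U}}-P_{\tilde{\bm{C}}}}_F^2 = 2\norm{\bm{z}_u}_F^2$ directly and then converting between spectral and Frobenius norms via a factor of $K$ (since the rank is $O(K)$); this produces $\norm{\bm{z}_u}_2^2\in[\delta^2/(2K),K\delta^2/2]$ and a constant $c$ that depends on $K$. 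You instead invoke the classical CS-decomposition identity $\norm{P_{\bm{U}}-P_{\tilde{\bm{C}}}}_2=\sin\theta_{\max}$, which gives the exact equality $\norm{\bm{z}_u}_2=\delta$ and the exact expression $\norm{\bm{v}_u-\bm{v}}_2=1-\sqrt{1-\delta^2}$, yielding truly universal constants ($c=2$ works). Your argument is therefore sharper and cleaner, at the cost of importing the external sin-theta fact rather than deriving the Frobenius-norm relation from scratch; since $K$ is fixed in this paper, the sharper constant is of no consequence downstream, but it is a cleaner statement. One small point to be explicit about: the identity $\norm{P_{\bm{U}}-P_{\tilde{\bm{C}}}}_2=\sin\theta_{\max}$ relies on $\dim\im(\bm{U})=\dim\im(\tilde{\bm{C}})=K$ (it fails for subspaces of unequal dimension), which holds here because both have $K$ orthonormal columns — worth noting in a full write-up.
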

\begin{proof}
We can express $P_{\bm{U}} = \bm{U}\bm{U}^{\top}$ and $P_{\tilde{\bm{C}}} = \tilde{\bm{C}}\tilde{\bm{C}}^{\top}$, and can always express $\bm{U} = \tilde{\bm{C}}\bm{v}_u + \bm{Q}\bm{z}_u$ where $\bm{v}_u^{\top} \bm{v}_u + \bm{z}_u^{\top}\bm{z}_u = I_K$ by the Fredholm alternative. Then
\begin{align*}
    &\norm{ P_{\bm{U}} - P_{\tilde{\bm{C}}} }_2^2 \leq \norm{ P_{\bm{U}} - P_{\tilde{\bm{C}}} }_F^2\leq K\norm{ P_{\bm{U}} - P_{\tilde{\bm{C}}} }_2^2\\
    &\norm{ P_{\bm{U}} - P_{\tilde{\bm{C}}} }_F^2 = 2\Tr(I_K - \bm{v}_u^{\top}\bm{v}_u) = 2\Tr(\bm{z}_u^{\top}\bm{z}_u) = 2\norm{\bm{z}_u}_F^2\\
    &K^{-1}\norm{\bm{z}_u}_F^2 \leq \norm{\bm{z}_u}_2^2\leq \norm{\bm{z}_u}_F^2.
\end{align*}
The first two lines imply $\norm{ \bm{z}_u }_F^2 \in [\delta^2/2,K\delta^2/2]$, which taken with the third line, implies $\norm{ \bm{z}_u }_2^2 \in [\delta^2/(2K),K\delta^2/2]$. Note that since $\bm{v}_u^{\top}\bm{v}_u \preceq I_K$, the singular values of $\bm{v}_u$ satisfy $0 \leq \sigma_K\leq \cdots \leq \sigma_1 \leq 1$ and
\begin{align*}
    &1-\sigma_K^2 = \norm{ I_K - \bm{v}_u^{\top}\bm{v}_u }_2 = \norm{ \bm{z}_u }_2^2 \in [\delta^2/(2K),K\delta^2/2]\\
    &\Rightarrow 1-\sigma_1,\ldots,1-\sigma_K \in [\delta^2/(4K),K\delta^2/2].
\end{align*}
If $\bm{v}_u = \bm{A}\diag(\sigma_1,\ldots,\sigma_K)\bm{B}^{\top}$ is the singular value decomposition of $\bm{v}_u$, this shows that $\norm{ \bm{A}\bm{B}^{\top} - \bm{v}_u }_2 \in [\delta^2/(4K),K\delta^2/2]$, which completes the proof.
\end{proof}

\begin{lemma}
\label{supp:lemma:Powera}
Suppose the random variables $z_1,\ldots,z_n$ satisfy the following for some integer $m\geq 1$ and constant $c>0$:
\begin{enumerate}
\item $\E(z_i^{2m}) < c$
\item[(ii)] There exists a $\sigma$-algebra $\mathcal{F}$ such that $\E(z_i \mid \mathcal{F}) = 0$ for all $i \in [n]$ and $z_1,\ldots,z_n$ are independent conditional on $\mathcal{F}$.
\end{enumerate}
Then $\E\{ (\sum_{i=1}^n z_i)^{2m} \} \leq c c_m n^m$, where $c_m$ is a constant that only depends on $m$.
\end{lemma}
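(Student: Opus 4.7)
The plan is to expand the $2m$-th power as a multi-index sum and then argue that almost all cross terms vanish after taking expectations. Write
\begin{align*}
\E\bigl\{(\textstyle\sum_{i=1}^n z_i)^{2m}\bigr\} = \sum_{(i_1,\ldots,i_{2m}) \in [n]^{2m}} \E(z_{i_1}\cdots z_{i_{2m}}),
\end{align*}
and call a multi-index $(i_1,\ldots,i_{2m})$ \emph{admissible} if every index value that appears in it appears at least twice. I first claim that $\E(z_{i_1}\cdots z_{i_{2m}})=0$ whenever the multi-index is not admissible. This is where assumption (ii) enters: if some value $j$ appears exactly once among $i_1,\ldots,i_{2m}$, then conditioning on $\mathcal{F}$ and on $\{z_i:i\neq j\}$ and using conditional independence gives $\E(z_{i_1}\cdots z_{i_{2m}}\mid \mathcal{F},\{z_i\}_{i\neq j}) = (\prod_{k:i_k\neq j} z_{i_k})\,\E(z_j\mid\mathcal{F})=0$, and the tower property finishes this step.

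Next I would bound each admissible term by a constant. By the generalized H\"older inequality with exponents all equal to $2m$,
\begin{align*}
\abs{\E(z_{i_1}\cdots z_{i_{2m}})} \leq \prod_{k=1}^{2m}\{\E(z_{i_k}^{2m})\}^{1/(2m)} \leq c,
\end{align*}
using assumption (i). It remains to count admissible multi-indices. An admissible multi-index uses at most $m$ distinct values from $[n]$ (since each distinct value occupies at least two of the $2m$ slots). Choosing the set of distinct values and the assignment of slots to those values factorizes into a combinatorial choice depending only on $m$ and a choice of at most $m$ values from $[n]$; the latter contributes at most $n^m$. Thus the number of admissible multi-indices is at most $C_m n^m$ for some constant $C_m$ depending only on $m$, and combining with the per-term bound gives $\E\{(\sum_i z_i)^{2m}\} \leq c\, C_m\, n^m$, as desired.

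There is no serious obstacle here; the only thing to watch is that the conditional-mean-zero step really needs joint conditioning on $\mathcal{F}$ together with the other $z_i$'s, which is where conditional independence (not just unconditional pairwise properties) is used. The combinatorial counting is routine and can be made explicit using Stirling numbers $S(2m,k)$ and partitions of $2m$ into parts of size $\geq 2$, but for the stated bound only the order-of-magnitude count $O(n^m)$ matters.
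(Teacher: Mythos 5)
Your proof is correct and follows essentially the same approach as the paper's: expand the $2m$-th power into a sum over multi-indices, show that any multi-index in which some value appears exactly once contributes zero by conditioning and using conditional independence plus the conditional mean-zero property, and bound the remaining (admissible) terms via H\"older and a $O(n^m)$ count. The only cosmetic difference is that you condition on $\mathcal{F}$ together with $\{z_i\}_{i\neq j}$ and pull out the measurable factor, while the paper conditions only on $\mathcal{F}$ and factors the conditional expectation directly; both rest on the same conditional-independence fact.
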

\begin{proof}
\begin{align*}
    \E\left\{ \left(\sum_{i=1}^n z_i\right)^{2m} \right\} = &\sum_{i_1,\ldots,i_{2m} \in [n]} \E(z_{i_1} \cdots z_{i_{2m}}) = \sum_{\substack{ i_1,\ldots,i_{2m} \in [n]:\\\text{$\exists j \in [2m]$ such that}\\ i_j \notin \{i_s\}_{s \in [2m]\setminus \{j\}} }} \E(z_{i_1} \cdots z_{i_{2m}}) \\
    &+\sum_{\substack{i_1,\ldots,i_{2m} \in [n]:\\\text{ for all $j\in[2m]$, there exists }\\\text{$j' \neq j$ such that $i_j=i_{j'}$}}} \E(z_{i_1} \cdots z_{i_{2m}}),
\end{align*}
where
\begin{align*}
\sum_{\substack{ i_1,\ldots,i_{2m} \in [n]:\\\text{$\exists j \in [2m]$ such that}\\ i_j \notin \{i_s\}_{s \in [2m]\setminus \{j\}} }} \E(z_{i_1} \cdots z_{i_{2m}} \mid \mathcal{F}) = \sum_{\substack{ i_1,\ldots,i_{2m} \in [n]:\\\text{$\exists j \in [2m]$ such that}\\ i_j \notin \{i_s\}_{s \in [2m]\setminus \{j\}} }} \underbrace{\E(z_{i_j} \mid \mathcal{F})}_{=0} \E\left(\prod_{s\in [2m]\setminus \{j\}} z_{i_s} \mid \mathcal{F}\right) = 0
\end{align*}
and
\begin{align*}
    &\sum_{\substack{i_1,\ldots,i_{2m} \in [n]:\\\text{ for all $j\in[2m]$, there exists }\\\text{$j' \neq j$ such that $i_j=i_{j'}$}}} \E(z_{i_1} \cdots z_{i_{2m}})\\
    \leq & c \abs*{ \{ i_1,\ldots,i_{2m} \in [n]: \text{ for all $j\in[2m]$, there exists $j' \neq j$ such that $i_j=i_{j'}$} \} }\\
    \leq & c c_m n^m
\end{align*}
for some constant $c_m>0$ that only depends on $m$.
\end{proof}

\begin{corollary}
\label{supp:corollary:MaximalIne}
Let $r >0$ be an integer, and define the $r$ possibly dependent sets of random variables $\{z_{j1},\ldots,z_{jn}\}$ to be such that $z_{j1},\ldots,z_{jn}$ satisfy the conditions of Lemma~\ref{supp:lemma:Powera} for each $j \in [r]$. Then for $S_j = n^{-1}\sum_{i=1}^n z_{ji}$, $\tilde{S}_j = n^{-1/2}\sum_{i=1}^n z_{ji}$, and any $t > 0$, $\Prob(\max_{j \in [r]} \abs*{S_j} \geq t) \leq c c_m r/(nt^2)^m$ and $\Prob(\max_{j \in [r]} \abs*{\tilde{S}_j} \geq t) \leq c c_m r/t^{2m}$ for $c, c_m$ defined in the statement of Lemma~\ref{supp:lemma:Powera}.
\end{corollary}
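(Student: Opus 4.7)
The plan is to derive Corollary~\ref{supp:corollary:MaximalIne} as an immediate consequence of Lemma~\ref{supp:lemma:Powera} via the standard Markov-plus-union-bound argument, since Lemma~\ref{supp:lemma:Powera} already provides the essential $2m$-th moment estimate $\E\{(\sum_{i=1}^n z_{ji})^{2m}\} \leq c c_m n^m$ for each fixed $j \in [r]$.

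First I would handle a single index $j$. By Markov's inequality applied to the nonnegative random variable $(\sum_{i=1}^n z_{ji})^{2m}$, for any $t>0$,
\begin{align*}
\Prob(\abs{S_j} \geq t) = \Prob\!\left( \left(\textstyle\sum_{i=1}^n z_{ji}\right)^{2m} \geq (nt)^{2m} \right) \leq \frac{\E\{(\sum_{i=1}^n z_{ji})^{2m}\}}{(nt)^{2m}} \leq \frac{c c_m n^m}{(nt)^{2m}} = \frac{c c_m}{(n t^2)^m},
\end{align*}
where the last inequality uses Lemma~\ref{supp:lemma:Powera} applied to $z_{j1},\ldots,z_{jn}$ (whose hypotheses are satisfied by assumption). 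Exactly the same argument with $n^{1/2}t$ in place of $nt$ gives
\begin{align*}
\Prob(\abs{\tilde S_j} \geq t) \leq \frac{c c_m n^m}{(n^{1/2}t)^{2m}} = \frac{c c_m}{t^{2m}}.
\end{align*}

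Next, I would apply a union bound across $j \in [r]$ to conclude
\begin{align*}
\Prob\!\left( \max_{j \in [r]} \abs{S_j} \geq t \right) \leq \sum_{j=1}^r \Prob(\abs{S_j} \geq t) \leq \frac{c c_m r}{(n t^2)^m}, \quad \Prob\!\left( \max_{j \in [r]} \abs{\tilde S_j} \geq t \right) \leq \frac{c c_m r}{t^{2m}},
\end{align*}
which are exactly the two bounds claimed.

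There is no substantive obstacle here; the proof is essentially a one-line Markov estimate followed by a union bound, and it goes through uniformly in $r$ without requiring any relationship between the $r$ different collections $\{z_{j\cdot}\}$ (they are allowed to be arbitrarily dependent across $j$, which is precisely why the union bound, rather than a sharper Bonferroni-type argument, is the natural tool). The only thing to be careful about is tracking the constants $c$ and $c_m$ from Lemma~\ref{supp:lemma:Powera} unchanged through both inequalities, which is immediate since Lemma~\ref{supp:lemma:Powera} is applied separately to each fixed $j$.
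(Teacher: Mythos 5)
Your proof is correct and, while it looks different at first glance, is essentially the same argument as the paper's: the paper bounds $\max_j \abs*{S_j} \leq (\sum_j S_j^{2m})^{1/(2m)}$ and applies Markov once to the sum, whereas you apply Markov to each term and then union-bound, but both reduce to exactly the same computation $\sum_j \E(S_j^{2m})/t^{2m}$ and give identical constants.
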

\begin{proof}
Since $\max_{j \in [r]} \abs*{S_j} \leq \left( \sum_{j=1}^r S_j^{2m} \right)^{1/(2m)}$ and $\max_{j \in [r]} \abs*{\tilde{S}_j} \leq \left( \sum_{j=1}^r \tilde{S}_j^{2m} \right)^{1/(2m)}$, this follows immediately from Lemma~\ref{supp:lemma:Powera}.
\end{proof}
\begin{remark}
\label{supp:remark:MaximalIne}
If $r = n^a$ for some $a \in (0,m)$, then Corollary~\ref{supp:corollary:MaximalIne} implies $\max_{j \in [r]} \abs*{S_j} = O_P(c n^{-\delta})$ for $\delta = (1-a/m)/2 \in (0,1/2)$.
\end{remark}

\begin{lemma}
\label{supp:lemma:VershExt}
Let $c>1$ be a constant, and assume $\bm{e}_g \sim (\bm{0},\bm{V}_g)$, $g \in [p]$, are independent sub-Gaussian random vectors with sub-Gaussian norm $\norm{\bm{e}_g}_{\Psi_2} \leq c$. Then if $p \geq c^{-1} n$ and for $\bm{E} = (\bm{e}_1 \cdots \bm{e}_p)$, $\norm*{p^{-1/2} \bm{E}}_2 = O_P(1)$ as $n,p \to \infty$.
\end{lemma}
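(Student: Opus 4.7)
The plan is to adapt the standard $\epsilon$-net/concentration argument used to bound the operator norm of matrices with independent sub-Gaussian columns (see, e.g., Vershynin's \emph{High-Dimensional Probability}). Here the columns $\bm{e}_g$ are independent, centered, sub-Gaussian, and uniformly bounded in sub-Gaussian norm, which is all the structure needed.

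First, I would express the spectral norm variationally as $\norm{\bm{E}}_2 = \sup_{\bm{u}\in S^{n-1},\bm{v}\in S^{p-1}} \bm{u}^\top \bm{E} \bm{v}$. For any fixed $\bm{u}\in S^{n-1}$, the scalar random variables $\{\bm{u}^\top \bm{e}_g\}_{g\in[p]}$ are independent, centered, and sub-Gaussian with $\norm{\bm{u}^\top \bm{e}_g}_{\Psi_2} \leq \norm{\bm{e}_g}_{\Psi_2} \leq c$ by definition of the sub-Gaussian norm of a random vector. Hence for any fixed $\bm{v}\in S^{p-1}$, the sum $\bm{u}^\top \bm{E}\bm{v} = \sum_{g=1}^p v_g (\bm{u}^\top \bm{e}_g)$ is a weighted sum of independent centered sub-Gaussian scalars, and a standard Hoeffding-type argument gives $\norm{\bm{u}^\top \bm{E}\bm{v}}_{\Psi_2}^2 \leq C_0 c^2 \sum_{g=1}^p v_g^2 = C_0 c^2$ for an absolute constant $C_0$. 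This yields the one-point tail bound $\Prob(\bm{u}^\top \bm{E}\bm{v} > t) \leq 2\exp\{-t^2/(C_1 c^2)\}$ uniformly in $(\bm{u},\bm{v}) \in S^{n-1}\times S^{p-1}$.

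Next, I would reduce the supremum over $S^{n-1}\times S^{p-1}$ to a supremum over finite $\tfrac14$-nets $\mathcal{N}_n \subset S^{n-1}$ and $\mathcal{N}_p \subset S^{p-1}$. Standard covering-number bounds give $\abs{\mathcal{N}_n} \leq 9^n$ and $\abs{\mathcal{N}_p} \leq 9^p$, and a routine approximation argument (splitting $\bm{u}=\bm{u}_0 + \bm{u}_1$ with $\bm{u}_0\in\mathcal{N}_n$, $\norm{\bm{u}_1}_2 \leq 1/4$, and likewise for $\bm{v}$, then iterating) yields $\norm{\bm{E}}_2 \leq 2\sup_{\bm{u}\in\mathcal{N}_n,\,\bm{v}\in\mathcal{N}_p} \bm{u}^\top \bm{E}\bm{v}$. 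A union bound combined with the sub-Gaussian tail bound above then gives, for any $t>0$,
\begin{align*}
\Prob\bigl(\norm{\bm{E}}_2 > 2t\bigr) \leq 2\cdot 9^{n+p}\exp\{-t^2/(C_1 c^2)\}.
\end{align*}
Choosing $t = C_2 c\sqrt{n+p}$ for $C_2$ sufficiently large makes the right-hand side decay like $\exp\{-(n+p)\}$, so $\norm{\bm{E}}_2 = O_P(\sqrt{n+p})$.

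Finally, the assumption $p \geq c^{-1} n$ gives $\sqrt{n+p} \lesssim \sqrt{p}$, and dividing by $\sqrt{p}$ yields $\norm{p^{-1/2}\bm{E}}_2 = O_P(1)$ as desired. The argument is essentially textbook; the only point that requires care is verifying that the scalar $\bm{u}^\top \bm{E}\bm{v}$ inherits a uniform sub-Gaussian bound from the columnwise hypothesis $\norm{\bm{e}_g}_{\Psi_2}\leq c$, which is immediate from the definition of the vector sub-Gaussian norm and the Hoeffding bound for weighted sums. No assumption on the covariances $\bm{V}_g$ beyond what is implied by the sub-Gaussian norm bound is needed, since the sub-Gaussian tail already controls the second moment.
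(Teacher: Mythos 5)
Your proof is correct and is exactly the kind of argument the paper has in mind: the authors explicitly omit the proof of Lemma~\ref{supp:lemma:VershExt}, saying only that it is a simple extension of Theorem~5.39 in \citet{Vershynin}, and your write-up supplies the details.

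One small comparative note. Vershynin's Theorem~5.39 proper handles independent sub-Gaussian \emph{rows} by netting only over $S^{n-1}$ and applying Bernstein's inequality to the sub-exponential sum $\norm{A\bm{x}}_2^2 - \E\norm{A\bm{x}}_2^2$, which gives the two-sided singular-value bound but relies on isotropy of the rows (or a variance normalization). You instead use the bilinear $\epsilon$-net argument (the Theorem~4.4.5 route in Vershynin's \emph{High-Dimensional Probability}), netting over both $S^{n-1}$ and $S^{p-1}$ and applying Hoeffding to $\sum_g v_g (\bm{u}^{\top}\bm{e}_g)$, a weighted sum of independent sub-Gaussian scalars. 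This variant is actually the cleaner choice here: it gives the one-sided bound $\norm{\bm{E}}_2 = O_P(\sqrt{n+p})$ directly, requires no isotropy assumption on $\bm{V}_g$, and only uses that $\norm{\bm{u}^{\top}\bm{e}_g}_{\Psi_2} \leq \norm{\bm{e}_g}_{\Psi_2}$ together with column-wise independence---exactly the ``simple extension'' the paper's condition $\bm{e}_g \sim (\bm{0},\bm{V}_g)$ with $\norm{\bm{e}_g}_{\Psi_2}\leq c$ calls for. The final step using $p \geq c^{-1}n$ to absorb $\sqrt{n+p}$ into $\sqrt{p}$ is correct. No gaps.
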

\begin{proof}
The proof is a simple extension of the proof of Theorem~5.39 in \citet{Vershynin}, and has been omitted. 
\end{proof}

\begin{lemma}
\label{supp:lemma:LatalaW}
Let $\bm{M} \in \mathbb{R}^{p \times n}$ such that $\bm{M}_{gi} = w_{gi}-1$, and suppose Assumption~\ref{supp:assumptions:FA} hold. Then for any constant $\epsilon > 0$, $\norm*{p^{-1/2}\bm{M}}_2 = O_P(n^{\epsilon})$ as $n,p \to \infty$.
\end{lemma}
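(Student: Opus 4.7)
The plan is to apply Latala's theorem on the operator norm of random matrices with independent mean-zero entries, but applied conditionally on $\mathcal{Y} = \sigma(\{y_{gi}\}_{g\in[p],i\in[n]})$. By Assumption~\ref{assump:MissData:Indep}, the indicators $\{r_{gi}\}$ are independent given $\mathcal{Y}$, so the entries $\bm{M}_{gi} = r_{gi}/\pi_g(y_{gi}) - 1$ are conditionally independent and conditionally mean zero, with $v_{gi} := \E(\bm{M}_{gi}^2\mid y_{gi}) = \{1-\pi_g(y_{gi})\}/\pi_g(y_{gi})$. Latala's theorem then yields
\begin{align*}
\E(\|\bm{M}\|_2 \mid \mathcal{Y}) \;\le\; C\Bigl(\max_{g\in[p]}\sqrt{\textstyle\sum_{i} v_{gi}} \;+\; \max_{i\in[n]}\sqrt{\textstyle\sum_{g} v_{gi}} \;+\; \bigl(\textstyle\sum_{g,i} \E(\bm{M}_{gi}^4 \mid \mathcal{Y})\bigr)^{1/4}\Bigr).
\end{align*}

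Next I would control the conditional moments using Assumption~\ref{assump:MissData:Psi}(ii), which together with $\Psi$ being a CDF gives $\Psi(x) \ge c\min(1,|x|^{-m})$ for some constant $c>0$. Since $\alpha_g \in (0,a_2)$ and $|\delta_g| \le a_2$ are uniformly bounded, $1/\pi_g(y_{gi}) \lesssim 1 + |y_{gi}|^m$, and therefore $v_{gi} \lesssim 1 + |y_{gi}|^m$ and $\E(\bm{M}_{gi}^4\mid y_{gi}) \lesssim 1 + |y_{gi}|^{3m}$. Under Assumption~\ref{supp:assumptions:FA}, $y_{gi} = \bm{\beta}_g^\top\bm{x}_i + \bm{\ell}_g^\top\bm{c}_i + e_{gi}$ has uniformly bounded moments of every order, using Lemma~\ref{supp:lemma:Cassumption} for $\bm{c}_i$, the boundedness of $\bm{X}$, $\bm{\beta}_g$, $\bm{\ell}_g$, and the Gaussian tails of $e_{gi}$.

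Fix any $\epsilon>0$ and choose a large integer $k$. Jensen's inequality gives $\E(\sum_i v_{gi})^k \le n^{k-1}\sum_i \E v_{gi}^k \lesssim n^k$, so Markov plus a union bound over $g \in [p]$ yields $\max_g \sum_i v_{gi} = O_P(n^{1+\epsilon})$, i.e.\ the first Latala term is $O_P(n^{1/2+\epsilon})$. An identical argument, union-bounding over $i$ and using $p\asymp n$, handles the second term. Applying Markov directly to $\sum_{g,i}\E(\bm{M}_{gi}^4\mid\mathcal{Y})$, whose expectation is $O(np)$, gives the third term is $O_P((np)^{1/4+\epsilon}) = O_P(n^{1/2+\epsilon})$. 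Combining the three bounds and removing the conditioning on $\mathcal{Y}$ by one more application of Markov gives $\|\bm{M}\|_2 = O_P(n^{1/2+\epsilon})$, and dividing by $\sqrt{p} \asymp \sqrt{n}$ finishes the proof.

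The main technical obstacle is that $\bm{M}_{gi}$ is heavy-tailed: if $y_{gi}$ is very negative, $\pi_g(y_{gi})$ can shrink and $1/\pi_g(y_{gi})$ can blow up. The polynomial-tail condition in Assumption~\ref{assump:MissData:Psi}(ii) is precisely what is needed here---it forces the inverse weights to have only polynomial (rather than exponential or worse) tails, which the uniform polynomial moment bounds on $y_{gi}$ are strong enough to absorb in each of the three Latala ingredients. A logistic or Gaussian $\Psi$ would not suffice, consistent with the discussion after Assumption~\ref{assumption:Miss}.
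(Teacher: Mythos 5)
Your proof is correct and takes essentially the same route as the paper's. Both arguments condition on the complete data (you on $\sigma(\{y_{gi}\})$, the paper on $\{\bm{C},\bm{E}\}$, which is the same information in this setting), invoke Latala's theorem on the now conditionally independent, mean-zero array, and control the resulting variance and fourth-moment terms via the polynomial-tail bound $1/\Psi(x)\lesssim 1+|x|^m$ from Assumption~\ref{assump:MissData:Psi}(ii) together with the all-order moment bounds on $y_{gi}$. The only cosmetic difference is that you bound the Latala row/column sums via Jensen, Markov, and a union bound, whereas the paper bounds them by the simpler observation that $\max_{g,i}|\bm{e}_{gi}|$ and $\max_{i,k}|\bm{C}_{ik}|$ are $O_P(n^{\epsilon})$; both deliver the same $O_P(n^{1/2+\epsilon})$ rate before normalization by $p^{1/2}\asymp n^{1/2}$.
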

\begin{proof}
Conditional of $\bm{e}_1,\ldots,\bm{e}_p$ and $\bm{C}$, the entries of $\bm{M}$ are mean 0, independent, and have finite fourth moments, where for any integer $m>0$ and some constant $c_m>0$ that only depends on $m$, 
\begin{align*}
\E(w_{gi}^{m} \mid \bm{C}_{i \bigcdot},\bm{e}_{gi}) =& \E( [ 1/\Psi\{\alpha_g(y_{gi} - \delta_{g})\} ]^{(m-1)} \mid \bm{C}_{i \bigcdot},\bm{e}_{gi}) \leq c_m\\
&+ c_m\{ \abs*{ \bm{e}_{gi} }^{a(m-1)} + \sum_{k=1}^K \abs*{ \bm{C}_{ik} }^{a(m-1)} \}    
\end{align*}
for some constant $a>0$ by \ref{supp:assumption:Psi} in Assumption~\ref{supp:assumptions:FA}. The result then follows by \citet{Latala} and the fact that, for any $\epsilon>0$, $\max_{g \in [p],i \in [n]}\abs*{\bm{e}_{gi}}, \max_{i \in [n],k \in [K]}\abs*{\bm{C}_{ik}} = O_P(n^{\epsilon})$.
\end{proof}

\begin{lemma}
\label{supp:lemma:LatalaExt}
Let $\bm{M} \in \mathbb{R}^{p \times n}$ such that $\bm{M}_{gi} = w_{gi}\bm{e}_{g_i}$, and suppose Assumption~\ref{supp:assumptions:FA} holds. Then for any fixed constant $\epsilon \in (0,1/2)$, $\norm{ p^{-1/2}\bm{M} }_2 = O_P(n^{\epsilon})$ as $n,p \to \infty$.
\end{lemma}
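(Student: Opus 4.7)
The plan is to split $\bm{M}$ into a sub-Gaussian piece and a heavy-tailed piece via the decomposition $\bm{M} = \bm{E} + \tilde{\bm{M}}$, where $\tilde{\bm{M}}_{gi} = (w_{gi}-1)e_{gi}$. Since $\E(w_{gi}\mid y_{gi}) = 1$, conditional on $(\bm{C},\bm{E})$ the entries of $\tilde{\bm{M}}$ are independent mean zero. Thus we can control $\|p^{-1/2}\bm{E}\|_2$ by an operator-norm bound for matrices with iid sub-Gaussian columns, and control $\|p^{-1/2}\tilde{\bm{M}}\|_2$ by Latała's inequality conditional on $(\bm{C},\bm{E})$, exactly as in the proof of Lemma~\ref{supp:lemma:LatalaW}.

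\textbf{Step 1 (controlling $\bm{E}$).}
Using the decomposition $\bm{E} = \{\bm{\gamma}^{(e)}\}^{\top}\bm{G} + \bm{\Delta}^{(e)}$ from Assumption~\ref{supp:assumptions:FA}(e), the columns $\bm{E}_{*i}$ are iid mean zero. I plan to apply Lemma~\ref{supp:lemma:VershExt}, which requires a uniform bound on the sub-Gaussian norm of $\bm{E}_{*i}$. The Gaussian part has $\|\bm{u}^{\top}\bm{\Delta}_{*i}^{(e)}\|_{\psi_2} \lesssim \max_g \sigma_g \leq a$ for every unit vector $\bm{u}$. For the genetic part, $\bm{u}^{\top}\{\bm{\gamma}^{(e)}\}^{\top}\bm{G}_{*i} = (\bm{\gamma}^{(e)}\bm{u})^{\top}\bm{G}_{*i}$ has sub-Gaussian norm $\lesssim \|\bm{\gamma}^{(e)}\|_{\mathrm{op}}$ because the entries of $\bm{G}_{*i}$ are bounded and independent. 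To bound $\|\bm{\gamma}^{(e)}\|_{\mathrm{op}}$, I will use $\|\bm{\gamma}^{(e)}\|_{\mathrm{op}}^{2}\leq \|\bm{\gamma}^{(e)}\|_{1}\|\bm{\gamma}^{(e)}\|_{\infty}$. The column-sparsity assumption gives $\max_{g}\sum_{s}|\bm{\gamma}^{(e)}_{sg}|\leq a\cdot a n^{-1/4}$, and the partition condition in Assumption~\ref{supp:assumptions:FA}(e)(ii) implies that each row of $\bm{\gamma}^{(e)}$ has non-zero entries in at most one partition set, hence $\max_{s}\sum_{g}|\bm{\gamma}^{(e)}_{sg}|\leq a\cdot a n^{-1/4}$ as well. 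Therefore $\|\bm{\gamma}^{(e)}\|_{\mathrm{op}} = O(n^{-1/4})$, the sub-Gaussian norm of $\bm{E}_{*i}$ is bounded, and Lemma~\ref{supp:lemma:VershExt} yields $\|p^{-1/2}\bm{E}\|_{2} = O_P(1)$.

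\textbf{Step 2 (controlling $\tilde{\bm{M}}$).}
Conditional on $(\bm{C},\bm{E})$, the values $\{y_{gi}\}$ are determined, the $r_{gi}$ are independent Bernoulli$(\Psi\{\alpha_g(y_{gi}-\delta_g)\})$ by Assumption~\ref{supp:assumption:Psi}, and hence the $\tilde{\bm{M}}_{gi}$ are independent mean zero with conditional second moment $e_{gi}^{2}\{\pi_g(y_{gi})^{-1}-1\}$ and conditional fourth moment at most $e_{gi}^{4}\pi_g(y_{gi})^{-3}$. Assumption~\ref{supp:assumption:Psi}(ii) gives the polynomial bound $\pi_g(y_{gi})^{-k}\leq c_k(1+|y_{gi}|^{m_k})$, and the Gaussian/boundedness assumptions in Assumption~\ref{supp:assumptions:FA} together with Lemma~\ref{supp:lemma:Cassumption} give $\max_{g,i}|e_{gi}|, \max_{g,i}|y_{gi}|, \max_{i,k}|\bm{C}_{ik}| = O_P(n^{\epsilon/8})$. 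Applying Latała's inequality conditionally to $\tilde{\bm{M}}$,
\begin{align*}
\E[\|\tilde{\bm{M}}\|_{2}\mid \bm{C},\bm{E}] \lesssim \max_{i}\sqrt{\textstyle\sum_{g}\E[\tilde{\bm{M}}_{gi}^{2}\mid \bm{C},\bm{E}]} + \max_{g}\sqrt{\textstyle\sum_{i}\E[\tilde{\bm{M}}_{gi}^{2}\mid \bm{C},\bm{E}]} + \Bigl(\textstyle\sum_{g,i}\E[\tilde{\bm{M}}_{gi}^{4}\mid \bm{C},\bm{E}]\Bigr)^{1/4}.
\end{align*}
Pulling $\max_{g,i}(1+|y_{gi}|^{m})$ out of each sum and using $p\asymp n$, each of the three terms is $O_P(\sqrt{p}\,n^{\epsilon/2})$, so Markov's inequality on $\E[\|p^{-1/2}\tilde{\bm{M}}\|_2\mid \bm{C},\bm{E}]$ yields $\|p^{-1/2}\tilde{\bm{M}}\|_{2} = O_P(n^{\epsilon})$.

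Combining the two steps by the triangle inequality gives $\|p^{-1/2}\bm{M}\|_{2} = O_P(n^{\epsilon})$, as required. The main technical obstacle is Step~1: without invoking \emph{both} the column sparsity and the disjoint-partition assumption to bound row sums of $|\bm{\gamma}^{(e)}|$, the operator norm $\|\bm{\gamma}^{(e)}\|_{\mathrm{op}}$ would not be controllable, and $\bm{E}$'s columns would not be uniformly sub-Gaussian. Step~2 is conceptually a straightforward extension of Lemma~\ref{supp:lemma:LatalaW}, replacing $(w_{gi}-1)$ by the conditionally mean-zero variable $(w_{gi}-1)e_{gi}$, at the cost of only polynomial factors in the moments.
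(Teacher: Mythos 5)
Your proof is correct and follows essentially the same two-step decomposition as the paper: write $\bm{M} = \bm{E} + \tilde{\bm{M}}$ with $\tilde{\bm{M}}_{gi} = (w_{gi}-1)e_{gi}$, bound $\|p^{-1/2}\bm{E}\|_2 = O_P(1)$ via Lemma~\ref{supp:lemma:VershExt}, and bound $\|p^{-1/2}\tilde{\bm{M}}\|_2 = O_P(n^{\epsilon})$ by the same conditional Latała argument as Lemma~\ref{supp:lemma:LatalaW}. The paper's own proof is a two-sentence sketch, so the main value you add is making Step~1 rigorous: you correctly note that the rows of $\bm{E}$ are not mutually independent (they share $\bm{G}$), switch to the iid columns $\bm{E}_{*i}\in\mathbb{R}^p$, and show their sub-Gaussian norm is uniformly bounded by combining the column-sparsity and disjoint-partition conditions to get $\|\bm{\gamma}^{(e)}\|_{\mathrm{op}}^2 \leq \|\bm{\gamma}^{(e)}\|_1\|\bm{\gamma}^{(e)}\|_\infty = O(n^{-1/2})$; the factor $p/n\asymp 1$ then makes the roles of $n,p$ in Lemma~\ref{supp:lemma:VershExt} interchangeable. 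This is a clean way to circumvent the row dependence (the alternative, splitting $\bm{E}$ into $\leq a$ blocks of independent rows by the partition structure, would also work). Step~2's moment computations and the use of $\max_{g,i}|e_{gi}|,\max_{i,k}|\bm{C}_{ik}| = O_P(n^{\epsilon'})$ match the paper exactly.
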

\begin{proof}
We can express $\bm{M}$ as $\bm{M} = \bm{M}^{(1)} + \bm{M}^{(2)}$ for $\bm{M}^{(1)}_{gi} = \bm{e}_{gi}$ and $\bm{M}^{(2)}_{gi} = (w_{gi}-1)\bm{e}_{g_i}$. By Lemma~\ref{supp:lemma:VershExt}, $\norm*{ \bm{M}^{(1)} }_2 = O_P(1)$, and a simple extension of the proof of Lemma~\ref{supp:lemma:LatalaW} can be used to show $\norm*{p^{-1/2}\bm{M}^{(2)}}_2 = O_P(n^{\epsilon})$.
\end{proof}

\begin{lemma}
\label{supp:lemma:f1}
Let $\bm{U} \in \mathbb{R}^{n \times K}$ be a matrix with orthonormal columns and define
\begin{align*}
    f_1(\bm{U}) = (\lambda p)^{-1} \sum_{g=1}^p \Tr\{ (\bm{U}^{\top}\bm{P}_g^{\perp} \bm{U} )^{-1} \bm{U}^{\top} \bm{P}_g^{\perp} \tilde{\bm{C}}\tilde{\bm{\ell}}_g \tilde{\bm{\ell}}_g^{\top} \tilde{\bm{C}}^{\top}  \bm{P}_g^{\perp}\bm{U} \}.
\end{align*}
Let $\delta_U = \norm*{ \tilde{\bm{C}}\tilde{\bm{C}}^{\top} - \bm{U}\bm{U}^{\top} }_2$, $\eta \in (0,1/2)$ be an arbitrarily small constant, and suppose Assumption~\ref{supp:assumptions:FA} holds. Then there exists a constant $c>1$ that does not depend on $n$ or $p$ such that for all $\bm{U}$ with $\delta_U \in (0,c^{-1})$ and any $\epsilon_1,\epsilon_2>0$, $f_1(\tilde{\bm{C}}) - f_1(\bm{U}) \geq c^{-1}\delta_U^2\{ 1 - c\delta_U(1+\epsilon_2 n^{-1/2+\eta}) \}$ with probability at least $1-\epsilon_1$ for all $n,p$ sufficiently large.
\end{lemma}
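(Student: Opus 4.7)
The plan is to decompose $f_1 = f_1^{(0)} + R$, where $f_1^{(0)}$ replaces every $\bm{P}_g^\perp$ in~\eqref{supp:equation:Cobj} by $P_{\bm{X}}^\perp$, then bound $f_1^{(0)}(\tilde{\bm{C}}) - f_1^{(0)}(\bm{U})$ deterministically from below by a constant multiple of $\delta_U^2$ and $|R(\tilde{\bm{C}}) - R(\bm{U})|$ stochastically from above by a term of order $n^{-1/2+\eta}\delta_U^2 + \delta_U^3$. Parameterize $\bm{U} = \tilde{\bm{C}}\bm{v}_u + \bm{Q}\bm{z}_u$ via Lemma~\ref{supp:lemma:Vz}; since the objective in~\eqref{supp:equation:fMax} is invariant under $\bm{U}\mapsto\bm{U}\bm{O}$ for orthogonal $\bm{O}\in\mathbb{R}^{K\times K}$, fix the symmetric gauge $\bm{v}_u = (I_K - \bm{z}_u^\top\bm{z}_u)^{1/2} = I_K + O(\norm{\bm{z}_u}^2)$. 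Using $\bm{X}^\top\bm{U} = \bm{X}^\top\tilde{\bm{C}} = 0$ and $\bm{Q}^\top\tilde{\bm{C}} = 0$, cyclic trace gives $f_1^{(0)}(\bm{U}) = \Tr(\bm{\Lambda}_p\bm{v}_u\bm{v}_u^\top)$ with $\bm{\Lambda}_p = (\lambda p)^{-1}\tilde{\bm{L}}^\top\tilde{\bm{L}}$, so
\begin{align*}
f_1^{(0)}(\tilde{\bm{C}}) - f_1^{(0)}(\bm{U}) = \Tr(\bm{\Lambda}_p\bm{z}_u^\top\bm{z}_u) \geq \lambda_{\min}(\bm{\Lambda}_p)\norm{\bm{z}_u}_F^2 \geq \tfrac{1}{2}\lambda_{\min}(\bm{\Lambda}_p)\delta_U^2.
\end{align*}
Assumption~\ref{supp:assumptions:FA}\ref{supp:assumption:lambda} (in particular $\lambda_1/\lambda_K\leq a$) combined with Lemma~\ref{supp:lemma:Cassumption} gives $\lambda_{\min}(\bm{\Lambda}_p) \geq 4/c$ on an event of probability at least $1 - \epsilon_1/2$ for some constant $c>1$ and $n$ sufficiently large.

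For the remainder, write $\bm{T}_g = \tilde{\bm{C}}^\top\bm{P}_g^\perp\tilde{\bm{C}}$, $\bm{S}_g = \bm{Q}^\top\bm{P}_g^\perp\tilde{\bm{C}}$, $\bm{H}_g = \bm{Q}^\top\bm{P}_g^\perp\bm{Q}$, and study the path $t\mapsto t\bm{z}_u$. Then $\bm{A}_g(t) = \bm{U}^\top\bm{P}_g^\perp\bm{U}$ and $\bm{B}_g(t) = \bm{U}^\top\bm{P}_g^\perp\tilde{\bm{C}}$ satisfy $\bm{A}_g(0) = \bm{B}_g(0) = \bm{T}_g$, with $\bm{A}_g'(0) = \bm{S}_g^\top\bm{z}_u + \bm{z}_u^\top\bm{S}_g$ and $\bm{B}_g'(0) = \bm{z}_u^\top\bm{S}_g$ (the symmetric gauge kills the $\bm{v}_u$ contribution to both derivatives). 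Substituting into $d(\bm{B}_g^\top\bm{A}_g^{-1}\bm{B}_g)/dt = (\bm{B}_g')^\top\bm{A}_g^{-1}\bm{B}_g - \bm{B}_g^\top\bm{A}_g^{-1}\bm{A}_g'\bm{A}_g^{-1}\bm{B}_g + \bm{B}_g^\top\bm{A}_g^{-1}\bm{B}_g'$ at $t=0$, and using $\bm{B}_g^\top\bm{A}_g^{-1}|_{t=0} = \bm{A}_g^{-1}\bm{B}_g|_{t=0} = I_K$ (valid because $\bm{T}_g$ is symmetric), yields the sample-wise identity $d\bm{M}_g/dt|_{t=0} = \bm{S}_g^\top\bm{z}_u - (\bm{S}_g^\top\bm{z}_u + \bm{z}_u^\top\bm{S}_g) + \bm{z}_u^\top\bm{S}_g = 0$. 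Combined with $d(\bm{v}_u\bm{v}_u^\top)/dt|_{t=0}=0$ this gives $dR/dt|_{t=0}=0$, so Taylor's theorem produces
\begin{align*}
R(\bm{U}) - R(\tilde{\bm{C}}) = \tfrac{1}{2}\partial_t^2 R|_{t=0} + O\bigl(\norm{\bm{z}_u}^3\sup_{s\in[0,1]}|\partial_t^3 R|_{t=s}|\bigr).
\end{align*}
The second derivative is a $(\lambda p)^{-1}$-weighted average over $g$ of quadratic forms in $\bm{z}_u$ whose matrix coefficients involve $\tilde{\bm{\ell}}_g\tilde{\bm{\ell}}_g^\top$ paired with products of $\bm{S}_g$, $\bm{T}_g - I_K$, and $\bm{H}_g - I_{n-K-d}$; each such coefficient is mean-zero up to the $O(n^{-1})$ correction coming from $\bm{X}(\bm{X}^\top\bm{W}_g\bm{X})^{-1}\bm{X}^\top$ in $\bm{P}_g^\perp$, and concentrates to zero at rate $n^{-1/2+\eta}$ via Corollary~\ref{supp:corollary:MaximalIne}, Lemma~\ref{supp:lemma:Cte} (with Remark~\ref{supp:remark:CtE}), Lemmas~\ref{supp:lemma:LatalaW}--\ref{supp:lemma:LatalaExt}, and the moment bounds in Assumption~\ref{supp:assumptions:FA}\ref{supp:assumption:Psi}. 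The third derivative is uniformly $O(1)$ on the high-probability event where $\norm{\bm{S}_g}_2 + \norm{\bm{H}_g}_2$ is bounded and $\bm{A}_g(t)$ is well-conditioned for $t\in[0,1]$, which again holds by Lemmas~\ref{supp:lemma:LatalaW}--\ref{supp:lemma:LatalaExt}. Hence on an event of probability $\geq 1 - \epsilon_1/2$, $|R(\bm{U}) - R(\tilde{\bm{C}})| \leq c_1 n^{-1/2+\eta}\norm{\bm{z}_u}_F^2 + c_2\norm{\bm{z}_u}^3$, and combining with the signal bound yields the claim after absorbing constants into $c$.

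The main obstacle is identifying and proving the sample-wise (not merely in-expectation) vanishing of $d\bm{M}_g/dt|_{t=0}$: without this algebraic cancellation the noise would enter $R(\bm{U}) - R(\tilde{\bm{C}})$ linearly in $\norm{\bm{z}_u}$ and could overwhelm the $\delta_U^2$ signal for small $\delta_U$. Once the cancellation is in hand, the residual task is controlling the quadratic-in-$\bm{z}_u$ term of $R$, which reduces to verifying that the relevant $g$-averaged matrix coefficients are mean-zero modulo the $O(n^{-1})$ projection correction and then invoking the concentration lemmas already proved in this supplement. Uniformity in $\bm{U}$ is automatic because the final estimate is expressed entirely in terms of the scalar $\delta_U$, so no covering argument is required.
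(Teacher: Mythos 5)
Your decomposition $f_1 = f_1^{(0)} + R$ with the symmetric gauge $\bm{v}_u = (I_K - \bm{z}_u^{\top}\bm{z}_u)^{1/2}$ is a genuinely different organizing device. The paper instead rewrites the integrand exactly as $\Tr\{(I_K + \bm{D}_g)^{-1}(\cdots)\}$ in \eqref{supp:equation:Dg}, where $\bm{D}_g$ is already quadratic in $\bm{z}_u$; your Taylor computation showing $d\bm{M}_g/dt|_{t=0}=0$ (correct, using the symmetry of $\bm{T}_g$ and $\dot{\bm{v}}_u(0)=0$) makes that cancellation transparent, and your signal bound $\Tr(\bm{\Lambda}_p\bm{z}_u^{\top}\bm{z}_u)\geq\tfrac12\lambda_{\min}(\bm{\Lambda}_p)\delta_U^2$ matches the paper's leading term. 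But the remainder analysis is where the gap lies. A minor one: your explanation that the second-derivative coefficients are ``mean-zero up to the $O(n^{-1})$ $\bm{X}$-projection correction'' is inaccurate. Computing $\tfrac12\ddot{\bm{M}}_g(0)=\bm{z}_u^{\top}\bm{S}_g\bm{T}_g^{-1}\bm{S}_g^{\top}\bm{z}_u - \bm{z}_u^{\top}\bm{H}_g\bm{z}_u$, the squared piece $\bm{z}_u^{\top}\bm{S}_g\bm{T}_g^{-1}\bm{S}_g^{\top}\bm{z}_u$ is non-negative and not mean-zero; its smallness after $g$-averaging, $O_P(n^{-1+\epsilon}\delta_U^2)$, comes from $\max_i\norm{\tilde{\bm{C}}_{i\bigcdot}}_2^2 = O_P(n^{-1+\epsilon})$, which is exactly why the paper handles its $\bm{M}_g^{(1)}$ term through Lemma~\ref{supp:lemma:LatalaW} rather than a mean-zero concentration argument.

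The serious gap is the third-derivative bound. You assert $\partial_t^3 R$ is ``uniformly $O(1)$ on the high-probability event where $\norm{\bm{S}_g}_2 + \norm{\bm{H}_g}_2$ is bounded,'' but under Assumption~\ref{assumption:Miss}\ref{assump:MissData:Psi} $\max_{g\in[p],i\in[n]}w_{gi}$ is $O_P(n^{\eta})$ for every $\eta>0$ and never $O_P(1)$, and $\norm{\bm{H}_g}_2\leq\norm{\bm{W}_g}_2$ inherits this growth; no such high-probability event exists, and Lemmas~\ref{supp:lemma:LatalaW}--\ref{supp:lemma:LatalaExt} give only $O_P(n^{\epsilon})$ spectral bounds, not $O_P(1)$. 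Because $\ddot{\bm{A}}_g(s)\supset 2\bm{z}_u^{\top}\bm{H}_g\bm{z}_u$ appears in $\partial_t^3\bm{M}_g$, the per-$g$ Taylor remainder scales like $\delta_U^3 n^{\eta}$, which overwhelms the $\delta_U^2$ signal once $\delta_U$ is of constant order, and the lemma is stated precisely for $\delta_U$ up to a constant $c^{-1}$. The paper's Lemma~\ref{supp:lemma:IplusDg} handles exactly this: it truncates $\bm{W}_g$ at a level $a$, exploits that $(I_K + \bm{B}_g^{-\top}\tilde{\bm{z}}_u^{\top}\bar{\bm{W}}_{g,a}\tilde{\bm{z}}_u\bm{B}_g^{-1})^{-1}\preceq I_K - \eta_a\bm{B}_g^{-\top}\tilde{\bm{z}}_u^{\top}\bar{\bm{W}}_{g,a}\tilde{\bm{z}}_u\bm{B}_g^{-1}$ with a constant $\eta_a$ available only after truncation, and controls the heavy-tail remainder $\tilde{\bm{W}}_{g,a}$ through $\E[w_{gi}\mathbf{1}\{w_{gi}>a\}]\to 0$ as $a\to\infty$ together with a maximal inequality over $i\in[n]$. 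Your proposal never introduces this truncation, so the pointwise bound on $\sup_{s\in[0,1]}|\partial_t^3 R(s)|$ does not hold as written. Fixing it requires either importing the truncation into the Taylor remainder or replacing the pointwise bound with a concentration statement for the $g$-averaged third derivative that is uniform in $s$; either of these constitutes most of the hard work that the paper's Lemma~\ref{supp:lemma:IplusDg} and the second-term analysis in the proof are doing.
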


\begin{proof}
For notational simplicity, we set $\delta=\delta_U$. Let $\bm{U} = \bm{C}\bm{v}_u + \bm{Q}\bm{z}_u$ for $\bm{Q}$ as defined in Lemma~\ref{supp:lemma:Vz}, where $\norm{ \bm{z}_u }_2 \in [c^{-1}\delta,c\delta]$ and $\norm{\bm{v}_u - \bm{v}}_2 \leq c\delta^2$ for some constant $c>1$ and $K\times K$ unitary matrix $\bm{v}$. We let $\tilde{\bm{z}}_u = \bm{Q}\bm{z}_u$ for the remainder of the proof, and without loos of generality, assume $n^{-1}\bm{X}^{\top}\bm{X}=I_d$. Provided $\bm{U}^{\top}\bm{P}_g^{\perp} \bm{U}$ is invertible, define
\begin{align}
    f_{1g}(\bm{U}) =& \Tr\{ (\bm{U}^{\top}\bm{P}_g^{\perp} \bm{U} )^{-1} \bm{U}^{\top} \bm{P}_g^{\perp} \tilde{\bm{C}}\tilde{\bm{\ell}}_g \tilde{\bm{\ell}}_g^{\top} \tilde{\bm{C}}^{\top}  \bm{P}_g^{\perp}\bm{U} \}\nonumber\\
    \label{supp:equation:Tr}
    =& \Tr\{(\tilde{\bm{C}}^{\top}\bm{P}_g^{\perp} \tilde{\bm{C}})^{-1/2}\tilde{\bm{C}}^{\top}  \bm{P}_g^{\perp}\bm{U} (\bm{U}^{\top}\bm{P}_g^{\perp} \bm{U} )^{-1} \bm{U}^{\top} \bm{P}_g^{\perp} \tilde{\bm{C}}(\tilde{\bm{C}}^{\top}\bm{P}_g^{\perp} \tilde{\bm{C}})^{-1/2} \times\\
    &\times (\tilde{\bm{C}}^{\top}\bm{P}_g^{\perp} \tilde{\bm{C}})^{1/2} \tilde{\bm{\ell}}_g \tilde{\bm{\ell}}_g^{\top} (\tilde{\bm{C}}^{\top}\bm{P}_g^{\perp} \tilde{\bm{C}})^{1/2} \} \leq \Tr\{ (\tilde{\bm{C}}^{\top}\bm{P}_g^{\perp} \tilde{\bm{C}})^{1/2}\tilde{\bm{\ell}}_g \tilde{\bm{\ell}}_g^{\top}(\tilde{\bm{C}}^{\top}\bm{P}_g^{\perp} \tilde{\bm{C}})^{1/2} \} = f_{1g}(\bm{C}),\nonumber
\end{align}
where the inequality follows because the symmetric and positive semi-definite matrix in the second line has eigenvalues $\leq 1$. We first see that
\begin{align*}
    \tilde{\bm{C}}^{\top}  \bm{P}_g^{\perp}\bm{U} = \tilde{\bm{C}}^{\top} \bm{P}_g^{\perp} \tilde{\bm{C}} \bm{v}_u + \tilde{\bm{C}}^{\top} \bm{P}_g^{\perp} \tilde{\bm{z}}_u.
\end{align*}
Define the $K \times K$ matrix $\bm{A}_g = \tilde{\bm{C}}^{\top} \bm{P}_g^{\perp} \tilde{\bm{C}}$. Then the expression inside the $\Tr$ operator in \eqref{supp:equation:Tr} can be written as
\begin{align}
    &(\bm{A}_g^{1/2}\bm{v}_u + \bm{A}_g^{-1/2}\tilde{\bm{C}}^{\top} \bm{P}_g^{\perp} \tilde{\bm{z}}_u)( \bm{v}_u^{\top} \bm{A}_g \bm{v}_u + \bm{z}_u^{\top} \bm{Q}^{\top}\bm{P}_g^{\perp} \tilde{\bm{z}}_u + \bm{z}_u^{\top} \bm{Q}^{\top}\bm{P}_g^{\perp} \tilde{\bm{C}}\bm{v}_u + \bm{v}_u^{\top} \tilde{\bm{C}}^{\top} \bm{P}_g^{\perp} \tilde{\bm{z}}_u )^{-1}\nonumber\\
    &\times (\bm{A}_g^{1/2}\bm{v}_u + \bm{A}_g^{-1/2}\tilde{\bm{C}}^{\top} \bm{P}_g^{\perp} \tilde{\bm{z}}_u)^{\top} = \bm{B}_g \{ \bm{B}_g^{\top}\bm{B}_g + \bm{z}_u^{\top}\bm{Q}^{\top}( \bm{P}_g^{\perp} - \bm{P}_g^{\perp}\tilde{\bm{C}}\bm{A}_g^{-1}\tilde{\bm{C}}^{\top}\bm{P}_g^{\perp} )\tilde{\bm{z}}_u\}^{-1}\bm{B}_g^{\top}\nonumber\\
    =& (I_K + \bm{D}_g)^{-1}\\
    &\bm{B}_g = \bm{A}_g^{1/2}\bm{v}_u + \bm{A}_g^{-1/2}\tilde{\bm{C}}^{\top} \bm{P}_g^{\perp} \tilde{\bm{z}}_u\nonumber\\
    \label{supp:equation:Dg}
    &\bm{D}_g = \bm{B}_g^{-\top} \bm{z}_u^{\top}\bm{Q}^{\top}( \bm{P}_g^{\perp} - \bm{P}_g^{\perp}\tilde{\bm{C}}\bm{A}_g^{-1}\tilde{\bm{C}}^{\top}\bm{P}_g^{\perp} )\tilde{\bm{z}}_u \bm{B}_g^{-1}.
\end{align}
We first prove two lemmas that we will use throughout the proof.

\begin{lemma}
\label{supp:lemma:Bs}
Suppose Assumption~\ref{supp:assumptions:FA} holds and let $\tilde{\bm{B}}_g = \bm{A}_g^{-1/2}\bm{B}_g$. Then for all $\epsilon \in (0,1/2)$ and some constant $c>0$ that does not depend on $n$, $p$, or $\delta$,
\begin{align}
    \label{supp:equation:Abound}
    &\max_{g \in [p]}\norm*{ \bm{A}_g - I_K }_2 = O_P(n^{-1/2+\epsilon})\\
    \label{supp:equation:Bbound}
    &\max_{g \in [p]}\norm*{ \bm{B}_g^{\top} \bm{B}_g - I_K }_2, \, \max_{g \in [p]}\norm*{ \tilde{\bm{B}}_g^{\top} \tilde{\bm{B}}_g - I_K }_2 \leq c (\delta + \delta^2)\{1 + O_P(n^{-1/2+\epsilon})\} \text{ as $n,p \to \infty$.}
\end{align}
\end{lemma}
\begin{proof}
Define $\bm{R} = n^{-1}\bm{C}^{\top}P_X^{\perp} \bm{C}$ and let $\epsilon >0$ be an arbitrarily small constant. Then
\begin{align*}
    \bm{A}_g = \tilde{\bm{C}}^{\top}\bm{P}_g^{\perp} \tilde{\bm{C}} =& \bm{R}^{-1/2}\{n^{-1}\bm{C}^{\top}P_{X}^{\perp} \bm{W}_g P_{X}^{\perp} \bm{C}\\
    -& (n^{-1}\bm{C}^{\top}P_{X}^{\perp} \bm{W}_g\bm{X})(n^{-1} \bm{X}^{\top}\bm{W}_g\bm{X} )^{-1}(n^{-1}\bm{X}^{\top} \bm{W}_g P_{X}^{\perp} \bm{C}) \}\bm{R}^{-1/2},
\end{align*}
where $\E\{\bm{R}^{-1/2}( n^{-1}\bm{C}^{\top}P_{X}^{\perp} \bm{W}_g P_{X}^{\perp} \bm{C} )\bm{R}^{-1/2} \mid \bm{C}\} = I_K$ and $\E( \bm{C}^{\top}P_{X}^{\perp} \bm{W}_g\bm{X} \mid \bm{C} ) = \bm{0}$. First, Corollary~\ref{supp:corollary:MaximalIne} implies $\max_{g \in [p]} \norm*{ n^{-1} \bm{X}^{\top}\bm{W}_g\bm{X} - n^{-1}\bm{X}^{\top}\bm{X} }_2 = O_P(n^{-1/2 + \epsilon})$. Next,
\begin{align*}
    n^{-1}\bm{C}^{\top} P_{X}^{\perp} \bm{W}_g \bm{X} = n^{-1}\bm{C}^{\top}\bm{W}_g \bm{X} - n^{-1}\bm{C}^{\top}\bm{X}(n^{-1}\bm{X}^{\top}\bm{X})^{-1}(n^{-1}\bm{X}^{\top}\bm{W}_g\bm{X}),
\end{align*}
where a second application of Corollary~\ref{supp:corollary:MaximalIne} shows that $\max_{g \in [p]}\norm*{ n^{-1}\bm{C}^{\top} P_{X}^{\perp} \bm{W}_g \bm{X} }_2 = O_P(n^{-1/2 + \epsilon})$. Next,
\begin{align*}
    n^{-1}\bm{C}^{\top}P_{X}^{\perp} \bm{W}_g P_{X}^{\perp} \bm{C} =& n^{-1}\bm{C}^{\top}\bm{W}_g \bm{C}\\
    &+ (n^{-1}\bm{C}^{\top}\bm{X})(n^{-1}\bm{X}^{\top}\bm{X})^{-1} (n^{-1}\bm{X}^{\top}\bm{W}_g \bm{X}) (n^{-1}\bm{X}^{\top}\bm{X})^{-1} (n^{-1}\bm{X}^{\top}\bm{C})\\
    &- (n^{-1}\bm{C}^{\top} \bm{W}_g\bm{X})(n^{-1}\bm{X}^{\top}\bm{X})(n^{-1}\bm{X}^{\top}\bm{C})\\
    &- \{ (n^{-1}\bm{C}^{\top} \bm{W}_g\bm{X})(n^{-1}\bm{X}^{\top}\bm{X})(n^{-1}\bm{X}^{\top}\bm{C}) \}^{\top},
\end{align*}
where further applications of Corollary~\ref{supp:corollary:MaximalIne} to the terms in the above expression imply
\begin{align*}
    \max_{g \in [p]}\norm*{ n^{-1}\bm{C}^{\top}P_{X}^{\perp} \bm{W}_g P_{X}^{\perp} \bm{C} - \bm{R} }_2 = O_P(n^{-1/2 + \epsilon}).
\end{align*}
This proves \eqref{supp:equation:Abound}. Since $\tilde{\bm{B}}_g = \bm{A}_g^{-1/2}\bm{B}_g$, it suffices to only consider $\bm{B}_g^{\top}\bm{B}_g$ when proving \eqref{supp:equation:Bbound}. We have
\begin{align*}
    \bm{B}_g^{\top} \bm{B}_g = \bm{v}_u^{\top} \bm{A}_g \bm{v}_u + \bm{v}_u^{\top} \tilde{\bm{C}}^{\top} \bm{P}_g^{\perp} \tilde{\bm{z}}_u + (\bm{v}_u^{\top} \tilde{\bm{C}}^{\top} \bm{P}_g^{\perp} \tilde{\bm{z}}_u)^{\top} + \tilde{\bm{z}}_u^{\top} \bm{P}_g^{\perp} \tilde{\bm{C}} \bm{A}_g^{-1} \tilde{\bm{C}}^{\top} \bm{P}_g^{\perp} \tilde{\bm{z}}_u.
\end{align*}
By Lemma~\ref{supp:lemma:Vz} and \eqref{supp:equation:Abound}, $\norm*{\bm{v}_u^{\top} \bm{A}_g \bm{v}_u - I_K}_2 \leq c\delta^2\{ 1+O_P(n^{-1/2+\epsilon}) \}$ for some constant $c>0$. Since $\norm*{ \bm{v}_u^{\top} \tilde{\bm{C}}^{\top} \bm{P}_g^{\perp} \tilde{\bm{z}}_u }_2 \leq c\delta \norm*{ \tilde{\bm{C}}^{\top} (\bm{P}_g^{\perp})^2 \tilde{\bm{C}}}_2^{1/2}$ for some constant $c>0$, we need only show that $\norm*{ \tilde{\bm{C}}^{\top} (\bm{P}_g^{\perp})^2 \tilde{\bm{C}}}_2 \leq c\{1+O_P(n^{-1/2+\epsilon})\}$ for some constant $c>0$ to complete the proof. However, this follows from an identical analysis used to study the properties of $\bm{A}_g$, the details of which have been omitted.
\end{proof}

\begin{lemma}
\label{supp:lemma:IplusDg}
Suppose the assumptions of Lemma~\ref{supp:lemma:f1} hold and let $\tilde{\bm{M}} = [n^{-1/2}\bm{X}, \tilde{\bm{C}}]$. Then for any $a>0$, define $\tilde{\bm{W}}_{g,a} = \diag[w_{g1}1\{w_{g1}>a\},\ldots,w_{gn}1\{w_{gn}>a\}]$. Then there exists constants $c>0$ and $\eta_a>0$, the latter of which is a decreasing function of $a$, and a random variable $z = O_P(n^{-1/2+\epsilon})$ such that
\begin{align*}
    (I_K + \bm{D}_g)^{-1} \preceq & I_K + c \bm{B}_g^{-\top} \tilde{\bm{z}}_u^{\top} \bm{W}_g \tilde{\bm{M}} \tilde{\bm{M}}^{\top} \bm{W}_g\tilde{\bm{z}}_u\bm{B}_g^{-1} - \eta_a \bm{B}_g^{-\top} \tilde{\bm{z}}_u^{\top} \bm{W}_g \tilde{\bm{z}}_u \bm{B}_g^{-1}\\ &+ \eta_a \bm{B}_g^{-\top} \tilde{\bm{z}}_u^{\top} \tilde{\bm{W}}_{g,a} \tilde{\bm{z}}_u \bm{B}_g^{-1} + z I_K
\end{align*}
\end{lemma}

\begin{proof}
We assume $n^{-1}\bm{X}^{\top}\bm{X}=I_d$ without loss of generality. Then we can express $\bm{D}_g$ as
\begin{align*}
    \bm{D}_g =& \bm{B}_g^{-T}\tilde{\bm{z}}_u^{\top}\bm{W}_g^{1/2}\bm{N}_g\bm{W}_g^{1/2}\tilde{\bm{z}}_u\bm{B}_g^{-1}\\
    \bm{N}_g =& I_n - \bm{W}_g^{1/2}\bm{X}(\bm{X}^{\top}\bm{W}_g\bm{X})\bm{X}^{\top} - \tilde{\bm{P}}_g \bm{W}_g^{1/2}\tilde{\bm{C}}\bm{A}_g^{-1}\tilde{\bm{C}}^{\top}\bm{W}_g^{1/2}\tilde{\bm{P}}_g\\
    \tilde{\bm{P}}_g =& I_n - \bm{W}_g^{1/2}\bm{X}(\bm{X}^{\top}\bm{W}_g\bm{X})\bm{X}^{\top}.
\end{align*}
To simplify the expression for $\bm{N}_g$, we first see that
\begin{align*}
    \tilde{\bm{P}}_g \bm{W}_g^{1/2}\tilde{\bm{C}} = \bm{W}_g^{1/2}\tilde{\bm{C}} - (n^{-1/2}\bm{W}_g^{1/2}\bm{X})(n^{-1}\bm{X}^{\top}\bm{W}_g\bm{X})\{n^{-1}\bm{X}^{\top}(\bm{W}_g - I_n)\tilde{\bm{C}}\}.
\end{align*}
Corollary~\ref{supp:corollary:MaximalIne} can then be used to show that
\begin{align*}
    &\max_{g \in [p]}\norm*{n^{-1/2}\bm{W}_g^{1/2}\bm{X}}_2, \, \max_{g \in [p]}\norm*{ n^{-1}\bm{X}^{\top}\bm{W}_g\bm{X} }_2 \leq 1 + O_P(n^{-1/2+\epsilon})\\
    &\max_{g \in [p]} \norm*{ n^{-1}\bm{X}^{\top}(\bm{W}_g - I_n)\tilde{\bm{C}} }_2 = O_P(n^{-1/2+\epsilon}),
\end{align*}
which implies for $\tilde{\bm{X}} = n^{-1/2}\bm{X}$,
\begin{align*}
    &\max_{g \in [p]}\norm*{ \tilde{\bm{P}}_g \bm{W}_g^{1/2}\tilde{\bm{C}}\bm{A}_g^{-1}\tilde{\bm{C}}^{\top}\bm{W}_g^{1/2}\tilde{\bm{P}}_g - \bm{W}_g^{1/2}\tilde{\bm{C}}\bm{A}_g^{-1}\tilde{\bm{C}}^{\top}\bm{W}_g^{1/2} }_2 = O_P(n^{-1/2+\epsilon})\\
    &\max_{g \in [p]} \norm*{ \bm{W}_g^{1/2}\bm{X}(\bm{X}^{\top}\bm{W}_g\bm{X})\bm{X}^{\top} - \bm{W}_g^{1/2} \tilde{\bm{X}} \tilde{\bm{X}}^{\top} }_2 = O_P(n^{-1/2+\epsilon}).
\end{align*}
Lemma~\ref{supp:lemma:Bs} can then be used to simplify show
\begin{align*}
    \norm*{ \bm{W}_g^{1/2}\tilde{\bm{C}}\bm{A}_g^{-1}\tilde{\bm{C}}^{\top}\bm{W}_g^{1/2} - \bm{W}_g^{1/2}\tilde{\bm{C}} \tilde{\bm{C}}^{\top} \bm{W}_g^{1/2} }_2 = O_P(n^{-1/2+\epsilon}).
\end{align*}
Putting this all together implies for $\tilde{\bm{M}} = [\tilde{\bm{X}}, \tilde{\bm{C}}]$,
\begin{align*}
    \max_{g \in [p]} \norm*{\bm{N}_g - (I_n - \bm{W}_g^{1/2}\tilde{\bm{M}}\tilde{\bm{M}}^{\top}\bm{W}_g^{1/2}) }_2 = O_P(n^{-1/2+\epsilon}).
\end{align*}
Therefore, $\bm{D}_g$ satisfies
\begin{align*}
    \max_{g \in [p]}\norm*{ \bm{D}_g - \bm{B}_{g}^{-\top}\tilde{\bm{z}}_u^{\top}\bm{W}_g^{1/2}(I_n - \bm{W}_g^{1/2}\tilde{\bm{M}}\tilde{\bm{M}}^{\top}\bm{W}_g^{1/2}) \bm{W}_g^{1/2}\tilde{\bm{z}}_u\bm{B}_{g}^{-1} }_2 = O_P(n^{-1/2+\epsilon}),
\end{align*}
where for some constant $c>0$
\begin{align*}
    \max_{g \in [p]} \norm*{ \bm{B}_{g}^{-\top}\tilde{\bm{z}}_u^{\top}\bm{W}_g \tilde{\bm{M}}\tilde{\bm{M}}^{\top}\bm{W}_g\tilde{\bm{z}}_u\bm{B}_{g}^{-1} }_2 \leq \delta^2 c\{1 + O_P(n^{-1/2})\}.
\end{align*}
Therefore, there exists a constant $\eta_1 > 0$ and random variable $z = O_P(n^{-1/2+\epsilon})$ that does not depend on $g$ such that
\begin{align*}
    (I_K + \bm{D}_g)^{-1} \preceq ( I_K + \bm{B}_{g}^{-\top}\tilde{\bm{z}}_u^{\top}\bm{W}_g \tilde{\bm{z}}_u \bm{B}_{g}^{-1} )^{-1} + \eta_1 \bm{B}_{g}^{-\top}\tilde{\bm{z}}_u^{\top}\bm{W}_g \tilde{\bm{M}}\tilde{\bm{M}}^{\top}\bm{W}_g\tilde{\bm{z}}_u\bm{B}_{g}^{-1} + zI_K.
\end{align*}
Next, let $a>0$ be a constant and define $\bar{\bm{W}}_{g,a} = \diag[w_{g1}1\{w_{g1} \leq a\},\ldots,w_{gn}1\{w_{gn} \leq a\}]$. Then for some constant $\eta_{a}>0$ that is a decreasing function of $a$,
\begin{align*}
    ( I_K + \bm{B}_{g}^{-\top}\tilde{\bm{z}}_u^{\top}\bm{W}_g \tilde{\bm{z}}_u \bm{B}_{g}^{-1} )^{-1} \preceq ( I_K + \bm{B}_{g}^{-\top}\tilde{\bm{z}}_u^{\top}\bar{\bm{W}}_{g,a} \tilde{\bm{z}}_u \bm{B}_{g}^{-1} )^{-1} \preceq I_K - \eta_a \bm{B}_{g}^{-\top}\tilde{\bm{z}}_u^{\top}\bar{\bm{W}}_{g,a} \tilde{\bm{z}}_u \bm{B}_{g}^{-1},
\end{align*}
which completes the proof.
\end{proof}

Returning to the proof of Lemma~\ref{supp:lemma:f1}, let $a$, $\tilde{\bm{W}}_{g,a}$, and $\tilde{\bm{M}}$ be as given in the statement of Lemma~\ref{supp:lemma:IplusDg} and define
\begin{align*}
    &\tilde{\bm{B}}_g = \bm{A}_g^{-1/2}\bm{B}_g = \bm{v}_u + \bm{A}_g^{-1}\tilde{\bm{C}}^{\top} \bm{P}_g^{\perp} \tilde{\bm{z}}_u\\
    &\bm{S}_g = \bm{B}_g^{-\top}\tilde{\bm{z}}_u^{\top}\bm{W}_g\tilde{\bm{M}}\tilde{\bm{M}}^{\top} \bm{W}_g \tilde{\bm{z}}_u \bm{B}_g^{-1}\\
    &\bm{R}_g = \bm{B}_g^{-\top} \tilde{\bm{z}}_u^{\top}\bm{W}_g \tilde{\bm{z}}_u \bm{B}_g^{-1}, \quad \bm{R}_{g,a} = \bm{B}_g^{-\top} \tilde{\bm{z}}_u^{\top}\tilde{\bm{W}}_{g,a} \tilde{\bm{z}}_u \bm{B}_g^{-1}.
\end{align*}
Then for constants $c$ and $\eta_a$ as defined in the statement of Lemma~\ref{supp:lemma:IplusDg}, Lemma~\ref{supp:lemma:IplusDg} implies the objective can be lower bounded as
\begin{align}
\label{supp:equation:f1LowerBound}
\begin{aligned}
    f_{1g}(\bm{C}) - f_{1g}(\bm{U}) \geq & \eta_a\Tr\{ \bm{R}_g (\tilde{\bm{C}}^{\top}\bm{P}_g^{\perp} \tilde{\bm{C}})^{1/2}\tilde{\bm{\ell}}_g \tilde{\bm{\ell}}_g^{\top}(\tilde{\bm{C}}^{\top}\bm{P}_g^{\perp} \tilde{\bm{C}})^{1/2} \}\\&- \eta_a\Tr\{ \bm{R}_{g,a} (\tilde{\bm{C}}^{\top}\bm{P}_g^{\perp} \tilde{\bm{C}})^{1/2}\tilde{\bm{\ell}}_g \tilde{\bm{\ell}}_g^{\top}(\tilde{\bm{C}}^{\top}\bm{P}_g^{\perp} \tilde{\bm{C}})^{1/2} \}\\& - c\Tr\{ \bm{S}_g (\tilde{\bm{C}}^{\top}\bm{P}_g^{\perp} \tilde{\bm{C}})^{1/2}\tilde{\bm{\ell}}_g \tilde{\bm{\ell}}_g^{\top}(\tilde{\bm{C}}^{\top}\bm{P}_g^{\perp} \tilde{\bm{C}})^{1/2} \} + O_P(\lambda n^{-1/2+\epsilon}),
\end{aligned}
\end{align}
where the error term $O_P(\lambda n^{-1/2+\epsilon})$ is uniform over $g \in [p]$. For the third term in \eqref{supp:equation:f1LowerBound},
\begin{align*}
    \bm{M}^{(1)}_g&=\Tr\{\bm{S}_g (\tilde{\bm{C}}^{\top}\bm{P}_g^{\perp} \tilde{\bm{C}})^{1/2}\tilde{\bm{\ell}}_g \tilde{\bm{\ell}}_g^{\top}(\tilde{\bm{C}}^{\top}\bm{P}_g^{\perp} \tilde{\bm{C}})^{1/2} \} \leq \tilde{\bm{\ell}}_g^{\top}\tilde{\bm{\ell}}_g \Tr\{ \tilde{\bm{z}}_u^{\top} \bm{W}_g \tilde{\bm{M}}\tilde{\bm{M}}^{\top}\bm{W}_g\tilde{\bm{z}}_u(\tilde{\bm{B}}_g^{\top}\tilde{\bm{B}}_g)^{-1} \}\\
    & \leq c\{1+O_P(n^{-1/2+\epsilon})\}\tilde{\bm{\ell}}_g^{\top}\tilde{\bm{\ell}}_g \sum_{k=1}^K \tilde{\bm{z}}_{u_{\bigcdot k}}^{\top} \bm{W}_g \tilde{\bm{M}} \tilde{\bm{M}}^{\top}\bm{W}_g \tilde{\bm{z}}_{u_{\bigcdot k}}
\end{align*}
for some constant $c > 0$ that does not depend on $n$ or $p$, where the second inequality holds by Lemma~\ref{supp:lemma:Bs} for $\delta$ small enough. Note that the by Lemma~\ref{supp:lemma:Bs}, the $O_P(n^{-1/2+\epsilon})$ term is uniform over $g \in [p]$. Define $s_g^2 = \tilde{\bm{\ell}}_g^{\top}\tilde{\bm{\ell}}_g \leq c\lambda\{1+O_P(n^{-1/2})\}$, where the error is uniform over $g \in [p]$. Then
\begin{align*}
    (\lambda p)^{-1}\sum_{g=1}^p \bm{M}^{(1)}_g &\leq c\{1+O_P(n^{-1/2+\epsilon})\} \sum_{k=1}^K \tilde{\bm{z}}_{u_{\bigcdot k}}^{\top} \left\{(\lambda p)^{-1}\sum_{g=1}^p s_g^2\bm{W}_g \tilde{\bm{M}} \tilde{\bm{M}}^{\top}\bm{W}_g \right\} \tilde{\bm{z}}_{u_{\bigcdot k}}\\
    &= c\{1+O_P(n^{-1/2+\epsilon})\}\sum_{k=1}^K\sum_{r=1}^{d+K} (\lambda p)^{-1} \tilde{\bm{z}}_{u_{\bigcdot k}}^{\top} \sum_{g=1}^p s_g^2 \bm{W}_g \tilde{\bm{M}}_{\bigcdot r} \tilde{\bm{M}}_{\bigcdot r}^{\top}\bm{W}_g \tilde{\bm{z}}_{u_{\bigcdot k}}.
\end{align*}
We see that
\begin{align*}
     (\lambda p)^{-1} \tilde{\bm{z}}_{u_{\bigcdot k}}^{\top} \sum_{g=1}^p s_g^2 \bm{W}_g \tilde{\bm{M}}_{\bigcdot r} \tilde{\bm{M}}_{\bigcdot r}^{\top}\bm{W}_g \tilde{\bm{z}}_{u_{\bigcdot k}} =& p^{-1}\tilde{\bm{z}}_{\bigcdot k}^{\top}\bm{G}\bm{S}\bm{G}^{\top}\tilde{\bm{z}}_{\bigcdot k}\\
    \bm{G} =& [\bm{G}_1 \cdots \bm{G}_p] \in \mathbb{R}^{n \times p}, \quad \bm{G}_g = \bm{W}_g \tilde{\bm{M}}_{\bigcdot r}\\
    \bm{S} =& \diag(s_1^2/\lambda, \ldots, s_p^2/\lambda),
\end{align*}
where by Cauchy-Schwarz and the fact that $\norm{ \bm{S} }_2 \leq c\{1+O_P(n^{-1/2})\}$ for some constant $c>0$,
\begin{align*}
    0 \leq (\lambda p)^{-1} \tilde{\bm{z}}_{\bigcdot k}^{\top} \sum_{g=1}^p s_g^2 \bm{W}_g \tilde{\bm{M}}_{\bigcdot r} \tilde{\bm{M}}_{\bigcdot r}^{\top}\bm{W}_g \tilde{\bm{z}}_{u_{\bigcdot k}} \leq c\{1+O_P(n^{-1/2})\}( p^{-1}\tilde{\bm{z}}_{u_{\bigcdot k}}^{\top}\bm{G}\bm{G}^{\top}\tilde{\bm{z}}_{u_{\bigcdot k}})
\end{align*}
for some constant $c>0$. Let $\tilde{\bm{S}} = p^{-1} \bm{G}\bm{G}^{\top}$. Then
\begin{align}
\label{supp:equation:S}
\begin{aligned}
    \tilde{\bm{S}}_{ij} =& \tilde{\bm{M}}_{i r}\tilde{\bm{M}}_{j r} p^{-1} \sum_{g=1}^p (w_{gi} - 1)(w_{gj} - 1) + \tilde{\bm{M}}_{i r}\tilde{\bm{M}}_{j r} p^{-1} \sum_{g=1}^p w_{gi} + \tilde{\bm{M}}_{i r}\tilde{\bm{M}}_{j r} p^{-1} \sum_{g=1}^p w_{gj}\\
    &-  \tilde{\bm{M}}_{i r}\tilde{\bm{M}}_{j r}.
\end{aligned}
\end{align}
Since $\tilde{\bm{z}}_{u_{\bigcdot k}}^{\top}\tilde{\bm{M}}_{\bigcdot r}=0$ for all $k \in [K]$ and $r \in [d+K]$, the last three terms in \eqref{supp:equation:S} are nullified, implying we need only study the first term. We then have that for $\bm{M}=\in \mathbb{R}^{n \times p}$ such that $\bm{M}_{ig} = w_{gi} - 1$, $\tilde{\bm{S}} = p^{-1} \diag(\tilde{\bm{M}}_{\bigcdot r})\bm{M} \bm{M}^{\top}\diag(\tilde{\bm{M}}_{\bigcdot r})$. By Lemma~\ref{supp:lemma:LatalaW}, $\norm*{p^{-1/2}\bm{M}}_2 = O_P(n^{\epsilon})$ for an arbitrarily small constant $\epsilon>0$. Therefore, $\norm*{ \tilde{\bm{S}}}_2 = O_P(n^{\epsilon}\max_{i \in [n]} \tilde{\bm{M}}_{ir}^2) = O_P(n^{-1 + \epsilon})$, which implies $(\lambda p)^{-1}\sum_{g=1}^p\bm{M}_g^{(1)} = O_P(\delta^2n^{-1+\epsilon})$. 

We next consider the first term in \eqref{supp:equation:f1LowerBound}. Here,
\begin{align*}
    & \Tr\{\bm{R}_g (\tilde{\bm{C}}^{\top}\bm{P}_g^{\perp} \tilde{\bm{C}})^{1/2}\tilde{\bm{\ell}}_g \tilde{\bm{\ell}}_g^{\top}(\tilde{\bm{C}}^{\top}\bm{P}_g^{\perp} \tilde{\bm{C}})^{1/2}\}= \Tr( \tilde{\bm{z}}_u^{\top} \bm{W}_g \tilde{\bm{z}}_u \tilde{\bm{B}}_g^{-1}\tilde{\bm{\ell}}_g \tilde{\bm{\ell}}_g^{\top} \tilde{\bm{B}}_g^{-T} ).
\end{align*}
Since $f_{1g}(\bm{U})$ in \eqref{supp:equation:Tr} only depends on $\im(\bm{U})$, it suffices to assume $\norm*{ I_K-\bm{v}_u }_2 = O(\delta^2)$. Let $\bm{\Delta}_g = I_K - \tilde{\bm{B}}_g^{-1}$, where an identical analysis to that used to prove \eqref{supp:equation:Bbound} in Lemma~\ref{supp:lemma:Bs} can be used to show that $\max_{g \in [p]}\norm*{ \bm{\Delta}_g }_2 \leq c(\delta+\delta^2)\{1+O_P(n^{-1/2+\epsilon})\}$. Next,
\begin{align*}
    \Tr( \tilde{\bm{z}}_u^{\top} \bm{W}_g \tilde{\bm{z}}_u \tilde{\bm{B}}_g^{-1}\tilde{\bm{\ell}}_g \tilde{\bm{\ell}}_g^{\top} \tilde{\bm{B}}_g^{-T} ) =& \Tr( \tilde{\bm{z}}_u^{\top} \bm{W}_g \tilde{\bm{z}}_u \tilde{\bm{\ell}}_g \tilde{\bm{\ell}}_g^{\top} ) - \Tr\{ \tilde{\bm{z}}_u^{\top} \bm{W}_g \tilde{\bm{z}}_u ( \bm{\Delta}_g\tilde{\bm{\ell}}_g \tilde{\bm{\ell}}_g^{\top}\tilde{\bm{B}}_g^{-\top} + \tilde{\bm{B}}_g^{-1}\tilde{\bm{\ell}}_g \tilde{\bm{\ell}}_g^{\top}\bm{\Delta}_g^{\top} ) \}\\
    &+ \Tr( \tilde{\bm{z}}_u^{\top} \bm{W}_g \tilde{\bm{z}}_u \bm{\Delta}_g\tilde{\bm{\ell}}_g \tilde{\bm{\ell}}_g^{\top} \bm{\Delta}_g^{\top} )\\
    \geq & \Tr( \tilde{\bm{z}}_u^{\top} \bm{W}_g \tilde{\bm{z}}_u \tilde{\bm{\ell}}_g \tilde{\bm{\ell}}_g^{\top} ) - \Tr\{ \tilde{\bm{z}}_u^{\top} \bm{W}_g \tilde{\bm{z}}_u ( \bm{\Delta}_g\tilde{\bm{\ell}}_g \tilde{\bm{\ell}}_g^{\top}\tilde{\bm{B}}_g^{-\top} + \tilde{\bm{B}}_g^{-1}\tilde{\bm{\ell}}_g \tilde{\bm{\ell}}_g^{\top}\bm{\Delta}_g^{\top} ) \}.
\end{align*}
For $\bm{R} = n^{-1}\bm{C}^{\top}P_{X}^{\perp}\bm{C}$,
\begin{align*}
    (\lambda p)^{-1}\sum_{g=1}^p \Tr( \tilde{\bm{z}}_u^{\top} \bm{W}_g \tilde{\bm{z}}_u \tilde{\bm{\ell}}_g \tilde{\bm{\ell}}_g^{\top} ) =& (\lambda p)^{-1}\sum_{g=1}^p n\Tr( \tilde{\bm{z}}_u^{\top} \bm{W}_g \tilde{\bm{z}}_u \bm{R}^{1/2}\bm{\ell}_g \bm{\ell}_g^{\top}\bm{R}^{1/2} )\\
    =& (\lambda p)^{-1}\sum_{g=1}^p n\Tr( \tilde{\bm{z}}_u^{\top} \bm{W}_g \tilde{\bm{z}}_u \bm{\ell}_g \bm{\ell}_g^{\top} ) + O_P(\delta^2 n^{-1/2 + \epsilon})\\
    (\lambda p)^{-1}\sum_{g=1}^p n\Tr( \tilde{\bm{z}}_u^{\top} \bm{W}_g \tilde{\bm{z}}_u \bm{\ell}_g \bm{\ell}_g^{\top} ) \geq & \norm*{ \tilde{\bm{z}}_u }_F^2 \lambda_K/\lambda + \sum\limits_{r,s=1}^K \sum\limits_{i=1}^n \tilde{\bm{z}}_{u_{ir}} \tilde{\bm{z}}_{u_{is}} \underbrace{p^{-1}\sum\limits_{g=1}^p (n\bm{\ell}_{gr}\bm{\ell}_{gs}/\lambda)(w_{gi} - 1)}_{=x_{irs}}.
\end{align*}
Since $\{w_{gi} - 1\}_{g \in [p]}$ are mean 0 and independent conditional on $\bm{C}$ and $\max_{g \in [p]}\abs*{n\bm{\ell}_{gr}\bm{\ell}_{gs}/\lambda} \leq c$ for some constant $c>0$, Corollary~\ref{supp:corollary:MaximalIne} implies
\begin{align*}
    \max_{\substack{i \in [n], \, r,s \in [K]}}\abs{x_{irs}} = O_P(n^{-1/2+\epsilon}).
\end{align*}
Lastly, $\max_{g \in [p]}s_g^{-2}\norm*{ \bm{\Delta}_g\tilde{\bm{\ell}}_g \tilde{\bm{\ell}}_g^{\top}\tilde{\bm{B}}_g^{-\top} + \tilde{\bm{B}}_g^{-1}\tilde{\bm{\ell}}_g \tilde{\bm{\ell}}_g^{\top}\bm{\Delta}_g^{\top} }_2\leq c\delta \{1+O_P(n^{-1/2+\epsilon})\}$ for some constant $c>0$ and $\delta$ small enough, meaning
\begin{align*}
    \abs*{ \Tr\{ \tilde{\bm{z}}_u^{\top} \bm{W}_g \tilde{\bm{z}}_u ( \bm{\Delta}_g\tilde{\bm{\ell}}_g \tilde{\bm{\ell}}_g^{\top}\tilde{\bm{B}}_g^{-\top} + \tilde{\bm{B}}_g^{-1}\tilde{\bm{\ell}}_g \tilde{\bm{\ell}}_g^{\top}\bm{\Delta}_g^{\top} ) \} } \leq  c\delta\{1+O_P(n^{-1/2+\epsilon})\} s_g^2 \Tr( \tilde{\bm{z}}_u^{\top} \bm{W}_g \tilde{\bm{z}}_u )
\end{align*}
for some constant $c>0$. Therefore,
\begin{align*}
    \abs*{ (\lambda p)^{-1}\sum\limits_{g=1}^p \Tr\{ \tilde{\bm{z}}_u^{\top} \bm{W}_g \tilde{\bm{z}}_u ( \bm{\Delta}_g\tilde{\bm{\ell}}_g \tilde{\bm{\ell}}_g^{\top}\tilde{\bm{B}}_g^{-\top} + \tilde{\bm{B}}_g^{-1}\tilde{\bm{\ell}}_g \tilde{\bm{\ell}}_g^{\top}\bm{\Delta}_g^{\top} ) \} } \leq c \delta^3\{1+O_P(n^{-1/2+\epsilon})\}
\end{align*}
for some constant $c>0$.

We lastly consider the second term in \eqref{supp:equation:f1LowerBound}. For $a>0$ as defined in the statement of Lemma~\ref{supp:lemma:IplusDg}, $\E[w_{gi}1\{w_{gi}>a\}] \leq \epsilon_a$. for some constant $\epsilon_a \geq 0$ that is a non-increasing function of $a$, and can be made arbitrarily small. Then there exists a constant $c > 0$ and random variable $z=O_P(n^{-1/2+\epsilon})$ that does not depend on $g$ such that
\begin{align*}
    \bm{M}_g^{(2)} =& \Tr\{\bm{R}_{g,a}(\tilde{\bm{C}}^{\top}\bm{P}_g^{\perp} \tilde{\bm{C}})^{1/2}\tilde{\bm{\ell}}_g \tilde{\bm{\ell}}_g^{\top}(\tilde{\bm{C}}^{\top}\bm{P}_g^{\perp} \tilde{\bm{C}})^{1/2}\} \leq (c+z)\lambda \Tr( \tilde{\bm{z}}_u^{\top} \tilde{\bm{W}}_{g,a} \tilde{\bm{z}}_u ) \leq \delta^2 (c+z)\lambda \epsilon_a\\& + (c+z)\lambda \Tr[ \tilde{\bm{z}}_u^{\top}\{\tilde{\bm{W}}_{g,a} - \E(\tilde{\bm{W}}_{g,a})\}\tilde{\bm{z}}_u ].
\end{align*}
We see that
\begin{align*}
    p^{-1}\sum_{g=1}^p \Tr[ \tilde{\bm{z}}_u^{\top}\{\tilde{\bm{W}}_{g,a} - \E(\tilde{\bm{W}}_{g,a})\}\tilde{\bm{z}}_u ] = \sum_{k=1}^K \sum_{i=1}^n \tilde{\bm{z}}_{u_{ik}}^2 p^{-1}\sum_{g=1}^p ( w_{gi}1\{w_{gi}>a\} - \E[ w_{gi}1\{w_{gi}>a\} ] ),
\end{align*}
where
\begin{align*}
    \max_{i \in [n]} \abs*{ p^{-1}\sum_{g=1}^p ( w_{gi}1\{w_{gi}>a\} - \E[ w_{gi}1\{w_{gi}>a\} ] ) } = O_P(n^{-1/2 + \epsilon}).
\end{align*}
Choosing $a > 0$ large enough, and therefore $\epsilon_a \geq 0$ small enough, thus completes the proof.
\end{proof}

\begin{lemma}
\label{supp:lemma:UtPU}
Suppose Assumption~\ref{supp:assumptions:FA} holds and let $\Omega_{\delta} = \{\bm{U} \in \mathbb{R}^{n \times K}: \bm{U}^{\top}\bm{U}=I_K, \bm{U}^{\top}\bm{X}=\bm{0},\norm{P_{\bm{U}} - P_{\tilde{\bm{C}}}}_2\leq \delta\}$. Then for all $\delta>0$ sufficiently small, there exists a constant $c>0$ such that for all $\epsilon \in (0,1/2)$, $\sup_{\bm{U} \in \Omega_{\delta}}\max_{g \in [p]} \norm*{ (\bm{U}^{\top}\bm{P}_g^{\perp}\bm{U})^{-1} }_2 \leq c + O_P(n^{-1/2+\epsilon})$.
\end{lemma}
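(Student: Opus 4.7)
The plan is to reduce the uniform lower bound on $\bm{U}^{\top}\bm{P}_g^{\perp}\bm{U}$ to the bound on $\bm{B}_g^{\top}\bm{B}_g$ already established in Lemma~\ref{supp:lemma:Bs}. First I would use Lemma~\ref{supp:lemma:Vz} to write $\bm{U} = \tilde{\bm{C}}\bm{v}_u + \tilde{\bm{z}}_u$ with $\tilde{\bm{z}}_u = \bm{Q}\bm{z}_u$, expand $\bm{U}^{\top}\bm{P}_g^{\perp}\bm{U}$ term-by-term, and compare it to the expansion of $\bm{B}_g^{\top}\bm{B}_g = (\bm{A}_g^{1/2}\bm{v}_u + \bm{A}_g^{-1/2}\tilde{\bm{C}}^{\top}\bm{P}_g^{\perp}\tilde{\bm{z}}_u)^{\top}(\bm{A}_g^{1/2}\bm{v}_u + \bm{A}_g^{-1/2}\tilde{\bm{C}}^{\top}\bm{P}_g^{\perp}\tilde{\bm{z}}_u)$ derived in the proof of \eqref{supp:equation:Bbound}. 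The cross-terms match exactly, leaving
\begin{align*}
    \bm{U}^{\top}\bm{P}_g^{\perp}\bm{U} - \bm{B}_g^{\top}\bm{B}_g = \tilde{\bm{z}}_u^{\top}\bm{P}_g^{\perp}\tilde{\bm{z}}_u - \tilde{\bm{z}}_u^{\top}\bm{P}_g^{\perp}\tilde{\bm{C}}\bm{A}_g^{-1}\tilde{\bm{C}}^{\top}\bm{P}_g^{\perp}\tilde{\bm{z}}_u,
\end{align*}
where $\bm{A}_g = \tilde{\bm{C}}^{\top}\bm{P}_g^{\perp}\tilde{\bm{C}}$.

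The key structural observation is that this difference is positive semi-definite. Factoring $\bm{P}_g^{\perp} = \bm{N}_g^{\top}\bm{N}_g$ with $\bm{N}_g = (I_n - P_{\bm{W}_g^{1/2}\bm{X}})\bm{W}_g^{1/2}$, the difference becomes $(\bm{N}_g\tilde{\bm{z}}_u)^{\top}(I_n - P_{\bm{N}_g\tilde{\bm{C}}})(\bm{N}_g\tilde{\bm{z}}_u) \succeq 0$, so $\bm{U}^{\top}\bm{P}_g^{\perp}\bm{U} \succeq \bm{B}_g^{\top}\bm{B}_g$, and by operator monotonicity of matrix inversion on positive definite matrices, $\|(\bm{U}^{\top}\bm{P}_g^{\perp}\bm{U})^{-1}\|_2 \leq \|(\bm{B}_g^{\top}\bm{B}_g)^{-1}\|_2$.

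I would then invoke \eqref{supp:equation:Bbound}, which gives $\max_{g \in [p]}\|\bm{B}_g^{\top}\bm{B}_g - I_K\|_2 \leq c(\delta+\delta^2)\{1 + O_P(n^{-1/2+\epsilon})\}$. Choosing $\delta$ small enough so that $c(\delta+\delta^2) \leq 1/2$ yields $\bm{B}_g^{\top}\bm{B}_g \succeq (1/2 - z_n)I_K$ uniformly in $g$, where $z_n = O_P(n^{-1/2+\epsilon})$, and hence $\max_{g \in [p]}\|(\bm{B}_g^{\top}\bm{B}_g)^{-1}\|_2 \leq 2 + O_P(n^{-1/2+\epsilon})$ via a Neumann-series expansion. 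Uniformity over $\bm{U} \in \Omega_{\delta}$ is essentially automatic: the $\bm{U}$-dependence of the bound in Lemma~\ref{supp:lemma:Bs} enters only through $\|\bm{v}_u\|_2 \leq 1$ and $\|\bm{z}_u\|_2 \leq c\delta$, while the $O_P(n^{-1/2+\epsilon})$ remainder originates from $\max_g\|\bm{A}_g - I_K\|_2$ and $\max_g\|\tilde{\bm{C}}^{\top}(\bm{P}_g^{\perp})^2\tilde{\bm{C}}\|_2$, neither of which depends on $\bm{U}$.

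The main potential obstacle is verifying the Schur-complement PSD inequality cleanly, since $\bm{P}_g^{\perp}$ is not itself an orthogonal projection but a weighted one; the $\bm{N}_g^{\top}\bm{N}_g$ factorization and the identification of the difference as $(\bm{N}_g\tilde{\bm{z}}_u)^{\top}(I-P_{\bm{N}_g\tilde{\bm{C}}})(\bm{N}_g\tilde{\bm{z}}_u)$ is what makes the argument go through without requiring any additional probabilistic estimates beyond those already accumulated in Lemma~\ref{supp:lemma:Bs}.
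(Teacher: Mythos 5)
Your argument is correct, and it reaches the same conclusion by a structurally different route than the paper. The paper expands $\bm{U}^{\top}\bm{P}_g^{\perp}\bm{U} = \bm{v}_u^{\top}\bm{A}_g\bm{v}_u + \bm{v}_u^{\top}\tilde{\bm{C}}^{\top}\bm{P}_g^{\perp}\tilde{\bm{z}}_u + (\cdot)^{\top} + \tilde{\bm{z}}_u^{\top}\bm{P}_g^{\perp}\tilde{\bm{z}}_u$, simply discards the last (PSD) block, uses $\bm{v}_u^{\top}\bm{v}_u = I_K - \bm{z}_u^{\top}\bm{z}_u \succeq (1-c\delta^2)I_K$ together with $\max_g\norm{\bm{A}_g-I_K}_2 = O_P(n^{-1/2+\epsilon})$, and bounds the cross terms by $c\delta\{1+O_P(n^{-1/2+\epsilon})\}$ via $\norm*{\tilde{\bm{C}}^{\top}(\bm{P}_g^{\perp})^2\tilde{\bm{C}}}_2^{1/2}$. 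You instead exploit the sharper identity $\bm{U}^{\top}\bm{P}_g^{\perp}\bm{U} = \bm{B}_g^{\top}(I_K + \bm{D}_g)\bm{B}_g$ implicit in the proof of Lemma~\ref{supp:lemma:f1}, where $\bm{D}_g$ — defined in \eqref{supp:equation:Dg} — is recognized as PSD via the $\bm{N}_g^{\top}\bm{N}_g$ factorization of $\bm{P}_g^{\perp}$ and the identity $\bm{U}^{\top}\bm{P}_g^{\perp}\bm{U} - \bm{B}_g^{\top}\bm{B}_g = (\bm{N}_g\tilde{\bm{z}}_u)^{\top}(I_n - P_{\bm{N}_g\tilde{\bm{C}}})(\bm{N}_g\tilde{\bm{z}}_u)$. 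This reduces the lemma to a direct invocation of the uniform bound \eqref{supp:equation:Bbound} on $\bm{B}_g^{\top}\bm{B}_g - I_K$, which the paper proves but does not quote here. What your approach buys is a one-line deduction once the Schur-complement PSD fact is verified, and it makes explicit a structural observation (that $\bm{P}_g^{\perp}$ is a weighted idempotent in disguise) that the paper only uses implicitly elsewhere; what the paper's approach buys is avoiding that verification entirely by keeping all terms scalar-boundable. Both rely on the same probabilistic inputs ($\max_g\norm{\bm{A}_g - I_K}_2$ and $\max_g\norm*{\tilde{\bm{C}}^{\top}(\bm{P}_g^{\perp})^2\tilde{\bm{C}}}_2$, neither of which involves $\bm{U}$), so — as you correctly note — uniformity in $\bm{U}\in\Omega_\delta$ is free in either case.
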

\begin{proof}
For $\bm{v}_u$ and $\bm{z}_u$ as defined in Lemma~\ref{supp:lemma:Vz}, Lemma~\ref{supp:lemma:Vz} implies that for some constant $c>0$,
\begin{align*}
    \bm{U}^{\top}\bm{P}_g^{\perp}\bm{U} \succeq (1-\delta^2 c)I_K + \bm{v}_u^{\top}(\tilde{\bm{C}}^{\top}\bm{P}_g^{\perp}\tilde{\bm{C}}-I_K) \bm{v}_u + \bm{v}_u^{\top}\tilde{\bm{C}}^{\top}\bm{P}_g^{\perp}\bm{Q}\bm{z}_u + (\bm{v}_u^{\top}\tilde{\bm{C}}^{\top}\bm{P}_g^{\perp}\bm{Q}\bm{z}_u)^{\top}.
\end{align*}
By Lemma~\ref{supp:lemma:Bs} and the proof of \eqref{supp:equation:Bbound} in Lemma~\ref{supp:lemma:Bs},
\begin{align*}
    &\max_{g \in [p]}\norm*{ \tilde{\bm{C}}^{\top}\bm{P}_g^{\perp}\tilde{\bm{C}}-I_K }_2 = O_P(n^{-1/2+\epsilon})\\
    &\sup_{\bm{U} \in \Omega_{\delta}}\max_{g \in [p]} \norm*{ \bm{v}_u^{\top}\tilde{\bm{C}}^{\top}\bm{P}_g^{\perp}\bm{Q}\bm{z}_u }_2 \leq \{c + O_P(n^{-1/2+\epsilon})\}\sup_{\bm{U} \in \Omega_{\delta}}\norm{ \bm{z}_u }_2
\end{align*}
for some constant $c>0$. Since $\sup_{\bm{U} \in \Omega_{\delta}}\norm{ \bm{z}_u }_2 = O(\delta)$ by Lemma~\ref{supp:lemma:Vz}, this completes the proof.
\end{proof}

\begin{lemma}
\label{supp:lemma:EtE}
Define $f_3(\bm{U}) = (\lambda p)^{-1}\sum_{g=1}^p \Tr\{ (\bm{U}^{\top} \bm{P}_g^{\perp}\bm{U} )^{-1} \bm{U}^{\top}\bm{P}_g^{\perp}\bm{e}_g \bm{e}_g^{\top}\bm{P}_g^{\perp}\bm{U} \}$ and suppose Assumption~\ref{supp:assumptions:FA} holds. Then for any constant $\epsilon \in (0,1/2)$,
\begin{align*}
    &\sup_{\bm{U}_{\delta} \in \Omega}f_3(\bm{U}) = O_P(\lambda^{-1+\epsilon}), \quad \Omega_{\delta} = \{\bm{U} \in \mathbb{R}^{n \times K}: \bm{U}^{\top}\bm{U}=I_K, \bm{U}^{\top}\bm{X}=\bm{0},\norm{P_{\bm{U}} - P_{\tilde{\bm{C}}}}_2\leq \delta\}
\end{align*}
for all $\delta>0$ sufficiently small.
\end{lemma}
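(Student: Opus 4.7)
The idea is to reduce the problem to bounding the operator norm of a single data-matrix, at which point Lemma~\ref{supp:lemma:LatalaExt} does nearly all the work. Uniformity in $\bm{U}$ comes essentially for free once we put the $\bm{U}$-dependent pieces into a factor of size $\|\bm{M}\|_2^2$ that does not depend on $\bm{U}$.

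First, apply Lemma~\ref{supp:lemma:UtPU}: for all $\delta>0$ small enough, $\sup_{\bm{U}\in\Omega_\delta}\max_{g\in[p]}\norm*{(\bm{U}^{\top}\bm{P}_g^{\perp}\bm{U})^{-1}}_2 = O_P(1)$. Writing $\bm{v}_g(\bm{U})=\bm{U}^{\top}\bm{P}_g^{\perp}\bm{e}_g\in\mathbb{R}^K$, the trace identity gives
\begin{align*}
\Tr\{(\bm{U}^{\top}\bm{P}_g^{\perp}\bm{U})^{-1}\bm{v}_g(\bm{U})\bm{v}_g(\bm{U})^{\top}\} \leq \norm*{(\bm{U}^{\top}\bm{P}_g^{\perp}\bm{U})^{-1}}_2 \norm*{\bm{v}_g(\bm{U})}_2^2,
\end{align*}
and summing in $g$ yields $f_3(\bm{U}) \leq \{1+o_P(1)\}\, c\,(\lambda p)^{-1}\,\|\bm{U}^{\top}\bm{M}\|_F^2$, where $\bm{M}\in\mathbb{R}^{n\times p}$ is the matrix with $g$-th column $\bm{P}_g^{\perp}\bm{e}_g$ and $c>0$ is a constant. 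Because $\bm{U}$ has orthonormal columns, $\|\bm{U}^{\top}\bm{M}\|_F^2\leq K\|\bm{U}^{\top}\bm{M}\|_2^2\leq K\|\bm{M}\|_2^2$, and this last bound no longer depends on $\bm{U}$, so taking the supremum over $\Omega_\delta$ is immediate.

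Second, it suffices to prove $\|\bm{M}\|_2^2 = O_P(p\, n^{\eta})$ for an arbitrarily small constant $\eta>0$, since Assumption~\ref{supp:assumption:lambda} gives $\lambda\gtrsim n^{1/2+\epsilon}$ and hence $n^{\eta}\lesssim \lambda^{2\eta/(1+2\epsilon)}$, so we may convert $O_P(n^{\eta})$ bounds into $O_P(\lambda^{\epsilon'})$ bounds for any $\epsilon'>0$ by choosing $\eta$ small. Decompose $\bm{M}=\bm{N}-\bm{Q}$ with $\bm{N}_{ig}=w_{gi}e_{gi}$ and $\bm{Q}_{*g}=\bm{W}_g\bm{X}(\bm{X}^{\top}\bm{W}_g\bm{X})^{-1}\bm{X}^{\top}\bm{W}_g\bm{e}_g$. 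Lemma~\ref{supp:lemma:LatalaExt} applied to $\bm{N}^{\top}$ (whose entries $w_{gi}e_{gi}$ have the required polynomial moments by Assumption~\ref{supp:assumption:Psi} and the sub-Gaussianity of $\bm{e}_g$) immediately gives $\|\bm{N}\|_2^2 = O_P(p\, n^{2\eta})$.

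Third, for $\bm{Q}$ I bound $\|\bm{Q}\|_2^2 \leq \|\bm{Q}\|_F^2 = \sum_{g}\bm{a}_g^{\top}(\bm{X}^{\top}\bm{W}_g^2\bm{X})\bm{a}_g$, where $\bm{a}_g = (\bm{X}^{\top}\bm{W}_g\bm{X})^{-1}\bm{X}^{\top}\bm{W}_g\bm{e}_g \in \mathbb{R}^d$. Bounded moments of $w_{gi}$ and $w_{gi}^2$ combined with Corollary~\ref{supp:corollary:MaximalIne}, exactly as in the proof of \eqref{supp:equation:Abound}, give $\max_g\|n^{-1}\bm{X}^{\top}\bm{W}_g^2\bm{X}\|_2 = O_P(n^\eta)$ and $\max_g\|(n^{-1}\bm{X}^{\top}\bm{W}_g\bm{X})^{-1}\|_2 = O_P(1)$. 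Since $\E(w_{gi}e_{gi}\mid y_{gi},\bm{C})=e_{gi}$ (so that $\bm{X}^{\top}\bm{W}_g\bm{e}_g$ centers at $\bm{X}^{\top}\bm{e}_g$ up to mean-zero fluctuations with bounded moments), Lemma~\ref{supp:lemma:Powera} and Corollary~\ref{supp:corollary:MaximalIne} also give $\max_g\|n^{-1/2}\bm{X}^{\top}\bm{W}_g\bm{e}_g\|_2 = O_P(n^\eta)$, so that $\max_g\|\bm{a}_g\|_2 = O_P(n^{-1/2+\eta})$. Therefore $\bm{a}_g^{\top}(\bm{X}^{\top}\bm{W}_g^2\bm{X})\bm{a}_g = O_P(n^{3\eta})$ uniformly in $g$, so $\|\bm{Q}\|_2^2 \leq \|\bm{Q}\|_F^2 = O_P(p\,n^{3\eta})$. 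Combining gives $\|\bm{M}\|_2^2 = O_P(p\, n^{3\eta})$, and hence $\sup_{\bm{U}\in\Omega_\delta} f_3(\bm{U}) = O_P(\lambda^{-1}n^{3\eta}) = O_P(\lambda^{-1+\epsilon})$ after choosing $\eta$ small.

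The main technical obstacle is the $\bm{Q}$ term, because, unlike $\bm{N}$, its columns are neither independent nor identically distributed across $g$ (they depend on $\bm{W}_g$ in a non-linear way). Bounding $\|\bm{Q}\|_F$ instead of $\|\bm{Q}\|_2$ sidesteps this by exploiting that each column of $\bm{Q}$ lives in the fixed $d$-dimensional column space $\im(\bm{W}_g\bm{X})$, which is enough because we only need to lose at most $p^{1/2}$ in the Frobenius-vs-operator comparison and this is dominated by $\|\bm{N}\|_2$.
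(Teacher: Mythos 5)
Your proof is correct but takes a genuinely simpler route than the paper's. Both arguments open the same way: Lemma~\ref{supp:lemma:UtPU} controls $(\bm{U}^{\top}\bm{P}_g^{\perp}\bm{U})^{-1}$ uniformly on $\Omega_\delta$, and then one expands $\bm{P}_g^{\perp} = \bm{W}_g - \bm{W}_g\bm{X}(\bm{X}^{\top}\bm{W}_g\bm{X})^{-1}\bm{X}^{\top}\bm{W}_g$. After that they diverge. The paper keeps $\bm{U}$ explicit in the cross term $S(\bm{U})$ and the quadratic projection term $T(\bm{U})$, builds a bespoke decomposition of each into products $\diag(\cdot)\bm{W}^{\top}\diag(\cdot)\tilde{\bm{E}}$ with $\bm{W}_{gi}=w_{gi}-1$ and $\tilde{\bm{E}}_{gi}=w_{gi}e_{gi}$, and invokes Lemmas~\ref{supp:lemma:LatalaW} and~\ref{supp:lemma:LatalaExt} to control the operator norms of those factors, yielding the sharper intermediate rates $O_P(\lambda^{-1}n^{-1/2+\epsilon})$ and $O_P(\lambda^{-1}n^{-1+\epsilon})$. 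You instead discharge $\bm{U}$ at the outset via $\|\bm{U}^{\top}\bm{M}\|_F^2 \leq K\|\bm{M}\|_2^2$, so uniformity over $\Omega_\delta$ is immediate, and you handle the projection-correction columns by the Frobenius bound $\|\bm{Q}\|_2^2 \leq \|\bm{Q}\|_F^2 = \sum_g\|\bm{W}_g\bm{X}\bm{a}_g\|_2^2$, controlling each column uniformly in $g$ via Lemma~\ref{supp:lemma:Powera} and Corollary~\ref{supp:corollary:MaximalIne}. This sacrifices a factor of $p$ relative to an operator-norm bound on $\bm{Q}$, but, as you correctly observe, the loss is immaterial because the lemma only claims $O_P(\lambda^{-1+\epsilon})$ and the leading $\bm{N}$ term already contributes $O_P(pn^{2\eta})$; the Frobenius shortcut cannot change the order once $\eta$ is taken small relative to $\epsilon$ and one uses $\lambda \gtrsim n^{1/2+\epsilon}$ to convert powers of $n$ into powers of $\lambda$. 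The net effect is a shorter argument with a more transparent uniformity story, at the cost of not recovering the paper's finer rates (which are not needed here). Two small points worth making explicit when you write this up: the application of Lemma~\ref{supp:lemma:LatalaExt} is to the transpose $\bm{N}^{\top}\in\mathbb{R}^{p\times n}$ with entries $w_{gi}e_{gi}$, which is exactly the configuration the lemma covers; and the claim $\max_g\|n^{-1/2}\bm{X}^{\top}\bm{W}_g\bm{e}_g\|_2=O_P(n^{\eta})$ should be justified by splitting off $n^{-1/2}\bm{X}^{\top}\bm{e}_g$ and $n^{-1/2}\bm{X}^{\top}(\bm{W}_g-I_n)\bm{e}_g$ and applying Corollary~\ref{supp:corollary:MaximalIne} to each, just as you indicate.
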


\begin{proof}
Let $\epsilon \in (0,1/2)$ be an arbitrarily small constant. For any $\bm{U} \in \Omega_{\delta}$, there exists a constant $c>0$ such that
\begin{align*}
    &f_3(\bm{U}) \leq  c\{1+o_P(1)\}(\lambda p)^{-1}\sum_{g=1}^p \Tr\{ \bm{U}^{\top}\bm{P}_g^{\perp}\bm{e}_g \bm{e}_g^{\top}\bm{P}_g^{\perp}\bm{U} \}\\ =& c\{1+o_P(1)\}(\lambda p)^{-1}\sum_{g=1}^p \Tr\{ \bm{U}^{\top}\bm{W}_g\bm{e}_g \bm{e}_g^{\top}\bm{W}_g\bm{U} \}\\
    &-2c\{1+o_P(1)\}\underbrace{(\lambda p)^{-1}\sum_{g=1}^p \Tr\{ \bm{U}^{\top}\bm{W}_g\bm{X}(\bm{X}^{\top}\bm{W}_g\bm{X})^{-1}\bm{X}^{\top}\bm{W}_g\bm{e}_g \bm{e}_g^{\top}\bm{W}_g\bm{U} \}}_{=S(\bm{U})}\\
    &+c\{1+o_P(1)\}\underbrace{(\lambda p)^{-1}\sum_{g=1}^p \Tr\{ \bm{U}^{\top}\bm{W}_g\bm{X}(\bm{X}^{\top}\bm{W}_g\bm{X})^{-1}\bm{X}^{\top}\bm{W}_g\bm{e}_g \bm{e}_g^{\top}\bm{W}_g\bm{X}(\bm{X}^{\top}\bm{W}_g\bm{X})^{-1}\bm{X}^{\top}\bm{W}_g \bm{U} \}}_{=T(\bm{U})}
\end{align*}
by Lemma~\ref{supp:lemma:UtPU} for $\delta>0$ sufficiently small. By Lemma~\ref{supp:lemma:LatalaExt},
\begin{align*}
    \sup_{\bm{U} \in \Omega_{\delta}} (\lambda p)^{-1}\sum_{g=1}^p \Tr\{ \bm{U}^{\top}\bm{W}_g\bm{e}_g \bm{e}_g^{\top}\bm{W}_g\bm{U} \} = O_P(\lambda^{-1+\epsilon}).
\end{align*}
Since $f_3(\bm{U})$ only depends on $\bm{X}$ through $\im(\bm{X})$, it suffices to assume $n^{-1}\bm{X}^{\top}\bm{X}=I_d$, where by \ref{supp:assumption:X} of Assumption~\ref{supp:assumptions:FA}, the entries of $\bm{X}$ are uniformly bounded from above and below. Define $\bm{\Delta}_g = I_d - (n^{-1}\bm{X}^{\top} \bm{W}_g \bm{X})^{-1}$. By Corollary~\ref{supp:corollary:MaximalIne}, $\max_{g \in [p]}\norm*{ \bm{\Delta}_g }_2 = O_P(n^{-1/2 + \epsilon})$ and
\begin{align*}
    S(\bm{U}) =& (\lambda p)^{-1} \Tr\left( \bm{U}^{\top} \sum_{s=1}^d \bm{A}_s \bm{U} \right) + (\lambda p)^{-1} \Tr\left( \bm{U}^{\top} \sum_{r,s=1}^d \bm{B}_{rs} \bm{U} \right)\\
    \bm{A}_s =& \diag(\bm{X}_{\bigcdot s}) \bm{W}^{\top} \diag( n^{-1}\bm{X}_{\bigcdot s}^{\top}\bm{W}_1\bm{e}_1,\ldots, n^{-1}\bm{X}_{\bigcdot s}^{\top}\bm{W}_p\bm{e}_p)\tilde{\bm{E}}, \quad s\in [d]\\
    \bm{B}_{rs} =& \diag(\bm{X}_{\bigcdot r}) \bm{W}^{\top} \diag( n^{-1}\bm{\Delta}_{1_{rs}}\bm{X}_{\bigcdot s}^{\top}\bm{W}_1\bm{e}_1,\ldots, n^{-1}\bm{\Delta}_{p_{rs}}\bm{X}_{\bigcdot s}^{\top}\bm{W}_p\bm{e}_p)\tilde{\bm{E}}, \quad r,s \in [d]\\
    \bm{W}_{gi} =& w_{gi}-1, \quad \tilde{\bm{E}}_{gi} = \bm{e_{g_i}}w_{gi}, \quad g \in [p]; i \in [n].
\end{align*}
Since the entries of $\bm{X}$ are uniformly bounded, $\norm*{ \diag(\bm{X}_{\bigcdot s}) }_2 = O(1)$ for all $s \in [d]$. By Lemmas~\ref{supp:lemma:LatalaW} and \ref{supp:lemma:LatalaExt}, $\norm*{p^{-1/2} \bm{W}}_2, \norm*{p^{-1/2}\tilde{\bm{E}}}_2 = O_P(n^{\epsilon})$, and by Corollary~\ref{supp:corollary:MaximalIne},
\begin{align*}
    \norm*{ \diag( n^{-1}\bm{X}_{\bigcdot s}^{\top}\bm{W}_1\bm{e}_1,\ldots, n^{-1}\bm{X}_{\bigcdot s}^{\top}\bm{W}_p\bm{e}_p) }_2 = O_P(n^{-1/2+\epsilon}).
\end{align*}
Putting this all together implies $\sup_{\bm{U} \in \Omega}S(\bm{U}) = O_P(\lambda^{-1} n^{-1/2+\epsilon})$. An identical analysis can be used to show that $\sup_{\bm{U} \in \Omega}T(\bm{U}) = O_P(\lambda^{-1} n^{-1+\epsilon})$, which completes the proof.
\end{proof}

\begin{lemma}
\label{supp:lemma:f2}
Define $f_2(\bm{U}) = (\lambda p)^{-1}\sum_{g=1}^p \Tr\{ (\bm{U}^{\top}\bm{P}_g^{\perp} \bm{U} )^{-1} \bm{U}^{\top} \bm{P}_g^{\perp} \tilde{\bm{C}}\tilde{\bm{\ell}}_g \bm{e}_g^{\top} \bm{P}_g^{\perp}\bm{U} \}$. Then under the assumptions of Lemma~\ref{supp:lemma:EtE}, $\sup_{\bm{U} \in \Omega_{\delta}} \abs*{ f_2(\bm{U}) } = O_P(\lambda^{-1/2+\epsilon})$ for any constant $\epsilon \in (0,1/2)$ and $\delta>0$ sufficiently small.
\end{lemma}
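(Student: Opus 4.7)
The proof plan is to split $f_2$ via Cauchy--Schwarz into a product of the signal--signal and noise--noise terms already controlled in Lemmas~\ref{supp:lemma:f1} and~\ref{supp:lemma:EtE}. Since $\bm{P}_g^{\perp}$ is symmetric and $\bm{M}_g = (\bm{U}^{\top}\bm{P}_g^{\perp}\bm{U})^{-1}$ is symmetric positive definite on $\Omega_\delta$ by Lemma~\ref{supp:lemma:UtPU}, I define $\bm{a}_g = \bm{M}_g^{1/2}\bm{U}^{\top}\bm{P}_g^{\perp}\tilde{\bm{C}}\tilde{\bm{\ell}}_g \in \mathbb{R}^K$ and $\bm{b}_g = \bm{M}_g^{1/2}\bm{U}^{\top}\bm{P}_g^{\perp}\bm{e}_g \in \mathbb{R}^K$. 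A direct trace manipulation shows the $g$th summand of $f_2(\bm{U})$ equals $\bm{b}_g^{\top}\bm{a}_g$, so Cauchy--Schwarz applied to $(\lambda p)^{-1}\sum_g \bm{b}_g^{\top}\bm{a}_g$ gives $|f_2(\bm{U})|^2 \leq f_1(\bm{U})\,f_3(\bm{U})$, where $f_1$ and $f_3$ are the first and third summands in the decomposition~\eqref{supp:equation:Cobj}.

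Next, I bound the signal--signal factor uniformly on $\Omega_\delta$. Inequality~\eqref{supp:equation:Tr} from the proof of Lemma~\ref{supp:lemma:f1} gives the pointwise bound $f_{1g}(\bm{U}) \leq f_{1g}(\tilde{\bm{C}}) = \tilde{\bm{\ell}}_g^{\top}\bm{A}_g\tilde{\bm{\ell}}_g$ for every $\bm{U}\in\Omega_\delta$ and every $g\in[p]$. Combining this with $\max_g \norm{\bm{A}_g - I_K}_2 = O_P(n^{-1/2+\epsilon})$ from~\eqref{supp:equation:Abound} reduces the supremum to $\sup_{\bm{U}\in\Omega_\delta}f_1(\bm{U}) \leq \{1+o_P(1)\}(\lambda p)^{-1}\sum_g \norm{\tilde{\bm{\ell}}_g}_2^2$. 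Since $\norm{\tilde{\bm{\ell}}_g}_2^2 = \bm{\ell}_g^{\top}(\bm{C}^{\top}P_{\bm{X}}^{\perp}\bm{C})\bm{\ell}_g$, Lemma~\ref{supp:lemma:Cassumption} together with the moment bounds on $\bm{C}$ yield $\norm{n^{-1}\bm{C}^{\top}P_{\bm{X}}^{\perp}\bm{C}}_2 = O_P(1)$, hence $\sum_g\norm{\tilde{\bm{\ell}}_g}_2^2 \leq O_P(n)\sum_g \norm{\bm{\ell}_g}_2^2 = O_P(\lambda p)$ by the definition $\lambda = np^{-1}\sum_g\norm{\bm{\ell}_g}_2^2$. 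Consequently, $\sup_{\bm{U}\in\Omega_\delta}f_1(\bm{U}) = O_P(1)$. The noise--noise factor satisfies $\sup_{\bm{U}\in\Omega_\delta}f_3(\bm{U}) = O_P(\lambda^{-1+\epsilon})$ directly by Lemma~\ref{supp:lemma:EtE}.

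Multiplying the two uniform bounds and taking square roots yields $\sup_{\bm{U}\in\Omega_\delta}|f_2(\bm{U})| = O_P(\lambda^{-1/2+\epsilon/2})$; relabelling $\epsilon/2$ as $\epsilon$ delivers the claim. The main obstacle is essentially bookkeeping, namely verifying that both the pointwise-in-$g$ inequality~\eqref{supp:equation:Tr} and the operator-norm bound~\eqref{supp:equation:Abound} hold uniformly over $g\in[p]$ and $\bm{U}\in\Omega_\delta$, and that Lemma~\ref{supp:lemma:UtPU} guarantees $\bm{M}_g^{1/2}$ is well-defined on the whole set; all of this is already in place in the lemmas cited. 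The Cauchy--Schwarz step is decisive because $f_2(\bm{U})$ is a signed cross term with no direct concentration of its own, whereas each of its two factors admits sharp control from the deterministic upper bound~\eqref{supp:equation:Tr} and from the operator-norm arguments of Lemma~\ref{supp:lemma:EtE}.
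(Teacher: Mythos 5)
Your proof is correct and takes essentially the same approach as the paper's: both split the cross term $f_2$ via a Cauchy--Schwarz-type argument and then invoke the operator-norm/trace bounds already established for the signal--signal and noise--noise pieces in Lemmas~\ref{supp:lemma:f1} and~\ref{supp:lemma:EtE}. The paper factors $f_2(\bm{U})$ into the block bilinear form $\lambda^{-1/2}\sum_{r,s}\bm{U}_{\bigcdot r}^{\top}\bm{A}^{\top}\bm{B}_{rs}\bm{U}_{\bigcdot s}$ and bounds $\norm*{\bm{A}}_2,\norm*{\bm{B}_{rs}}_2$, whereas you apply Cauchy--Schwarz summand-by-summand to obtain $\abs*{f_2}^2 \le f_1 f_3$ directly, which is a marginally cleaner route to the same conclusion.
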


\begin{proof}
Let $\bm{U} \in \Omega$ and define $a_{g_{rs}}$ to be the $r,s$ element of $( \bm{U}^{\top} \bm{P}_{g}^{\perp}\bm{U} )^{-1/2}$ for $(r,s) \in [K] \times [K]$. Note that $\max_{g\in[p]}\abs*{a_{g_{rs}}} \leq c\{1+o_P(1)\}$ for some constant $c>0$ by Lemma~\ref{supp:lemma:UtPU}. Then for $\bar{\bm{\ell}}_g = \lambda^{-1/2}\tilde{\bm{\ell}}$, where $\max_{g \in [p]} \bar{\bm{\ell}}_g \leq c\{1+o_P(1)\}$ for some constant $c>0$,
\begin{align*}
    f_2(\bm{U}) =& \lambda^{-1/2}\sum_{r,s=1}^K \bm{U}_{\bigcdot r}^{\top}\bm{A}^{\top} \bm{B}_{rs} \bm{U}_{\bigcdot s}, \quad \bm{A} = p^{-1/2} \begin{pmatrix}
    \bar{\bm{\ell}}_1^{\top} \tilde{\bm{C}}^{\top} \bm{P}_{1}^{\perp}\\
    \vdots\\
    \bar{\bm{\ell}}_p^{\top} \tilde{\bm{C}}^{\top} \bm{P}_{p}^{\perp}
    \end{pmatrix}, \quad \bm{B}_{rs} = p^{-1/2}\begin{pmatrix} 
    a_{1_{rs}}\bm{e}_1^{\top} \bm{P}_{1}^{\perp}\\
    \vdots\\
    a_{p_{rs}}\bm{e}_p^{\top} \bm{P}_{p}^{\perp}
    \end{pmatrix},
\end{align*}
where $\norm*{ \bm{A} }_2, \norm*{\bm{B}_{rs}}_2 = O_P(n^{\epsilon})$ by the proofs of Lemmas~\ref{supp:lemma:f1} and \ref{supp:lemma:EtE}.
\end{proof}

\begin{corollary}
\label{supp:corollary:Consistency}
Suppose the assumptions of Lemma~\ref{supp:lemma:EtE} hold and let $\Omega_{\delta}$ be as defined in the statement of Lemma~\ref{supp:lemma:EtE}. Then for $f$ defined in \eqref{supp:equation:fMax}, $\hat{\bm{C}} = \argmax_{\bm{U} \in \Omega_{\delta}} f(\bm{U})$, and $\delta>0$ sufficiently small, there exists a constant $\eta \in (0,1/4)$ such that $\norm*{ P_{\hat{\bm{C}}} - P_{\tilde{\bm{C}}} }_2 = O_P(n^{-\eta})$ as $n,p \to \infty$.
\end{corollary}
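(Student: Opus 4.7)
The plan is to combine Lemmas~\ref{supp:lemma:f1}, \ref{supp:lemma:f2}, and \ref{supp:lemma:EtE} by exploiting the basic inequality $f(\hat{\bm{C}}) \geq f(\tilde{\bm{C}})$. This inequality holds because $\hat{\bm{C}}$ maximizes $f$ over $\Omega_\delta$ and $\tilde{\bm{C}}$ itself lies in $\Omega_\delta$: indeed, $\tilde{\bm{C}}$ has orthonormal columns by construction (using $n^{-1}\bm{C}^\top P_{\bm{X}}^\perp \bm{C} \succeq \epsilon I_K$ with probability tending to one from Lemma~\ref{supp:lemma:Cassumption}), satisfies $\bm{X}^\top \tilde{\bm{C}} = 0$, and trivially realizes $\|P_{\tilde{\bm{C}}} - P_{\tilde{\bm{C}}}\|_2 = 0 \leq \delta$.

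First, I would read off from \eqref{supp:equation:Cobj} that $f(\bm{U}) = \tfrac{1}{2}f_1(\bm{U}) + f_2(\bm{U}) + \tfrac{1}{2}f_3(\bm{U})$, so rearranging $f(\hat{\bm{C}}) \geq f(\tilde{\bm{C}})$ gives
\begin{align*}
\tfrac{1}{2}\{f_1(\tilde{\bm{C}}) - f_1(\hat{\bm{C}})\} \leq \{f_2(\hat{\bm{C}}) - f_2(\tilde{\bm{C}})\} + \tfrac{1}{2}\{f_3(\hat{\bm{C}}) - f_3(\tilde{\bm{C}})\}.
\end{align*}
Writing $\hat{\delta} := \|P_{\hat{\bm{C}}} - P_{\tilde{\bm{C}}}\|_2 \leq \delta$, Lemma~\ref{supp:lemma:f1} yields the lower bound
\begin{align*}
f_1(\tilde{\bm{C}}) - f_1(\hat{\bm{C}}) \geq c^{-1}\hat{\delta}^2\{1 - c\hat{\delta}(1 + \epsilon_2 n^{-1/2+\eta'})\}
\end{align*}
with high probability, for arbitrarily small constants $\epsilon_2, \eta' > 0$. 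Choosing $\delta$ smaller than some universal constant forces the bracket to stay $\geq 1/2$ uniformly as $n \to \infty$, giving the clean lower bound $\gtrsim \hat{\delta}^2$.

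Second, I would bound the right-hand side using the remaining two lemmas. Lemma~\ref{supp:lemma:f2} yields $|f_2(\hat{\bm{C}}) - f_2(\tilde{\bm{C}})| \leq 2\sup_{\bm{U}\in\Omega_\delta}|f_2(\bm{U})| = O_P(\lambda^{-1/2+\epsilon})$, and Lemma~\ref{supp:lemma:EtE}, together with the nonnegativity of $f_3$, gives $f_3(\hat{\bm{C}}) - f_3(\tilde{\bm{C}}) \leq f_3(\hat{\bm{C}}) = O_P(\lambda^{-1+\epsilon})$ for any constant $\epsilon > 0$. Combining with the curvature bound from step one and noting that the $f_2$-term dominates produces
\begin{align*}
\hat{\delta}^2 = O_P(\lambda^{-1/2+\epsilon}).
\end{align*}
Since Assumption~\ref{supp:assumption:lambda} gives $\lambda \gtrsim n^{1/2+\epsilon_0}$ for some constant $\epsilon_0 > 0$, taking $\epsilon$ sufficiently small yields $\hat{\delta} = O_P(n^{-\eta})$ for some $\eta$ slightly greater than $1/8$, which lies in $(0, 1/4)$ as required.

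The main obstacle will be ensuring that Lemma~\ref{supp:lemma:f1}'s quadratic curvature bound is actually effective, which amounts to shrinking $\delta$ enough that the bracket $\{1 - c\hat{\delta}(1+\epsilon_2 n^{-1/2+\eta'})\}$ is uniformly bounded away from zero with high probability; this is straightforward but essential, since the curvature provided by $f_1$ is precisely what converts the stochastic $O_P(\lambda^{-1/2+\epsilon})$ slack on the right-hand side into a rate for $\hat{\delta}$. A subsidiary point worth flagging is that the achievable exponent $\eta \approx 1/8$, rather than the $\eta \approx 1/4$ one might naively hope for from the noise-only term $f_3$, is dictated by the slower decay $\lambda^{-1/2+\epsilon}$ of the cross-term bound in Lemma~\ref{supp:lemma:f2}.
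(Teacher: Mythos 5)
Your proposal is correct and follows exactly the argument the paper intends: the paper's proof of this corollary is a one-line citation of Lemmas~\ref{supp:lemma:f1}, \ref{supp:lemma:EtE}, and \ref{supp:lemma:f2}, and your write-up supplies the implied basic-inequality argument, the $f = \tfrac{1}{2}f_1 + f_2 + \tfrac{1}{2}f_3$ decomposition, the curvature lower bound from Lemma~\ref{supp:lemma:f1} with $\delta$ small enough to keep the bracket bounded below, and the dominance of the $O_P(\lambda^{-1/2+\epsilon})$ cross-term. The rate bookkeeping is also right: with $\lambda \gtrsim n^{1/2+\epsilon}$ from Assumption~\ref{supp:assumption:lambda}, $\hat{\delta}^2 = O_P(\lambda^{-1/2+\epsilon})$ indeed yields an exponent $\eta$ strictly inside $(0,1/4)$.
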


\begin{proof}
This is a direct consequence of Lemmas~\ref{supp:lemma:f1}, \ref{supp:lemma:EtE}, and \ref{supp:lemma:f2}.
\end{proof}

\subsection{Properties and rate of convergence of $\hat{\bm{C}}$}
\label{supp:subsection:FARate}
Here we study the properties and rate of convergence of $\hat{\bm{C}}$. To do so, we use the decomposition discussed in Lemma~\ref{supp:lemma:Vz}, where any $\bm{U} \in \mathbb{R}^{n \times K}$ such that $\bm{U}^{\top}\bm{U}=I_K$ can be expressed as $\bm{U}=\tilde{\bm{C}}\bm{v}_u + \bm{Q}\bm{z}_u$, where the columns of $\bm{Q} \in \mathbb{R}^{n \times (n-K)}$ form an orthonormal basis for $\ker(\tilde{\bm{C}}^{\top})$, $\bm{v}_u,\bm{z}_u$ depend on $\bm{U}$, and $\bm{v}_u^{\top} \bm{v}_u + \bm{z}_u^{\top}\bm{z}_u = I_K$. We can therefore write $\hat{\bm{C}}$ defined in the statement of Corollary~\ref{supp:corollary:Consistency} as $\hat{\bm{C}} = \tilde{\bm{C}}\hat{\bm{v}} + \bm{Q}\hat{\bm{z}}$, where to understand the properties of $\hat{\bm{C}}$, we need only determine $\hat{\bm{v}}$ and $\hat{\bm{z}}$.

Define $\tilde{f}\{(\bm{v}^{\top}\, \bm{z}^{\top})^{\top}\} = f(\tilde{\bm{C}}\bm{v} + \bm{Q}\bm{z})$, where $f$ is as defined in \eqref{supp:equation:fMax}. Then for $\bm{U} = \tilde{\bm{C}}\bm{v}_u + \bm{Q}\tilde{\bm{z}}_u$,
\begin{align}
\label{supp:equation:stilde}
\begin{aligned}
    &\tilde{\bm{s}}\{(\bm{v}_u^{\top}\, \bm{z}_u^{\top})^{\top}\} = \begin{pmatrix}\tilde{\bm{C}}^{\top}\\ \bm{Q}^{\top}\end{pmatrix}\nabla_{\bm{U}} f(\bm{U})\\
    =& (p\lambda)^{-1} \sum_{g=1}^p \begin{pmatrix}\tilde{\bm{C}}^{\top}\\ \bm{Q}^{\top}\end{pmatrix} \{ \bm{P}_{g}^{\perp} - \bm{P}_{g}^{\perp}\bm{U}(\bm{U}^{\top}\bm{P}_g^{\perp} \bm{U})^{-1} \bm{U}^{\top} \bm{P}_g^{\perp}\}\bm{y}_g\bm{y}_g^{\top} \bm{P}_{g}^{\perp}\bm{U}(\bm{U}^{\top}\bm{P}_g^{\perp} \bm{U})^{-1},
\end{aligned}
\end{align}
where for any unitary matrix $\bm{v} \in \mathbb{R}^{K \times K}$,
\begin{align}
\label{supp:equation:stildeC}
\begin{aligned}
    &\tilde{\bm{s}}\{(\bm{v}^{\top}\, \bm{0})^{\top}\} = \begin{pmatrix} \bm{0}_{K \times K}\\ (p\lambda)^{-1}\sum_{g=1}^p \bm{Q}^{\top}\{ \bm{P}_{g}^{\perp} - \bm{P}_{g}^{\perp}\tilde{\bm{C}}(\tilde{\bm{C}}^{\top}\bm{P}_g^{\perp} \tilde{\bm{C}})^{-1} \tilde{\bm{C}}^{\top} \bm{P}_g^{\perp}\}\bm{e}_g\tilde{\bm{\ell}}_g^{\top} \end{pmatrix}\bm{v}\\
    & + \begin{pmatrix} \bm{0}_{K \times K}\\ (p\lambda)^{-1}\sum_{g=1}^p \bm{Q}^{\top}\{ \bm{P}_{g}^{\perp} - \bm{P}_{g}^{\perp}\tilde{\bm{C}}(\tilde{\bm{C}}^{\top}\bm{P}_g^{\perp} \tilde{\bm{C}})^{-1} \tilde{\bm{C}}^{\top} \bm{P}_g^{\perp}\}\bm{e}_g \bm{e}_g^{\top} \bm{P}_{g}^{\perp}\tilde{\bm{C}}(\tilde{\bm{C}}^{\top}\bm{P}_g^{\perp} \tilde{\bm{C}})^{-1} \end{pmatrix}\bm{v}.
\end{aligned}
\end{align}
The Hessian can be expressed as
\begin{align}
\label{supp:equation:Htilde}
\begin{aligned}
    \tilde{\bm{H}}(\bm{U}) =& \{I_K\otimes (\tilde{\bm{C}}\, \bm{Q})^{\top}\}\nabla_{\bm{U}}^2f(\bm{U})\{I_K\otimes (\tilde{\bm{C}}\, \bm{Q})\}\\
    =& (\lambda p)^{-1} \sum_{g=1}^p \{I_K\otimes (\tilde{\bm{C}}\, \bm{Q})^{\top}\} (\bm{U}^{\top}\bm{P}_g^{\perp} \bm{U})^{-1} \otimes \{\bm{A}_g(\bm{U}) \bm{y}_g\bm{y}_g^{\top} \bm{A}_g(\bm{U})\}\{I_K\otimes (\tilde{\bm{C}}\, \bm{Q})\}\\
    &- (\lambda p)^{-1} \sum_{g=1}^p \{I_K\otimes (\tilde{\bm{C}}\, \bm{Q})^{\top}\} \bm{B}_g(\bm{U})^{\top} \otimes \{ \bm{A}_g(\bm{U})\bm{y}_g\bm{y}_g^{\top} \bm{B}_g(\bm{U}) \} \bm{\Pi} \{I_K\otimes (\tilde{\bm{C}}\, \bm{Q})\}\\
    &- (\lambda p)^{-1} \sum_{g=1}^p \{I_K\otimes (\tilde{\bm{C}}\, \bm{Q})^{\top}\}\{\bm{B}_g(\bm{U})^{\top}\bm{y}_g\bm{y}_g^{\top} \bm{B}_g(\bm{U}) \}\otimes \bm{A}_g(\bm{U})\{I_K\otimes (\tilde{\bm{C}}\, \bm{Q})\}\\
    &- (\lambda p)^{-1} \sum_{g=1}^p \{I_K\otimes (\tilde{\bm{C}}\, \bm{Q})^{\top}\}\{ \bm{B}_g(\bm{U})^{\top}\bm{y}_g\bm{y}_g^{\top} \bm{A}_g(\bm{U}) \} \otimes \bm{B}_g(\bm{U}) \bm{\Pi}\{I_K\otimes (\tilde{\bm{C}}\, \bm{Q})\}
\end{aligned}
\end{align}
\begin{align*}
    \bm{A}_g(\bm{U}) = \bm{P}_{g}^{\perp} - \bm{P}_{g}^{\perp}\bm{U}(\bm{U}^{\top}\bm{P}_g^{\perp} \bm{U})^{-1} \bm{U}^{\top} \bm{P}_g^{\perp}, \quad \bm{B}_g(\bm{U}) = \bm{P}_{g}^{\perp}\bm{U}(\bm{U}^{\top}\bm{P}_g^{\perp} \bm{U})^{-1},
\end{align*}
where $\bm{\Pi} \in \mathbb{R}^{nK \times nK}$ is a permutation matrix that satisfies $\bm{\Pi}\myvec(\bm{U}) = \myvec(\bm{U}^{\top})$ for $\bm{U} \in \mathbb{R}^{n \times K}$. We next prove a series of lemmas that facilitate understanding $\tilde{\bm{s}}$ and $\tilde{\bm{H}}$, and will lead to an exact expression for $\hat{\bm{C}} - \tilde{\bm{C}}$.

\begin{lemma}[First term in \eqref{supp:equation:Htilde}]
\label{supp:Lemma:Htilde1}
Define $\bm{H}_{g}^{(1)}(\bm{U}) = (\bm{U}^{\top}\bm{P}_g^{\perp} \bm{U})^{-1} \otimes \{\bm{A}_g(\bm{U}) \bm{y}_g\bm{y}_g^{\top} \bm{A}_g(\bm{U})\}$, suppose Assumption~\ref{supp:assumptions:FA} holds, let $\eta \in (0,1/2)$, and let $\hat{\bm{C}}$ be as defined in Corollary~\ref{supp:corollary:Consistency}. Then for $\delta = O_P(n^{-\eta})$ and any constant $\epsilon \in (0,\eta)$,
\begin{align*}
    \sup_{\bm{U} \in \Omega_{\delta}} \norm*{ (\lambda p)^{-1}\sum_{g=1}^p \bm{H}_{g}^{(1)}(\bm{U}) }_2 = O_P(n^{-2\eta+\epsilon} + \lambda^{-1+\epsilon}).
\end{align*}
\end{lemma}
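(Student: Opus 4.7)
The plan is to bound the spectral norm by a scalar sum, then split that sum into a signal part (contributing the $n^{-2\eta+\epsilon}$ rate) and a noise part (contributing the $\lambda^{-1+\epsilon}$ rate). The Kronecker-product identity $\norm*{A\otimes B}_2=\norm*{A}_2\norm*{B}_2$ and the triangle inequality yield
\begin{align*}
\norm*{(\lambda p)^{-1}\textstyle\sum_{g=1}^p \bm{H}_g^{(1)}(\bm{U})}_2 \le (\lambda p)^{-1}\max_{g \in [p]}\norm*{(\bm{U}^\top\bm{P}_g^\perp\bm{U})^{-1}}_2 \sum_{g=1}^p \norm*{\bm{A}_g(\bm{U})\bm{y}_g}_2^2.
\end{align*}
Lemma~\ref{supp:lemma:UtPU} controls the first factor uniformly over $\Omega_\delta$ at $O_P(1)$, so it suffices to bound $(\lambda p)^{-1}\sum_g \norm*{\bm{A}_g(\bm{U})\bm{y}_g}_2^2$. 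Direct computation gives $\bm{P}_g^\perp\bm{X}=0$, so $\bm{A}_g(\bm{U})\bm{X}=0$ and, together with $P_{\bm{X}}^\perp\bm{C}\bm{\ell}_g=\tilde{\bm{C}}\tilde{\bm{\ell}}_g$, the target reduces to bounding $(\lambda p)^{-1}\sum_g \norm*{\bm{A}_g(\bm{U})(\tilde{\bm{C}}\tilde{\bm{\ell}}_g+\bm{e}_g)}_2^2$, which I split via $(a+b)^2\le 2a^2+2b^2$.

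For the signal sum I exploit $\bm{A}_g(\bm{U})\bm{U}=0$ together with the parametrization $\bm{U}=\tilde{\bm{C}}\bm{v}_u+\bm{Q}\bm{z}_u$ from Lemma~\ref{supp:lemma:Vz}, which for $\delta$ small enough gives the invertibility of $\bm{v}_u$ and the identity $\bm{A}_g(\bm{U})\tilde{\bm{C}}=-\bm{A}_g(\bm{U})\bm{Q}\bm{z}_u\bm{v}_u^{-1}$. Using $\norm*{\bm{z}_u}_2=O(\delta)$, $\norm*{\bm{v}_u^{-1}}_2=1+O(\delta^2)$, and the crude operator bound $\max_g \norm*{\bm{A}_g(\bm{U})}_2 \le c\max_{g,i}w_{gi}=O_P(n^\epsilon)$ coming from the polynomial-tail condition on $\Psi$ in Assumption~\ref{supp:assumption:Psi}(ii) and the uniform moment control on $y_{gi}$ via Lemma~\ref{supp:lemma:Cassumption}, I obtain $\sum_g \norm*{\bm{A}_g(\bm{U})\tilde{\bm{C}}\tilde{\bm{\ell}}_g}_2^2 \le O_P(\delta^2 n^{2\epsilon})\sum_g \norm*{\tilde{\bm{\ell}}_g}_2^2$, which equals $O_P(n^{-2\eta+\epsilon})\lambda p$ because $\sum_g \norm*{\tilde{\bm{\ell}}_g}_2^2=\Tr(\bm{L}^\top\bm{C}^\top P_{\bm{X}}^\perp\bm{C}\bm{L})=O_P(\lambda p)$ by Assumption~\ref{supp:assumption:lambda} and Lemma~\ref{supp:lemma:Cassumption}.

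For the noise sum I decompose $\bm{A}_g(\bm{U})\bm{e}_g=\bm{P}_g^\perp\bm{e}_g-\bm{P}_g^\perp\bm{U}(\bm{U}^\top\bm{P}_g^\perp\bm{U})^{-1}\bm{U}^\top\bm{P}_g^\perp\bm{e}_g$ and control each quadratic form with the matrix-concentration machinery used in the proof of Lemma~\ref{supp:lemma:EtE}: expand $\bm{P}_g^\perp=\bm{W}_g-\bm{W}_g\bm{X}(\bm{X}^\top\bm{W}_g\bm{X})^{-1}\bm{X}^\top\bm{W}_g$, extract the random matrices $\bm{W}=[w_{gi}-1]$ and $\tilde{\bm{E}}=[\bm{e}_{gi}w_{gi}]$, and invoke Lemmas~\ref{supp:lemma:LatalaW}, \ref{supp:lemma:LatalaExt}, and \ref{supp:lemma:VershExt} for the relevant operator-norm bounds. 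Uniformity over $\bm{U}\in\Omega_\delta$ comes for free because the only $\bm{U}$-dependence enters through $(\bm{U}^\top\bm{P}_g^\perp\bm{U})^{-1}$, which is uniformly $O_P(1)$ by Lemma~\ref{supp:lemma:UtPU}.

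The hard part will be the noise sum. A naive bound $\norm*{\bm{A}_g(\bm{U})\bm{e}_g}_2^2 \le c(\max_i w_{gi})\bm{e}_g^\top\bm{P}_g^\perp\bm{e}_g$ combined with $\sum_g \bm{e}_g^\top\bm{P}_g^\perp\bm{e}_g=O_P(np)$ delivers only $O_P(n^{1+\epsilon}/\lambda)$, which falls short of $O_P(\lambda^{-1+\epsilon})$ whenever $\lambda\ll n$. The sharper rate requires exploiting the cancellation in $\bm{A}_g(\bm{U})\bm{e}_g$: the weighted oblique projection onto $\bm{U}$ removes the dominant rank-$K$ component of the noise, and recovering this saving demands the peeling argument from the proof of Lemma~\ref{supp:lemma:EtE}—expanding $(\bm{X}^\top\bm{W}_g\bm{X})^{-1}=I_d+\bm{\Delta}_g$ with $\max_g\norm*{\bm{\Delta}_g}_2=O_P(n^{-1/2+\epsilon})$ and isolating leading-order matrix products amenable to Latała-type bounds—now adapted to the squared-residual form and taken uniformly over $\Omega_\delta$.
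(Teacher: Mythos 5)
Your signal-term argument is fine: the identity $\bm{A}_g(\bm{U})\bm{U}=0$ combined with $\bm{U}=\tilde{\bm{C}}\bm{v}_u+\bm{Q}\bm{z}_u$ gives $\bm{A}_g(\bm{U})\tilde{\bm{C}}=-\bm{A}_g(\bm{U})\bm{Q}\bm{z}_u\bm{v}_u^{-1}$, and with $\norm*{\bm{z}_u}_2=O(\delta)$ and $\max_g\norm*{\bm{A}_g(\bm{U})}_2=O_P(n^{\epsilon})$ you reach $\max_g\norm*{\bm{A}_g(\bm{U})\tilde{\bm{C}}}_2=O_P(n^{-\eta+\epsilon})$, matching the paper (which gets there via a slightly tighter decomposition of $(\bm{U}^{\top}\bm{P}_g^{\perp}\bm{U})^{-1}\bm{U}^{\top}\bm{P}_g^{\perp}\tilde{\bm{C}}$).

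The noise term is where the proposal breaks. Your very first reduction, $\norm*{(\lambda p)^{-1}\sum_g\bm{H}_g^{(1)}(\bm{U})}_2 \le (\lambda p)^{-1}\max_g\norm*{(\bm{U}^{\top}\bm{P}_g^{\perp}\bm{U})^{-1}}_2\sum_g\norm*{\bm{A}_g(\bm{U})\bm{y}_g}_2^2$, passes from the operator norm of the aggregate $n\times n$ matrix to its trace, and that information loss is unrecoverable. Each $\norm*{\bm{A}_g(\bm{U})\bm{e}_g}_2^2$ is genuinely of order $n$ (the matrix $\bm{A}_g(\bm{U})$ is not a projection but a weighted oblique residual map with operator norm near $\max_i w_{gi}$, and $\bm{e}_g$ is isotropic), so the sum is $\Theta_P(np)$ and $(\lambda p)^{-1}\sum_g\norm*{\bm{A}_g(\bm{U})\bm{e}_g}_2^2=\Theta_P(n/\lambda)$, which you yourself note is too large when $\lambda\ll n$. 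The fix you propose does not exist: there is no "dominant rank-$K$ component of the noise" for the oblique projection to cancel, because $\bm{e}_g\sim N(0,\sigma_g^2 I_n)$ is isotropic and removing $K$ directions saves only $O_P(K)$ out of $O_P(n)$. The saving the paper exploits is a random-matrix phenomenon visible only at the aggregate level: the paper keeps the Loewner bound $(\bm{U}^{\top}\bm{P}_g^{\perp}\bm{U})^{-1}\preceq c\{1+o_P(1)\}I_K$ to reduce to the operator norm of $(\lambda p)^{-1}\sum_g\bm{A}_g(\bm{U})\bm{e}_g\bm{e}_g^{\top}\bm{A}_g(\bm{U})$, whose dominant part is essentially $(\lambda p)^{-1}\tilde{\bm{E}}^{\top}\tilde{\bm{E}}$ with $\tilde{\bm{E}}_{gi}=w_{gi}\bm{e}_{gi}$; Lemma~\ref{supp:lemma:LatalaExt} gives $\norm*{p^{-1/2}\tilde{\bm{E}}}_2=O_P(n^{\epsilon})$, hence operator norm $O_P(\lambda^{-1+\epsilon})$ even though the trace is $\Theta_P(n/\lambda)$. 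The oblique-projected rank-$K$ piece is then bounded by its (constant-dimensional) trace via the argument in Lemma~\ref{supp:lemma:EtE}. So you must stop at the spectral norm of the $n\times n$ aggregate; the peeling machinery from Lemma~\ref{supp:lemma:EtE} cannot rescue the scalar sum $\sum_g\norm*{\bm{A}_g(\bm{U})\bm{e}_g}_2^2$ because that quantity really is of order $np$.
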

\begin{proof}
By Lemma~\ref{supp:lemma:UtPU}, $(\bm{U}^{\top}\bm{P}_g^{\perp} \bm{U})^{-1} \preceq c\{1+o_P(1)\} I_K$ for some constant $c >0$. Therefore,
\begin{align*}
    (\lambda p)^{-1}\sum_{g=1}^p \bm{H}_{g}^{(1)}(\bm{U}) \preceq c\{1+o_P(1)\} I_K \otimes \left[(\lambda p)^{-1} \sum_{g=1}^p \{\bm{A}_g(\bm{U}) \bm{y}_g\bm{y}_g^{\top} \bm{A}_g(\bm{U})\} \right].
\end{align*}
First,
\begin{align}
\label{supp:equation:Agyg}
\begin{aligned}
    \bm{A}_g(\bm{U}) \bm{y}_g =& \bm{A}_g(\bm{U}) \tilde{\bm{C}}\tilde{\bm{\ell}}_g + \bm{A}_g(\bm{U}) \bm{e}_g\\
    \bm{A}_g(\bm{U}) \tilde{\bm{C}}\tilde{\bm{\ell}}_g =& \bm{P}_g^{\perp} \tilde{\bm{C}}\tilde{\bm{\ell}}_g - \bm{P}_g^{\perp}\bm{U}(\bm{U}^{\top} \bm{P}_g^{\perp} \bm{U})^{-1} \bm{U}^{\top} \bm{P}_g^{\perp} \tilde{\bm{C}}\tilde{\bm{\ell}}_g\\
    \bm{A}_g(\bm{U}) \bm{e}_g =& \bm{P}_g^{\perp} \bm{e}_g - \bm{P}_g^{\perp}\bm{U}(\bm{U}^{\top} \bm{P}_g^{\perp} \bm{U})^{-1} \bm{U}^{\top} \bm{P}_g^{\perp} \bm{e}_g.
\end{aligned}
\end{align}
By the proof of Lemma~\ref{supp:lemma:EtE},
\begin{align*}
    \norm*{ (\lambda p)^{-1}\sum_{g=1}^p \bm{P}_g^{\perp} \bm{e}_g \bm{e}_g^{\top} \bm{P}_g^{\perp} }_2 = O_P(\lambda^{-1+\epsilon}).
\end{align*}
Next, since $\delta = O_P(n^{-\eta})$, it is straightforward to show that
\begin{align*}
    \sup_{\bm{U}}\max_{g \in [p]}\bm{U}^{T} (\bm{P}_g^{\perp})^2 \bm{U} \leq c\{1 + o_P(1)\}
\end{align*}
for some constant $c>0$, which implies 
\begin{align*}
    &\norm*{ (\lambda p)^{-1}\sum_{g=1}^p \bm{P}_g^{\perp}\bm{U}(\bm{U}^{\top} \bm{P}_g^{\perp} \bm{U})^{-1} \bm{U}^{\top} \bm{P}_g^{\perp} \bm{e}_g  \bm{e}_g^{\top} \bm{P}_g^{\perp} \bm{U} (\bm{U}^{\top} \bm{P}_g^{\perp} \bm{U})^{-1} \bm{U}^{T} \bm{P}_g^{\perp}}_2 \\
    \leq & (\lambda p)^{-1}\sum_{g=1}^p \Tr \{ \bm{P}_g^{\perp}\bm{U}(\bm{U}^{\top} \bm{P}_g^{\perp} \bm{U})^{-1} \bm{U}^{\top} \bm{P}_g^{\perp} \bm{e}_g  \bm{e}_g^{\top} \bm{P}_g^{\perp} \bm{U} (\bm{U}^{\top} \bm{P}_g^{\perp} \bm{U})^{-1} \bm{U}^{T} \bm{P}_g^{\perp} \}\\
    =& (\lambda p)^{-1}\sum_{g=1}^p \Tr \{ \bm{U}^{\top} \bm{P}_g^{\perp} \bm{e}_g  \bm{e}_g^{\top} \bm{P}_g^{\perp} \bm{U} (\bm{U}^{\top} \bm{P}_g^{\perp} \bm{U})^{-1} \bm{U}^{T} (\bm{P}_g^{\perp})^2 \bm{U}(\bm{U}^{\top} \bm{P}_g^{\perp} \bm{U})^{-1} \} \\
    \leq & c\{1+o_P(1)\}(\lambda p)^{-1} \sum_{g=1}^p \Tr \{ \bm{U}^{\top} \bm{P}_g^{\perp} \bm{e}_g  \bm{e}_g^{\top} \bm{P}_g^{\perp} \bm{U}\}.
\end{align*}
where the $o_P(1)$ error term is uniform over all $\bm{U}\in \Omega_{\delta}$. The proof of Lemma~\ref{supp:lemma:EtE} shows that
\begin{align*}
    \sup_{\bm{U}\in \Omega_{\delta}} (\lambda p)^{-1} \sum_{g=1}^p \Tr \{ \bm{U}^{\top} \bm{P}_g^{\perp} \bm{e}_g  \bm{e}_g^{\top} \bm{P}_g^{\perp} \bm{U}\} = O_P(\lambda^{-1+\epsilon})
\end{align*}
and implies
\begin{align*}
    \sup_{\bm{U}\in \Omega_{\delta}} \norm*{ (\lambda p)^{-1} \sum_{g=1}^p \bm{A}_g(\bm{U})\bm{e}_g\bm{e}_g^{\top} \bm{A}_g(\bm{U}) }_2 = O_P(\lambda^{-1+\epsilon}).
\end{align*}
For the remaining term in \eqref{supp:equation:Agyg}, let $\bm{U} = \tilde{\bm{C}}\bm{v}_u + \tilde{\bm{z}}_u$, where $\tilde{\bm{z}}_u \in \ker([\bm{X}, \tilde{\bm{C}}]^{\top})$ and $\bm{v}_u^{\top} \bm{v}_u + \tilde{\bm{z}}_u^{\top}\tilde{\bm{z}}_u = I_K$. By Lemma~\ref{supp:lemma:Vz}, $\norm*{ \bm{v}_u^{\top}\bm{v}_u-I_K }_2 \leq c \delta^2$ and $\norm*{ \tilde{\bm{z}}_u }_2 \leq c \delta$ for some constant $c>0$. Therefore, since $\delta= O_P(n^{-\eta})$ and by the proof of \eqref{supp:equation:Bbound} of Lemma~\ref{supp:lemma:Bs},
\begin{align*}
    \sup_{\bm{U}\in \Omega_{\delta}} \max_{g \in [p]} \norm*{ \bm{U}^{\top}\bm{P}_g^{\perp} \bm{U} - \bm{v}_u^{\top} \tilde{\bm{C}}^{\top} \bm{P}_g^{\perp} \tilde{\bm{C}}\bm{v}_u }_2 \leq c \delta\{1+o_P(1)\}
\end{align*}
for some constant $c>0$. Therefore,
\begin{align*}
    &(\bm{U}^{\top}\bm{P}_g^{\perp} \bm{U})^{-1} \bm{U}^{\top}\bm{P}_g^{\perp} \tilde{\bm{C}} = (\bm{U}^{\top}\bm{P}_g^{\perp} \bm{U})^{-1} \bm{v}_u^{\top} \tilde{\bm{C}}^{\top}\bm{P}_g^{\perp} \tilde{\bm{C}} + (\bm{U}^{\top}\bm{P}_g^{\perp} \bm{U})^{-1} \tilde{\bm{z}}_u^{\top} \bm{P}_g^{\perp} \tilde{\bm{C}}\\
    &\Rightarrow \sup_{\bm{U} \in \Omega_{\delta}}\max_{g \in [p]}\norm*{ (\bm{U}^{\top}\bm{P}_g^{\perp} \bm{U})^{-1} \bm{U}^{\top}\bm{P}_g^{\perp} \tilde{\bm{C}} - \bm{v}_u^{-1} }_2 = O_P(n^{-\eta}).
\end{align*}
Consequently,
\begin{align*}
    \sup_{\bm{U} \in \Omega_{\delta}}\max_{g \in [p]}\norm*{ \bm{A}_g(\bm{U})\tilde{\bm{C}} }_2 \leq O_P(n^{-\eta}) + \sup_{\bm{U} \in \Omega_{\delta}}\max_{g \in [p]}\norm{ \bm{P}_g^{\perp} \tilde{\bm{z}}_u }_2 = O_P(n^{-\eta +\epsilon})
\end{align*}
for any $\epsilon \in (0,\eta)$, which completes the proof.
\end{proof}

\begin{lemma}[Third term in \eqref{supp:equation:Htilde}]
\label{supp:Lemma:Htilde3}
Suppose the assumptions of Lemma~\ref{supp:Lemma:Htilde1} hold, let $\eta \in (0,1/2)$, and let $\Omega_{\delta}$ be as defined in Lemma~\ref{supp:lemma:f2}. Then if $\delta=O_P(n^{-\eta})$, there exists a unitary matrix $\bm{v} = \bm{v}(\bm{U}) \in \mathbb{R}^{K \times K}$ that depends on $\bm{U} \in \Omega_{\delta}$ such that
\begin{align*}
    \sup_{\bm{U} \in \Omega_{\delta}}\norm*{ (\lambda p)^{-1}\sum_{g=1}^p \{ \bm{B}_g(\bm{U})^{\top}\bm{y}_g \bm{y}_g^{\top} \bm{B}_g(\bm{U}) \} \otimes \bm{A}_g(\bm{U}) - (\lambda^{-1}\bm{v}^{\top}\bm{\Lambda}\bm{v}) \otimes \bm{Q}\bm{Q}^{\top} }_2 = O_P(n^{-\eta+\epsilon} + \lambda^{-1/2+\epsilon})
\end{align*}
for any constant $\epsilon > 0$.
\end{lemma}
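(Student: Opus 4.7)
The plan is to reduce the Kronecker sum to a dominant signal-signal term and then concentrate the $g$-average of $\bm{A}_g(\bm{U})$ around $\bm{Q}\bm{Q}^{\top}$. Writing $\bm{U}=\tilde{\bm{C}}\bm{v}_u+\bm{Q}\bm{z}_u$ as in Lemma~\ref{supp:lemma:Vz}, I would take $\bm{v}=\bm{v}(\bm{U})$ to be the unitary polar factor of $\bm{v}_u$, so that $\norm*{\bm{v}_u-\bm{v}}_2=O(\delta^2)$ and $\bm{\Lambda}=p^{-1}\sum_g\tilde{\bm{\ell}}_g\tilde{\bm{\ell}}_g^{\top}$ (so that $\lambda^{-1}\bm{\Lambda}=O_P(1)$).

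\textbf{Step 1 (strip $\bm{X}$ and isolate the signal-signal piece).} Because $\bm{P}_g^{\perp}\bm{X}=0$ by the definition of $\bm{P}_g^{\perp}$, we have $\bm{B}_g(\bm{U})^{\top}\bm{X}=0$, so we may replace $\bm{y}_g$ by $\tilde{\bm{C}}\tilde{\bm{\ell}}_g+\bm{e}_g$ inside $\bm{B}_g(\bm{U})^{\top}\bm{y}_g\bm{y}_g^{\top}\bm{B}_g(\bm{U})$. Expanding $\bm{y}_g\bm{y}_g^{\top}$ yields a signal-signal, two cross, and a noise-noise contribution. Using $\norm*{A\otimes B}_2=\norm*{A}_2\norm*{B}_2$, each noise-containing summand factors as a noise factor tensored with $\bm{A}_g(\bm{U})$; the latter is uniformly $O_P(1)$ by Lemma~\ref{supp:lemma:UtPU}, while the former sums are controlled via Lemmas~\ref{supp:lemma:Cte}, \ref{supp:lemma:EtE}, and \ref{supp:lemma:f2} to yield an aggregate error of $O_P(\lambda^{-1/2+\epsilon})$.

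\textbf{Step 2 (approximation of $\bm{B}_g^{\top}\tilde{\bm{C}}$).} The proof of Lemma~\ref{supp:Lemma:Htilde1} already gives $\sup_{\bm{U}\in\Omega_{\delta}}\max_{g\in[p]}\norm*{\bm{B}_g(\bm{U})^{\top}\tilde{\bm{C}}-\bm{v}_u^{-1}}_2=O_P(n^{-\eta+\epsilon})$. Since $\norm*{\tilde{\bm{\ell}}_g}_2\lesssim\lambda^{1/2}$ uniformly, plugging this in replaces the signal-signal term by $\bm{v}_u^{-\top}\tilde{\bm{\ell}}_g\tilde{\bm{\ell}}_g^{\top}\bm{v}_u^{-1}$ with a remainder whose operator norm, after the $(\lambda p)^{-1}$ normalization and Kronecker with the $O_P(1)$ factor $\bm{A}_g(\bm{U})$, is $O_P(n^{-\eta+\epsilon})$.

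\textbf{Step 3 (concentration of the weighted $\bm{A}_g$ average).} Decompose $\bm{A}_g(\bm{U})=\bm{Q}\bm{Q}^{\top}+\bigl\{P_{[\bm{X},\bm{U}]}^{\perp}-\bm{Q}\bm{Q}^{\top}\bigr\}+\bigl\{\bm{A}_g(\bm{U})-P_{[\bm{X},\bm{U}]}^{\perp}\bigr\}$. The first bracket is deterministic given $\bm{U}$, of operator norm $O(\delta)=O_P(n^{-\eta})$, and combines with $(\lambda p)^{-1}\sum_g\tilde{\bm{\ell}}_g\tilde{\bm{\ell}}_g^{\top}=O_P(1)$ to give the stated rate. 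The second bracket is mean-zero in $\bm{W}_g$ conditional on $\bm{C}$, with entries independent across $g$; applying Corollary~\ref{supp:corollary:MaximalIne} block-wise to the $K\times K$ outer Kronecker structure, weighted by the uniformly bounded $\lambda^{-1}\tilde{\bm{\ell}}_g\tilde{\bm{\ell}}_g^{\top}$, yields an $O_P(\lambda^{-1/2}n^{\epsilon})$ residual at the operator-norm level via $\norm*{[M_{rs}]_{r,s\in[K]}}_2\leq K\max_{r,s}\norm*{M_{rs}}_2$. The leading term collapses to $\lambda^{-1}(\bm{v}_u^{-\top}\bm{\Lambda}\bm{v}_u^{-1})\otimes\bm{Q}\bm{Q}^{\top}$; swapping $\bm{v}_u^{-1}$ for $\bm{v}$ via the polar decomposition introduces an $O(\delta^2)$ correction absorbable into the stated rate.

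\textbf{Main obstacle.} The hard part is uniformity over $\bm{U}\in\Omega_{\delta}$, which is essential for the downstream use in Corollary~\ref{supp:corollary:Pchat}. Since $\bm{B}_g(\bm{U})$ and $\bm{A}_g(\bm{U})$ are only polynomially Lipschitz in $\bm{U}$, I would combine the pointwise bounds above with an $n^{-c}$-net of $\Omega_{\delta}$ and absorb the covering entropy into the $n^{\epsilon}$ slack. A secondary subtlety is that $\bm{A}_g(\bm{U})$ is $n\times n$, so the Kronecker sum lives in $\mathbb{R}^{nK\times nK}$; I would avoid a direct operator-norm concentration in that ambient space by exploiting the $K\times K$ outer block structure as above, which is precisely what makes the $O_P(\lambda^{-1/2}n^{\epsilon})$ rate (rather than a much worse $n$-dependent rate) achievable.
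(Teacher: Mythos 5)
Your high-level plan (expand $\bm{y}_g\bm{y}_g^{\top}$, reduce to a signal-signal Kronecker piece, replace $\bm{B}_g(\bm{U})^{\top}\tilde{\bm{C}}$ by $\bm{v}_u^{-1}$ via the proof of Lemma~\ref{supp:Lemma:Htilde1}, and then concentrate the $g$-average of $\bm{A}_g(\bm{U})$) matches the paper's skeleton, and Steps~1 and~2 are essentially what the paper does. The gap is in Step~3. You decompose $\bm{A}_g(\bm{U})=\bm{Q}\bm{Q}^{\top}+\{P_{[\bm{X},\bm{U}]}^{\perp}-\bm{Q}\bm{Q}^{\top}\}+\{\bm{A}_g(\bm{U})-P_{[\bm{X},\bm{U}]}^{\perp}\}$ and claim the last bracket is controlled by applying Corollary~\ref{supp:corollary:MaximalIne} ``block-wise'' and the bound $\norm*{[M_{rs}]}_2\leq K\max_{r,s}\norm*{M_{rs}}_2$. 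But each block $M_{rs}=(\lambda p)^{-1}\sum_g \tilde{\bm{s}}_{g_{rs}}\{\bm{A}_g(\bm{U})-P_{[\bm{X},\bm{U}]}^{\perp}\}$ is itself a dense $n\times n$ random matrix, and Corollary~\ref{supp:corollary:MaximalIne} is a \emph{scalar} maximal inequality; it bounds $\max_j\abs*{S_j}$ for real-valued $S_j$'s and says nothing about the operator norm of an $n\times n$ average of non-diagonal random matrices. (Entrywise control does not imply operator-norm control when $n\to\infty$.) The paper instead decomposes $\bm{A}_g(\bm{U})$ into the three explicit pieces $\bm{W}_g$, $\bm{W}_g\bm{X}(\bm{X}^{\top}\bm{W}_g\bm{X})^{-1}\bm{X}^{\top}\bm{W}_g$, and $\bm{P}_g^{\perp}\bm{U}(\bm{U}^{\top}\bm{P}_g^{\perp}\bm{U})^{-1}\bm{U}^{\top}\bm{P}_g^{\perp}$; the first gives the $I_n$ limit via scalar concentration (valid because $\bm{W}_g-I_n$ is diagonal, so operator norm equals the sup of entries); the third, after replacing $\bm{U}$ by $\tilde{\bm{C}}$, leads to $\bm{M}^{(rs)}=(\lambda p)^{-1}\sum_g\bm{s}_{g_{rs}}\bm{W}_g\tilde{\bm{C}}\tilde{\bm{C}}^{\top}\bm{W}_g$, whose operator norm is controlled by rewriting it as $\sum_k\diag(\tilde{\bm{C}}_{\bigcdot k})\{p^{-1}\bm{W}^{\top}\bm{S}^{(rs)}\bm{W}\}\diag(\tilde{\bm{C}}_{\bigcdot k})+\cdots$ and invoking the matrix-concentration result $\norm*{p^{-1/2}\bm{W}}_2=O_P(n^{\epsilon})$ from Lemma~\ref{supp:lemma:LatalaW}. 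That Latala-type matrix bound is the key tool you are missing.

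Two secondary inaccuracies. First, $\bm{A}_g(\bm{U})-P_{[\bm{X},\bm{U}]}^{\perp}$ is not mean-zero in the weights: $\bm{A}_g(\bm{U})$ is a nonlinear function of $\bm{W}_g$, so $\E\{\bm{A}_g(\bm{U})\mid\bm{C},\bm{U}\}$ does not equal the value of $\bm{A}_g(\bm{U})$ at $\bm{W}_g=I_n$. Second, the covering-net step is unnecessary: after the $\bm{B}_g(\bm{U})^{\top}\tilde{\bm{C}}\to\bm{v}_u^{-1}$ replacement (uniform over $\Omega_{\delta}$ via Lemma~\ref{supp:lemma:Bs}) and the $\bm{U}\to\tilde{\bm{C}}$ substitution inside the $\bm{U}$-projection piece (uniform error $O_P(n^{-\eta+\epsilon})$), the remaining objects no longer depend on $\bm{U}$, so uniformity comes for free.
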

\begin{proof}
We see that
\begin{align}
\label{supp:equation:ByyB}
\begin{aligned}
    \bm{B}_g(\bm{U})^{\top}\bm{y}_g\bm{y}_g^{\top} \bm{B}_g(\bm{U}) =& \bm{B}_g(\bm{U})^{\top} \tilde{\bm{C}}\tilde{\bm{\ell}}_g \tilde{\bm{\ell}}_g^{\top}\tilde{\bm{C}}^{\top} \bm{B}_g(\bm{U}) + \bm{B}_g(\bm{U})^{\top} \bm{e}_g \bm{e}_g^{\top} \bm{B}_g(\bm{U})\\
    &+\bm{B}_g(\bm{U})^{\top}\bm{e}_g \tilde{\bm{\ell}}_g^{\top}\tilde{\bm{C}}^{\top} \bm{B}_g(\bm{U}) + \{ \bm{B}_g(\bm{U})^{\top}\bm{e}_g \tilde{\bm{\ell}}_g^{\top}\tilde{\bm{C}}^{\top} \bm{B}_g(\bm{U}) \}^{\top}.
\end{aligned}
\end{align}
First, since $\bm{A}_g(\bm{U}) \preceq \bm{W}_g$,
\begin{align*}
    \{ \bm{B}_g(\bm{U})^{\top} \bm{e}_g \bm{e}_g^{\top} \bm{B}_g(\bm{U}) \} \otimes \bm{A}_g(\bm{U}) \preceq \{ \bm{B}_g(\bm{U})^{\top} \bm{e}_g \bm{e}_g^{\top} \bm{B}_g(\bm{U}) \} \otimes \bm{W}_g,
\end{align*}
where
\begin{align*}
    &\norm*{ (\lambda p)^{-1}\sum_{g=1}^p \{ \bm{B}_g(\bm{U})^{\top} \bm{e}_g \bm{e}_g^{\top} \bm{B}_g(\bm{U}) \} \otimes \bm{W}_g }_2 \leq \max_{i \in [n]} (\lambda p)^{-1}\sum_{g=1}^p w_{gi}\Tr\{ \bm{B}_g(\bm{U})^{\top} \bm{e}_g \bm{e}_g^{\top} \bm{B}_g(\bm{U}) \}.
\end{align*}
Identical techniques to those used to prove Lemma~\ref{supp:lemma:EtE} can be used to show that
\begin{align*}
    \sup_{\bm{U} \in \Omega_{\delta}}\max_{i \in [n]} (\lambda p)^{-1}\sum_{g=1}^p w_{gi}\Tr\{ \bm{B}_g(\bm{U})^{\top} \bm{e}_g \bm{e}_g^{\top} \bm{B}_g(\bm{U}) \}= O_P(\lambda^{-1+\epsilon/2} \max_{(g,i) \in [p] \times [n]}w_{gi}) &= O_P(\lambda^{-1+\epsilon}).
\end{align*}
Next, it is straightforward to show that for any $\epsilon \in (0,\eta)$,
\begin{align*}
    \sup_{\bm{U} \in \Omega_{\delta}}\max_{g \in [p]} \norm*{ \bm{B}_g(\bm{U})^{\top} \tilde{\bm{C}}\tilde{\bm{\ell}}_g \tilde{\bm{\ell}}_g^{\top}\tilde{\bm{C}}^{\top} \bm{B}_g(\bm{U}) - \bm{v}_u^{-1} \tilde{\bm{\ell}}_g \tilde{\bm{\ell}}_g^{\top}\bm{v}_u^{-\top} }_2 = O_P(n^{-\eta + \epsilon}\lambda).
\end{align*}
Since $\norm*{ \bm{A}_g(\bm{U}) }_2 \leq \norm*{ \bm{W}_g }_2$, this implies
\begin{align*}
    &\sup_{\bm{U} \in \hat{\Omega}} \norm*{ (\lambda p)^{-1} \sum_{g=1}^p \{ \bm{B}_g(\bm{U})^{\top} \tilde{\bm{C}}\tilde{\bm{\ell}}_g \tilde{\bm{\ell}}_g^{\top}\tilde{\bm{C}}^{\top} \bm{B}_g(\bm{U}) \} \otimes \bm{A}_g(U) - (\lambda p)^{-1} \sum_{g=1}^p ( \bm{v}_u^{-1} \tilde{\bm{\ell}}_g \tilde{\bm{\ell}}_g^{\top}\bm{v}_u^{-\top} ) \otimes \bm{A}_g(U) }_2\\
    =& O_P(n^{-\eta + \epsilon}),
\end{align*}
where
\begin{align*}
    (\lambda p)^{-1} \sum_{g=1}^p ( \bm{v}_u^{-1} \tilde{\bm{\ell}}_g \tilde{\bm{\ell}}_g^{\top}\bm{v}_u^{-\top} ) \otimes \bm{A}_g(U) = (\bm{v}_u^{-1}\otimes I_n ) (\lambda p)^{-1}\sum_{g=1}^p  (\tilde{\bm{\ell}}_g \tilde{\bm{\ell}}_g^{\top}) \otimes \bm{A}_g(U) (\bm{v}_u^{-\top}\otimes I_n ),
\end{align*}
and for $\tilde{\bm{s}}_g = \tilde{\bm{\ell}}_g \tilde{\bm{\ell}}_g^{\top}$,
\begin{align*}
    (\lambda p)^{-1}\sum_{g=1}^p  \tilde{\bm{s}}_g \otimes \bm{A}_g(U) =& (\lambda p)^{-1}\sum_{g=1}^p  \tilde{\bm{s}}_g \otimes \bm{P}_g^{\perp} - (\lambda p)^{-1}\sum_{g=1}^p  \tilde{\bm{s}}_g \otimes \{\bm{P}_g^{\perp}\bm{U}( \bm{U}^{\top} \bm{P}_g^{\perp} \bm{U})^{-1} \bm{U}^{\top} \bm{P}_g^{\perp}\}\\
    =& (\lambda p)^{-1}\sum_{g=1}^p  \tilde{\bm{s}}_g \otimes \bm{W}_g - (\lambda p)^{-1}\sum_{g=1}^p  \tilde{\bm{s}}_g \otimes \{\bm{W}_g\bm{X}( \bm{X}^{\top} \bm{W}_g \bm{X})^{-1} \bm{X}^{\top} \bm{W}_g\}\\
    -& (\lambda p)^{-1}\sum_{g=1}^p  \tilde{\bm{s}}_g \otimes \{\bm{P}_g^{\perp}\bm{U}( \bm{U}^{\top} \bm{P}_g^{\perp} \bm{U})^{-1} \bm{U}^{\top} \bm{P}_g^{\perp}\}.
\end{align*}
Since $\delta = O_P(n^{-\eta})$ and by Lemma~\ref{supp:lemma:Bs},
\begin{align*}
    &\sup_{\bm{U} \in \Omega_{\delta}}\norm*{ (\lambda p)^{-1}\sum_{g=1}^p  \tilde{\bm{s}}_g \otimes \{\bm{P}_g^{\perp}\bm{U}( \bm{U}^{\top} \bm{P}_g^{\perp} \bm{U})^{-1} \bm{U}^{\top} \bm{P}_g^{\perp}\} - (\lambda p)^{-1}\sum_{g=1}^p  \tilde{\bm{s}}_g \otimes \{\bm{P}_g^{\perp}\tilde{\bm{C}} \tilde{\bm{C}}^{\top} \bm{P}_g^{\perp}\} }_2\\
    =& O_P(n^{-\eta +\epsilon})
\end{align*}
for any $\epsilon \in (0,\eta)$. Next, for $\bm{R} = n^{-1}\bm{C}^{\top}P_X^{\perp} \bm{C}$ and $\bm{\Lambda} = np^{-1}\bm{L}^{\top}\bm{L}$,
\begin{align*}
    (\lambda p)^{-1}\sum_{g=1}^p  \tilde{\bm{s}}_g \otimes \bm{W}_g = \bm{R}^{1/2} \otimes I_n \{p^{-1} \sum_{g=1}^p (\lambda^{-1}\bm{\ell}_g \bm{\ell}_g^{\top}) \otimes \bm{W}_g \} \bm{R}^{1/2} \otimes I_n
\end{align*}
such that $\norm*{p^{-1} \sum_{g=1}^p (\lambda^{-1}\bm{\ell}_{g_r}\bm{\ell}_{g_s}) (\bm{W}_g-I_n)}_2 = O_P(n^{-1/2+\epsilon})$ by Corollary~\ref{supp:corollary:MaximalIne}, which implies
\begin{align*}
    \norm*{ (\lambda p)^{-1}\sum_{g=1}^p  \tilde{\bm{s}}_g \otimes \bm{W}_g - (\lambda^{-1}\bm{\Lambda}) \otimes I_n }_2 = O_P(n^{-1/2+\epsilon}).
\end{align*}
We also have that since $\max_{g \in [p]}\norm*{ n^{-1/2}\bm{X}^{\top}\bm{W}_g \tilde{\bm{C}} }_2 = O_P(n^{-1/2+\epsilon})$ and $\max_{g \in [p]}\norm*{ n^{-1}\bm{X}^{\top}\bm{W}_g \bm{X} - n^{-1}\bm{X}^{\top}\bm{X} }_2 = O_P(n^{-1/2+\epsilon})$,
\begin{align*}
    \norm*{ (\lambda p)^{-1}\sum_{g=1}^p  \bm{s}_g \otimes (\bm{W}_g\tilde{\bm{C}} \tilde{\bm{C}}^{\top} \bm{W}_g) - (\lambda p)^{-1}\sum_{g=1}^p  \tilde{\bm{s}}_g \otimes (\bm{P}_g^{\perp}\tilde{\bm{C}} \tilde{\bm{C}}^{\top} \bm{P}_g^{\perp}) }_2 = O_P(n^{-1/2+\epsilon}), \quad \bm{s}_g = \bm{\ell}_g \bm{\ell}_g^{\top}.
\end{align*}
For any $r,s \in [K]$, define $\bm{M}^{(rs)} = (\lambda p)^{-1}\sum_{g=1}^p  \bm{s}_{g_{rs}}(\bm{W}_g\tilde{\bm{C}} \tilde{\bm{C}}^{\top} \bm{W}_g)$. Then
\begin{align*}
    \bm{M}^{(rs)}_{ij} =& \tilde{\bm{C}}_{i\bigcdot}^{\top}  \tilde{\bm{C}}_{j\bigcdot}p^{-1}\sum_{g=1}^p (\bm{s}_{g_{rs}}/\lambda)(w_{gi}-1)(w_{gj}-1) + \tilde{\bm{C}}_{i\bigcdot}^{\top}  \tilde{\bm{C}}_{j\bigcdot}p^{-1}\sum_{g=1}^p (\bm{s}_{g_{rs}}/\lambda)(w_{gi}-1)\\
    & + \tilde{\bm{C}}_{i\bigcdot}^{\top}  \tilde{\bm{C}}_{j\bigcdot}p^{-1}\sum_{g=1}^p (\bm{s}_{g_{rs}}/\lambda)(w_{gj}-1) + \tilde{\bm{C}}_{i\bigcdot}^{\top}  \tilde{\bm{C}}_{j\bigcdot}p^{-1}\sum_{g=1}^p (\bm{s}_{g_{rs}}/\lambda), \quad i,j \in [n].
\end{align*}
Therefore,
\begin{align*}
    \bm{M}^{(rs)} &= \sum_{k=1}^K \diag(\tilde{\bm{C}}_{\bigcdot k}) \{p^{-1}\bm{W}^{\top} \bm{S}^{(rs)} \bm{W}\} \diag(\tilde{\bm{C}}_{\bigcdot k}) + \bar{\bm{C}}^{(rs)} \tilde{\bm{C}} + \{ \bar{\bm{C}}^{(rs)} \tilde{\bm{C}} \}^{\top} + \bm{\Lambda}_{rs}\tilde{\bm{C}}\tilde{\bm{C}}^{\top}\\
    \bm{W}_{gi} &= w_{gi} - 1, \quad \bar{\bm{C}}^{(rs)}_{i \bigcdot } = p^{-1}\sum_{g=1}^p (\bm{s}_{g_{rs}}/\lambda)(w_{gi}-1)\tilde{\bm{C}}_{i \bigcdot}, \quad g \in [p]; i \in [n]\\
    \bm{S}^{(rs)} &= \diag( \bm{s}_{1_{rs}}/\lambda, \ldots, \bm{s}_{p_{rs}}/\lambda ).
\end{align*}
First, for some constant $c>0$,
\begin{align*}
    \norm*{ \diag(\tilde{\bm{C}}_{\bigcdot k}) \{p^{-1}\bm{W}^{\top} \bm{S}^{(rs)} \bm{W}\} \diag(\tilde{\bm{C}}_{\bigcdot k}) }_2 \leq c\norm*{p^{-1/2}\bm{W}}_2^2 \max_{i \in [n]} \tilde{\bm{C}}_{i k}^2 = O_P(n^{-1+\epsilon})
\end{align*}
by Lemma~\ref{supp:lemma:LatalaW}. Next, it is easy to see that $\norm*{\bar{\bm{C}}^{(rs)} \tilde{\bm{C}}}_2 \leq \norm*{\bar{\bm{C}}^{(rs)}}_2 = O_P(n^{-1/2+\epsilon})$. Lastly, since the entries of $\bm{X}$ are uniformly bounded, identical techniques can be used to show that
\begin{align*}
    \norm*{ (\lambda p)^{-1}\sum_{g=1}^p  \tilde{\bm{s}}_g \otimes \{\bm{W}_g\bm{X}( \bm{X}^{\top} \bm{W}_g \bm{X})^{-1} \bm{X}^{\top} \bm{W}_g\} - \bm{\Lambda} \otimes P_X }_2 = O_P(n^{-1/2+\epsilon}).
\end{align*}
Putting this all together implies
\begin{align*}
    \sup_{\bm{U} \in \Omega_{\delta}}\norm*{ (\lambda p)^{-1}\sum_{g=1}^p \{ \bm{B}_g(\bm{U})^{\top}\tilde{\bm{C}}\tilde{\bm{\ell}}_g \tilde{\bm{\ell}}_g^{\top} \tilde{\bm{C}}^{\top} \bm{B}_g(\bm{U}) \} \otimes \bm{A}_g(\bm{U}) - (\lambda^{-1}\bm{v}_u^{-1}\bm{\Lambda}\bm{v}_u^{-\top}) \otimes \bm{Q}\bm{Q}^{\top} }_2 = O_P(n^{-\eta+\epsilon}),
\end{align*}
where $\norm*{ \bm{v}_u - \bm{v} }_2 = O(\delta^2) = O_P(n^{-2\eta})$ for some unitary matrix $\bm{v} \in \mathbb{R}^{K \times K}$ by Lemma~\ref{supp:lemma:Vz}.

For the remaining two terms in \eqref{supp:equation:ByyB}, we note that for
\begin{align*}
    \bm{S}(\bm{U}) = [\bm{B}_1(\bm{U})^{\top}\tilde{\bm{C}}\tilde{\bm{\ell}}_1 \cdots \bm{B}_p(\bm{U})^{\top}\tilde{\bm{C}}\tilde{\bm{\ell}}_p],\, \bm{T}(\bm{U}) = [\bm{B}_1(\bm{U})^{\top}\bm{e}_1 \cdots \bm{B}_p(\bm{U})^{\top}\bm{e}_p] \in \mathbb{R}^{K \times p},
\end{align*}
\begin{align*}
    \norm*{ (\lambda p)^{-1}\sum_{g=1}^p \{ \bm{B}_g(\bm{U})^{\top}\tilde{\bm{C}}\tilde{\bm{\ell}}_g \bm{e}_g^{\top} \bm{B}_g(\bm{U}) \} \otimes \bm{A}_g(\bm{U}) }_2 \leq & \norm*{(\lambda p)^{-1}\bm{S}(\bm{U})\bm{S}(\bm{U})^{\top}}_2^{1/2}\\
    &\times \norm*{(\lambda p)^{-1}\bm{T}(\bm{U})\bm{T}(\bm{U})^{\top}}_2^{1/2}.
\end{align*}
Our above work shows that $\norm*{(\lambda p)^{-1}\bm{S}(\bm{U})\bm{S}(\bm{U})^{\top}}_2^{1/2} = O_P(1)$ and $\norm*{(\lambda p)^{-1}\bm{T}(\bm{U})\bm{T}(\bm{U})^{\top}}_2^{1/2} = O_P(n^{-\eta + \epsilon} + \lambda^{-1/2+\epsilon})$, which completes the proof.
\end{proof}

\begin{lemma}[Second and fourth terms of \eqref{supp:equation:Htilde}]
\label{supp:Lemma:Htilde24}
Suppose the assumptions of Lemma~\ref{supp:Lemma:Htilde1} hold, let $\eta \in (0,1/2)$, and let $\Omega_{\delta}$ be as defined in Lemma~\ref{supp:lemma:f2}. Then if $\delta=O_P(n^{-\eta})$,
\begin{align*}
    \sup_{\bm{U} \in \Omega_{\delta}}\norm*{ (\lambda p)^{-1}\sum_{g=1}^p\bm{B}_g(\bm{U})^{\top} \otimes \{ \bm{A}_g(\bm{U})\bm{y}_g\bm{y}_g^{\top} \bm{B}_g(\bm{U}) \} \bm{\Pi} }_2 = O_P(n^{-\eta + \epsilon} + \lambda^{-1/2+\epsilon})
\end{align*}
for any constant $\epsilon>0$.
\end{lemma}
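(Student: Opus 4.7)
The plan is to mirror the Cauchy--Schwarz-style factorization used at the end of the proof of Lemma~\ref{supp:Lemma:Htilde3}, but adapted to the asymmetric Kronecker structure $\bm{B}_g^{\top}\otimes(\bm{A}_g\bm{y}_g\bm{y}_g^{\top}\bm{B}_g)$. First, since $\bm{\Pi}$ is an orthogonal matrix, I would drop it from the spectral norm. Then I would observe the factorization
\begin{align*}
\bm{B}_g^{\top}\otimes(\bm{A}_g\bm{y}_g\bm{y}_g^{\top}\bm{B}_g) = (\bm{B}_g^{\top}\otimes\bm{A}_g\bm{y}_g)(I_n\otimes\bm{y}_g^{\top}\bm{B}_g),
\end{align*}
which follows from $(\bm{P}_1\otimes\bm{P}_2)(\bm{Q}_1\otimes\bm{Q}_2)=(\bm{P}_1\bm{Q}_1)\otimes(\bm{P}_2\bm{Q}_2)$. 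Writing $\bm{M}_g:=\bm{B}_g^{\top}\otimes\bm{A}_g\bm{y}_g$ and $\bm{N}_g:=I_n\otimes\bm{y}_g^{\top}\bm{B}_g$, the sum $\sum_{g}\bm{M}_g\bm{N}_g$ equals the product of the horizontally-stacked block matrix $[\bm{M}_1\mid\cdots\mid\bm{M}_p]$ and the vertically-stacked block matrix $[\bm{N}_1;\cdots;\bm{N}_p]$, so sub-multiplicativity of the spectral norm yields
\begin{align*}
\norm*{(\lambda p)^{-1}\sum_{g=1}^{p}\bm{B}_g^{\top}\otimes(\bm{A}_g\bm{y}_g\bm{y}_g^{\top}\bm{B}_g)}_{2}^{2} \le \norm*{(\lambda p)^{-1}\sum_{g=1}^{p}\bm{M}_g\bm{M}_g^{\top}}_{2}\cdot\norm*{(\lambda p)^{-1}\sum_{g=1}^{p}\bm{N}_g^{\top}\bm{N}_g}_{2}.
\end{align*}

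For the first factor, the Kronecker identity gives $\bm{M}_g\bm{M}_g^{\top}=(\bm{B}_g^{\top}\bm{B}_g)\otimes(\bm{A}_g\bm{y}_g\bm{y}_g^{\top}\bm{A}_g)$. Lemma~\ref{supp:lemma:UtPU} together with the definition of $\bm{B}_g(\bm{U})$ shows that $\sup_{\bm{U}\in\Omega_{\delta}}\max_{g}\norm*{\bm{B}_g^{\top}\bm{B}_g}_{2}=O_P(1)$, so a PSD-ordering argument in Kronecker products gives
\begin{align*}
\norm*{(\lambda p)^{-1}\sum_{g=1}^{p}(\bm{B}_g^{\top}\bm{B}_g)\otimes(\bm{A}_g\bm{y}_g\bm{y}_g^{\top}\bm{A}_g)}_{2} \le \max_{g}\norm*{\bm{B}_g^{\top}\bm{B}_g}_{2}\cdot\norm*{(\lambda p)^{-1}\sum_{g=1}^{p}\bm{A}_g\bm{y}_g\bm{y}_g^{\top}\bm{A}_g}_{2}.
\end{align*}
The second norm on the right was bounded inside the proof of Lemma~\ref{supp:Lemma:Htilde1} by $O_P(n^{-2\eta+\epsilon}+\lambda^{-1+\epsilon})$, so taking the square root of this factor contributes exactly the target rate $O_P(n^{-\eta+\epsilon}+\lambda^{-1/2+\epsilon})$.

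For the second factor, $\bm{N}_g^{\top}\bm{N}_g=I_n\otimes(\bm{B}_g^{\top}\bm{y}_g\bm{y}_g^{\top}\bm{B}_g)$, so its spectral norm equals $\norm*{(\lambda p)^{-1}\sum_{g}\bm{B}_g^{\top}\bm{y}_g\bm{y}_g^{\top}\bm{B}_g}_{2}$, and it suffices to show this is $O_P(1)$ uniformly over $\Omega_{\delta}$. Expanding $\bm{y}_g=\tilde{\bm{C}}\tilde{\bm{\ell}}_g+\bm{e}_g$, the signal-signal piece is bounded by $\max_{g}\norm*{\bm{B}_g^{\top}\tilde{\bm{C}}}_{2}^{2}\cdot\norm*{(\lambda p)^{-1}\sum_{g}\tilde{\bm{\ell}}_g\tilde{\bm{\ell}}_g^{\top}}_{2}=O_P(1)$, using the perturbation bound $(\bm{U}^{\top}\bm{P}_g^{\perp}\bm{U})^{-1}\bm{U}^{\top}\bm{P}_g^{\perp}\tilde{\bm{C}}=\bm{v}_{u}^{-1}+O_P(n^{-\eta})$ proven inside Lemma~\ref{supp:Lemma:Htilde1} together with $\lambda_1/\lambda\le a$ from Assumption~\ref{supp:assumptions:FA}; the noise-noise piece $(\lambda p)^{-1}\sum_{g}\bm{B}_g^{\top}\bm{e}_g\bm{e}_g^{\top}\bm{B}_g$ is $O_P(\lambda^{-1+\epsilon})=o_P(1)$ by the IPW sub-Gaussian bound of Lemma~\ref{supp:lemma:LatalaExt} used in Lemma~\ref{supp:lemma:EtE}; and the cross-term is dominated via Cauchy--Schwarz. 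The hard part is this last step: establishing the $O_P(1)$ bound on the IPW-weighted Gram quantity $(\lambda p)^{-1}\sum_{g}\bm{B}_g^{\top}\bm{y}_g\bm{y}_g^{\top}\bm{B}_g$ uniformly over $\Omega_{\delta}$, since it requires cleanly separating the $\tilde{\bm{C}}\tilde{\bm{\ell}}_g$ contribution (whose scale is tied to $\lambda$) from the residual IPW noise; the needed perturbation and concentration ingredients, however, are already in place from the preceding lemmas.
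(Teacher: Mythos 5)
Your proposal takes essentially the same route as the paper: drop the permutation matrix $\bm{\Pi}$, factorize each Kronecker summand into a product of two tall-and-wide Kronecker factors, stack these across $g$, and apply the operator-norm Cauchy--Schwarz bound
$\norm*{\sum_g \bm{M}_g\bm{N}_g}_2 \le \norm*{\sum_g\bm{M}_g\bm{M}_g^{\top}}_2^{1/2}\norm*{\sum_g\bm{N}_g^{\top}\bm{N}_g}_2^{1/2}$,
then identify the first factor with the quantity $(\lambda p)^{-1}\sum_g \bm{A}_g\bm{y}_g\bm{y}_g^{\top}\bm{A}_g$ bounded inside the proof of Lemma~\ref{supp:Lemma:Htilde1} (giving the rate), and argue the second factor is $O_P(1)$ by the techniques of Lemma~\ref{supp:Lemma:Htilde3}. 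The only genuine difference is the choice of factorization: the paper symmetrizes by writing $\bm{B}_g^{\top}\otimes(\bm{A}_g\bm{y}_g\bm{y}_g^{\top}\bm{B}_g) = [(\bm{U}^{\top}\bm{P}_g^{\perp}\bm{U})^{-1/2}\otimes\bm{A}_g\bm{y}_g]\,[\bm{P}_g^{\perp}\bm{U}(\bm{U}^{\top}\bm{P}_g^{\perp}\bm{U})^{-1/2}\otimes\bm{B}_g^{\top}\bm{y}_g]^{\top}$, so that its second factor carries an extra quasi-projector $\bm{P}_g^{\perp}\bm{U}(\bm{U}^{\top}\bm{P}_g^{\perp}\bm{U})^{-1}\bm{U}^{\top}\bm{P}_g^{\perp}$; you instead write $(\bm{B}_g^{\top}\otimes\bm{A}_g\bm{y}_g)(I_n\otimes\bm{y}_g^{\top}\bm{B}_g)$, so your second Gram factor collapses to $I_n\otimes(\lambda p)^{-1}\sum_g\bm{B}_g^{\top}\bm{y}_g\bm{y}_g^{\top}\bm{B}_g$, which is a touch cleaner. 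Both are correct; your version buys a marginally simpler second factor at no cost, and your signal/noise/cross decomposition of $(\lambda p)^{-1}\sum_g \bm{B}_g^{\top}\bm{y}_g\bm{y}_g^{\top}\bm{B}_g$ is exactly the $O_P(1)$ bound the paper invokes (without writing out) by reference to Lemma~\ref{supp:Lemma:Htilde3}.

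One small point worth making explicit: for the step $\max_g\norm*{\bm{B}_g^{\top}\bm{B}_g}_2 = O_P(1)$ uniformly over $\Omega_\delta$, Lemma~\ref{supp:lemma:UtPU} bounds $(\bm{U}^{\top}\bm{P}_g^{\perp}\bm{U})^{-1}$ but you also need $\sup_{\bm{U}}\max_g\norm*{\bm{U}^{\top}(\bm{P}_g^{\perp})^2\bm{U}}_2 = O_P(1)$, which is established inside the proof of Lemma~\ref{supp:Lemma:Htilde1} rather than in Lemma~\ref{supp:lemma:UtPU} itself; you should cite that location as well.
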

\begin{proof}
Since $\bm{\Pi}$ is a permutation matrix,
\begin{align*}
    \norm*{ (\lambda p)^{-1}\sum_{g=1}^p\bm{B}_g(\bm{U})^{\top} \otimes \{ \bm{A}_g(\bm{U})\bm{y}_g\bm{y}_g^{\top} \bm{B}_g(\bm{U}) \} \bm{\Pi} }_2\leq \norm*{ (\lambda p)^{-1}\sum_{g=1}^p\bm{B}_g(\bm{U})^{\top} \otimes \{ \bm{A}_g(\bm{U})\bm{y}_g\bm{y}_g^{\top} \bm{B}_g(\bm{U}) \} }_2.
\end{align*}
By the definition of $\bm{B}_g(\bm{U})$,
\begin{align*}
    &\bm{B}_g(\bm{U})^{\top} \otimes \{ \bm{A}_g(\bm{U})\bm{y}_g\bm{y}_g^{\top} \bm{B}_g(\bm{U})\} \\
    =& [(\bm{U}^{\top}\bm{P}_g^{\perp} \bm{U})^{-1/2} \otimes \{\bm{A}_g(\bm{U})\bm{y}_g\}] [ \{\bm{P}_g^{\perp}\bm{U}(\bm{U}^{\top}\bm{P}_g^{\perp} \bm{U})^{-1/2}\} \otimes \{\bm{B}_g(\bm{U})^{\top}\bm{y}_g\} ]^{\top}.
\end{align*}
Define
\begin{align*}
    \bm{S}(\bm{U}) &= \begin{pmatrix} (\bm{U}^{\top}\bm{P}_1^{\perp} \bm{U})^{-1/2} \otimes \{\bm{A}_1(\bm{U})\bm{y}_1\} & \cdots & (\bm{U}^{\top}\bm{P}_p^{\perp} \bm{U})^{-1/2} \otimes \{\bm{A}_p(\bm{U})\bm{y}_p\} \end{pmatrix}\\
    \bm{T}(\bm{U}) &= \begin{pmatrix} \{\bm{P}_1^{\perp}\bm{U}(\bm{U}^{\top}\bm{P}_1^{\perp} \bm{U})^{-1/2}\} \otimes \{\bm{B}_1(\bm{U})^{\top}\bm{y}_1\} & \cdots & \{\bm{P}_p^{\perp}\bm{U}(\bm{U}^{\top}\bm{P}_p^{\perp} \bm{U})^{-1/2}\} \otimes \{\bm{B}_p(\bm{U})^{\top}\bm{y}_p\} \end{pmatrix},
\end{align*}
where $\bm{S}(\bm{U}), \bm{T}(\bm{U}) \in \mathbb{R}^{nK \times pK}$ and $\sum_{g=1}^p\bm{B}_g(\bm{U})^{\top} \otimes \{ \bm{A}_g(\bm{U})\bm{y}_g\bm{y}_g^{\top} \bm{B}_g(\bm{U})\} = \bm{S}(\bm{U}) \{\bm{T}(\bm{U})\}^{\top}$. Therefore,
\begin{align*}
   &\norm*{  (\lambda p)^{-1}\sum_{g=1}^p\bm{B}_g(\bm{U})^{\top} \otimes \{ \bm{A}_g(\bm{U})\bm{y}_g\bm{y}_g^{\top} \bm{B}_g(\bm{U}) \} }_2\\
   \leq & \norm*{ (\lambda p)^{-1}\bm{S}(\bm{U})\{\bm{S}(\bm{U})\}^{\top} }_2^{1/2}\norm*{ (\lambda p)^{-1}\bm{T}(\bm{U})\{\bm{T}(\bm{U})\}^{\top} }_2^{1/2},
\end{align*}
where $\sup_{\bm{U} \in \Omega_{\delta}}\norm*{ (\lambda p)^{-1}\bm{S}(\bm{U})\{\bm{S}(\bm{U})\}^{\top} }_2 = O_P(n^{-2\eta + \epsilon} + \lambda^{-1 + \epsilon})$ by Lemma~\ref{supp:Lemma:Htilde1}. We also see that
\begin{align*}
    (\lambda p)^{-1}\bm{T}(\bm{U})\{\bm{T}(\bm{U})\}^{\top} = (\lambda p)^{-1} \sum_{g=1}^p \{\bm{P}_g^{\perp}\bm{U}(\bm{U}^{\top} \bm{P}_g^{\perp}\bm{U})^{-\top}\bm{U}^{\top} \bm{P}_g^{\perp}\} \otimes \{\bm{B}_g(\bm{U})^{\top} \bm{y}_g \bm{y}_g^{\top} \bm{B}_g(\bm{U})\},
\end{align*}
where the same techniques used to prove Lemma~\ref{supp:Lemma:Htilde3} can be used to show that 
\begin{align*}
    \sup_{\bm{U} \in \Omega_{\delta}}\norm*{ (\lambda p)^{-1} \sum_{g=1}^p \{\bm{P}_g^{\perp}\bm{U}(\bm{U}^{\top} \bm{P}_g^{\perp}\bm{U})^{-1}\bm{U}^{\top} \bm{P}_g^{\perp}\} \otimes \{\bm{B}_g(\bm{U})^{\top} \bm{y}_g \bm{y}_g^{\top} \bm{B}_g(\bm{U})\} }_2 = O_P(1),
\end{align*}
which completes the proof.
\end{proof}

\begin{corollary}
\label{supp:corollary:Htilde}
Let $\tilde{\bm{H}}(\bm{U})$ be as defined in \eqref{supp:equation:Htilde}, $\bm{\Lambda} = np^{-1}\bm{L}^{\top}\bm{L}$, let $\eta \in (0,1/2)$, and let $\Omega_{\delta}$ be as defined in Lemma~\ref{supp:lemma:f2}. Then if the assumptions of Lemma~\ref{supp:Lemma:Htilde1} hold and $\delta=O_P(n^{-\eta})$, there exists a unitary matrix $\bm{v} = \bm{v}(\bm{U}) \in \mathbb{R}^{K \times K}$ for each $\bm{U} \in \Omega_{\delta}$ such that $\sup_{\bm{U} \in \Omega_{\delta}} \norm*{ \tilde{\bm{H}}(\bm{U}) + (\lambda^{-1}\bm{v}^{\top}\bm{\Lambda}\bm{v}) \otimes (\bm{0}_{K \times K}\oplus I_{n-d-K}) }_2 = O_P(n^{-\eta+\epsilon} + \lambda^{-1/2+\epsilon})$ for any constant $\epsilon>0$.
\end{corollary}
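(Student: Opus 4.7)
The plan is to decompose $\tilde{\bm{H}}(\bm{U})$ into the four summands appearing in \eqref{supp:equation:Htilde} and apply Lemmas~\ref{supp:Lemma:Htilde1}, \ref{supp:Lemma:Htilde3}, and \ref{supp:Lemma:Htilde24}, one to each group, identifying the leading-order contribution. First I would observe that the outer factor $I_K \otimes (\tilde{\bm{C}},\bm{Q})$ has orthonormal columns, since $\tilde{\bm{C}}$ and $\bm{Q}$ each have orthonormal columns and $\tilde{\bm{C}}^{\top}\bm{Q}=\bm{0}$ by construction of $\bm{Q}$. Hence pre- and post-multiplication by $I_K\otimes(\tilde{\bm{C}},\bm{Q})^{\top}$ and $I_K\otimes(\tilde{\bm{C}},\bm{Q})$ cannot enlarge the operator norm of any matrix in the sum.

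Next I would treat the three error summands. Applying Lemma~\ref{supp:Lemma:Htilde1} to the first summand gives a uniform-in-$\bm{U}$ bound of $O_P(n^{-2\eta+\epsilon} + \lambda^{-1+\epsilon})$; applying Lemma~\ref{supp:Lemma:Htilde24} to the second and fourth summands (the ones that contain the permutation $\bm{\Pi}$) gives a uniform bound of $O_P(n^{-\eta+\epsilon} + \lambda^{-1/2+\epsilon})$ for each. Because $\eta>0$ and, by Assumption~\ref{supp:assumptions:FA}\ref{supp:assumption:lambda}, $\lambda \gtrsim n^{1/2+\epsilon} \to \infty$, the bound from Lemma~\ref{supp:Lemma:Htilde1} is absorbed into the Lemma~\ref{supp:Lemma:Htilde24} rate after a harmless enlargement of $\epsilon$.

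I then identify the leading term from the third summand using Lemma~\ref{supp:Lemma:Htilde3}, which produces a (possibly $\bm{U}$-dependent) unitary $\bm{v}=\bm{v}(\bm{U})$ such that the inner matrix is $(\lambda^{-1}\bm{v}^{\top}\bm{\Lambda}\bm{v})\otimes\bm{Q}\bm{Q}^{\top}$ up to an $O_P(n^{-\eta+\epsilon}+\lambda^{-1/2+\epsilon})$ perturbation uniformly in $\bm{U}\in\Omega_\delta$. Using the mixed-product property of the Kronecker product,
\begin{align*}
\bigl\{I_K\otimes(\tilde{\bm{C}},\bm{Q})^{\top}\bigr\}\bigl\{(\lambda^{-1}\bm{v}^{\top}\bm{\Lambda}\bm{v})\otimes \bm{Q}\bm{Q}^{\top}\bigr\}\bigl\{I_K\otimes(\tilde{\bm{C}},\bm{Q})\bigr\}
= (\lambda^{-1}\bm{v}^{\top}\bm{\Lambda}\bm{v}) \otimes \bigl[(\tilde{\bm{C}},\bm{Q})^{\top}\bm{Q}\bm{Q}^{\top}(\tilde{\bm{C}},\bm{Q})\bigr],
\end{align*}
and the bracketed factor equals $\bm{0}_{K\times K}\oplus I_{n-d-K}$ because $\tilde{\bm{C}}^{\top}\bm{Q}=\bm{0}$ and $\bm{Q}^{\top}\bm{Q}=I_{n-d-K}$. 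Accounting for the minus sign on the third summand in \eqref{supp:equation:Htilde} produces the target $-(\lambda^{-1}\bm{v}^{\top}\bm{\Lambda}\bm{v})\otimes(\bm{0}_{K\times K}\oplus I_{n-d-K})$, and the triangle inequality combines all the error rates into $O_P(n^{-\eta+\epsilon}+\lambda^{-1/2+\epsilon})$.

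Since the three preceding lemmas already contain all of the delicate empirical-process and Kronecker-product estimates, the remaining work is bookkeeping rather than analysis, so no single step stands out as a genuine obstacle; the one item to be careful about is tracking that $\bm{v}=\bm{v}(\bm{U})$ is the same unitary matrix that Lemma~\ref{supp:Lemma:Htilde3} furnishes, so the supremum over $\bm{U}$ in the conclusion is consistent with the existential quantifier on $\bm{v}$.
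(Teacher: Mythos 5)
Your proof is correct and follows the same route as the paper: the paper's proof is a one-line appeal to Lemmas~\ref{supp:Lemma:Htilde1}, \ref{supp:Lemma:Htilde3}, and \ref{supp:Lemma:Htilde24}, and you have filled in exactly the bookkeeping that appeal elides, including the mixed-product identity that turns $\bm{Q}\bm{Q}^{\top}$ into $\bm{0}_{K\times K}\oplus I_{n-d-K}$ under conjugation by $I_K\otimes(\tilde{\bm{C}},\bm{Q})$, the absorption of the Lemma~\ref{supp:Lemma:Htilde1} rate into the others since $\lambda \to \infty$, and the provenance of the unitary $\bm{v}(\bm{U})$ from Lemma~\ref{supp:Lemma:Htilde3} (which matches the construction spelled out in Remark~\ref{supp:remark:ConstructV}).
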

\begin{proof}
This follows directly from Lemmas~\ref{supp:Lemma:Htilde1}, \ref{supp:Lemma:Htilde3}, and \ref{supp:Lemma:Htilde24}.
\end{proof}
\begin{remark}
\label{supp:remark:ConstructV}
We can construct $\bm{v} = \bm{v}(\bm{U})$ using the following procedure. For $\bm{U} \in \Omega_{\delta}$ , let $\bm{v}_u$ be as defined in Lemma~\ref{supp:lemma:Vz}, and let $\bm{v}_u = \bm{A}_u \bm{\Sigma}_u \bm{B}_u^{\top}$ be its singular value decomposition. By the proof of Lemma~\ref{supp:Lemma:Htilde3}, Corollary~\ref{supp:corollary:Htilde} holds with $\bm{v}$ replaced with $\bm{v}_u^{-\top}$. Since $\norm*{ I_K - \bm{\Sigma}_u }_2 = O(\delta^2)$ by the proof of Lemma~\ref{supp:lemma:Vz}, Corollary~\ref{supp:corollary:Htilde} holds with $\bm{v} = \bm{A}_u\bm{B}_u^{\top}$.
\end{remark}

\begin{lemma}[First term in \eqref{supp:equation:stildeC}]
\label{supp:lemma:s11}
Under the assumptions of Lemma~\ref{supp:Lemma:Htilde1} and for any $\epsilon \in (0,1/2)$,
\begin{align*}
    \norm*{ (\lambda p)^{-1}\sum_{g=1}^p \bm{P}_{g}^{\perp}\tilde{\bm{C}}(\tilde{\bm{C}}^{\top}\bm{P}_g^{\perp} \tilde{\bm{C}})^{-1} \tilde{\bm{C}}^{\top} \bm{P}_g^{\perp}\bm{e}_g\tilde{\bm{\ell}}_g^{\top} }_2 = O_P(\lambda^{-1 + \epsilon}).
\end{align*}
\end{lemma}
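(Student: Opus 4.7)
The plan is to reduce the quantity to a single dominant term controlled by Lemma~\ref{supp:lemma:Cte} together with Remark~\ref{supp:remark:CtE}, while absorbing all remainders into a loss of at most $n^{\epsilon}$ using the uniform control of $\bm{P}_g^{\perp}$-related quantities from Lemma~\ref{supp:lemma:Bs} and Corollary~\ref{supp:corollary:MaximalIne}.

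First I would apply Lemma~\ref{supp:lemma:Bs} to write $(\tilde{\bm{C}}^{\top}\bm{P}_g^{\perp}\tilde{\bm{C}})^{-1}=I_K+\bm{\Delta}_g$ with $\max_{g\in[p]}\norm*{\bm{\Delta}_g}_2=O_P(n^{-1/2+\epsilon})$, replacing the inverse by $I_K$ up to a perturbation that will inherit the $n^{-1/2+\epsilon}$ factor once the dominant piece is bounded. This reduces the analysis to
\[
\bm{M}_0=(\lambda p)^{-1}\sum_{g=1}^p \bm{P}_g^{\perp}\tilde{\bm{C}}\tilde{\bm{C}}^{\top}\bm{P}_g^{\perp}\bm{e}_g\tilde{\bm{\ell}}_g^{\top}.
\]

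Next I would expand $\bm{P}_g^{\perp}=\bm{W}_g-\bm{W}_g\bm{X}(\bm{X}^{\top}\bm{W}_g\bm{X})^{-1}\bm{X}^{\top}\bm{W}_g$ in both copies inside $\bm{M}_0$. Corollary~\ref{supp:corollary:MaximalIne}, applied as in the proof of Lemma~\ref{supp:lemma:Bs}, gives $\max_{g\in[p]}\norm*{n^{-1/2}\bm{X}^{\top}\bm{W}_g\tilde{\bm{C}}}_2=O_P(n^{-1/2+\epsilon})$ and $\max_{g\in[p]}\norm*{(n^{-1}\bm{X}^{\top}\bm{W}_g\bm{X})^{-1}}_2=O_P(1)$, so every cross term in the expansion containing at least one $\bm{X}$-block inherits an extra $n^{-1/2+\epsilon}$ relative to the unmixed piece $(\lambda p)^{-1}\sum_g \bm{W}_g\tilde{\bm{C}}\tilde{\bm{C}}^{\top}\bm{W}_g\bm{e}_g\tilde{\bm{\ell}}_g^{\top}$. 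I would then further decompose $\bm{W}_g=I_n+(\bm{W}_g-I_n)$ on each surviving $\bm{W}_g$, isolating the genuinely dominant contribution
\[
\bm{T}=(\lambda p)^{-1}\tilde{\bm{C}}\tilde{\bm{C}}^{\top}\bm{E}^{\top}\tilde{\bm{L}},\qquad \bm{E}=[e_{gi}],\ \tilde{\bm{L}}=[\tilde{\bm{\ell}}_1\cdots\tilde{\bm{\ell}}_p]^{\top}.
\]
Because $\tilde{\bm{C}}$ has orthonormal columns, the $P_{\bm{X}}^{\perp}\bm{C}$ extension of Lemma~\ref{supp:lemma:Cte} noted in Remark~\ref{supp:remark:CtE} gives $\norm*{\tilde{\bm{C}}^{\top}\bm{E}^{\top}\tilde{\bm{L}}}_2=O_P((\lambda p)^{1/2})$, so $\norm*{\bm{T}}_2=O_P((\lambda p)^{-1/2})$. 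Since $p\asymp n$ and $\lambda\leq a n$ by Assumption~\ref{supp:assumption:lambda}, this is already $O_P(\lambda^{-1+\epsilon})$, the desired rate.

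The remaining cross terms contain one or two factors of $\bm{W}_g-I_n$. For these the plan is to condition on $\{\bm{y}_g,\bm{C}\}_{g\in[p]}$: given this conditioning, $\E(w_{gi}-1)=0$ and the $w_{gi}-1$ are independent across $i$, so the conditional Frobenius norm squared of a one-factor cross term reduces to a sum of conditional variances which, using the polynomial moment bound on $w_{gi}$ implied by Assumption~\ref{supp:assumption:Psi} and an analogue of Corollary~\ref{supp:corollary:MaximalIne}, is at most $(\lambda p)^{-2}\cdot p\lambda\cdot n^{\epsilon}=O_P((\lambda p)^{-1}n^{\epsilon})$; each additional factor of $\bm{W}_g-I_n$ contributes another $n^{-1/2+\epsilon}$. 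The main obstacle will be that $\bm{W}_g$ is \emph{not} independent of $\bm{e}_g$, since $y_{gi}=\bm{\beta}_g^{\top}\bm{x}_i+\bm{\ell}_g^{\top}\bm{c}_i+e_{gi}$; this is exactly why I would condition on the $\bm{y}_g$'s (not merely on $\bm{C}$) before invoking the maximal inequalities, mirroring the bookkeeping already used in the proofs of Lemmas~\ref{supp:lemma:Bs} and \ref{supp:Lemma:Htilde3}.
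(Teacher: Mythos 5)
Your overall plan follows the paper's proof closely: drop $(\tilde{\bm{C}}^{\top}\bm{P}_g^{\perp}\tilde{\bm{C}})^{-1}$ using \eqref{supp:equation:Abound} together with a pointwise (over $g$) bound on $\tilde{\bm{C}}^{\top}\bm{P}_g^{\perp}\bm{e}_g$, expand $\bm{P}_g^{\perp}$ into $\bm{W}_g$- and $\bm{X}$-blocks, further split $\bm{W}_g=I_n+(\bm{W}_g-I_n)$, and control the dominant $(\lambda p)^{-1}\tilde{\bm{C}}\tilde{\bm{C}}^{\top}\bm{E}^{\top}\tilde{\bm{L}}$ via Lemma~\ref{supp:lemma:Cte}/Remark~\ref{supp:remark:CtE}. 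This is exactly what the paper does (see \eqref{supp:equation:Stilde1X}--\eqref{supp:equation:WCCW}), and your $O_P\{(\lambda p)^{-1/2}\}$ rate for that piece is right.

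There is, however, a genuine gap in your treatment of the $(\bm{W}_g-I_n)$-cross terms. Your statement that "the conditional Frobenius norm squared $\ldots$ reduces to a sum of conditional variances" is correct for the one-factor cross terms, but for the two-factor term
\begin{align*}
(\lambda p)^{-1}\sum_{g=1}^p(\bm{W}_g-I_n)\tilde{\bm{C}}_{\bigcdot k}\tilde{\bm{C}}_{\bigcdot k}^{\top}(\bm{W}_g-I_n)\bm{e}_g\tilde{\bm{\ell}}_g^{\top}
\end{align*}
the diagonal contribution from $j=i$ involves $(w_{gi}-1)^2$, whose conditional expectation given $\{\bm{y}_g,\bm{C}\}$ is $\V(w_{gi}\mid y_{gi}) = 1/\Psi\{\alpha_g(y_{gi}-\delta_g)\}-1>0$. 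A variance bound alone does not capture this bias. The paper isolates exactly this diagonal piece in eqn.~\eqref{supp:equation:Cik2}, controls it by pairing the uniform row bound $\max_{i}\norm*{(\lambda p)^{-1}\sum_{g}(n^{1/2}\bm{\ell}_g)(w_{gi}-1)^2\bm{e}_{gi}}_2=O_P(\lambda^{-1/2})$ with $\max_i\tilde{\bm{C}}_{ik}^4=O_P(n^{-2+\epsilon})$, and only then applies Lemma~\ref{supp:lemma:Powera}/Corollary~\ref{supp:corollary:MaximalIne} to the genuinely mean-zero off-diagonal ($j\neq i$) piece. Relatedly, the heuristic that "each additional factor of $\bm{W}_g-I_n$ contributes another $n^{-1/2+\epsilon}$" is not quantitatively correct here: the zero-, one-, and two-factor $(\bm{W}_g-I_n)$ pieces all come out of the same order $(\lambda p)^{-1/2}n^{\epsilon}=O_P(\lambda^{-1+\epsilon})$; the extra factor gives no additional decay, it merely keeps the term at the target size. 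You need to insert the $j=i$ versus $j\neq i$ decomposition before invoking any conditional moment bound.
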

\begin{proof}
Without loss of generality, we may assume $n^{-1}\bm{X}^{\top}\bm{X}=I_d$. Then
\begin{align*}
    \tilde{\bm{C}}^{\top} \bm{P}_g^{\perp}\bm{e}_g = \tilde{\bm{C}}^{\top} \bm{W}_g \bm{e}_g - \tilde{\bm{C}}^{\top}\bm{W}_g\bm{X}(\bm{X}^{\top}\bm{W}_g\bm{X})^{-1}\bm{X}^{\top}\bm{W}_g\bm{e}_g.
\end{align*}
For $\bm{R} = n^{-1}\bm{C}^{\top} P_{X}^{\perp} \bm{C}$,
\begin{align*}
    \tilde{\bm{C}}^{\top} \bm{W}_g \bm{e}_g = \bm{R}^{-1/2}\{n^{-1/2}\bm{C}^{\top} \bm{e}_g + n^{-1/2}\bm{C}^{\top}(\bm{W}_g - I_n) \bm{e}_g\} - \bm{R}^{-1/2} (n^{-1}\bm{C}^{\top}\bm{X})(n^{-1/2}\bm{X}^{\top} \bm{W}_g \bm{e}_g),
\end{align*}
where $\max_{g \in [p]}\norm*{ n^{-1/2}\bm{C}^{\top} \bm{e}_g }_2 = O_P(n^{\epsilon})$ by Lemma~\ref{supp:lemma:Cte} and $\max_{g \in [p]}\norm*{n^{-1/2}\bm{C}^{\top}(\bm{W}_g - I_n) \bm{e}_g}_2 = O_P(n^{\epsilon})$ by Lemma~\ref{supp:lemma:Powera}. An identical argument can be used to show that $\max_{g \in [p]}\norm*{n^{-1/2}\bm{X}^{\top}\allowbreak \bm{W}_g \allowbreak \bm{e}_g}_2 = O_P(n^{\epsilon})$, which implies $\max_{g \in [p]}\norm*{ \tilde{\bm{C}}^{\top} \bm{W}_g \bm{e}_g }_2 = O_P(n^{\epsilon})$. A similar argument can be used to show that 
\begin{align*}
    \max_{g \in [p]}\norm*{ \tilde{\bm{C}}^{\top}\bm{W}_g\bm{X}(\bm{X}^{\top}\bm{W}_g\bm{X})^{-1}\bm{X}^{\top}\bm{W}_g\bm{e}_g }_2 = O_P(n^{-1/2+\epsilon}).
\end{align*}
Putting all this together implies that $\max_{g \in [p]}\norm*{ \tilde{\bm{C}}^{\top} \bm{W}_g \bm{e}_g }_2 = O_P(n^{\epsilon})$, which by \eqref{supp:equation:Abound} in Lemma~\ref{supp:lemma:Bs} and the fact that $\max_{g \in [p]}\norm*{\tilde{\bm{C}}^{\top} (\bm{P}_g^{\perp})^2 \tilde{\bm{C}}}_2 \leq c\{1+o_P(1)\}$, further implies
\begin{align*}
    \norm*{ (\lambda p)^{-1}\sum_{g=1}^p\bm{P}_{g}^{\perp}\tilde{\bm{C}}(\tilde{\bm{C}}^{\top}\bm{P}_g^{\perp} \tilde{\bm{C}})^{-1} \tilde{\bm{C}}^{\top} \bm{P}_g^{\perp}\bm{e}_g\tilde{\bm{\ell}}_g^{\top} - (\lambda p)^{-1}\sum_{g=1}^p\bm{P}_{g}^{\perp}\tilde{\bm{C}} \tilde{\bm{C}}^{\top} \bm{P}_g^{\perp}\bm{e}_g\tilde{\bm{\ell}}_g^{\top} }_2 &= O_P(n^{-1/2+\epsilon}\lambda^{-1/2})\\
    &= O_P(\lambda^{-1+\epsilon}).
\end{align*}
Next,
\begin{align}
\label{supp:equation:Stilde1X}
\begin{aligned}
    \bm{P}_{g}^{\perp}\tilde{\bm{C}} \tilde{\bm{C}}^{\top} \bm{P}_g^{\perp}\bm{e}_g\tilde{\bm{\ell}}_g^{\top} =& \bm{W}_g\tilde{\bm{C}} \tilde{\bm{C}}^{\top} \bm{P}_g^{\perp}\bm{e}_g\tilde{\bm{\ell}}_g^{\top}\\
    & - (n^{-1/2}\bm{W}_g\bm{X})( n^{-1}\bm{X}^{\top}\bm{W}_g\bm{X} )^{-1}(n^{-1/2}\bm{X}^{\top} \bm{W}_g\tilde{\bm{C}}) \tilde{\bm{C}}^{\top} \bm{P}_g^{\perp}\bm{e}_g\tilde{\bm{\ell}}_g^{\top}.
\end{aligned}
\end{align}
Starting with the second term in \eqref{supp:equation:Stilde1X},
\begin{align*}
    &\norm*{ (n^{-1/2}\bm{W}_g\bm{X})( n^{-1}\bm{X}^{\top}\bm{W}_g\bm{X} )^{-1}(n^{-1/2}\bm{X}^{\top} \bm{W}_g\tilde{\bm{C}}) \tilde{\bm{C}}^{\top} \bm{P}_g^{\perp}\bm{e}_g\tilde{\bm{\ell}}_g^{\top} }_2\\
     \leq &\norm*{n^{-1/2}\bm{W}_g\bm{X}}_2 \norm*{ ( n^{-1}\bm{X}^{\top}\bm{W}_g\bm{X} )^{-1} }_2\norm*{ n^{-1/2}\bm{X}^{\top} \bm{W}_g\tilde{\bm{C}} }_2\norm*{\tilde{\bm{C}}^{\top} \bm{P}_g^{\perp}\bm{e}_g}_2 \norm*{ \tilde{\bm{\ell}}_g }_2,
\end{align*}
where Lemma~\ref{supp:lemma:Powera} and the above derivation of the behavior of $\norm*{\tilde{\bm{C}}^{\top} \bm{P}_g^{\perp}\bm{e}_g}_2$ implies that for some constant $c>0$,
\begin{align*}
    &\max_{g \in [p]}\norm*{n^{-1/2}\bm{W}_g\bm{X}}_2,\, \max_{g \in [p]} \norm*{ ( n^{-1}\bm{X}^{\top}\bm{W}_g\bm{X} )^{-1} }_2\leq c\{1+o_P(1)\}\\
    &\max_{g \in [p]}\norm*{ n^{-1/2}\bm{X}^{\top} \bm{W}_g\tilde{\bm{C}} }_2 = O_P(n^{-1/2+\epsilon}), \quad \max_{g \in [p]} \norm*{\tilde{\bm{C}}^{\top} \bm{P}_g^{\perp}\bm{e}_g}_2 = O_P(n^{\epsilon}).
\end{align*}
Therefore,
\begin{align*}
    \norm*{ (\lambda p)^{-1}\sum_{g=1}^p (n^{-1/2}\bm{W}_g\bm{X})( n^{-1}\bm{X}^{\top}\bm{W}_g\bm{X} )^{-1}(n^{-1/2}\bm{X}^{\top} \bm{W}_g\tilde{\bm{C}}) \tilde{\bm{C}}^{\top} \bm{P}_g^{\perp}\bm{e}_g \tilde{\bm{\ell}}_g^{\top} }_2 = O_P(\lambda^{-1+\epsilon}).
\end{align*}
The first term in \eqref{supp:equation:Stilde1X} can be expressed as
\begin{align*}
    \bm{W}_g\tilde{\bm{C}} \tilde{\bm{C}}^{\top} \bm{P}_g^{\perp}\bm{e}_g\tilde{\bm{\ell}}_g^{\top} =& \bm{W}_g\tilde{\bm{C}} \tilde{\bm{C}}^{\top} \bm{W}_g\bm{e}_g\tilde{\bm{\ell}}_g^{\top}\\
    &- \bm{W}_g\tilde{\bm{C}} (n^{-1/2}\tilde{\bm{C}}^{\top} \bm{W}_g\bm{X})(n^{-1}\bm{X}^{\top}\bm{W}_g\bm{X})^{-1}(n^{-1/2} \bm{X}^{\top}\bm{W}_g\bm{e}_g\tilde{\bm{\ell}}_g^{\top} )\\
    =&\sum_{k=1}^K \bm{W}_g\tilde{\bm{C}}_{\bigcdot k} \tilde{\bm{C}}_{\bigcdot k}^{\top} \bm{W}_g\bm{e}_g\tilde{\bm{\ell}}_g^{\top}\\
    &- \bm{W}_g\tilde{\bm{C}} (n^{-1/2}\tilde{\bm{C}}^{\top} \bm{W}_g\bm{X})(n^{-1}\bm{X}^{\top}\bm{W}_g\bm{X})^{-1}(n^{-1/2} \bm{X}^{\top}\bm{W}_g\bm{e}_g\tilde{\bm{\ell}}_g^{\top} )
\end{align*}
where an identical analysis to the one above can be used to show that
\begin{align*}
    \norm*{ (\lambda p)^{-1}\sum_{g=1}^p \bm{W}_g\tilde{\bm{C}} (n^{-1/2}\tilde{\bm{C}}^{\top} \bm{W}_g\bm{X})(n^{-1}\bm{X}^{\top}\bm{W}_g\bm{X})^{-1}(n^{-1/2} \bm{X}^{\top}\bm{W}_g\bm{e}_g\tilde{\bm{\ell}}_g^{\top} ) }_2 = O_P(\lambda^{-1+\epsilon}).
\end{align*}
Lastly, 
\begin{align}
\label{supp:equation:WCCW}
\begin{aligned}
    (\lambda p)^{-1} \sum_{g=1}^p \bm{W}_g\tilde{\bm{C}}_{\bigcdot k} \tilde{\bm{C}}_{\bigcdot k}^{\top} \bm{W}_g\bm{e}_g\tilde{\bm{\ell}}_g^{\top} =& (\lambda p)^{-1} \sum_{g=1}^p (\bm{W}_g-I_n)\tilde{\bm{C}}_{\bigcdot k} \tilde{\bm{C}}_{\bigcdot k}^{\top} (\bm{W}_g-I_n)\bm{e}_g\tilde{\bm{\ell}}_g^{\top}\\
    &+ (\lambda p)^{-1} \sum_{g=1}^p (\bm{W}_g-I_n)\tilde{\bm{C}}_{\bigcdot k} \tilde{\bm{C}}_{\bigcdot k}^{\top} \bm{e}_g\tilde{\bm{\ell}}_g^{\top}\\
    &+ \tilde{\bm{C}}_{\bigcdot k}(\lambda p)^{-1} \sum_{g=1}^p \tilde{\bm{C}}_{\bigcdot k}^{\top} (\bm{W}_g-I_n)\bm{e}_g\tilde{\bm{\ell}}_g^{\top}\\
    &- \tilde{\bm{C}}_{\bigcdot k}(\lambda p)^{-1} \sum_{g=1}^p \tilde{\bm{C}}_{\bigcdot k}^{\top} \bm{e}_g\tilde{\bm{\ell}}_g^{\top}.
\end{aligned}
\end{align}
Result~\eqref{supp:equation:CtE:2} in Lemma~\ref{supp:lemma:Cte} and Remark~\ref{supp:remark:CtE} imply $\norm*{ \tilde{\bm{C}}_{\bigcdot k}(\lambda p)^{-1} \sum_{g=1}^p \tilde{\bm{C}}_{\bigcdot k}^{\top} \bm{e}_g\tilde{\bm{\ell}}_g^{\top} }_2\allowbreak = \allowbreak O_P\{\allowbreak(\lambda p)^{-1/2}\}$. For the third term in \eqref{supp:equation:WCCW}, we see that
\begin{align*}
    &(\lambda p)^{-1} \sum_{g=1}^p \tilde{\bm{C}}_{\bigcdot k}^{\top} (\bm{W}_g-I_n)\bm{e}_g\tilde{\bm{\ell}}_g^{\top} = (\bm{R}^{-1/2})_{k \bigcdot}^{\top}(\lambda p)^{-1} \sum_{g=1}^p n^{-1/2}\bm{C}^{\top} (\bm{W}_g-I_n)\bm{e}_g(n^{1/2}\bm{\ell}_g)^{\top} \bm{R}^{1/2}\\
    &- (\bm{R}^{-1/2})_{k \bigcdot}^{\top}(n^{-1}\bm{C}_{\bigcdot k}^{\top}\bm{X})(\lambda p)^{-1} \sum_{g=1}^p (n^{-1/2}\bm{X})^{\top} (\bm{W}_g-I_n)\bm{e}_g(n^{1/2}\bm{\ell}_g)^{\top} \bm{R}^{1/2}.
\end{align*}
Since $\V\{(\bm{W}_g-I_n)\bm{e}_g\}$ is a diagonal matrix with uniformly bounded diagonal entries,
\begin{align*}
    \norm*{\bm{R}^{-1/2}(n^{-1}\bm{C}_{\bigcdot k}^{\top}\bm{X})(\lambda p)^{-1} \sum_{g=1}^p (n^{-1/2}\bm{X})^{\top} (\bm{W}_g-I_n)\bm{e}_g(n^{1/2}\bm{\ell}_g)^{\top}}_2 = O_P\{(\lambda p)^{-1/2}\}.
\end{align*}
Next, for $r,s \in [K]$ and some constants $c_1,c_2>0$,
\begin{align*}
    \V\{(\lambda p)^{-1} \sum_{g=1}^p n^{-1/2}\bm{C}_{\bigcdot r}^{\top} (\bm{W}_g-I_n)\bm{e}_g(n^{1/2}\bm{\ell}_{g_s})\} \leq & c_1\lambda^{-1}p^{-2} \sum_{g=1}^p [n^{-1}\sum_{i=1}^n \E\{\bm{C}_{i r}^2(w_{gi}-1)^2\bm{e}_{g_i}^2\} ]\\
    \leq & c_1c_2 (\lambda p)^{-1},
\end{align*}
which implies the third term in \eqref{supp:equation:WCCW} is $O_P\{(\lambda p)^{-1/2}\}$. The $i$th row of the second term in \eqref{supp:equation:WCCW} can be expressed as
\begin{align*}
    &\tilde{\bm{C}}_{i k}\bm{R}^{1/2}\left\{(\lambda p)^{-1}\sum_{g=1}^p (n^{1/2}\bm{\ell}_g)(w_{gi}-1)\bm{e}_g^{\top}(n^{-1/2}\bm{C})\right\} (\bm{R}^{-1/2})_{\bigcdot k}\\
    - &\tilde{\bm{C}}_{i k}\bm{R}^{1/2}\left\{(\lambda p)^{-1}\sum_{g=1}^p (n^{1/2}\bm{\ell}_g)(w_{gi}-1)\bm{e}_g^{\top}(n^{-1/2}\bm{X})\right\} (n^{-1}\bm{X}^{\top}\bm{C})(\bm{R}^{-1/2})_{\bigcdot k}\in \mathbb{R}^K,
\end{align*}
where $\E\{\bm{e}_g^{\top}(n^{-1/2}\bm{C}_{\bigcdot r})^{(2m)}\} \leq c_m$ for some constant $c_m > 0$ that only depends on the integer $m>0$ by Lemma~\ref{supp:lemma:Cte}. As a consequence, Corollary~\ref{supp:corollary:MaximalIne} implies
\begin{align*}
    &\max_{i \in [n]} \norm*{(\lambda p)^{-1}\sum_{g=1}^p (n^{1/2}\bm{\ell}_g)(w_{gi}-1)\bm{e}_g^{\top}(n^{-1/2}\bm{C})}_2= O_P(\lambda^{-1/2}p^{-1/2+\epsilon})= O_P(\lambda^{-1+\epsilon})\\
    &\max_{i \in [n]} \norm*{ (\lambda p)^{-1}\sum_{g=1}^p (n^{1/2}\bm{\ell}_g)(w_{gi}-1)\bm{e}_g^{\top}(n^{-1/2}\bm{X}) }_2 = O_P(\lambda^{-1/2}p^{-1/2+\epsilon})= O_P(\lambda^{-1+\epsilon}),
\end{align*}
which because $\sum_{i=1}^n \tilde{\bm{C}}_{i k}^2 = 1$, proves the second term in \eqref{supp:equation:WCCW} is $O_P(\lambda^{-1+\epsilon})$. We can then express the $i$th row of first term in \eqref{supp:equation:WCCW} as
\begin{align}
\label{supp:equation:Cik2}
\begin{aligned}
    &\tilde{\bm{C}}_{ik}^2 \bm{R}^{1/2} (\lambda p)^{-1}\sum_{g=1}^p (n^{1/2}\bm{\ell}_g) (w_{gi} - 1)^2 \bm{e}_{g_i}\\
    +& \tilde{\bm{C}}_{ik} \bm{R}^{1/2} (\lambda p)^{-1}\sum_{g=1}^p (n^{1/2}\bm{\ell}_g) (w_{gi}-1)\sum_{j \neq i}^n \tilde{\bm{C}}_{jk} \bm{e}_{g_j}(w_{gj}-1).
\end{aligned}
\end{align}
First,
\begin{align*}
    \max_{i \in [n]}\norm*{ (\lambda p)^{-1}\sum_{g=1}^p (n^{1/2}\bm{\ell}_g) (w_{gi} - 1)^2 \bm{e}_{g_i} }_2 = O_P(\lambda^{-1/2})
\end{align*}
and
\begin{align*}
    \tilde{\bm{C}}_{ik} = n^{-1/2}\bm{C}_{i \bigcdot}^{\top} (\bm{R}^{-1/2})_{\bigcdot k} - n^{-1/2} \bm{X}_{i \bigcdot} (n^{-1}\bm{X}^{\top}\bm{C}) (\bm{R}^{-1/2})_{\bigcdot k},
\end{align*}
which implies $\max_{i \in [\tilde{\bm{C}}_{ik}]}\tilde{\bm{C}}_{ik}^2 = O_P(n^{-1+\epsilon})$, and consequently that
\begin{align*}
    \norm*{ \tilde{\bm{C}}_{ik}^2 \bm{R}^{1/2} (\lambda p)^{-1}\sum_{g=1}^p (n^{1/2}\bm{\ell}_g) (w_{gi} - 1)^2 \bm{e}_{g_i} }_2 \leq \lambda^{-1/2}O_P\{(\max_{i \in [n]}n\tilde{\bm{C}}_{ik}^4)^{1/2}\} = O_P(\lambda^{-1 + \epsilon}).
\end{align*}
Finally, the second term in \eqref{supp:equation:Cik2} can be expressed as
\begin{align*}
    &\tilde{\bm{C}}_{ik} \bm{R}^{1/2}\left\{ (\lambda p)^{-1}\sum_{g=1}^p (n^{1/2}\bm{\ell}_g) (w_{gi}-1)\sum_{j \neq i}^n \bm{e}_{g_j}(w_{gj}-1)(n^{-1/2}\bm{C}_{j \bigcdot})^{\top}\right\} (\bm{R}^{-1/2})_{\bigcdot k}\\
    -& \tilde{\bm{C}}_{ik} \bm{R}^{1/2}\left\{ (\lambda p)^{-1}\sum_{g=1}^p (n^{1/2}\bm{\ell}_g) (w_{gi}-1)\sum_{j \neq i}^n \bm{e}_{g_j}(w_{gj}-1)(n^{-1/2}\bm{X}_{j \bigcdot})^{\top}\right\} (n^{-1}\bm{X}^{\top}\bm{C})(\bm{R}^{-1/2})_{\bigcdot k}.
\end{align*}
Since
\begin{align*}
    \E\left[\left\{ \sum_{j \neq i}^n \bm{e}_{g_j}(w_{gj}-1)(n^{-1/2}\bm{C}_{j r}) \right\}^{2m}\right], \, \E\left[\left\{ \sum_{j \neq i}^n \bm{e}_{g_j}(w_{gj}-1)(n^{-1/2}\bm{X}_{j s}) \right\}^{2m}\right] \leq c_m
\end{align*}
for $r \in [K]$, $s \in [d]$, and any integer $m>0$ and constant $c_m$ that only depends on $m$ by the proofs of Lemmas~\ref{supp:lemma:Cte} and \ref{supp:lemma:Powera}, Corollary~\ref{supp:corollary:MaximalIne} implies
\begin{align*}
    &\max_{i \in [n]}\norm*{ (\lambda p)^{-1}\sum_{g=1}^p (n^{1/2}\bm{\ell}_g) (w_{gi}-1)\sum_{j \neq i}^n \bm{e}_{g_j}(w_{gj}-1)(n^{-1/2}\bm{C}_{j \bigcdot})^{\top} }_2 = O_P(\lambda^{-1/2}p^{-1/2+\epsilon}) = O_P(\lambda^{-1+\epsilon})\\
    & \max_{i \in [n]}\norm*{ (\lambda p)^{-1}\sum_{g=1}^p (n^{1/2}\bm{\ell}_g) (w_{gi}-1)\sum_{j \neq i}^n \bm{e}_{g_j}(w_{gj}-1)(n^{-1/2}\bm{X}_{j \bigcdot})^{\top} }_2=O_P(\lambda^{-1/2}p^{-1/2+\epsilon}) = O_P(\lambda^{-1+\epsilon}).
\end{align*}
Since $\sum_{i=1}^n \tilde{\bm{C}}_{ik}^2 = 1$, this shows the first term in \eqref{supp:equation:WCCW} is $O_P(\lambda^{-1+\epsilon})$, and completes the proof.
\end{proof}

\begin{lemma}[First term in \eqref{supp:equation:stildeC}]
\label{supp:lemma:s12}
Suppose the assumptions of Lemma~\ref{supp:Lemma:Htilde1} hold and let $\bm{s}^{(1)} = (\lambda p)^{-1}\sum_{g=1}^p \bm{P}_g^{\perp} \bm{e}_g \tilde{\bm{\ell}}_g^{\top}$. Then for any constant $\epsilon \in (0,1/2)$, $\norm*{ \bm{s}^{(1)}}_2 = O_P(\lambda^{-1/2+\epsilon})$.
\end{lemma}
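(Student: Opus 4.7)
The plan is to write $\bm{P}_g^{\perp} = \bm{W}_g - \bm{W}_g\bm{X}(\bm{X}^{\top}\bm{W}_g\bm{X})^{-1}\bm{X}^{\top}\bm{W}_g$ so that $\bm{s}^{(1)} = \bm{s}^{(1a)} - \bm{s}^{(1b)}$, with $\bm{s}^{(1a)} = (\lambda p)^{-1}\sum_g \bm{W}_g\bm{e}_g\tilde{\bm{\ell}}_g^{\top}$ and $\bm{s}^{(1b)}$ the $\bm{X}$-correction, and to bound each piece in operator norm using the matrix concentration tools from Lemmas~\ref{supp:lemma:VershExt} and \ref{supp:lemma:LatalaExt} together with the uniform-in-$g$ maximal bounds (Corollary~\ref{supp:corollary:MaximalIne}) and estimates already developed in the proof of Lemma~\ref{supp:lemma:s11}. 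I will repeatedly use that the loading matrix $\tilde{\bm{L}} \in \mathbb{R}^{p \times K}$ with rows $\tilde{\bm{\ell}}_g^{\top}$ factors as $\tilde{\bm{L}} = \bm{L}(n\bm{R})^{1/2}$ and hence satisfies $\norm{\tilde{\bm{L}}}_2 = O((n\lambda)^{1/2})$ and $\max_g\norm{\tilde{\bm{\ell}}_g}_2 = O(\lambda^{1/2})$ by Assumption~\ref{supp:assumptions:FA}\ref{supp:assumption:lambda}.

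For $\bm{s}^{(1a)}$, I further split $\bm{W}_g = I_n + (\bm{W}_g - I_n)$. The piece $(\lambda p)^{-1}\bm{E}^{\top}\tilde{\bm{L}}$ is handled by decomposing $\bm{E}^{\top} = \bm{G}^{\top}\bm{\gamma}^{(e)} + \bm{\Delta}^{(e)\top}$: Lemma~\ref{supp:lemma:VershExt} gives $\norm{\bm{\Delta}^{(e)\top}}_2 = O_P(p^{1/2})$, hence $\norm{\bm{\Delta}^{(e)\top}\tilde{\bm{L}}}_2 = O_P(n\lambda^{1/2})$; and a direct entry-wise calculation using the sparsity bounds on $\bm{\gamma}^{(e)}$ from Assumption~\ref{supp:assumptions:FA}(e) yields $\norm{\bm{\gamma}^{(e)}\tilde{\bm{L}}}_F^2 = O(n^{1/2}\lambda)$, and then the independence and boundedness of the entries of $\bm{G}$ give $\E\norm{\bm{G}^{\top}\bm{\gamma}^{(e)}\tilde{\bm{L}}}_F^2 = O(n^{3/2}\lambda)$. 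Together these two estimates yield $\norm{(\lambda p)^{-1}\bm{E}^{\top}\tilde{\bm{L}}}_2 = O_P(\lambda^{-1/2})$. The remaining piece is $(\lambda p)^{-1}\bm{M}\tilde{\bm{L}}$ with $\bm{M}\in\mathbb{R}^{n\times p}$ defined by $\bm{M}_{ig}=(w_{gi}-1)e_{gi}$; conditional on $\bm{C}$ and $\bm{E}$ the entries of $\bm{M}$ are mean-zero and independent (the randomness reducing to that of the $r_{gi}$'s), with polynomially bounded moments coming from the polynomial-tail assumption on $\Psi$, so a Latala-type bound in the style of Lemma~\ref{supp:lemma:LatalaExt} delivers $\norm{p^{-1/2}\bm{M}}_2 = O_P(n^{\epsilon})$ and hence $\norm{(\lambda p)^{-1}\bm{M}\tilde{\bm{L}}}_2 = O_P(n^{\epsilon}\lambda^{-1/2})$.

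For $\bm{s}^{(1b)}$, the triangle inequality gives $\norm{\bm{s}^{(1b)}}_2 \leq (\lambda p)^{-1}\sum_g \norm{\bm{W}_g\bm{X}}_2 \norm{(\bm{X}^{\top}\bm{W}_g\bm{X})^{-1}}_2 \norm{\bm{X}^{\top}\bm{W}_g\bm{e}_g}_2 \norm{\tilde{\bm{\ell}}_g}_2$, after which Corollary~\ref{supp:corollary:MaximalIne} supplies $\max_g\norm{\bm{W}_g\bm{X}}_2 = O_P(n^{1/2})$ and $\max_g\norm{(\bm{X}^{\top}\bm{W}_g\bm{X})^{-1}}_2 = O_P(n^{-1})$, and the proof of Lemma~\ref{supp:lemma:s11} supplies $\max_g\norm{\bm{X}^{\top}\bm{W}_g\bm{e}_g}_2 = O_P(n^{1/2+\epsilon})$; the product sums to $O_P(n^{\epsilon}\lambda^{-1/2})$. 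Since $\lambda \gtrsim n^{1/2+\eta}$ by Assumption~\ref{supp:assumptions:FA}\ref{supp:assumption:lambda}, the $n^{\epsilon}$ factor in both pieces can be absorbed into $\lambda^{\epsilon'}$ for any $\epsilon'>0$, giving the claimed rate. The main obstacle is the $(\bm{W}_g - I_n)\bm{e}_g$ term in $\bm{s}^{(1a)}$: because $w_{gi}$ depends on $y_{gi}$ and therefore on $e_{gi}$ itself, the entries of $\bm{M}$ are not independent of $\bm{E}$ in the ambient probability space, so one cannot naively apply spectral concentration; the resolution is to condition on $(\bm{C},\bm{E})$ so that only the $r_{gi}$'s remain random, then verify that the polynomial left-tail of $\Psi$ delivers the polynomial moment bounds the Latala-type estimate requires.
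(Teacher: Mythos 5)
Your proposal is correct and uses the same initial split into the $\bm{W}_g\bm{e}_g\tilde{\bm{\ell}}_g^\top$ term and the $\bm{X}$-correction, and your treatment of the $\bm{X}$-correction — uniform-in-$g$ maximal bounds followed by the crude triangle-inequality sum, losing $n^\epsilon$ — matches the paper's argument essentially verbatim. Where you diverge is in the main term: the paper handles $(\lambda p)^{-1}\sum_g \bm{W}_g\bm{e}_g(n^{1/2}\bm{\ell}_g)^\top$ by a direct second-moment (Frobenius-norm) calculation, using $\E(\bm{W}_g\bm{e}_g)=0$ and summing $\Tr\E\{\V(\bm{W}_g\bm{e}_g n^{1/2}\bm{\ell}_{gk}\mid\bm{C})\}$ to get $O_P(\lambda^{-1/2})$ in one line without any $n^\epsilon$ loss. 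You instead split $\bm{W}_g=I_n+(\bm{W}_g-I_n)$, handle $\bm{E}^\top\tilde{\bm{L}}$ by further decomposing $\bm{E}=\bm{\gamma}^{(e)\top}\bm{G}+\bm{\Delta}^{(e)}$ (operator-norm bound via Lemma~\ref{supp:lemma:VershExt} for the idiosyncratic piece, an explicit Frobenius-norm moment computation exploiting the sparsity of $\bm{\gamma}^{(e)}$ for the genetic piece), and control the fluctuation $\bm{M}_{ig}=(w_{gi}-1)e_{gi}$ by a Latala-type operator-norm bound after conditioning on $(\bm{C},\bm{E})$, exactly the device used in Lemmas~\ref{supp:lemma:LatalaW}--\ref{supp:lemma:LatalaExt}. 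Both routes land on the same rate. The paper's variance calculation is shorter and avoids the $n^\epsilon$ penalty on this term (irrelevant here only because the $\bm{X}$-correction already contributes one), while yours is more explicit about the source of each contribution and makes the dependence between $\bm{W}_g$ and $\bm{e}_g$ --- and the genetic sharing across metabolites --- visible rather than implicit in a conditional-variance identity. Two small points to tighten: $\bm{G}$ and $\tilde{\bm{L}}$ are not independent (since $\tilde{\bm{L}}=\bm{L}(n\bm{R})^{1/2}$ with $\bm{R}$ depending on $\bm{C}$, hence on $\bm{G}$), so the expectation $\E\norm{\bm{G}^\top\bm{\gamma}^{(e)}\tilde{\bm{L}}}_F^2$ should be computed with $\bm{L}$ in place of $\tilde{\bm{L}}$ and the $O_P(n^{1/2})$ factor $\norm{(n\bm{R})^{1/2}}_2$ pulled out first; and the deterministic bound $\norm{\bm{\gamma}^{(e)}\tilde{\bm{L}}}_F^2=O(n^{1/2}\lambda)$ should be $O_P$ for the same reason. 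Neither affects the rate.
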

\begin{proof}
Without loss of generality, assume $n^{-1}\bm{X}^{\top}\bm{X} = I_K$. We can express $\bm{P}_g^{\perp} \bm{e}_g \tilde{\bm{\ell}}_g^{\top}$ as
\begin{align*}
    \bm{P}_g^{\perp} \bm{e}_g \tilde{\bm{\ell}}_g^{\top} = \bm{W}_g \bm{e}_g \tilde{\bm{\ell}}_g^{\top} - n^{-1/2}\bm{W}_g\bm{X}(n^{-1}\bm{X}^{\top}\bm{W}_g\bm{X})^{-1}(n^{-1/2}\bm{X}^{\top}\bm{W}_g\bm{e}_g) \tilde{\bm{\ell}}_g^{\top}.
\end{align*}
The same techniques used to prove Lemma~\ref{supp:lemma:s11} can be used to show
\begin{align*}
    \max_{g \in [p]}\norm*{ n^{-1/2}\bm{W}_g\bm{X}(n^{-1}\bm{X}^{\top}\bm{W}_g\bm{X})^{-1}(n^{-1/2}\bm{X}^{\top}\bm{W}_g\bm{e}_g) }_2 = n^{\epsilon}
\end{align*}
for any $\epsilon \in (0,1/2)$, which implies
\begin{align*}
    \norm*{ (\lambda p)^{-1}\sum_{g=1}^p n^{-1/2}\bm{W}_g\bm{X}(n^{-1}\bm{X}^{\top}\bm{W}_g\bm{X})^{-1}(n^{-1/2}\bm{X}^{\top}\bm{W}_g\bm{e}_g) \tilde{\bm{\ell}}_g^{\top} }_2 = O_P(\lambda^{-1/2 + \epsilon}).
\end{align*}
Next,
\begin{align*}
    \norm*{ (\lambda p)^{-1}\sum_{g=1}^p\bm{W}_g \bm{e}_g \tilde{\bm{\ell}}_g^{\top} }_2 = \norm*{ (\lambda p)^{-1}\sum_{g=1}^p\bm{W}_g \bm{e}_g (n^{1/2}\bm{\ell}_g)^{\top} }_2 O_P(1).
\end{align*}
To prove the results, we therefore only have to show that
\begin{align*}
    \norm*{ (\lambda p)^{-1}\sum_{g=1}^p\bm{W}_g \bm{e}_g (n^{1/2}\bm{\ell}_g)^{\top} }_2 = O_P(\lambda^{-1/2+\epsilon}),
\end{align*}
which follows because for any $k \in [K]$ and some constants $c_1,c_2>0$,
\begin{align*}
    \E\left[\sum_{i=1}^n \{ (\lambda p)^{-1}\sum_{g=1}^p\bm{W}_g \bm{e}_g (n^{1/2}\bm{\ell}_{g_k}) \}_i^2 \right] =& (\lambda p)^{-2}\Tr\{ \V( \sum_{g=1}^p\bm{W}_g \bm{e}_g n^{1/2}\bm{\ell}_{g_k} ) \}\\
    =& (\lambda p)^{-2} \sum_{g=1}^p\Tr[ \E\{ \V(\bm{W}_g \bm{e}_g n^{1/2}\bm{\ell}_{g_k} \mid \bm{C}) \} ]\\
    \leq & c_1 \lambda^{-1}p^{-2} \sum_{g=1}^p \Tr\{\V(\bm{W}_g \bm{e}_g)\}\leq c_1 c_2\lambda^{-1}.
\end{align*}
\end{proof}

\begin{lemma}[Second term in \eqref{supp:equation:stildeC}]
\label{supp:lemma:s2}
Define
\begin{align*}
    \bm{s}^{(2)} = (p\lambda)^{-1}\sum_{g=1}^p \{ \bm{P}_g^{\perp} - \bm{P}_g^{\perp} \tilde{\bm{C}}( \tilde{\bm{C}}^{\top} \bm{P}_g^{\perp}\tilde{\bm{C}} )^{-1}\tilde{\bm{C}}^{\top}  \bm{P}_g^{\perp} \}\bm{e}_g \bm{e}_g^{\top} \bm{P}_g^{\perp} \tilde{\bm{C}}( \tilde{\bm{C}}^{\top} \bm{P}_g^{\perp}\tilde{\bm{C}} )^{-1}.
\end{align*}
Then under the assumptions of Lemma~\ref{supp:Lemma:Htilde1} and for any $\epsilon \in (0,1/2)$, $\norm*{ \bm{s}^{(2)} }_2 = O_P(\lambda^{-1 + \epsilon})$
\end{lemma}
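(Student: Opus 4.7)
The plan is to factor $\bm{s}^{(2)}$ as a product of two tall rectangular matrices and bound each factor separately, mirroring the strategy used in Lemma~\ref{supp:Lemma:Htilde24}. Specifically, let $\bm{A}_g = \bm{P}_g^{\perp} - \bm{P}_g^{\perp}\tilde{\bm{C}}(\tilde{\bm{C}}^{\top}\bm{P}_g^{\perp}\tilde{\bm{C}})^{-1}\tilde{\bm{C}}^{\top}\bm{P}_g^{\perp}$ (which is $\bm{A}_g(\tilde{\bm{C}})$ in the earlier notation) and $\bm{B}_g = \bm{P}_g^{\perp}\tilde{\bm{C}}(\tilde{\bm{C}}^{\top}\bm{P}_g^{\perp}\tilde{\bm{C}})^{-1} \in \mathbb{R}^{n \times K}$. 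Define
\begin{align*}
\tilde{\bm{U}} = [\bm{A}_1\bm{e}_1 \cdots \bm{A}_p\bm{e}_p] \in \mathbb{R}^{n \times p}, \quad \tilde{\bm{V}} = [\bm{B}_1^{\top}\bm{e}_1 \cdots \bm{B}_p^{\top}\bm{e}_p] \in \mathbb{R}^{K \times p},
\end{align*}
so that $\bm{s}^{(2)} = (\lambda p)^{-1}\tilde{\bm{U}}\tilde{\bm{V}}^{\top}$ and $\norm{\bm{s}^{(2)}}_2 \leq \norm{(\lambda p)^{-1/2}\tilde{\bm{U}}}_2 \norm{(\lambda p)^{-1/2}\tilde{\bm{V}}}_2$.

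The first factor is handled essentially for free by invoking prior results. Since $\tilde{\bm{C}} \in \Omega_{\delta}$ trivially, the bound derived in the proof of Lemma~\ref{supp:Lemma:Htilde1} on $(\lambda p)^{-1}\sum_g \bm{A}_g(\bm{U})\bm{e}_g\bm{e}_g^{\top}\bm{A}_g(\bm{U})$ applied at $\bm{U} = \tilde{\bm{C}}$ yields $\norm{(\lambda p)^{-1}\tilde{\bm{U}}\tilde{\bm{U}}^{\top}}_2 = O_P(\lambda^{-1+\epsilon})$, hence $\norm{(\lambda p)^{-1/2}\tilde{\bm{U}}}_2 = O_P(\lambda^{-1/2+\epsilon/2})$ for any $\epsilon>0$.

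For the second factor, I would use the trace bound $\norm{(\lambda p)^{-1}\tilde{\bm{V}}\tilde{\bm{V}}^{\top}}_2 \leq \Tr\{(\lambda p)^{-1}\tilde{\bm{V}}\tilde{\bm{V}}^{\top}\} = (\lambda p)^{-1}\sum_{g=1}^p \norm{\bm{B}_g^{\top}\bm{e}_g}_2^2$. By Lemma~\ref{supp:lemma:Bs} (specifically \eqref{supp:equation:Abound}), $\max_{g\in[p]}\norm{(\tilde{\bm{C}}^{\top}\bm{P}_g^{\perp}\tilde{\bm{C}})^{-1}}_2 \leq c\{1+o_P(1)\}$ for some constant $c>0$, so $\norm{\bm{B}_g^{\top}\bm{e}_g}_2^2 \leq c^2\{1+o_P(1)\}\norm{\tilde{\bm{C}}^{\top}\bm{P}_g^{\perp}\bm{e}_g}_2^2$. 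The expectation of $\norm{\tilde{\bm{C}}^{\top}\bm{P}_g^{\perp}\bm{e}_g}_2^2$ is uniformly bounded by the moment computations in Lemma~\ref{supp:lemma:s11} (which expand $\tilde{\bm{C}}^{\top}\bm{P}_g^{\perp}\bm{e}_g$ as $\tilde{\bm{C}}^{\top}\bm{W}_g\bm{e}_g$ minus a rapidly-decaying $\bm{X}$-projection term and then apply Lemma~\ref{supp:lemma:Cte}, Remark~\ref{supp:remark:CtE}, and Lemma~\ref{supp:lemma:Powera}). Markov's inequality then gives $(\lambda p)^{-1}\sum_{g=1}^p \norm{\bm{B}_g^{\top}\bm{e}_g}_2^2 = O_P(\lambda^{-1})$, hence $\norm{(\lambda p)^{-1/2}\tilde{\bm{V}}}_2 = O_P(\lambda^{-1/2})$.

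Combining the two bounds gives $\norm{\bm{s}^{(2)}}_2 = O_P(\lambda^{-1/2+\epsilon/2})\cdot O_P(\lambda^{-1/2}) = O_P(\lambda^{-1+\epsilon/2})$, which after relabeling $\epsilon$ yields the claimed rate. I do not expect any serious obstacle: the heavy lifting has already been done in Lemmas~\ref{supp:Lemma:Htilde1}, \ref{supp:lemma:Bs}, and \ref{supp:lemma:s11}, and the present proof amounts to organizing $\bm{s}^{(2)}$ into a product of two factors whose squared operator norms reduce to quantities already controlled. The only mild subtlety is verifying that the trace bound on $\tilde{\bm{V}}\tilde{\bm{V}}^{\top}$ is tight enough; because $\tilde{\bm{V}}$ has only $K$ rows (not $n$), using the trace as an upper bound on the spectral norm loses only a factor of $K = O(1)$ and so is harmless.
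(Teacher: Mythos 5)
Your proof is correct and takes essentially the same route as the paper: factor $\bm{s}^{(2)}=(\lambda p)^{-1}\bm{S}\bm{T}^{\top}$ with $\bm{S}=[\bm{A}_g(\tilde{\bm{C}})\bm{e}_g]_g$ and $\bm{T}=[\bm{B}_g(\tilde{\bm{C}})^{\top}\bm{e}_g]_g$, apply $\norm{\bm{S}\bm{T}^{\top}}_2\leq\norm{\bm{S}\bm{S}^{\top}}_2^{1/2}\norm{\bm{T}\bm{T}^{\top}}_2^{1/2}$, and bound the first factor by the $\bm{U}=\tilde{\bm{C}}$ specialization of Lemma~\ref{supp:Lemma:Htilde1}. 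For the second factor the paper simply invokes the trace-plus-\ref{supp:lemma:EtE} technique established inside the proof of Lemma~\ref{supp:Lemma:Htilde3}, yielding $O_P(\lambda^{-1+\epsilon})$; you instead unpack the trace bound directly (noting, correctly, that $\bm{T}\bm{T}^{\top}$ is $K\times K$ so the trace is tight up to $K=O(1)$) and appeal to the moment computations in Lemma~\ref{supp:lemma:Cte}. One small technical caveat in your route: the unconditional expectation $\E\norm{\tilde{\bm{C}}^{\top}\bm{P}_g^{\perp}\bm{e}_g}_2^2$ is not quite controlled, because $\bm{P}_g^{\perp}$ contains $(\bm{X}^{\top}\bm{W}_g\bm{X})^{-1}$ and the prefactor $\tilde{\bm{C}}$ contains $(n^{-1}\bm{C}^{\top}P_{\bm{X}}^{\perp}\bm{C})^{-1/2}$, neither of which is deterministically bounded. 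The clean fix is to factor those random inverses out (they are $O_P(1)$ uniformly in $g$, as the paper notes), and then apply Markov to $p^{-1}\sum_g\norm{n^{-1/2}\bm{C}^{\top}P_{\bm{X}}^{\perp}\bm{P}_g^{\perp}\bm{e}_g}_2^2$ conditionally, which weakens your $O_P(\lambda^{-1})$ to $O_P(\lambda^{-1+\epsilon})$ but still gives the stated rate after combining with the first factor.
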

\begin{proof}
Define $\bm{S} = \begin{pmatrix} \bm{A}_1(\tilde{\bm{C}})\bm{e}_1 \cdots \bm{A}_p(\tilde{\bm{C}})\bm{e}_p \end{pmatrix} \in \mathbb{R}^{n \times p}$ and $\bm{T} = \begin{pmatrix} \bm{B}_1(\tilde{\bm{C}})^{\top}\bm{e}_1 \cdots \bm{B}_p(\tilde{\bm{C}})^{\top}\bm{e}_p \end{pmatrix} \in \mathbb{R}^{K \times p}$. Then $\bm{s}^{(2)} = (\lambda p)^{-1}\bm{S} \bm{T}^{\top}$, which implies
\begin{align*}
    \norm*{\bm{s}^{(2)}}_2 \leq \norm*{(\lambda p)^{-1} \bm{S} \bm{S}^{\top}}_2^{1/2} \norm*{(\lambda p)^{-1} \bm{T} \bm{T}^{\top}}_2^{1/2}.
\end{align*}
The proof of Lemma~\ref{supp:Lemma:Htilde1} shows that
\begin{align*}
    \norm*{(\lambda p)^{-1} \bm{S} \bm{S}^{\top}}_2 = \norm*{ (\lambda p)^{-1} \sum_{g=1}^p \bm{A}_g(\tilde{\bm{C}})\bm{e}_g \bm{e}_g^{\top} \bm{A}_g(\tilde{\bm{C}}) }_2 = O_P(\lambda^{-1 + \epsilon})
\end{align*}
and the proof of Lemma~\ref{supp:Lemma:Htilde3} shows that
\begin{align*}
    \norm*{(\lambda p)^{-1} \bm{T} \bm{T}^{\top}}_2 = \norm*{ (\lambda p)^{-1} \sum_{g=1}^p \bm{B}_g(\tilde{\bm{C}})^{\top}\bm{e}_g \bm{e}_g^{\top} \bm{B}_g(\tilde{\bm{C}}) }_2 = O_P(\lambda^{-1 + \epsilon}).
\end{align*}
\end{proof}

\begin{theorem}
\label{supp:theorem:ChatProp}
Suppose the assumptions of Lemma~\ref{supp:Lemma:Htilde1} hold and let $\bm{\Lambda} = np^{-1}\bm{L}^{\top}\bm{L}$, $f$ and $\Omega_{\delta}$ be as defined in \eqref{supp:equation:fMax} and Lemma~\ref{supp:lemma:f2}, respectively, and $\hat{\bm{C}} \in \argmax_{\bm{U} \in \Omega_{\delta}} f(\bm{U})$. Then there exist $\hat{\bm{v}} \in \mathbb{R}^{K \times K}$, $\hat{\bm{z}} \in \mathbb{R}^{(n-d-K) \times K}$, and a unitary matrix $\bm{v} \in \mathbb{R}^{K \times K}$ such that $\hat{\bm{v}}^{\top}\hat{\bm{v}} + \hat{\bm{z}}^{\top}\hat{\bm{z}} = I_K$ and the following hold for any constant $\epsilon \in (0,1/2)$:
\begin{align}
\label{supp:equation:CperpExp}
    &\hat{\bm{C}} = \tilde{\bm{C}}\hat{\bm{v}} + \bm{Q}\hat{\bm{z}}, \quad \norm*{\hat{\bm{v}} - \bm{v}}_2,\norm*{\hat{\bm{z}} - p^{-1}\sum_{g=1}^p \bm{Q}^{\top}\bm{P}_g^{\perp}\bm{e}_g (n^{1/2}\bm{\ell}_g)^{\top}\bm{\Lambda}^{-1}\bm{v}}_2 = O_P(\lambda^{-1+\epsilon}).
\end{align}
\end{theorem}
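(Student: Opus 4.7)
The plan is to derive the expansion \eqref{supp:equation:CperpExp} via a Newton-style analysis of the first-order optimality conditions around a judiciously chosen ``target'' unitary $\bm{v}$. By Corollary~\ref{supp:corollary:Consistency}, I may restrict to $\hat{\bm{C}} \in \Omega_\delta$ with $\delta = O_P(n^{-\eta})$ for some $\eta \in (0,1/4)$. Lemma~\ref{supp:lemma:Vz} then furnishes a decomposition $\hat{\bm{C}} = \tilde{\bm{C}}\hat{\bm{v}}_0 + \bm{Q}\hat{\bm{z}}_0$ with $\hat{\bm{v}}_0^\top\hat{\bm{v}}_0 + \hat{\bm{z}}_0^\top\hat{\bm{z}}_0 = I_K$, $\norm*{\hat{\bm{z}}_0}_2 = O_P(\delta)$, and $\hat{\bm{v}}_0$ within $O_P(\delta^2)$ of a unitary matrix. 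Because $f$ depends on $\bm{U}$ only through $\im(\bm{U})$, I am free to right-multiply $\hat{\bm{C}}$ by any $K\times K$ orthogonal matrix; following Remark~\ref{supp:remark:ConstructV}, I choose this rotation via the SVD $\hat{\bm{v}}_0 = \bm{A}\bm{\Sigma}\bm{B}^\top$ and set $\bm{v} = \bm{A}\bm{B}^\top$ (the polar factor), yielding new coordinates $(\hat{\bm{v}},\hat{\bm{z}})$ in which $\hat{\bm{v}}$ is symmetric positive semidefinite and, crucially, $\norm*{\hat{\bm{v}}-\bm{v}}_2 = O(\norm*{\hat{\bm{z}}}_2^2)$ is forced by the orthonormality identity alone.

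The next step is to invoke the first-order optimality condition. Since $\hat{\bm{C}}$ is a local maximizer of $f$ subject to $\bm{U}^\top\bm{U}=I_K$ and $\bm{X}^\top\bm{U}=0$, the reduced score $\tilde{\bm{s}}$ from \eqref{supp:equation:stilde} must vanish in all tangent directions at $(\hat{\bm{v}},\hat{\bm{z}})$. Taylor expanding around the base point $(\bm{v},\bm{0})$ gives
\begin{align*}
0 = \tilde{\bm{s}}\{(\bm{v}^\top,\bm{0})^\top\} + \tilde{\bm{H}}(\bar{\bm{U}})\begin{pmatrix}\myvec(\hat{\bm{v}}-\bm{v})\\ \myvec(\hat{\bm{z}})\end{pmatrix} + \text{higher order},
\end{align*}
for some $\bar{\bm{U}}$ on the segment between $\tilde{\bm{C}}\bm{v}$ and $\hat{\bm{C}}$. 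The base-point score has the explicit form \eqref{supp:equation:stildeC}: its first summand is $\bm{Q}^\top\{\bm{s}^{(1)} - \bm{s}^{(1,1)}\}\bm{v}$ where $\bm{s}^{(1)}$ is the object in Lemma~\ref{supp:lemma:s12} with $\norm*{\bm{s}^{(1)}}_2 = O_P(\lambda^{-1/2+\epsilon})$ and $\bm{s}^{(1,1)}$ is the correction from Lemma~\ref{supp:lemma:s11} of order $O_P(\lambda^{-1+\epsilon})$, and its second summand $\bm{Q}^\top\bm{s}^{(2)}\bm{v}$ is $O_P(\lambda^{-1+\epsilon})$ by Lemma~\ref{supp:lemma:s2}. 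Thus the leading contribution is $\bm{Q}^\top\bm{s}^{(1)}\bm{v} = (\lambda p)^{-1}\sum_g\bm{Q}^\top\bm{P}_g^\perp\bm{e}_g\tilde{\bm{\ell}}_g^\top\bm{v}$.

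The inversion step exploits Corollary~\ref{supp:corollary:Htilde}, which states $\tilde{\bm{H}}(\bar{\bm{U}}) + (\lambda^{-1}\bm{v}^\top\bm{\Lambda}\bm{v})\otimes(\bm{0}_{K\times K}\oplus I_{n-d-K}) = o_P(1)$. The Hessian is singular on the first $K$ block-rows, reflecting the rotational invariance $f(\bm{U})=f(\bm{U}\bm{R})$, but strictly negative definite on the $\bm{z}$-block with minimum eigenvalue of order $\lambda_K/\lambda$. Since $\hat{\bm{v}}-\bm{v}$ is pinned at second order $O_P(\norm{\hat{\bm{z}}}_2^2)$ by the orthonormality constraint and enters only through the singular (vanishing) $\bm{v}$-block of $\tilde{\bm{H}}$, the first-order system is in effect a non-degenerate linear equation for $\hat{\bm{z}}$ alone. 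Solving yields
\begin{align*}
\hat{\bm{z}} = \lambda\,\bm{Q}^\top\bm{s}^{(1)}\bm{v}\,(\bm{v}^\top\bm{\Lambda}\bm{v})^{-1} + O_P(\lambda^{-1+\epsilon}) = p^{-1}\sum_{g=1}^p \bm{Q}^\top\bm{P}_g^\perp\bm{e}_g(n^{1/2}\bm{\ell}_g)^\top\bm{\Lambda}^{-1}\bm{v} + O_P(\lambda^{-1+\epsilon}),
\end{align*}
after substituting $\tilde{\bm{\ell}}_g = n^{1/2}\bm{R}^{1/2}\bm{\ell}_g$ with $\bm{R}=n^{-1}\bm{C}^\top P_{\bm{X}}^\perp \bm{C}$ and using unitarity of $\bm{v}$. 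The bound on $\norm*{\hat{\bm{v}}-\bm{v}}_2$ follows from $\norm*{\hat{\bm{z}}}_2^2 = O_P(\lambda^{-1+2\epsilon})$ together with the orthonormality relation.

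The principal obstacle will be the rotational degeneracy of $\tilde{\bm{H}}$: since the Hessian is singular in the $\bm{v}$-directions, a direct inversion is impossible and I must carefully argue that the gauge freedom in choosing $\bm{v}$ precisely matches the null directions, so that the orthonormality constraint fixes $\hat{\bm{v}}-\bm{v}$ automatically and the first-order equation restricts to a non-degenerate system on $\hat{\bm{z}}$. A secondary but nontrivial bookkeeping challenge is tracking the $\bm{R}^{1/2}$ factors hidden inside $\tilde{\bm{\ell}}_g$ and $\tilde{\bm{C}}$ so that they cancel with corresponding factors in $(\bm{v}^\top\bm{\Lambda}\bm{v})^{-1}$ (possibly absorbing an additional $\bm{R}^{\pm 1/2}$ rotation into the choice of $\bm{v}$ from Remark~\ref{supp:remark:ConstructV}); converting the resulting Kronecker-form equation into the matrix expression in \eqref{supp:equation:CperpExp} requires nothing more than the identity $(\bm{A}\otimes I)\myvec(\bm{M}) = \myvec(\bm{M}\bm{A}^\top)$, which is routine once the dominant term has been identified.
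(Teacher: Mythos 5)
Your proposal is, in essence, the paper's own proof: the same Lemma~\ref{supp:lemma:Vz} decomposition, the same polar-factor choice of $\bm{v}$ per Remark~\ref{supp:remark:ConstructV}, a Taylor/Newton linearization of the reduced score \eqref{supp:equation:stildeC}, Corollary~\ref{supp:corollary:Htilde} for the Hessian's block structure, and Lemmas~\ref{supp:lemma:s11}, \ref{supp:lemma:s12}, \ref{supp:lemma:s2} for the score pieces. Two remarks on where the paper is more careful than your sketch. First, the paper does not perform a straight-line expansion; it parametrizes the path as $\hat{\bm{z}}(t)=t\hat{\bm{z}}$, $\hat{\bm{v}}(t)=\bm{v}\{I_K-\hat{\bm{z}}(t)^{\top}\hat{\bm{z}}(t)\}^{1/2}$, so every point stays on the constraint manifold, and then integrates $\tilde{\bm{H}}\{\bm{\gamma}(t)\}\nabla_t\myvec\{\bm{\gamma}(t)\}$ over $t\in[0,1]$, bounding $\nabla_t\myvec\{\bm{\gamma}(t)\}$ via \eqref{supp:equation:GradGamma}; this is also what makes the ``gauge/null direction'' issue disappear without a separate argument, because the $\hat{\bm{v}}$-component of the tangent is automatically pinned to order $\norm{\hat{\bm{z}}}_F^3$. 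Second, the paper proceeds in two stages: it first combines Corollary~\ref{supp:corollary:Htilde} (with the crude $\delta=O_P(n^{-\eta})$ from Corollary~\ref{supp:corollary:Consistency}) with Lemma~\ref{supp:lemma:s12} to show $\norm{\hat{\bm{z}}}_2=O_P(\lambda^{-1/2+\epsilon})$, and only then re-applies Corollary~\ref{supp:corollary:Htilde} and \eqref{supp:equation:GradGamma} with this sharper bound to extract the leading term; your sketch should make this bootstrapping explicit, since the error bound in Corollary~\ref{supp:corollary:Htilde} depends on the current estimate of $\norm{\hat{\bm{z}}}_2$. Finally, your closing speculation that the $\bm{R}^{1/2}$ factor hidden in $\tilde{\bm{\ell}}_g$ can be absorbed into the choice of $\bm{v}$ is not viable as stated: $\bm{R}^{1/2}=(n^{-1}\bm{C}^{\top}P_{\bm{X}}^{\perp}\bm{C})^{1/2}$ is symmetric positive-definite but not orthogonal, so it cannot be folded into a unitary $\bm{v}$; the correct observation is that the $\bm{R}^{1/2}$ factors from $\tilde{\bm{\ell}}_g\tilde{\bm{\ell}}_g^{\top}$ on the Hessian side (see the $\bm{R}^{1/2}\otimes I_n$ conjugation appearing in the proof of Lemma~\ref{supp:Lemma:Htilde3}) must be tracked jointly with the $\bm{R}^{1/2}$ in $\tilde{\bm{\ell}}_g$ on the score side, so that they cancel through $(\bm{v}^{\top}\bm{R}^{1/2}\bm{\Lambda}\bm{R}^{1/2}\bm{v})^{-1}$ rather than being rotated away.
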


\begin{proof}
The expression for $\hat{\bm{C}}$ is a direct consequence of Lemma~\ref{supp:lemma:Vz}. By Lemma~\ref{supp:lemma:Vz}, $\norm*{ \hat{\bm{v}} - \bm{v} }_2 = O(\norm*{\hat{\bm{z}}}_2^2)$ for $\bm{v} = \hat{\bm{A}}\hat{\bm{B}}^{\top}$ and $\hat{\bm{A}},\hat{\bm{B}} \in \mathbb{R}^{K \times K}$ the left and right singular vectors of $\hat{\bm{v}}$. For $t\in[0,1]$, let $\hat{\bm{z}}(t) = t\hat{\bm{z}}$, $\hat{\bm{v}}(t)=\bm{v}\{ I_K - \hat{\bm{z}}(t)^{\top}\hat{\bm{z}}(t) \}^{1/2}$, and $\bm{\gamma}(t) = (\hat{\bm{v}}(t)^{\top},\, \hat{\bm{z}}(t)^{\top})^{\top} \in \mathbb{R}^{(n-d) \times K}$. Since $\hat{\bm{v}}^{\top}\hat{\bm{v}} + \hat{\bm{z}}^{\top}\hat{\bm{z}} = I_K$, $\hat{\bm{z}}$, $\hat{\bm{v}} = \bm{v}( I_K - \hat{\bm{z}}^{\top}\hat{\bm{z}} )^{1/2}$, meaning $\bm{\gamma}(0) = (\bm{v}^{\top},\, \bm{0})^{\top}$, $\bm{\gamma}(1) = (\hat{\bm{v}}^{\top},\, \hat{\bm{z}}^{\top})^{\top}$, and for $\hat{\bm{C}}(t) = \bm{C}\hat{\bm{v}}(t) + \bm{Q}\hat{\bm{z}}(t)$,
\begin{align*}
    \norm*{ \bm{\gamma}(t)-\bm{\gamma}(0) }_2 = \norm*{ \hat{\bm{C}}(t) - \bm{C}\bm{v} }_2 \leq c_1\norm*{ \hat{\bm{C}} - \bm{C}\bm{v} }_2 \leq c_2\norm*{ P_{\hat{\bm{C}}} - P_{\bm{C}} }_2, \quad t \in [0,1]
\end{align*}
for some constants $c_1,c_2>0$. By Taylor's Theorem
\begin{align}
    &\bm{0} = \myvec[\tilde{\bm{s}}\{ \bm{\gamma}(1) \}] = \myvec[\tilde{\bm{s}}\{ \bm{\gamma}(0) \}] + \smallint_0^1 \tilde{\bm{H}}\{ \bm{\gamma}(t) \} \nabla_t \myvec\{\bm{\gamma}(t)\} \text{d}t \nonumber\\
    \label{supp:equation:GradGamma}
    &\sum_{t \in [0,1]}\norm{ \nabla_t \myvec\{\bm{\gamma}(t)\} }_2 \leq \norm{ \hat{\bm{z}} }_F[ 1 + \norm{ \hat{\bm{z}} }_F^2\{1+o_P(1)\} ],
\end{align}
where $\norm{ \hat{\bm{z}} }_F = O_P(n^{-\eta})$ for any $\eta \in (0,1/4)$ by Corollary~\ref{supp:corollary:Consistency} and Lemma~\ref{supp:lemma:Vz}. Then by the expression for $\tilde{\bm{s}}$ in \eqref{supp:equation:stildeC},
\begin{align*}
   &\myvec\left\{ \begin{pmatrix} \bm{0}_{K \times K}\\ -\tilde{\bm{s}}_z \end{pmatrix} \right\} = \tilde{\bm{H}}^*\myvec(\hat{\bm{z}}) + \smallint_0^1 [\tilde{\bm{H}}\{ \bm{\gamma}(t) \}-\tilde{\bm{H}}^*] \nabla_t \myvec\{\bm{\gamma}(t)\} \text{d}t + \myvec\left\{ \begin{pmatrix} \bm{0}_{K \times K}\\ \hat{\bm{\xi}} \end{pmatrix} \right\}\\
   &\tilde{\bm{s}}_z = (\lambda p)^{-1}\sum_{g=1}^p \bm{Q}^{\top}\bm{P}_g^{\perp}\bm{e}_g(n^{1/2}\bm{\ell}_g)^{\top}\bm{v}, \quad \tilde{\bm{H}}^* = -(\lambda^{-1}\bm{v}^{\top}\bm{\Lambda}\bm{v}) \otimes (\bm{0}_{K \times K} \oplus I_{n-d-K})
\end{align*}
for any $\epsilon>0$, where $\norm*{\hat{\bm{\xi}}}_2 = O_P(\lambda^{-1+\epsilon})$ by Lemmas~\ref{supp:lemma:s11} and \ref{supp:lemma:s2}. Corollary~\ref{supp:corollary:Htilde} and \eqref{supp:equation:GradGamma} imply
\begin{align*}
    \norm{ \smallint_0^1 [\tilde{\bm{H}}\{ \bm{\gamma}(t) \}-\tilde{\bm{H}}^*] \nabla_t \myvec\{\bm{\gamma}(t)\} \text{d}t }_2 = o_P( \norm*{ \hat{\bm{z}} }_2 ),
\end{align*}
where an application of Lemma~\ref{supp:lemma:s12} then implies $\norm*{ \hat{\bm{z}} }_2= O_P(\lambda^{-1/2 + \epsilon})$ for any $\epsilon > 0$. An application Lemma~\ref{supp:lemma:Vz} and further applications of Corollary~\ref{supp:corollary:Htilde} and \eqref{supp:equation:GradGamma} complete the proof.
\end{proof}

\begin{corollary} 
\label{supp:corollary:Pchat}
Suppose the assumptions of Theorem~\ref{supp:theorem:ChatProp} hold. Then the conclusions of Theorem~\ref{theorem:PxC} hold.
\end{corollary}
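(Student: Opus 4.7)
The plan is to combine the explicit leading-order expansion of $\hat{\bm{C}}$ provided by Theorem~\ref{supp:theorem:ChatProp} with the standard projection-subspace identity used in the proof of Lemma~\ref{supp:lemma:Vz}, and then convert the resulting bound via the lower bound on $\lambda_K$ in Assumption~\ref{supp:assumption:lambda}. Since the supplement redefines $\hat{\bm{C}}$ to mean $P_{\bm{X}}^{\perp}\hat{\bm{C}}$, and since $\tilde{\bm{C}}$ and $\hat{\bm{C}}$ both have orthonormal columns, we have $\mathcal{P} = \tilde{\bm{C}}\tilde{\bm{C}}^{\top}$ and $\hat{\mathcal{P}} = \hat{\bm{C}}\hat{\bm{C}}^{\top}$. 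Writing $\hat{\bm{C}} = \tilde{\bm{C}}\hat{\bm{v}} + \bm{Q}\hat{\bm{z}}$ as in Theorem~\ref{supp:theorem:ChatProp}, with $\hat{\bm{v}}^{\top}\hat{\bm{v}} + \hat{\bm{z}}^{\top}\hat{\bm{z}} = I_K$ and $\tilde{\bm{C}}^{\top}\bm{Q}=0$, expanding $\hat{\mathcal{P}}-\mathcal{P}$ into its four orthogonal blocks and applying the argument in Lemma~\ref{supp:lemma:Vz} gives the identity $\norm*{\hat{\mathcal{P}}-\mathcal{P}}_F^2 = 2\norm*{\hat{\bm{z}}}_F^2$. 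So the task reduces to showing $\norm*{\hat{\bm{z}}}_F^2 = o_P(n^{-1/2})$.

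Next I would bound the explicit leading term. By Theorem~\ref{supp:theorem:ChatProp}, for any constant $\epsilon \in (0,1/2)$,
\begin{align*}
\hat{\bm{z}} = p^{-1}\sum_{g=1}^p \bm{Q}^{\top}\bm{P}_g^{\perp}\bm{e}_g (n^{1/2}\bm{\ell}_g)^{\top}\bm{\Lambda}^{-1}\bm{v} + O_P(\lambda^{-1+\epsilon}),
\end{align*}
where the remainder is in operator norm but, since $\hat{\bm{z}}$ has a fixed number $K$ of columns, satisfies the same bound in Frobenius norm. Using $\tilde{\bm{\ell}}_g = n^{1/2}\bm{R}^{1/2}\bm{\ell}_g$ with $\bm{R} = n^{-1}\bm{C}^{\top} P_{\bm{X}}^{\perp}\bm{C}$, the leading term factors as $\lambda\, \bm{Q}^{\top} \bm{s}^{(1)} \bm{R}^{-1/2}\bm{\Lambda}^{-1}\bm{v}$, where $\bm{s}^{(1)} = (\lambda p)^{-1}\sum_{g=1}^p \bm{P}_g^{\perp}\bm{e}_g\tilde{\bm{\ell}}_g^{\top}$ is exactly the quantity controlled in Lemma~\ref{supp:lemma:s12}. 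Lemma~\ref{supp:lemma:s12} gives $\norm*{\bm{s}^{(1)}}_2 = O_P(\lambda^{-1/2+\epsilon})$; Lemma~\ref{supp:lemma:Cassumption} implies $\norm*{\bm{R}^{-1/2}}_2 = O_P(1)$; and Assumption~\ref{supp:assumption:lambda} gives $\norm*{\bm{\Lambda}^{-1}}_2 \lesssim \lambda^{-1}$ since $\lambda_1/\lambda_K$ is bounded. Multiplying and converting to Frobenius norm via $\norm*{\cdot}_F \leq \sqrt{K}\norm*{\cdot}_2$ yields $\norm*{\hat{\bm{z}}}_F^2 = O_P(\lambda^{-1+\epsilon})$.

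The final step is an eigenvalue-rate calculation. Assumption~\ref{supp:assumption:lambda} gives $\lambda \gtrsim \lambda_K \gtrsim n^{1/2+\epsilon_0}$ for a fixed $\epsilon_0 > 0$, so $\lambda^{-1+\epsilon} \lesssim n^{-(1/2+\epsilon_0)(1-\epsilon)}$. Choosing $\epsilon$ small enough (any $\epsilon < \epsilon_0/(1+2\epsilon_0)$ suffices) makes this exponent strictly less than $-1/2$, so $\norm*{\hat{\bm{z}}}_F^2 = o_P(n^{-1/2})$ and hence $\norm*{\hat{\mathcal{P}}-\mathcal{P}}_F^2 = o_P(n^{-1/2})$.

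The hard part is not in this corollary itself, which is essentially bookkeeping: the substantive work is Theorem~\ref{supp:theorem:ChatProp}, which produces the leading-order expression for $\hat{\bm{z}}$, and Lemma~\ref{supp:lemma:s12}, which controls $\bm{s}^{(1)}$. The only subtlety here is to (i) recognize the projection identity $\norm*{\hat{\mathcal{P}}-\mathcal{P}}_F^2 = 2\norm*{\hat{\bm{z}}}_F^2$ so the operator-norm control of $\hat{\bm{z}}$ is sufficient (using fixed $K$), and (ii) choose $\epsilon$ in terms of the fixed $\epsilon_0$ so that the $\lambda^{-1+\epsilon}$ rate cleanly beats the required $n^{-1/2}$ threshold.
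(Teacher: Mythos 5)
Your proposal is correct and follows essentially the same path as the paper, which records the result as a direct consequence of Theorem~\ref{supp:theorem:ChatProp} and Lemma~\ref{supp:lemma:Vz}; you have simply made the bookkeeping explicit: the projection identity $\norm*{\hat{\mathcal{P}}-\mathcal{P}}_F^2 = 2\norm*{\hat{\bm{z}}}_F^2$ from the proof of Lemma~\ref{supp:lemma:Vz}, the bound $\norm*{\hat{\bm{z}}}_2 = O_P(\lambda^{-1/2+\epsilon})$ obtained by combining \eqref{supp:equation:CperpExp} with Lemma~\ref{supp:lemma:s12} (this same bound already appears inside the proof of Theorem~\ref{supp:theorem:ChatProp}), and the conversion via Assumption~\ref{supp:assumption:lambda}'s scaling $\lambda_K \gtrsim n^{1/2+\epsilon_0}$. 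The only minor point is cosmetic: your threshold $\epsilon < \epsilon_0/(1+2\epsilon_0)$ is stricter than necessary (the exact cutoff is $2\epsilon_0/(1+2\epsilon_0)$), but as the conclusion only needs some sufficiently small $\epsilon$, this does not affect the argument.
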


\begin{proof}
This is a direct consequence of Theorem~\ref{supp:theorem:ChatProp} and Lemma~\ref{supp:lemma:Vz}.
\end{proof}

\begin{corollary}
\label{supp:corollary:InfNormC}
Suppose the assumptions of Theorem~\ref{supp:theorem:ChatProp} hold. Then $\norm*{ \hat{\bm{C}} - \tilde{\bm{C}}\bm{v} }_{\infty} = O_P(\lambda^{-1 + \epsilon})$ for any $\epsilon > 0$.
\end{corollary}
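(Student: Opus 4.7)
The plan is to build on the expansion of $\hat{\bm{C}}$ in Theorem~\ref{supp:theorem:ChatProp} and control each piece in the entrywise $\infty$-norm. Writing $\hat{\bm{C}} - \tilde{\bm{C}}\bm{v} = \tilde{\bm{C}}(\hat{\bm{v}} - \bm{v}) + \bm{Q}\hat{\bm{z}}$ and further splitting $\bm{Q}\hat{\bm{z}} = \bm{Q}\hat{\bm{z}}^* + \bm{Q}(\hat{\bm{z}} - \hat{\bm{z}}^*)$ with $\hat{\bm{z}}^* = p^{-1}\sum_g \bm{Q}^{\top}\bm{P}_g^{\perp}\bm{e}_g (n^{1/2}\bm{\ell}_g)^{\top}\bm{\Lambda}^{-1}\bm{v}$, one obtains three terms. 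The first, $\tilde{\bm{C}}(\hat{\bm{v}}-\bm{v})$, satisfies $\|\tilde{\bm{C}}(\hat{\bm{v}}-\bm{v})\|_\infty\leq K\|\tilde{\bm{C}}\|_\infty\|\hat{\bm{v}}-\bm{v}\|_2 = O_P(n^{-1/2+\epsilon}\lambda^{-1+\epsilon}) = o_P(\lambda^{-1+\epsilon})$, since $\|\tilde{\bm{C}}\|_\infty = O_P(n^{-1/2+\epsilon})$ follows from the identity $\tilde{\bm{C}}_{ik} = n^{-1/2}\bm{C}_{i\bigcdot}^{\top}(\bm{R}^{-1/2})_{\bigcdot k} - n^{-1/2}\bm{X}_{i\bigcdot}(n^{-1}\bm{X}^{\top}\bm{C})(\bm{R}^{-1/2})_{\bigcdot k}$ used in the proof of Lemma~\ref{supp:lemma:s11} together with Lemma~\ref{supp:lemma:Cassumption}, and $\|\hat{\bm{v}}-\bm{v}\|_2 = O_P(\lambda^{-1+\epsilon})$ is stated in Theorem~\ref{supp:theorem:ChatProp}. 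The second piece, $\bm{Q}(\hat{\bm{z}}-\hat{\bm{z}}^*)$, has $\|\bm{Q}(\hat{\bm{z}}-\hat{\bm{z}}^*)\|_\infty \leq \|\hat{\bm{z}}-\hat{\bm{z}}^*\|_F \leq K^{1/2}\|\hat{\bm{z}}-\hat{\bm{z}}^*\|_2 = O_P(\lambda^{-1+\epsilon})$, since the rows of $\bm{Q}$ have $2$-norm at most $1$ and Theorem~\ref{supp:theorem:ChatProp} gives the $2$-norm rate.

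The real work is bounding $\|\bm{Q}\hat{\bm{z}}^*\|_\infty$, since the naive $2$-norm bound only yields $O_P(\lambda^{-1/2+\epsilon})$. I would express entry $(i,k)$ as $p^{-1}\sum_g T_g$ with $T_g = \bm{h}_i^{\top}\bm{P}_g^{\perp}\bm{e}_g\, a_{gk}$, where $\bm{h}_i = P_{[\bm{X},\tilde{\bm{C}}]}^{\perp}e_i$ satisfies $\|\bm{h}_i\|_2 \leq 1$ and $a_{gk} = [(n^{1/2}\bm{\ell}_g)^{\top}\bm{\Lambda}^{-1}\bm{v}]_k$ satisfies $\sum_g a_{gk}^2 \leq c p\lambda^{-1}$ from the identity $\sum_g(n^{1/2}\bm{\ell}_g)(n^{1/2}\bm{\ell}_g)^{\top} = p\bm{\Lambda}$ and $\|\bm{\Lambda}^{-1}\|_2 \lesssim \lambda^{-1}$. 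Conditional on $\{\bm{C},\bm{G},\bm{X}\}$, the pairs $(\bm{W}_g,\bm{e}_g)$ are independent across $g$, so the $T_g$ are conditionally independent; the tail condition on $\Psi$ in Assumption~\ref{assump:MissData:Psi}(ii) together with the Gaussianity of $\bm{\Delta}^{(e)}_g$ makes $\E[(\bm{h}_i^{\top}\bm{P}_g^{\perp}\bm{e}_g - \E[\,\cdot\mid\bm{C},\bm{G},\bm{X}])^{2m}\mid\cdot]$ uniformly bounded via a Rosenthal-type argument applied to the linear form $\bm{h}_i^{\top}\bm{W}_g\bm{e}_g$ and its $\bm{X}$-projection analogue. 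Consequently $\E\{[p^{-1}\sum_g(T_g-\E T_g)]^{2m}\mid\cdot\}\lesssim c_m(p^{-1}\lambda^{-1})^m$, and a union bound via Corollary~\ref{supp:corollary:MaximalIne} over the $nK$ entries (with $m$ taken large) produces a centered-leading-term bound of $O_P(n^{\eta}p^{-1/2}\lambda^{-1/2})$ for any $\eta>0$; since $\lambda\lesssim p$ by Assumption~\ref{supp:assumptions:FA}, $p^{-1/2}\lambda^{-1/2} \leq \lambda^{-1+\eta'}$ holds for all $n$ large, giving the desired $O_P(\lambda^{-1+\epsilon})$ on the centered piece.

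The main obstacle will be the conditional-mean term $p^{-1}\sum_g a_{gk}\E[\bm{h}_i^{\top}\bm{P}_g^{\perp}\bm{e}_g\mid\bm{C},\bm{G},\bm{X}]$, which is driven by the genetic shift $\bm{\mu}_g = \bm{G}^{\top}\bm{\gamma}^{(e)}_{*g}$ (yielding $\E[(\bm{W}_g\bm{e}_g)_j\mid\cdot] = \mu_{gj}$) together with the harder-to-handle contribution of $\bm{W}_g\bm{X}(\bm{X}^{\top}\bm{W}_g\bm{X})^{-1}\bm{X}^{\top}\bm{W}_g\bm{e}_g$ where $\bm{W}_g$ and $\bm{e}_g$ are correlated through $\bm{y}_g$. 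I would control it by the same ingredients used in the proof of Lemma~\ref{supp:lemma:s11}: the $|\bm{\gamma}^{(e)}_{sg}| = o(n^{-1/4})$ and per-column sparsity bounds from Assumption~\ref{supp:assumptions:FA}(e), the orthogonality $\bm{X}^{\top}\bm{h}_i = 0$ which eliminates the leading piece of the $\bm{X}$-correction, Cauchy--Schwarz against the bounded rows of $\bm{G}$ to bound $\bm{\gamma}^{(e)\top}_{*g}\bm{G}\bm{h}_i$, and a final application of Corollary~\ref{supp:corollary:MaximalIne} across $(i,k)$ to pass from the pointwise to the maximum. Carefully tracking how the column sparsity of $\bm{\gamma}^{(e)}$ interacts with $\sum_g a_{gk}^2 \leq cp\lambda^{-1}$ should deliver the required $O_P(\lambda^{-1+\epsilon})$ rate on the mean term, completing the proof.
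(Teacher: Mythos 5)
Your proposal follows essentially the same route as the paper: start from the expansion in Theorem~\ref{supp:theorem:ChatProp}, observe the $\tilde{\bm{C}}(\hat{\bm{v}}-\bm{v})$ and $\bm{Q}(\hat{\bm{z}}-\hat{\bm{z}}^*)$ pieces inherit the $\|\cdot\|_2$-rate $O_P(\lambda^{-1+\epsilon})$ (in fact $\|\bm{M}\|_\infty \leq \|\bm{M}\|_2$ for any matrix, so neither the $\|\tilde{\bm{C}}\|_\infty$ computation nor the paper's rank remark is strictly necessary), and recognize that the real content is an entrywise bound on $\bm{Q}\hat{\bm{z}}^*$ sharper than the crude $O_P(\lambda^{-1/2+\epsilon})$ Frobenius bound. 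Your identification of $\sum_g a_{gk}^2 \lesssim p\lambda^{-1}$ and the resulting target rate $p^{-1/2}\lambda^{-1/2}n^{\eta}\lesssim\lambda^{-1+\epsilon}$ matches the paper's accounting, and Corollary~\ref{supp:corollary:MaximalIne} is the right max-inequality.

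The step where your sketch undersells the work is the treatment of $\bm{h}_i^\top\bm{P}_g^\perp\bm{e}_g$. This is not a linear form in $(\bm{W}_g,\bm{e}_g)$: the $\bm{X}$-correction $\bm{W}_g\bm{X}(\bm{X}^\top\bm{W}_g\bm{X})^{-1}\bm{X}^\top\bm{W}_g\bm{e}_g$ contains $(\bm{X}^\top\bm{W}_g\bm{X})^{-1}$, so the ``Rosenthal applied to the linear form $\bm{h}_i^\top\bm{W}_g\bm{e}_g$ and its $\bm{X}$-projection analogue'' does not directly give bounded conditional $2m$ moments, and one cannot simply fold the whole $\bm{P}_g^\perp$ into a single $T_g$. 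The paper instead expands $\bm{Q}\bm{Q}^\top = I_n - \bm{Z}(\bm{Z}^\top\bm{Z})^{-1}\bm{Z}^\top$ and $\bm{P}_g^\perp = \bm{W}_g - \bm{W}_g\bm{X}(\bm{X}^\top\bm{W}_g\bm{X})^{-1}\bm{X}^\top\bm{W}_g$ into the six terms of \eqref{supp:equation:InftyExpand}: the diagonal $\sum_g \bm{W}_g\bm{e}_g\bar{\bm{\ell}}_{gk}$ piece (to which the entrywise max-inequality argument applies cleanly because $a_{gi}=w_{gi}\bm{e}_{gi}$ has bounded moments), the rank-$K$ pieces premultiplied by $\bm{Z}(\bm{Z}^\top\bm{Z})^{-1}$ (handled via $\|\cdot\|_2$ and Lemma~\ref{supp:lemma:Cte}), and the $(\bm{W}_g-I_n)\bm{X}(\bm{X}^\top\bm{W}_g\bm{X})^{-1}\bm{X}^\top\bm{W}_g\bm{e}_g$ correction, which itself needs the further expansion in \eqref{supp:equation:InfinityExpand:5}. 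Your observations about $\bm{X}^\top\bm{h}_i=0$ producing the $(\bm{W}_g-I_n)$ factor and about the sparsity of $\bm{\gamma}^{(e)}$ controlling the genetic-shift mean are exactly the right ingredients, but carrying them out amounts to reproducing the paper's term-by-term decomposition; as written the proposal defers most of the proof's actual labor.
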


\begin{proof}
Let $\bm{Z} = n^{-1/2}[\bm{C},\bm{X}]$, $\bar{\bm{\ell}}_g = \bm{v}^{\top}(\lambda^{-1}\bm{\Lambda})^{-1}\bm{\ell}_g$, and $\bm{\Delta} = \hat{\bm{z}} - (\lambda p)^{-1}\sum_{g=1}^p \bm{Q}^{\top}\bm{P}_g^{\perp}\bm{e}_g \bar{\bm{\ell}}_g^{\top}$. Then \eqref{supp:equation:CperpExp} in Theorem~\ref{supp:theorem:ChatProp} implies
\begin{align}
\label{supp:equation:InftyExpand}
\begin{aligned}
    \norm*{ \hat{\bm{C}} - \tilde{\bm{C}}\bm{v} }_{\infty} \leq & \norm*{ \tilde{\bm{C}}(\hat{\bm{v}} - \bm{v}) }_{\infty} + \norm*{\bm{Q}\bm{\Delta}}_{\infty} + \sum_{k=1}^K\norm*{ (\lambda p)^{-1}\sum_{g=1}^p \bm{W}_g \bm{e}_g\bar{\bm{\ell}}_{gk} }_{\infty}\\
    &+ \norm*{\bm{Z}(\bm{Z}^{\top}\bm{Z})^{-1} (\lambda p)^{-1}\sum_{g=1}^p \bm{Z}^{\top}\bm{W}_g \bm{e}_g\bar{\bm{\ell}}_{g}^{\top}}_{\infty}\\
    &+ \norm*{(\lambda p)^{-1}\sum_{g=1}^p (\bm{W}_g-I_n)\bm{X}(\bm{X}^{\top}\bm{W}_g\bm{X})^{-1}\bm{X}^{\top}\bm{W}_g \bm{e}_g\bar{\bm{\ell}}_{g}^{\top}}_{\infty}\\
    &+ \norm*{\bm{Z}(\bm{Z}^{\top}\bm{Z})^{-1}(\lambda p)^{-1}\sum_{g=1}^p \bm{Z}^{\top}(\bm{W}_g-I_n)\bm{X}(\bm{X}^{\top}\bm{W}_g\bm{X})^{-1}\bm{X}^{\top}\bm{W}_g \bm{e}_g\bar{\bm{\ell}}_{g}^{\top}}_{\infty},
\end{aligned}
\end{align}
where since $\tilde{\bm{C}}(\hat{\bm{v}} - \bm{v})$ and $\bm{Q}\bm{\Delta}$ are at most rank $2K$,
\begin{align*}
    &\norm*{ \tilde{\bm{C}}(\hat{\bm{v}} - \bm{v}) }_{\infty} = O\{\norm*{ \tilde{\bm{C}}(\hat{\bm{v}} - \bm{v}) }_{2}\} = O_P(\lambda^{-1+\epsilon}), \quad \norm*{\bm{Q}\bm{\Delta}}_{\infty} = O(\norm*{\bm{Q}\bm{\Delta}}_{2})= O_P(\lambda^{-1+\epsilon})
\end{align*}
for any $\epsilon > 0$ by Theorem~\ref{supp:theorem:ChatProp}. Similarly, since the fourth and sixth matrices to the right of the inequality in \eqref{supp:equation:InftyExpand} are at most rank $K$,
\begin{align*}
    &\norm*{\bm{Z}(\bm{Z}^{\top}\bm{Z})^{-1} (\lambda p)^{-1}\sum_{g=1}^p \bm{Z}^{\top}\bm{W}_g \bm{e}_g\bar{\bm{\ell}}_{g}^{\top}}_{\infty} = O_P\{\norm*{(\lambda p)^{-1}\sum_{g=1}^p \bm{Z}^{\top}\bm{W}_g \bm{e}_g\bar{\bm{\ell}}_{g}^{\top}}_{2}\}\\
    &\norm*{\bm{Z}(\bm{Z}^{\top}\bm{Z})^{-1}(\lambda p)^{-1}\sum_{g=1}^p \bm{Z}^{\top}(\bm{W}_g-I_n)\bm{X}(\bm{X}^{\top}\bm{W}_g\bm{X})^{-1}\bm{X}^{\top}\bm{W}_g \bm{e}_g\bar{\bm{\ell}}_{g}^{\top}}_{\infty}\\
    =& O_P\{ \norm*{(\lambda p)^{-1}\sum_{g=1}^p \bm{Z}^{\top}(\bm{W}_g-I_n)\bm{X}(\bm{X}^{\top}\bm{W}_g\bm{X})^{-1}\bm{X}^{\top}\bm{W}_g \bm{e}_g\bar{\bm{\ell}}_{g}^{\top}}_{2} \}.
\end{align*}
To derive the asymptotic properties of these Euclidean norms, we first see that for some constants $c_1,c_2>0$ and $\tilde{\bm{c}}_i =  (\bm{C}_{i\bigcdot}^{\top}, \bm{X}_{i\bigcdot}^{\top})^{\top}$,
\begin{align*}
    &(\lambda p)^{-1/2}\sum_{g=1}^p \bm{Z}^{\top}\bm{W}_g \bm{e}_g\bar{\bm{\ell}}_{g}^{\top} = (\lambda p)^{-1/2}\sum_{g=1}^p \bm{Z}^{\top} \bm{e}_g\bar{\bm{\ell}}_{g}^{\top} + (\lambda p)^{-1/2}\sum_{g=1}^p \bm{Z}^{\top}(\bm{W}_g-I_n) \bm{e}_g\bar{\bm{\ell}}_{g}^{\top}\\
    &\V\{ (\lambda p)^{-1/2}\sum_{g=1}^p \bm{Z}^{\top}(\bm{W}_g-I_n) \bm{e}_g\bar{\bm{\ell}}_{gk} \} \leq c_1 p^{-1}\sum_{g=1}^p n^{-1}\sum_{i=1}^n \E\{(w_{gi}-1)^2 \bm{e}_{gi}^2\tilde{\bm{c}}_i \tilde{\bm{c}}_i^{\top}\} \preceq c_2 I_K, \quad k \in [K],
\end{align*}
where $\norm*{ (\lambda p)^{-1/2}\sum_{g=1}^p \bm{Z}^{\top} \bm{e}_g\bar{\bm{\ell}}_{g}^{\top} }_2 = O_P(1)$ by Lemma~\ref{supp:lemma:Cte}. This implies the fourth term in \eqref{supp:equation:InftyExpand} is $O_P(\lambda^{-1})$. Next, Lemma~\ref{supp:lemma:Powera} and Corollary~\ref{supp:corollary:MaximalIne} imply that for some constant $c>0$ and any $\epsilon>0$,
\begin{align*}
    \max_{g \in [p]}\norm*{ \bm{Z}^{\top}(\bm{W}_g - I_n)(n^{-1/2}\bm{X}) }_2 =& O_P(n^{-1/2+\epsilon}), \quad \max_{g \in [p]}\norm*\{(\bm{X}^{\top}\bm{W}_g\bm{X})^{-1}\}_2 \leq c\{1+o_P(1)\}\\
    \max_{g \in [p]}\norm*{ (\lambda n)^{-1/2}\bm{X}^{\top}\bm{W}_g \bm{e}_g \bar{\bm{\ell}}_g }_2 \leq & c\{ \max_{g \in [p]}\norm*{ n^{-1/2}\bm{X}^{\top}\bm{e}_g }_2 + \max_{g \in [p]}\norm*{ n^{-1/2}\bm{X}^{\top}(\bm{W}_g-I_n)\bm{e}_g }_2 \}\\
    =& O_P(n^{\epsilon}),
\end{align*}
which implies the sixth term in \eqref{supp:equation:InftyExpand} is $O_P(\lambda^{-1+\epsilon})$ for any $\epsilon>0$. We next consider the third term in \eqref{supp:equation:InftyExpand}. For $i \in [n]$ and $k \in [K]$, the $i$th element of the third vector can be expressed as
\begin{align*}
    &x_{ik} = p^{-1}\lambda^{-1/2}\sum_{g=1}^p w_{gi}\bm{e}_{gi}(\lambda^{-1/2}\bar{\bm{\ell}}_{gk}) = p^{-1/2}\lambda^{-1/2} (p^{-1/2}\sum_{g=1}^p a_{gi} b_{gk})\\
    &a_{gi} = w_{gi}\bm{e}_{gi}, \quad \abs*{b_{gk}} \leq c
\end{align*}
for some constant $c>0$. Since $a_{1i},\ldots,a_{pi}$ are independent and mean 0, Lemma~\ref{supp:lemma:Powera} and Corollary~\ref{supp:corollary:MaximalIne} imply $\max_{i \in [n]}\abs*{x_{ik}} = O_P(\lambda^{-1+\epsilon})$ for any $\epsilon>0$, which implies the third term in \eqref{supp:equation:InftyExpand} is $O_P(\lambda^{-1+\epsilon})$. For the fifth and final term in \eqref{supp:equation:InftyExpand}, assume without loss of generality that $n^{-1}\bm{X}^{\top}\bm{X}=I_d$. Then the fifth term in \eqref{supp:equation:InftyExpand} can be bounded above by
\begin{align}
\label{supp:equation:InfinityExpand:5}
\begin{aligned}
    &\norm*{ (\lambda p)^{-1}\sum_{g=1}^p (\bm{W}_g-I_n)(n^{-1/2}\bm{X})\{ (n^{-1}\bm{X}^{\top}\bm{W}_g\bm{X})^{-1} - I_d \}(n^{-1/2}\bm{X})^{\top}\bm{W}_g \bm{e}_g\bar{\bm{\ell}}_{g}^{\top} }_{\infty}\\
    &+ \sum_{k=1}^K \sum_{j=1}^d \norm*{ (\lambda p)^{-1}\sum_{g=1}^p (\bm{W}_g-I_n)(n^{-1/2}\bm{X}_{\bigcdot j})(n^{-1/2}\bm{X}_{\bigcdot j})^{\top}\bm{W}_g \bm{e}_g\bar{\bm{\ell}}_{gk}^{\top} }_{\infty}.
\end{aligned}
\end{align}
First, since the first matrix is at most rank $K$, and $\max_{g \in [p]}\norm*{ (n^{-1}\bm{X}^{\top}\bm{W}_g\bm{X})^{-1} - I_K }_2 = O_P(n^{-1/2+\epsilon})$ for any $\epsilon > 0$,
\begin{align*}
    &\norm*{ (\lambda p)^{-1}\sum_{g=1}^p (\bm{W}_g-I_n)(n^{-1/2}\bm{X})\{ (n^{-1}\bm{X}^{\top}\bm{W}_g\bm{X})^{-1} - I_K \}(n^{-1/2}\bm{X})^{\top}\bm{W}_g \bm{e}_g\bar{\bm{\ell}}_{g}^{\top} }_{\infty}\\
    \leq & c\norm*{ (\lambda p)^{-1}\sum_{g=1}^p (\bm{W}_g-I_n)(n^{-1/2}\bm{X})\{ (n^{-1}\bm{X}^{\top}\bm{W}_g\bm{X})^{-1} - I_K \}(n^{-1/2}\bm{X})^{\top}\bm{W}_g \bm{e}_g\bar{\bm{\ell}}_{g}^{\top} }_{2} = O_P(\lambda^{-1+\epsilon})
\end{align*}
for some constant $c>0$. Next, for fixed $j \in [d]$ and $k \in [K]$, the $i$th element of the second matrix in \eqref{supp:equation:InfinityExpand:5} can be expressed as
\begin{align*}
    (\lambda n)^{-1/2} p^{-1}\sum_{g=1}^p (w_{gi} - 1)\bm{X}_{ij} (n^{-1/2}\bm{X}_{\bigcdot j}^{\top} \bm{W}_g \bm{e}_g)(\lambda^{-1/2}\bar{\bm{\ell}}_{gk}) = (\lambda n)^{-1/2} a_{ijk}, \quad i \in [n],
\end{align*}
where for some constant $c>0$,
\begin{align*}
    \max_{i \in [n]}\abs*{a_{ijk}} \leq c(\max_{i \in [n], g \in [p]} \abs*{w_{gi} - 1} ) (\max_{g \in [p]} \abs*{n^{-1/2}\bm{X}_{\bigcdot j}^{\top} \bm{W}_g \bm{e}_g} ) = O_P(n^{\epsilon})
\end{align*}
for some any constant $\epsilon > 0$, which completes the proof.
\end{proof}

\begin{corollary}
\label{supp:corollary:CtC}
Suppose the assumptions of Theorem~\ref{supp:theorem:ChatProp} hold and let $\hat{\bm{C}}$ and $\bm{v}$ be as defined in the statement of Theorem~\ref{supp:theorem:ChatProp}. Then for any $\epsilon >0$,
\begin{subequations}
\label{supp:equation:CtPC}
\begin{align}
    \label{supp:equation:CtPC:Inv}
    &\max_{g \in [p]}\norm*{ (\hat{\bm{C}}^{\top} \bm{P}_g^{\perp} \hat{\bm{C}})(\bm{v}^{\top}\tilde{\bm{C}}^{\top} \bm{P}_g^{\perp} \tilde{\bm{C}}\bm{v})^{-1} - I_K }_2 = O_P(\lambda^{-1+\epsilon})\\
    \label{supp:equation:CtPC:Reg}
    &\max_{g \in [p]} \norm*{ \hat{\bm{C}}^{\top} \bm{P}_g^{\perp} \hat{\bm{C}} - I_K }_2 = O_P(n^{-1/2+\epsilon}).
\end{align}
\end{subequations}
\end{corollary}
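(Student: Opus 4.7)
The plan is to substitute the decomposition $\hat{\bm{C}} = \tilde{\bm{C}}\hat{\bm{v}} + \bm{Q}\hat{\bm{z}}$ from Theorem~\ref{supp:theorem:ChatProp} and expand
\begin{align*}
\hat{\bm{C}}^{\top}\bm{P}_g^{\perp}\hat{\bm{C}} = \hat{\bm{v}}^{\top}\bm{A}_g\hat{\bm{v}} + \hat{\bm{v}}^{\top}\tilde{\bm{C}}^{\top}\bm{P}_g^{\perp}\bm{Q}\hat{\bm{z}} + (\hat{\bm{v}}^{\top}\tilde{\bm{C}}^{\top}\bm{P}_g^{\perp}\bm{Q}\hat{\bm{z}})^{\top} + \hat{\bm{z}}^{\top}\bm{Q}^{\top}\bm{P}_g^{\perp}\bm{Q}\hat{\bm{z}},
\end{align*}
where $\bm{A}_g = \tilde{\bm{C}}^{\top}\bm{P}_g^{\perp}\tilde{\bm{C}}$, and then compare each of the four pieces to the target $\bm{v}^{\top}\bm{A}_g\bm{v}$. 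Once each difference is shown to be $O_P(\lambda^{-1+\epsilon})$ uniformly in $g$, \eqref{supp:equation:CtPC:Inv} follows immediately because $\bm{v}^{\top}\bm{A}_g\bm{v}$ is uniformly invertible by \eqref{supp:equation:Abound} in Lemma~\ref{supp:lemma:Bs} and the unitarity of $\bm{v}$.

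Two of the four pieces are routine. For the leading quadratic in $\hat{\bm{v}}$, Theorem~\ref{supp:theorem:ChatProp} supplies $\|\hat{\bm{v}} - \bm{v}\|_2 = O_P(\lambda^{-1+\epsilon})$ while \eqref{supp:equation:Abound} gives $\max_g\|\bm{A}_g\|_2 \leq 1 + O_P(n^{-1/2+\epsilon})$, so the linear and quadratic pieces of the expansion in $(\hat{\bm{v}}-\bm{v})$ yield $\max_g\|\hat{\bm{v}}^{\top}\bm{A}_g\hat{\bm{v}} - \bm{v}^{\top}\bm{A}_g\bm{v}\|_2 = O_P(\lambda^{-1+\epsilon})$. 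For the pure quadratic-in-$\hat{\bm{z}}$ piece, I observe $\bm{P}_g^{\perp} \preceq \bm{W}_g$ since $\bm{u}^{\top}\bm{P}_g^{\perp}\bm{u}$ equals the minimum over $\bm{\beta}$ of $(\bm{u}-\bm{X}\bm{\beta})^{\top}\bm{W}_g(\bm{u}-\bm{X}\bm{\beta})$; hence $\|\bm{Q}^{\top}\bm{P}_g^{\perp}\bm{Q}\|_2 \leq \|\bm{W}_g\|_2 = \max_i w_{gi}$. Assumption~\ref{assump:MissData:Psi}(ii) together with the polynomial moment bounds on $y_{gi}$ implicit in Lemmas~\ref{supp:lemma:Cassumption} and \ref{supp:lemma:Cte} yields $\max_{g,i}w_{gi} = O_P(n^{\epsilon})$. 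Combined with $\|\hat{\bm{z}}\|_2 = O_P(\lambda^{-1/2+\epsilon})$, which follows from Theorem~\ref{supp:theorem:ChatProp} and the bound on the leading term in Lemma~\ref{supp:lemma:s12}, the quadratic term is $O_P(n^{\epsilon}\lambda^{-1+\epsilon})$, and $n^{\epsilon}$ can be absorbed into $\lambda^{\epsilon'}$ because $\lambda \gtrsim n^{1/2+\epsilon_0}$ by Assumption~\ref{supp:assumption:lambda}.

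The hard part will be the off-diagonal cross term $\tilde{\bm{C}}^{\top}\bm{P}_g^{\perp}\bm{Q}\hat{\bm{z}}$, since the crude Cauchy--Schwarz bound $\|\tilde{\bm{C}}^{\top}\bm{P}_g^{\perp}\bm{Q}\|_2\cdot\|\hat{\bm{z}}\|_2$ only gives $O_P(n^{\epsilon}\lambda^{-1/2+\epsilon})$. To recover the missing factor of $\lambda^{-1/2}$, I will substitute $\hat{\bm{z}} = \hat{\bm{z}}_{\mathrm{main}} + \bm{\Delta}_z$ from Theorem~\ref{supp:theorem:ChatProp} with $\hat{\bm{z}}_{\mathrm{main}} = p^{-1}\sum_{h=1}^p\bm{Q}^{\top}\bm{P}_h^{\perp}\bm{e}_h(n^{1/2}\bm{\ell}_h)^{\top}\bm{\Lambda}^{-1}\bm{v}$ and $\|\bm{\Delta}_z\|_2 = O_P(\lambda^{-1+\epsilon})$. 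The $\bm{\Delta}_z$ contribution is at most $O_P(n^{\epsilon}\lambda^{-1+\epsilon})$ by the crude bound. For the leading piece, I exploit $\bm{Q}\bm{Q}^{\top} = I_n - P_{\bm{X}} - \tilde{\bm{C}}\tilde{\bm{C}}^{\top}$ and $\tilde{\bm{C}}^{\top}\bm{Q} = \bm{X}^{\top}\bm{Q} = 0$ to rewrite the resulting double sum over $(g,h)$, splitting it into a diagonal $h=g$ contribution, bounded by $p^{-1}\|\tilde{\bm{C}}^{\top}\bm{P}_g^{\perp}\|_2 \|\bm{P}_g^{\perp}\bm{e}_g\|_2 \|L_g\|_2 = O_P\{(n\lambda)^{-1/2}n^{\epsilon}\}$ with $L_h = (n^{1/2}\bm{\ell}_h)^{\top}\bm{\Lambda}^{-1}\bm{v}$ of norm $O(\lambda^{-1/2})$, and an off-diagonal piece whose cross-moments vanish in expectation (by conditional independence from Assumption~\ref{assumption:Miss}\ref{assump:MissData:Indep}) and concentrate at rate $\lambda^{-1+\epsilon}$ by Corollary~\ref{supp:corollary:MaximalIne} applied along the lines of Lemmas~\ref{supp:lemma:s11} and \ref{supp:lemma:s12}. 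Both pieces are $O_P(\lambda^{-1+\epsilon'})$ for any $\epsilon'>0$ under Assumption~\ref{supp:assumption:lambda}.

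Finally, \eqref{supp:equation:CtPC:Reg} follows by combining \eqref{supp:equation:CtPC:Inv} with \eqref{supp:equation:Abound} via the triangle inequality applied to
\begin{align*}
\hat{\bm{C}}^{\top}\bm{P}_g^{\perp}\hat{\bm{C}} - I_K = \bigl\{\hat{\bm{C}}^{\top}\bm{P}_g^{\perp}\hat{\bm{C}} - \bm{v}^{\top}\bm{A}_g\bm{v}\bigr\} + \bm{v}^{\top}(\bm{A}_g - I_K)\bm{v},
\end{align*}
where the first bracket is $O_P(\lambda^{-1+\epsilon}) = o_P(n^{-1/2+\epsilon'})$ under $\lambda \gtrsim n^{1/2+\epsilon_0}$ from Assumption~\ref{supp:assumption:lambda} and the second is $O_P(n^{-1/2+\epsilon})$ by \eqref{supp:equation:Abound}.
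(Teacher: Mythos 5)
Your plan follows the paper's proof essentially line for line: you use the same decomposition $\hat{\bm{C}}=\tilde{\bm{C}}\hat{\bm{v}}+\bm{Q}\hat{\bm{z}}$, handle the $\hat{\bm{v}}$-quadratic and $\hat{\bm{z}}$-quadratic pieces via Theorem~\ref{supp:theorem:ChatProp}, Lemma~\ref{supp:lemma:Bs}, and $\|\bm{W}_g\|_2=O_P(n^\epsilon)$, and you correctly identify the cross term $\tilde{\bm{C}}^{\top}\bm{P}_g^{\perp}\bm{Q}\hat{\bm{z}}$ as the only piece requiring the refined substitution $\hat{\bm{z}}=\bm{Q}^{\top}\bm{s}\bm{v}+\bm{\Delta}$ and a diagonal/off-diagonal split of the double sum with Corollary~\ref{supp:corollary:MaximalIne}. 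That is exactly how the paper proves the statement, so your proposal is correct and matches the paper's argument.
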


\begin{proof}
Let $\bm{A}_g = \bm{v}^{\top}\tilde{\bm{C}}^{\top} \bm{P}_g^{\perp} \tilde{\bm{C}}\bm{v}$ and $\hat{\bm{v}}$, $\hat{\bm{z}}$ be as defined in Theorem~\ref{supp:theorem:ChatProp}. We can express $\hat{\bm{C}}^{\top} \bm{P}_g^{\perp} \hat{\bm{C}}$ as
\begin{align*}
    \hat{\bm{C}}^{\top} \bm{P}_g^{\perp} \hat{\bm{C}} =& \hat{\bm{v}}^{\top}\tilde{\bm{C}}^{\top}\bm{P}_g^{\perp}\tilde{\bm{C}} \hat{\bm{v}} + \hat{\bm{z}}^{\top}\bm{Q}^{\top}\bm{P}_g^{\perp}\tilde{\bm{C}} \hat{\bm{v}}+ ( \hat{\bm{z}}^{\top}\bm{Q}^{\top}\bm{P}_g^{\perp}\tilde{\bm{C}} \hat{\bm{v}} )^{\top} + \hat{\bm{z}}^{\top}\bm{Q}^{\top}\bm{P}_g^{\perp} \bm{Q}\hat{\bm{z}}.
\end{align*}
By \eqref{supp:equation:Abound} in Lemma~\ref{supp:lemma:Bs}, $\max_{g \in [p]} \norm*{ \tilde{\bm{C}}^{\top}\bm{P}_g^{\perp}\tilde{\bm{C}} }_2 = 1 + O_P(n^{-1/2 + \epsilon})$ for any $\epsilon > 0$. Therefore,
\begin{align*}
    \max_{g \in [p]}\norm{ \hat{\bm{z}}^{\top}\bm{Q}^{\top}\bm{P}_g^{\perp} \bm{Q}\hat{\bm{z}}\bm{A}_g^{-1} }_2 \leq \norm{ \hat{\bm{z}} }_2^2 \max_{g \in [p]}\norm*{ \bm{A}_g^{-1} }_2 \max_{g \in [p]}\norm{ \bm{W}_g }_2 = O_P(\lambda^{-1+\epsilon})
\end{align*}
for any $\epsilon>0$ by Theorem~\ref{supp:theorem:ChatProp}. Next,
\begin{align*}
    \norm*{ \hat{\bm{v}}^{\top}\tilde{\bm{C}}^{\top}\bm{P}_g^{\perp}\tilde{\bm{C}} \hat{\bm{v}}\bm{A}_g^{-1} - I_K }_2 \leq 2\norm{ \hat{\bm{v}}-\bm{v} }_2\norm*{\bm{A}_g}_2\norm*{\bm{A}_g^{-1}}_2 + \norm{ \hat{\bm{v}}-\bm{v} }_2^2\norm*{\bm{A}_g^{-1}}_2\norm*{\bm{A}_g}_2.
\end{align*}
Since $\norm{ \hat{\bm{v}}-\bm{v} }_2 = O_P(\lambda^{-1+\epsilon})$ for any $\epsilon>0$ by Theorem~\ref{supp:theorem:ChatProp}, $\max_{g \in [p]} \norm*{ \hat{\bm{v}}^{\top}\tilde{\bm{C}}^{\top}\bm{P}_g^{\perp}\tilde{\bm{C}} \hat{\bm{v}}\bm{A}_g^{-1} - I_K }_2 = O_P(\lambda^{-1+\epsilon})$. Next, let $\bm{s} = (\lambda p)^{-1}\sum_{g=1}^p \bm{P}_g^{\perp} \bm{e}_g (n^{1/2}\bm{\ell}_g)^{\top}(\lambda^{-1}\bm{\Lambda})^{-1}$. Then
\begin{align*}
    \norm*{ \hat{\bm{z}}^{\top}\bm{Q}^{\top}\bm{P}_g^{\perp}\tilde{\bm{C}} \hat{\bm{v}} }_2 \leq \norm*{ \bm{s}^{\top}\bm{Q}\bm{Q}^{\top}\bm{P}_g^{\perp}\tilde{\bm{C}} }_2 + \norm*{ \bm{v}^{\top}\bm{s}^{\top}\bm{Q} - \hat{\bm{z}} }_2 \norm*{ \bm{P}_g^{\perp}\tilde{\bm{C}} }_2,
\end{align*}
where $\max_{g \in [p]}\norm*{ \bm{P}_g^{\perp}\tilde{\bm{C}}}_2 \leq c\{1+o_P(1)\}$ for some constant $c>0$ by the proof of Lemma~\ref{supp:lemma:Bs}. Therefore,
\begin{align*}
    \max_{g \in [p]}( \norm*{ \bm{v}^{\top}\bm{s}^{\top}\bm{Q} - \hat{\bm{z}} }_2 \norm*{ \bm{P}_g^{\perp}\tilde{\bm{C}} }_2 ) \leq c\{1+o_P(1)\} \norm*{ \bm{v}^{\top}\bm{s}^{\top}\bm{Q} - \hat{\bm{z}} }_2 = O_P(\lambda^{-1+\epsilon})
\end{align*}
by Theorem~\ref{supp:theorem:ChatProp}. Since an application of Lemma~\ref{supp:lemma:Bs} and \eqref{supp:equation:CtPC:Inv} imply \eqref{supp:equation:CtPC:Reg}, we need only show that $\max_{g \in [p]}\norm*{ \bm{s}^{\top}\bm{Q}\bm{Q}^{\top}\bm{P}_g^{\perp}\tilde{\bm{C}} }_2 = O_P(\lambda^{-1+\epsilon})$ for any $\epsilon>0$ to complete the proof.

Define $\bm{H} = \bm{Q}\bm{Q}^{\top} = P_{[\bm{X},\tilde{\bm{C}}]}^{\perp}$. Then for any $r,t \in [K]$,
\begin{align}
\label{supp:equation:sHPC}
\begin{aligned}
    &\bm{s}_{\bigcdot r}^{\top}\bm{Q}\bm{Q}^{\top}\bm{P}_g^{\perp}\tilde{\bm{C}}_{\bigcdot t} = (\lambda p)^{-1} \bar{\bm{\ell}}_{gr}\bm{e}_g^{\top}\bm{P}_g^{\perp} \bm{H} \bm{P}_g^{\perp}\tilde{\bm{C}}_{\bigcdot t} + (\lambda p)^{-1}\sum_{h \neq g} \bar{\bm{\ell}}_{hr}\bm{e}_h^{\top}\bm{P}_h^{\perp} \bm{H} \bm{P}_g^{\perp}\tilde{\bm{C}}_{\bigcdot t}\\
    &\bar{\bm{\ell}}_{hr} = n^{1/2}\lambda^{-1}\bm{\Lambda}_{r \bigcdot}^{\top}\bm{\ell}_h, \quad h \in [p],    
\end{aligned}
\end{align}
where $\abs*{\bar{\bm{\ell}}_{hr}} \leq c\lambda^{1/2}$ for some constant $c>0$. Therefore,
\begin{align*}
    \max_{g \in [p]} \abs*{ (\lambda p)^{-1} \bar{\bm{\ell}}_{gr}\bm{e}_g^{\top}\bm{P}_g^{\perp} \bm{H} \bm{P}_g^{\perp}\tilde{\bm{C}}_{\bigcdot t} } \leq c (\lambda n)^{-1/2} (\max_{g \in [p]} \norm*{n^{-1/2} \bm{e}_g}_2) (\max_{g \in [p]}\norm*{ \bm{W}_g^2\tilde{\bm{C}}_{\bigcdot t}  }_2)
\end{align*}
for some constant $c>0$. It is easy to see that $\max_{g \in [p]} \norm*{n^{-1/2} \bm{e}_g}_2 = O_P(n^{\epsilon})$ and $\max_{g \in [p]}\allowbreak\norm*{ \bm{W}_g^2\allowbreak\tilde{\bm{C}}_{\bigcdot t}  }_2 = O_P(1)$, where the latter follows from Corollary~\ref{supp:corollary:MaximalIne} and implies 
\begin{align*}
    \max_{g \in [p]} \abs*{ (\lambda p)^{-1} \bar{\bm{\ell}}_{gr}\bm{e}_g^{\top}\bm{P}_g^{\perp} \bm{H} \bm{P}_g^{\perp}\tilde{\bm{C}}_{\bigcdot t} } = O_P(\lambda^{-1+\epsilon})
\end{align*}
for any $\epsilon>0$. For the second term in \eqref{supp:equation:sHPC},
\begin{align}
\label{supp:equation:ePHPC}
\begin{aligned}
    &(\lambda p)^{-1}\sum_{h \neq g} \bar{\bm{\ell}}_{hr}\bm{e}_h^{\top}\bm{P}_h^{\perp} \bm{H} \bm{P}_g^{\perp}\tilde{\bm{C}}_{\bigcdot t} = (\lambda p)^{-1}\sum_{h \neq g} \bar{\bm{\ell}}_{hr} \bm{e}_h^{\top}\bm{W}_h \bm{H} (\bm{W}_g - I_n)\tilde{\bm{C}}_{\bigcdot t}\\
    &- (\lambda p)^{-1}\sum_{h \neq g} \bar{\bm{\ell}}_{hr}\bm{e}_h^{\top}\bm{W}_h \bm{X}(\bm{X}^{\top}\bm{W}_h \bm{X})^{-1}\bm{X}^{\top}(\bm{W}_h-I_n)\bm{H} (\bm{W}_g - I_n)\tilde{\bm{C}}_{\bigcdot t}\\
    &- (\lambda p)^{-1}\sum_{h \neq g} \bar{\bm{\ell}}_{hr}\bm{e}_h^{\top}\bm{W}_h \bm{H}(\bm{W}_g-I_n)\bm{X}(\bm{X}^{\top}\bm{W}_g\bm{X})^{-1}\bm{X}^{\top}(\bm{W}_g-I_n) \tilde{\bm{C}}_{\bigcdot t}\\
    &+ (\lambda p)^{-1}\sum_{h \neq g} \{\bar{\bm{\ell}}_{hr}\bm{e}_h^{\top}\bm{W}_h \bm{X}(\bm{X}^{\top}\bm{W}_h \bm{X})^{-1}\bm{X}^{\top}(\bm{W}_h-I_n)\bm{H} (\bm{W}_g-I_n)\bm{X}(\bm{X}^{\top}\bm{W}_g\bm{X})^{-1}\\
    &\times \bm{X}^{\top}(\bm{W}_g-I_n) \tilde{\bm{C}}_{\bigcdot t}\}.
\end{aligned}
\end{align}
Define $\bm{R}=(n^{-1}\bm{C}^{\top}P_{\bm{X}}^{\perp}\bm{C})^{-1/2}$ and $\bm{Z} = [\bm{C}, \bm{X}]$, where we assume without loss of generality that $n^{-1}\bm{X}^{\top}\bm{X}=I_d$. Then the first term in \eqref{supp:equation:ePHPC} can be expressed as
\begin{align}
\label{supp:equation:ePHPC:1}
\begin{aligned}
    &(\lambda p)^{-1}\sum_{h \neq g} \bar{\bm{\ell}}_{hr} \bm{e}_h^{\top}\bm{W}_h \bm{H} (\bm{W}_g - I_n)\tilde{\bm{C}}_{\bigcdot t} = (\lambda p)^{-1}\sum_{h \neq g} \bar{\bm{\ell}}_{hr} \bm{e}_h^{\top} (\bm{W}_g - I_n)(n^{-1/2}\bm{C})\bm{R}_{\bigcdot t}\\
    &+(\lambda p)^{-1}\sum_{h \neq g} \bar{\bm{\ell}}_{hr} \bm{e}_h^{\top}(\bm{W}_h-I_n) (\bm{W}_g - I_n)(n^{-1/2}\bm{C})\bm{R}_{\bigcdot t}\\
    &-(\lambda p)^{-1}\sum_{h \neq g} \bar{\bm{\ell}}_{hr} \bm{e}_h^{\top} (\bm{W}_g - I_n)(n^{-1/2}\bm{X})(n^{-1}\bm{X}^{\top}\bm{C}\bm{R}_{\bigcdot t})\\
    &-(\lambda p)^{-1}\sum_{h \neq g} \bar{\bm{\ell}}_{hr} \bm{e}_h^{\top}(\bm{W}_h-I_n) (\bm{W}_g - I_n)(n^{-1/2}\bm{X})(n^{-1}\bm{X}^{\top}\bm{C}\bm{R}_{\bigcdot t})\\
    &-(\lambda p)^{-1}\sum_{h \neq g} \bar{\bm{\ell}}_{hr} \bm{e}_h^{\top}\bm{W}_h (n^{-1/2}\bm{Z})(n^{-1}\bm{Z}^{\top}\bm{Z})\{n^{-1} \bm{Z}^{\top}(\bm{W}_g - I_n)P_{\bm{X}}^{\perp}\bm{C} \}\bm{R}_{\bigcdot t}.
\end{aligned}
\end{align}
Define $\bm{x}_g = (\lambda p)^{-1/2}\sum_{h \neq g} \bar{\bm{\ell}}_{hr}\bm{e}_h$. Since $\abs*{\lambda^{-1/2}\bar{\bm{\ell}}_{hr}} \leq c$ for some constant $c>0$, $\E(\bm{x}_{g_i}^{2m}) \leq c_m$ for all $i \in [n]$ for some constant $c_m>0$ that only depends on the positive integer $m$ by Lemma~\ref{supp:lemma:Powera}. Since the rows of $(\bm{W}_g - I_n)(n^{-1/2}\bm{C})$ are mean $\bm{0}$ and independent conditional on $\{\bm{C},\bm{x}_g\}$, Corollary~\ref{supp:corollary:MaximalIne} implies
\begin{align*}
    \max_{g \in [p]} \abs*{ (\lambda p)^{-1/2}\sum_{h \neq g} \bar{\bm{\ell}}_{hr} \bm{e}_h^{\top} (\bm{W}_g - I_n)(n^{-1/2}\bm{C}_{\bigcdot k}) } = O_P(n^{\epsilon}), \quad k \in [K]
\end{align*}
for any $\epsilon > 0$, which then implies
\begin{align*}
    \max_{g \in [p]}\abs*{(\lambda p)^{-1}\sum_{h \neq g} \bar{\bm{\ell}}_{hr} \bm{e}_h^{\top} (\bm{W}_g - I_n)(n^{-1/2}\bm{C})\bm{R}_{\bigcdot t}} = O_P(\lambda^{-1 + \epsilon})
\end{align*}
for any $\epsilon > 0$. Identical analyses and repeated applications of Lemma~\ref{supp:lemma:Powera} and Corollary~\ref{supp:corollary:MaximalIne} can be used to show that the maximum, over $g \in [p]$, absolute value of the remaining four terms in \eqref{supp:equation:ePHPC:1} are all $O_P(\lambda^{-1 + \epsilon})$, which shows that the maximum, over $g \in [p]$, absolute value of the first term in \eqref{supp:equation:ePHPC} is $O_P(\lambda^{-1 + \epsilon})$ for any $\epsilon > 0$. For the second term in \eqref{supp:equation:ePHPC}, we see that for $h \neq g$,
\begin{align}
\label{supp:equation:ePHPC:2}
\begin{aligned}
    &n^{-1/2}\bm{X}^{\top}(\bm{W}_h - I_n)\bm{H}(\bm{W}_g - I_n)\tilde{\bm{C}}_{\bigcdot t} = n^{-1}\bm{X}^{\top}(\bm{W}_h - I_n)(\bm{W}_g - I_n)\bm{C}\bm{R}_{\bigcdot t}\\
    &-\{n^{-1}\bm{X}^{\top}(\bm{W}_h - I_n)(\bm{W}_g - I_n)\bm{X}\}( n^{-1}\bm{X}^{\top} \bm{C}\bm{R}_{\bigcdot t} )\\
    &- \{n^{-1}\bm{X}^{\top}(\bm{W}_h - I_n)\bm{Z}\}(n^{-1}\bm{Z}^{\top}\bm{Z})\{n^{-1}\bm{Z}^{\top}(\bm{W}_g - I_n)\bm{C}\bm{R}_{\bigcdot t}\}\\
    &+ \{n^{-1}\bm{X}^{\top}(\bm{W}_h - I_n)\bm{Z}\}(n^{-1}\bm{Z}^{\top}\bm{Z})\{n^{-1}\bm{Z}^{\top}(\bm{W}_g - I_n)\bm{X}\} (n^{-1}\bm{X}^{\top}\bm{C}\bm{R}_{\bigcdot t}).
\end{aligned}
\end{align}
Since the diagonal entries of $(\bm{W}_h - I_n)(\bm{W}_g - I_n)$ are independent and mean 0, Corollary~\ref{supp:corollary:MaximalIne} implies the terms in \eqref{supp:equation:ePHPC:2} satisfy
\begin{align*}
    &\max_{g \neq h \in [p] \times [p]} \norm*{ n^{-1}\bm{X}^{\top}(\bm{W}_h - I_n)(\bm{W}_g - I_n)\bm{C}\bm{R}_{\bigcdot t} }_2  = O_P(n^{-1/2+\epsilon})\\
    &\max_{g \neq h \in [p] \times [p]}\norm*{ \{n^{-1}\bm{X}^{\top}(\bm{W}_h - I_n)(\bm{W}_g - I_n)\bm{X}\}( n^{-1}\bm{X}^{\top} \bm{C}\bm{R}_{\bigcdot t} ) }_2  = O_P(n^{-1/2+\epsilon})\\
    &\max_{g \neq h \in [p] \times [p]}\norm*{ \{n^{-1}\bm{X}^{\top}(\bm{W}_h - I_n)\bm{Z}\}(n^{-1}\bm{Z}^{\top}\bm{Z})\{n^{-1}\bm{Z}^{\top}(\bm{W}_g - I_n)\bm{C}\bm{R}_{\bigcdot t}\} }_2  = O_P(n^{-1+\epsilon})\\
    &\max_{g \neq h \in [p] \times [p]}\norm*{ \{n^{-1}\bm{X}^{\top}(\bm{W}_h - I_n)\bm{Z}\}(n^{-1}\bm{Z}^{\top}\bm{Z})\{n^{-1}\bm{Z}^{\top}(\bm{W}_g - I_n)\bm{X}\} (n^{-1}\bm{X}^{\top}\bm{C}\bm{R}_{\bigcdot t}) }_2  = O_P(n^{-1+\epsilon}),
\end{align*}
which implies $\max_{g \neq h \in [p] \times [p]}\norm*{ n^{-1/2}\bm{X}^{\top}(\bm{W}_h - I_n)\bm{H}(\bm{W}_g - I_n)\tilde{\bm{C}}_{\bigcdot t} }_2  = O_P(n^{-1/2+\epsilon})$ for any $\epsilon>0$. Next, for some constant $c>0$,
\begin{align*}
    \norm*{ \lambda^{-1/2}\bar{\bm{\ell}}_{hr}n^{-1/2}\bm{e}_h^{\top}\bm{W}_h \bm{X} }_2 \leq c\{ \norm*{ n^{-1/2}\bm{e}_h^{\top}\bm{X} }_2 + \norm*{ n^{-1/2}\bm{e}_h^{\top}(\bm{W}_h - I_n)\bm{X} }_2\},
\end{align*}
where, for any $\epsilon>0$, 
\begin{align*}
\max_{h \in [p]}\norm*{ n^{-1/2}\bm{e}_h^{\top}(\bm{W}_h - I_n)\bm{X} }_2, \, \max_{h \in [p]} \norm*{ n^{-1/2}\bm{e}_h^{\top}\bm{X} }_2 = O_P(n^{\epsilon})   
\end{align*}
by Corollary~\ref{supp:corollary:MaximalIne} and because $\bm{e}_h$ is sub-Gaussian, respectively. Therefore, the second term in \eqref{supp:equation:ePHPC} satisfies
\begin{align*}
    \max_{g \in [p]}\abs*{ (\lambda p)^{-1}\sum_{h \neq g} \bar{\bm{\ell}}_{hr}\bm{e}_h^{\top}\bm{W}_h \bm{X}(\bm{X}^{\top}\bm{W}_h \bm{X})^{-1}\bm{X}^{\top}(\bm{W}_h-I_n)\bm{H} (\bm{W}_g - I_n)\tilde{\bm{C}}_{\bigcdot t} } &= O_P\{(\lambda n)^{-1/2 + \epsilon}\}\\
    &= O_P(\lambda^{-1+\epsilon})
\end{align*}
for any $\epsilon > 0$. Identical techniques to those used to derive the properties of the second term in \eqref{supp:equation:ePHPC} can also be used to show that the maximum, over $g \in [p]$, absolute values of the third and fourth terms in \eqref{supp:equation:ePHPC} are $O_P(\lambda^{-1+\epsilon})$ for any $\epsilon > 0$. The details have been omitted.
\end{proof}

\begin{corollary}
\label{supp:corollary:LtL}
Suppose the assumptions of Theorem~\ref{supp:theorem:ChatProp} hold, let $\hat{\bm{Z}} = [\hat{\bm{C}}, \bm{X}]$ for $\hat{\bm{C}}$ as defined in the statement of Theorem~\ref{supp:theorem:ChatProp}, and define $\hat{\bm{\ell}}_g$ to be the first $K$ elements of the $K+d$ vector $(\hat{\bm{Z}}^{\top}\bm{W}_g \hat{\bm{Z}})^{-1}\hat{\bm{Z}}^{\top}\bm{W}_g \bm{y}_g$. Then $\norm*{ (\lambda p)^{-1}\sum_{g=1}^p \hat{\bm{\ell}}_g \hat{\bm{\ell}}_g^{\top} - (\lambda p)^{-1}\bm{v}^{\top}\sum_{g=1}^p \tilde{\bm{\ell}}_g \tilde{\bm{\ell}}_g^{\top}\bm{v} }_2 = O_P(\lambda^{-1+\epsilon})$ and $\norm*{ (\lambda p)^{-1}\sum_{g=1}^p \hat{\bm{\ell}}_g \tilde{\bm{\ell}}_g^{\top} - (\lambda p)^{-1}\bm{v}^{\top}\sum_{g=1}^p \tilde{\bm{\ell}}_g \tilde{\bm{\ell}}_g^{\top} }_2 = O_P(\lambda^{-1+\epsilon})$ for any $\epsilon > 0$ and $\bm{v}$ as defined in the statement of Theorem~\ref{supp:theorem:ChatProp}.
\end{corollary}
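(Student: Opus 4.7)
The plan is to reduce both bounds to objects already controlled in the preceding lemmas by writing $\hat{\bm{\ell}}_g$ explicitly in terms of $\tilde{\bm{\ell}}_g$, a leading noise term, and a remainder. By the Frisch--Waugh--Lovell identity, $\hat{\bm{\ell}}_g = (\hat{\bm{C}}^{\top}\bm{P}_g^{\perp}\hat{\bm{C}})^{-1}\hat{\bm{C}}^{\top}\bm{P}_g^{\perp}\bm{y}_g$, and since $\bm{P}_g^{\perp}\bm{X}=\bm{0}$, $\bm{P}_g^{\perp}\bm{y}_g = \bm{P}_g^{\perp}\tilde{\bm{C}}\tilde{\bm{\ell}}_g + \bm{P}_g^{\perp}\bm{e}_g$. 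Substituting $\hat{\bm{C}} = \tilde{\bm{C}}\hat{\bm{v}} + \bm{Q}\hat{\bm{z}}$ from Theorem~\ref{supp:theorem:ChatProp}, using Corollary~\ref{supp:corollary:CtC} to write $(\hat{\bm{C}}^{\top}\bm{P}_g^{\perp}\hat{\bm{C}})^{-1} = \bm{v}^{\top}(\tilde{\bm{C}}^{\top}\bm{P}_g^{\perp}\tilde{\bm{C}})^{-1}\bm{v}\{I_K+\bm{\Xi}_g\}$ with $\max_{g} \norm{\bm{\Xi}_g}_2 = O_P(\lambda^{-1+\epsilon})$, and using the unitarity of $\bm{v}$, I obtain the decomposition
\begin{align*}
\hat{\bm{\ell}}_g \;=\; \bm{v}^{\top}\tilde{\bm{\ell}}_g \;+\; \bm{v}^{\top}\bm{\nu}_g \;+\; \bm{R}_g, \qquad \bm{\nu}_g \,=\, (\tilde{\bm{C}}^{\top}\bm{P}_g^{\perp}\tilde{\bm{C}})^{-1}\tilde{\bm{C}}^{\top}\bm{P}_g^{\perp}\bm{e}_g,
\end{align*}
where $\bm{R}_g$ collects the terms arising from $\hat{\bm{v}}-\bm{v}$, from $\bm{\Xi}_g$, and from the $\bm{Q}\hat{\bm{z}}$ piece of $\hat{\bm{C}}$. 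Expanding $\hat{\bm{\ell}}_g\hat{\bm{\ell}}_g^{\top} - \bm{v}^{\top}\tilde{\bm{\ell}}_g\tilde{\bm{\ell}}_g^{\top}\bm{v}$ produces a cross term, a quadratic noise term, and remainder contributions to be bounded in turn.

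For the cross term, $(\lambda p)^{-1}\sum_g \bm{v}^{\top}\tilde{\bm{\ell}}_g\bm{\nu}_g^{\top}\bm{v}$ equals $\bm{v}^{\top}\{(\lambda p)^{-1}\sum_g \tilde{\bm{\ell}}_g\bm{e}_g^{\top}\bm{P}_g^{\perp}\tilde{\bm{C}}(\tilde{\bm{C}}^{\top}\bm{P}_g^{\perp}\tilde{\bm{C}})^{-1}\}\bm{v}$. Replacing $(\tilde{\bm{C}}^{\top}\bm{P}_g^{\perp}\tilde{\bm{C}})^{-1}$ by $I_K$ costs an $O_P(n^{-1/2+\epsilon})$ factor by Lemma~\ref{supp:lemma:Bs}, and the remaining sum is $\tilde{\bm{C}}^{\top}$ times the matrix $\bm{s}^{(1)}$ of Lemma~\ref{supp:lemma:s11} (after left-multiplication, noting that left-multiplication by $\tilde{\bm{C}}^{\top}$ of the Lemma~\ref{supp:lemma:s11} object gives exactly $(\lambda p)^{-1}\sum_g \tilde{\bm{C}}^{\top}\bm{P}_g^{\perp}\bm{e}_g\tilde{\bm{\ell}}_g^{\top}$); Lemma~\ref{supp:lemma:s11} thereby delivers the required $O_P(\lambda^{-1+\epsilon})$ bound. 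For the quadratic noise term, $(\lambda p)^{-1}\sum_g \bm{v}^{\top}\bm{\nu}_g\bm{\nu}_g^{\top}\bm{v}$ is PSD, so its operator norm is bounded by its trace. The conditional expectation of the trace is $(\lambda p)^{-1}\sum_g \sigma_g^2 \Tr\{(\tilde{\bm{C}}^{\top}\bm{P}_g^{\perp}\tilde{\bm{C}})^{-2}\tilde{\bm{C}}^{\top}(\bm{P}_g^{\perp})^2\tilde{\bm{C}}\}$, which by Lemma~\ref{supp:lemma:Bs} (and the proof bounds on $\tilde{\bm{C}}^{\top}(\bm{P}_g^{\perp})^2\tilde{\bm{C}}$ therein) is $O(\lambda^{-1})$. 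Concentration around this expectation at rate $n^{\epsilon}$ follows from the sub-Gaussian moment calculations and maximal inequalities (Corollary~\ref{supp:corollary:MaximalIne}, Lemma~\ref{supp:lemma:Powera}) used repeatedly in Lemmas~\ref{supp:lemma:s12} and \ref{supp:lemma:s2}.

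For the remainder contribution $(\lambda p)^{-1}\sum_g \bm{v}^{\top}\tilde{\bm{\ell}}_g\bm{R}_g^{\top}$ and $(\lambda p)^{-1}\sum_g \bm{R}_g\bm{R}_g^{\top}$, I split $\bm{R}_g$ into the $\hat{\bm{v}}-\bm{v}$, $\bm{\Xi}_g$, and $\bm{Q}\hat{\bm{z}}$ pieces. The first two are $O_P(\lambda^{-1+\epsilon})$ times quantities of order $\norm{\tilde{\bm{\ell}}_g}_2 = O(\lambda^{1/2})$ or $\norm{\bm{\nu}_g}_2 = O_P(n^{\epsilon})$, which after averaging against $\tilde{\bm{\ell}}_g$ yield $O_P(\lambda^{-1+\epsilon})$. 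For the $\bm{Q}\hat{\bm{z}}$ piece, I substitute the explicit form $\hat{\bm{z}} = (\lambda p)^{-1}\sum_h \bm{Q}^{\top}\bm{P}_h^{\perp}\bm{e}_h(n^{1/2}\bm{\ell}_h)^{\top}\bm{\Lambda}^{-1}\bm{v} + O_P(\lambda^{-1+\epsilon})$ from Theorem~\ref{supp:theorem:ChatProp} and use the identity $\bm{Q}^{\top}\bm{P}_g^{\perp}\tilde{\bm{C}} = \bm{Q}^{\top}(\bm{W}_g-I_n)\tilde{\bm{C}} - \bm{Q}^{\top}(\bm{W}_g-I_n)\bm{X}(\bm{X}^{\top}\bm{W}_g\bm{X})^{-1}\bm{X}^{\top}\bm{W}_g\tilde{\bm{C}}$ (from $\bm{Q}^{\top}\tilde{\bm{C}} = \bm{Q}^{\top}\bm{X} = \bm{0}$). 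The resulting double sums over $(g,h)$ are handled by the same maximal-inequality template already used in Corollary~\ref{supp:corollary:CtC}. The second identity $(\lambda p)^{-1}\sum_g \hat{\bm{\ell}}_g\tilde{\bm{\ell}}_g^{\top} - \bm{v}^{\top}(\lambda p)^{-1}\sum_g \tilde{\bm{\ell}}_g\tilde{\bm{\ell}}_g^{\top}$ is strictly easier, since only one copy of $\hat{\bm{\ell}}_g$ appears, so there is no quadratic-noise term: it reduces to the cross-term and $\bm{R}_g$ bounds already established.

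The main obstacle is the $\bm{Q}\hat{\bm{z}}$ contribution to $\bm{R}_g$: because $\hat{\bm{z}}$ is itself a sum over metabolites, bounding its contribution creates $(g,h)$ double sums, and one must check carefully that each such double sum — after inserting the identity for $\bm{Q}^{\top}\bm{P}_g^{\perp}\tilde{\bm{C}}$ and the exact form of $\hat{\bm{z}}$ — is $O_P(\lambda^{-1+\epsilon})$ rather than merely $O_P(\lambda^{-1/2+\epsilon})$. The tools for this are already in Corollary~\ref{supp:corollary:CtC}'s proof, but the bookkeeping is the genuinely delicate step.
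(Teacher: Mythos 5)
Your decomposition $\hat{\bm{\ell}}_g = \bm{v}^{\top}\tilde{\bm{\ell}}_g + \bm{v}^{\top}\bm{\nu}_g + \bm{R}_g$ matches the paper's expansion in \eqref{supp:equation:lhat}, and your observation that the main cross term can be reduced to the Lemma~\ref{supp:lemma:s11} object by left-multiplication with $\tilde{\bm{C}}^{\top}$ (which, since $\tilde{\bm{C}}^{\top}\bm{P}_g^{\perp}\tilde{\bm{C}}$ cancels $(\tilde{\bm{C}}^{\top}\bm{P}_g^{\perp}\tilde{\bm{C}})^{-1}$, yields exactly $(\lambda p)^{-1}\sum_g \tilde{\bm{C}}^{\top}\bm{P}_g^{\perp}\bm{e}_g\tilde{\bm{\ell}}_g^{\top}$, with norm $\leq \norm{\tilde{\bm{C}}}_2$ times the Lemma~\ref{supp:lemma:s11} bound) is a clean shortcut; the paper instead re-derives this bound from scratch in the displays surrounding \eqref{supp:equation:CtPPe}--\eqref{supp:equation:CtPWe}. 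Two points to tighten. First, a notation slip: the object in Lemma~\ref{supp:lemma:s11} is not named $\bm{s}^{(1)}$; that symbol is reserved for the object $\bm{s}^{(1)} = (\lambda p)^{-1}\sum_g \bm{P}_g^{\perp}\bm{e}_g\tilde{\bm{\ell}}_g^{\top}$ in Lemma~\ref{supp:lemma:s12}, whose norm bound is only $O_P(\lambda^{-1/2+\epsilon})$, not $O_P(\lambda^{-1+\epsilon})$ — it is the \emph{sandwiched} object of Lemma~\ref{supp:lemma:s11} you need for the cross term.

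Second, and more substantively: your treatment of the $\bm{Q}\hat{\bm{z}}$ piece of $\bm{R}_g$ glosses over the single most delicate term, which is not a $(g,h)$ double sum. After you split $\hat{\bm{z}} = \bm{Q}^{\top}\bm{s}\bm{v} + \bm{\Delta}$ with $\norm{\bm{\Delta}}_2 = O_P(\lambda^{-1+\epsilon})$, the $\bm{Q}^{\top}\bm{s}\bm{v}$ part of $(\lambda p)^{-1}\sum_g \hat{\bm{z}}^{\top}\bm{Q}^{\top}\bm{P}_g^{\perp}\bm{e}_g\tilde{\bm{\ell}}_g^{\top}$ is indeed a $(g,h)$ double sum controlled by the $\max_g\norm{\bm{s}^{\top}\bm{H}\bm{P}_g^{\perp}\bm{e}_g}_2 = O_P(\lambda^{-1/2+\epsilon})$ bound from the proof of Corollary~\ref{supp:corollary:CtC}. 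But the $\bm{\Delta}$ part, $(\lambda p)^{-1}\sum_g\bm{\Delta}^{\top}\bm{Q}^{\top}\bm{P}_g^{\perp}\bm{e}_g\tilde{\bm{\ell}}_g^{\top}$, cannot be handled term-by-term: since $\max_g\norm{\bm{Q}^{\top}\bm{P}_g^{\perp}\bm{e}_g}_2 = O_P(n^{1/2+\epsilon})$, the naive per-$g$ bound $\norm{\bm{\Delta}}_2\norm{\bm{Q}^{\top}\bm{P}_g^{\perp}\bm{e}_g}_2\norm{\tilde{\bm{\ell}}_g}_2$ averaged over $g$ gives $O_P(\lambda^{-3/2+\epsilon}n^{1/2+\epsilon})$, which exceeds $\lambda^{-1+\epsilon}$ whenever $\lambda \sim n^{1/2+\epsilon_0}$ with $\epsilon_0$ small. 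The fix — which the paper uses — is to \emph{factor $\bm{\Delta}^{\top}$ outside the sum over $g$}: write the term as $\bm{\Delta}^{\top}\bm{Q}^{\top}\bm{s}^{(1)}$ and use $\norm{\bm{\Delta}}_2\norm{\bm{s}^{(1)}}_2 = O_P(\lambda^{-1+\epsilon})\cdot O_P(\lambda^{-1/2+\epsilon})$, exploiting the cancellation inside $\bm{s}^{(1)}$ captured by Lemma~\ref{supp:lemma:s12}. You have all the ingredients (Theorem~\ref{supp:theorem:ChatProp} for $\norm{\bm{\Delta}}_2$, Lemma~\ref{supp:lemma:s12} for $\norm{\bm{s}^{(1)}}_2$), but the proposal as written does not make this factorization explicit, and your framing in terms of ``double sums handled by the maximal-inequality template'' points in a direction that would not close this particular gap.
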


\begin{proof}
Let $\hat{\bm{v}}$ and $\hat{\bm{z}}$ be as defined in Theorem~\ref{supp:theorem:ChatProp}, and let $\bm{s}$ and $\bm{H}$ be as defined in the proof of Corollary~\ref{supp:corollary:CtC}. We can express $\hat{\bm{\ell}}_g$ as
\begin{align}
\label{supp:equation:lhat}
\begin{aligned}
    \hat{\bm{\ell}}_g =& (\hat{\bm{C}}^{\top}\bm{P}_g^{\perp} \hat{\bm{C}})^{-1}\hat{\bm{C}}^{\top}\bm{P}_g^{\perp} \bm{y}_g = (\hat{\bm{C}}^{\top}\bm{P}_g^{\perp} \hat{\bm{C}})^{-1}\hat{\bm{v}}^{\top}\tilde{\bm{C}}^{\top}\bm{P}_g^{\perp} \tilde{\bm{C}}\tilde{\bm{\ell}}_g + (\hat{\bm{C}}^{\top}\bm{P}_g^{\perp} \hat{\bm{C}})^{-1}\hat{\bm{v}}^{\top}\tilde{\bm{C}}^{\top}\bm{P}_g^{\perp} \bm{e}_g\\
    &+(\hat{\bm{C}}^{\top}\bm{P}_g^{\perp} \hat{\bm{C}})^{-1}\hat{\bm{z}}^{\top}\bm{Q}^{\top}\bm{P}_g^{\perp} \tilde{\bm{C}}\tilde{\bm{\ell}}_g + (\hat{\bm{C}}^{\top}\bm{P}_g^{\perp} \hat{\bm{C}})^{-1}\hat{\bm{z}}^{\top}\bm{Q}^{\top}\bm{P}_g^{\perp}\bm{e}_g\\
    =& \bm{v}^{\top}\tilde{\bm{\ell}}_g + \{(\hat{\bm{C}}^{\top}\bm{P}_g^{\perp} \hat{\bm{C}})^{-1}(\hat{\bm{v}}^{\top}\bm{v})(\bm{v}^{\top}\tilde{\bm{C}}^{\top}\bm{P}_g^{\perp} \tilde{\bm{C}}\bm{v}) - I_K\} \bm{v}^{\top}\tilde{\bm{\ell}}_g\\
    &+ \bm{v}^{\top}\tilde{\bm{C}}^{\top}\bm{P}_g^{\perp} \bm{e}_g + \bm{v}^{\top}\{ (\tilde{\bm{C}}^{\top}\bm{P}_g^{\perp} \tilde{\bm{C}})^{-1} - I_K \}\tilde{\bm{C}}^{\top}\bm{P}_g^{\perp} \bm{e}_g\\
    &+ \{ (\hat{\bm{C}}^{\top}\bm{P}_g^{\perp} \hat{\bm{C}})^{-1}\hat{\bm{v}}^{\top} - (\bm{v}^{\top}\tilde{\bm{C}}^{\top}\bm{P}_g^{\perp} \tilde{\bm{C}}\bm{v})^{-1}\bm{v}^{\top} \}\tilde{\bm{C}}^{\top}\bm{P}_g^{\perp} \bm{e}_g\\
    &+(\hat{\bm{C}}^{\top}\bm{P}_g^{\perp} \hat{\bm{C}})^{-1}\bm{v}^{\top}\bm{s}^{\top}\bm{H}\bm{P}_g^{\perp} \tilde{\bm{C}}\tilde{\bm{\ell}}_g + (\hat{\bm{C}}^{\top}\bm{P}_g^{\perp} \hat{\bm{C}})^{-1}\bm{\Delta}^{\top}\bm{Q}^{\top}\bm{P}_g^{\perp} \tilde{\bm{C}}\tilde{\bm{\ell}}_g\\
    &+(\hat{\bm{C}}^{\top}\bm{P}_g^{\perp} \hat{\bm{C}})^{-1}\bm{v}^{\top}\bm{s}^{\top}\bm{H}\bm{P}_g^{\perp} \bm{e}_g + \bm{\Delta}^{\top}\bm{Q}^{\top}\bm{P}_g^{\perp} \bm{e}_g\\
    &+\{(\hat{\bm{C}}^{\top}\bm{P}_g^{\perp} \hat{\bm{C}})^{-1} - I_K\}\bm{\Delta}^{\top}\bm{Q}^{\top}\bm{P}_g^{\perp} \bm{e}_g
\end{aligned}
\end{align}
for $\bm{\Delta} = \hat{\bm{z}} - \bm{Q}^{\top}\bm{s}\bm{v}$. Lemmas~\ref{supp:lemma:Cte} and \ref{supp:lemma:Bs}, Theorem~\ref{supp:theorem:ChatProp}, and Corollary~\ref{supp:corollary:CtC} imply
\begin{align}
\label{supp:equation:ellhatTerms:1}
\begin{aligned}
    &\max_{g \in [p]}\norm*{ \{(\hat{\bm{C}}^{\top}\bm{P}_g^{\perp} \hat{\bm{C}})^{-1}(\hat{\bm{v}}^{\top}\bm{v})(\bm{v}^{\top}\tilde{\bm{C}}^{\top}\bm{P}_g^{\perp} \tilde{\bm{C}}\bm{v}) - I_K\} \bm{v}^{\top}\tilde{\bm{\ell}}_g }_2 = O_P(\lambda^{-1/2+\epsilon})\\
    &\max_{g \in [p]} \norm*{ \bm{v}^{\top}\tilde{\bm{C}}^{\top}\bm{P}_g^{\perp} \bm{e}_g }_2 = O_P(n^{\epsilon})\\
    &\max_{g \in [p]}\norm*{ \bm{v}^{\top}\{ (\tilde{\bm{C}}^{\top}\bm{P}_g^{\perp} \tilde{\bm{C}})^{-1} - I_K \}\tilde{\bm{C}}^{\top}\bm{P}_g^{\perp} \bm{e}_g }_2= O_P(n^{-1/2+\epsilon})\\
    &\max_{g \in [p]}\norm*{ \{ (\hat{\bm{C}}^{\top}\bm{P}_g^{\perp} \hat{\bm{C}})^{-1}\hat{\bm{v}}^{\top} - (\bm{v}^{\top}\tilde{\bm{C}}^{\top}\bm{P}_g^{\perp} \tilde{\bm{C}}\bm{v})^{-1}\bm{v}^{\top} \}\tilde{\bm{C}}^{\top}\bm{P}_g^{\perp} \bm{e}_g }_2 = O_P(\lambda^{-1+\epsilon})\\
    &\max_{g \in [p]}\norm*{ (\hat{\bm{C}}^{\top}\bm{P}_g^{\perp} \hat{\bm{C}})^{-1}\bm{\Delta}^{\top}\bm{Q}^{\top}\bm{P}_g^{\perp} \tilde{\bm{C}}\tilde{\bm{\ell}}_g }_2 = O_P(\lambda^{-1/2+\epsilon})\\
    &\max_{g \in [p]}\norm*{ \bm{\Delta}^{\top}\bm{Q}^{\top}\bm{P}_g^{\perp} \bm{e}_g }_2 = O_P(n^{-\delta})\\
    &\max_{g \in [p]} \norm*{ \{(\hat{\bm{C}}^{\top}\bm{P}_g^{\perp} \hat{\bm{C}})^{-1} - I_K\}\bm{\Delta}^{\top}\bm{Q}^{\top}\bm{P}_g^{\perp} \bm{e}_g }_2 = O_P(\lambda^{-1 + \epsilon})
\end{aligned}
\end{align}
for $\delta > 0$ sufficiently small and any $\epsilon > 0$. The second line follows from the fact that for $\bm{R}=(n^{-1}\bm{C}^{\top}P_{\bm{X}}^{\perp}\bm{C})^{-1/2}$,
\begin{align*}
    \tilde{\bm{C}}^{\top}\bm{P}_g^{\perp} \bm{e}_g =& \bm{R}(n^{-1/2}\bm{C}^{\top}P_{\bm{X}}^{\perp}\bm{e}_g) + \bm{R}\{n^{-1/2}\bm{C}^{\top}(\bm{W}_g - I_n)\bm{e}_g\}\\& - \bm{R} (n^{-1}\bm{C}^{\top}\bm{X})(n^{-1}\bm{X}^{\top}\bm{X})\{n^{-1/2}\bm{X}^{\top}(\bm{W}_g-I_n)\bm{e}_g\}\\
    &- \{n^{-1/2} \tilde{\bm{C}}^{\top}(\bm{W}_g-I_n)\bm{X} \}(n^{-1} \bm{X}^{\top} \bm{W}_g \bm{X} )^{-1}(n^{-1/2} \bm{X}^{\top}\bm{W}_g \bm{e}_g),
\end{align*}
where
\begin{align*}
    &\max_{g \in [p]}\norm*{ \bm{R}(n^{-1/2}\bm{C}^{\top}P_{\bm{X}}^{\perp}\bm{e}_g) }_2 = O_P(  \max_{g \in [p]}\norm*{ n^{-1/2}\bm{C}^{\top}P_{\bm{X}}^{\perp}\bm{e}_g }_2 ) = O_P(n^{\epsilon})\\
    &\max_{g \in [p]}\norm*{ n^{-1/2}\bm{C}^{\top}(\bm{W}_g - I_n)\bm{e}_g }_2, \, \max_{g \in [p]}\norm*{ n^{-1/2}\bm{X}^{\top}(\bm{W}_g - I_n)\bm{e}_g }_2 = O_P(n^{\epsilon})
\end{align*}
by Lemma~\ref{supp:lemma:Cte} and Corollary~\ref{supp:corollary:MaximalIne}, respectively, and because
\begin{align*}
    \max_{g \in [p]} \norm*{ \{n^{-1/2} \tilde{\bm{C}}^{\top}(\bm{W}_g-I_n)\bm{X} \}(n^{-1} \bm{X}^{\top} \bm{W}_g \bm{X} )^{-1}(n^{-1/2} \bm{X}^{\top}\bm{W}_g \bm{e}_g) }_2 = O_P(n^{-1/2 + \epsilon})
\end{align*}
for any $\epsilon>0$. Identical techniques used to prove Corollary~\ref{supp:corollary:CtC} can also be used to show
\begin{align}
\label{supp:equation:ellhatTerms:2}
    \max_{g \in [p]}\norm*{ (\hat{\bm{C}}^{\top}\bm{P}_g^{\perp} \hat{\bm{C}})^{-1}\bm{v}^{\top}\bm{s}^{\top}\bm{H}\bm{P}_g^{\perp} \tilde{\bm{C}}\tilde{\bm{\ell}}_g }_2 = O_P(\lambda^{-1/2+\epsilon})
\end{align}
for any $\epsilon > 0$. Lastly, $\max_{g \in [p]}\norm*{ (\hat{\bm{C}}^{\top}\bm{P}_g^{\perp} \hat{\bm{C}})^{-1}\bm{v}^{\top}\bm{s}^{\top}\bm{H}\bm{P}_g^{\perp} \bm{e}_g }_2 \leq \{1+o_P(1)\}\max_{g \in [p]}\norm*{ \bm{s}^{\top}\bm{H}\bm{P}_g^{\perp} \bm{e}_g }_2$, where for $\bar{\bm{\ell}}_{hr}$ as defined in \eqref{supp:equation:sHPC},
\begin{align}
\label{supp:equation:sHPe}
    \bm{s}_{\bigcdot r}^{\top}\bm{H}\bm{P}_g^{\perp} \bm{e}_g = (\lambda p)^{-1} \bar{\bm{\ell}}_{gr} \bm{e}_g^{\top}\bm{P}_g^{\perp} \bm{H}\bm{P}_g^{\perp}\bm{e}_g + (\lambda p)^{-1}\sum_{h \neq g}\bar{\bm{\ell}}_{hr} \bm{e}_h^{\top}\bm{P}_h^{\perp} \bm{H}\bm{P}_g^{\perp}\bm{e}_g, \quad r \in [K].
\end{align}
We first see that for some constant $c>0$, where
\begin{align*}
    \abs*{ (\lambda p)^{-1} \bar{\bm{\ell}}_{gr} \bm{e}_g^{\top}\bm{P}_g^{\perp} \bm{H}\bm{P}_g^{\perp}\bm{e}_g } \leq c \lambda^{-1/2} \{ \max_{g \in [p]}(p^{-1}\bm{e}_g^{\top}\bm{e}_g) \} (\max_{g \in [p], i \in [n]}w_{gi}^2) = O_P(\lambda^{-1/2 +\epsilon})
\end{align*}
for any $\epsilon > 0$. For $\bm{Z} = [\bm{C},\bm{X}]$, the second term in \eqref{supp:equation:sHPe} can be expressed as
\begin{align}
\label{supp:equation:etPHPe}
\begin{aligned}
    &(\lambda p)^{-1}\sum_{h \neq g}\bar{\bm{\ell}}_{hr} \bm{e}_h^{\top}\bm{P}_h^{\perp} \bm{H}\bm{P}_g^{\perp}\bm{e}_g = (\lambda p)^{-1}\sum_{h \neq g}\bar{\bm{\ell}}_{hr} \bm{e}_h^{\top}\bm{W}_h \bm{W}_g\bm{e}_g\\
    &- (\lambda p)^{-1}\sum_{h \neq g}\bar{\bm{\ell}}_{hr} (n^{-1/2}\bm{e}_h^{\top}\bm{W}_h\bm{Z})(n^{-1}\bm{Z}^{\top}\bm{Z}) (n^{-1/2}\bm{Z}^{\top}\bm{W}_g\bm{e}_g)\\
    &-(\lambda p)^{-1}\sum_{h \neq g}\bar{\bm{\ell}}_{hr} \bm{e}_h^{\top}\bm{W}_h \bm{H}(\bm{W}_g - I_n)\bm{X}(\bm{X}^{\top}\bm{W}_g \bm{X})^{-1}\bm{X}^{\top}\bm{W}_g \bm{e}_g\\
    &-(\lambda p)^{-1}\sum_{h \neq g}\bar{\bm{\ell}}_{hr} \bm{e}_h^{\top}\bm{W}_h \bm{X}(\bm{X}^{\top}\bm{W}_h \bm{X})^{-1}\bm{X}^{\top}(\bm{W}_h - I_n)\bm{H} \bm{W}_g\bm{e}_g\\
    &+(\lambda p)^{-1}\sum_{h \neq g}\bar{\bm{\ell}}_{hr} \bm{e}_h^{\top}\bm{W}_h \bm{X}(\bm{X}^{\top}\bm{W}_h \bm{X})^{-1}\bm{X}^{\top}(\bm{W}_h - I_n)\bm{H} (\bm{W}_g - I_n)\bm{X}(\bm{X}^{\top}\bm{W}_g \bm{X})^{-1}\\
    &\times \bm{X}^{\top}\bm{W}_g \bm{e}_g.
\end{aligned}
\end{align}
First, $\norm*{ n^{-1/2}\bm{e}_h^{\top}\bm{W}_h\bm{Z} }_2 \leq \norm*{ n^{-1/2}\bm{e}_h^{\top}\bm{Z} }_2 + \norm*{ n^{-1/2}\bm{e}_h^{\top}(\bm{W}_h-I_n)\bm{Z} }_2$, where $\max_{g \in [p]}\norm*{ n^{-1/2}\bm{e}_h^{\top}\bm{Z} }_2 = O_P(n^{\epsilon})$ by Lemma~\ref{supp:lemma:Cte} and Corollary~\ref{supp:corollary:MaximalIne} implies $\max_{g \in [p]}\norm*{ n^{-1/2}\bm{e}_h^{\top}(\bm{W}_h-I_n)\bm{Z} }_2 = O_P(n^{\epsilon})$ for any $\epsilon>0$. Therefore, the second term in \eqref{supp:equation:etPHPe} satisfies
\begin{align*}
    \max_{g \in [p]} \abs*{ (\lambda p)^{-1}\sum_{h \neq g}\bar{\bm{\ell}}_{hr} (n^{-1/2}\bm{e}_h^{\top}\bm{W}_h\bm{Z})(n^{-1}\bm{Z}^{\top}\bm{Z}) (n^{-1/2}\bm{Z}^{\top}\bm{W}_g\bm{e}_g) } = O_P(\lambda^{-1/2+\epsilon})
\end{align*}
for any $\epsilon>0$. Identical analyses can be used to show that the the maxima, over $g \in [p]$, absolute values of the third through fifth terms in \eqref{supp:equation:etPHPe} are all $O_P(\lambda^{-1/2+\epsilon})$. The first term in \eqref{supp:equation:etPHPe} can be expressed as
\begin{align*}
    &(\lambda p)^{-1}\sum_{h \neq g}\bar{\bm{\ell}}_{hr} \bm{e}_h^{\top}\bm{W}_h \bm{W}_g\bm{e}_g = (\lambda p)^{-1}\sum_{h \neq g}\bar{\bm{\ell}}_{hr} \bm{e}_h^{\top}\bm{e}_g + (\lambda p)^{-1}\sum_{h \neq g}\bar{\bm{\ell}}_{hr} \bm{e}_h^{\top}(\bm{W}_h-I_n) \bm{e}_g\\
    &+(\lambda p)^{-1}\sum_{h \neq g}\bar{\bm{\ell}}_{hr} \bm{e}_h^{\top}(\bm{W}_g-I_n) \bm{e}_g + (\lambda p)^{-1}\sum_{h \neq g}\bar{\bm{\ell}}_{hr} \bm{e}_h^{\top}(\bm{W}_h-I_n)(\bm{W}_g-I_n) \bm{e}_g.
\end{align*}
Since $\bm{e}_g$ is sub-Gaussian random vector with independent entries and uniformly sub-Gaussian norm, Corollary~\ref{supp:corollary:MaximalIne} implies $\max_{g \in [p]}\abs*{ (\lambda p)^{-1}\sum_{h \neq g}\bar{\bm{\ell}}_{hr} \bm{e}_h^{\top}\bm{e}_g } = O_P(\lambda^{-1/2+\epsilon})$ for any $\epsilon>0$. For the second term, we see that for $c_{hr} = \lambda^{-1/2}\bar{\bm{\ell}}_{hr}$,
\begin{align*}
    x_{gr} = p^{-1}\lambda^{-1/2}\sum_{h \neq g} \bar{\bm{\ell}}_{hr} \bm{e}_h^{\top}(\bm{W}_h-I_n) \bm{e}_g = p^{-1}\sum_{h \neq g}\sum_{i=1}^n c_{hr}\bm{e}_{hi}\bm{e}_{gi}(w_{hi}-1).
\end{align*}
Since $\max_{(h,g)\in[p]\times [p];i \in [n]}\E[\{ c_{hr}\bm{e}_{hi}\bm{e}_{gi}(w_{hi}-1) \}^{2m}]$ is bounded from above by a constant that only depends on $m>0$ and the elements of $\{w_{hi}-1\}_{h\in [p]\setminus \{g\}; i \in [n]}$ are mean 0 and independent conditional on $\{\bm{E},\bm{C}\}$, Lemma~\ref{supp:lemma:Powera} implies $\max_{g \in [p]}\E(x_{gr}^{2m})$ is bounded above by a constant that only depends on $m>0$. Corollary~\ref{supp:corollary:MaximalIne} therefore implies $\max_{g \in [p]}\abs*{ (\lambda p)^{-1}\sum_{h \neq g}\bar{\bm{\ell}}_{hr} \bm{e}_h^{\top}(\bm{W}_h-I_n) \bm{e}_g } = O_P(\lambda^{-1/2+\epsilon})$ for any $\epsilon>0$. Further applications of Lemma~\ref{supp:lemma:Powera} and Corollary~\ref{supp:corollary:MaximalIne} can be used to show
\begin{align*}
    \max_{g \in [p]}\abs*{ (\lambda p)^{-1}\sum_{h \neq g}\bar{\bm{\ell}}_{hr} \bm{e}_h^{\top}(\bm{W}_g-I_n) \bm{e}_g }, \, \max_{g \in [p]}\abs*{ (\lambda p)^{-1}\sum_{h \neq g}\bar{\bm{\ell}}_{hr} \bm{e}_h^{\top}(\bm{W}_h-I_n)(\bm{W}_g-I_n) \bm{e}_g } = O_P(\lambda^{-1/2 + \epsilon})
\end{align*}
for any $\epsilon>0$. This implies the first term in \eqref{supp:equation:etPHPe} satisfies
\begin{align*}
    \max_{g \in [p]}\abs*{ (\lambda p)^{-1}\sum_{h \neq g}\bar{\bm{\ell}}_{hr} \bm{e}_h^{\top}\bm{W}_h \bm{W}_g\bm{e}_g\ } = O_P(\lambda^{-1/2 + \epsilon})
\end{align*}
for any $\epsilon>0$, which gives us that \eqref{supp:equation:sHPe} satisfies
\begin{align}
\label{supp:equation:ellhatTerms:3}
    \max_{g \in [p]}\norm*{ \bm{s}_{\bigcdot r}^{\top}\bm{H}\bm{P}_g^{\perp} \bm{e}_g }_2 = O_P(\lambda^{-1/2+\epsilon})
\end{align}
for any $\epsilon > 0$. The expression for $\hat{\bm{\ell}}_g$ in \eqref{supp:equation:lhat} and the maximal inequalities in \eqref{supp:equation:ellhatTerms:1}, \eqref{supp:equation:ellhatTerms:2}, and \eqref{supp:equation:ellhatTerms:3} imply
\begin{align*}
    \max_{g \in [p]}\lVert &\hat{\bm{\ell}}_g\hat{\bm{\ell}}_g^{\top} - \{\bm{v}^{\top}\tilde{\bm{\ell}}_g\tilde{\bm{\ell}}_g^{\top}\bm{v} + \bm{v}^{\top}\tilde{\bm{C}}^{\top}\bm{P}_g^{\perp}\bm{e}_g \tilde{\bm{\ell}}_g^{\top}\bm{v} + ( \bm{v}^{\top}\tilde{\bm{C}}^{\top}\bm{P}_g^{\perp}\bm{e}_g \tilde{\bm{\ell}}_g^{\top}\bm{v} )^{\top}\\
    &+ \bm{\Delta}^{\top}\bm{Q}^{\top}\bm{P}_g^{\perp}\bm{e}_g\tilde{\bm{\ell}}_g^{\top}\bm{v} + ( \bm{\Delta}^{\top}\bm{Q}^{\top}\bm{P}_g^{\perp}\bm{e}_g \tilde{\bm{\ell}}_g^{\top}\bm{v})^{\top} \}  \rVert_2 = O_P(\lambda^{\epsilon})\\
    \max_{g \in [p]}\lVert &\hat{\bm{\ell}}_g\tilde{\bm{\ell}}_g^{\top} - \{\bm{v}^{\top}\tilde{\bm{\ell}}_g\tilde{\bm{\ell}}_g^{\top} + \bm{v}^{\top}\tilde{\bm{C}}^{\top}\bm{P}_g^{\perp}\bm{e}_g \tilde{\bm{\ell}}_g^{\top} + ( \bm{v}^{\top}\tilde{\bm{C}}^{\top}\bm{P}_g^{\perp}\bm{e}_g \tilde{\bm{\ell}}_g^{\top} )^{\top}\\
    &+ \bm{\Delta}^{\top}\bm{Q}^{\top}\bm{P}_g^{\perp}\bm{e}_g\tilde{\bm{\ell}}_g^{\top} + ( \bm{\Delta}^{\top}\bm{Q}^{\top}\bm{P}_g^{\perp}\bm{e}_g \tilde{\bm{\ell}}_g^{\top})^{\top} \}  \rVert_2 = O_P(\lambda^{\epsilon})
\end{align*}
for any $\epsilon > 0$. To complete the proof, we need only show that $\norm*{ (\lambda p)^{-1}\sum\limits_{g=1}^p \tilde{\bm{C}}^{\top}\bm{P}_g^{\perp}\bm{e}_g \tilde{\bm{\ell}}_g^{\top} }_2$ and $\norm*{ (\lambda p)^{-1}\sum\limits_{g=1}^p \bm{\Delta}^{\top}\bm{Q}^{\top}\bm{P}_g^{\perp}\bm{e}_g \tilde{\bm{\ell}}_g^{\top} }_2$ are both $O_P(\lambda^{-1 + \epsilon})$ for any $\epsilon > 0$. Let $\bm{R} = (n^{-1}\bm{C}P_{\bm{X}}^{\perp}\bm{C})^{1/2}$ and assume $n^{-1}\bm{X}^{\top}\bm{X}=I_d$ without loss of generality. Then for $\bar{\bm{\ell}}_g = \lambda^{-1/2}n^{1/2}\bm{\ell}_g$ and $k \in [K]$,
\begin{align*}
     &\norm*{ (\lambda p)^{-1}\sum\limits_{g=1}^p \tilde{\bm{C}}^{\top}\bm{P}_g^{\perp}\bm{e}_g \tilde{\bm{\ell}}_g^{\top} }_2 \leq \underbrace{\norm*{\bm{R}}_2 \norm*{\bm{R}^{-1}}_2}_{=O_P(1)} \norm*{ (\lambda p)^{-1}\sum\limits_{g=1}^p (n^{-1/2}\bm{C})^{\top}P_{\bm{X}}^{\perp}\bm{P}_g^{\perp}\bm{e}_g (n^{1/2}\bm{\ell}_g)^{\top} }_2
\end{align*}
\begin{align}
\label{supp:equation:CtPPe}
\begin{aligned}
    &(\lambda p)^{-1}\sum\limits_{g=1}^p (n^{-1/2}\bm{C})^{\top}P_{\bm{X}}^{\perp}\bm{P}_g^{\perp}\bm{e}_g (n^{1/2}\bm{\ell}_{gk}) = \lambda^{-1/2}p^{-1}\sum\limits_{g=1}^p (n^{-1/2}\bm{C})^{\top}P_{\bm{X}}^{\perp}\bm{W}_g\bm{e}_g \bar{\bm{\ell}}_{gk}\\
     &- \lambda^{-1/2}p^{-1}\sum\limits_{g=1}^p \{n^{-1}\bm{C}^{\top}P_{\bm{X}}^{\perp}(\bm{W}_g-I_n)\bm{X}\}(n^{-1}\bm{X}^{\top}\bm{W}_g\bm{X})^{-1}(n^{-1/2}\bm{X}^{\top}\bm{W}_g\bm{e}_g) \bar{\bm{\ell}}_{gk}
\end{aligned}
\end{align}
where
\begin{align}
\label{supp:equation:CtPWe}
\begin{aligned}
    &\lambda^{-1/2}p^{-1}\sum\limits_{g=1}^p (n^{-1/2}\bm{C})^{\top}P_{\bm{X}}^{\perp}\bm{W}_g\bm{e}_g \bar{\bm{\ell}}_{gk} = \lambda^{-1/2}p^{-1}\sum\limits_{g=1}^p (n^{-1/2}\bm{C})^{\top}P_{\bm{X}}^{\perp}\bm{e}_g \bar{\bm{\ell}}_{gk}\\
    &+ \lambda^{-1/2}p^{-1}\sum\limits_{g=1}^p (n^{-1/2}\bm{C})^{\top}(\bm{W}_g - I_n)\bm{e}_g \bar{\bm{\ell}}_{gk}\\
    &-  (n^{-1}\bm{C}^{\top}\bm{X})\lambda^{-1/2}p^{-1}\sum\limits_{g=1}^p (n^{-1/2}\bm{X})^{\top}(\bm{W}_g - I_n)\bm{e}_g \bar{\bm{\ell}}_{gk}.
\end{aligned}
\end{align}
The first term is $O_P(\lambda^{-1/2}p^{-1/2})= O_P(\lambda^{-1})$ by Lemma~\ref{supp:lemma:Cte} and Remark~\ref{supp:remark:CtE}. For the second term,
\begin{align*}
    \V\{ p^{-1/2}\sum\limits_{g=1}^p (n^{-1/2}\bm{C})^{\top}(\bm{W}_g - I_n)\bm{e}_g \bar{\bm{\ell}}_{gk} \} = p^{-1}\sum\limits_{g=1}^p \bar{\bm{\ell}}_{gk}^2 n^{-1}\sum_{i=1}^n \E\{ (w_{gi}-1)^2 \bm{e}_{gi}^2\bm{C}_{i \bigcdot}\bm{C}_{i \bigcdot}^{\top} \} \preceq c I_K
\end{align*}
for some $c > 0$, meaning the second term in \eqref{supp:equation:CtPWe} is $O_P(\lambda^{-1})$. An identical analysis shows that the third term in \eqref{supp:equation:CtPWe} is also $O_P(\lambda^{-1})$, which proves the first term in \eqref{supp:equation:CtPPe} is $O_P(\lambda^{-1})$. For the second term in \eqref{supp:equation:CtPPe}, we first see that
\begin{align*}
    &n^{-1}\bm{C}^{\top}P_{\bm{X}}^{\perp}(\bm{W}_g-I_n)\bm{X} = n^{-1}\bm{C}^{\top}(\bm{W}_g-I_n)\bm{X} - (n^{-1}\bm{C}^{\top}\bm{X})\{ n^{-1}\bm{X}^{\top}(\bm{W}_g-I_n)\bm{X} \}\\
    &\max_{g \in [p]}\norm*{ n^{-1}\bm{C}^{\top}(\bm{W}_g-I_n)\bm{X} }_2,\, \max_{g \in [p]}\norm*{ n^{-1}\bm{X}^{\top}(\bm{W}_g-I_n)\bm{X} }_2 = O_P(n^{-1/2+\epsilon})
\end{align*}
for any $\epsilon > 0$. And since $\max_{g \in [p]}\norm*{ n^{-1/2}\bm{X}^{\top}\bm{W}_g\bm{e}_g }_2= O_P(n^{\epsilon})$ for any $\epsilon > 0$, the second term in \eqref{supp:equation:CtPPe} is $O_P(\lambda^{-1+\epsilon})$ for any $\epsilon > 0$, which proves $\norm*{ (\lambda p)^{-1}\sum_{g=1}^p \tilde{\bm{C}}^{\top}\bm{P}_g^{\perp}\bm{e}_g \tilde{\bm{\ell}}_g^{\top} }_2 = O_P(\lambda^{-1+\epsilon})$ for any $\epsilon > 0$. Lastly,
\begin{align*}
    \norm*{ (\lambda p)^{-1}\sum\limits_{g=1}^p \bm{\Delta}^{\top}\bm{Q}^{\top}\bm{P}_g^{\perp}\bm{e}_g \tilde{\bm{\ell}}_g^{\top} }_2 \leq \underbrace{\norm*{ \bm{R} }_2}_{=O_P(1)}\norm*{ (\lambda p)^{-1}\sum\limits_{g=1}^p \bm{\Delta}^{\top}\bm{Q}^{\top}\bm{P}_g^{\perp}\bm{e}_g (n^{1/2}\bm{\ell}_g)^{\top} }_2,
\end{align*}
where for $k \in [K]$,
\begin{align*}
    &(\lambda p)^{-1}\bm{\Delta}^{\top}\bm{Q}^{\top}\sum\limits_{g=1}^p \bm{P}_g^{\perp}\bm{e}_g (n^{1/2}\bm{\ell}_{gk}) = \lambda^{-1/2}\bm{\Delta}^{\top}\bm{Q}^{\top}p^{-1}\sum\limits_{g=1}^p \bm{e}_g \bar{\bm{\ell}}_{gk} + \lambda^{-1/2}\bm{\Delta}^{\top}\bm{Q}^{\top}p^{-1}\sum\limits_{g=1}^p (\bm{W}_g-I_n)\bm{e}_g \bar{\bm{\ell}}_{gk}\\
     &- \lambda^{-1/2}\bm{\Delta}^{\top}\bm{Q}^{\top}p^{-1}\sum\limits_{g=1}^p \{n^{-1/2}(\bm{W}_g-I_n)\bm{X}\}(n^{-1}\bm{X}^{\top}\bm{W}_g\bm{X})^{-1}(n^{-1/2}\bm{X}^{\top}\bm{W}_g\bm{e}_g) \bar{\bm{\ell}}_{gk}.
\end{align*}
We first see that
\begin{align*}
    \E\left(\norm*{ p^{-1}\sum\limits_{g=1}^p \bm{e}_g \bar{\bm{\ell}}_{gk} }_2^2\right) = p^{-1}\sum_{g=1}^p \bar{\bm{\ell}}_{gk}^2 p^{-1}\Tr\{ \V(\bm{e}_g) \} \leq c
\end{align*}
for some constant $c>0$. Next,
\begin{align*}
    \E\left\{\norm*{ p^{-1}\sum\limits_{g=1}^p (\bm{W}_g-I_n)\bm{e}_g \bar{\bm{\ell}}_{gk} }_2^2\right\} = p^{-1}\sum_{g=1}^p p^{-1}\sum_{i=1}^n \E\{(w_{gi}-1)^2\bm{e}_{gi}^2\} \leq c
\end{align*}
for some constant $c>0$. Since
\begin{align*}
    \max_{g \in [p]}\norm*{ \{n^{-1/2}(\bm{W}_g-I_n)\bm{X}\}(n^{-1}\bm{X}^{\top}\bm{W}_g\bm{X})^{-1}(n^{-1/2}\bm{X}^{\top}\bm{W}_g\bm{e}_g) \bar{\bm{\ell}}_{gk} }_2 = O_P(n^{\epsilon}),
\end{align*}
this implies
\begin{align*}
    \norm*{ p^{-1}\sum\limits_{g=1}^p \{n^{-1/2}(\bm{W}_g-I_n)\bm{X}\}(n^{-1}\bm{X}^{\top}\bm{W}_g\bm{X})^{-1}(n^{-1/2}\bm{X}^{\top}\bm{W}_g\bm{e}_g) \bar{\bm{\ell}}_{gk} }_2 = O_P(n^{\epsilon})
\end{align*}
for any $\epsilon > 0$. Since $\norm*{ \bm{\Delta} }_2 = O_P(\lambda^{-1+\epsilon})$ for any $\epsilon > 0$ by Theorem~\ref{supp:theorem:ChatProp},
\begin{align*}
    \norm*{ (\lambda p)^{-1}\sum\limits_{g=1}^p \bm{\Delta}^{\top}\bm{Q}^{\top}\bm{P}_g^{\perp}\bm{e}_g \tilde{\bm{\ell}}_g^{\top} }_2 = O_P(\lambda^{-1+\epsilon})
\end{align*}
for any $\epsilon > 0$, which completes the proof.
\end{proof}

\subsection{Properties of our estimate for $\bm{\Omega}$}
\label{supp:subsection:Omega}

\begin{theorem}
\label{supp:theorem:Omega}
Suppose the Assumptions of Theorem~\ref{supp:theorem:ChatProp} hold, let $\bm{\Omega} = (n^{-1}\bm{C}^{\top}P_{\bm{X}}^{\perp}\bm{C})^{-1/2}\bm{C}^{\top} \allowbreak \bm{X}(\bm{X}^{\top}\bm{X})^{-1}$, and for $\hat{\bm{\beta}}_g^{\naive} = (\bm{X}^{\top}\bm{W}_g \bm{X})^{-1}\bm{X}^{\top}\bm{W}_g \bm{y}_g$ and $\hat{\bm{\ell}}_g$ as defined in the statement of Corollary~\ref{supp:corollary:LtL}, define $\hat{\bm{\Omega}} = n^{1/2}\left(\sum_{g=1}^p \hat{\bm{\ell}}_g\hat{\bm{\ell}}_g^{\top}\right)^{-1}\left[ \sum_{g=1}^p\hat{\bm{\ell}}_g \{\hat{\bm{\beta}}_g^{\naive}\}^{\top}\right]$. Then $\norm*{ \bm{v}^{\top}\bm{\Omega}_{\bigcdot j} - \hat{\bm{\Omega}}_{\bigcdot j} }_2 = o_P(n^{-1/2})$ for $\bm{v}$ as defined in the statement of Theorem~\ref{supp:theorem:ChatProp} and all $j \in [d_1]$.
\end{theorem}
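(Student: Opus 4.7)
The plan is to expand $\hat{\bm{\Omega}}_{\bigcdot j}$ by substituting the identity $\hat{\bm{\beta}}_g^{\naive} = \bm{\beta}_g + (\bm{X}^{\top}\bm{W}_g\bm{X})^{-1}\bm{X}^{\top}\bm{W}_g\bm{C}\bm{\ell}_g + (\bm{X}^{\top}\bm{W}_g\bm{X})^{-1}\bm{X}^{\top}\bm{W}_g\bm{e}_g$ into the definition of $\hat{\bm{\Omega}}$, producing three summands: a sparsity piece involving $\{\bm{\beta}_{g_j}\}_g$, a signal piece involving $\{\bm{C}\bm{\ell}_g\}_g$, and a noise piece involving $\{\bm{e}_g\}_g$. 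The overall scaling is controlled by $\max_g\norm*{\hat{\bm{\ell}}_g}_2 = O_P(\lambda_1^{1/2})$ (immediate from \eqref{supp:equation:lhat} and Lemma~\ref{supp:lemma:Cte}) and $\norm*{\hat{\bm{H}}^{-1}}_2 = O_P\{(p\lambda_K)^{-1}\}$, where $\hat{\bm{H}} = \sum_g \hat{\bm{\ell}}_g\hat{\bm{\ell}}_g^{\top}$, which follows from Corollary~\ref{supp:corollary:LtL} and Assumption~\ref{supp:assumption:lambda}. The sparsity piece is then bounded by $n^{1/2}\norm*{\hat{\bm{H}}^{-1}}_2 \sum_{g:\bm{\beta}_{g_j}\neq 0}\norm*{\hat{\bm{\ell}}_g}_2\,\abs*{\bm{\beta}_{g_j}}$, which by Assumption~\ref{supp:assumption:SparseBeta} is $o\{n^{-1/2}(\lambda_1/\lambda_K)\} = o_P(n^{-1/2})$.

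For the signal piece, decompose $(\bm{X}^{\top}\bm{W}_g\bm{X})^{-1}\bm{X}^{\top}\bm{W}_g = \bm{M} + (\bm{X}^{\top}\bm{W}_g\bm{X})^{-1}\bm{X}^{\top}(\bm{W}_g-I_n)P_{\bm{X}}^{\perp}$ with $\bm{M} = (\bm{X}^{\top}\bm{X})^{-1}\bm{X}^{\top}$. The $\bm{M}$-part satisfies $\bm{e}_j^{\top}\bm{M}\bm{C}\bm{\ell}_g = n^{-1/2}\tilde{\bm{\ell}}_g^{\top}\bm{\Omega}_{\bigcdot j}$ directly from the definitions of $\bm{\Omega}$ and $\tilde{\bm{\ell}}_g$, so summation and the second conclusion of Corollary~\ref{supp:corollary:LtL} give $\hat{\bm{H}}^{-1}\sum_g \hat{\bm{\ell}}_g \tilde{\bm{\ell}}_g^{\top}\bm{\Omega}_{\bigcdot j} = \bm{v}^{\top}\bm{\Omega}_{\bigcdot j} + O_P(\lambda^{-1+\epsilon})$, since $(\bm{v}^{\top}\bm{A}\bm{v})^{-1}\bm{v}^{\top}\bm{A} = \bm{v}^{-1} = \bm{v}^{\top}$ for $\bm{A} = \sum_g \tilde{\bm{\ell}}_g\tilde{\bm{\ell}}_g^{\top}$; this error is $o_P(n^{-1/2})$ under $\lambda_K \gtrsim n^{1/2+\epsilon}$. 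The residual piece involves the scalar $\xi_g = \bm{e}_j^{\top}(\bm{X}^{\top}\bm{W}_g\bm{X})^{-1}\bm{X}^{\top}(\bm{W}_g-I_n)P_{\bm{X}}^{\perp}\bm{C}\bm{\ell}_g$, which is mean-zero in $\{w_{gi}\}$ and independent across $g$ conditional on $\{\bm{C},\bm{E}\}$. Substituting the leading term $\bm{v}^{\top}\tilde{\bm{\ell}}_g$ of \eqref{supp:equation:lhat} and using $\V(\xi_g) = O(\lambda n^{-2})$, Lemma~\ref{supp:lemma:Powera} gives $\norm*{\sum_g \bm{v}^{\top}\tilde{\bm{\ell}}_g\xi_g}_2 = O_P(p^{1/2}\lambda n^{-1})$, which after $n^{1/2}\hat{\bm{H}}^{-1}$ scaling is $O_P\{p^{-1/2}n^{-1/2}\} = o_P(n^{-1/2})$. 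The remainder of $\hat{\bm{\ell}}_g$, bounded uniformly by $O_P(n^{\epsilon})$ via \eqref{supp:equation:ellhatTerms:1}--\eqref{supp:equation:ellhatTerms:3}, combined with $\max_g\abs*{\xi_g} = O_P(\lambda^{1/2}n^{-1+\epsilon})$ from Corollary~\ref{supp:corollary:MaximalIne}, is also $o_P(n^{-1/2})$.

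The noise piece $n^{1/2}\hat{\bm{H}}^{-1}\sum_g \hat{\bm{\ell}}_g\psi_{g_j}$, with $\psi_{g_j} = \bm{e}_j^{\top}(\bm{X}^{\top}\bm{W}_g\bm{X})^{-1}\bm{X}^{\top}\bm{W}_g\bm{e}_g$, is the main obstacle because $\hat{\bm{\ell}}_g$ and $\psi_{g_j}$ share the same $\bm{e}_g$, producing potentially non-vanishing quadratic-in-$\bm{e}_g$ cross terms. The critical observation is the exact identity $\bm{P}_g^{\perp}\bm{X} = 0$: decomposing $\psi_{g_j} = \bm{e}_j^{\top}\bm{M}\bm{e}_g + \tilde{\xi}_g$ and using the expansion \eqref{supp:equation:lhat} for $\hat{\bm{\ell}}_g$, the most dangerous cross term $\bm{v}^{\top}\tilde{\bm{C}}^{\top}\bm{P}_g^{\perp}\bm{e}_g \cdot \bm{e}_j^{\top}\bm{M}\bm{e}_g$ has conditional mean proportional to $\bm{v}^{\top}\tilde{\bm{C}}^{\top}\bm{P}_g^{\perp}\bm{X}(\bm{X}^{\top}\bm{X})^{-1}\bm{e}_j = 0$, killing the leading bias. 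The remaining variance contribution, along with the leading term $\bm{v}^{\top}\tilde{\bm{\ell}}_g\bm{e}_j^{\top}\bm{M}\bm{e}_g$, whose summands are independent across $g$ conditional on $\bm{G}$ (using Assumption~\ref{assumption:y}\ref{assump:y:e:Network}) with total variance $O(p\lambda/n)$, together with the $\tilde{\xi}_g$ correction analyzed as in the signal piece, each contribute $o_P(n^{-1/2})$ after the $n^{1/2}\hat{\bm{H}}^{-1}$ scaling. Collecting the three pieces gives $\hat{\bm{\Omega}}_{\bigcdot j} = \bm{v}^{\top}\bm{\Omega}_{\bigcdot j} + o_P(n^{-1/2})$.
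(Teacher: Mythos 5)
Your decomposition of $\hat{\bm{\beta}}_g^{\naive}$, the use of Corollary~\ref{supp:corollary:LtL} together with $\bm{v}^{\top}\bm{A}\bm{v}(\bm{v}^{\top}\bm{A})^{-1}=\bm{v}^{\top}$ for the signal piece, and the sparsity bound are exactly the paper's route, and the rate arithmetic is correct in the supplement's normalization ($\lambda_K\gtrsim n^{1/2+\epsilon_0}$, $p\asymp n$).

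The one place you diverge from the paper is the noise piece, and there you are solving a problem that does not actually arise. The paper splits $\hat{\bm{\ell}}_g = \bm{v}^{\top}\tilde{\bm{\ell}}_g + (\hat{\bm{\ell}}_g - \bm{v}^{\top}\tilde{\bm{\ell}}_g)$, does the variance computation for the $\bm{e}_g$-free part $\bm{v}^{\top}\tilde{\bm{\ell}}_g$, and handles the remainder with the crude uniform bounds $\max_g\norm*{\hat{\bm{\ell}}_g - \bm{v}^{\top}\tilde{\bm{\ell}}_g}_2 = O_P(n^{\epsilon})$ and $\max_g\norm*{n^{-1/2}\bm{e}_g^{\top}\bm{W}_g\bm{X}}_2 = O_P(n^{\epsilon})$; after the $(\lambda p)^{-1}$ normalization this is $O_P(\lambda^{-1}n^{2\epsilon}) = o_P(n^{-1/2})$ without any cancellation. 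Your extra cancellation argument via $\bm{P}_g^{\perp}\bm{X}=0$ for the quadratic cross-term $\bm{v}^{\top}\tilde{\bm{C}}^{\top}\bm{P}_g^{\perp}\bm{e}_g\cdot\bm{a}_j^{\top}\bm{M}\bm{e}_g$ is therefore unnecessary, and as written it is also not quite rigorous: $\bm{P}_g^{\perp}$ depends on $\bm{e}_g$ through $\bm{W}_g$, so $\E\{\bm{P}_g^{\perp}\bm{e}_g\bm{e}_g^{\top}\}$ does not factor as $\E(\bm{P}_g^{\perp})\sigma_g^2 I_n$ and the conditional mean is not literally $\bm{v}^{\top}\tilde{\bm{C}}^{\top}\bm{P}_g^{\perp}\bm{X}(\bm{X}^{\top}\bm{X})^{-1}\bm{a}_j$. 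Since you also bound the remainder terms crudely, your conclusion is still correct — but you would save a step, and avoid the conditioning gap, by simply applying the crude bound to the entire $(\hat{\bm{\ell}}_g - \bm{v}^{\top}\tilde{\bm{\ell}}_g)$ piece as the paper does.
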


\begin{proof}
By definition,
\begin{align*}
    \hat{\bm{\beta}}_g^{\naive} = \bm{\beta}_g + n^{-1/2}\bm{\Omega}^{\top}\tilde{\bm{\ell}}_g + (\bm{X}^{\top}\bm{W}_g \bm{X})^{-1}\bm{X}^{\top}(\bm{W}_g-I_n)\tilde{\bm{C}}\tilde{\bm{\ell}}_g +  (\bm{X}^{\top}\bm{W}_g \bm{X})^{-1}\bm{X}^{\top}\bm{W}_g \bm{e}_g.
\end{align*}
Define $\hat{\bm{S}} = (\lambda p)^{-1}\sum_{g=1}^p \hat{\bm{\ell}}_g\hat{\bm{\ell}}_g^{\top}$, where $\norm*{\hat{\bm{S}}}_2 = O_P(1)$ by Corollary~\ref{supp:corollary:LtL}. Then for $j \in [d_1]$ and $\bm{a}_j$ the $j$th standard basis vector in $\mathbb{R}^d$,
\begin{align}
\label{supp:equation:OmegaHatExpand}
\begin{aligned}
    \hat{\bm{\Omega}}_{\bigcdot j} =& \hat{\bm{S}}^{-1}\{ n^{1/2}(\lambda p)^{-1}\sum_{g=1}^p \hat{\bm{\ell}}_g \bm{\beta}_{gj}  \} + \hat{\bm{S}}^{-1}\{ (\lambda p)^{-1}\sum_{g=1}^p \hat{\bm{\ell}}_g\tilde{\bm{\ell}}_g^{\top} \bm{\Omega}_{\bigcdot j}  \}\\
    &+\hat{\bm{S}}^{-1} [ (\lambda p)^{-1}\sum_{g=1}^p \hat{\bm{\ell}}_g\tilde{\bm{\ell}}_g^{\top} \{\tilde{\bm{C}}^{\top} (\bm{W}_g-I_n)(n^{-1/2}\bm{X})\}(n^{-1}\bm{X}^{\top}\bm{W}_g \bm{X})^{-1}\bm{a}_j  ]\\
    &+ \hat{\bm{S}}^{-1}\{ (\lambda p)^{-1}\sum_{g=1}^p \hat{\bm{\ell}}_g(n^{-1/2}\bm{e}_g^{\top}\bm{W}_g \bm{X})(n^{-1}\bm{X}^{\top}\bm{W}_g \bm{X})^{-1}\bm{a}_j  \}.
\end{aligned}
\end{align}
By Corollary~\ref{supp:corollary:LtL},
\begin{align*}
    \norm*{ \hat{\bm{S}}^{-1}\{ (\lambda p)^{-1}\sum_{g=1}^p \hat{\bm{\ell}}_g\tilde{\bm{\ell}}_g^{\top} \bm{\Omega}_{\bigcdot j}  \} - \bm{v}^{\top}\bm{\Omega}_{\bigcdot j} }_2 = o_P(n^{-1/2}).
\end{align*}
For the fourth term in \eqref{supp:equation:OmegaHatExpand},
\begin{align*}
    &\norm*{ (\lambda p)^{-1}\sum_{g=1}^p \hat{\bm{\ell}}_g(n^{-1/2}\bm{e}_g^{\top}\bm{W}_g \bm{X})(n^{-1}\bm{X}^{\top}\bm{W}_g \bm{X})^{-1} }_2\\ \leq& \norm*{ (\lambda p)^{-1}\sum_{g=1}^p \tilde{\bm{\ell}}_g(n^{-1/2}\bm{e}_g^{\top}\bm{W}_g \bm{X}) }_2\underbrace{\norm*{ (n^{-1}\bm{X}^{\top}\bm{X})^{-1} }_2}_{=O(1)}\\
    &+ \underbrace{\norm*{ (\lambda p)^{-1}\sum_{g=1}^p \tilde{\bm{\ell}}_g(n^{-1/2}\bm{e}_g^{\top}\bm{W}_g \bm{X})\{ (n^{-1}\bm{X}^{\top}\bm{W}_g \bm{X})^{-1} - (n^{-1}\bm{X}^{\top}\bm{X})^{-1} \} }_2}_{=O_P(\lambda^{-1+\epsilon})} \\
    &+\underbrace{\norm*{ (\lambda p)^{-1}\sum_{g=1}^p (\hat{\bm{\ell}}_g - \bm{v}^{\top}\tilde{\bm{\ell}}_g)(n^{-1/2}\bm{e}_g^{\top}\bm{W}_g \bm{X})(n^{-1}\bm{X}^{\top}\bm{W}_g \bm{X})^{-1}}_2}_{=O_P(\lambda^{-1+\epsilon})},
\end{align*}
where the second and third lines follow because $\norm*{ \hat{\bm{\ell}}_g - \bm{v}^{\top}\tilde{\bm{\ell}}_g }_2 = O_P(n^{\epsilon})$ by Corollary~\ref{supp:corollary:LtL} and $\max_{g \in [p]}\norm*{ (n^{-1}\bm{X}^{\top}\bm{W}_g \bm{X})^{-1} - (n^{-1}\bm{X}^{\top}\bm{X})^{-1} }_2 = O_P(n^{-1/2+\epsilon})$ for any $\epsilon > 0$. Next, 
\begin{align*}
    (\lambda p)^{-1}\sum_{g=1}^p \tilde{\bm{\ell}}_g\bm{e}_g^{\top}\bm{W}_g (n^{-1/2}\bm{X}) =& (\lambda p)^{-1}\sum_{g=1}^p \tilde{\bm{\ell}}_g\bm{e}_g^{\top} (n^{-1/2}\bm{X})\\
    &+ (\lambda p)^{-1}\sum_{g=1}^p \tilde{\bm{\ell}}_g\bm{e}_g^{\top}(\bm{W}_g-I_n) (n^{-1/2}\bm{X})\\
    \V\{ (\lambda p)^{-1/2}\sum_{g=1}^p \tilde{\bm{\ell}}_g\bm{e}_g^{\top} (n^{-1/2}\bm{X}_{\bigcdot j})  \} =& (\lambda p)^{-1}\sum_{g=1}^p \tilde{\bm{\ell}}_g\tilde{\bm{\ell}}_g^{\top}\{ n^{-1}\bm{X}_{\bigcdot j}^{\top} \V(\bm{e}_g) \bm{X}_{\bigcdot j}\} \preceq c I_K\\
    \V\{ (\lambda p)^{-1/2}\sum_{g=1}^p \tilde{\bm{\ell}}_g\bm{e}_g^{\top}(\bm{W}_g - I_n) (n^{-1/2}\bm{X}_{\bigcdot j})  \} =& (\lambda p)^{-1}\sum_{g=1}^p \tilde{\bm{\ell}}_g\tilde{\bm{\ell}}_g^{\top}[ n^{-1}\sum_{i=1}^n \E\{ (w_{gi}- 1)^2 \bm{e}_{gi}^2 \bm{X}_{i j}^2 \}]\\
    \preceq & cI_K
\end{align*}
for some constant $c>0$ and all $j \in [d]$, which implies
\begin{align*}
    \norm*{ (\lambda p)^{-1}\sum_{g=1}^p \hat{\bm{\ell}}_g(n^{-1/2}\bm{e}_g^{\top}\bm{W}_g \bm{X})(n^{-1}\bm{X}^{\top}\bm{W}_g \bm{X})^{-1} }_2 = o_P(n^{-1/2}).
\end{align*}
For the third term, we first see that
\begin{align*}
    &\lambda^{-1}\hat{\bm{\ell}}_g\tilde{\bm{\ell}}_g^{\top}\tilde{\bm{C}}^{\top} (\bm{W}_g-I_n)(n^{-1/2}\bm{X}) = \lambda^{-1}\hat{\bm{\ell}}_g(n^{1/2}\bm{\ell}_g)^{\top} \{ n^{-1}\bm{C}^{\top}(\bm{W}_g-I_n)\bm{X} \}\\
    &- \lambda^{-1}\hat{\bm{\ell}}_g(n^{1/2}\bm{\ell}_g)^{\top} (n^{-1}\bm{C}^{\top}\bm{X})(n^{-1}\bm{X}^{\top}\bm{X})^{-1} \{ n^{-1}\bm{X}^{\top}(\bm{W}_g-I_n)\bm{X} \},
\end{align*}
which implies
\begin{align*}
    \max_{g \in [p]}\norm*{ \lambda^{-1}\hat{\bm{\ell}}_g\tilde{\bm{\ell}}_g^{\top}\tilde{\bm{C}}^{\top} (\bm{W}_g-I_n)(n^{-1/2}\bm{X}) }_2 = O_P(n^{-1/2+\epsilon})
\end{align*}
for any $\epsilon>0$. Consequently, for $\bm{R} = (n^{-1}\bm{C}^{\top}P_{\bm{X}}^{\perp}\bm{C})^{1/2}$ and $j \in [d]$,
\begin{align*}
    &\norm*{ (\lambda p)^{-1}\sum_{g=1}^p \hat{\bm{\ell}}_g\tilde{\bm{\ell}}_g^{\top} \{\tilde{\bm{C}}^{\top} (\bm{W}_g-I_n)(n^{-1/2}\bm{X})\}(n^{-1}\bm{X}^{\top}\bm{W}_g \bm{X})^{-1} }_2\\
    \leq & \underbrace{\norm*{\bm{R}}_2}_{=O_P(1)}\underbrace{\norm*{ (n^{-1}\bm{X}^{\top}\bm{X})^{-1} }_2}_{=O(1)}\norm*{ (\lambda p)^{-1}\sum_{g=1}^p (n\bm{\ell}_g\bm{\ell}_g^{\top}) \{n^{-1}\bm{C}^{\top}P_{\bm{X}}^{\perp} (\bm{W}_g-I_n)\bm{X}\} }_2 + o_P(n^{-1/2})\\
    &(\lambda p)^{-1}\sum_{g=1}^p (n\bm{\ell}_g\bm{\ell}_g^{\top}) \{n^{-1}\bm{C}^{\top}P_{\bm{X}}^{\perp} (\bm{W}_g-I_n)\bm{X}_{\bigcdot j}\} = (\lambda p)^{-1}\sum_{g=1}^p (n\bm{\ell}_g\bm{\ell}_g^{\top}) \{n^{-1}\bm{C}^{\top} (\bm{W}_g-I_n)\bm{X}_{\bigcdot j}\}\\
    &- (\lambda p)^{-1}\sum_{g=1}^p (n\bm{\ell}_g\bm{\ell}_g^{\top}) (n^{-1}\bm{C}^{\top}\bm{X})(n^{-1}\bm{X}^{\top}\bm{X})\{ n^{-1}\bm{X}^{\top}(\bm{W}_g-I_n)\bm{X}_{\bigcdot j}\}.
\end{align*}
Define $\bm{S}_g = n\lambda^{-1}\bm{\ell}_g\bm{\ell}_g^{\top}$, which has uniformly bounded entries. Then
\begin{align*}
    &(\lambda p)^{-1}\sum_{g=1}^p (n\bm{\ell}_g\bm{\ell}_g^{\top}) \{n^{-1}\bm{C}^{\top} (\bm{W}_g-I_n)\bm{X}_{\bigcdot j}\} = (np)^{-1}\sum_{i=1}^n \sum_{g=1}^p (w_{gi} - 1)\bm{S}_g \bm{C}_{i \bigcdot}\bm{X}_{ij}\\
    &\V\{ (np)^{-1/2}\sum_{i=1}^n \sum_{g=1}^p (w_{gi} - 1)\bm{S}_g \bm{C}_{i \bigcdot}\bm{X}_{ij} \} = p^{-1} \sum_{i=1}^n \underbrace{[n^{-1}\sum_{g=1}^p \bm{X}_{ij}^2\bm{S}_g \E\{ (w_{gi} - 1)^2\bm{C}_{i \bigcdot}\bm{C}_{i \bigcdot}^{\top} \}\bm{S}_g]}_{\preceq c I_K}
\end{align*}
for some constant $c>0$, which implies
\begin{align*}
    \norm*{(\lambda p)^{-1}\sum_{g=1}^p (n\bm{\ell}_g\bm{\ell}_g^{\top}) \{n^{-1}\bm{C}^{\top} (\bm{W}_g-I_n)\bm{X}_{\bigcdot j}\}}_2 = o_P(n^{-1/2}).
\end{align*}
As an identical analysis can be used to show that 
\begin{align*}
    \norm*{ (\lambda p)^{-1}\sum_{g=1}^p (n\bm{\ell}_g\bm{\ell}_g^{\top}) (n^{-1}\bm{C}^{\top}\bm{X})(n^{-1}\bm{X}^{\top}\bm{X})\{ n^{-1}\bm{X}^{\top}(\bm{W}_g-I_n)\bm{X}_{\bigcdot j}\} }_2 = o_P(n^{-1/2}),
\end{align*}
the third term in \eqref{supp:equation:OmegaHatExpand} satisfies
\begin{align*}
    \norm*{ (\lambda p)^{-1}\sum_{g=1}^p \hat{\bm{\ell}}_g\tilde{\bm{\ell}}_g^{\top} \{\tilde{\bm{C}}^{\top} (\bm{W}_g-I_n)(n^{-1/2}\bm{X})\}(n^{-1}\bm{X}^{\top}\bm{W}_g \bm{X})^{-1} }_2 = o_P(n^{-1/2}).
\end{align*}
For the first term in \eqref{supp:equation:OmegaHatExpand}, we note that $\max_{g \in [p]}\norm*{ \hat{\bm{\ell}}_g }_2 \leq \lambda^{1/2} c\{1+o_P(1)\}$ for some constant $c>0$. Therefore, for some constant $c>0$,
\begin{align*}
    \norm*{ n^{1/2}(\lambda p)^{-1}\sum_{g=1}^p \hat{\bm{\ell}}_g \bm{\beta}_{gj} }_2 \leq c\{1+o_P(1)\}(n/\lambda)^{1/2} \{p^{-1}\sum_{g=1}^p I(\bm{\beta}_{gj} \neq 0)\} = o_P(n^{1/2}), \quad j \in [d_1]
\end{align*}
by Assumption~\ref{supp:assumptions:FA}, which completes the proof.
\end{proof}

\begin{corollary}
\label{supp:corollary:OmegaInference}
In addition to the assumptions of Theorem~\ref{supp:theorem:Omega}, suppose $\E(\bm{C}_{i \bigcdot}) = \sum_{j=1}^d \bm{X}_{ij}\bm{\omega}_j$ for $\bm{\omega}_j \in \mathbb{R}^K$. Then for a fixed $j \in [d_1]$ and $Z \sim \chi_K^2$, $[\{(\bm{X}^{\top}\bm{X})^{-1}\}_{jj}]^{-1}\hat{\bm{\Omega}}_{\bigcdot j}^{\top}\hat{\bm{\Omega}}_{\bigcdot j} \edist Z + o_P(1)$ if $\bm{\omega}_j=\bm{0}$.
\end{corollary}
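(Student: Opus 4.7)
The plan is to combine Theorem~\ref{supp:theorem:Omega}, which approximates $\hat{\bm{\Omega}}_{\bigcdot j}$ by a unitary rotation of an ideal population target $\bm{\Omega}_{\bigcdot j}$, with a direct central limit theorem argument on $\bm{\Omega}_{\bigcdot j}$ itself. First I would write $\bm{C} = \bm{X}\bm{A} + \tilde{\bm{\Delta}}$, where $\bm{A}^{\top} = (\bm{\omega}_1,\ldots,\bm{\omega}_d)$ and $\tilde{\bm{\Delta}} = \bm{G}^{\top}\bm{\gamma}^{(c)} + \bm{\Delta}^{(c)} - \E\{\bm{\Delta}^{(c)}\}$; by Assumption~\ref{supp:assumptions:FA}\ref{supp:assumption:c} together with independence of $\bm{G}$ and $\bm{\Delta}^{(c)}$, the rows of $\tilde{\bm{\Delta}}$ are i.i.d., mean zero, with finite moments and nonsingular covariance $\bm{\Sigma}$. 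Setting $\sigma_j^2 = \{(\bm{X}^{\top}\bm{X})^{-1}\}_{jj}$, $\bm{v}_j = \bm{X}(\bm{X}^{\top}\bm{X})^{-1}\bm{e}_j$ (so that $\norm{\bm{v}_j}_2 = \sigma_j$ and $\bm{X}^{\top}\bm{v}_j = \bm{e}_j$), and $\bm{u}_j = \bm{v}_j/\sigma_j$, the null $\bm{\omega}_j=0$ gives $\bm{A}^{\top}\bm{e}_j = 0$ and hence
\begin{align*}
\bm{\Omega}_{\bigcdot j} = \bm{R}^{-1/2}\bm{C}^{\top}\bm{v}_j = \sigma_j\,\bm{R}^{-1/2}\tilde{\bm{\Delta}}^{\top}\bm{u}_j, \qquad \bm{R} = n^{-1}\bm{C}^{\top}P_{\bm{X}}^{\perp}\bm{C} = n^{-1}\tilde{\bm{\Delta}}^{\top}P_{\bm{X}}^{\perp}\tilde{\bm{\Delta}}.
\end{align*}

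Second, I would establish the $\chi_K^2$ limit for $\sigma_j^{-2}\bm{\Omega}_{\bigcdot j}^{\top}\bm{\Omega}_{\bigcdot j} = \bm{u}_j^{\top}\tilde{\bm{\Delta}}\bm{R}^{-1}\tilde{\bm{\Delta}}^{\top}\bm{u}_j$. A strong law applied to the i.i.d.\ rows of $\tilde{\bm{\Delta}}$ gives $n^{-1}\tilde{\bm{\Delta}}^{\top}\tilde{\bm{\Delta}} \to \bm{\Sigma}$ in probability, and since $P_{\bm{X}}$ has fixed rank $d$, $n^{-1}\tilde{\bm{\Delta}}^{\top}P_{\bm{X}}\tilde{\bm{\Delta}} = o_P(1)$, whence $\bm{R} \to \bm{\Sigma}$ in probability. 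For the weighted sum $\tilde{\bm{\Delta}}^{\top}\bm{u}_j = \sum_{i=1}^n u_{j,i}\tilde{\bm{\Delta}}_{i\bigcdot}$, the weights satisfy $\sum_i u_{j,i}^2 = 1$ and $\max_i |u_{j,i}| = O(n^{-1/2})$ (combining $|\bm{X}_{ij}|\leq a$ with $n^{-1}\bm{X}^{\top}\bm{X}\succeq \epsilon I_d$), so the multivariate Lindeberg CLT yields $\tilde{\bm{\Delta}}^{\top}\bm{u}_j \tdist N(0,\bm{\Sigma})$. Slutsky's theorem then gives $\bm{R}^{-1/2}\tilde{\bm{\Delta}}^{\top}\bm{u}_j \tdist N(0,I_K)$, and hence $\sigma_j^{-2}\bm{\Omega}_{\bigcdot j}^{\top}\bm{\Omega}_{\bigcdot j}\tdist \chi_K^2$.

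Third, I would transfer the limit to $\hat{\bm{\Omega}}_{\bigcdot j}$. Theorem~\ref{supp:theorem:Omega} yields $\hat{\bm{\Omega}}_{\bigcdot j} = \bm{v}^{\top}\bm{\Omega}_{\bigcdot j} + \bm{r}$ with $\bm{v}$ unitary and $\norm{\bm{r}}_2 = o_P(n^{-1/2})$. Since the display above forces $\bm{\Omega}_{\bigcdot j} = O_P(n^{-1/2})$, unitary invariance of the Euclidean norm gives
\begin{align*}
\hat{\bm{\Omega}}_{\bigcdot j}^{\top}\hat{\bm{\Omega}}_{\bigcdot j} = \bm{\Omega}_{\bigcdot j}^{\top}\bm{\Omega}_{\bigcdot j} + 2\bm{r}^{\top}\bm{v}^{\top}\bm{\Omega}_{\bigcdot j} + \norm{\bm{r}}_2^2 = \bm{\Omega}_{\bigcdot j}^{\top}\bm{\Omega}_{\bigcdot j} + o_P(n^{-1}).
\end{align*}
Multiplying by $\sigma_j^{-2} = O(n)$ absorbs the remainder into an $o_P(1)$ term and delivers the stated conclusion. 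The main obstacle is verifying the Lindeberg condition and the consistency $\bm{R}\to\bm{\Sigma}$ under the mixed i.i.d.-plus-low-rank-genetic structure of $\tilde{\bm{\Delta}}$, but both follow routinely from the uniform moment bounds and independence hypotheses already enforced by Assumption~\ref{supp:assumptions:FA}.
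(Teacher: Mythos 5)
Your proof is correct and follows the same approach as the paper. The paper's argument simply cites Theorem~\ref{supp:theorem:Omega} together with the proof of Theorem 3 in \citet{BCconf} and omits the details; what you have written out — the decomposition $\bm{C}=\bm{X}\bm{A}+\tilde{\bm{\Delta}}$ with i.i.d.\ rows, the WLLN for $\bm{R}$, the Lindeberg CLT for $\tilde{\bm{\Delta}}^{\top}\bm{u}_j$, and the transfer from $\bm{\Omega}_{\bigcdot j}$ to $\hat{\bm{\Omega}}_{\bigcdot j}$ via the unitary invariance of the norm and the $o_P(n^{-1/2})$ bound — is precisely that deferred argument made explicit.
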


\begin{proof}
This follows directly from Theorem~\ref{supp:theorem:Omega} and the proof of Theorem 3 in \citet{BCconf}. The details have been omitted.
\end{proof}

\subsection{Estimating coefficients in differential abundance analyses}
\label{supp:subsection:DiffAbund}
For notational convenience, we let $\hat{\bm{C}}_{\perp} = P_{\bm{X}}^{\perp}\hat{\bm{C}}$ be the estimator obtained from \eqref{supp:equation:fMax} for the remainder of the supplement. Note that by construction, $\hat{\bm{C}}_{\perp}^{\top}\hat{\bm{C}}_{\perp} = I_K$.

\begin{lemma}
\label{supp:lemma:IPWest}
Let $\hat{\bm{C}} = n^{1/2}\hat{\bm{C}}_{\perp} + P_{\bm{X}_2}^{\perp}\bm{X}_1 \hat{\bm{\Omega}}_1^{\top}$ and $\hat{\bm{Z}} = [P_{\bm{X}_2}^{\perp}\bm{X}_1, \hat{\bm{C}}, \bm{X}_2]$ for $\bm{X}_j \in \mathbb{R}^{n \times d_j}$, $j=1,2$, given in Assumption~\ref{supp:assumptions:FA} and $\hat{\bm{\Omega}}_1 \in \mathbb{R}^{K \times d_1}$ the first $d_1$ columns $\hat{\bm{\Omega}}$ defined in the statement of Theorem~\ref{supp:theorem:Omega}. Define the inverse probability weighted (IPW) estimator
\begin{align*}
    &\hat{\bm{\theta}}_g^{\ipw} = (\hat{\bm{Z}}^{\top}\bm{W}_g\hat{\bm{Z}})^{-1}\hat{\bm{Z}}^{\top}\bm{W}_g \bm{y}_g
\end{align*}
and the parameter vector
\begin{align*}
    \bm{\theta}_g^* = (\bm{\beta}_{g1}^{\top}, \{ \bm{v}^{\top}(n^{-1}\bm{C}^{\top}P_{\bm{X}}^{\perp}\bm{C})^{1/2}\bm{\ell}_g \}^{\top}, \{\bm{\beta}_{g2} + (\bm{X}_2^{\top}\bm{X}_2)^{-1}\bm{X}_2^{\top}(\bm{X}_1\bm{\beta}_{g1} + \bm{C}\bm{\ell}_g)\}^{\top})^{\top} \in \mathbb{R}^{d+K},
\end{align*}
where $\bm{\beta}_{g1} \in \mathbb{R}^{d_1}$ and $\bm{\beta}_{g2} \in \mathbb{R}^{d_2}$ are the first $d_1$ and last $d_2$ elements of $\bm{\beta}_g \in \mathbb{R}^{d_1+d_2}$. Then under the assumptions of Theorem~\ref{supp:theorem:Omega}, $\norm*{\hat{\bm{\theta}}_g^{\ipw} - \bm{\theta}_g^*}_2 = O_P(n^{-1/2})$.
\end{lemma}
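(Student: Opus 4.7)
My plan is to view $\hat{\bm{\theta}}_g^{\ipw}$ as the minimizer of a weighted least squares problem and carry out a standard plug-in/normal-equations analysis. Writing
\[
\hat{\bm{\theta}}_g^{\ipw} - \bm{\theta}_g^* = (\hat{\bm{Z}}^\top \bm{W}_g \hat{\bm{Z}})^{-1} \hat{\bm{Z}}^\top \bm{W}_g (\bm{y}_g - \hat{\bm{Z}}\bm{\theta}_g^*),
\]
it suffices to show (i) $n^{-1}\hat{\bm{Z}}^\top \bm{W}_g \hat{\bm{Z}}$ has smallest eigenvalue bounded below w.p.\ $\to 1$, and (ii) $n^{-1/2}\hat{\bm{Z}}^\top \bm{W}_g (\bm{y}_g - \hat{\bm{Z}}\bm{\theta}_g^*) = O_P(1)$. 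For (i), I exploit the block structure of $\hat{\bm{Z}}$: by construction $\bm{X}^\top \hat{\bm{C}}_\perp = 0$ and $(P_{\bm{X}_2}^\perp \bm{X}_1)^\top \bm{X}_2 = 0$, so $\bm{X}_2$ is orthogonal to the other two blocks, and $n^{-1}\hat{\bm{Z}}^\top \bm{W}_g \hat{\bm{Z}}$ is approximately block-diagonal up to $O_P(n^{-1/2+\epsilon})$ perturbations (via Corollary~\ref{supp:corollary:MaximalIne}). Corollary~\ref{supp:corollary:CtC} plus Theorem~\ref{supp:theorem:ChatProp} control the $(d_1+K)$-block, and Assumption~\ref{supp:assumptions:FA} gives the $\bm{X}_2$-block.

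The main work is (ii). Starting from $\bm{y}_g = \bm{X}_1\bm{\beta}_{g1} + \bm{X}_2\bm{\beta}_{g2} + \bm{C}\bm{\ell}_g + \bm{e}_g$, decomposing $\bm{X}_1 = P_{\bm{X}_2}\bm{X}_1 + P_{\bm{X}_2}^\perp \bm{X}_1$, and using $P_{\bm{X}_2}^\perp \bm{C}\bm{\ell}_g = P_{\bm{X}}^\perp \bm{C}\bm{\ell}_g + P_{\bm{X}_2}^\perp \bm{X}_1 \bm{\Omega}_1^\top \bm{R}\bm{\ell}_g$ where $\bm{R} = (n^{-1}\bm{C}^\top P_{\bm{X}}^\perp \bm{C})^{1/2}$, together with $P_{\bm{X}}^\perp \bm{C} = n^{1/2}\tilde{\bm{C}}\bm{R}$, $\hat{\bm{C}} = n^{1/2}\hat{\bm{C}}_\perp + P_{\bm{X}_2}^\perp \bm{X}_1 \hat{\bm{\Omega}}_1^\top$, and the definition of $\bm{\theta}_g^*$, I obtain the clean identity
\[
\bm{y}_g - \hat{\bm{Z}}\bm{\theta}_g^* = \bm{e}_g + n^{1/2}(\tilde{\bm{C}} - \hat{\bm{C}}_\perp \bm{v}^\top)\bm{R}\bm{\ell}_g + P_{\bm{X}_2}^\perp \bm{X}_1(\bm{\Omega}_1^\top - \hat{\bm{\Omega}}_1^\top \bm{v}^\top)\bm{R}\bm{\ell}_g.
\]
For the $\bm{e}_g$ term, splitting $\bm{W}_g = I_n + (\bm{W}_g - I_n)$ and combining Lemma~\ref{supp:lemma:Cte} (extended via Corollary~\ref{supp:corollary:InfNormC} to replace $\bm{C}$ by $\hat{\bm{C}}$) with moment calculations analogous to Lemma~\ref{supp:lemma:s11} gives $n^{-1/2}\hat{\bm{Z}}^\top \bm{W}_g \bm{e}_g = O_P(1)$. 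For the $\hat{\bm{\Omega}}_1$-error, Theorem~\ref{supp:theorem:Omega} gives $\norm*{\hat{\bm{\Omega}}_1^\top \bm{v}^\top - \bm{\Omega}_1^\top}_F = o_P(n^{-1/2})$, and combined with $\norm*{P_{\bm{X}_2}^\perp \bm{X}_1}_2 = O(n^{1/2})$ and $\norm*{\bm{R}\bm{\ell}_g}_2 = O(\lambda^{1/2}n^{-1/2})$, the resulting error vector has norm $o_P(\lambda^{1/2}n^{-1/2})$, which is absorbed after applying $n^{-1/2}\hat{\bm{Z}}^\top \bm{W}_g$.

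The hardest part is the middle residual $n^{1/2}(\tilde{\bm{C}} - \hat{\bm{C}}_\perp \bm{v}^\top)\bm{R}\bm{\ell}_g$, where a crude operator-norm bound loses too much. Using Theorem~\ref{supp:theorem:ChatProp} I would split
\[
\tilde{\bm{C}} - \hat{\bm{C}}_\perp \bm{v}^\top = -\tilde{\bm{C}}(\hat{\bm{v}} - \bm{v})\bm{v}^\top - \bm{Q}\hat{\bm{z}}\bm{v}^\top.
\]
The first piece has Frobenius norm $O_P(\lambda^{-1+\epsilon})$, so its contribution has Euclidean norm $O_P(\lambda^{-1/2+\epsilon})$ and is controlled after multiplication by $\hat{\bm{Z}}^\top \bm{W}_g$ using $\norm*{\bm{W}_g}_2 = O_P(n^{\epsilon})$ from Lemma~\ref{supp:lemma:LatalaW}. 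For the second piece, I exploit that $\bm{Q}^\top \bm{X} = 0$ forces $\bm{Q}^\top(P_{\bm{X}_2}^\perp \bm{X}_1) = 0$ and $\bm{Q}^\top \bm{X}_2 = 0$, so only the $\hat{\bm{C}}$-block of $\hat{\bm{Z}}^\top \bm{W}_g \bm{Q}\hat{\bm{z}}\bm{v}^\top$ can be nonzero; moreover $\hat{\bm{C}}_\perp^\top \bm{Q} = \hat{\bm{z}}^\top$, so $\hat{\bm{C}}^\top \bm{Q} = n^{1/2}\hat{\bm{z}}^\top$ and the $I_n$-part of $\bm{W}_g$ produces $n\hat{\bm{z}}^\top \hat{\bm{z}}\bm{v}^\top \bm{R}\bm{\ell}_g$ of norm $O_P(n^{1/2}\lambda^{-1/2+\epsilon})$ via $\norm*{\hat{\bm{z}}}_2 = O_P(\lambda^{-1/2+\epsilon})$. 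The $(\bm{W}_g - I_n)$-part is the crux: I would substitute the score-form approximation $\hat{\bm{z}} \approx p^{-1}\sum_h \bm{Q}^\top \bm{P}_h^\perp \bm{e}_h (n^{1/2}\bm{\ell}_h)^\top \bm{\Lambda}^{-1}\bm{v}$ from Theorem~\ref{supp:theorem:ChatProp} and use second-moment bounds (rather than operator-norm bounds) on sums of the form $\sum_i \hat{\bm{Z}}_{ij}(w_{gi}-1)[\bm{Q}\bm{Q}^\top \bm{s}]_i$, closely mirroring Lemma~\ref{supp:lemma:s11}, to obtain $O_P(1)$ without paying an extra $n^{\epsilon}$ factor. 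The principal obstacle is precisely this last calculation: the projection structure $\bm{Q}\bm{Q}^\top = P_{[\bm{X},\tilde{\bm{C}}]}^\perp$ is what allows the required cancellations.
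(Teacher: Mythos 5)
Your proposal is correct and follows essentially the same route as the paper: normal-equations decomposition of $\hat{\bm{\theta}}_g^{\ipw} - \bm{\theta}_g^*$, control of the Gram matrix, a moment bound on the $\bm{e}_g$-score using Lemma~\ref{supp:lemma:Cte}, and a reduction of the estimation-error term to $\hat{\bm{z}}^{\top}\bm{Q}^{\top}\bm{W}_g$ handled via Theorem~\ref{supp:theorem:ChatProp}, Theorem~\ref{supp:theorem:Omega}, and the same second-moment machinery as Corollaries~\ref{supp:corollary:CtC}--\ref{supp:corollary:LtL}. One small imprecision: the clause ``only the $\hat{\bm{C}}$-block of $\hat{\bm{Z}}^{\top}\bm{W}_g\bm{Q}\hat{\bm{z}}\bm{v}^{\top}$ can be nonzero'' holds only for the $I_n$-part of $\bm{W}_g$ (since $\bm{X}^{\top}(\bm{W}_g - I_n)\bm{Q}\ne 0$ in general), but you correct this implicitly by treating the $(\bm{W}_g-I_n)$-part over all blocks in the final step.
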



\begin{proof}
Note that $\E(\bm{y}_g) = \bm{X}_1\bm{\beta}_{g1} + \bm{C}\bm{\ell}_g + \bm{X}_2\bm{\beta}_{g2} = \bm{Z}\bm{\theta}_{g}^*$ for $\bm{Z} = [\tilde{\bm{X}}_1, \tilde{\bm{C}}\bm{v} + \tilde{\bm{X}}_1\bm{\Omega}_1^{\top}\bm{v}, \bm{X_2}]$, $\tilde{\bm{X}}_1=P_{\bm{X}_2}^{\perp}\bm{X}_1$, and $\bm{\theta}_{g}^*$ as defined in the statement of Lemma~\ref{supp:lemma:IPWest}. Therefore,
\begin{align*}
    \hat{\bm{\theta}}_g^{\ipw} - \bm{\theta}_{g}^* =& (n^{-1}\hat{\bm{Z}}^{\top}\bm{W}_g\hat{\bm{Z}})^{-1}\bm{\delta}^{\top}\bm{W}_g(n^{-1/2}\bm{Z})\bm{\theta}_{g}^* + (n^{-1}\hat{\bm{Z}}^{\top}\bm{W}_g\hat{\bm{Z}})^{-1}(n^{-1}\bm{Z}^{\top}\bm{W}_g \bm{e}_g)\\
    &+   (n^{-1}\hat{\bm{Z}}^{\top}\bm{W}_g\hat{\bm{Z}})^{-1}(n^{-1/2}\bm{\delta}^{\top}\bm{W}_g \bm{e}_g)\\
    \bm{\delta} =& [\bm{0}_{n \times d_1}, \tilde{\bm{C}}(\hat{\bm{v}} - \bm{v}) + \bm{Q}\hat{\bm{z}} + (n^{-1/2}\tilde{\bm{X}}_1)(\hat{\bm{\Omega}}_1^{\top} - \bm{\Omega}_1^{\top}\bm{v}), \bm{0}_{n \times d_2}].
\end{align*}
First, identical techniques used to prove Corollary~\ref{supp:corollary:CtC} can be used to show $\norm*{ n^{-1}\hat{\bm{Z}}^{\top}\bm{W}_g\hat{\bm{Z}} }_2 = O_P(1)$. Next,
\begin{align*}
    \norm*{ n^{-1}\bm{Z}^{\top}\bm{W}_g \bm{e}_g }_2 = O_P(\norm*{ n^{-1}\bm{X}^{\top} \bm{e}_g }_2) + O_P(\norm*{ n^{-1}\bm{C}^{\top} \bm{e}_g }_2) + O_P(\norm*{ n^{-1}\bm{Z}^{\top}(\bm{W}_g-I_n) \bm{e}_g }_2),
\end{align*}
where the first term is trivially $O_P(n^{-1/2})$. For the second,
\begin{align*}
    \norm*{ n^{-1}\bm{C}^{\top} \bm{e}_g - (n^{-1}\bm{G}_{s_g *}^{\top} \bm{G}_{s_g *})\bm{\gamma}^{(e)}_{s_g g}\bm{\gamma}^{(c)}_{s_g *} + \bm{\gamma}^{(e)}_{s_g g}  }_2 = O_P(n^{-1/2}),
\end{align*}
where $\norm*{ \bm{\gamma}^{(e)}_{s_g g}\bm{\gamma}^{(c)}_{s_g *} }_2 = o_P(n^{-1/2})$ by Assumption~\ref{supp:assumptions:FA}. Therefore, $\norm*{ n^{-1}\bm{C}^{\top} \bm{e}_g }_2=O_P(n^{-1/2})$. For the third term, there exists an $\bm{R}$ such that $\bm{Z}=[\bm{X},\bm{C}]\bm{R}$ and $\norm*{\bm{R}}_2=O_P(1)$, meaning
\begin{align*}
    \norm*{ n^{-1}\bm{Z}^{\top}(\bm{W}_g-I_n) \bm{e}_g }_2 \leq O_P\{ \norm*{ n^{-1}[\bm{X},\bm{C}]^{\top}(\bm{W}_g-I_n) \bm{e}_g }_2 \},
\end{align*}
where for $\bar{\bm{c}}_{i} = [\bm{X},\bm{C}]_{i \bigcdot}$,
\begin{align*}
    \V\{ n^{-1/2}[\bm{X},\bm{C}]^{\top}(\bm{W}_g-I_n) \bm{e}_g \} = n^{-1}\sum_{i=1}^n \E\{(w_{gi}-1)^2 \bm{e}_{gi}^2 \bar{\bm{c}}_{i}\bar{\bm{c}}_{i}^{\top} \} \preceq c I_{d+K}
\end{align*}
for some constant $c>0$. This proves $\norm*{ n^{-1}\bm{Z}^{\top}\bm{W}_g \bm{e}_g }_2 = O_P(n^{-1/2})$. We next see that by Theorems~\ref{supp:theorem:ChatProp} and \ref{supp:theorem:Omega},
\begin{align*}
    \norm*{ \bm{\delta}^{\top}\bm{W}_g(n^{-1/2}\bm{Z}) }_2 = \norm*{ \hat{\bm{z}}^{\top}\bm{Q}^{\top}\bm{W}_g(n^{-1/2}\bm{Z}) }_2 + o_P(n^{-1/2}),
\end{align*}
where identical techniques used to prove Corollary~\ref{supp:corollary:CtC} can be used to show $\norm*{ \hat{\bm{z}}^{\top}\bm{Q}^{\top}\bm{W}_g(n^{-1/2}\bm{Z}) }_2 = o_P(n^{-1/2})$. A second application of Theorems~\ref{supp:theorem:ChatProp} and \ref{supp:theorem:Omega} imply
\begin{align*}
     \norm*{ n^{-1/2}\bm{\delta}^{\top}\bm{W}_g\bm{e}_g }_2 = \norm*{ n^{-1/2}\hat{\bm{z}}^{\top}\bm{Q}^{\top}\bm{W}_g\bm{e}_g }_2 + o_P(n^{-1/2}),
\end{align*}
where techniques used to prove Corollary~\ref{supp:corollary:LtL} can be used to show $\norm*{ n^{-1/2}\hat{\bm{z}}^{\top}\bm{Q}^{\top}\bm{W}_g\bm{e}_g }_2 = o_P(n^{-1/2})$.
\end{proof}

\begin{lemma}
\label{supp:lemma:IPWestVar}
Fix a $g \in [p]$ and suppose the assumptions of Lemma~\ref{supp:lemma:IPWest} hold, let $\hat{\bm{\theta}}_g^{\IPW}$ and $\hat{\bm{Z}}$ be as defined in the statement of Lemma~\ref{supp:lemma:IPWest}, and let $\{ \hat{\sigma}_g^{\IPW} \}^2 = (\sum_{i=1}^n w_{gi})^{-1} \allowbreak \sum_{i=1}^n w_{gi} \allowbreak \{y_{gi} - \hat{\bm{Z}}_{i*}^{\top} \hat{\bm{\theta}}^{\IPW}\}^2$. Then $\abs*{\hat{\sigma}_g^{\IPW} - \sigma_g} = O_P(n^{-1/2})$.
\end{lemma}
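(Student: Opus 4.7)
The plan is to reduce the claim to showing $|\{\hat{\sigma}_g^{\IPW}\}^2 - \sigma_g^2| = O_P(n^{-1/2})$, which suffices because $\hat{\sigma}_g^{\IPW} + \sigma_g$ is bounded below by $\sigma_g > 0$ with probability tending to one. Let $\bm{\epsilon} = \bm{y}_g - \bm{Z}\bm{\theta}_g^*$ with $\bm{Z}$ and $\bm{\theta}_g^*$ as in Lemma~\ref{supp:lemma:IPWest}. Since $\bm{Z}\bm{\theta}_g^* = \bm{X}_1\bm{\beta}_{g1} + \bm{C}\bm{\ell}_g + \bm{X}_2\bm{\beta}_{g2}$, we have $\epsilon_i = e_{gi} = \gamma_{s_g g}^{(e)}G_{s_g i} + \Delta_{gi}^{(e)}$ with $\Delta_{gi}^{(e)} \sim N(0,\sigma_g^2)$ and $|\gamma_{s_g g}^{(e)}| = o(n^{-1/4})$ by Assumption~\ref{supp:assumptions:FA}. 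I would then decompose
\begin{align*}
\bm{y}_g - \hat{\bm{Z}}\hat{\bm{\theta}}^{\IPW} = \bm{\epsilon} - (\hat{\bm{Z}} - \bm{Z})\bm{\theta}_g^* - \hat{\bm{Z}}(\hat{\bm{\theta}}^{\IPW} - \bm{\theta}_g^*)
\end{align*}
and expand $\{\hat{\sigma}_g^{\IPW}\}^2 = (\sum_i w_{gi})^{-1}\lVert \bm{W}_g^{1/2}(\bm{y}_g - \hat{\bm{Z}}\hat{\bm{\theta}}^{\IPW})\rVert_2^2$ into its six quadratic and cross pieces.

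The first step is to establish the leading order. Conditional on $\{y_{gi}\}$, $w_{gi}$ has mean $1$ and, by Assumption~\ref{supp:assumption:Psi}, $\E(w_{gi}^m)$ is bounded for every integer $m$, so Chebyshev's inequality yields $n^{-1}\sum_i w_{gi} = 1 + O_P(n^{-1/2})$. Applying the same device to $w_{gi}(\Delta_{gi}^{(e)})^2$ gives $n^{-1}\sum_i w_{gi}(\Delta_{gi}^{(e)})^2 = \sigma_g^2 + O_P(n^{-1/2})$, and the contribution of $\gamma_{s_g g}^{(e)}G_{s_g i}$ to $\epsilon_i^2$ adds only $o_P(n^{-1/2})$ since $|\gamma_{s_g g}^{(e)}|^2 = o(n^{-1/2})$ and Assumption~\ref{supp:assumptions:FA} caps the number of SNPs affecting $e_{gi}$ by a constant. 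Dividing gives the leading piece $(\sum_i w_{gi})^{-1}\sum_i w_{gi}\epsilon_i^2 = \sigma_g^2 + O_P(n^{-1/2})$.

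The remaining five pieces must be shown to be $O_P(n^{-1/2})$. For the term involving only the coefficient error, I bound
\begin{align*}
(\textstyle\sum_i w_{gi})^{-1}\lVert \bm{W}_g^{1/2}\hat{\bm{Z}}(\hat{\bm{\theta}}^{\IPW} - \bm{\theta}_g^*)\rVert_2^2 \leq \lVert \hat{\bm{\theta}}^{\IPW} - \bm{\theta}_g^*\rVert_2^2\,\lVert n^{-1}\hat{\bm{Z}}^\top \bm{W}_g\hat{\bm{Z}}\rVert_2 \cdot O_P(1),
\end{align*}
where Lemma~\ref{supp:lemma:IPWest} supplies the first factor at rate $n^{-1}$, and operator-norm control of $n^{-1}\hat{\bm{Z}}^\top\bm{W}_g\hat{\bm{Z}}$ follows from Corollary~\ref{supp:corollary:CtC} applied blockwise together with maximal inequalities on $n^{-1}\bm{X}^\top\bm{W}_g\bm{X}$ from Corollary~\ref{supp:corollary:MaximalIne}. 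The cross term of $\bm{\epsilon}$ with this piece is handled by Cauchy--Schwarz once the vector $n^{-1}\sum_i w_{gi}e_{gi}\hat{\bm{Z}}_{i*}$ is shown to be $O_P(n^{-1/2})$, which follows from Lemma~\ref{supp:lemma:Powera} applied after conditioning on $\hat{\bm{Z}}$.

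The delicate piece is the term $(\sum_i w_{gi})^{-1}\lVert \bm{W}_g^{1/2}(\hat{\bm{Z}} - \bm{Z})\bm{\theta}_g^*\rVert_2^2$ and its cross term with $\bm{\epsilon}$. Since $\hat{\bm{Z}}$ and $\bm{Z}$ differ only in the latent-factor block, the difference $\hat{\bm{D}} = n^{1/2}\hat{\bm{C}}_\perp - \tilde{\bm{C}}\bm{v} + P_{\bm{X}_2}^\perp\bm{X}_1(\hat{\bm{\Omega}}_1 - \bm{\Omega}_1\bm{v})^\top$ decomposes, via the representation $\hat{\bm{C}}_\perp = \tilde{\bm{C}}\hat{\bm{v}} + \bm{Q}\hat{\bm{z}}$ from Theorem~\ref{supp:theorem:ChatProp}, into directions that Corollaries~\ref{supp:corollary:Pchat}, \ref{supp:corollary:CtC}, and \ref{supp:corollary:LtL} have already shown to be small. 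Moreover, the relevant block of $\bm{\theta}_g^*$ has squared norm $\bm{\ell}_g^\top(n^{-1}\bm{C}^\top P_{\bm{X}}^\perp\bm{C})\bm{\ell}_g = O(\lambda/n)$, which gains the extra factor of $n^{-1}$ needed to trade against the slower rates of $\hat{\bm{C}}_\perp$ and $\hat{\bm{\Omega}}_1$. The hard part will be propagating these Frobenius-norm bounds through the $\bm{W}_g$ weighting, because $w_{gi}$ is only controlled by $n^\epsilon$ uniformly; my plan is to split $\bm{W}_g = I_n + (\bm{W}_g - I_n)$, use the existing bounds on $\hat{\bm{D}}$ for the identity piece, and bound the fluctuation piece $\hat{\bm{D}}^\top(\bm{W}_g - I_n)\hat{\bm{D}}$ via Lemma~\ref{supp:lemma:Powera} and Corollary~\ref{supp:corollary:MaximalIne} applied to each of the low-rank components of $\hat{\bm{D}}$ separately. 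Once all six pieces are shown to be $O_P(n^{-1/2})$, the identity $n/\sum_i w_{gi} = 1 + O_P(n^{-1/2})$ closes the argument.
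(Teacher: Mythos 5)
Your overall decomposition matches the paper exactly: both write $\bm{y}_g - \hat{\bm{Z}}\hat{\bm{\theta}}_g^{\IPW} = \bm{e}_g + \bm{\delta}\bm{\theta}_g^* - \hat{\bm{Z}}\bm{\epsilon}_g$ with $\bm{\delta} = \bm{Z} - \hat{\bm{Z}}$, $\bm{\epsilon}_g = \hat{\bm{\theta}}_g^{\IPW} - \bm{\theta}_g^*$, and expand the weighted quadratic form term by term, using $n^{-1}\sum_i w_{gi} = 1 + O_P(n^{-1/2})$ to close. The leading piece and the pure coefficient-error term are handled the same way the paper does. However, there is a genuine gap in how you propose to handle the cross term $n^{-1}\bm{e}_g^\top\bm{W}_g\hat{\bm{Z}}\bm{\epsilon}_g$: you want to show $n^{-1}\hat{\bm{Z}}^\top\bm{W}_g\bm{e}_g = O_P(n^{-1/2})$ ``from Lemma~\ref{supp:lemma:Powera} applied after conditioning on $\hat{\bm{Z}}$.'' This fails because $\hat{\bm{Z}}$ contains $\hat{\bm{C}}_\perp$ and $\hat{\bm{\Omega}}_1$, both of which are functions of the full data $\{y_{gi}\}$, hence of $\bm{e}_g$ and $\bm{W}_g$. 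Conditional on $\sigma(\hat{\bm{Z}})$ the summands $w_{gi}e_{gi}\hat{\bm{Z}}_{i*}$ are neither mean zero nor independent, so the hypotheses of Lemma~\ref{supp:lemma:Powera} are not met and the concentration argument does not go through. The paper sidesteps this entirely with a cruder but valid bound: $\abs{n^{-1}\bm{e}_g^\top\bm{W}_g\hat{\bm{Z}}\bm{\epsilon}_g} \le \norm{n^{-1/2}\bm{W}_g\bm{e}_g}_2\,\norm{n^{-1/2}\hat{\bm{Z}}}_2\,\norm{\bm{\epsilon}_g}_2 = O_P(1)\cdot O_P(1)\cdot O_P(n^{-1/2})$. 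If you wanted to keep your sharper version, the correct route is to split $\hat{\bm{Z}} = \bm{Z} - \bm{\delta}$ and cite the displays already established inside the proof of Lemma~\ref{supp:lemma:IPWest} ($\norm{n^{-1}\bm{Z}^\top\bm{W}_g\bm{e}_g}_2 = O_P(n^{-1/2})$ and $\norm{n^{-1/2}\bm{\delta}^\top\bm{W}_g\bm{e}_g}_2 = o_P(n^{-1/2})$), not condition on $\hat{\bm{Z}}$.

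Two smaller points. First, the ``delicate piece'' $n^{-1}\{\bm{\theta}_g^*\}^\top\bm{\delta}^\top\bm{W}_g\bm{\delta}\bm{\theta}_g^*$ is actually not delicate: the paper disposes of it with $\norm{\bm{\theta}_g^*}_2^2\,\norm{n^{-1/2}\bm{\delta}}_2^2\,\norm{\bm{W}_g}_2 = O_P(1)\cdot O_P(\lambda^{-1+\epsilon/2})\cdot O_P(n^{\epsilon/2}) = O_P(\lambda^{-1+\epsilon}) = o_P(n^{-1/2})$, using only $\lambda \gtrsim n^{1/2+\epsilon'}$. Your proposed split $\bm{W}_g = I_n + (\bm{W}_g - I_n)$ plus $\mathrm{Powera}$/maximal inequalities on the low-rank factors would also work but is more machinery than is needed. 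Second, your remark that the $\bm{\ell}_g$ block of $\bm{\theta}_g^*$ contributes a ``factor of $n^{-1}$'' via $\bm{\ell}_g^\top(n^{-1}\bm{C}^\top P_{\bm{X}}^\perp\bm{C})\bm{\ell}_g = O(\lambda/n)$ is imprecise: under Assumption~\ref{supp:assumptions:FA}\ref{supp:assumption:lambda}, $\lambda$ may be as large as $n$, in which case $\lambda/n = O(1)$ and no savings accrue. The rate argument does not rely on this factor; it relies solely on the rate $\norm{n^{-1/2}\bm{\delta}}_2 = O_P(\lambda^{-1/2+\epsilon})$ from Theorems~\ref{supp:theorem:ChatProp} and \ref{supp:theorem:Omega}.
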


\begin{proof}
Since $\{w_{gi}\}_{i \in [n]}$ are independent with uniformly bounded $m$ moments for all non-negative $m$, $n^{-1} \sum_{i=1}^n w_{gi} = 1 + O_P(n^{-1/2})$. Next,
\begin{align*}
    n^{-1}\sum_{i=1}^n w_{gi}\{y_{gi} - \hat{\bm{Z}}_{i*}^{\top} \hat{\bm{\theta}}_g^{\IPW}\}^2 =& n^{-1}\{ \bm{e}_g + \bm{\delta}\bm{\theta}_g^* - \hat{\bm{Z}}\bm{\epsilon}_g \}^{\top}\bm{W}_g \{ \bm{e}_g + \bm{\delta}\bm{\theta}_g^* - \hat{\bm{Z}}\bm{\epsilon}_g\}\\
    =& n^{-1}\bm{e}_g^{\top}\bm{e}_g + n^{-1}\bm{e}_g^{\top}(\bm{W}_g-I_n) \bm{e}_g + 2n^{-1}\bm{e}_g^{\top}\bm{W}_g \bm{\delta}\bm{\theta}_g^*\\
    &- 2n^{-1} \bm{e}_g^{\top}\bm{W}_g \hat{\bm{Z}}\bm{\epsilon}_g + n^{-1}\{ \bm{\theta}_g^* \}^{\top}\bm{\delta}^{\top}\bm{W}_g \bm{\delta}\bm{\theta}_g^*\\
    &- 2n^{-1}\{ \bm{\theta}_g^* \}^{\top}\bm{\delta}^{\top}\bm{W}_g \hat{\bm{Z}}\bm{\epsilon}_g + n^{-1} \bm{\epsilon}_g^{\top} \hat{\bm{Z}}^{\top}\bm{W}_g \hat{\bm{Z}} \bm{\epsilon}_g
\end{align*}
for $\bm{\delta} = \bm{Z} - \hat{\bm{Z}}$ and $\bm{\epsilon}_g = \hat{\bm{\theta}}_g^{\IPW} - \bm{\theta}_g^*$. Note that $\norm*{ \bm{\epsilon}_g }_2 = O_P(n^{-1/2})$ by Lemma~\ref{supp:lemma:IPWest}. Going through each of the above seven terms, it is easy to see that for any $\epsilon>0$,
\begin{align*}
    &\abs*{n^{-1}\bm{e}_g^{\top}\bm{e}_g - \sigma_g^2} = O_P(n^{-1/2}), \quad \abs*{ n^{-1}\bm{e}_g^{\top}(\bm{W}_g-I_n) \bm{e}_g } = O_{P}(n^{-1/2})\\
    &\abs*{ n^{-1} \bm{e}_g^{\top}\bm{W}_g \hat{\bm{Z}}\bm{\epsilon}_g } \leq \underbrace{\norm*{ n^{-1/2}\bm{W}_g\bm{e}_g }_2}_{O_P(1)} \underbrace{\norm*{n^{-1/2} \hat{\bm{Z}}}_2}_{O_P(1)} \underbrace{\norm*{ \bm{\epsilon}_g }_2}_{O_P(n^{-1/2})} = O_P(n^{-1/2})\\
    &\abs*{ n^{-1}\{ \bm{\theta}_g^* \}^{\top}\bm{\delta}^{\top}\bm{W}_g \bm{\delta}\bm{\theta}_g^* } \leq \underbrace{\norm*{ \bm{\theta}_g^* }_2^2}_{O_P(1)} \underbrace{\norm*{ n^{-1/2} \bm{\delta} }_2^2}_{O_P(\lambda^{-1+\epsilon/2})} \underbrace{\norm*{ \bm{W}_g }_2}_{O_P(n^{\epsilon/2})} = O_P(\lambda^{-1+\epsilon}) = o_P(n^{-1/2})\\
    &\norm*{ n^{-1}\{ \bm{\theta}_g^* \}^{\top}\bm{\delta}^{\top}\bm{W}_g \hat{\bm{Z}}\bm{\epsilon}_g }_2 \leq \underbrace{\norm*{\bm{\theta}_g^*}_2}_{O_P(1)} \underbrace{\norm*{ \bm{\delta} }_2}_{O_P(\lambda^{-1/2+\epsilon})} \underbrace{\norm*{ n^{-1/2}\bm{W}_g }_2}_{O_P(1)} \underbrace{\norm*{ n^{-1/2} }_2}_{O_P(1)} \underbrace{\norm*{\bm{\epsilon}_g}_2}_{O_P(n^{-1/2})} = O_P(n^{-1/2})\\
    &\norm*{ n^{-1} \bm{\epsilon}_g^{\top} \hat{\bm{Z}}^{\top}\bm{W}_g \hat{\bm{Z}} \bm{\epsilon}_g }_2 \leq \underbrace{\norm*{\bm{\epsilon}_g}_2^2}_{O_P(n^{-1})}\underbrace{\norm*{n^{-1/2} \hat{\bm{Z}}}_2^2}_{O_P(1)} \underbrace{\norm*{\bm{W}_g}_2}_{O_P(n^{\epsilon})} = o_P(n^{-1/2}). 
\end{align*}
The proof will be complete if we can show $\norm*{ n^{-1}\bm{e}_g^{\top}\bm{W}_g \bm{\delta} }_2 = O_P(n^{-1/2})$, where
\begin{align*}
    \norm*{ n^{-1}\bm{e}_g^{\top}\bm{W}_g \bm{\delta} }_2 \leq \norm*{ n^{-1}\bm{e}_g^{\top}\bm{W}_g (P_{\bm{X}_2}^{\perp}\bm{X}_1) }_2 \underbrace{\norm*{ \hat{\bm{\Omega}}_1 }_2}_{O_P(1)} + \norm*{ n^{-1/2}\bm{e}_g^{\top}\bm{W}_g \hat{\bm{C}}_{\perp} }_2
\end{align*}
for $\hat{\bm{\Omega}}_1$ and $\hat{\bm{C}}_{\perp}$ as defined in the statement of Lemma~\ref{supp:lemma:IPWest}. It is easy to see $\norm*{ n^{-1}\bm{e}_g^{\top}\bm{W}_g (P_{\bm{X}_2}^{\perp}\bm{X}_1) }_2 \allowbreak = O_P(n^{-1/2})$. And since we showed $n^{-1/2}\bm{e}_g^{\top}\bm{W}_g \hat{\bm{C}}_{\perp} = O_P(n^{-1/2})$ in the proof of Lemma~\ref{supp:lemma:IPWest}, the proof is complete.
\end{proof}

\begin{lemma}
\label{supp:lemma:Cknown}
Let $\tilde{\bm{C}}$ and $\bm{v}$ be as defined in Theorem~\ref{supp:theorem:ChatProp} and $\bm{\Omega}_1 \in \mathbb{R}^{K \times d_1}$ be the first $d_1$ columns of $\bm{\Omega}$ defined in Theorem~\ref{supp:theorem:Omega}. Suppose Assumption~\ref{supp:assumptions:FA} holds, fix a $g \in [p]$, let $\bm{Z}=[P_{\bm{X}_2}\bm{X}_1,n^{1/2}\tilde{\bm{C}}\bm{v} + P_{\bm{X}_2}\bm{X}_1\bm{\Omega}_1^{\top}\bm{v}, \bm{X}_2]$ and $\bm{\theta}_g^*$ be as defined in the statement of Lemma~\ref{supp:lemma:IPWest}, let $\bm{\eta}_g^* = ( \{\bm{\theta}_g^*\}^{\top},\sigma_g^2 )^{\top}$, and for some constant $\delta>0$ small enough, define the maximum likelihood estimator
\begin{align*}
    \{\hat{\bm{\theta}}_g^{\known}, \hat{\sigma}_g^{\known}\} =& \argmax_{\{\bm{\theta},\sigma\} \in \mathcal{H} \times \mathcal{S}} f_g^{\known}(\bm{\theta},\sigma)\\
    \mathcal{H} =& \{\bm{\theta} \in \mathbb{R}^{K + d}: \norm*{\bm{\eta}-\bm{\eta}_g^*}_2 \leq \delta\}, \quad \mathcal{S}= \{\sigma>0: \abs*{\sigma - \sigma_g} \leq \delta\}\\
    f_g^{\known}(\bm{\theta},\sigma) =& n^{-1}\sum_{i=1}^n -r_{gi}\{\bm{y}_{gi} - \mu_{i}(\bm{\theta})\}^2/(2\sigma^2)\\
    &+ (1-r_{gi})\log\left(\smallint \phi(\epsilon)\Psi[ -\alpha_g\{ \mu_{i}(\bm{\theta}) + \sigma \epsilon - \delta_g \} ]\text{d}\epsilon\right), \quad \mu_{i}(\bm{\theta}) = \bm{Z}_{i \bigcdot}^{\top}\bm{\theta}.
\end{align*}
Then for $\bm{\eta}_g^*=(\bm{\theta}_g^*,\sigma_g)^{\top}$ and $\hat{\bm{\eta}}_g^{\known} = (\hat{\bm{\theta}}^{\known}_g,\hat{\sigma}^{\known}_g)^{\top}$, $\norm*{ \hat{\bm{\eta}}_g^{\known} - \bm{\eta}_g^* }_2 = O_P(n^{-1/2})$, $\{n \bm{H}_g^{\known} \}^{1/2}\{\hat{\bm{\eta}}_g^{\known} - \bm{\eta}_g^*\} \edist \bm{V}_g + o_P(1)$, and $\norm*{ \bm{H}_g^{\known} - \E\{ \bm{H}_g^{\known} \mid \bm{C},\bm{G} \} }_2 = o_P(1)$, where
\begin{align*}
    \bm{V}_g \sim N_{K+d+1}(\bm{0},I_{K+d+1}), \quad \bm{H}_g^{\known} = -\nabla^2 f_g^{\known}(\bm{\theta}_g^*,\sigma_g).
\end{align*}
\end{lemma}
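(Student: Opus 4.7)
The plan is to treat $\hat{\bm{\eta}}_g^{\known}$ as a classical constrained M-estimator with locally strictly concave objective $f_g^{\known}$, exploiting the fact that, conditional on $\{\bm{C},\bm{G}\}$, the observations $\{(r_{gi},y_{gi})\}_{i\in[n]}$ are independent with $y_{gi}\sim N(\bm{Z}_{i\bigcdot}^{\top}\bm{\theta}_g^*+b_{gi},\sigma_g^2)$ and $r_{gi}\mid y_{gi}\sim\mathrm{Bernoulli}[\Psi\{\alpha_g(y_{gi}-\delta_g)\}]$, where $b_{gi}=\sum_{s}\gamma_{gs}^{(e)}G_{si}$. By Assumption~\ref{supp:assumptions:FA}\ref{supp:assumption:c}, $\max_{g,i}|b_{gi}|=o_P(n^{-1/4})$, so $f_g^{\known}$ is a mildly misspecified log-likelihood (conditional on $\bm{C},\bm{G}$) and the main work is showing the misspecification is asymptotically negligible.

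First I would control the Hessian. Differentiating $f_g^{\known}$ twice and using Assumption~\ref{supp:assumption:Psi} that $\Psi$ and its first six derivatives decay polynomially, each summand of $\bm{H}_g^{\known}(\bm{\eta})$ is a smooth function of $(r_{gi},y_{gi},\bm{Z}_{i\bigcdot})$ with uniformly bounded moments, where the moment bounds on $\bm{C}$ come from Lemma~\ref{supp:lemma:Cassumption}. Conditional on $\bm{C},\bm{G}$ the summands are independent, so a second-moment bound as in Corollary~\ref{supp:corollary:MaximalIne} gives $\|\bm{H}_g^{\known}(\bm{\eta}_g^*)-\E\{\bm{H}_g^{\known}(\bm{\eta}_g^*)\mid\bm{C},\bm{G}\}\|_2=O_P(n^{-1/2+\epsilon})$, establishing the Hessian-concentration claim. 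A standard stochastic equicontinuity argument (leveraging the Lipschitz properties of $\Psi^{(j)}$) extends this uniformly over $\mathcal{H}\times\mathcal{S}$ up to an $O(\delta)$ slack. Assumption~\ref{supp:assumption:X} and Lemma~\ref{supp:lemma:Cassumption} imply $\E\{\bm{H}_g^{\known}(\bm{\eta}_g^*)\mid\bm{C},\bm{G}\}$ is bounded below by a positive multiple of the identity, so for $\delta$ small enough $f_g^{\known}$ is strictly concave on $\mathcal{H}\times\mathcal{S}$ with probability tending to one and therefore has a unique maximizer in the interior there.

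Second, I would analyze the score at $\bm{\eta}_g^*$. Decompose $\nabla f_g^{\known}(\bm{\eta}_g^*)=\bm{T}_1+\bm{T}_2$, where $\bm{T}_1$ is the score under the correctly-specified model with mean $\bm{Z}_{i\bigcdot}^{\top}\bm{\theta}_g^*+b_{gi}$, and $\bm{T}_2$ is the bias from replacing that mean by $\bm{Z}_{i\bigcdot}^{\top}\bm{\theta}_g^*$. Conditional on $\bm{C},\bm{G}$, $\bm{T}_1$ is an average of $n$ independent mean-zero terms with uniformly bounded fourth moments, so Lindeberg's CLT gives $\{n\bm{H}_g^{\known}\}^{-1/2}(-n\bm{T}_1)\edist\bm{V}_g+o_P(1)$ with $\bm{V}_g\sim N(\bm{0},I)$. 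For $\bm{T}_2$, a first-order Taylor expansion in $b_{gi}$ produces a leading bias of the form $n^{-1}\sum_i b_{gi}\bm{Z}_{i\bigcdot}g_i(\bm{\theta}_g^*,\sigma_g)$ with bounded $g_i$; since only $O(1)$ SNPs have $\gamma_{gs}^{(e)}\neq 0$, $\max_{g,s}|\gamma_{gs}^{(e)}|=o(n^{-1/4})$, and the only components of $\bm{Z}_{i\bigcdot}$ that correlate with $G_{si}$ enter through $\bm{\gamma}_{s}^{(c)}$ with $\|\bm{\gamma}_s^{(c)}\|_\infty=o(n^{-1/4})$, the same bookkeeping that yields Lemma~\ref{supp:lemma:Cte} shows $\bm{T}_2=o_P(n^{-1/2})$. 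Combining with $\bm{T}_1=O_P(n^{-1/2})$ and a mean-value expansion $0=\nabla f_g^{\known}(\hat{\bm{\eta}}_g^{\known})=\nabla f_g^{\known}(\bm{\eta}_g^*)+\bm{H}_g^{\known}(\tilde{\bm{\eta}})(\hat{\bm{\eta}}_g^{\known}-\bm{\eta}_g^*)$ around an interior $\tilde{\bm{\eta}}$ yields the $n^{-1/2}$ rate and the asserted asymptotic normality, confirming ex post that the constrained maximizer is interior.

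The main obstacle is the bias term $\bm{T}_2$: controlling it requires carefully separating the components of $\bm{Z}_{i\bigcdot}$ that genuinely correlate with the genetic bias $b_{gi}$ from those that do not, and then leaning on both halves of Assumption~\ref{supp:assumptions:FA}\ref{supp:assumption:c}--\ref{supp:assumption:Psi}: the finite-support constraint on non-zero $\gamma_{gs}^{(e)}$ per metabolite and the $o(n^{-1/4})$ magnitude bound. Without either, the bias would overwhelm the $n^{-1/2}$ fluctuations and the plug-in MLE would lose first-order efficiency, which is precisely what Proposition~\ref{proposition:Cknown} is being invoked to prevent.
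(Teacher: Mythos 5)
Your proposal mirrors the paper's proof structure closely: both introduce the correctly-specified likelihood (the paper's $\tilde{f}_g^{\known}$, which is your $\bm{T}_1$ source), bound the misspecification gap, establish Hessian concentration and local strict concavity via Lemma~\ref{supp:lemma:Cassumption} and moment bounds, and close with a mean-value expansion and Lindeberg CLT. The paper additionally performs a change-of-basis reduction $\bm{Z}=[\bm{X},\bm{C}]\hat{\bm{R}}$ so it can work directly with $\bm{Z}=[\bm{X},\bm{C}]$ and $\bm{\theta}_g^*=(\bm{\beta}_g^{\top},\bm{\ell}_g^{\top})^{\top}$, but this is bookkeeping and your argument implicitly absorbs it.

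The one place your writeup is lighter than the paper's is the control of $\bm{T}_2$. A naive first-order bound gives only $\bm{T}_2 = O_P(\max_{g,s}|\gamma_{gs}^{(e)}|) = o_P(n^{-1/4})$, which is not enough, so you must, as you gesture, exploit $\E(G_{si})=0$. But your sentence only mentions decoupling the components of $\bm{Z}_{i\bigcdot}$ that correlate with $G_{si}$; the multiplier $g_i(\bm{\theta}_g^*,\sigma_g)$ also depends on $G_{si}$ through $\mu_i(\bm{\theta}_g^*)$ (and through $r_{gi}$). The paper handles both channels explicitly: it replaces $\mu_i(\bm{\theta}_g^*)+\gamma\bm{G}_{s_gi}$ by a version $\tilde{\mu}_i(\bm{\theta}_g^*)$ independent of $\bm{G}_{s_g*}$ at cost $o(n^{-1/4})$, and replaces $\bm{Z}_{i*}$ by $\tilde{\bm{Z}}_{i*}$ similarly, and only then factors the expectation so that $\E(\bm{G}_{s_gi})=0$ kills the leading term. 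The variance of the remaining sum is $o(n^{-1})$ because each SNP contributes to only $O(1)$ of the $\gamma_{gs}^{(e)}$. Your appeal to "the same bookkeeping that yields Lemma~\ref{supp:lemma:Cte}" is the right instinct, but as written it leaves the $g_i$-dependence channel unaddressed; once you strip the $G_{si}$-dependence from $g_i$ as well as from $\bm{Z}_{i\bigcdot}$, the argument closes.
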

\begin{remark}
\label{supp:remark:Cknown}
Proposition~\ref{proposition:Cknown} follows from Lemma~\ref{supp:lemma:Cknown} by letting $\bm{X}=\bm{X}_1$ and $\bm{X}_2=0$.
\end{remark}

\begin{proof}
It is easy to see that there exists a change-of-basis matrix $\hat{\bm{R}}$ that depends on $\bm{C}$ such that $\bm{Z}=[\bm{X},\bm{C}]\hat{\bm{R}}$, $\bm{\theta}_g^* = \hat{\bm{R}}^{-1}( \bm{\beta}_g^{\top},\bm{\ell}_g^{\top} )^{\top}$, and $\norm*{\hat{\bm{R}} - \bm{R}}_2 = o_P(1)$ for some non-random $\bm{R}$ with $\norm*{\bm{R}}_2\leq c$ and $\norm*{\bm{R}^{-1}}_2\leq c$ for some constant $c>0$. Therefore, it suffices to prove the theorem assuming $\bm{Z} = [\bm{X},\bm{C}]$ and $\bm{\theta}_g^* = ( \bm{\beta}_g^{\top},\bm{\ell}_g^{\top} )^{\top}$. Let $\gamma = \bm{\gamma}_{s_g g}^{(e)}$ and define
\begin{align*}
    \tilde{f}^{\known}(\bm{\theta},\sigma) =& n^{-1}\sum_{i=1}^n -r_{gi}\{\bm{y}_{gi} - \mu_{i}(\bm{\theta}) - \bm{G}_{s_g i}\gamma\}^2/(2\sigma^2)\\
    &+ (1-r_{gi})\log\left(\smallint \phi(\epsilon)\Psi[ -\alpha_g\{ \mu_{i}(\bm{\theta}) + \bm{G}_{s_g i}\gamma + \sigma \epsilon - \delta_g \} ]\text{d}\epsilon\right).
\end{align*}
Then for $h(\mu,\sigma) = \log\left(\smallint \phi(\epsilon)\Psi[ -\alpha_g\{ \mu + \sigma \epsilon - \delta_g \} ]\text{d}\epsilon\right)$,
\begin{align*}
    \tilde{f}^{\known}(\bm{\theta},\sigma) - f^{\known}(\bm{\theta},\sigma) =& 2\gamma n^{-1}\sum_{i=1}^n r_{gi}\{\bm{y}_{gi} - \mu_{i}(\bm{\theta})\}\bm{G}_{s_g i}/(2\sigma^2)\\
    & - \gamma^2n^{-1} \sum_{i=1}^n r_{gi}\bm{G}_{s_g i}^2/(2\sigma^2)\\
    &+ n^{-1}\gamma\sum_{i=1}^n (1-r_{gi}) \bm{G}_{s_g i} \frac{\partial}{\partial \mu} h( \tilde{\mu}_i,\sigma ),
\end{align*}
where $\tilde{\mu}_i = \alpha_i \mu_i(\bm{\theta}) + (1-\alpha_i)\{ \mu_i(\bm{\theta}) + \gamma \bm{G}_{s_g i} \}$ for some $\alpha_i \in [0,1]$. Since $\sup_{i \in [n]}\abs*{\gamma \bm{G}_{s_g i}}=o(n^{-1/4})$, $\sup_{i \in [n], \{\bm{\theta},\sigma\} \in \mathcal{H} \times \mathcal{S}}\abs*{ \frac{\partial}{\partial \mu} h( \tilde{\mu}_i,\sigma ) } \leq c$ for some constant $c>0$ by Lemma~\ref{supp:lemma:PsiBounds}. Therefore,
\begin{align*}
    \sup_{ \{\bm{\theta},\sigma\} \in \mathcal{H} \times \mathcal{S} }\abs*{ \tilde{f}^{\known}(\bm{\theta},\sigma) - f^{\known}(\bm{\theta},\sigma) } = o_P(n^{-1/4}).
\end{align*}

Next, define $F_g(\bm{\theta}) = f_g^{\known}(\bm{\theta}) - \E\{\tilde{f}_g^{\known}(\bm{\theta}) \mid \bm{C},\bm{G}\}$. Then $\hat{\bm{\theta}}_g^{\known}$ and $\hat{\sigma}_g^{\known}$ are consistent if $\sup_{ \{\bm{\theta},\sigma\} \in \mathcal{H} \times \mathcal{S} } \abs*{F_g(\bm{\theta})} = o_P(1)$. We see that
\begin{align}
\label{supp:equation:ftildefknown}
\begin{aligned}
    \sup_{ \{\bm{\theta},\sigma\} \in \mathcal{H} \times \mathcal{S} } \abs*{F_g(\bm{\theta},\sigma)} \leq & \underbrace{\sup_{ \{\bm{\theta},\sigma\} \in \mathcal{H} \times \mathcal{S} } \abs*{ \tilde{f}^{\known}(\bm{\theta},\sigma) - f^{\known}(\bm{\theta},\sigma) }}_{o_P(n^{-1/4})}\\
    &+ \sup_{ \{\bm{\theta},\sigma\} \in \mathcal{H} \times \mathcal{S} } \abs*{ \underbrace{\tilde{f}_g^{\known}(\bm{\theta},\sigma) - \E\{\tilde{f}_g^{\known}(\bm{\theta},\sigma) \mid \bm{C},\bm{G}\}}_{=\tilde{F}_g(\bm{\theta},\sigma)} }.
\end{aligned}
\end{align}
Since $\E [\abs*{ \sup_{ \{\bm{\theta},\sigma\} \in \mathcal{H} \times \mathcal{S} } h\{\mu_i(\bm{\theta}),\sigma\} }^m] \leq c_m$ for some constant $c_m$ that only depends on $m>0$, $\abs*{\tilde{F}_g(\bm{\theta},\sigma)} = o_P(1)$ for all $\{\bm{\theta},\sigma\} \in \mathcal{H} \times \mathcal{S}$. Next, let $\bm{R}_{gi} = \diag(r_{g1},\ldots,r_{gn})$ and $B(\bm{\theta},\sigma; \epsilon) = \{\{ \bm{x},v \} \in \mathcal{H} \times \mathcal{S}: \norm*{ \bm{x} - \bm{\theta} }_2,\abs*{v-\sigma}\leq \epsilon\}$. Then because $\sup{\mu \in \mathbb{R}}\abs*{ \frac{\partial}{\partial \mu}h(\mu,\sigma) } \leq c$ and $\sup{\sigma \in \mathcal{S}}\abs*{ \frac{\partial}{\partial \mu}h(\mu,\sigma) } \leq c$ for some constant $c>0$ by Lemma~\ref{supp:lemma:PsiBounds}, it is straightforward to show that
\begin{align*}
    \sup_{ \{\bm{\theta}_1,\sigma_1\}, \{\bm{\theta}_2,\sigma_2\} \in B(\bm{\theta},\sigma; \epsilon) }\abs*{ \tilde{F}_g(\bm{\theta}_1,\sigma_1) - \tilde{F}_g(\bm{\theta}_2,\sigma_2) } = O(\epsilon).
\end{align*}
Therefore, $\tilde{F}_g(\bm{\theta},\sigma)$ is stochastically equicontinuous on a compact set, which means $\sup_{ \{\bm{\theta},\sigma\} \in \mathcal{H} \times \mathcal{S} } \allowbreak \abs*{F_g(\bm{\theta})} = o_P(1)$, and therefore implies $\hat{\bm{\theta}}_g^{\known}$ and $\hat{\sigma}_g^{\known}$ are consistent.

Next, define
\begin{align*}
    &\bm{s}^{\known}(\bm{\theta},\sigma) = \nabla f^{\known}(\bm{\theta},\sigma) = ( \bm{s}_1^{\known}(\bm{\theta},\sigma)^{\top},s_2^{\known}(\bm{\theta},\sigma) )^{\top}\\
    &\tilde{\bm{s}}^{\known}(\bm{\theta},\sigma) = \nabla \tilde{f}^{\known}(\bm{\theta},\sigma) = ( \tilde{\bm{s}}_1^{\known}(\bm{\theta},\sigma)^{\top},\tilde{s}_2^{\known}(\bm{\theta},\sigma) )^{\top},
\end{align*}
where $s_2^{\known}(\bm{\theta},\sigma), \tilde{s}_2^{\known}(\bm{\theta},\sigma) \in \mathbb{R}$ are partial derivatives with respect to $\sigma$. Then
\begin{align*}
    \bm{s}_1^{\known}(\bm{\theta},\sigma) = &n^{-1} \sum_{i=1}^n [r_{gi}\{y_{gi}-\mu_i(\bm{\theta})\}/\sigma^2 + (1-r_{gi})\alpha_g \frac{\partial}{\partial \mu} h\{\mu_i(\bm{\theta}),\sigma\}]\bm{Z}_{i*}\\
    \tilde{\bm{s}}_1^{\known}(\bm{\theta},\sigma) = &n^{-1} \sum_{i=1}^n [r_{gi}\{y_{gi}-\mu_i(\bm{\theta}) - \gamma\bm{G}_{s_g i}\}/\sigma^2 + (1-r_{gi}) \frac{\partial}{\partial \mu} h\{\mu_i(\bm{\theta})+\gamma\bm{G}_{s_g i},\sigma\}]\bm{Z}_{i*}\\
    s_2^{\known}(\bm{\theta},\sigma) = &n^{-1} \sum_{i=1}^n r_{gi}\{y_{gi}-\mu_i(\bm{\theta})\}^2/\sigma^3 + (1-r_{gi}) \frac{\partial}{\partial \sigma} h\{\mu_i(\bm{\theta}),\sigma\}\\
    \tilde{s}_2^{\known}(\bm{\theta},\sigma) = &n^{-1} \sum_{i=1}^n r_{gi}\{y_{gi}-\mu_i(\bm{\theta}) - \gamma\bm{G}_{s_g i}\}^2/\sigma^3 + (1-r_{gi}) \frac{\partial}{\partial \sigma} h\{\mu_i(\bm{\theta})+\gamma\bm{G}_{s_g i},\sigma\}.
\end{align*}
Since $\bm{G}_{s_g i}$ is a bounded random variable and $\gamma = o(n^{-1/4})$,
\begin{align}
\label{supp:equation:s1Diff}
\begin{aligned}
    \tilde{\bm{s}}_1^{\known}(\bm{\theta}_g^*,\sigma_g) - \bm{s}_1^{\known}(\bm{\theta}_g^*,\sigma_g) =& (n\sigma_g^2)^{-1}\gamma\sum_{i=1}^n r_{gi}\bm{G}_{s_g i}\bm{Z}_{i*}\\
    &+ n^{-1} \gamma\sum_{i=1}^n(1-r_{gi})\bm{G}_{s_g i} \frac{\partial^2}{\partial^2 \mu} h\{\mu_i(\bm{\theta}_g),\sigma_g\}\bm{Z}_{i*}\\
    &+ n^{-1} \gamma^2\sum_{i=1}^n(1-r_{gi})\bm{G}_{s_g i}^2 \frac{\partial^3}{\partial^3 \mu} h(\tilde{\mu}_i,\sigma_g)\bm{Z}_{i*},
\end{aligned}
\end{align}
where $\tilde{\mu}_i = \alpha_i \mu_i(\bm{\theta}_g) + (1-\alpha_i)\{ \mu_i(\bm{\theta}_g) + \gamma \bm{G}_{s_g i} \}$ for $\alpha_i \in [0,1]$. To show this difference is $o(n^{-1/2})$, we first note that because the terms inside each of the three summands in \eqref{supp:equation:s1Diff} are independent with uniformly bounded moments and $\gamma = o(n^{-1/4})$, all three sums in \eqref{supp:equation:s1Diff} have variance equal to $o(n^{-1/2})$. We therefore need only show that the expectation of the above three sums is $o(n^{-1/2})$. Since $\gamma^2 = o(n^{-1/2})$ and $\frac{\partial^3}{\partial^3 \mu} h(\tilde{\mu}_i,\sigma_g)$ is uniformly bounded from above and below by Lemma~\ref{supp:lemma:PsiBounds}, the expectation of the third sum is $o(n^{-1/2})$. Next, for the first sum in \eqref{supp:equation:s1Diff},
\begin{align*}
    \E\left( n^{-1}\gamma\sum_{i=1}^n r_{gi}\bm{G}_{s_g i}\bm{Z}_{i*} \right) =&  n^{-1}\gamma\sum_{i=1}^n \E(\Psi[ \alpha_g\{ \mu_i(\bm{\theta}_g^*) + \gamma \bm{G}_{s_g i} + \bm{\Delta}^{(e)}_{gi} - \delta_g  \} ]\bm{G}_{s_g i}\bm{Z}_{i*} )\\
    =& n^{-1}\gamma\sum_{i=1}^n \E(\Psi[ \alpha_g\{ \mu_i(\bm{\theta}_g^*) + \gamma \bm{G}_{s_g i} + \bm{\Delta}^{(e)}_{gi} - \delta_g  \} ]\bm{G}_{s_g i}\tilde{\bm{Z}}_{i*} ) + o(n^{-1/2}),
\end{align*}
where $\tilde{\bm{Z}}_{i*}$ is independent of $\bm{G}_{s_g*}$ by Assumption~\ref{supp:assumptions:FA}. Further,
\begin{align*}
    &\sup_{i \in [n]}\abs*{ \{\mu_i(\bm{\theta}_g^*) + \gamma \bm{G}_{s_g i}\} - \tilde{\mu}_i(\bm{\theta}_g^*) } = o(n^{-1/4}), \quad \tilde{\mu}_i(\bm{\theta}_g^*)=\bm{X}_{i *}^{\top}\bm{\beta}_g + \{\bm{\Delta}^{(c)}_{i *} + \textstyle\sum_{s \neq s_g}\bm{G}_{si} \bm{\gamma}^{(c)}_{s*}\}^{\top} \bm{\ell}_g,
\end{align*}
where $\tilde{\mu}_i(\bm{\theta}_g^*)$ is independent of $\bm{G}_{s_g *}$ by Assumption~\ref{supp:assumptions:FA}. Therefore, since $\frac{d}{dx}\Psi(x)$ is bounded,
\begin{align*}
    &\sup_{i \in [n]}\abs*{ \E(\Psi[ \alpha_g\{ \mu_i(\bm{\theta}_g^*) + \gamma \bm{G}_{s_g i} + \bm{\Delta}^{(e)}_{gi} - \delta_g  \} ]\bm{G}_{s_g i}\tilde{\bm{Z}}_{i*} ) - \E(\Psi[ \alpha_g\{ \tilde{\mu}_i(\bm{\theta}_g^*) + \bm{\Delta}^{(e)}_{gi} - \delta_g  \} ]\bm{G}_{s_g i}\tilde{\bm{Z}}_{i*} ) }\\
    &= o\left\{ n^{-1/4}\sup_{i \in [n]} \E( \norm*{ \tilde{\bm{Z}}_{i*} }_2 ) \right\} = o(n^{-1/4})
\end{align*}
Putting this all together gives us
\begin{align*}
    \E\left( n^{-1}\gamma\sum_{i=1}^n r_{gi}\bm{G}_{s_g i}\bm{Z}_{i*} \right) =& n^{-1}\gamma \sum_{i=1}^n \E(\Psi[ \alpha_g\{ \tilde{\mu}_i(\bm{\theta}_g^*) + \bm{\Delta}^{(e)}_{gi} - \delta_g  \} ]\bm{G}_{s_g i}\tilde{\bm{Z}}_{i*} ) + o(n^{-1/2})\\
    =&n^{-1}\gamma \sum_{i=1}^n \E(\bm{G}_{s_g i})\E(\Psi[ \alpha_g\{ \tilde{\mu}_i(\bm{\theta}_g^*) + \bm{\Delta}^{(e)}_{gi} - \delta_g  \} ]\tilde{\bm{Z}}_{i*} ) + o(n^{-1/2})\\
    =& o(n^{-1/2}),
\end{align*}
where the second equality follows because $\bm{G}_{s_g*}$ is independent of $\{ \tilde{\mu}(\bm{\theta}_g^*),\bm{\Delta}^{(e)},\tilde{\bm{Z}} \}$ and the third because $\E(\bm{G}_{s_g i}) = 0$. This implies the first sum in \eqref{supp:equation:s1Diff} is $o_P(n^{-1/2})$. For the second and final term in \eqref{supp:equation:s1Diff}, we note that for $\tilde{\mu}_i(\bm{\theta}_g^*)$ as defined above,
\begin{align*}
    \sup_{i \in [n]}\abs*{ \frac{\partial^2}{\partial^2 \mu} h\{\mu_i(\bm{\theta}_g^*),\sigma_g\} - \frac{\partial^2}{\partial^2 \mu} h\{\tilde{\mu}_i(\bm{\theta}_g^*),\sigma_g\} } = o(n^{-1/4})
\end{align*}
by Assumption~\ref{supp:assumptions:FA} and Lemma~\ref{supp:lemma:PsiBounds}. An identical analysis to the one applied to the first term of \eqref{supp:equation:s1Diff} can then be used to show the second term of \eqref{supp:equation:s1Diff} is $o_P(n^{-1/2})$.

We next consider the difference
\begin{align*}
    \tilde{s}_2^{\known}(\bm{\theta}_g^*,\sigma_g) - s_2^{\known}(\bm{\theta}_g^*,\sigma_g) =& -\frac{1}{n\sigma_g^3} \gamma \sum_{i=1}^n r_{gi}\{y_{gi} - \mu_i(\bm{\theta}_g^*)\}\bm{G}_{s_g i}\\
    &+ n^{-1} \gamma\sum_{i=1}^n (1-r_{gi}) \bm{G}_{s_g i}\frac{\partial^2}{\partial \sigma \partial \mu} h\{ \mu_i(\theta),\sigma \}\\
    &+ n^{-1} \gamma^2\sum_{i=1}^n (1-r_{gi})\bm{G}_{s_g i}^2\frac{\partial^2}{\partial \sigma \partial^2 \mu} h( \tilde{\mu}_i,\sigma ) + o_P(n^{-1/2}),
\end{align*}
where $\tilde{\mu}_i = \alpha_i\mu(\bm{\theta}_g^*) + (1-\alpha_i)\{ \mu(\bm{\theta}_g^*) + \gamma \bm{G}_{s_g i} \}$ for some $\alpha_i \in [0,1]$. Identical techniques to those used to show $\norm*{ \tilde{\bm{s}}_1^{\known}(\bm{\theta}_g^*,\sigma_g) - \bm{s}_1^{\known}(\bm{\theta}_g^*,\sigma_g) }_2 = o_P(n^{-1/2})$ can be used to show $\tilde{s}_2^{\known}(\bm{\theta}_g^*,\sigma_g) - s_2^{\known}(\bm{\theta}_g^*,\sigma_g) = o_P(n^{-1/2})$. The details have been omitted. Putting all this together implies $\norm*{ \bm{s}^{\known}(\bm{\theta}_g^*,\sigma_g) - \tilde{\bm{s}}^{\known}(\bm{\theta}_g^*,\sigma_g) }_2 = o_P(n^{-1/2})$.

We next consider the Hessians:
\begin{align*}
    \bm{H}_{11}(\bm{\theta},\sigma) = -\nabla_{\bm{\theta}} \bm{s}^{\known}_1(\bm{\theta},\sigma) =& n^{-1}\sum_{i=1}^n [ r_{gi}/\sigma^2 - (1-r_{gi}) \frac{\partial^2}{\partial \mu^2} h\{ \mu_i(\bm{\theta}),\sigma \} ]\bm{Z}_{i*}\bm{Z}_{i*}^{\top}\\
    \tilde{\bm{H}}_{11}(\bm{\theta},\sigma) = -\nabla_{\bm{\theta}} \tilde{\bm{s}}^{\known}_1(\bm{\theta},\sigma) =& n^{-1}\sum_{i=1}^n [ r_{gi}/\sigma^2 - (1-r_{gi}) \frac{\partial^2}{\partial \mu^2} h\{ \mu_i(\bm{\theta})+\gamma\bm{G}_{s_g i},\sigma \} ]\bm{Z}_{i*}\bm{Z}_{i*}^{\top}\\
    \bm{H}_{22}(\bm{\theta},\sigma) = -\frac{\partial}{\partial \sigma} s^{\known}_2(\bm{\theta},\sigma) =& n^{-1}\sum_{i=1}^n [ 3r_{gi}\{ y_{gi} - \mu_i(\bm{\theta}) \}^2/\sigma^4 - (1-r_{gi}) \frac{\partial^2}{\partial \sigma^2} h\{ \mu_i(\bm{\theta}),\sigma \} ]\\
    \tilde{\bm{H}}_{22}(\bm{\theta},\sigma) = -\frac{\partial}{\partial \sigma} \tilde{s}^{\known}_2(\bm{\theta},\sigma) =& n^{-1}\sum_{i=1}^n [ 3r_{gi}\{ y_{gi} - \mu_i(\bm{\theta}) - \gamma\bm{G}_{s_g i} \}^2/\sigma^4 \\& - (1-r_{gi}) \frac{\partial^2}{\partial \sigma^2} h\{ \mu_i(\bm{\theta})+\gamma\bm{G}_{s_g i},\sigma \} ]\\
    \bm{H}_{12}(\bm{\theta},\sigma) = -\nabla_{\bm{\theta}} s^{\known}_2(\bm{\theta},\sigma) =& n^{-1}\sum_{i=1}^n [ 2r_{gi}\{y_{gi} - \mu_i(\bm{\theta})\}/\sigma^3\\& - (1-r_{gi}) \frac{\partial^2}{\partial\sigma \partial\mu}h\{\mu_i(\bm{\theta}),\sigma\} ]\bm{Z}_{i*}\\
    \tilde{\bm{H}}_{12}(\bm{\theta},\sigma) = -\nabla_{\bm{\theta}} \tilde{s}^{\known}_2(\bm{\theta},\sigma) =& n^{-1}\sum_{i=1}^n [ 2r_{gi}\{y_{gi} - \mu_i(\bm{\theta})-\gamma\bm{G}_{s_g i}\}/\sigma^3\\& - (1-r_{gi}) \frac{\partial^2}{\partial\sigma \partial\mu}h\{\mu_i(\bm{\theta})+\gamma\bm{G}_{s_g i},\sigma\} ]\bm{Z}_{i*}.
\end{align*}
Let $B(\bm{\theta},\sigma; \epsilon)$ be the Euclidean ball with radius $\epsilon$ and centered at $\{\bm{\theta},\sigma\}$ defined above. Using Lemma~\ref{supp:lemma:PsiBounds}, it is straightforward to show that
\begin{align*}
    &\sup_{\{\bm{\theta},\sigma\} \in B(\bm{\theta}_g,\sigma_g; \epsilon)}\norm*{ \bm{H}_{ij}(\bm{\theta},\sigma) - \bm{H}_{ij}(\bm{\theta}_g,\sigma_g) }_2 = O_P(\epsilon), \quad i,j \in [2]\\
    &\sup_{\{\bm{\theta},\sigma\} \in B(\bm{\theta}_g,\sigma_g; \epsilon)}\norm*{ \tilde{\bm{H}}_{ij}(\bm{\theta}_g,\sigma_g) - \bm{H}_{ij}(\bm{\theta}_g,\sigma_g) }_2 = O_P(\epsilon), \quad i,j \in [2]\\
    &\norm*{ \tilde{\bm{H}}_{ij}(\bm{\theta}_g,\sigma_g) - \E\{ \tilde{\bm{H}}_{ij}(\bm{\theta}_g,\sigma_g) \} }_2 = O_P(n^{-1/2}), \quad i,j \in [2].
\end{align*}
Let $\bm{H}_g^{\known}$ and $\tilde{\bm{H}}_g^{\known}$ be the $(d+K+1)\times (d+K+1)$ be the minus Hessians of $f^{\known}(\bm{\theta},\sigma)$ and $\tilde{f}^{\known}(\bm{\theta},\sigma)$ evaluated at $(\bm{\theta}_g^*,\sigma_g)$. Note that the first $(d+K)\times (d+K)$ and last diagonal elements of $\bm{H}_g^{\known}$ and $\tilde{\bm{H}}_g^{\known}$ are given by $\bm{H}_{11}(\bm{\theta}_g^*,\sigma_g),\tilde{\bm{H}}_{11}(\bm{\theta}_g^*,\sigma_g)$ and $H_{22}(\bm{\theta}_g^*,\sigma_g),\tilde{H}_{22}(\bm{\theta}_g^*,\sigma_g)$, and the off diagonal is given by $\bm{H}_{12}(\bm{\theta}_g^*,\sigma_g),\tilde{\bm{H}}_{12}(\bm{\theta}_g^*,\sigma_g)$. It is straightforward to show that $\E( \tilde{\bm{H}}_g ) \succeq c I_{d+K+1}$ for some constant $c>0$ and all $n$ large enough. Putting all this together implies
\begin{align*}
    \{n\bm{H}_g^{\known}\}^{1/2} \{\hat{\bm{\eta}}_g^{\known} - \bm{\eta}_g^*\} = \{ \E( \tilde{\bm{H}}_g ) \}^{-1/2}\{n^{1/2}\tilde{\bm{s}}^{\known}(\bm{\theta}_g^*,\sigma_g)\} + o_P(1).
\end{align*}
The result then follows by an application of the Lindeberg central limit theorem.
\end{proof}

\begin{theorem}
\label{supp:theorem:InferenceBeta}
Suppose the assumptions of Theorem~\ref{supp:theorem:Omega} and Lemma~\ref{supp:lemma:Cknown} hold, let $\hat{\bm{C}}$, $\hat{\bm{Z}}$, $\bm{\theta}_g^*$, $\hat{\bm{\theta}}_g^{\ipw}$, and $\hat{\sigma}_g^{\ipw}$ be as defined in the statements of Lemma~\ref{supp:lemma:IPWest} and \ref{supp:lemma:IPWestVar}, let $\bm{H}_g^{\known}$ be as defined in the statement of Lemma~\ref{supp:lemma:Cknown}, and define the log-likelihood function
\begin{align*}
    f_g(\bm{\theta},\sigma) =& n^{-1}\sum_{i=1}^n \left[-r_{gi}\{\bm{y}_{gi} - \hat{\mu}_{i}(\bm{\theta})\}^2/(2\sigma^2)\right.\\
    &+ \left.(1-r_{gi})\log( 1-\smallint \phi(\epsilon)\Psi[ \alpha_g\{ \hat{\mu}_{i}(\bm{\theta}) + \sigma \epsilon - \delta_g \} ]\text{d}\epsilon )\right], \quad \hat{\mu}_{i}(\bm{\theta}) = \hat{\bm{Z}}_{i \bigcdot}^{\top}\bm{\theta},
\end{align*}
where $\phi$ is the probability density function of the standard normal. Let $m \geq 1$ be a constant integer, and define $\hat{\bm{\eta}}_g^{\fs} = ( \{ \hat{\bm{\theta}}_g^{\fs} \}^{\top}, \hat{\sigma}_g^{\fs} )^{\top}$ to be the estimator for $\bm{\eta}_g^*=(\{\bm{\theta}_g^*\}^{\top},\sigma_g)^{\top}$ that, for starting point $\hat{\bm{\eta}}_g^{\ipw} = ( \{ \hat{\bm{\theta}}_g^{\ipw} \}^{\top}, \hat{\sigma}_g^{\ipw} )^{\top}$, uses $J \in [m]$ iterations of Fisher scoring to maximize $f_g(\bm{\theta},\sigma)$. Then as $n \to \infty$ and for $\hat{\bm{H}}_g^{\fs}$ the plug-in estimator for $\bm{H}_g^{\known}$ that plugs in $\hat{\bm{C}}$ for $\bm{C}$ and $\hat{\bm{\eta}}_g^{\fs}$ for $\bm{\eta}_g^*$,
\begin{align}
    \label{supp:equation:asyequ}
    &n^{1/2}\abs*{ \hat{\bm{\theta}}_{g_j}^{\fs} - \hat{\bm{\theta}}_{g_j}^{\known} } = o_P(1), \quad j \in [d_1]\\
    \label{supp:equation:InfoMat}
    &\abs*{ [\{\bm{H}_g^{\fs}\}^{-1}]_{rs} - [\{\bm{H}_g^{\known}\}^{-1}]_{rs} } = o_P(1), \quad r,s \in [d_1].
\end{align}
\end{theorem}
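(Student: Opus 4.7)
The plan is to perform a one-step Fisher scoring expansion around the consistent starting point $\hat{\bm{\eta}}_g^{\ipw}$, and then argue that the score and Hessian of $f_g$ (which substitutes the estimated $\hat{\bm{Z}}$) agree with the score and Hessian of $f_g^{\known}$ (which uses the true $\bm{Z}$) at the required $o_P(n^{-1/2})$ precision in the first $d_1$ coordinates. Once the single-step result is in hand, an induction in $J\leq m$ extends to $J$ iterations, since any subsequent Fisher step takes an $O_P(n^{-1/2})$-consistent input and returns an iterate that differs from $\hat{\bm{\eta}}_g^{\known}$ by $o_P(n^{-1/2})$.

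First, by Lemmas~\ref{supp:lemma:IPWest} and \ref{supp:lemma:IPWestVar} we have $\hat{\bm{\eta}}_g^{\ipw}-\bm{\eta}_g^{*}=O_P(n^{-1/2})$. Writing the Fisher scoring update as
\begin{align*}
\hat{\bm{\eta}}_g^{\fs}-\bm{\eta}_g^{*} = \{\hat{\bm{H}}_g(\hat{\bm{\eta}}_g^{\ipw})\}^{-1}\bigl[\hat{\bm{s}}_g(\bm{\eta}_g^{*}) + \{\hat{\bm{H}}_g(\hat{\bm{\eta}}_g^{\ipw}) - \hat{\bm{H}}_g^{*}(\hat{\bm{\eta}}_g^{\ipw})\}(\hat{\bm{\eta}}_g^{\ipw}-\bm{\eta}_g^{*})\bigr] + O_P(n^{-1}),
\end{align*}
where $\hat{\bm{H}}_g^{*}$ denotes the observed Hessian, the remainder is $O_P(n^{-1})$ by Taylor expansion together with uniform boundedness of the third derivatives of the integrand (obtained from Lemma~S7.5 / the existing $\Psi$-bounds lemma) and the moment bounds on $\hat{\bm{Z}}$ that come from Corollary~\ref{supp:corollary:InfNormC}. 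The standard Bartlett argument, applied at $\bm{\eta}_g^{*}$ and using that $\E\{\nabla^2 f_g(\bm{\eta}_g^{*})\mid\hat{\bm{C}},\mathcal{G}\} + \E\{\nabla f_g(\bm{\eta}_g^{*})\nabla f_g(\bm{\eta}_g^{*})^\top \mid \hat{\bm{C}},\mathcal{G}\} = o_P(1)$ uniformly over the relevant neighborhood, reduces the display to $\hat{\bm{\eta}}_g^{\fs}-\bm{\eta}_g^{*}=\{\hat{\bm{H}}_g(\bm{\eta}_g^{*})\}^{-1}\hat{\bm{s}}_g(\bm{\eta}_g^{*})+o_P(n^{-1/2})$, and the same expression with $\bm{s}^{\known},\bm{H}_g^{\known}$ holds for $\hat{\bm{\eta}}_g^{\known}-\bm{\eta}_g^{*}$ by Lemma~\ref{supp:lemma:Cknown}.

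The heart of the argument is therefore to show, for the first $d_1$ coordinates,
\begin{align*}
n^{1/2}\bigl[\{\hat{\bm{H}}_g(\bm{\eta}_g^{*})\}^{-1}\hat{\bm{s}}_g(\bm{\eta}_g^{*}) - \{\bm{H}_g^{\known}\}^{-1}\bm{s}^{\known}(\bm{\eta}_g^{*})\bigr]_{1:d_1} = o_P(1).
\end{align*}
For this I will substitute $\hat{\bm{Z}}_{i*} = \bm{Z}_{i*}+\hat{\bm{\delta}}_{i*}$ into the integrand of $f_g$, where $\hat{\bm{\delta}}$ collects $\tilde{\bm{C}}(\hat{\bm{v}}-\bm{v})+\bm{Q}\hat{\bm{z}}$ (from Theorem~\ref{supp:theorem:ChatProp}) and the $\hat{\bm{\Omega}}_1^\top\bm{v}-\bm{\Omega}_1^\top\bm{v}$ correction (Theorem~\ref{supp:theorem:Omega}). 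Plugging $\hat{\bm{\delta}}$ into the explicit expression for the score yields linear terms $n^{-1}\sum_i \xi_{gi}(\bm{\eta}_g^{*})\hat{\bm{\delta}}_{i*}$, where $\xi_{gi}$ is a bounded, sub-Gaussian function of $(y_{gi},r_{gi})$ with conditional mean zero given $\{\bm{C},\mathcal{G}\}$ under the (approximate) normality of $e_{gi}$; quadratic terms in $\hat{\bm{\delta}}$ contribute at the $O_P(\lambda^{-1+\epsilon})=o_P(n^{-1/2})$ level. The key lemma to be proved is then
\begin{align*}
n^{-1}\bigl[\bm{X}_1^{\top}P_{\bm{X}_2}^{\perp}\{\hat{\bm{C}}-n^{1/2}\tilde{\bm{C}}\bm{v}-P_{\bm{X}_2}^{\perp}\bm{X}_1\bm{\Omega}_1^{\top}\bm{v}\}\bigr]\,\cdot\,(\text{score residuals}) = o_P(n^{-1/2}),
\end{align*}
which follows by combining Theorem~\ref{supp:theorem:Omega}, Corollary~\ref{supp:corollary:InfNormC}, and maximal inequalities analogous to Corollary~\ref{supp:corollary:MaximalIne} applied to $\hat{\bm{z}}^{\top}\bm{Q}^{\top}$ times the weighted residuals, mirroring the computations used in the proofs of Corollaries~\ref{supp:corollary:LtL} and \ref{supp:corollary:Pchat}. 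The Hessian term is controlled identically, yielding \eqref{supp:equation:InfoMat}.

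The main obstacle is controlling the plug-in bias introduced by $\hat{\bm{C}}$ in the \emph{nonlinear} contribution $(1-r_{gi})\log(\cdot)$ to $\hat{\bm{s}}_g(\bm{\eta}_g^{*})$: unlike the normal-residual term, this does not have a natural mean-zero structure when conditioned on $(y_{gi},r_{gi})$. I will handle it by Taylor expanding the $\log(\cdot)$ about the truth and showing that the leading linear functional of $\hat{\bm{\delta}}$ projects onto the $d_1$ columns of $\bm{X}_1$ as $n^{-1}\bm{X}_1^{\top}P_{\bm{X}_2}^{\perp}\hat{\bm{\delta}}$, which is $o_P(n^{-1/2})$ because by construction $\hat{\bm{C}}_\perp^{\top}\bm{X}=\bm{0}$ and $\hat{\bm{\Omega}}$ consistently regresses the low-rank part onto $\bm{X}_1$ at the $o_P(n^{-1/2})$ rate (Theorem~\ref{supp:theorem:Omega}). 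Induction on $J\in[m]$ then promotes the one-step result to the $J$-step estimator, and the plug-in information matrix convergence in \eqref{supp:equation:InfoMat} follows from the uniform control of the Hessian used in the one-step expansion together with $\|\hat{\bm{\eta}}_g^{\fs}-\bm{\eta}_g^{*}\|_2=O_P(n^{-1/2})$.
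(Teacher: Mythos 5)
Your high-level plan---a one-step Newton/Fisher-scoring expansion around the consistent starting point $\hat{\bm{\eta}}_g^{\ipw}$, then induction in $J$---is sound, and it is a genuine variant of the paper's route: the paper first proves (Part a of its proof) that the \emph{full} maximizer $\hat{\bm{\eta}}_g=\argmax f_g$ agrees with $\hat{\bm{\eta}}_g^{\known}$ at $o_P(n^{-1/2})$, and then separately shows (Part b) that the Fisher-scoring iterates converge to $\hat{\bm{\eta}}_g$ at the same rate, whereas you compare the one-step estimator to $\hat{\bm{\eta}}_g^{\known}$ directly via the asymptotically-linear representations $\{\bm{F}\}^{-1}\bm{s}$. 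Both routes are legitimate and require the same core estimates, so this structural difference is minor.

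However, there is a genuine gap in how you handle the crucial step, i.e.\ showing that the score (and Hessian) of $f_g$ at $\bm{\eta}_g^*$ agree with those of $f_g^{\known}$ up to $o_P(n^{-1/2})$. You formulate the ``key lemma'' as controlling $n^{-1}\bm{X}_1^{\top}P_{\bm{X}_2}^{\perp}\hat{\bm{\delta}}$, citing the orthogonality $\hat{\bm{C}}_\perp^{\top}\bm{X}=\bm{0}$. But that quantity is \emph{identically zero by construction}---that is precisely what the constraint in \eqref{equation:OptC} enforces---so the claim is vacuous and cannot be doing the work. The actual linear-in-$\hat{\bm{\delta}}$ terms that appear in $\bm{s}_{g1}-\bm{s}_{g1}^{\known}$ carry \emph{data-dependent diagonal weights}: from the Gaussian part you get things like $\sigma^{-2}\bm{Z}^{\top}\bm{R}_g\bm{\delta}\bm{\theta}$ and $\bm{\delta}^{\top}\bm{R}_g(n^{-1/2}\bm{y}_g-\bm{Z}\bm{\theta})$, and from the missing-data part you get $\bm{Z}^{\top}(I_n-\bm{R}_g)\ddot{\bm{H}}_{11}(\bm{\theta}_g^*,\sigma_g)\bm{\delta}$ and $n^{-1/2}\bm{\delta}^{\top}(I_n-\bm{R}_g)\dot{\bm{h}}_1(\bm{\theta}_g^*,\sigma_g)$. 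The diagonal factors $\bm{R}_g$, $(I_n-\bm{R}_g)\ddot{\bm{H}}_{11}$, etc.\ do \emph{not} commute with $P_{\bm{X}_2}^{\perp}$, so orthogonality of $\hat{\bm{C}}_\perp$ to $\bm{X}$ is not what makes these small. What is actually needed is a concentration argument of the kind carried out in the proofs of Corollary~\ref{supp:corollary:CtC} (and Corollary~\ref{supp:corollary:LtL}), where one decomposes $\hat{\bm{\delta}}=\bm{Q}\hat{\bm{z}}+\tilde{\bm{C}}(\hat{\bm{v}}-\bm{v})+\cdots$, passes to $\hat{\bm{z}}^{\top}\bm{Q}^{\top}(I_n-\bm{R}_g)\dot{\bm{h}}_1(\bm{\theta}_g^*,\sigma_g)$ via Theorem~\ref{supp:theorem:ChatProp}, and then controls this inner product using Lemma~\ref{supp:lemma:Cte} and the maximal-inequality machinery. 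Your proposal gestures toward ``mirroring the computations'' of Corollaries~\ref{supp:corollary:LtL}/\ref{supp:corollary:Pchat}, but the displayed key lemma is not the right reduction; as stated it does not have the weight matrix in it and therefore misidentifies which structure makes the term small.

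A second, smaller issue: because the Fisher-information inverse mixes coordinates, you cannot restrict attention to the first $d_1$ coordinates of the score difference. To extract $\{[\bm{F}]^{-1}\bm{s}\}_{1:d_1}$ you need control of \emph{all} coordinates of $\bm{s}-\bm{s}^{\known}$ (and all entries of the Hessian difference), which is exactly what the paper proves in Lemmas~\ref{supp:lemma:sg1sg2} and \ref{supp:lemma:Hessian}. Your stated plan of proving the agreement only ``in the first $d_1$ coordinates'' is insufficient unless you additionally argue that the off-$d_1$-block of $\bm{H}_g^{\known}$ is asymptotically block-diagonal against the remaining coordinates, which you do not do.
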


\begin{remark}
\label{supp:remark:betag1}
The first $d_1$ entries of $\hat{\bm{\theta}}_{g}^{\fs}$ and $\hat{\bm{\theta}}_{g}^{\known}$ are estimates for the first $d_1$ entries of $\bm{\beta}_g$, which are exactly the coefficients of interest. Theorem~\ref{supp:theorem:InferenceBeta} therefore implies estimation for and inference on the coefficients of interest with our estimated $\bm{C}$ is asymptotically equivalent to that when $\bm{C}$ is known. A trivial corollary of Lemma~\ref{supp:lemma:Cknown} and Theorem~\ref{supp:theorem:InferenceBeta} is that our Fisher scoring estimator for the coefficients of interest is asymptotically normal, where the first $d_1 \times d_1$ block of $\{\bm{H}_g^{\fs}\}^{-1}$ is an estimator for its asymptotic variance.
\end{remark}

\begin{proof}
Let $\Theta,\mathcal{S}$ be as defined in the statement of Lemma~\ref{supp:lemma:Cknown} and define
\begin{align*}
    h(\mu,\sigma) = \log[ \smallint \Psi\{-\alpha_g(\mu + \sigma e - \delta_g) \} \phi(e)de ].
\end{align*}
We prove Theorem~\ref{supp:theorem:InferenceBeta} by showing that (a) $\{\hat{\bm{\theta}}_g,\hat{\sigma}_g\} = \argmax_{\bm{\theta} \in \Theta,\sigma\in\mathcal{S}} f_{g}(\bm{\theta},\sigma)$ are asymptotically equivalent to $\{\hat{\bm{\theta}}_g^{\known},\hat{\sigma}_g^{\known}\}$ defined in the statement of Lemma~\ref{supp:lemma:Cknown}, and (b) that $\{\hat{\bm{\theta}}_g^{\fs},\sigma_g^{\fs}\}$ are asymptotically equivalent to $\{\hat{\bm{\theta}}_g,\hat{\sigma}_g\}$.

For (a), let $\bm{R}_g = \diag(r_{g1},\ldots,r_{gn})$ and $\tilde{\bm{X}}_1=P_{\bm{X}_2}^{\perp}\bm{X}_1$. Then for $\bm{\theta} \in \Theta$ and $\bar{\bm{\ell}} \in \mathbb{R}^K$ entries $d_1+1,\ldots,d_1+K$ of $\bm{\theta}$,
\begin{align*}
    f_g(\bm{\theta},\sigma) - f_g^{\known}(\bm{\theta},\sigma) =& -\sigma^{-2}(n^{-1/2}\bm{y_g}^{\top})^{\top}\bm{R}_g\bm{\Delta}_C\bar{\bm{\ell}}\\& + \sigma^{-2}\bar{\bm{\ell}}^{\top}\bm{\Delta}_C^{\top}\bm{R}_g \{\tilde{\bm{C}}\bm{v} + (n^{-1/2}\tilde{\bm{X}}_1)(\bm{v}^{\top}\bm{\Omega}_1)^{\top}\}\bar{\bm{\ell}}\\
    &+ (2\sigma^2)^{-1}\bar{\bm{\ell}}^{\top}\bm{\Delta}_C^{\top}\bm{R}_g \bm{\Delta}_C\bar{\bm{\ell}} + n^{-1}\sum_{i=1}^n (1-r_{gi})\bm{\delta}_i^{\top}\bar{\bm{\ell}} a_{gi}(\bm{\theta},\sigma)\\
    \bm{\Delta}_C =& (\hat{\bm{C}}_{\perp} - \tilde{\bm{C}}\bm{v})  + (n^{-1/2}\tilde{\bm{X}}_1)( \hat{\bm{\Omega}}_1 - \bm{v}^{\top}\bm{\Omega}_1 )^{\top}\\
    \bm{\delta}_i =& n^{1/2}(\hat{\bm{C}}_{\perp_{i \bigcdot}} - \bm{v}^{\top}\tilde{\bm{C}}_{i \bigcdot}) + (\hat{\bm{\Omega}}_1 - \bm{v}^{\top}\bm{\Omega}_1)\tilde{\bm{X}}_{1_{i \bigcdot}}, \quad i \in [n]\\
    a_{gi}(\bm{\theta},\sigma) =& -\alpha_g \frac{ \smallint \dot{\Psi}[-\alpha_g\{ \mu_i(\bm{\theta}) + \zeta_{gi} \bm{\delta}_i^{\top}\bar{\bm{\ell}} + \sigma_g \epsilon - \delta_g \}] \phi(\epsilon)d\epsilon }{ \smallint \Psi[-\alpha_g\{ \mu_i(\bm{\theta}) + \zeta_{gi} \bm{\delta}_i^{\top}\bar{\bm{\ell}} + \sigma \epsilon - \delta_g \}] \phi(\epsilon)d\epsilon }, \quad \zeta_{gi} \in [0,1],\quad i \in [n],
\end{align*}
where Theorems~\ref{supp:theorem:ChatProp} and \ref{supp:theorem:Omega} imply $\norm*{ \bm{\Delta}_C }_2 = O_P(\lambda^{-1/2+\epsilon})$ for any $\epsilon>0$, and Corollary~\ref{supp:corollary:InfNormC} and Theorem~\ref{supp:theorem:Omega} imply $\sup_{\bm{\theta}_g \in \Theta, i \in [n]}\abs*{\bm{\delta}_i^{\top}\bar{\bm{\ell}}_g} = O_P(n^{-\eta})$ for some sufficiently small $\eta>0$. Since $\abs*{a_{gi}(\bm{\theta},\sigma)}\leq c$ for some constant $c>0$ by Lemma~\ref{supp:lemma:PsiBounds}, this implies $\sup_{\bm{\theta} \in \Theta,\sigma\in\mathcal{S}} \abs*{ f_g(\bm{\theta},\sigma) - f_g^{\known}(\bm{\theta},\sigma) }= O_P(n^{-\eta})$ for some sufficiently small $\eta>0$. Therefore, for $\tilde{f}^{\known}$ as defined in Lemma~\ref{supp:lemma:Cknown},
\begin{align*}
    &\sup_{\bm{\theta} \in \Theta,\sigma \in \mathcal{S}} \abs*{ f_g(\bm{\theta},\sigma) - \E\{ \tilde{f}_g^{\known}(\bm{\theta},\sigma) \mid \bm{C},\bm{G} \} } \leq  \underbrace{\sup_{\bm{\theta} \in \Theta,\sigma \in \mathcal{S}} \abs*{ f_g(\bm{\theta},\sigma) -  f_g^{\known}(\bm{\theta},\sigma) }}_{O_P(n^{-\eta})}\\
    &+ \underbrace{\sup_{\bm{\theta} \in \Theta,\sigma \in \mathcal{S}} \abs*{ f_g^{\known}(\bm{\theta},\sigma) -  \E\{ \tilde{f}_g^{\known}(\bm{\theta},\sigma) \mid \bm{C},\bm{G} \} }}_{\text{$o_P(1)$ by properties of \eqref{supp:equation:ftildefknown} in Lemma~\ref{supp:lemma:Cknown}}},
\end{align*}
meaning $\norm*{ \hat{\bm{\theta}}_g - \bm{\theta}_g^* }_2 = o_P(1)$ and $\abs*{\hat{\sigma}_g - \sigma_g}=o_P(1)$. Next, define
\begin{align*}
    &\bm{Z} =  [(n^{-1/2}\tilde{\bm{X}}_1), \{ \tilde{\bm{C}}\bm{v} + (n^{-1/2}\tilde{\bm{X}}_1)(\bm{v}^{\top}\bm{\Omega}_1)^{\top} \}, (n^{-1/2}\bm{X}_2)], \quad n^{1/2}\bm{Z}\bm{\theta}=(\mu_1(\bm{\theta}),\ldots,\mu_n(\bm{\theta}))^{\top}\\
    &\bm{s}_{g1}^{\known}(\bm{\theta},\sigma) = \nabla_{\bm{\theta}} f_g^{\known}(\bm{\theta},\sigma) = \sigma^{-2}\bm{Z}^{\top}\bm{R}_g(n^{-1/2}\bm{y}_g - \bm{Z}\bm{\theta}) + n^{-1/2}\bm{Z}^{\top}(I_n - \bm{R}_g) \dot{\bm{h}}_1(\bm{\theta},\sigma)\\
    &s_{g2}^{\known}(\bm{\theta},\sigma) = \frac{\partial}{\partial \sigma} f_g^{\known}(\bm{\theta},\sigma) = -\frac{\bm{1}_n^{\top}\bm{R}_g \bm{1}_n}{n\sigma} + \sigma^{-3}(n^{-1/2}\bm{y}_g - \bm{Z}\bm{\theta})^{\top}\bm{R}_g (n^{-1/2}\bm{y}_g - \bm{Z}\bm{\theta})\\& + n^{-1}\bm{1}_n^{\top}(I_n-\bm{R}_g) \dot{\bm{h}}_2(\bm{\theta},\sigma)\\
    &\dot{\bm{h}}_1(\bm{\theta},\sigma) = \left( \frac{\partial}{\partial \mu}h\{\mu_1(\bm{\theta}),\sigma\},\ldots,\frac{\partial}{\partial \mu}h\{\mu_n(\bm{\theta}),\sigma\} \right)^{\top}\\& \dot{\bm{h}}_2(\bm{\theta},\sigma) = \left( \frac{\partial}{\partial \sigma}h\{\mu_1(\bm{\theta}),\sigma\},\ldots,\frac{\partial}{\partial \sigma}h\{\mu_n(\bm{\theta}),\sigma\} \right)^{\top}\\
    &\bm{H}_{11}^{\known}(\bm{\theta},\sigma) = -\nabla_{\bm{\theta}}^2 f_g^{\known}(\bm{\theta},\sigma)=\bm{Z}^{\top}\{\sigma^{-2}\bm{R}_g - (I_n - \bm{R}_g)\ddot{\bm{H}}_{11}(\bm{\theta},\sigma)\}\bm{Z}\\
    &H_{22}^{\known}(\bm{\theta},\sigma) = -\frac{\partial^2}{\partial \sigma^2} f_g^{\known}(\bm{\theta},\sigma) = 3\sigma^{-4}  (n^{-1/2}\bm{y}_g - \bm{Z}\bm{\theta})^{\top}\bm{R}_g (n^{-1/2}\bm{y}_g - \bm{Z}\bm{\theta}) - \frac{\bm{1}_n^{\top}\bm{R}_g \bm{1}_n}{n\sigma^2}\\& - n^{-1}\bm{1}_n^{\top}(I_n-\bm{R}_g)\ddot{\bm{H}}_{22}(\bm{\theta},\sigma)\bm{1}_n \\
    &\bm{H}_{12}^{\known}(\bm{\theta},\sigma) = -\nabla_{\bm{\theta}} s_{g2}^{\known}(\bm{\theta},\sigma)= 2\sigma^{-3} \bm{Z}^{\top}\bm{R}_g(n^{-1/2}\bm{y}_g- \bm{Z}\bm{\theta})\\& - n^{-1/2}\bm{Z}^{\top} (I_n - \bm{R}_g)\ddot{\bm{H}}_{12}(\bm{\theta},\sigma)\bm{1}_n\\
    &\ddot{\bm{H}}_{11}(\bm{\theta},\sigma) = \diag\left[ \frac{\partial^2}{\partial \mu^2} h\{\mu_1(\bm{\theta}),\sigma\},\ldots, \frac{\partial^2}{\partial \mu^2} h\{\mu_n(\bm{\theta}),\sigma\}\right]\\ &\ddot{\bm{H}}_{22}(\bm{\theta},\sigma) = \diag\left[ \frac{\partial^2}{\partial \sigma^2} h\{\mu_1(\bm{\theta}),\sigma\},\ldots, \frac{\partial^2}{\partial \sigma^2} h\{\mu_n(\bm{\theta}),\sigma\}\right] \\ &\ddot{\bm{H}}_{12}(\bm{\theta},\sigma) = \diag\left[ \frac{\partial^2}{\partial\mu \partial \sigma} h\{\mu_1(\bm{\theta}),\sigma\},\ldots, \frac{\partial^2}{\partial\mu \partial \sigma} h\{\mu_n(\bm{\theta}),\sigma\}\right]
\end{align*}
and
\begin{align}
    \bm{s}_{g1}(\bm{\theta},\sigma) =& \nabla_{\bm{\theta}} f_g(\bm{\theta},\sigma) = \bm{s}_{g1}^{\known}(\bm{\theta},\sigma) + \sigma^{-2}\bm{\delta}^{\top}\bm{R}_g (n^{-1/2}\bm{y}_g - \bm{Z}\bm{\theta}) + \sigma^{-2}\bm{Z}^{\top}\bm{R}_g\bm{\delta}\bm{\theta}\nonumber\\
    &+ n^{-1/2}\bm{Z}^{\top}(I_n - \bm{R}_g)\bm{\varepsilon}_1(\bm{\theta},\sigma) + n^{-1/2}\bm{\delta}^{\top}(I_n - \bm{R}_g) \dot{\bm{h}}(\bm{\theta},\sigma) \nonumber \\
    &+ n^{-1/2}\bm{\delta}^{\top}(I_n - \bm{R}_g)\bm{\varepsilon}_1(\bm{\theta},\sigma) + o_P(n^{-1/2})\nonumber\\
    s_{g2}(\bm{\theta},\sigma) =& \frac{\partial}{\partial \sigma} f_g(\bm{\theta},\sigma) = s_{g2}^{\known}(\bm{\theta},\sigma) - 2\sigma^{-3}(n^{-1/2}\bm{y}_g-\bm{Z}\bm{\theta})^{\top}\bm{R}_g \bm{\delta}\bm{\theta}\nonumber\\& + n^{-1}\bm{1}^{\top}(I_n-\bm{R}_g)\bm{\varepsilon}_2(\bm{\theta},\sigma) +  o_P(n^{-1/2}) \nonumber\\
    \bm{H}_{11}(\bm{\theta},\sigma) =& \nabla^2_{\bm{\theta}}f_g(\bm{\theta},\sigma), \quad H_{22}(\bm{\theta},\sigma) = \frac{\partial^2}{\partial \sigma^2} f_g(\bm{\theta},\sigma), \quad \bm{H}_{12}(\bm{\theta},\sigma) = \nabla_{\bm{\theta}} \frac{\partial}{\partial \sigma} f_g(\bm{\theta},\sigma)\nonumber\\
    \bm{\delta} =& [\bm{0}_{n \times d_1}, \bm{\Delta}_C,\bm{0}_{n \times d_2}]\\
    \bm{\varepsilon}_1(\bm{\theta},\sigma)=& \left( \frac{\partial}{\partial \mu}h\{\mu_1(\bm{\theta}) + n^{1/2}\bm{\delta}_{1*}^{\top}\bm{\theta},\sigma\},\ldots,\frac{\partial}{\partial \mu}h\{\mu_n(\bm{\theta}) + n^{1/2}\bm{\delta}_{n*}^{\top}\bm{\theta},\sigma\} \right)^{\top} - \dot{\bm{h}}_1(\bm{\theta},\sigma) \nonumber\\
    \bm{\varepsilon}_2(\bm{\theta},\sigma)=& \left( \frac{\partial}{\partial \sigma}h\{\mu_1(\bm{\theta}) + n^{1/2}\bm{\delta}_{1*}^{\top}\bm{\theta},\sigma\},\ldots,\frac{\partial}{\partial \sigma}h\{\mu_n(\bm{\theta}) + n^{1/2}\bm{\delta}_{n*}^{\top}\bm{\theta},\sigma\} \right)^{\top} - \dot{\bm{h}}_2(\bm{\theta},\sigma) \nonumber,
\end{align}
where the $o_P(n^{-1/2})$ term is uniform over all $\{\bm{\theta},\sigma\} \in \Theta \times \mathcal{S}$. We prove two critical lemmas regarding the behavior $\bm{s}_{g1},s_{g2}$ and $\bm{H}_{11},H_{22},\bm{H}_{12}$.
\begin{lemma}
\label{supp:lemma:sg1sg2}
Suppose the assumptions of Theorem~\ref{supp:theorem:InferenceBeta} hold and let $\tilde{\bm{\theta}} \in \mathbb{R}^{d+K}$ and $\tilde{\sigma}>0$ be such that $\norm*{ \tilde{\bm{\theta}}-\bm{\theta}_g^* }_2,\abs*{\tilde{\sigma} - \sigma_g} = O_P(n^{-1/2})$. Then
\begin{align*}
    \norm*{ \bm{s}_{g1}(\tilde{\bm{\theta}},\tilde{\sigma}) - \bm{s}_{g1}^{\known}(\tilde{\bm{\theta}},\tilde{\sigma}) }_2, \, \norm*{ s_{g2}(\tilde{\bm{\theta}},\tilde{\sigma}) - s_{g2}^{\known}(\tilde{\bm{\theta}},\tilde{\sigma}) }_2 = o_P(n^{-1/2}).
\end{align*}
\end{lemma}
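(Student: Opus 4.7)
My plan is to expand the two differences into the explicit ``extra'' summands displayed just before the lemma statement (five summands for $\bm{s}_{g1}-\bm{s}_{g1}^{\known}$ and three for $s_{g2}-s_{g2}^{\known}$, modulo the $o_P(n^{-1/2})$ remainder already absorbed) and argue that each summand is $o_P(n^{-1/2})$ when evaluated at $(\tilde{\bm{\theta}},\tilde{\sigma})$. Since $\norm{\tilde{\bm{\theta}}-\bm{\theta}_g^*}_2+\abs{\tilde{\sigma}-\sigma_g}=O_P(n^{-1/2})$ and each extra summand is at least linear in $\bm{\delta}$ with $\norm{\bm{\delta}}_2=O_P(\lambda^{-1/2+\epsilon})$, I may replace $(\tilde{\bm{\theta}},\tilde{\sigma})$ by $(\bm{\theta}_g^*,\sigma_g)$ throughout at the cost of an $o_P(n^{-1/2})$ error. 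The inputs I will use are: (i) the decomposition $\hat{\bm{C}}_{\perp}-\tilde{\bm{C}}\bm{v}=\tilde{\bm{C}}(\hat{\bm{v}}-\bm{v})+\bm{Q}\hat{\bm{z}}$ together with $\hat{\bm{z}}=\bm{Q}^{\top}\bm{s}\bm{v}+O_P(\lambda^{-1+\epsilon})$ and $\bm{s}=(\lambda p)^{-1}\sum_{h=1}^p \bm{P}_h^{\perp}\bm{e}_h\bar{\bm{\ell}}_h^{\top}$ from Theorem~\ref{supp:theorem:ChatProp}; (ii) $\norm{\hat{\bm{\Omega}}_1-\bm{v}^{\top}\bm{\Omega}_1}_2=o_P(n^{-1/2})$ from Theorem~\ref{supp:theorem:Omega}; (iii) $\norm{\hat{\bm{C}}-\tilde{\bm{C}}\bm{v}}_{\infty}=O_P(\lambda^{-1+\epsilon})$ from Corollary~\ref{supp:corollary:InfNormC}; and (iv) uniform bounds on $h(\mu,\sigma)$ and its partial derivatives from Lemma~\ref{supp:lemma:PsiBounds}. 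Because $\lambda\gtrsim n^{1/2+\epsilon}$, the rate $\lambda^{-1+\epsilon}$ is well inside $o(n^{-1/2})$, which is what ultimately drives the result.

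Substituting $n^{-1/2}\bm{y}_g-\bm{Z}\bm{\theta}_g^*=n^{-1/2}\bm{e}_g$, the summand $\sigma_g^{-2}\bm{\delta}^{\top}\bm{R}_g(n^{-1/2}\bm{y}_g-\bm{Z}\bm{\theta}_g^*)$ collapses to $n^{-1/2}\sigma_g^{-2}\bm{\Delta}_C^{\top}\bm{R}_g\bm{e}_g$, which I split by the three-piece expansion of $\bm{\Delta}_C$. The $\tilde{\bm{C}}(\hat{\bm{v}}-\bm{v})$ piece is $O_P(\lambda^{-1+\epsilon})\cdot O_P(n^{\epsilon})$, using $\norm{\tilde{\bm{C}}^{\top}\bm{R}_g\bm{e}_g}_2=O_P(n^{\epsilon})$ from the corresponding step in the proof of Corollary~\ref{supp:corollary:LtL}; the $(n^{-1/2}\tilde{\bm{X}}_1)(\hat{\bm{\Omega}}_1-\bm{v}^{\top}\bm{\Omega}_1)^{\top}$ piece reduces to $(\hat{\bm{\Omega}}_1-\bm{v}^{\top}\bm{\Omega}_1)(n^{-1}\tilde{\bm{X}}_1^{\top}\bm{R}_g\bm{e}_g)=o_P(n^{-1/2})\cdot O_P(n^{-1/2})$ by a CLT for $\tilde{\bm{X}}_1^{\top}\bm{R}_g\bm{e}_g$ and Theorem~\ref{supp:theorem:Omega}. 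For the summand $\sigma_g^{-2}\bm{Z}^{\top}\bm{R}_g\bm{\delta}\bm{\theta}_g^*$, the crucial simplification is that $(\hat{\bm{C}}_{\perp}-\tilde{\bm{C}}\bm{v})^{\top}\bm{X}=0$ (both $\hat{\bm{C}}_{\perp}$ and $\tilde{\bm{C}}$ are orthogonal to $\bm{X}$), so the $\bm{X}$-columns of $\bm{Z}$ only contribute through the $(n^{-1/2}\tilde{\bm{X}}_1)(\hat{\bm{\Omega}}_1-\bm{v}^{\top}\bm{\Omega}_1)^{\top}$ block of $\bm{\Delta}_C$, which gives $o_P(n^{-1/2})$; the $\bm{R}_g-I_n$ fluctuations are absorbed as in the proof of Corollary~\ref{supp:corollary:CtC} via Corollary~\ref{supp:corollary:MaximalIne}.

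For the three summands involving $\bm{\varepsilon}_1$ or $\bm{\varepsilon}_2$, I Taylor-expand $h(\mu_i+n^{1/2}\bm{\delta}_{i\bigcdot}^{\top}\bm{\theta},\sigma)-h(\mu_i,\sigma)$ in $n^{1/2}\bm{\delta}_{i\bigcdot}^{\top}\bm{\theta}$. By Corollary~\ref{supp:corollary:InfNormC} and Theorem~\ref{supp:theorem:Omega}, $\max_i\abs{n^{1/2}\bm{\delta}_{i\bigcdot}^{\top}\bm{\theta}_g^*}=O_P(n^{1/2}\lambda^{-1+\epsilon})=O_P(n^{-\eta})$ for some $\eta>0$, so the quadratic remainder aggregates to $o_P(n^{-1/2})$ by Lemma~\ref{supp:lemma:PsiBounds}. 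The linearised pieces take the same structure as the $\bm{\delta}$-linear summands, with a bounded diagonal of partial derivatives of $h$ playing the role of $\bm{R}_g$ (and $1-r_{gi}$ replacing $r_{gi}$), and are controlled by the same orthogonality and maximal-inequality arguments. The analysis of $s_{g2}-s_{g2}^{\known}$ is parallel: the extra terms $-2\sigma_g^{-3}(n^{-1/2}\bm{y}_g-\bm{Z}\bm{\theta}_g^*)^{\top}\bm{R}_g\bm{\delta}\bm{\theta}_g^*$ and $n^{-1}\bm{1}_n^{\top}(I_n-\bm{R}_g)\bm{\varepsilon}_2$ reduce to the same form as those for $\bm{s}_{g1}$ after the same substitution and linearisation, and the same bounds apply verbatim.

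The hard part will be the $\bm{Q}\hat{\bm{z}}$ contribution to $\bm{\Delta}_C^{\top}\bm{R}_g\bm{e}_g$, which reduces to $\bm{v}^{\top}\bm{s}^{\top}\bm{H}\bm{R}_g\bm{e}_g$ with $\bm{H}=\bm{Q}\bm{Q}^{\top}=P_{[\bm{X},\tilde{\bm{C}}]}^{\perp}$. The $h=g$ term in the defining sum for $\bm{s}$ couples to the explicit $\bm{e}_g$ and produces a quadratic form in $\bm{e}_g$ that does not vanish by symmetry. I plan to handle this exactly as in equations \eqref{supp:equation:sHPe}--\eqref{supp:equation:etPHPe} of the proof of Corollary~\ref{supp:corollary:LtL}: isolate the diagonal term $(\lambda p)^{-1}\bar{\bm{\ell}}_g\bm{e}_g^{\top}\bm{P}_g^{\perp}\bm{H}\bm{R}_g\bm{e}_g$ and bound it by $c\lambda^{-1/2}(p^{-1}\bm{e}_g^{\top}\bm{e}_g)\max_{g,i}w_{gi}^2=O_P(\lambda^{-1/2+\epsilon})$, then decompose the off-diagonal sum $(\lambda p)^{-1}\sum_{h\neq g}\bar{\bm{\ell}}_h\bm{e}_h^{\top}\bm{P}_h^{\perp}\bm{H}\bm{R}_g\bm{e}_g$ into the four $\bm{W}$-structured pieces used in \eqref{supp:equation:etPHPe} and apply Lemma~\ref{supp:lemma:Powera} and Corollary~\ref{supp:corollary:MaximalIne} to each. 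The overall bound is $O_P(\lambda^{-1/2+\epsilon})=o_P(1)$, so after multiplication by the explicit $n^{-1/2}$ prefactor the required $o_P(n^{-1/2})$ rate follows.
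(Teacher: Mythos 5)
Your proposal is correct and follows essentially the same route as the paper: reduce each ``extra'' summand in the explicit difference $\bm{s}_{g1}-\bm{s}_{g1}^{\known}$ (and $s_{g2}-s_{g2}^{\known}$) to a term controlled by the factor-analysis rates from Theorem~\ref{supp:theorem:ChatProp}, Theorem~\ref{supp:theorem:Omega}, Corollary~\ref{supp:corollary:InfNormC}, the boundedness properties of $h$ and its derivatives from Lemma~\ref{supp:lemma:PsiBounds}, and the maximal-inequality machinery of Lemma~\ref{supp:lemma:Powera} and Corollary~\ref{supp:corollary:MaximalIne}. The one organizational difference is that you freeze the parameter at $(\bm{\theta}_g^*,\sigma_g)$ early, arguing the $O_P(n^{-1/2})$ parameter perturbation contributes $o_P(n^{-1/2})$ because each extra summand is at least linear in $\bm{\delta}$; the paper instead keeps the generic argument in $(\tilde{\bm{\theta}},\tilde{\sigma})$ and absorbs perturbations term by term (e.g.\ by replacing $\ddot{\bm{H}}_{11}(\tilde{\bm{\theta}},\tilde{\sigma})$ with $\ddot{\bm{H}}_{11}(\bm{\theta}_g^*,\sigma_g)$ using Lemma~\ref{supp:lemma:PsiBounds}). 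Both are fine.

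One small caveat on the ``hard part'': the score terms involve $\bm{R}_g=\diag(r_{g1},\ldots,r_{gn})$ rather than $\bm{P}_g^{\perp}$, so the decomposition in equation~\eqref{supp:equation:etPHPe} does not apply verbatim — $\bm{R}_g$ is not mean-$I_n$ conditional on $\bm{y}$, whereas $\bm{W}_g$ is. You would instead center $\bm{R}_g$ at its conditional mean $\diag\{\pi_g(y_{g1}),\ldots,\pi_g(y_{gn})\}$ and then apply Lemma~\ref{supp:lemma:Powera}/Corollary~\ref{supp:corollary:MaximalIne}, which works just as well since the entries remain bounded and independent conditional on $\bm{y}$, $\bm{C}$. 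The paper sidesteps this by invoking ``the proof of Corollary~\ref{supp:corollary:CtC}'' without reproducing the $\bm{R}_g$-adapted calculation, so your more explicit treatment is if anything a net gain, but you should state the $\bm{R}_g$-centering adaptation rather than claiming the four $\bm{W}$-pieces carry over unchanged. Likewise your orthogonality observation $(\hat{\bm{C}}_{\perp}-\tilde{\bm{C}}\bm{v})^{\top}\bm{X}=0$ only kills the cross-term with $\bm{R}_g$ removed; with $\bm{R}_g$ present, the fluctuation $\bm{R}_g-\diag\{\pi_g(y_{gi})\}$ and the non-constant diagonal $\diag\{\pi_g(y_{gi})\}$ each need their own (standard) maximal-inequality argument, which you gesture at but should not wave through by appealing to ``the $\bm{R}_g-I_n$ fluctuations''.
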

\begin{proof}
We prove the result for $\bm{s}_{g1}$. The proof for $s_{g2}$ uses identical arguments, and has been omitted. The proof of Corollary~\ref{supp:corollary:CtC} can be used to show that
\begin{align*}
    &\norm*{ \bm{\delta}^{\top}\bm{R}_g(n^{-1/2}\bm{y}_g - \bm{Z}\bm{\theta}) }_2 \leq \norm*{ n^{-1/2}\bm{\delta}^{\top}\bm{R}_g \bm{y}_g }_2 + \norm*{ \bm{\delta}^{\top}\bm{R}_g \bm{Z} }_2\norm*{\bm{\theta}}_2 \leq  (1 + \norm*{\bm{\theta}}_2) o_P(n^{-1/2})\\
    &\norm*{ \bm{Z}^{\top}\bm{R}_g\bm{\delta}\bm{\theta} }_2 \leq \norm*{ \bm{Z}^{\top}\bm{R}_g\bm{\delta} }_2\norm*{\bm{\theta}}_2 = \norm*{\bm{\theta}}_2 o_P(n^{-1/2})
\end{align*}
for any $\bm{\theta} \in \Theta$. Next, for any $\{\bm{\theta},\sigma\} \in \Theta \times \mathcal{S}$,
\begin{align*}
    n^{-1/2}\bm{\delta}^{\top}(I_n - \bm{R}_g)\bm{\varepsilon}_1(\bm{\theta},\tilde{\sigma}) =& \bm{\delta}^{\top}\bm{V}(\bm{\theta},\sigma)\bm{\delta}\bm{\theta}\\
    \bm{V}(\bm{\theta},\sigma) =& \diag\left[ (1-r_{g1})\frac{\partial^2}{\partial \mu^2} h\{ \mu_1(\bm{\theta}_g) + \zeta_{1 }(n^{1/2}\bm{\delta}_{1 \bigcdot}^{\top}\bm{\theta}_g) ,\sigma\},\right.\\
    &\left.\ldots,(1-r_{gn})\frac{\partial^2}{\partial \mu^2} h\{ \mu_n(\bm{\theta}) + \zeta_{n }(n^{1/2}\bm{\delta}_{n \bigcdot}^{\top}\bm{\theta}),\sigma \} \right].
\end{align*}
for some $\zeta_1,\ldots,\zeta_n \in [0,1]$ that depend on $\bm{\theta}$ and $\sigma$. Since $\norm*{ \bm{V}(\bm{\theta},\sigma)}_2 \leq c$ for some constant $c>0$ that does not depend on $\bm{\theta}$ or $\sigma$ by Lemma~\ref{supp:lemma:PsiBounds}, $\sup_{\{\bm{\theta},\sigma\} \in \Theta \times \mathcal{S}} \norm*{n^{-1/2} \bm{\delta}^{\top}(I_n - \bm{R}_g)\bm{\varepsilon}_1(\bm{\theta},\tilde{\sigma}) }_2 = o_P(n^{-1/2})$ by Theorems~\ref{supp:theorem:ChatProp} and \ref{supp:theorem:Omega}. Next, since the entries of $\dot{\bm{h}}_1(\bm{\theta},\sigma)$ have uniformly bounded gradient (and entries) by Lemma~\ref{supp:lemma:PsiBounds} and $\norm*{\bm{\delta}}_2 = o_P(1)$,
\begin{align*}
    \norm*{ n^{-1/2}\bm{\delta}^{\top}(I_n-\bm{R}_g)\dot{\bm{h}}_1(\tilde{\bm{\theta}},\sigma) }_2 = \norm*{ n^{-1/2}\hat{\bm{z}}^{\top}\bm{Q}^{\top}(I_n-\bm{R}_g)\dot{\bm{h}}_1(\bm{\theta}_g^*,\sigma) }_2 + o_P(n^{-1/2})
\end{align*}
by Theorem~\ref{supp:theorem:ChatProp} for $\hat{\bm{z}}$ and $\bm{Q}$ as defined in \eqref{supp:equation:CperpExp}. Lemma~\ref{supp:lemma:Cte}, along with the same techniques used to prove Corollary~\ref{supp:corollary:CtC}, can be used to prove $\norm*{ n^{-1/2}\hat{\bm{z}}^{\top}\bm{Q}^{\top}(I_n-\bm{R}_g)\dot{\bm{h}}(\bm{\theta}_g^*) }_2 = o_P(n^{-1/2})$. The details have been omitted. Lastly,
\begin{align*}
    &n^{-1/2}\bm{Z}^{\top}(I_n-\bm{R}_g)\bm{\varepsilon}_1(\tilde{\bm{\theta}},\tilde{\sigma}) = \bm{Z}^{\top}(I_n-\bm{R}_g)\ddot{\bm{H}}_{11}(\tilde{\bm{\theta}},\tilde{\sigma})\bm{\delta}\tilde{\bm{\theta}} + \bm{Z}^{\top}(I_n-\bm{R}_g)\bm{r}(\tilde{\bm{\theta}},\tilde{\sigma})\\
    &\bm{r}(\bm{\theta},\sigma) = \frac{1}{2n^{1/2}}\left( ( n^{1/2}\bm{\delta}_{1*}^{\top} \bm{\theta} )^2\frac{\partial^3}{\partial \mu^3}h\{\mu_1(\bm{\theta})+\zeta_1,\sigma\} ,\ldots, (n^{1/2}\bm{\delta}_{n*}^{\top} \bm{\theta} )^2\frac{\partial^3}{\partial \mu^3}h\{\mu_n(\bm{\theta})+\zeta_n,\sigma\}\right)^{\top}
\end{align*}
for $\zeta_i = \alpha_i n^{1/2}\bm{\delta}_{i*}^{\top} \bm{\theta}$ and some $\alpha_i \in [0,1]$. Since the $d+1,\ldots,d+K$ entries of $n^{1/2}\bm{\theta}_g^*$ are $O(\lambda^{1/2})$ by Assumption~\ref{supp:assumptions:FA}, Corollary~\ref{supp:corollary:InfNormC} implies $\sup_{i \in [n]} (n^{1/2}\bm{\delta}_{i*}^{\top} \tilde{\bm{\theta}})^2 = o_P(n^{-1/2})$. Therefore, $\norm*{\bm{Z}^{\top}(I_n-\bm{R}_g)\bm{r}(\tilde{\bm{\theta}},\tilde{\sigma})}_2 = o_P(n^{1/2})$ by Lemma~\ref{supp:lemma:PsiBounds}. Finally, since $\norm*{ \ddot{\bm{H}}_{11}(\tilde{\bm{\theta}},\tilde{\sigma}) - \ddot{\bm{H}}_{11}(\bm{\theta}_g^*, \sigma_g) }_2 = O_P(n^{-1/2})$ by Lemma~\ref{supp:lemma:PsiBounds} and $\norm*{\bm{\delta}}_2 = o_P(1)$,
\begin{align*}
    \norm*{ \bm{Z}^{\top}(I_n-\bm{R}_g)\ddot{\bm{H}}_{11}(\tilde{\bm{\theta}},\tilde{\sigma})\bm{\delta}\tilde{\bm{\theta}} }_2 = O_P\{ \norm*{ \bm{Z}^{\top}(I_n-\bm{R}_g)\ddot{\bm{H}}_{11}(\bm{\theta}_g^*, \sigma_g) \bm{\delta} }_2 \} + o_P(n^{-1/2}).
\end{align*}
An application of Lemma~\ref{supp:lemma:Cte} and identical techniques used to prove Corollary~\ref{supp:corollary:CtC} can be used to show $\norm*{ \bm{Z}^{\top}(I_n-\bm{R}_g)\ddot{\bm{H}}_{11}(\bm{\theta}_g^*, \sigma_g) \bm{\delta} }_2 = o_P(n^{-1/2})$. This completes the proof.
\end{proof}

\begin{lemma}
\label{supp:lemma:Hessian}
Suppose the assumptions in the statement of Theorem~\ref{supp:theorem:InferenceBeta} hold, let $\bm{H}_g^{\known}$ be as defined in Lemma~\ref{supp:lemma:Cknown}, and let $B(\bm{\theta},\sigma;\epsilon)$ be the Euclidean ball centered at $(\bm{\theta}^{\top},\sigma)^{\top}$ with radius $\epsilon>0$. Then for all $\epsilon$ sufficiently small,
\begin{subequations}
\begin{align}
\label{supp:equation:Hessian:1}
    &\sup_{\{\bm{\theta},\sigma\} \in B(\bm{\theta}_g^*,\sigma_g;\epsilon)} \norm*{ \bm{H}(\bm{\theta},\sigma) - \bm{H}_g^{\known} }_2 = O_P(\epsilon) + o_P(1)\\
\label{supp:equation:Hessian:2}
    &\sup_{\{\bm{\theta},\sigma\} \in B(\bm{\theta}_g^*,\sigma_g;\epsilon)} \norm*{ \bm{F}(\bm{\theta},\sigma) - \bm{H}_g^{\known} }_2 = O_P(\epsilon) + o_P(1)
\end{align}
\end{subequations}
where $\bm{H}(\bm{\theta},\sigma) = -\nabla^2 f_{g}(\bm{\theta},\sigma)$ and $\bm{F}(\bm{\theta},\sigma) = \begin{pmatrix} \bm{F}_{11}(\bm{\theta},\sigma) & \bm{F}_{12}(\bm{\theta},\sigma)\\ \bm{F}_{12}(\bm{\theta},\sigma)^{\top} & F_{22}(\bm{\theta},\sigma) \end{pmatrix}$ is the Fisher information matrix given by
\begin{align*}
    \bm{F}_{11}(\bm{\theta},\sigma) =& n^{-1}\sum_{i=1}^n \left( \frac{q\{\mu_i(\bm{\theta}),\sigma\}}{\sigma^2} - [1-q\{\mu_i(\bm{\theta}),\sigma\}]\frac{\partial^2}{\partial \mu^2} h\{\mu_i(\bm{\theta}),\sigma\}  \right)\hat{\bm{Z}}_{i*}\hat{\bm{Z}}_{i*}^{\top}\\
    \bm{F}_{12}(\bm{\theta},\sigma) =& n^{-1}\sum_{i=1}^n \left(2\sigma^{-2} \smallint e \Psi[\alpha_g\{\mu_i(\bm{\theta}) + \sigma e - \delta_g\}] \phi(e)de\right.\\&\left. - [1-q\{\mu_i(\bm{\theta}),\sigma\}]\frac{\partial^2}{\partial \mu \partial \sigma} h\{\mu_i(\bm{\theta}),\sigma\}\right)\hat{\bm{Z}}_{i*}\\
    F_{22}(\bm{\theta},\sigma) =& n^{-1}\sum_{i=1}^n \left( \sigma^{-2}\smallint (3e^2 - 1) \Psi[\alpha_g\{\mu_i(\bm{\theta}) + \sigma e - \delta_g\}] \phi(e)de \right. \\& \left. - [1-q\{\mu_i(\bm{\theta}),\sigma\}]\frac{\partial^2}{\partial \sigma^2} h\{\mu_i(\bm{\theta}),\sigma\}\right)\\
    q(\mu,\sigma) =& \smallint \Psi\{\alpha_g (\mu + \sigma e - \delta_g)\} \phi(e)de
\end{align*}
\end{lemma}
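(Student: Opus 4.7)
The plan is to decompose
\[ \bm{H}(\bm{\theta},\sigma) - \bm{H}_g^{\known} = [\bm{H}(\bm{\theta},\sigma) - \bm{H}^{\known}(\bm{\theta},\sigma)] + [\bm{H}^{\known}(\bm{\theta},\sigma) - \bm{H}^{\known}(\bm{\theta}_g^*,\sigma_g)] + [\bm{H}^{\known}(\bm{\theta}_g^*,\sigma_g) - \bm{H}_g^{\known}], \]
where $\bm{H}^{\known}(\bm{\theta},\sigma) = -\nabla^2 f_g^{\known}(\bm{\theta},\sigma)$ uses the true $\bm{Z}$ in place of $\hat{\bm{Z}}$. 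The third bracket is $o_P(1)$ by the $\bm{H}_{ij}$ vs $\tilde{\bm{H}}_{ij}$ comparison and concentration $\norm*{\tilde{\bm{H}}_{ij}(\bm{\theta}_g^*,\sigma_g) - \E\{\tilde{\bm{H}}_{ij}(\bm{\theta}_g^*,\sigma_g)\}}_2 = O_P(n^{-1/2})$ already worked out in the proof of Lemma~\ref{supp:lemma:Cknown}, noting the identification $\bm{H}_g^{\known} = -\nabla^2 f_g^{\known}(\bm{\theta}_g^*,\sigma_g)$ there. The second bracket is $O_P(\epsilon)$ by stochastic equicontinuity, which follows because every third-order derivative of $h(\mu,\sigma)$ in $(\mu,\sigma)$ is uniformly bounded on a compact neighborhood by Lemma~\ref{supp:lemma:PsiBounds}, and $\bm{Z}_{i*}\bm{Z}_{i*}^\top$, $\bm{y}_{gi}^2$ have uniformly bounded moments by Lemma~\ref{supp:lemma:Cte}.

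For the first bracket, write $\hat{\bm{Z}} = \bm{Z} + \bm{\delta}$ with $\bm{\delta} = [\bm{0},\bm{\Delta}_C,\bm{0}]$, where $\norm*{\bm{\delta}}_2 = O_P(\lambda^{-1/2+\epsilon})$ and $\max_i \norm*{\bm{\delta}_{i*}}_2 = O_P(\lambda^{-1+\epsilon})$ by Theorems~\ref{supp:theorem:ChatProp},~\ref{supp:theorem:Omega} and Corollary~\ref{supp:corollary:InfNormC}. Expanding $\hat{\bm{Z}}_{i*}\hat{\bm{Z}}_{i*}^\top = \bm{Z}_{i*}\bm{Z}_{i*}^\top + \bm{Z}_{i*}\bm{\delta}_{i*}^\top + \bm{\delta}_{i*}\bm{Z}_{i*}^\top + \bm{\delta}_{i*}\bm{\delta}_{i*}^\top$, and Taylor-expanding $h\{\hat{\mu}_i(\bm{\theta}),\sigma\}$ to second order around $h\{\mu_i(\bm{\theta}),\sigma\}$ (using $\hat{\mu}_i - \mu_i = n^{1/2}\bm{\delta}_{i*}^\top\bm{\theta}$ with $\sup_i|\hat{\mu}_i - \mu_i| = o_P(1)$ by Corollary~\ref{supp:corollary:InfNormC}), produces three categories of error terms. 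Quadratic-in-$\bm{\delta}$ terms are bounded by $\norm*{\bm{\delta}}_2^2 \max_{g,i} w_{gi}$, which is $o_P(1)$ since $\lambda \gtrsim n^{1/2+\epsilon}$. Linear cross terms $n^{-1}\bm{\delta}^\top \bm{M}_g(\bm{\theta},\sigma)\bm{Z}$ with $\bm{M}_g$ a random diagonal matrix whose entries are bounded by $r_{gi}/\sigma^2 + (1-r_{gi})|\partial^2 h/\partial\mu^2|$ are handled by substituting the decomposition $\bm{\delta} = \tilde{\bm{C}}(\hat{\bm{v}}-\bm{v}) + \bm{Q}\hat{\bm{z}} + (n^{-1/2}\tilde{\bm{X}}_1)(\hat{\bm{\Omega}}_1 - \bm{v}^\top\bm{\Omega}_1)^\top$ from Theorem~\ref{supp:theorem:ChatProp} and applying the Corollary~\ref{supp:corollary:CtC}/Lemma~\ref{supp:lemma:sg1sg2} techniques. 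Taylor remainders $\tfrac{1}{2}(\hat{\mu}_i - \mu_i)^2 \partial^3 h/\partial\mu^3(\tilde{\mu}_i,\sigma)$ are $o_P(1)$ uniformly, because $\sup_i (n^{1/2}\bm{\delta}_{i*}^\top \bm{\theta})^2 = o_P(1)$ on $B(\bm{\theta}_g^*,\sigma_g;\epsilon)$ and $\partial^3 h/\partial\mu^3$ is bounded by Lemma~\ref{supp:lemma:PsiBounds}.

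For \eqref{supp:equation:Hessian:2}, observe first that taking the conditional expectation in $\bm{H}^{\known}(\bm{\theta}_g^*,\sigma_g)$ (using the missingness model $\Prob(r_{gi}=1\mid y_{gi}) = \Psi\{\alpha_g(y_{gi}-\delta_g)\}$) recovers exactly the quantities $q$, $\smallint e^j \Psi[\alpha_g\{\cdot+\sigma e - \delta_g\}]\phi(e)de$ that appear in $\bm{F}^{\known}(\bm{\theta},\sigma)$, so the deterministic identity $\E\{\bm{H}^{\known}(\bm{\theta}_g^*,\sigma_g)\mid \bm{C},\bm{G}\} = \bm{F}^{\known}(\bm{\theta}_g^*,\sigma_g)$ together with the concentration in Lemma~\ref{supp:lemma:Cknown} gives $\bm{F}^{\known}(\bm{\theta}_g^*,\sigma_g) = \bm{H}_g^{\known} + o_P(1)$. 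Then $\bm{F}(\bm{\theta},\sigma) - \bm{F}^{\known}(\bm{\theta},\sigma)$ is handled by the same $\hat{\bm{Z}} = \bm{Z} + \bm{\delta}$ expansion as above (bounded first derivatives in $\mu$ of $q$ and the moment-type integrals again by Lemma~\ref{supp:lemma:PsiBounds}), and $\bm{F}^{\known}(\bm{\theta},\sigma) - \bm{F}^{\known}(\bm{\theta}_g^*,\sigma_g) = O_P(\epsilon)$ uniformly by stochastic equicontinuity with the same Lemma~\ref{supp:lemma:PsiBounds}/Lemma~\ref{supp:lemma:Cte} inputs.

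The main obstacle is the uniform control of the linear cross term $n^{-1}\bm{Z}^\top (I_n - \bm{R}_g)\ddot{\bm{H}}_{11}(\bm{\theta},\sigma)\bm{\delta}$, since unlike the pointwise evaluation at $(\bm{\theta}_g^*,\sigma_g)$ in Lemma~\ref{supp:lemma:sg1sg2}, the diagonal weights now depend on a varying $(\bm{\theta},\sigma) \in B(\bm{\theta}_g^*,\sigma_g;\epsilon)$. The key will be to substitute the explicit expression for $\hat{\bm{z}}$ from Theorem~\ref{supp:theorem:ChatProp}: its leading part $\bm{Q}^\top\bm{s}\bm{v}$ is a linear functional of $\bm{E}$ that can be controlled via the Corollary~\ref{supp:corollary:CtC}-style decompositions of $\bm{e}_h^\top \bm{W}_h \bm{M}_g \bm{Z}$ with maximal inequalities from Corollary~\ref{supp:corollary:MaximalIne}, while the residual $\bm{\Delta}$ has squared norm $O_P(\lambda^{-2+\epsilon}) = o_P(n^{-1})$ and is absorbed trivially. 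Uniformity over $(\bm{\theta},\sigma)$ is then secured by bounding the Lipschitz constant of $\ddot{\bm{H}}_{11}(\bm{\theta},\sigma)$ via Lemma~\ref{supp:lemma:PsiBounds}, reducing to the pointwise case at $(\bm{\theta}_g^*,\sigma_g)$ modulo the $O_P(\epsilon)$ drift.
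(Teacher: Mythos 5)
Your decomposition is essentially the same as the paper's: since $\bm{H}_g^{\known} = -\nabla^2 f_g^{\known}(\bm{\theta}_g^*,\sigma_g) = \bm{H}^{\known}(\bm{\theta}_g^*,\sigma_g)$ by definition (Lemma~\ref{supp:lemma:Cknown}), your third bracket is identically zero, and the remaining two brackets match the paper's two-term bound precisely. Where the paper writes ``straightforward applications of Theorem~\ref{supp:theorem:ChatProp} and Lemma~\ref{supp:lemma:PsiBounds}'' you have correctly identified the three categories of error terms from the $\hat{\bm{Z}}=\bm{Z}+\bm{\delta}$ substitution, the role of $\sup_i\norm{\bm{\delta}_{i*}}_2 = O_P(\lambda^{-1+\eta})$ from Corollary~\ref{supp:corollary:InfNormC}, and the cross-term control borrowed from the proofs of Corollary~\ref{supp:corollary:CtC} and Lemma~\ref{supp:lemma:sg1sg2}. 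Your part on \eqref{supp:equation:Hessian:2} is a slightly finer decomposition (introducing $\bm{F}^{\known}$ explicitly, whereas the paper splits $\bm{F}(\bm{\theta},\sigma)-\bm{H}_g^{\known}$ into only two pieces), but the ingredients are the same.

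One minor imprecision worth flagging: you describe $\E\{\bm{H}^{\known}(\bm{\theta}_g^*,\sigma_g)\mid\bm{C},\bm{G}\} = \bm{F}^{\known}(\bm{\theta}_g^*,\sigma_g)$ as a ``deterministic identity.'' It is not exact here, because $f_g^{\known}$ does not account for the genetic effects $\gamma^{(e)}_{gs}G_{si}$ in $e_{gi}$, so the data-generating density for $\{r_{gi}y_{gi}\}$ conditional on $\{\bm{C},\bm{G}\}$ is not the one implied by $f_g^{\known}$. The identity holds only for the correctly-specified $\tilde{f}_g^{\known}$ of Lemma~\ref{supp:lemma:Cknown}, and the gap is controlled because $\max_{g,s}|\gamma^{(e)}_{gs}|=o(n^{-1/4})$ forces $\bm{H}^{\known}$ and $\tilde{\bm{H}}^{\known}$ (and likewise $\bm{F}^{\known}$ and its $\tilde{f}$ counterpart) to agree up to $o_P(1)$, which is exactly the $\bm{H}_{ij}$ vs.\ $\tilde{\bm{H}}_{ij}$ comparison you cite. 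Your conclusion is therefore correct, but the phrase ``deterministic identity'' overstates what holds; the argument is a concentration-plus-misspecification-vanishes argument, which you in fact invoke. With that wording fixed, your proof matches the paper's and gives a useful fleshing-out of its terser steps.
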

\begin{remark}
\label{supp:remark:InferenceBeta:Done}
Result \eqref{supp:equation:InfoMat} in the statement of Theorem~\ref{supp:theorem:InferenceBeta} follows from \eqref{supp:equation:asyequ} and \eqref{supp:equation:Hessian:1}. We therefore need only prove \eqref{supp:equation:asyequ} to complete the proof of Theorem~\ref{supp:theorem:InferenceBeta}.
\end{remark}
\begin{remark}
Since $\norm*{ \bm{H}_g^{\known} - \E\{ \bm{H}_g^{\known} \mid \bm{G},\bm{C} \} }_2 = o_P(1)$, Lemma~\ref{supp:lemma:Hessian} implies 
\begin{align*}
    \sup_{\{\bm{\theta},\sigma\} \in B(\bm{\theta}_g^*,\sigma_g;\epsilon)} \norm*{ \bm{H}(\bm{\theta},\sigma) - \E\{ \bm{H}_g^{\known} \mid \bm{G},\bm{C} \} }_2 = O_P(\epsilon) + o_P(1).
\end{align*}
\end{remark}
\begin{proof}
We first note that
\begin{align*}
    &\bm{H}(\bm{\theta},\sigma) = \begin{pmatrix} \bm{H}_{11}(\bm{\theta},\sigma) & \bm{H}_{12}(\bm{\theta},\sigma)\\ \bm{H}_{12}(\bm{\theta},\sigma)^{\top} & H_{22} (\bm{\theta},\sigma) \end{pmatrix}\\
    &\bm{H}_g^{\known} = \bm{H}^{\known}(\bm{\theta}_g^*,\sigma_g) = \begin{pmatrix} \bm{H}_{11}^{\known}(\bm{\theta},\sigma) & \bm{H}_{12}^{\known}(\bm{\theta},\sigma)\\ \bm{H}_{12}^{\known}(\bm{\theta},\sigma)^{\top} & H^{\known}_{22} (\bm{\theta},\sigma) \end{pmatrix}
\end{align*}
and
\begin{align*}
    \sup_{\{\bm{\theta},\sigma\} \in B(\bm{\theta}_g^*,\sigma_g;\epsilon)} \norm*{ \bm{H}(\bm{\theta},\sigma) - \bm{H}_g^{\known} }_2 \leq & \sup_{\{\bm{\theta},\sigma\} \in B(\bm{\theta}_g^*,\sigma_g;\epsilon)} \norm*{ \bm{H}^{\known}(\bm{\theta},\sigma) - \bm{H}_g^{\known} }_2\\
    &+ \sup_{\{\bm{\theta},\sigma\} \in B(\bm{\theta}_g^*,\sigma_g;\epsilon)} \norm*{ \bm{H}(\bm{\theta},\sigma) - \bm{H}^{\known}(\bm{\theta},\sigma) }_2,
\end{align*}
where the first term after the $\leq$ is $O_P(\epsilon) + o_P(1)$ by the proof of Lemma~\ref{supp:lemma:Cknown}. Straightforward applications of Theorem~\ref{supp:theorem:ChatProp} and Lemma~\ref{supp:lemma:PsiBounds} can be used to show
\begin{align*}
    \sup_{\{\bm{\theta},\sigma\} \in B(\bm{\theta}_g^*,\sigma_g;\epsilon)} \norm*{ \bm{H}(\bm{\theta},\sigma) - \bm{H}^{\known}(\bm{\theta},\sigma) }_2 = o_P(1),
\end{align*}
which proves \eqref{supp:equation:Hessian:1}. For \eqref{supp:equation:Hessian:2}, we see that
\begin{align*}
    \sup_{\{\bm{\theta},\sigma\} \in B(\bm{\theta}_g^*,\sigma_g;\epsilon)} \norm*{ \bm{F}(\bm{\theta},\sigma) - \bm{H}_g^{\known} }_2 \leq & \sup_{\{\bm{\theta},\sigma\} \in B(\bm{\theta}_g^*,\sigma_g;\epsilon)} \norm*{ \bm{F}(\bm{\theta},\sigma) - \bm{F}(\bm{\theta}_g^*,\sigma_g) }_2\\ &+ \norm*{ \bm{F}(\bm{\theta}_g^*,\sigma_g) - \bm{H}_g^{\known} }_2.
\end{align*}
The same techniques used to prove $\norm*{ \bm{F}(\bm{\theta}_g^*,\sigma_g) - \bm{H}_g^{\known} }_2 = o_P(1)$. Lastly, An application of Lemma~\ref{supp:lemma:PsiBounds} can be used to prove
\begin{align*}
    \sup_{\{\bm{\theta},\sigma\} \in B(\bm{\theta}_g^*,\sigma_g;\epsilon)} \norm*{ \bm{F}(\bm{\theta},\sigma) - \bm{F}(\bm{\theta}_g^*,\sigma_g) }_2 = O_P(\epsilon) + o_P(1),
\end{align*}
which completes the proof.
\end{proof}

Returning to the proof of Theorem~\ref{supp:theorem:InferenceBeta}, the observation that $( \hat{\bm{\theta}}_g,\hat{\sigma}_g )^{\top}$ and $( \hat{\bm{\theta}}_g^{\known},\hat{\sigma}_g^{\known} )^{\top}$ are consistent for $(\{\bm{\theta}_g^*\}^{\top},\sigma_g)^{\top}$, as well as Lemma~\ref{supp:lemma:Hessian} imply
\begin{align*}
    0 = \begin{pmatrix} \bm{s}_{g1}(\hat{\bm{\theta}}_g,\hat{\sigma}_g)\\ s_{g2}(\hat{\bm{\theta}}_g,\hat{\sigma}_g) \end{pmatrix} =& \begin{pmatrix} \bm{s}_{g1}\{\hat{\bm{\theta}}_g^{\known},\hat{\sigma}_g^{\known}\}\\ s_{g2}\{\hat{\bm{\theta}}_g^{\known},\hat{\sigma}_g^{\known}\} \end{pmatrix} - \begin{pmatrix} \bm{s}_{g1}^{\known}\{\hat{\bm{\theta}}_g^{\known},\hat{\sigma}_g^{\known}\}\\ s_{g2}^{\known}\{\hat{\bm{\theta}}_g^{\known},\hat{\sigma}_g^{\known}\} \end{pmatrix}\\& + \underbrace{\begin{pmatrix} \bm{s}_{g1}^{\known}\{\hat{\bm{\theta}}_g^{\known},\hat{\sigma}_g^{\known}\}\\ s_{g2}^{\known}\{\hat{\bm{\theta}}_g^{\known},\hat{\sigma}_g^{\known}\} \end{pmatrix}}_{0} - \bm{H}_g^{\known}\bm{\Delta} + o_P(\norm*{ \bm{\Delta} }_2)
\end{align*}
for $\bm{\Delta} = (\hat{\bm{\theta}}_g^{\top},\hat{\sigma}_g)^{\top} -  (\{\hat{\bm{\theta}}_g^{\known}\}^{\top},\hat{\sigma}_g^{\known})^{\top}$. Since $\norm*{(\{\hat{\bm{\theta}}_g^{\known}\}^{\top},\hat{\sigma}_g^{\known})^{\top}}_2=O_P(n^{-1/2})$, Lemma~\ref{supp:lemma:sg1sg2} implies $\norm*{ \bm{\Delta} }_2 = o_P(n^{-1/2})$, which completes part (a) (the first part of the proof; see above).

For part (b) (the second part of the proof; see above), let $\bm{F}(\bm{\theta},\sigma) \in \mathbb{R}^{(d+K+1) \times (d+K+1)}$ be the Fisher scoring matrix defined in the statement of Lemma~\ref{supp:lemma:Hessian}. Then for $\hat{\bm{\eta}}_g^{(j)} = ( \{ \hat{\bm{\theta}}^{(j)}_g \}^{\top},\hat{\sigma}_g^{(j)} )^{\top}$ the $j$th Fisher scoring updates, $\hat{\bm{\eta}}_g^{\ipw} = ( \{ \hat{\bm{\theta}}^{\ipw}_g \}^{\top},\hat{\sigma}_g^{\ipw} )^{\top}$, and $\hat{\bm{\eta}}_g = (  \hat{\bm{\theta}}_g^{\top},\hat{\sigma}_g )^{\top}$,
\begin{align*}
    &\hat{\bm{\eta}}_g^{(1)} = \hat{\bm{\eta}}_g^{\ipw} + [ \bm{F}\{ \hat{\bm{\eta}}_g^{\ipw} \} ]^{-1}\begin{pmatrix}\bm{s}_{g1}\{ \hat{\bm{\eta}}_g^{\ipw} \}\\ s_{g2}\{ \hat{\bm{\eta}}_g^{\ipw} \} \end{pmatrix}\\
    &\hat{\bm{\eta}}_g^{(j+1)} = \hat{\bm{\eta}}_g^{(j)} + [ \bm{F}\{ \hat{\bm{\eta}}_g^{(j)} \} ]^{-1}\begin{pmatrix}\bm{s}_{g1}\{ \hat{\bm{\eta}}_g^{(j)} \}\\ s_{g2}\{ \hat{\bm{\eta}}_g^{(j)}\} \end{pmatrix}, \quad j=1,\ldots,m-1.
\end{align*}
We study the behavior of $\hat{\bm{\eta}}_g^{(1)}$ for simplicity, and note the extension to finite $j>1$ is trivial. Since $\norm*{ \hat{\bm{\eta}}_g^{\ipw} - \hat{\bm{\eta}}_g^* }_2 = O_P(n^{-1/2})$ by Lemmas~\ref{supp:lemma:IPWest} and \ref{supp:lemma:IPWestVar}, $\norm*{ \hat{\bm{\eta}}_g^{\ipw} - \hat{\bm{\eta}}_g }_2 = O_P(n^{-1/2})$. Lemma~\ref{supp:lemma:Hessian} then implies
\begin{align*}
    \hat{\bm{\eta}}_g^{(1)} =& \hat{\bm{\eta}}_g^{\ipw} + [ \bm{F}\{ \hat{\bm{\eta}}_g^{\ipw} \} ]^{-1} \left[ \underbrace{\begin{pmatrix}\bm{s}_{g1}( \hat{\bm{\eta}}_g )\\ s_{g2}( \hat{\bm{\eta}}_g ) \end{pmatrix}}_{0} - \smallint_{0}^1 \bm{H}\{ t\hat{\bm{\eta}}_g^{\ipw} + (1-t)\hat{\bm{\eta}}_g \}dt \{ \hat{\bm{\eta}}_g^{\ipw} - \hat{\bm{\eta}}_g \} \right]\\
    =& \hat{\bm{\eta}}_g + o_P(n^{-1/2}),
\end{align*}
which completes the proof.
\end{proof}

\begin{lemma}
\label{supp:lemma:PsiBounds}
Let $c,m,M>0$ be constants and suppose $\Psi(x)$ is a six times continuously differentiable cumulative distribution function, where
\begin{enumerate}[label=(\roman*)]
\item $\Psi(-x)=1-\Psi(x)$ and $\abs*{\Psi^{(j)}(x)} \leq c$ for all $j \in [6]$.
\item $\abs*{x}^m\Psi(x) \geq c$ for all $x < -M$
\item $\abs*{x}^m\abs*{\Psi^{(j)}(x)}\leq c$ for $j\in [6]$ and all $\abs*{x} > M$.
\end{enumerate}
Define $\mu(x,\sigma) = \log\{\smallint\Psi(x + \sigma e) \phi(e)\text{d}e\}$ for all $x \in \mathbb{R}$ and $\sigma \in (s^{-1},s)$ for some constant $s>1$. Then for some constant $\tilde{c}$ and $i,j\geq 0$ such that $i+j \in [3]$,
\begin{align*}
    \abs{ \frac{\partial^{(i+j)}}{\partial x^i \partial \sigma^j }  \mu(x,\sigma)} \leq \tilde{c}.
\end{align*}
\end{lemma}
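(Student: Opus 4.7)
}

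The plan is to reduce bounds on the derivatives of $\mu$ to bounds on ratios $(\partial_x^{i'}\partial_\sigma^{j'}q)/q$, where $q(x,\sigma)=\int\Psi(x+\sigma e)\phi(e)\,\mathrm{d}e$, and then to bound these ratios uniformly by separately analyzing a bounded region in $x$ and the two tails. The identity to be used is Faà di Bruno's formula for $\mu=\log q$: for every $(i,j)$ with $i+j\le 3$ one can write
\begin{equation*}
\partial_x^i\partial_\sigma^j \mu \;=\; \sum_{k=1}^{i+j} \frac{(-1)^{k-1}(k-1)!}{q^{k}}\;\sum_{\pi} \prod_{B\in\pi}\bigl(\partial_x^{|B|_x}\partial_\sigma^{|B|_\sigma} q\bigr),
\end{equation*}
where $\pi$ ranges over set partitions of $\{1,\dots,i+j\}$ into $k$ blocks and $|B|_x$, $|B|_\sigma$ count the $x$- and $\sigma$-derivatives assigned to block $B$. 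Consequently, it suffices to show that $R_{ij}:=|\partial_x^i\partial_\sigma^j q|/q$ is uniformly bounded for $i+j\le 3$, $x\in\mathbb{R}$, $\sigma\in(s^{-1},s)$. Differentiating under the integral sign (justified by the global bound $|\Psi^{(k)}|\le c$ from (i)) gives the clean expression $\partial_x^i\partial_\sigma^j q = \int \Psi^{(i+j)}(x+\sigma e)\, e^{j}\,\phi(e)\,\mathrm{d}e$.

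First I would dispose of bounded $x$: on any compact set $x\in[-M',M']$, $\sigma\in[s^{-1},s]$, $q$ is continuous and strictly positive (since $\Psi>0$ everywhere) so $q$ is bounded below, while the numerators are bounded above by continuity and dominated convergence; this handles the compact case. The case $x\to+\infty$ is also easy because $q(x,\sigma)\to 1$ (and so $q\ge 1/2$ eventually), while the numerator is bounded by $c\cdot\mathbb{E}|Z|^j$ using $|\Psi^{(i+j)}|\le c$.

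The nontrivial case is $x\to-\infty$, where both $q$ and its derivatives vanish only polynomially and one must show they vanish at the \emph{same} polynomial rate. I will use the splitting $\mathbb{R}=\mathcal{I}\cup\mathcal{O}$ with $\mathcal{I}=\{|e|\le |x|/(2\sigma)\}$. On $\mathcal{I}$, $|x+\sigma e|\in[|x|/2,3|x|/2]>a_2$ for $|x|$ large, so by assumption (ii), $\Psi(x+\sigma e)\ge a_2^{-1}|x+\sigma e|^{-m}\ge c_1|x|^{-m}$, and by (iii), $|\Psi^{(i+j)}(x+\sigma e)|\le a_2|x+\sigma e|^{-m}\le c_2|x|^{-m}$. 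On $\mathcal{O}$, Gaussian concentration gives $P(Z\in\mathcal{O})\le e^{-x^2/(8\sigma^2)}$, which is super-polynomially small; combined with the global bounds $\Psi\le 1$ and $|\Psi^{(k)}|\le c$, the $\mathcal{O}$-contribution is negligible relative to $|x|^{-m}$. Combining,
\begin{equation*}
q(x,\sigma)\ge c_1|x|^{-m}\,P(\mathcal{I})-o(|x|^{-m})\;\asymp\;|x|^{-m},\qquad |\partial_x^i\partial_\sigma^j q|\le c_2|x|^{-m}\mathbb{E}|Z|^j + o(|x|^{-m})\;\lesssim\;|x|^{-m},
\end{equation*}
so $R_{ij}=O(1)$ uniformly for $|x|$ large. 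Patching with the compact case and the $x\to+\infty$ case yields a uniform bound on each $R_{ij}$, and then Faà di Bruno delivers the claimed bound on $\partial_x^i\partial_\sigma^j\mu$.

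The main obstacle will be the $x\to-\infty$ analysis: one must verify that the matching $|x|^{-m}$ rates of numerator and denominator are in fact achieved \emph{uniformly in} $\sigma\in(s^{-1},s)$ and for \emph{all} mixed derivatives up to order three simultaneously. The factors of $e^{j}$ in the numerator are handled by noting that on $\mathcal{I}$ they are integrated against $\phi(e)$ and give only the finite moment $\mathbb{E}|Z|^j$, while on $\mathcal{O}$ the factor $|e|^{j}\le (|x|/(2\sigma))^{j'}|e|^{j-j'}$ for any $j'\le j$, which is absorbed by the Gaussian tail. A minor bookkeeping issue is that assumption (i) of the lemma only explicitly controls $|\Psi^{(j)}|\le c$ for $j\in[6]$ and assumption (iii) controls the same $j\in[6]$ near infinity, which comfortably covers $i+j\le 3$.
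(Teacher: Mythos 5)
Your proposal is correct and follows essentially the same route as the paper: reduce to showing the ratios $R_{ij} = |\partial_x^i\partial_\sigma^j q|/q$ are uniformly bounded, dispose of the region $x\gtrsim -k$ by a positivity/monotonicity argument for $q$, and in the regime $x\to-\infty$ split the Gaussian integral into a central region where the polynomial tail hypotheses on $\Psi$ and $\Psi^{(j)}$ apply and a super-polynomially small outer region, to conclude that numerator and denominator decay at the matching rate $|x|^{-m}$.

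The one genuine mechanical difference is how the $\sigma$-derivatives are handled. You keep the $e^j$ factor from direct differentiation, $\partial_x^i\partial_\sigma^j q = \int\Psi^{(i+j)}(x+\sigma e)\,e^j\,\phi(e)\,\mathrm{d}e$, and absorb $e^j$ into Gaussian moments on the central region and into the Gaussian tail on the outer region. The paper instead repeatedly integrates by parts against $\phi$ (using $\phi'(e)=-e\phi(e)$) so that each $\sigma$-derivative trades an $e$ factor for an extra order of differentiation of $\Psi$; this is why $\partial_\sigma^3\mu$ in the paper involves $\Psi^{(6)}$ and why the hypothesis is stated for $j\in[6]$. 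Your version only invokes $\Psi^{(k)}$ for $k\le 3$, so it would justify a formally weaker hypothesis, at the cost of having to track the polynomial $e^j$ weight in the tail estimate (which you do correctly). Both are valid and prove the lemma as stated.
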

\begin{proof}
\begin{align*}
    \frac{\partial^{1}}{\partial x^1 }  \mu(x,\sigma) =& \frac{ \smallint \dot{\Psi}(x + \sigma e) \phi(e)de }{ \smallint \Psi(x + \sigma e) \phi(e)de }\\
    \frac{\partial^{2}}{\partial x^2 }  \mu(x,\sigma) =&  \frac{ \smallint \ddot{\Psi}(x + \sigma e) \phi(e)de }{ \smallint \Psi(x + \sigma e) \phi(e)de } - \left\{ \frac{\partial^{1}}{\partial x^1 }  \mu(x,\sigma) \right\}^2\\
    \frac{\partial^{3}}{\partial x^3 }  \mu(x,\sigma) =& \frac{ \smallint \dddot{\Psi}(x + \sigma e) \phi(e)de }{ \smallint \Psi(x + \sigma e) \phi(e)de } - \frac{ \smallint \ddot{\Psi}(x + \sigma e) \phi(e)de }{ \smallint \Psi(x + \sigma e) \phi(e)de } \frac{ \smallint \dot{\Psi}(x + \sigma e) \phi(e)de }{ \smallint \Psi(x + \sigma e) \phi(e)de } \\&- 2 \frac{\partial^{1}}{\partial x^1 }  \mu(x,\sigma) \frac{\partial^{2}}{\partial x^2 }  \mu(x,\sigma)\\
    \frac{\partial^{1}}{\partial \sigma^1 }  \mu(x,\sigma) = &\sigma\frac{ \smallint \ddot{\Psi}(x + \sigma e) \phi(e)de }{ \smallint \Psi(x + \sigma e) \phi(e)de }\\
    \frac{\partial^{2}}{\partial \sigma^2 }  \mu(x,\sigma) =& \sigma^2\frac{ \smallint \Psi^{(4)}(x + \sigma e) \phi(e)de }{ \smallint \Psi(x + \sigma e) \phi(e)de } - \left\{ \frac{\partial^{1}}{\partial \sigma^1 }\mu(x,\sigma) \right\}^2 + \frac{ \smallint \ddot{\Psi}(x + \sigma e) \phi(e)de }{ \smallint \Psi(x + \sigma e) \phi(e)de } \\
    \frac{\partial^{3}}{\partial \sigma^3 }  \mu(x,\sigma) =& \sigma^3\frac{ \smallint \Psi^{(6)}(x + \sigma e) \phi(e)de }{ \smallint \Psi(x + \sigma e) \phi(e)de } - \sigma^3 \frac{ \smallint \Psi^{(4)}(x + \sigma e) \phi(e)de }{ \smallint \Psi(x + \sigma e) \phi(e)de } \frac{ \smallint \ddot{\Psi}(x + \sigma e) \phi(e)de }{ \smallint \Psi(x + \sigma e) \phi(e)de }\\& - 2 \frac{\partial^{1}}{\partial \sigma^1 }\mu(x,\sigma) \frac{\partial^{2}}{\partial \sigma^2 }\mu(x,\sigma) + 3\sigma \frac{ \smallint \Psi^{(4)}(x + \sigma e) \phi(e)de }{ \smallint \Psi(x + \sigma e) \phi(e)de } - \sigma^{-1}\left\{ \frac{\partial^{1}}{\partial \sigma^1 }  \mu(x,\sigma) \right\}^2 \\
    \frac{\partial^{2}}{\partial x^1 \partial \sigma^1}  \mu(x,\sigma) =&  \sigma\frac{ \smallint \dddot{\Psi}(x + \sigma e) \phi(e)de }{ \smallint \Psi(x + \sigma e) \phi(e)de } - \frac{\partial^{1}}{\partial x^1 }  \mu(x,\sigma) \frac{\partial^{1}}{\partial \sigma^1 }  \mu(x,\sigma) \\
    \frac{\partial^{3}}{\partial x^2 \partial \sigma^1}  \mu(x,\sigma) =& \sigma\frac{ \smallint \Psi^{(4)}(x + \sigma e) \phi(e)de }{ \smallint \Psi(x + \sigma e) \phi(e)de } - \sigma\frac{ \smallint \dddot{\Psi}(x + \sigma e) \phi(e)de }{ \smallint \Psi(x + \sigma e) \phi(e)de } \frac{\partial^{1}}{\partial x^1 }  \mu(x,\sigma)\\ &  - \frac{\partial^{2}}{\partial x^2 }  \mu(x,\sigma) \frac{\partial^{1}}{\partial \sigma^1 }  \mu(x,\sigma)- \frac{\partial^{1}}{\partial x^1 }  \mu(x,\sigma) \frac{\partial^{2}}{\partial x^1\partial \sigma^1 }  \mu(x,\sigma)\\
    \frac{\partial^{3}}{\partial x^1 \partial \sigma^2}  \mu(x,\sigma) =& \sigma^2\frac{ \smallint \Psi^{(5)}(x + \sigma e) \phi(e)de }{ \smallint \Psi(x + \sigma e) \phi(e)de } - \frac{ \smallint \dddot{\Psi}(x + \sigma e) \phi(e)de }{ \smallint \Psi(x + \sigma e) \phi(e)de } \left\{ \sigma \frac{\partial^{1}}{\partial \sigma^1 }  \mu(x,\sigma) - 1 \right\}\\ &  - \frac{\partial^{1}}{\partial x^1 }  \mu(x,\sigma) \frac{\partial^{2}}{\partial \sigma^2 }  \mu(x,\sigma)- \frac{\partial^{1}}{\partial \sigma^1 }  \mu(x,\sigma) \frac{\partial^{2}}{\partial x^1\partial \sigma^1 }  \mu(x,\sigma).
\end{align*}
Since $\sigma$ is bounded above 0 and below $\infty$, we therefore only have to show that $\abs{ \frac{ \smallint \Psi^{(j)}(x + \sigma e) \phi(e)de }{ \smallint \Psi(x + \sigma e) \phi(e)de } }$ is bounded from above for all $j \in [6]$ to prove the lemma. First, $\abs*{ \Psi^{(j)}(x + \sigma e) }$ is bounded from above. Second, $\smallint\Psi(x + \sigma e)\phi(e)de>0$ is increasing in $\sigma$ for all $\abs*{x}$ suitably large. Third, $\smallint\Psi(x + \sigma e)\phi(e)de$ is increasing in $x$ for all fixed $\sigma$. The latter two imply $\smallint\Psi(x + \sigma e)\phi(e)de>a_k$ for all $x > -k$ and $\sigma \in (s^{-1},s)$, where $k>0$ and $a_k>0$ is a constant that only depends on $k$. These three imply we need only consider the behavior of $\frac{ \smallint \Psi^{(j)}(x + \sigma e) \phi(e)de }{ \smallint \Psi(x + \sigma e) \phi(e)de }$ when $(-x)$ is large to prove the lemma. Let $M$, $c$, and $m$ be as defined in the statement of Lemma~\ref{supp:lemma:PsiBounds}. Then for $Z \sim N(0,1)$,
\begin{align*}
    \int \Psi(x + \sigma e)\phi(e)de \geq & \int_{-\infty}^{\frac{-M-x}{\sigma}} \Psi( x + \sigma e ) \phi(e)de \geq c \int_{-\infty}^{\frac{-M-x}{\sigma}} \abs*{ x+\sigma e }^{-m} \phi(e)de\\
    = & \E\{ (-x+\sigma Z)^{-m} 1\{ -x+\sigma Z \geq M \} \}\\
    \geq & [\E\{ (-x+\sigma Z) 1\{ -x+\sigma Z \geq M \} \}]^{-m} \geq ( -x/2 )^{-m},
\end{align*}
where the last inequality holds for all $(-x)>0$ sufficiently large. Further, for all $(-x)>0$ sufficiently large and some constant $\epsilon>0$
\begin{align}
\label{supp:equation:PsiBound1}
\begin{aligned}
    \int \Psi^{(j)}(x + \sigma e)\phi(e)de =& \int_{\frac{-M-x}{\sigma}}^{\frac{M-x}{\sigma}} \Psi^{(j)}(x + \sigma e)\phi(e)de + \int_{\frac{M-x}{\sigma}}^{\infty} \Psi^{(j)}(x + \sigma e)\phi(e)de\\& + \int_{-\infty}^{\frac{-M-x}{\sigma}} \Psi^{(j)}(x + \sigma e)\phi(e)de \\
    \leq& \frac{2c M}{\sigma}\phi\left(\frac{-M-x}{\sigma}\right) + \epsilon \frac{ \phi\left(\frac{M-x}{\sigma}\right) }{(M-x)/\sigma}\\
    &+ c \int_{-\infty}^{\frac{-M-x}{\sigma}} \abs*{ x+\sigma e }^{-m}\phi(e)de.
\end{aligned}
\end{align}
First, $\abs*{x}^m\phi\left(\frac{-M-x}{\sigma}\right)$ is bounded from above as a function of $x \in \mathbb{R}$ and $\sigma \in (s^{-1},s)$. Second,
\begin{align}
\label{supp:equation:PsiBound2}
\begin{aligned}
    \int_{-\infty}^{\frac{-M-x}{\sigma}} \abs*{ x+\sigma e }^{-m}\phi(e)de =& \int_{-\infty}^{\frac{-M-x}{2\sigma}} \abs*{ x+\sigma e }^{-m}\phi(e)de + \int_{\frac{-M-x}{2\sigma}}^{\frac{-M-x}{\sigma}} \abs*{ x+\sigma e }^{-m}\phi(e)de\\
    \leq & \abs{ \frac{(-x)+M}{2} }^{-m} + M^{-m}\phi\left\{ \frac{(-x)-M}{2\sigma} \right\} 
\end{aligned}
\end{align}
for all $(-x)>0$ sufficiently large, which completes the proof.
\end{proof}

\begin{remark}
\label{supp:remark:PsiBound}
If we replace condition (iii) in the statement of Lemma~\ref{supp:lemma:PsiBounds} with $\abs*{x}^{m+\delta} \allowbreak\abs*{\Psi^{(j)}(x)}\allowbreak \leq c$ for any $\delta>0$, we would replace $-m$ with $-(m+\delta)$ in \eqref{supp:equation:PsiBound1} and \eqref{supp:equation:PsiBound2}, which would prove that $\abs{ \frac{\partial^{i+j}}{\partial x^i \partial \sigma^j} \mu(x,\sigma) } \to 0$ as $\abs*{x} \to \infty$. This shows that outlying missing data points have a trivial contribution to the gradient of the log-likelihood in \eqref{equation:thetagLikelihood}, suggesting that letting $\Psi$ be the CDF of a t-distribution makes estimation robust to outliers.
\end{remark}

\section{Theoretical guarantees for mtGWAS}
\label{supp:section:TheorymtGWAS}

\subsection{A restatement of Theorem~\ref{theorem:mtGWAS}}
\label{supp:subsection:TheorymtGWAS}
Before proving our results for mtGWAS, we first redefine $\eta_{gs}^{(e)}$, $\eta_{gs}^{(c)}$, and $\eta_{gs}^{(c,e)}$ to all observed nuisance covariates $\bm{x}_i$. First,
\begin{align}
\label{supp:equation:ScoreTest}
\begin{aligned}
\eta_{gs}^{(e)} =& \{ \textstyle\sum_{i=1}^n \frac{\partial}{\partial \gamma} h_{gsi}( \gamma,\hat{\bm{\theta}}_g,\hat{\sigma}_g )\mid_{\gamma=0} \}^2 [ \{ -\mathcal{I}_{gs}(\hat{\bm{\theta}}_g,\hat{\sigma}_g) \}^{-1} ]_{11}\\
\{\hat{\bm{\theta}}_g,\hat{\sigma}_g\} =& \argmax_{\bm{\theta}\in \mathbb{R}^{d+K}, \sigma \in \mathbb{R}_{+}} \sum_{i=1}^n h_{gsi}( 0,\bm{\theta},\sigma )\\
h_{gsi}( \gamma,\bm{\theta},\sigma ) =& -r_{gi}\{ y_{gi}-(\bm{\theta}^{\top}\hat{\bm{z}}_i + \gamma G_{si})\}^2/(2\sigma^2)\\ &+ (1-r_{gi})\log[1- \smallint \Psi\{ \hat{\alpha}_g( \bm{\theta}^{\top}\hat{\bm{z}}_i + \gamma G_{si} + \sigma e - \hat{\delta}_g ) \}\phi(e)\text{d}e ]\\ \hat{\bm{z}}_{i} =& (\bm{x}_i^{\top},\hat{\bm{c}}_i)^{\top},
\end{aligned}
\end{align}
where we solve the optimization problem in the second line using the one-step Fisher scoring algorithm detailed in the statement of Theorem~\ref{theorem:betag}. The matrix $\mathcal{I}_{gs}(\bm{\theta},\sigma)$ is the standard $(K+d+1) \times (K+d+1)$ Fisher information matrix evaluated at $\{\bm{\theta},\sigma\}$ and using covariates $\hat{\bm{z}}_i$. We next define $\eta_{gs}^{(c)}$ to be
\begin{align}
\label{supp:equation:ScoreTest:c}
    \eta_{gs}^{(c)} = \frac{ \{\hat{\bm{\ell}}_g^{\top} \hat{\bm{\gamma}}_s^{(c)}\}^2 }{ \hat{\bm{\ell}}_g^{\top}\hat{\V}\{ \hat{\bm{\gamma}}_{s}^{(c)} \}\hat{\bm{\ell}}_g + \{\hat{\bm{\gamma}}_s^{(c)}\}^{\top}\hat{\V}(\hat{\bm{\ell}}_g)\hat{\bm{\gamma}}_s^{(c)} }, \quad  \hat{\bm{\gamma}}_s^{(c)} = (\bm{G}_s^{\top} P_{\bm{X}}^{\perp} \bm{G}_s)^{-1}P_{\bm{X}}^{\perp} \hat{\bm{C}},
\end{align}
where $\bm{G}_s = (G_{s1},\ldots,G_{sn})^{\top}$. The estimate $\hat{\bm{\ell}}_g$ is the appropriate sub-vector of $\hat{\bm{\theta}}_g$ defined in \eqref{supp:equation:ScoreTest}, $\hat{\V}(\hat{\bm{\ell}}_g)$ is the appropriate $K \times K$ sub-matrix $\mathcal{I}_{gs}(\hat{\bm{\theta}}_g,\hat{\sigma}_g)$ defined in \eqref{supp:equation:ScoreTest}, and $\hat{\V}\{ \hat{\bm{\gamma}}_{s}^{(c)} \}$ is the usual ordinary least squares estimate for the variance of $\hat{\bm{\gamma}}_{s}^{(c)}$ from the regression of $\hat{\bm{C}}$ onto $\bm{G}_s$ and $\bm{X}$. We can now re-state Theorem~\ref{theorem:mtGWAS}.

\begin{theorem}
\label{supp:theorem:mtGWAS}
Suppose Assumption~\ref{supp:assumptions:FA} holds, fix a $g \in [p]$, and let $\eta_{gs}^{(e)}$ and $\eta_{gs}^{(c)}$ be as defined in \eqref{supp:equation:ScoreTest} and \eqref{supp:equation:ScoreTest:c}. Then $\eta_{gs}^{(e)} \tdist \chi^2_1$ if $H_{0,gs}^{(e)}: \gamma_{gs}^{(e)} = 0$ is true. If (i) $n^{1/2}\norm*{\bm{\ell}_g}_2 \to \infty$ and (ii) $\E(\bm{c}_i \mid G_{si}) = \bm{A}^{\top}\bm{x}_i + \bm{\gamma}_s^{(c)} G_{si}$ for some non-random $\bm{A} \in \mathbb{R}^{d \times K}$, then $\eta_{gs}^{(c)} \tdist \chi^2_1$ if $H_{0,gs}^{(c)}: \bm{\ell}_g^{\top}\gamma_{s}^{(c)} = 0$ is true and $\eta_{gs}^{(c,e)} = \eta_{gs}^{(c)} + \eta_{gs}^{(e)} \tdist \chi^2_2$ if $H_{0,gs}^{(c,e)}: \bm{\ell}_g^{\top}\gamma_{s}^{(c)} = \gamma_{gs}^{(e)} = 0$ is true.
\end{theorem}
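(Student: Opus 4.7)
The plan is to prove the three claims separately, with each reducing to a score or Wald-type argument once the asymptotic properties of $\hat{\bm{C}}$ (Theorem~\ref{supp:theorem:ChatProp} and Corollary~\ref{supp:corollary:InfNormC}), $\hat{\bm{\ell}}_g$ (Corollary~\ref{supp:corollary:LtL} and Theorem~\ref{supp:theorem:InferenceBeta}), and the Fisher information (Lemma~\ref{supp:lemma:Hessian}) are brought to bear. For the first claim, $\eta_{gs}^{(e)} \tdist \chi_1^2$ under $H_{0,gs}^{(e)}$, I would treat this as a standard score test for the single parameter $\gamma$ in the model $h_{gsi}(\gamma,\bm{\theta},\sigma)$. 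First I would decompose the score $\sum_i \partial h_{gsi}/\partial\gamma|_{\gamma=0}$ into an observed-data piece proportional to $r_{gi}(y_{gi}-\hat{\bm{\theta}}_g^{\top}\hat{\bm{z}}_i)G_{si}/\hat{\sigma}_g^2$ and a missing-data piece of the form $(1-r_{gi})G_{si} \cdot \partial h/\partial \mu$ evaluated at the estimated mean. Using Theorem~\ref{supp:theorem:InferenceBeta} to replace $\hat{\bm{\theta}}_g, \hat{\sigma}_g$ by $\bm{\theta}_g^*, \sigma_g$ up to $o_P(n^{-1/2})$ error and Corollary~\ref{supp:corollary:InfNormC} to replace $\hat{\bm{z}}_i$ by $\bm{z}_i$ with uniform error, the score becomes $n^{-1/2}\sum_i G_{si}\psi_{gi}$ for an effective residual $\psi_{gi}$ that, under $H_{0,gs}^{(e)}$, is mean zero conditional on $G_{si}$ because $G_{si}$ is independent of $\bm{c}_i$ and $\bm{\Delta}_{gi}^{(e)}$ by Assumption~\ref{supp:assumptions:FA}. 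A Lindeberg CLT conditional on $\mathcal{G}\setminus\{\bm{G}_s\}$ and $\bm{C}$, combined with convergence of $\mathcal{I}_{gs}(\hat{\bm{\theta}}_g,\hat{\sigma}_g)$ via Lemma~\ref{supp:lemma:Hessian}, then delivers the $\chi_1^2$ limit.

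For the second claim, the key step is to unravel the rotation $\bm{v}$ that appears in Theorem~\ref{supp:theorem:ChatProp}: since $\hat{\bm{\ell}}_g$ estimates $\bm{v}^{\top}(\bm{C}^{\top}P_{\bm{X}}^{\perp}\bm{C}/n)^{1/2}\bm{\ell}_g$ while $\hat{\bm{\gamma}}_s^{(c)}$ estimates $\bm{v}^{\top}(\bm{C}^{\top}P_{\bm{X}}^{\perp}\bm{C}/n)^{-1/2}\bm{\gamma}_s^{(c)}$ (both up to negligible error), the inner product $\hat{\bm{\ell}}_g^{\top}\hat{\bm{\gamma}}_s^{(c)}$ targets $\bm{\ell}_g^{\top}\bm{\gamma}_s^{(c)}$ which is zero under $H_{0,gs}^{(c)}$. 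I would then expand
\begin{align*}
\hat{\bm{\ell}}_g^{\top}\hat{\bm{\gamma}}_s^{(c)} = \bar{\bm{\ell}}_g^{\top}\{\hat{\bm{\gamma}}_s^{(c)}-\bar{\bm{\gamma}}_s^{(c)}\} + \{\hat{\bm{\ell}}_g-\bar{\bm{\ell}}_g\}^{\top}\bar{\bm{\gamma}}_s^{(c)} + R_n,
\end{align*}
with $\bar{\bm{\ell}}_g, \bar{\bm{\gamma}}_s^{(c)}$ the rotated population targets and $R_n$ a product of two small errors that is negligible. Corollary~\ref{supp:corollary:LtL} and Theorem~\ref{supp:theorem:InferenceBeta} give a CLT for $\hat{\bm{\ell}}_g-\bar{\bm{\ell}}_g$, while the OLS structure of $\hat{\bm{\gamma}}_s^{(c)}$ together with assumption (ii) gives a CLT for $\hat{\bm{\gamma}}_s^{(c)}-\bar{\bm{\gamma}}_s^{(c)}$; the variances of the two pieces are precisely $\bar{\bm{\ell}}_g^{\top}\V\{\hat{\bm{\gamma}}_s^{(c)}\}\bar{\bm{\ell}}_g$ and $\bar{\bm{\gamma}}_s^{(c)\top}\V\{\hat{\bm{\ell}}_g\}\bar{\bm{\gamma}}_s^{(c)}$, matching the denominator of $\eta_{gs}^{(c)}$. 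The condition $n^{1/2}\norm{\bm{\ell}_g}_2 \to \infty$ ensures the first variance term is of the same order as or larger than the second, preventing a degenerate ratio. Independence of the two pieces follows because $\hat{\bm{\gamma}}_s^{(c)}$ is driven by $\bm{G}_s$ and the $\bm{\Delta}^{(c)}$ rows (averaged across $p$ metabolites), while $\hat{\bm{\ell}}_g-\bar{\bm{\ell}}_g$ is driven by metabolite-$g$ idiosyncratic noise $\bm{\Delta}_g^{(e)}$, and these are independent by Assumption~\ref{supp:assumptions:FA}; Slutsky's theorem then produces the $\chi_1^2$ limit.

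For the third claim, I would prove $\eta_{gs}^{(c,e)} \tdist \chi_2^2$ by establishing asymptotic independence of $\eta_{gs}^{(c)}$ and $\eta_{gs}^{(e)}$ under the joint null, since each is already marginally $\chi_1^2$. The leading stochastic terms in $\eta_{gs}^{(e)}$ are linear functionals of $\bm{G}_s$ against residuals involving $\bm{\Delta}_g^{(e)}$ (plus a bounded missing-data contribution of the same structure), whereas the leading stochastic terms in $\eta_{gs}^{(c)}$ are linear functionals of $\bm{G}_s$ against $\bm{\Delta}^{(c)}$ and $p^{-1/2}$-averaged combinations of $\bm{\Delta}_h^{(e)}$ over $h \in [p]$. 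Conditional on $\bm{G}_s$ these become uncorrelated joint Gaussians: the $\bm{\Delta}_g^{(e)}$ part in $\eta_{gs}^{(e)}$ is independent of $\bm{\Delta}^{(c)}$ by construction, and the contamination from $\bm{\Delta}_g^{(e)}$ into $\hat{\bm{C}}$ contributes only a $p^{-1/2}$-size term to $\hat{\bm{\gamma}}_s^{(c)}$ by the sharp expansion in Theorem~\ref{supp:theorem:ChatProp} and Corollary~\ref{supp:corollary:InfNormC}. The main obstacle is in precisely this last step: quantifying that the $p^{-1/2}$-size cross-contamination of $\bm{\Delta}_g^{(e)}$ into $\hat{\bm{\gamma}}_s^{(c)}$, once weighted by $\hat{\bm{\ell}}_g$ of norm $\asymp (\lambda_1/n)^{1/2}$, is dominated by the genuine $\bm{\Delta}^{(c)}$-driven fluctuation of $\hat{\bm{\gamma}}_s^{(c)}$. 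This is exactly where the lower bound $n^{1/2}\norm{\bm{\ell}_g}_2 \to \infty$ is used again, and verifying it cleanly will require re-invoking the maximal inequalities of Corollary~\ref{supp:corollary:MaximalIne} and the sandwich structure used throughout Section~\ref{supp:subsection:FARate}.
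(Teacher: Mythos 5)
Your plan matches the paper's strategy for all three claims: reduce the score statistic for $\eta_{gs}^{(e)}$ to the case where $\bm{C}$ and the true genetic effects are known (the paper formalizes this in Lemma~\ref{supp:lemma:Score:etae:AE}), then a Lindeberg CLT; establish a joint CLT for $\hat{\bm{\ell}}_g$ and $\hat{\bm{\gamma}}_s^{(c)}$ around their targets, use rotation-invariance of the bilinear form, and exploit the independence of the two sources of randomness for $\eta_{gs}^{(c)}$; and argue asymptotic independence of the two statistics for the joint null. The paper presents the second and third claims by first passing explicitly through the ``known'' intermediate estimates $\hat{\bm{\ell}}_g^{\known}, \hat{\bm{\gamma}}_s^{(c),\known}$ (Lemmas~\ref{supp:lemma:Score:etac:AE} and \ref{supp:lemma:ell:AE}), while you do the delta-method expansion directly on the estimated quantities around rotated population targets; these are equivalent, with the paper's presentation isolating the uncertainty from $\hat{\bm{C}}$ more modularly.

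Two small imprecisions worth flagging. First, the role of $n^{1/2}\norm*{\bm{\ell}_g}_2 \to \infty$ is not, as you say, to make the first variance term ``of the same order as or larger than the second''; indeed the two denominator terms can have very different orders without breaking anything. The condition is used because the error incurred by replacing $\hat{\bm{C}}$ with $\bm{C}$ (equivalently the product-of-errors cross term $R_n$) has size $O_P(n^{-1})$, while the genuine stochastic fluctuation of $\hat{\bm{\ell}}_g^{\top}\hat{\bm{\gamma}}_s^{(c)}$ is of order $n^{-1/2}(\norm*{\bm{\ell}_g}_2 + \norm*{\bm{\gamma}_s^{(c)}}_2)$; the condition absorbs the $n^{-1/2}$ into $\norm*{\bm{\ell}_g}_2$, which is the only quantity we can legislate about (since $\norm*{\bm{\gamma}_s^{(c)}}_2$ can vanish). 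This is precisely how the paper invokes it: $o_P(\norm*{\bm{\gamma}_s^{(c)}}_2 + \norm*{\bm{\ell}_g}_2 + n^{-1/2}) = o_P(\norm*{\bm{\gamma}_s^{(c)}}_2 + \norm*{\bm{\ell}_g}_2)$. Second, the asymptotic equivalence of $\hat{\bm{\ell}}_g$ to its ``known'' counterpart is not a consequence of the \emph{statement} of Theorem~\ref{supp:theorem:InferenceBeta}, which covers only the first $d_1$ coordinates of $\hat{\bm{\theta}}_g$; the $\bm{\ell}_g$-block requires re-using the \emph{proof} of that theorem and of Lemma~\ref{supp:lemma:Cknown}, which is packaged separately as Lemma~\ref{supp:lemma:ell:AE}. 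Neither point is a logical gap in your plan, but both should be stated precisely in a full write-up.
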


\subsection{Proof of Theorem~\ref{supp:theorem:mtGWAS}}
\label{supp:subsection:mtGWASproof}
We prove Theorem~\ref{supp:theorem:mtGWAS} by first showing that $\eta_{gs}^{(e)}$ and $\eta_{gs}^{(c)}$ are asymptotically equivalent to the corresponding quantities when $\bm{C}$ is known and when we account for all genetic effects on $e_{gi}$.

\begin{lemma}
\label{supp:lemma:Score:etae:AE}
Fix a $g \in [p]$ and $s \in [S]$, suppose Assumption~\ref{supp:assumptions:FA} holds, and let $\bm{z}_i = (\bm{x}_i^{\top},\bm{c}_i^{\top})^{\top}$. Define $\mathcal{H}_g = \{r \in [S]: \gamma^{(e)}_{gr} \neq 0\}$ and $a_{gs}, \bm{A}_{gs}, a_{gs}^{\known},\bm{A}_{gs}^{\known}$ to be
\begin{align*}
    &a_{gs} = n^{-1}\sum_{i=1}^n \frac{\partial}{\partial \gamma} h_{gsi}( \gamma,\hat{\bm{\theta}}_g,\hat{\sigma}_g) \mid_{\gamma=0}, \quad a_{gs}^{\known} = n^{-1}\sum_{i=1}^n \frac{\partial}{\partial \gamma} h_{gsi}^{\known}\{\gamma,\hat{\bm{\theta}}_g^{\known},\hat{\sigma}_g^{\known}\}\mid_{\gamma=0}\\
    &\bm{A}_{gs} = n^{-1}\mathcal{I}_{gs}( \hat{\bm{\theta}}_g,\hat{\sigma}_g), \quad n^{-1}\bm{A}_{gs}^{\known} = \mathcal{I}_{gs}^{\known}\{ \hat{\bm{\theta}}_g^{\known},\hat{\sigma}_g^{\known} \}\\
    &h_{gsi}^{\known}(\gamma,\theta,\sigma) =  -r_{gi}\log (\sigma) -r_{gi}\left[ y_{gi} - \left\{\bm{z}_i^{\top}\bm{\theta} + G_{si}\gamma + \sum_{r \in \mathcal{H}_g \setminus \{ s \}}\gamma^{(e)}_{ri} G_{ri}\right\} \right]^2/(2\sigma^2)\\ &+ (1-r_{gi})\log\left( \int \Psi\left[ \alpha_g\left\{\bm{z}_i^{\top}\bm{\theta} + G_{si}\gamma + \sum_{r \in \mathcal{H}_g \setminus \{ s \}}\gamma^{(e)}_{ri} G_{ri} + \sigma e\right\} \right]\phi(e)\text{d}e \right)\\
    &\{\hat{\bm{\theta}}_g^{\known},\hat{\sigma}_g^{\known}\} = \argmax_{\bm{\theta} \in \Theta, \sigma \in \mathcal{S}} \sum_{i=1}^n h_{gsi}(0,\bm{\theta},\sigma).
\end{align*}
where $h_{gsi}$, $\{\hat{\bm{\theta}}_g,\hat{\sigma}\}$, and $\mathcal{I}_{gs}( \hat{\bm{\theta}}_g,\hat{\sigma}_g)$ are defined in \eqref{supp:equation:ScoreTest}, $\Theta,\mathcal{S}$ are as defined in the statement of Lemma~\ref{supp:lemma:Cknown}, and $\mathcal{I}_{gs}^{\known}\{ \bm{\theta},\sigma \}$ is the corresponding $(d+K+1) \times (d+K+1)$ minus Fisher information matrix evaluated at $\{\gamma=0,\bm{\theta},\sigma\}$. Then if the null hypothesis $H_{0,gs}^{(e)}:\gamma_{gs}^{(e)}=0 $ is true, then $n^{1/2}\abs*{ a_{gs} - a_{gs}^{\known} } = o_P(1)$ and $\norm*{ \bm{A}_{gs} - \bm{A}_{gs}^{\known} }_2 = o_P(1)$.
\end{lemma}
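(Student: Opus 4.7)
The plan is to mirror the structure of Theorem~\ref{supp:theorem:InferenceBeta} and Lemma~\ref{supp:lemma:Cknown}, writing each of $a_{gs}-a_{gs}^{\known}$ and $\bm{A}_{gs}-\bm{A}_{gs}^{\known}$ as a sum of three perturbation terms and bounding each with the rates already established. First I would compute the score explicitly: $\frac{\partial}{\partial \gamma}h_{gsi}(\gamma,\bm{\theta},\sigma)\mid_{\gamma=0}=G_{si}\, w_i(\hat{\bm{z}}_i;\bm{\theta},\sigma)$, where $w_i(\bm{u};\bm{\theta},\sigma)=r_{gi}(y_{gi}-\bm{\theta}^\top\bm{u})/\sigma^2+(1-r_{gi})\alpha_g\,\partial_\mu h(\bm{u}^\top\bm{\theta},\sigma)$ with $h$ as in Lemma~\ref{supp:lemma:PsiBounds}. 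The ``known'' analogue uses $\bm{z}_i$ in place of $\hat{\bm{z}}_i$, evaluates at $(\hat{\bm{\theta}}_g^{\known},\hat{\sigma}_g^{\known})$, and shifts the conditional mean by $\mu_{gi}^\sharp=\sum_{r\in\mathcal{H}_g\setminus\{s\}}\gamma_{gr}^{(e)}G_{ri}$ (inside both the residual and the tail argument of $h$).

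I would write $a_{gs}-a_{gs}^{\known}=D_1+D_2+D_3$, where $D_1$ replaces $\hat{\bm{z}}_i$ by $\bm{z}_i$ (after absorbing the identifying rotation $\bm{v}$ of Theorem~\ref{supp:theorem:ChatProp} into $\hat{\bm{\theta}}_g$ via a change-of-basis, as in Lemma~\ref{supp:lemma:Cknown}); $D_2$ replaces $(\hat{\bm{\theta}}_g,\hat{\sigma}_g)$ by $(\hat{\bm{\theta}}_g^{\known},\hat{\sigma}_g^{\known})$; and $D_3$ inserts the missing mean shift $\mu_{gi}^\sharp$. For $D_1$, a first-order Taylor expansion of $w_i$ in its first argument reduces the analysis to $n^{-1}\sum_i G_{si}\{\hat{\bm{z}}_i-\bm{z}_i\}^\top \bm{r}_i$ for bounded vectors $\bm{r}_i$ with uniformly bounded gradient; Theorem~\ref{supp:theorem:ChatProp}, Corollary~\ref{supp:corollary:InfNormC}, and Theorem~\ref{supp:theorem:Omega} give $\|\hat{\bm{z}}_i-\bm{z}_i\|_\infty=O_P(\lambda^{-1+\epsilon})$ and furnish an expansion of $\hat{\bm{C}}-\tilde{\bm{C}}\bm{v}$ whose leading term has a nullspace structure ($\hat{\bm{z}}^\top\bm{Q}^\top\bm{W}_g\bm{e}_g=o_P(n^{-1/2})$ type), exactly as used in the proof of Lemma~\ref{supp:lemma:sg1sg2}; the Taylor remainder is $o_P(n^{-1/2})$ via Lemma~\ref{supp:lemma:PsiBounds} and the uniform boundedness of $G_{si}$. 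For $D_2$, Theorem~\ref{supp:theorem:InferenceBeta} and Lemma~\ref{supp:lemma:Cknown} give $\|\hat{\bm{\theta}}_g-\hat{\bm{\theta}}_g^{\known}\|_2+|\hat{\sigma}_g-\hat{\sigma}_g^{\known}|=o_P(n^{-1/2})$, and a mean-value expansion of the score in $(\bm{\theta},\sigma)$ combined with $\|n^{-1}\sum_i G_{si}\hat{\bm{z}}_i \bm{r}_i^\top\|_2=O_P(1)$ yields $n^{1/2}|D_2|=o_P(1)$. For $D_3$, since $|\mathcal{H}_g|$ is bounded and $\max_r |\gamma_{gr}^{(e)}|=o(n^{-1/4})$ by Assumption~\ref{supp:assumptions:FA}, a Taylor expansion in $\mu_{gi}^\sharp$ (reusing the argument in \eqref{supp:equation:s1Diff}) shows the mean cancels by independence of $G_{si}$ with the remaining factors, and the variance is $o(n^{-1})$, giving $n^{1/2}|D_3|=o_P(1)$.

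For the information matrix, the same three-term split applies, but the target is only $o_P(1)$ rather than $o_P(n^{-1/2})$, so all three pieces follow from cruder versions of the above bounds together with Lemma~\ref{supp:lemma:Hessian} (stochastic equicontinuity of $\bm{H}(\bm{\theta},\sigma)$ on a shrinking neighbourhood of $(\bm{\theta}_g^*,\sigma_g)$) applied to the upper-left $(d+K+1)\times(d+K+1)$ block, and the already-established uniform control $\max_{g\in[p]}\|\hat{\bm{C}}^\top\bm{P}_g^\perp\hat{\bm{C}}-I_K\|_2=O_P(n^{-1/2+\epsilon})$ from Corollary~\ref{supp:corollary:CtC}.

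The main obstacle is $D_1$: the expansion error for the score when $\bm{c}_i$ is replaced by $\hat{\bm{c}}_i$ is driven by products $n^{-1/2}\sum_i G_{si}(\hat{\bm{c}}_i-\bm{v}^\top\tilde{\bm{c}}_i)^\top \bm{r}_i(\bm{\theta}_g^*,\sigma_g)$, and because $\hat{\bm{c}}_i$ depends on the full genotype matrix $\bm{G}$ through $\bm{C}$ itself, the ``leave-one-out''-type decoupling that worked for $\bm{e}_g$ in the proof of Corollary~\ref{supp:corollary:CtC} needs to be re-executed with $\bm{G}_{s*}$ in place of $\bm{e}_g$; this requires using the explicit expansion $\hat{\bm{z}}=(\lambda p)^{-1}\sum_h \bm{Q}^\top\bm{P}_h^\perp \bm{e}_h(n^{1/2}\bm{\ell}_h)^\top\bm{\Lambda}^{-1}\bm{v}+O_P(\lambda^{-1+\epsilon})$ from Theorem~\ref{supp:theorem:ChatProp} and the sparsity/smallness hypotheses on $\bm{\gamma}^{(c)},\bm{\gamma}^{(e)}$ from Assumption~\ref{supp:assumptions:FA} to show that the leading term contributes $o_P(n^{-1/2})$ once contracted with $G_{si}$, the remaining terms being controlled by the usual maximal inequalities (Lemma~\ref{supp:lemma:Powera}, Corollary~\ref{supp:corollary:MaximalIne}).
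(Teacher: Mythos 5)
Your proposal is correct and mirrors the paper's (very terse) proof, which simply observes that the quantities $a_{gs}$, $a_{gs}^{\known}$, $\bm{A}_{gs}$, $\bm{A}_{gs}^{\known}$ are components of the scores and Hessians already compared in Lemma~\ref{supp:lemma:Cknown} and Theorem~\ref{supp:theorem:InferenceBeta} (for the model augmented by the covariate $G_{si}$) and that the results follow from those proofs. Your three-term decomposition $D_1+D_2+D_3$ exactly matches that reasoning: $D_3$ is the $\bm{s}^{\known}$-versus-$\tilde{\bm{s}}^{\known}$ comparison from Lemma~\ref{supp:lemma:Cknown} (equation \eqref{supp:equation:s1Diff}), while $D_1$ and $D_2$ are the $f_g$-versus-$f_g^{\known}$ and plug-in-estimator comparisons from Theorem~\ref{supp:theorem:InferenceBeta} and Lemma~\ref{supp:lemma:sg1sg2}, with $\bm{G}_s$ replacing a column of $\bm{Z}$ as the contracting vector — a replacement that is harmless because $\bm{G}_s$ is bounded, mean-zero, has independent entries, and under $H_{0,gs}^{(e)}$ satisfies $s\notin\mathcal{H}_g$ so $\E(G_{si}G_{ri})=0$ for $r\in\mathcal{H}_g$; the maximal-inequality machinery (Lemma~\ref{supp:lemma:Powera}, Corollary~\ref{supp:corollary:MaximalIne}) and Lemma~\ref{supp:lemma:Cte} therefore apply verbatim, and the obstacle you flag in $D_1$ resolves exactly as you describe.
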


\begin{proof}
Note that $\frac{\partial}{\partial \gamma}  h_{gsi}^{\known}(\gamma,\bm{\theta},\sigma)\mid_{\gamma=0}$ and $\frac{\partial}{\partial \gamma}  h_{gsi}(\gamma,\bm{\theta},\sigma)\mid_{\gamma=0}$ are exactly the score functions from Lemma~\ref{supp:lemma:Cknown} and Theorem~\ref{supp:theorem:InferenceBeta}. Therefore, the results are a simple consequence of the proofs and results of Lemma~\ref{supp:lemma:Cknown} and Theorem~\ref{supp:theorem:InferenceBeta}.
\end{proof}

\begin{lemma}
\label{supp:lemma:Score:etac:AE}
Fix an $s \in [S]$, suppose Assumption~\ref{supp:assumptions:FA} holds, and let $\bm{b}_{gs}$, $\bm{B}_{gs}$, $\bm{b}_{gs}^{\known}$, and $\bm{B}_{gs}^{\known}$ be
\begin{align*}
    &\hat{\bm{\gamma}}_{s}^{(c)} = \{ (\bm{G}_s^{\top}P_{\bm{X}}^{\perp}\bm{G}_s)^{-1}\bm{G}_s^{\top}P_{\bm{X}}^{\perp} (n^{1/2}\hat{\bm{C}}_{\perp}) \}^{\top}, \quad \hat{\bm{\gamma}}_{s}^{(c),\known} = \{ (\bm{G}_s^{\top}P_{\bm{X}}^{\perp}\bm{G}_s)^{-1}\bm{G}_s^{\top}P_{\bm{X}}^{\perp} (n^{1/2}\tilde{\bm{C}}) \}^{\top}\\
    &\hat{\V}\{\hat{\bm{\gamma}}_{s}^{(c)}\} = n^{-1}(n^{1/2}\hat{\bm{C}}_{\perp})^{\top}P_{[\bm{X},\bm{G}_s]}^{\perp}(n^{1/2}\hat{\bm{C}}_{\perp}), \quad \hat{\V}\{\hat{\bm{\gamma}}_{s}^{(c),\known}\} = n^{-1}(n^{1/2}\tilde{\bm{C}})^{\top}P_{[\bm{X},\bm{G}_s]}^{\perp}(n^{1/2}\tilde{\bm{C}})
\end{align*}
for $\bm{G}_s=(G_{s1},\ldots,G_{sn})^{\top}$ and $\tilde{\bm{C}}$ defined in \eqref{supp:equation:Cobj}. Then for unitary matrix $\bm{v} \in \mathbb{R}^{K \times K}$ as defined in the statement of Theorem~\ref{supp:theorem:ChatProp}, $n^{1/2}\norm*{\hat{\bm{\gamma}}_{s}^{(c)} - \bm{v}^{\top}\hat{\bm{\gamma}}_{s}^{(c),\known}}_2 = o_P(1)$ and $\norm*{ \hat{\V}\{\hat{\bm{\gamma}}_{s}^{(c)}\} - \bm{v}^{\top}\hat{\V}\{\hat{\bm{\gamma}}_{s}^{(c),\known}\} \bm{v}}_2 = o_P(1)$.
\end{lemma}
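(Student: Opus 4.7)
The plan is to reduce everything to the decomposition of $\hat{\bm{C}}_\perp$ from Theorem~\ref{supp:theorem:ChatProp}, namely $\hat{\bm{C}}_\perp = \tilde{\bm{C}}\hat{\bm{v}} + \bm{Q}\hat{\bm{z}}$ with $\norm*{\hat{\bm{v}}-\bm{v}}_2 = O_P(\lambda^{-1+\epsilon})$ and $\hat{\bm{z}} = p^{-1}\sum_{g=1}^p \bm{Q}^\top \bm{P}_g^\perp \bm{e}_g (n^{1/2}\bm{\ell}_g)^\top \bm{\Lambda}^{-1}\bm{v} + O_P(\lambda^{-1+\epsilon})$. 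Since $\bm{X}^\top \tilde{\bm{C}} = 0$ and $\bm{X}^\top \bm{Q}=0$, we have $P_{\bm{X}}^\perp \hat{\bm{C}}_\perp = \hat{\bm{C}}_\perp$ and $P_{\bm{X}}^\perp \tilde{\bm{C}}=\tilde{\bm{C}}$, so
\[
\hat{\bm{\gamma}}_s^{(c)} - \bm{v}^\top \hat{\bm{\gamma}}_s^{(c),\known} = (\bm{G}_s^\top P_{\bm{X}}^\perp \bm{G}_s)^{-1}\bigl[ n^{1/2}\bm{G}_s^\top \tilde{\bm{C}}(\hat{\bm{v}}-\bm{v}) + n^{1/2}\bm{G}_s^\top \bm{Q}\hat{\bm{z}}\bigr]^\top.
\]
With $(\bm{G}_s^\top P_{\bm{X}}^\perp \bm{G}_s)^{-1} = O_P(n^{-1})$, it suffices to show each bracketed term is $o_P(n^{1/2})$.

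For the first piece I would bound $\norm*{n^{-1/2}\bm{G}_s^\top \tilde{\bm{C}}}_2 = O_P(1)$, obtained by writing $\tilde{\bm{C}} = P_{\bm{X}}^\perp\bm{C}\bm{R}^{-1/2}$ with $\bm{R} \succeq \epsilon I_K$ by Lemma~\ref{supp:lemma:Cassumption}, then decomposing $\bm{C}=\bm{G}^\top\bm{\gamma}^{(c)}+\bm{\Delta}^{(c)}$ and invoking the CLT-type bounds used in Lemma~\ref{supp:lemma:Cte}; the $\bm{G}_s^\top \bm{G}^\top \bm{\gamma}^{(c)}$ cross terms are controlled by $\norm*{\bm{\gamma}^{(c)}}_1 \le a$ and $\max_s\norm*{\bm{\gamma}^{(c)}_{s*}}_2 = o(n^{-1/4})$, while the $\bm{\Delta}^{(c)}$ contribution is $O_P(n^{1/2})$ by independence of $\bm{G}_s$ and $\bm{\Delta}^{(c)}$. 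Combined with $\norm*{\hat{\bm{v}}-\bm{v}}_2 = O_P(\lambda^{-1+\epsilon}) = o_P(n^{-1/2})$ (since $\lambda \gtrsim n^{1/2+\epsilon}$), this piece is $o_P(1)$ after scaling.

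For the second piece I would substitute the explicit form of $\hat{\bm{z}}$ and analyze $n^{-1}\sum_{g=1}^p n^{1/2}\bm{G}_s^\top \bm{Q}\bm{Q}^\top \bm{P}_g^\perp \bm{e}_g (n^{1/2}\bm{\ell}_g)^\top \bm{\Lambda}^{-1}\bm{v}/p$, using $\bm{Q}\bm{Q}^\top = P_{[\bm{X},\tilde{\bm{C}}]}^\perp$ to kill several boundary terms. The remaining sum is a quadratic expression in the independent mean-zero quantities $\bm{G}_s$, $\bm{e}_g$, and $\bm{W}_g-I_n$ (conditional on $\bm{C}$, up to the sparse idiosyncratic genetic effects $\gamma_{gs}^{(e)}$ isolated via Assumption~\ref{supp:assumptions:FA}(e)), so Lemma~\ref{supp:lemma:Powera} and Corollary~\ref{supp:corollary:MaximalIne} applied as in the proof of Corollary~\ref{supp:corollary:CtC} yield an $O_P(\lambda^{-1/2+\epsilon}p^{-1/2}) = o_P(n^{-1/2})$ bound on the sum. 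The residual from the $O_P(\lambda^{-1+\epsilon})$ remainder of $\hat{\bm{z}}$ is handled as in the first piece.

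For the variance, I would expand
\[
\hat{\V}\{\hat{\bm{\gamma}}_s^{(c)}\} - \bm{v}^\top \hat{\V}\{\hat{\bm{\gamma}}_s^{(c),\known}\}\bm{v} = n^{-1}(\bm{D}^\top P_{[\bm{X},\bm{G}_s]}^\perp \bm{D} + \bm{D}^\top P_{[\bm{X},\bm{G}_s]}^\perp n^{1/2}\tilde{\bm{C}}\bm{v} + (\cdot)^\top ),
\]
where $\bm{D} = n^{1/2}(\hat{\bm{C}}_\perp - \tilde{\bm{C}}\bm{v})$. By Theorem~\ref{supp:theorem:ChatProp}, $\norm*{\bm{D}}_2 = O_P(n^{1/2}\lambda^{-1/2+\epsilon})$, which is $o_P(1)$ when squared; the cross term reduces, after the same $\bm{Q}\bm{Q}^\top = P_{[\bm{X},\tilde{\bm{C}}]}^\perp$ cancellation, to sums of the form $p^{-1}\sum_g \tilde{\bm{C}}^\top \bm{P}_g^\perp \bm{e}_g \bm{\ell}_g^\top$ studied in the proof of Corollary~\ref{supp:corollary:LtL}, which are $O_P(\lambda^{-1+\epsilon})$. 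The main obstacle is the step~3 calculation: carefully tracking the cancellations in $\bm{G}_s^\top \bm{Q}\bm{Q}^\top \bm{P}_g^\perp \bm{e}_g$ across $g \in [p]$, particularly because $\bm{G}_s$ may correlate with $\bm{e}_g$ through the non-zero idiosyncratic effects $\gamma_{gs}^{(e)}$; this requires conditioning arguments analogous to those in Lemmas~\ref{supp:lemma:Cte} and \ref{supp:lemma:s11}, isolating the $o(n^{-1/4})$ genetic contribution via a Taylor expansion before applying maximal inequalities.
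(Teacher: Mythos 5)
Your proof follows essentially the same route as the paper: both reduce to showing $\norm*{(\hat{\bm{C}}_{\perp} - \tilde{\bm{C}}\bm{v})^{\top}\bm{G}_s}_2 = o_P(1)$ (the paper via the regression-coefficient matrix identity, you via the $\bm{X}^{\top}\hat{\bm{C}}_{\perp}=\bm{X}^{\top}\tilde{\bm{C}}=\bm{0}$ cancellation, which makes one of the paper's two conditions trivially hold), then both invoke the $\hat{\bm{C}}_{\perp}=\tilde{\bm{C}}\hat{\bm{v}}+\bm{Q}\hat{\bm{z}}$ decomposition from Theorem~\ref{supp:theorem:ChatProp} and the maximal-inequality machinery of Corollary~\ref{supp:corollary:CtC} to control the two resulting pieces. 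You spell out the rate bookkeeping and the genetic-dependence issue more explicitly than the paper's "details omitted" closing; your intermediate bound $\norm*{n^{-1/2}\bm{G}_s^{\top}\tilde{\bm{C}}}_2 = O_P(1)$ is looser than what is actually true (it is $o_P(n^{-1/4})$) but still suffices.
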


\begin{proof}
The vector $\hat{\bm{\gamma}}_{s}^{(c)} \in \mathbb{R}^{K}$ is exactly the first column of
\begin{align*}
    \begin{pmatrix} n^{-1/2}\hat{\bm{C}}_{\perp}^{\top}\bm{G}_s & n^{-1/2}\hat{\bm{C}}_{\perp}^{\top}\bm{X} \end{pmatrix} \begin{pmatrix} n^{-1}\bm{G}_s^{\top}\bm{G}_s & n^{-1}\bm{G}_s^{\top}\bm{X}\\
    n^{-1}\bm{X}^{\top}\bm{G}_s & n^{-1}\bm{X}^{\top}\bm{X}\end{pmatrix}^{-1}.
\end{align*}
Therefore, to prove $n^{1/2}\norm*{ \hat{\bm{\gamma}}_{s}^{(c)} - \bm{v}^{\top}\hat{\bm{\gamma}}_{s}^{(c),\known} }_2=o_P(1)$, we need only show that $\norm*{ \hat{\bm{C}}_{\perp}^{\top} (n^{-1/2}\bm{G}_s) - \bm{v}^{\top}\tilde{\bm{C}}^{\top} (n^{-1/2}\bm{G}_s) }_2 = o_P(n^{-1/2})$ and $\norm*{ \hat{\bm{C}}_{\perp}^{\top} (n^{-1/2}\bm{X}) - \bm{v}^{\top}\tilde{\bm{C}}^{\top} (n^{-1/2}\bm{X}) }_2 = o_P(n^{-1/2})$. However, this can easily be shown using the exact same techniques used to prove Corollary~\ref{supp:corollary:CtC}. The same goes for showing that $\norm*{ \hat{\V}\{\hat{\bm{\gamma}}_{s}^{(c)}\} - \bm{v}^{\top}\hat{\V}\{\hat{\bm{\gamma}}_{s}^{(c),\known}\}\bm{v} }_2 = o_P(1)$. The details have been omitted.
\end{proof}

\begin{lemma}
\label{supp:lemma:ell:AE}
Fix a $g \in [p]$, suppose Assumption~\ref{supp:assumptions:FA} holds, and let $\mathcal{H}_g = \{s \in [S]: \gamma^{(e)}_{gs} \neq 0\}$. Let $\hat{\bm{z}}_i = ( \bm{x}_i^{\top},\hat{\bm{c}}_i )^{\top}$ and $\tilde{\bm{z}}_i = (\bm{x}_i^{\top},n^{1/2}\tilde{\bm{C}}_{i*}^{\top})$ for $\tilde{\bm{C}}$ given in \eqref{supp:equation:Cobj}. Let $\Theta$ and $\mathcal{S}$ be as given in the statement of Lemma~\ref{supp:lemma:Cknown}, and define $\hat{\bm{\theta}}_g$ and $\hat{\bm{\theta}}_g^{\known}$ to be
\begin{align*}
    &\{\hat{\bm{\theta}}_g,\hat{\sigma}_g\} = \argmax_{\bm{\theta},\sigma} \sum_{i=1}^n f_{gi}(\bm{\theta},\sigma), \quad \{\hat{\bm{\theta}}_g^{\known},\hat{\sigma}_g\} = \argmax_{\bm{\theta} \in \Theta,\sigma \in \mathcal{S}} \sum_{i=1}^n \tilde{f}_{gi}^{\known}(\bm{\theta},\sigma)\\
    &f_{gi}(\bm{\theta},\sigma) = -r_{gi}\log(\sigma) - r_{gi}\left( y_{gi} - \hat{\bm{z}}_i^{\top}\bm{\theta} \right)^2/(2\sigma^2)+ (1-r_{gi})\log[ \smallint \Psi\{ \alpha_g(\bm{z}_i^{\top}\bm{\theta}  + \sigma e) \}\phi(e)\text{d}e ]\\
    &\tilde{f}_{gi}^{\known}(\bm{\theta},\sigma) = -r_{gi}\log (\sigma) -r_{gi}\left[ y_{gi} - \left\{\tilde{\bm{z}}_i^{\top}\bm{\theta} + \sum_{s \in \mathcal{H}_g }\gamma^{(e)}_{si} G_{si}\right\} \right]^2/(2\sigma^2)\\ &+ (1-r_{gi})\log\left( \int \Psi\left[ \alpha_g\left\{\tilde{\bm{z}}_i^{\top}\bm{\theta} + \sum_{s \in \mathcal{H}_g}\gamma^{(e)}_{si} G_{si} + \sigma e\right\} \right]\phi(e)\text{d}e \right),
\end{align*}
where the first optimization is solved using the one step Fisher scoring method outlined in the statement of Theorem~\ref{theorem:betag}. Define $\hat{\bm{\ell}}_g$ and $\hat{\bm{\ell}}_g^{\known}$ to be the last $K$ elements of $\hat{\bm{\theta}}_g$ and $\hat{\bm{\theta}}_g^{\known}$, and let $\hat{\V}(\hat{\bm{\ell}}_g)$ and $\hat{\V}\{\hat{\bm{\ell}}_g^{\known}\}$ be the standard minus Fisher information-derived estimates for the variances. Then for $\bm{v}$ given in the statement of Theorem~\ref{supp:theorem:ChatProp}, $n^{1/2}\norm*{ \hat{\bm{\ell}}_g - \bm{v}^{\top}\hat{\bm{\ell}}_g^{\known} } = o_P(1)$ and $n\norm*{ \hat{\V}(\hat{\bm{\ell}}_g) - \bm{v}^{\top}\hat{\V}\{\hat{\bm{\ell}}_g^{\known}\}\bm{v} }_2 = o_P(1)$.
\end{lemma}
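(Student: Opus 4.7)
The plan is to adapt the argument of Theorem~\ref{supp:theorem:InferenceBeta} to the last $K$ coordinates of the full MLE rather than the first $d_1$. The guiding observation is that the proof of Theorem~\ref{supp:theorem:InferenceBeta} actually establishes $\|(\hat{\bm{\theta}}_g^\top,\hat{\sigma}_g)^\top-(\{\hat{\bm{\theta}}_g^{\known}\}^\top,\hat{\sigma}_g^{\known})^\top\|_2=o_P(n^{-1/2})$ for the entire parameter vector, not only the first $d_1$ coordinates; what is new here is (i) the latent-factor block of $\hat{\bm{\theta}}_g$ must be compared to a rotated, orthonormalized reference rather than to the raw $\bm{\ell}_g$, and (ii) the reference likelihood $\tilde{f}_{gi}^{\known}$ explicitly incorporates the full idiosyncratic genetic mean $\sum_{s\in\mathcal{H}_g}\gamma_{si}^{(e)}G_{si}$.

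First I would set up the reparametrization. From Theorem~\ref{supp:theorem:ChatProp}, Corollary~\ref{supp:corollary:InfNormC}, and Theorem~\ref{supp:theorem:Omega} one may write $\hat{\bm{C}}=n^{1/2}\tilde{\bm{C}}\bm{v}+\bm{X}\hat{\bm{\Omega}}^\top+n^{1/2}\bm{Q}\bm{\Delta}$ where both $\bm{Q}\bm{\Delta}$ and $\hat{\bm{\Omega}}-\bm{v}^\top\bm{\Omega}$ are $o_P(n^{-1/2})$ in the relevant norms. Consequently $[\bm{X},\hat{\bm{C}}]=[\bm{X},n^{1/2}\tilde{\bm{C}}]\hat{\bm{R}}+\bm{E}$ for an invertible block matrix $\hat{\bm{R}}$ converging in operator norm to a deterministic block-triangular limit whose diagonal blocks are $I_d$ and $\bm{v}$, and some residual $\bm{E}$ whose contribution to every sum arising below is $o_P(n^{-1/2})$. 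In this basis the latent-factor parameter attached to $\hat{\bm{C}}$ equals $\bm{v}^\top$ times the latent-factor parameter attached to $n^{1/2}\tilde{\bm{C}}$, with the off-diagonal block absorbed into the nuisance $\bm{\beta}$-coefficients and therefore invisible to the last $K$ components.

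Next I would replay the two-step strategy from Theorem~\ref{supp:theorem:InferenceBeta}. Step~(a): let $\hat{\bm{\theta}}_g^{\ast}$ denote the global maximizer of $\sum_i f_{gi}$ on $\Theta\times\mathcal{S}$ and show $\hat{\bm{\theta}}_g^{\ast}-\hat{\bm{\theta}}_g^{\known}=o_P(n^{-1/2})$ in the reparametrized basis. Uniform closeness of $f_{gi}$, its gradient, and its Hessian to those of $\tilde{f}_{gi}^{\known}$ follows by Taylor expansion in the per-observation perturbation $\hat{\bm{z}}_i-\bm{R}\tilde{\bm{z}}_i$, which is bounded uniformly in $i$ by $o_P(n^{-\eta})$ thanks to Corollary~\ref{supp:corollary:InfNormC}, combined with the derivative bounds in Lemma~\ref{supp:lemma:PsiBounds} and the concentration inequalities from Lemmas~\ref{supp:lemma:sg1sg2} and \ref{supp:lemma:Hessian}. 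The additional genetic-mean terms are absorbed exactly as in the proof of Lemma~\ref{supp:lemma:Cknown}: the finiteness $|\mathcal{H}_g|\leq a_1$ and the bound $\max_s|\gamma_{sg}^{(e)}|=o(n^{-1/4})$ guarantee a total score and Hessian contribution of $o_P(n^{-1/2})$. Step~(b): the IPW initializer $\hat{\bm{\theta}}_g^{\ipw}$ is $\sqrt{n}$-consistent for $\hat{\bm{\theta}}_g^{\ast}$ by Lemmas~\ref{supp:lemma:IPWest} and \ref{supp:lemma:IPWestVar} (after reparametrization), so the one-step Fisher-scoring argument in part~(b) of the proof of Theorem~\ref{supp:theorem:InferenceBeta} yields $\hat{\bm{\theta}}_g-\hat{\bm{\theta}}_g^{\ast}=o_P(n^{-1/2})$, and a bounded number of further iterations preserves this. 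Composing the two bridges and reading off the last $K$ coordinates gives $n^{1/2}\|\hat{\bm{\ell}}_g-\bm{v}^\top\hat{\bm{\ell}}_g^{\known}\|_2=o_P(1)$, and applying the continuity and consistency part of Lemma~\ref{supp:lemma:Hessian} to the now-equivalent MLEs produces the corresponding variance identity after conjugation by $\bm{v}$.

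The main obstacle I anticipate is the bookkeeping that cleanly separates the $\bm{v}$-rotation of the latent-factor block from the $\hat{\bm{\Omega}}$-induced leakage into the $\bm{X}$-block of the reparametrization. The structural fact that makes this separation harmless for the last $K$ coordinates is the sparsity assumption~\ref{supp:assumption:SparseBeta} of Assumption~\ref{supp:assumptions:FA}, which already underpinned Theorem~\ref{supp:theorem:Omega}; it must be propagated carefully through the reparametrization so that $\hat{\bm{\Omega}}-\bm{v}^\top\bm{\Omega}=o_P(n^{-1/2})$ continues to dominate every error term that lands in the latent-factor coordinates of the estimating equations, rather than contaminating the $\bm{v}^\top\hat{\bm{\ell}}_g^{\known}$ comparison.
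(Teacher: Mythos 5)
Your proposal is correct and follows the same route as the paper, which simply cites the proofs of Lemma~\ref{supp:lemma:Cknown} and Theorem~\ref{supp:theorem:InferenceBeta}: the proof of Theorem~\ref{supp:theorem:InferenceBeta} already establishes $o_P(n^{-1/2})$-closeness of the full parameter vectors (not just the first $d_1$ coordinates), and the $\bm{v}$-rotation emerges from the block-triangular change of basis between $[\bm{X},n^{1/2}\tilde{\bm{C}}]$ and the $\bm{Z}$ used in Lemma~\ref{supp:lemma:Cknown}, whose last $K$ rows are $[\bm{0},\bm{v},\bm{0}]$ so the off-diagonal $\bm{\Omega}$-block only contaminates the nuisance coefficients. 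Your added detail on the genetic-mean term, the sparsity assumption feeding $\hat{\bm{\Omega}}$'s accuracy, and the one-step Fisher-scoring bridge are all consistent with the machinery the paper implicitly invokes.
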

\begin{proof}
This is a direct consequence of the proofs of Lemma~\ref{supp:lemma:Cknown} and Theorem~\ref{supp:theorem:InferenceBeta}.
\end{proof}

We use these three lemmas to prove Theorem~\ref{supp:theorem:mtGWAS}.

\begin{proof}[Proof of Theorem~\ref{supp:theorem:mtGWAS}]
We first prove the properties of $\eta_{gs}^{(e)}$. If $H_{0,gs}^{(e)}$ is true, Lemma~\ref{supp:lemma:Score:etae:AE} implies it suffices to study the properties of $\eta_{gs}^{(e),\known} = \{a_{gs}^{\known}\}^2/[\{\bm{A}_{gs}^{\known}\}^{-1}]_{11}$. However, standard techniques can be used to show that this satisfies $\eta_{gs}^{(e),\known} \tdist \chi^2_1$.

We next consider $\eta_{gs}^{(c)}$. Let $\hat{\bm{\gamma}}_s^{(c)},\hat{\bm{\gamma}}_s^{(c),\known}$ and $\hat{\bm{\ell}}_g, \hat{\bm{\ell}}_g^{\known}$ be as defined in Lemmas~\ref{supp:lemma:Score:etac:AE} and \ref{supp:lemma:ell:AE}. To simplify notation, let $\hat{\bm{\ell}} = \hat{\bm{\ell}}_g$, $\bar{\bm{\ell}} = \hat{\bm{\ell}}_g^{\known}$, $\hat{\bm{\gamma}} = \hat{\bm{\gamma}}_s^{(c)}$, and $\bar{\bm{\gamma}} = \hat{\bm{\gamma}}_s^{(c),\known}$. First, since $\bar{\bm{\ell}}_g$ is estimated conditional on $\bm{C}$, it is straightforward to show that for $\tilde{\bm{\ell}}$ as defined in \eqref{supp:equation:Cobj},
\begin{align*}
    n^{1/2}\begin{pmatrix} \bar{\bm{\ell}} - \tilde{\bm{\ell}}\\ \bar{\bm{\gamma}} - \bm{\gamma}_s^{(c)} \end{pmatrix} = \begin{pmatrix} \bm{W}_{\ell}\\ \bm{W}_{\gamma} \end{pmatrix} + o_P(1),
\end{align*}
where $\bm{W}_{\ell} \sim N_K(0,n\hat{\V}(\bar{\bm{\ell}}))$ and $\bm{W}_{\gamma} \sim N_K(0,n\hat{\V}(\bar{\bm{\gamma}}))$ are independent. Since $\Prob\{ n\hat{\V}(\bar{\bm{\ell}}) \succeq c I_K \}$ and $\Prob\{ n\hat{\V}(\bar{\bm{\gamma}}) \succeq c I_K \}$ go to 1 as $n \to \infty$ for some constant $c>0$ small enough,
\begin{align*}
    b_{gs}=\frac{n^{1/2}\bar{\bm{\ell}}^{\top}\bar{\bm{\gamma}}}{ \{ n\bar{\bm{\ell}}^{\top}\hat{\V}(\bar{\bm{\gamma}})\bar{\bm{\ell}} + n\bar{\bm{\gamma}}^{\top}\hat{\V}(\bar{\bm{\ell}})\bar{\bm{\gamma}} \}^{1/2} } \tdist N(0,1)
\end{align*}
when $H_{0,gs}^{(c)}$ is true by the assumption that $n^{1/2}\norm*{ \bm{\ell}_g }_2 \to \infty$. Next, Lemmas~\ref{supp:lemma:Score:etac:AE} and \ref{supp:lemma:ell:AE} imply
\begin{align*}
    &n^{1/2}\abs*{ \hat{\bm{\ell}}^{\top}\hat{\bm{\gamma}} - \bar{\bm{\ell}}^{\top}\bar{\bm{\gamma}} } = o_P( \norm*{\bm{\gamma}_s^{(c)}}_2 + \norm*{\bm{\ell}_g}_2 + n^{-1/2} ) = o_P( \norm*{\bm{\gamma}_s^{(c)}}_2 + \norm*{\bm{\ell}_g}_2 )\\
    &\norm*{ n\hat{\V}(\hat{\bm{\gamma}})-n\hat{\V}(\bar{\bm{\gamma}}) }_2, \, \norm*{ n\hat{\V}(\hat{\bm{\ell}})-n\hat{\V}(\bar{\bm{\ell}}) }_2 = o_P(1),
\end{align*}
where the second equality in the first line follows from the fact that $n^{1/2}\norm*{\bm{\ell}_g}_2 \to \infty$. This proves $\eta_{gs}^{(c)} \tdist \chi^2_1$ when $H_{0,gs}^{(c)}$ is true.

We lastly prove $\eta_{gs}^{(c,e)} = \eta_{gs}^{(c)} + \eta_{gs}^{(e)} \tdist \chi^2_2$ when $H_{0,gs}^{(c,e)}$ is true, which by Lemmas~\ref{supp:lemma:Score:etae:AE}, \ref{supp:lemma:Score:etac:AE}, and \ref{supp:lemma:ell:AE}, holds if $\eta_{gs}^{(e),\known}$ is asymptotically independent of $\eta_{gs}^{(c),\known} = b_{gs}^2$ under $H_{0,gs}^{(c,e)}$. Asymptotic independence holds because $a_{gs}^{\known}$, $\bar{\bm{\ell}}$, and $\bar{\bm{\gamma}}$ are asymptotically independent, which completes the proof.
\end{proof}

\subsection{The computational efficiency of mtGWAS test statistics}
\label{supp:subsection:mtGWASComp}
The test statistic $\eta_{sg}^{(c)}$ involves simply regressing estimated latent factors onto genotype via ordinary least squares, and is therefore easy to compute at the genome-wide scale. For $\eta_{sg}^{(e)}$, the partial derivative in \eqref{supp:equation:ScoreTest} can be expressed as
\begin{align*}
    &\sum_{i=1}^n\frac{\partial}{\partial \gamma} h_{gsi}(\gamma, \hat{\bm{\theta}}_g,\hat{\sigma}_g) \mid_{\gamma=0} = \sum_{i=1}^n G_{si}s_{gi}(\hat{\bm{\theta}}_g,\hat{\sigma}_g)\\
    &s_{gi}(\bm{\theta},\sigma) = r_{gi}(y_{gi} - \bm{\theta}^{\top}\hat{\bm{z}}_i)/\sigma^2 - (1-r_{gi})\frac{\hat{\alpha}_g \smallint \dot{\Psi}\{ \hat{\alpha}_g(\bm{\theta}^{\top}\hat{\bm{z}}_i + \sigma e - \hat{\delta}_g) \}\phi(e)\text{d}e}{1 - \smallint \Psi \{ \hat{\alpha}_g(\bm{\theta}^{\top}\hat{\bm{z}}_i + \sigma e - \hat{\delta}_g) \}\phi(e)\text{d}e}
\end{align*}
for $\dot{\Psi}(x) = \frac{d}{dx} \Psi(x)$. Since $s_{gi}(\hat{\bm{\theta}}_g,\hat{\sigma}_g)$ does not depend on genotype, it can pre-computed. For the minus inverse Fisher information, let $\hat{\bm{D}}_g^{(11)} = \diag\{d_{g1}^{(11)}(\hat{\bm{\theta}}_g,\hat{\sigma}_g),\ldots,d_{gn}^{(11)}(\hat{\bm{\theta}}_g,\hat{\sigma}_g)\}$, $\hat{\bm{D}}_g^{(12)} = \diag\{d_{g1}^{(12)}(\hat{\bm{\theta}}_g,\hat{\sigma}_g),\ldots,d_{gn}^{(12)}(\hat{\bm{\theta}}_g,\hat{\sigma}_g)\}$, and $\hat{\bm{D}}_{g}^{(22)} = \diag\{ d_{g1}^{(22)}(\hat{\bm{\theta}}_g,\hat{\sigma}_g), \ldots, d_{gn}^{(22)}(\hat{\bm{\theta}}_g,\hat{\sigma}_g) \}$, where
\begin{align*}
    d_{gi}^{(11)}(\bm{\theta},\sigma) =& \sigma^{-2}\smallint\Psi\{ \hat{\alpha}_g(\bm{\theta}^{\top}\hat{\bm{z}}_i + \sigma e - \hat{\delta}_g) \} \phi(e)de - \hat{\alpha}_g^2 \smallint\ddot{\Psi}\{ -\hat{\alpha}_g(\bm{\theta}^{\top}\hat{\bm{z}}_i + \sigma e - \hat{\delta}_g) \} \phi(e)de\\& + \hat{\alpha}_g^2 \frac{ [ \smallint\dot{\Psi}\{ -\hat{\alpha}_g(\bm{\theta}^{\top}\hat{\bm{z}}_i + \sigma e - \hat{\delta}_g) \} \phi(e)de ]^2 }{ \smallint \Psi\{ -\hat{\alpha}_g(\bm{\theta}^{\top}\hat{\bm{z}}_i + \sigma e - \hat{\delta}_g) \} \phi(e)de }\\
    d_{gi}^{(12)}(\bm{\theta},\sigma) =& 2\sigma^{-2} \hat{\alpha}_g \smallint\dot{\Psi}\{ -\hat{\alpha}_g(\bm{\theta}^{\top}\hat{\bm{z}}_i + \sigma e - \hat{\delta}_g) \} \phi(e)de\\& + \hat{\alpha}_g^3\sigma \smallint \dddot{\Psi}\{ -\hat{\alpha}_g(\bm{\theta}^{\top}\hat{\bm{z}}_i + \sigma e - \hat{\delta}_g) \} \phi(e)de\\ &- \hat{\alpha}_g^3 \sigma\frac{ \smallint\dot{\Psi}\{ -\hat{\alpha}_g(\bm{\theta}^{\top}\hat{\bm{z}}_i + \sigma e - \hat{\delta}_g) \} \phi(e)de \smallint\ddot{\Psi}\{ -\hat{\alpha}_g(\bm{\theta}^{\top}\hat{\bm{z}}_i + \sigma e - \hat{\delta}_g) \} \phi(e)de }{ \smallint \Psi\{ -\hat{\alpha}_g(\bm{\theta}^{\top}\hat{\bm{z}}_i + \sigma e - \hat{\delta}_g) \} \phi(e)de }\\
    d_{gi}^{(22)}(\bm{\theta},\sigma) =& 2 \sigma^{-2}\smallint\Psi\{ \hat{\alpha}_g(\bm{\theta}^{\top}\hat{\bm{z}}_i + \sigma e - \hat{\delta}_g) \} \phi(e)de\\& - 4\hat{\alpha}_g^2 \smallint\ddot{\Psi}\{ -\hat{\alpha}_g(\bm{\theta}^{\top}\hat{\bm{z}}_i + \sigma e - \hat{\delta}_g) \} \phi(e)de\\ &+  \hat{\alpha}_g^4 \sigma^2\frac{ [\smallint\ddot{\Psi}\{ -\hat{\alpha}_g(\bm{\theta}^{\top}\hat{\bm{z}}_i + \sigma e - \hat{\delta}_g) \} \phi(e)de]^2 }{ \smallint\Psi\{ -\hat{\alpha}_g(\bm{\theta}^{\top}\hat{\bm{z}}_i + \sigma e - \hat{\delta}_g) \} \phi(e)de }\\& - \hat{\alpha}_g^4 \sigma^2 \smallint\ddddot{\Psi}\{ -\hat{\alpha}_g(\bm{\theta}^{\top}\hat{\bm{z}}_i + \sigma e - \hat{\delta}_g) \} \phi(e)de
\end{align*}
for $\ddot{\Psi}(x)$, $\dddot{\Psi}(x)$, and $\ddddot{\Psi}(x)$ the second, third, and fourth derivatives of $\Psi$. Then $[\{-\mathcal{I}_{gs}(\hat{\bm{\theta}}_g,\hat{\sigma}_g)\}^{-1}]_{11}$ is exactly the first diagonal element of
\begin{align*}
    \begin{pmatrix}
\bm{G}_s^{\top} \hat{\bm{D}}_g^{(11)}\bm{G}_s & \bm{G}_s^{\top} \hat{\bm{D}}_g^{(11)}\hat{\bm{Z}} & \bm{G}_s^{\top} \hat{\bm{D}}_g^{(12)}\bm{1}_n\\
\hat{\bm{Z}}^{\top}\hat{\bm{D}}_g^{(11)}\bm{G}_s & \hat{\bm{Z}}^{\top}\hat{\bm{D}}_g^{(11)}\hat{\bm{Z}} & \hat{\bm{Z}}_s^{\top} \hat{\bm{D}}_g^{(12)}\bm{1}_n\\
\bm{1}_n^{\top} \hat{\bm{D}}_g^{(12)}\bm{G}_s & \bm{1}_n^{\top} \hat{\bm{D}}_g^{(12)}\hat{\bm{Z}} & \bm{1}_n^{\top} \hat{\bm{D}}_g^{(22)}\bm{1}_n
\end{pmatrix}^{-1},
\end{align*}
where $\bm{G}_s = (G_{s1},\ldots,G_{sn})^{\top}$ and $\hat{\bm{Z}} = (\hat{\bm{z}}_1 \cdots \hat{\bm{z}}_n)^{\top}$. Since $\hat{\bm{D}}_g^{(11)}$, $\hat{\bm{D}}_g^{(12)}$, and $\hat{\bm{D}}_g^{(22)}$ do not depend on genotype, they can be pre-computed.
\newpage

\printbibliography

@article{Latala,
 ISSN = {00029939, 10886826},
 URL = {http://www.jstor.org/stable/4097777},
 author = {Rafał Latała},
 journal = {Proceedings of the American Mathematical Society},
 number = {5},
 pages = {1273--1282},
 publisher = {American Mathematical Society},
 title = {Some Estimates of Norms of Random Matrices},
 volume = {133},
 year = {2005}
}

@article{MissKnownTheory,
  doi = {10.1198/jasa.2011.tm10104},
  year = {2011},
  month = mar,
  volume = {106},
  number = {493},
  pages = {157--165},
  author = {Jae Kwang Kim and Cindy Long Yu},
  title = {A Semiparametric Estimation of Mean Functionals With Nonignorable Missing Data},
  journal = {Journal of the American Statistical Association}
}

@misc{FALCO,
Author = {Chris McKennan},
Title = {Factor analysis in high dimensional biological data with dependent observations},
Year = {2020},
Eprint = {arXiv:2009.11134}
}

@article{ImputationFlaws,
	author = {Sterne, Jonathan A C and White, Ian R and Carlin, John B and Spratt, Michael and Royston, Patrick and Kenward, Michael G and Wood, Angela M and Carpenter, James R},
	journal = {BMJ (Clinical research ed.)},
	month = {06},
	pages = {b2393--b2393},
	title = {Multiple imputation for missing data in epidemiological and clinical research: potential and pitfalls},
	volume = {338},
	year = {2009}
}

@inbook{Vershynin,
	address = {Cambridge},
	author = {Kutyniok, Gitta and Eldar, Yonina C. and Vershynin, Roman},
	pages = {210-268},
	publisher = {Cambridge University Press},
	title = {Introduction to the non-asymptotic analysis of random matrices},
	year = {2012},
	DOI={10.1017/CBO9780511794308.006},
	booktitle={Compressed Sensing: Theory and Applications}
}

@article{BCconf,
    author = {McKennan, Chris and Nicolae, Dan},
    title = "{Accounting for unobserved covariates with varying degrees of estimability in high-dimensional biological data}",
    journal = {Biometrika},
    volume = {106},
    number = {4},
    pages = {823-840},
    year = {2019},
    month = {09},
    issn = {0006-3444},
    doi = {10.1093/biomet/asz037}
}

@article{CorrConf,
	author = {McKennan, Chris and Nicolae, Dan},
	journal = {Journal of the American Statistical Association},
	month = {05},
	pages = {1--12},
	title = {Estimating and Accounting for Unobserved Covariates in High-Dimensional Correlated Data},
	year = {2020}
}

@article{COPSAC,
  doi = {10.1111/cea.12213},
  year = {2013},
  month = nov,
  volume = {43},
  number = {12},
  pages = {1384--1394},
  author = {H. Bisgaard and N. H. Vissing and C. G. Carson and A. L. Bischoff and N. V. F{\o}lsgaard and E. Kreiner-M{\o}ller and B. L. K. Chawes and J. Stokholm and L. Pedersen and E. Bjarnad{\'{o}}ttir and A. H. Thysen and E. Nilsson and L. J. Mortensen and S. F. Olsen and S. Schj{\o}rring and K. A. Krogfelt and L. Lauritzen and S. Brix and K. B{\o}nnelykke},
  title = {Deep phenotyping of the unselected {COPSAC}$_{2010}$ birth cohort study},
  journal = {Clinical {\&}amp$\mathsemicolon$ Experimental Allergy}
}

@ARTICLE{INSPIRE,
  title     = "Objectives, design and enrollment results from the {Infant
               Susceptibility to Pulmonary Infections and Asthma Following
               RSV Exposure Study} ({INSPIRE})",
  author    = "Larkin, Emma K and Gebretsadik, Tebeb and Moore, Martin L and
               Anderson, Larry J and Dupont, William D and Chappell, James D
               and Minton, Patricia A and Peebles, Jr, R Stokes and Moore, Paul
               E and Valet, Robert S and Arnold, Donald H and Rosas-Salazar,
               Christian and Das, Suman R and Polack, Fernando P and Hartert,
               Tina V and {INSPIRE Study}",
  journal   = "BMC Pulmonary Medicine",
  volume    =  15,
  number    =  1,
  month     =  apr,
  year      =  2015
}

@article{Impute1,
  doi = {10.1186/s12859-019-3250-2},
  year = {2019},
  month = dec,
  volume = {20},
  number = {S24},
  author = {Jasmit Shah and Guy N. Brock and Jeremy Gaskins},
  title = {{BayesMetab}: treatment of missing values in metabolomic studies using a Bayesian modeling approach},
  journal = {{BMC} Bioinformatics}
}

@article{MetabMiss,
  doi = {10.1214/20-aoas1328},
  year = {2020},
  month = jun,
  volume = {14},
  number = {2},
  author = {Chris McKennan and Carole Ober and Dan Nicolae},
  title = {Estimation and inference in metabolomics with nonrandom missing data and latent factors},
  journal = {The Annals of Applied Statistics}
}

@article{BujaFA,
	Author = {Buja, Andreas and Eyuboglu, Nermin},
	Journal = {Multivariate Behavioral Research},
	Month = {10},
	Number = {4},
	Pages = {509--540},
	Title = {Remarks on Parallel Analysis},
	Volume = {27},
	Year = {1992}
}

@article{SNPAsymp,
  doi = {10.1214/19-aos1866},
  year = {2020},
  month = jun,
  publisher = {Institute of Mathematical Statistics},
  volume = {48},
  number = {3},
  author = {Qingyuan Zhao and Jingshu Wang and Gibran Hemani and Jack Bowden and Dylan S. Small},
  title = {Statistical inference in two-sample summary-data Mendelian randomization using robust adjusted profile score},
  journal = {The Annals of Statistics}
}

@article{Impute2,
	author = {Wei, Runmin and Wang, Jingye and Su, Mingming and Jia, Erik and Chen, Shaoqiu and Chen, Tianlu and Ni, Yan},
	journal = {Scientific Reports},
	number = {1},
	pages = {663},
	title = {Missing Value Imputation Approach for Mass Spectrometry-based Metabolomics Data},
	volume = {8},
	year = {2018}
}

@article{Bilirubin,
	author = {Turi, Kedir N. and McKennan, Christopher and Gebretsadik, Tebeb and Snyder, Brittney and Seroogy, Christine M. and Lemanske, Robert F. and Zoratti, Edward and Havstad, Suzanne and Ober, Carole and Lynch, Susan and McCauley, Kathyrn and Yu, Chang and Jackson, Daniel J. and Gern, James E. and Hartert, Tina V.},
	journal = {Journal of Allergy and Clinical Immunology},
	number = {1},
	pages = {128--138},
	title = {Unconjugated bilirubin is associated with protection from early-life wheeze and childhood asthma},
	volume = {148},
	year = {2021}
}

@article{Theobromine2,
	author = {Guzman, Raphael and Echeverri, Dar{\'\i}o and Montes, F{\'e}lix R. and Cabrera, Mariana and Gal{\'a}n, Ang{\'e}lica and Prieto, Ang{\'e}lica},
	journal = {International Journal of Vascular Medicine},
	pages = {834060},
	title = {Caffeine's Vascular Mechanisms of Action},
	volume = {2010},
	year = {2010}
}

@article {Adenosine,
	author = {Prieto, L. and Guti{\'e}rrez, V. and Li{\~n}ana, J. and Mar{\'i}n, J.},
	title = {Bronchoconstriction induced by inhaled adenosine 5'-monophosphate in subjects with allergic rhinitis},
	volume = {17},
	number = {1},
	pages = {64--70},
	year = {2001},
	publisher = {European Respiratory Society},
	issn = {0903-1936},
	journal = {European Respiratory Journal}
}

@article{Prop,
    author = {Rios, Laurent Y and Gonthier, Marie-Paule and Rémésy, Christian and Mila, Isabelle and Lapierre, Catherine and Lazarus, Sheryl A and Williamson, Gary and Scalbert, Augustin},
    title = "{Chocolate intake increases urinary excretion of polyphenol-derived phenolic acids in healthy human subjects}",
    journal = {The American Journal of Clinical Nutrition},
    volume = {77},
    number = {4},
    pages = {912-918},
    year = {2003},
    month = {04},
    issn = {0002-9165},
    doi = {10.1093/ajcn/77.4.912}
}

@article{Theobromine1,
	author = {M{\"u}ller, Christa E and Jacobson, Kenneth A},
	journal = {Handbook of experimental pharmacology},
	number = {200},
	pages = {151--199},
	title = {Xanthines as adenosine receptor antagonists},
	year = {2011}
}

@article{qvalue,
author = {John D. Storey},
title = {A direct approach to false discovery rates},
journal = {Journal of the Royal Statistical Society: Series B},
volume = {63},
year = {2001},
number = {3},
pages = {479-–498}
}

@article{CATE,
	Author = {Wang, Jingshu and Zhao, Qingyuan and Hastie, Trevor and Owen, Art B.},
	Journal = {The Annals of Statistics},
	Number = {5},
	Pages = {1863--1894},
	Title = {Confounder adjustment in multiple hypothesis testing},
	Volume = {45},
	Year = {2017}
}

@Article{mtGWAS_4guan,
AUTHOR = {Hysi, Pirro G. and Mangino, Massimo and Christofidou, Paraskevi and Falchi, Mario and Karoly, Edward D. and NIHR Bioresource Investigators and Mohney, Robert P. and Valdes, Ana M. and Spector, Tim D. and Menni, Cristina},
TITLE = {Metabolome Genome-Wide Association Study Identifies 74 Novel Genomic Regions Influencing Plasma Metabolites Levels},
JOURNAL = {Metabolites},
VOLUME = {12},
YEAR = {2022},
NUMBER = {1},
ARTICLE-NUMBER = {61},
ISSN = {2218-1989},
DOI = {10.3390/metabo12010061}
}

@article{mtGWAS_NAT,
	author = {Trisha J. Grevengoed  and Samuel A. J. Trammell  and Michele K. McKinney  and Natalia Petersen  and Rebecca L. Cardone  and Jens S. Svenningsen  and Daisuke Ogasawara  and Christina C. Nex{\o}e-Larsen  and Filip K. Knop  and Thue W. Schwartz  and Richard G. Kibbey  and Benjamin F. Cravatt  and Matthew P. Gillum },
	journal = {Proceedings of the National Academy of Sciences},
	month = {11},
	number = {49},
	pages = {24770--24778},
	title = {N-acyl taurines are endogenous lipid messengers that improve glucose homeostasis},
	volume = {116},
	year = {2019}
}

@article{mtGWAS_methyl,
	author = {Kurbatova, Natalja and Garg, Manik and Whiley, Luke and Chekmeneva, Elena and Jim{\'e}nez, Beatriz and G{\'o}mez-Romero, Mar{\'\i}a and Pearce, Jake and Kimhofer, Torben and D'Hondt, Ellie and Soininen, Hilkka and K{\l}oszewska, Iwona and Mecocci, Patrizia and Tsolaki, Magda and Vellas, Bruno and Aarsland, Dag and Nevado-Holgado, Alejo and Liu, Benjamine and Snowden, Stuart and Proitsi, Petroula and Ashton, Nicholas J and Hye, Abdul and Legido-Quigley, Cristina and Lewis, Matthew R and Nicholson, Jeremy K and Holmes, Elaine and Brazma, Alvis and Lovestone, Simon},
	journal = {Scientific reports},
	month = {12},
	number = {1},
	pages = {21745--21745},
	title = {Urinary metabolic phenotyping for {A}lzheimer's disease},
	volume = {10},
	year = {2020}
}

@article{CMS_metab,
	author = {Gallois, Apolline and Mefford, Joel and Ko, Arthur and Vaysse, Amaury and Julienne, Hanna and Ala-Korpela, Mika and Laakso, Markku and Zaitlen, Noah and Pajukanta, P{\"a}ivi and Aschard, Hugues},
	journal = {Nature Communications},
	number = {1},
	pages = {4788},
	title = {A comprehensive study of metabolite genetics reveals strong pleiotropy and heterogeneity across time and context},
	volume = {10},
	year = {2019}
}

@article{Benzenetriol,
  doi = {10.1016/j.ygeno.2007.05.003},
  year = {2007},
  month = sep,
  publisher = {Elsevier {BV}},
  volume = {90},
  number = {3},
  pages = {324--333},
  author = {Bruce Gillis and Igor M. Gavin and Zarema Arbieva and Stephen T. King and Sundararajan Jayaraman and Bellur S. Prabhakar},
  title = {Identification of human cell responses to benzene and benzene metabolites},
  journal = {Genomics}
}

@article{Cinnamon,
  doi = {10.3390/metabo8040068},
  year = {2018},
  month = oct,
  volume = {8},
  number = {4},
  pages = {68},
  author = {Rachel Kelly and Michael McGeachie and Kathleen Lee-Sarwar and Priyadarshini Kachroo and Su Chu and Yamini Virkud and Mengna Huang and Augusto Litonjua and Scott Weiss and Jessica Lasky-Su},
  title = {Partial Least Squares Discriminant Analysis and Bayesian Networks for Metabolomic Prediction of Childhood Asthma},
  journal = {Metabolites}
}


\end{document}